\newif\iflong\longtrue
\newif\iflongF\longFtrue 
\newif\ifcutLICS\cutLICStrue
\newif\ifcutLMCS\cutLMCSfalse
\newif\ifcutCBNLMCS\cutCBNLMCSfalse  
\newif\ifnotes\notesfalse
\newif\ifcommentreview\commentreviewfalse
\DeclareMathAlphabet{\mathbbmsl}{U}{bbm}{m}{sl}
\theoremstyle{plain}
\newtheorem{theorem}{Theorem}[section]
\newtheorem{proposition}[theorem]{Proposition}
\newtheorem{lemma}[theorem]{Lemma}
\newtheorem{corollary}[theorem]{Corollary}
\theoremstyle{definition}
\newtheorem{definition}[theorem]{Definition}
\newtheorem{remark}[theorem]{Remark}
\newcommand{\DS}[1]{[ DS: \emph{#1}]}
\newcommand{\DS}[1]{}
\newcommand{\reviewComment}[1]{\textbf{Review}: {\color{red} #1}}
\newcommand{\reviewCommentDS}[1]{\textbf{Review[DS]}: {\color{green} #1}}
\newcommand{\reviewCommentGJ}[1]{\textbf{Review[GJ]}: {\color{blue} #1}}
\newcommand{\reviewComment}[1]{}
\newcommand{\reviewCommentDS}[1]{}
\newcommand{\reviewCommentGJ}[1]{}
\newcommand{\cutLMCSsecondRound}[1]{}
\begin{document}

\title[Games, mobile processes, and functions]{Games, mobile processes, and functions -- alternating, concurrent, and
well-bracketed semantics}

\author{Guilhem Jaber}[a]
\author{Davide Sangiorgi}[b]


\address{Nantes Université, France}


\address{Università di Bologna, Italy  and INRIA,   France}


\begin{abstract}
 We establish a tight connection between two models of the $\lambda$-calculus, 
 namely Milner's encoding into the $\pi$-calculus (precisely, the Internal $\pi$-calculus), 
 and operational game semantics (OGS).
 We first investigate the operational correspondence between 
 the behaviours of the encoding provided by
 $\pi$ and OGS. 
 We do so for various LTSs:  the standard LTS for  
 $\pi$ and a new 
 `concurrent' LTS for OGS;  an `output-prioritised' LTS for  $\pi$ 
 and the standard alternating LTS for OGS.  We then show that the equivalences induced on
 $\lambda$-terms by all these LTSs (for $\pi$ and OGS) coincide. 
 We also prove that when equivalence is based on complete traces,  
 the `concurrent' and `alternating' variants of   OGS 
 also coincide with the `well-bracketed' variant. 

 These connections allow us to transfer results and techniques between $\pi$ and OGS. In
 particular: 
 we import up-to techniques from $\pi$ onto OGS; we derive
 congruence and compositionality results for OGS from those of $\pi$; 
 we transport the notion of complete traces from OGS onto $\pi$, obtaining a new  
 behavioural equivalence that yields a full abstraction result for the encoding of
 $\lambda$-terms with respect to contexts written in a $\lambda$-calculus extended with store. 
 The study is illustrated for both call-by-value and call-by-name.
\end{abstract}

\maketitle

\section{Introduction}
\label{s:intro}

The topic of the paper is the comparison between 
\emph{Operational Game semantics} (OGS) and the \emph{$\pi$-calculus}, as generic
models or frameworks for the semantics of higher-order languages.  

Game semantics~\cite{10.1006/inco.2000.2917,10.1006/inco.2000.2930} provides
intensional models of higher-order languages,
where the denotation of a program brings up its possible  interactions with the
surrounding context.
Distinct points of game semantics are the 
rich categorical structure and the emphasis on compositionality.
Game semantics provides a modular characterization of higher-order languages with
 computational effects like control operators~\cite{10.5555/788019.788859}, 
 mutable store~\cite{abramsky1997linearity,10.5555/788020.788891} 
 or concurrency~\cite{ghica2004angelic,10.1007/11944836_38}.
 This gives rise to the ``Semantic Cube''~\cite{10.5555/645726.667210},
 a characterization of the absence of such computational effects
 in terms of appropriate restrictions on the interactions, with conditions
 like \emph{alternation, well-bracketing, visibility} or \emph{innocence}.
For instance, well-bracketing corresponds to the absence of control operators
like $\callcc$.

Game semantics has spurred Operational Game Semantics
(OGS)~\cite{jagadeesan2009open,jeffrey2005java,laird2007,10.1016/j.entcs.2012.08.013,10.1145/2603088.2603150},
as a way to describe the interactions of a program with its 
environment by embedding  programs into appropriate configurations and then defining rules
that turn such configurations into an LTS.
Besides minor differences in the representation of causality between actions,
the main distinction with ``standard'' game semantics
is in the way in which the denotation of programs is obtained:
via an LTS, rather than, 
compositionally, by induction on the structure of the programs (or their types).
It is nonetheless possible to establish a formal correspondence between these two 
representations~\cite{10.1145/2603088.2603150}.

OGS is particularly effective on higher-order programs.
To avoid being too intensional, functional values exchanged between the program and its environment 
are represented as atoms, seen as free variables.  
Therefore OGS configurations include open terms. 
The basic actions in the LTS produced by OGS represent
the calls and returns of functions between
a program
and its environment.
The OGS semantics
has been shown  fully-abstract, that is, to characterize observational equivalence,
for a wide class of
programming languages, including effectful subsets of 
ML~\cite{laird2007,10.1145/2933575.2934509},
fragments of Java~\cite{jeffrey2005java}, aspect-oriented 
programs~\cite{jagadeesan2009open}. 
The conditions  in the above-mentioned Semantic Cube (alternation,
well-bracketing, etc.)
 equally apply to OGS.

In this paper, we consider forms of OGS
for the  pure untyped call-by-value  $\lambda$-calculus,
which enforce some of such conditions.
Specifically, we consider:
\begin{itemize}
 \item an \emph{Alternating} OGS, where only one term can be run at a time, and 
the control of the interactions alternates between the  term and the environment;
 \item a \emph{Concurrent} OGS, where multiple terms can be run in parallel;
\item a \emph{Well-Bracketed} OGS, where the calling convention between a term and its
environment follows a stack discipline.
\dsOLD{mention also complete traces?}
\end{itemize}

\vskip .2cm 

The $\pi$-calculus is the paradigmatical name-passing calculus, that is, a calculus where
names (a synonym for `channels') may be passed around.  In the literature about the
$\pi$-calculus, and more generally in  Programming Language theory,
Milner's work on functions as processes~\cite{Mil92s}, which
shows how the evaluation strategies of {\em call-by-name $\lambda$-calculus} and {\em
  call-by-value $\lambda$-calculus}~\cite{Abr88,Plo75} can be
faithfully mimicked, is generally considered a landmark.  
The work promotes the $\pi$-calculus to be a model for
 higher-order programs, and  
provides the means to study $\lambda$-terms in contexts other than
the purely sequential ones and with the instruments available to reason about 
processes.  
In the paper, $\pi$-calculus is actually meant to be 
the  \emph{Internal $\pi$-calculus} (\piI), 
a subset of the original $\pi$-calculus 
in which only fresh names may be exchanged among processes \cite{San95i}.
The use of \piI 
avoids  a few shortcomings
of Milner's encodings, notably for call-by-value; e.g., 
the  failure of the $\beta_v$ rule (i.e., the encodings of $ (\lambda x.M ) V$ and $M
\sub V x$ may be behaviourally distinguishable in  $ \pi$).
%

Further investigations into Milner's  encodings \cite{San93d,San95lazy,DurierHS18} have revealed 
what is the equivalence induced on 
 $\lambda$-terms by the encodings, whereby
two $\lambda$-terms are equal if their  encodings are behaviourally
equivalent (i.e., bisimilar) $\piI$ terms. 
In call-by-value, this equivalence is \emph{eager normal-form bisimilarity}
\cite{lassentrees}, 
a tree structure
proposed  by Lassen 
(and indeed sometimes referred to as `Lassen's trees')
as the call-by-value  counterpart 
   of
B{\"o}hm Trees (or L{\'e}vy-Longo Trees). 

In a nutshell, when  used to give semantics to a language, 
major strengths of the $\pi$-calculus are its algebraic structure and the related algebraic
properties and proof techniques;
major strengths of     OGS   are its proximity to the source
language~--- the configurations of OGS are built directly from the terms of the source
language,  as opposed to an encoding as in the $\pi$-calculus~--- 
and its flexibility~--- the semantics can be tuned to account for specific features of the
source language like control operators or references. 

The general goal of this paper is to show that there is a tight and precise correspondence
between OGS and $\pi$-calculus as models of programming languages, and that such a
correspondence may be profitably used to take advantage of the strengths of the two models. 
We carry out the above program in the specific case of (untyped) call-by-value
$\lambda$-calculus,  $\LasV$, 
which is richer and (as partly suggested above) with some more subtle
aspects than call-by-name.
We also sketch at the end of the paper a similar correspondence for the
untyped call-by-name $\lambda$-calculus.
Analogies and similarities between game semantics and  $\pi$-calculus have  been
pointed  out in various papers in the literature (e.g., \cite{10.1145/224164.224189,berger2001sequentiality}; see
Section~\ref{s:cf}), and used to, e.g., explain game semantics using $\pi$-like processes,
and enhance type systems for $\pi$-terms.  In this paper, in contrast, we carry out a direct comparison
between the two models, on their interpretation of functions.

We take the (arguably) canonical representations of $\LasV$  into \piI 
and OGS.
The latter representation is Milner's encoding,
rewritten in \piI.  We consider two variant  behaviours for the \piI terms, respectively
produced by  the ordinary LTS
of \piI, and by  an `output-prioritised' LTS, \opLTS, 
in which   input actions
may be observed only in the absence of outputs and internal actions. 
Intuitively, the \opLTS is intended to respect sequentiality  constraints in the
\piI terms:  
an  output action stands for an ongoing computation (for instance, returning  the result of
 a previous request) whereas 
an  input
 action starts a 
   new computation
 (for instance, a request of a certain service); therefore,  
in a sequential system, an output action should have priority over input actions.  
 For OGS, the $\LasV$ representation is the straightforward adaptation of the OGS
representations of typed $\lambda$-calculi in the literature, e.g., \cite{laird2007}
designed to build fully-abstract models for languages with mutable store.
As outlined above, a few variants exist:
we consider the  alternating, concurrent,
and well-bracketed variants.

We also consider the refinements obtained by focusing on  \emph{complete traces}, 
intuitively, traces that describe ``terminating'' interactions.
Complete traces have been introduced in~\cite{abramsky1997linearity} to obtain an
effective presentation of a fully abstract model for a language with higher-order store.

We then develop a  thorough comparison between the behaviours of the OGS and \piI
representations. 
For this, we define a mapping from OGS configurations to \piI processes. We also exploit
the fact that, syntactically, the actions in the  OGS and   \piI LTSs are the same.
We derive a tight correspondence between the two models, which allows us to transfer techniques
and  to 
switch  freely between the two models in the analysis of 
the OGS and \piI representations of $\LasV$, 
so to establish new results or obtain new proofs.
On these aspects, our main results  are the following: 

\begin{enumerate}
\item
We show that
the representation of $\LasV$ in the Alternating 
  OGS  is behaviourally the same as the representation in \piI assuming the \opLTS.
Thus the semantics on $\lambda$-terms induced by the OGS and \piI representations
coincide. 
 The same results are obtained between the Concurrent  OGS and \piI under  its ordinary
 LTS. 

\item 
We   transfer 
 `bisimulation up-to techniques' for  $\piI$, notably a form of `up-to context',
 onto  (Concurrent) OGS.
The result is a powerful technique, called `up-to composition', that allows us to split an OGS
configuration into more elementary configurations during the bisimulation game. 

\item
We show that the semantics induced on $\LasV$ by the Alternating 
and by the Concurrent OGS
are the same, both when the equality in OGS is  based  on traces and
when it is based on bisimulation. 
In other words, all the OGS views of $\LasV$
(Alternating or Concurrent, traced-based or bisimulation-based) coincide. 
Moreover,  we show 
that 
 such
induced semantics is the equality of 
Lassen's trees. We derive the result in two ways:
one  in which we  directly import it from \piI; the other
in which we lift eager normal-form bisimulations into  OGS bisimulations via the up-to-composition technique. 


\item We derive congruence and compositionality properties for the OGS semantics,
as well as a notion of tensor product over configurations that computes
interleavings of traces.
\dsOLD{maybe hint at the link between tensor and the compositionality things in the previous sentence?}

\item 
We show that when equality in OGS is based on complete traces,
the semantics induced on $\LasV$ is the same for the 
Alternating, Concurrent, and Well-Bracketed OGS.
\cutLMCSsecondRound{Therefore, drawing from the results about the Well-Bracketed OGS,
such a semantics also  coincides with that produced by contextual equivalence in $\LasV$ extended
with higher-order references.  }

\cutLMCSsecondRound{
\item We transport the idea of complete traces onto \piI; 
 the formalisation makes use of a standard linear-receptive type system for \piI.
 We thus obtain a new behavioural equivalence for \piI, 
 called \emph{\completetrace equivalence}.  
 The OGS full abstraction result under complete traces 
 can then be transported onto \piI,
 becoming full abstraction for Milner's encoding 
 with respect to contextual equivalence of $\LasV$ extended with higher-order references.

\dsOLD{probably remove sentence above}}
\end{enumerate}

The results about OGS in (1-4) are obtained by exploiting
  the mapping  into \piI and  its algebraic  properties and proof techniques, as well as  
the up-to-composition technique for OGS imported from \piI.
Analogous results are established for the call-by-name strategy.

To summarize our results, we consider in this paper the following LTSs and equivalences:
\[
\begin{array}{|l|c|c|c|c|c|}
  \hline
  LTS & \piI & \text{\small output-prioritised }\piI & \COGS & \AOGS & \WBOGS \\
  \hline 
  \text{bisimilarity} & \wbPI & \wbOPI & \wbC & \wbA & \wbWB \\
  \hline
  \text{trace equivalence} & \TEPI & \TEN & \TEp & \TEA & \TEwb \\
  \hline
  \text{complete-trace equivalence} & & & \TEpc & \TEc & \TEwbc \\
  \hline
\end{array}
\]

Terms of the call-by-value $\lambda$-calculus $M,N$ can then be embedded into the LTS $\piI$ and $\opLTS$ with the encoding
$\encoW M$ and into the $\COGS$, $\AOGS$ and $\WBOGS$ as initial configurations $\conf M$.
Using these embeddings, the various equivalence of the previous figure can be lifted into equivalence over terms.
We then prove that:
\begin{enumerate}
\item $\wbPI, \wbOPI, \wbC, \wbA, \wbWB, \TEPI, \TEN, \TEp, \TEA$ all coincides;
\item $\TEpc, \TEc$ and $\TEwbc$ all coincides.
\end{enumerate}
We also prove that the first group of equivalences also coincides with eager normal-form bisimilarity.

\emph{Structure of the paper.} 
Sections~\ref{s:nota} to~\ref{s:encoLpi} contain background material:
general  notations,
 \piI, $\LasV$,
the representations of $\LasV$  in the   Alternating OGS (\AOGS)
and  in \piI.
The following sections contain the new material.
In Section~\ref{s:encoGames} 
we study  the relationship  between the two $\LasV$ representations, 
 in   \piI using the output-prioritised LTS. 
In  Section~\ref{s:cogs}
we establish a similar relationship between 
 a  new  Concurrent OGS (\COGS) and  \piI using its ordinary LTS. We also transport 
up-to techniques onto OGS, and prove that all the semantics of $\LasV$ examined 
(OGS, \piI, traces, bisimulations)
 coincide. 
We import compositionality results for OGS from \piI in Section~\ref{s:compo}.
In Section~\ref{s:wb-ogs} we introduce the Well-Bracketed OGS (\WBOGS) and complete
traces. 
In Section~\ref{sec:tensor},
we introduce a notion of product of OGS configurations, that is used to
provide a wide range of decomposition results.
In  Section~\ref{s:wbc_vs_c} 
we relate all complete-trace equivalences in OGS (\AOGS, \COGS, \WBOGS). 
\cutLMCSsecondRound{In Section~\ref{s:fapiI}  we import   complete-trace equivalence and the full
abstraction result (for pure $\lambda$-terms)  into \piI.}
In Section~\ref{s:cbnNEW} we discuss the case of call-by-name.
Finally, in Section~\ref{s:cf} some further related work and future research directions. 
Appendixes~\ref{a:behav}-\ref{a:cbn} report additional  material for proofs and results in
Sections~\ref{s:encoGames}, \ref{s:cogs}, \ref{s:wb-ogs}, and~\ref{s:cbnNEW}. 

An extended abstract of this paper had appeared at CSL'22
\cite{JaberS22}, without proofs and without the contents in
Sections~\ref{s:wb-ogs}-\ref{s:cbnNEW}.

\section{Notations}
\label{s:nota}

In the paper, we use various LTSs and behavioural relations for them, 
both for OGS and for the $\pi$-calculus.  
In this section, we introduce or summarise common notations. 

We use a tilde, like in $\til a$, for  (possibly empty) tuples of objects (usually names). 
Let $ K \arrG \mu K'$ be a generic LTS (for OGS or \piI; the grammar for actions in the
LTSs  for OGS
and \piI will be the same). 
Actions, ranged over by $\mu$,  can be of the form $\inp a \tilb$, $\bout a \tilb$,
$\tau$,  and $(\tila)$, where $\tau$, called \emph{silent} or (\emph{invisible}) action, 
represents an internal step in $K$, that is, an action that does not require interaction
with the outside,  and $(\tila)$ is  a special action performed by abstractions in \piI
and initial configurations in OGS.   If $\mu \neq \tau$ then $\mu$ is a \emph{visible} action; we use
$\ell$ to range over them.

We sometimes abbreviate $\arrG\tau$ as $ \longrightarrowG $.
We  write $ \LongrightarrowG$ for the reflexive and transitive closure of 
$\arrG\tau$;
i.e., $K
\LongrightarrowG K'
 $ holds if $K $ can evolve into $K'$ by performing some
silent steps~-- possibly none.
We also write 
$K \ArrG \mu K' $ if   $K \LongrightarrowG \arrG\mu \LongrightarrowG K' $ (the
composition of the three 
relations); i.e.,  
$K \ArrG \mu K'$ holds if there are $K_1 $ and $K_2$ with 
$K \LongrightarrowG K_1$, $K_1 \arrG\mu K_2$ 
and $K_2 \LongrightarrowG K'$.
Then $ \ArcapG \mu  $  is  $\ArrG \mu $ if $\mu\neq \tau$, and  
$\LongrightarrowG $ if $\mu=\tau$. 

 \emph{Traces}, ranged over by $\trace$,  are finite (and possibly empty) sequences of
 visible actions.
If $\trace =\ell_1, \ldots, \ell_n$ ($n\geq 0$), then  $K \ArrG \trace K'$  holds if there
are $K_0, \ldots, K_n$ with $K_0 = K$, $K_n =K'$, and $K_i \ArrG{\ell_{i+1}} K_{i+1}$  for
$0\leq i < n$; and    $K \ArrG \trace $ if there is $K' $ with $K \ArrG \trace  K'$. 

In an action $\inp a\tilb$ or $\bout a\tilb$ we call $a$ the \emph{subject} and
$\tilb$ the object; similarly in a trace $s$ a name $a$ occurs in \emph{subject position}
(resp.\ in \emph{object position})
if there is an action in $s$ whose subject is $a$ (resp.\ whose object includes $a$). 

Two states $K_1, K_2 $ of the LTS   are \emph{trace equivalent}, written 
 $K_1 \TEG K_2$, if    ($K_1 \ArrG \trace $  iff 
$K_2 \ArrG \trace  $), for all $\trace$. 
 
Similarly,  \emph{bisimilarity}, written $\wbG $,
 is the largest symmetric relation on the state of the LTS such that 
whenever $K_1 \wbG K_2$ then 
\begin{itemize}
\item $K_1 \ArrG \mu  K_1'$ implies there is $K_2'$ with 
$K_2 \ArcapG \mu  K_2'$ and $K'_1 \wbG K'_2$.
\end{itemize} 
For instance, in the \piI LTS   $\arrPI \mu $ of Section~\ref{ss:piI}, 
$P \TEPI Q$ means that the \piI  processes $P$ and $Q$ are trace equivalent, and 
$P \wbPI Q$ means that they are bisimilar.

\begin{remark}[bound names]
\label{r:bn} 
In an action $\inp a \tilb$ or $\bout a \tilb$ or $(\tilb)$, 
name $a$ is \emph{free} whereas $\tilb$ are \emph{bound}; 
the free and bound names of a trace are defined accordingly.

Throughout the paper, 
in any statement (concerning OGS or \piI), the bound names of an action or of a trace that
appear in the statement are supposed to be all \emph{fresh}; i.e.,  
all distinct from each other and from the free names of the objects in the statement. 
For instance, in a  trace $s= s_1, \mu, s_2$, a  bound name of $\mu$ should not appear in the
preceding prefix  $s_1$
(neither bound nor free), should not have other occurrences  in $\mu$, 
and may only appear free in the following suffix  $s_2$; moreover, if  
$s$ is the trace 
in the definition above  of trace equivalence,  any  bound names  of
$s$ should not appear free in the initial states $K_1,K_2$.  
Similarly, in the following definition of bisimilarity, the bound names of the action $\mu$ do
not appear free in the states $K_1,K_2$. 
In the same way, the convention applies to statements about operational correspondence between an
OGS state $K_1$ and a \piI state $K_2$ (the bound names of actions or traces mentioned in
the statement do not appear free in  $K_1,K_2$).
\qed\end{remark}
\section{Background}
\label{s:back}

\subsection{The Internal $\pi$-calculus}
\label{ss:piI}

The Internal $\pi$-calculus, $\piI$, is, intuitively, 
a subset of the $\pi$-calculus 
in which all outputs are bound. This is syntactically enforced by
having outputs written as $\bout a\tilb$ (which in the $\pi$-calculus would be an
abbreviation for $\res\tilb \out a \tilb$). 
We use the polyadic version of the calculus, in which a tuple of names may be exchanged in
a communication step. 
All tuples of names in \piI 
are made of pairwise distinct components. 

\emph{Abstractions} are used to write name-parametrised processes,  for
instance, when writing recursive process definitions.
 The instantiation of the parameters of an
abstraction $B$ is done via the {\em application} construct $\app B \tila$.  
Processes and abstractions form the set of \emph{agents}, ranged over by $\AG$.
Lowercase letters
$a,b, \ldots, x,y, \ldots$ range over the infinite set of names.  The grammar of \piI
 is thus:
\[\begin{array}{ccll}
  P & \defeq & \nil    \midd    \inp a \tilb . P    \midd    \bout a \tilb . P 
    \midd    \res a P \midd   P_1 |  P_2   \midd  ! \inp a \tilb . P   
  \midd \app  B \tila & \mbox{(processes)} \\[\myptD]
  B & \defeq & \bind \tila P \midd {\rmmm K} & \mbox{(abstractions)}
\end{array}\]
We omit the standard definition of free names,
bound names, and names of an agent, respectively indicated with $\fn -$, $\bn -$, and
$\n -$. The grammar for processes includes the operators  
$\nil$ (representing  the inactive process), 
input and output prefixes, restriction, parallel composition, replication, and
application.   

In the grammar of abstractions, 
${\rmmm K}$ is a constant,  used to write recursive definitions. Each
constant ${\rmmm K}$ has a defining equation of the form 
${\rmmm K} \defeq  \bind{\tilx} P$, where $\bind{\tilx} P  $ is name-closed (that is, 
without free names); $\tilx$ 
 are the formal parameters of
the constant (replaced by the actual parameters whenever the constant
is used).
Replication could be avoided in the syntax since it can be encoded
with recursion. However its semantics is simple, and it is 
a useful construct for encodings, 
thus we chose to include it in the grammar.

\ifcutLMCS
Sometimes, we use $\defi$ as an abbreviation mechanism, to
assign a name to an expression to which we want to refer later.
\gj{Is this use of $\defi$ used anywhere?}
\fi

An {application redex}, that is, a term syntactically of the form 
 $\app{((\til x) P)}{\til a}$, can be normalised as $P \subst {\til x}{\til a}$. An agent is
\emph{normalised} if all such application redexes have been
contracted.
Although the full
application $\app  {\rmmm K}{\til a} $ is often convenient, when
it comes to reasoning on behaviours 
it is useful to assume that all expressions are normalised, in the above sense.
Thus in the remainder of the paper 
\emph{we identify an agent with its normalised expression}.
\reviewComment{does this include recursive definitions? In
which case, are normalized expressions infinite?}
\reviewCommentDS{Answer: no, only expressions of that syntactic form are
  modified. Recursive definition are only unfolded via rule {\trans{con}} pf Figure 1. This
kind of normalisation is a (finite) syntactic simplification; it is not mandatory, however
it sometimes simplifies  structural reasoning  on processes. We have rephrased a bit the sentence.}

Since the calculus is polyadic, 
we assume a \emph{sorting system}~\cite{Mil91}
   to avoid disagreements  in the arities of
the tuples of names 
carried by a given name 
and in applications of abstractions.
Being not essential, it will not be presented here.
The reader should
take for granted that all agents described obey a sorting. 


\subsubsection{Operational semantics and behavioural relations}
\label{sss:os_piI}

In Figure~\ref{t:pii_ST}
we report the standard LTS of \piI.
The symbol $\equival$ is used to indicate $\alpha$-equivalence
between \piI agents.
Rule  {\trans{abs}} is used to formalise, by means of an action, the ground instantiation
of the parameters of an abstraction. 
In both tables, we omit the symmetric of {\tt parL}, called {\tt parR}.
In rules {\tt com}, $\outC \mu$ is the complementary of $\mu$, 
thus if $\mu = \inp  a \tilx$
then  $\outC\mu = \bout  a \tilx$, and conversely if  $\mu = \bout  a \tilx$
then  $\outC\mu = \inp  a \tilx$.

\begin{figure*}[ht]
\begin{center}
\begin{tabular}{rlclrrcl}
{\trans{alpha}}: &  $\displaystyle{  P\equival P' \andalso   { P'} \arrPI\mu    { P''}
  \over 
  P \arrPI\mu {P''}  
 } $ &  
{\trans{ pre}}:&  
$\displaystyle{
\over
{\mu.      P} \arrPI\mu {P}}$   
& \hskip -1cm {\trans{ parL}}:  $\displaystyle{   P \arrPI\mu { P'} \over  { P | Q}   \arrPI\mu
{P'| Q} } $   
\\[\mysp] 
{\trans{ res}}:& $\displaystyle{{  P} \arrPI{\mu}{P'}
\andalso  x \not \in \n\mu
 \over
{ \res x     P}   \arrPI{ \mu}{\res x P'  }} $ 
&
\multicolumn{4}{c}{ 
    {\trans{com}}:  \; \;    $\displaystyle{ {P} \arrPI{ \mu }{P'} \andalso   
 Q
\arrPI{\outC\mu}{Q'}
\andalso \mu\neq \tau,   \; \tilx = \bn\mu 
  \over
{     P |  Q }\arrPI{ \tau} 
{
 \res{  \tilx} (P'
|  Q') } }$    
}
 \\[\mysp]
    {\trans{rep}}: &
 $\displaystyle{ {\mu.  P }\arrPI{\mu} {P'}
 \over
{ !\mu. P }   \arrPI{ \mu}{ P' | !\mu. P } } $  
&
\multicolumn{4}{c}{ 
    {\trans{con}}: \;\;
$  \displaystyle{   P \arrPI \mu {P'}
\andalso  \mbox{ } {\rmmm K}  \defi
  \abs{\til y} Q  
\andalso 
\mbox{ } \abs{\til y} Q \equival \abs{\til x} P
    \over 
{
 \app {\rmmm K}{\til x}}  \arrPI\mu 
{P'} }$
}
\\[\mysp] 
\multicolumn{4}{c}{ 
    {\trans{abs}}:  \; \;    
$  \displaystyle{ 
\mbox{( $B =  \abs{\til y} Q $   or 
($B = K $ and $K \defi   \abs{\til y} Q $) )}  
  \andalso  \abs{\til y} Q \equival \abs{\til x} P
  \over 
{B \arr{({\til x})}     P }
 }$
}
\end{tabular} 
\end{center} 
\caption{The standard LTS for \piI}
\label{t:pii_ST}
 \Mybar 
\end{figure*}

In the LTS for \piI,
transitions are of the form $\AG \arrPI \mu \AG '$, where the bound names of $\mu$ are {fresh},
i.e., they  do not appear free
in $\AG$.  The choice of bound names can be made more explicit by following the operational game
semantics approach in which transitions are defined on configurations that include a name
set component. 
(Some presentations of $\pi$-calculi indeed make use of configurations.)  
The two approaches are interchangeable, and the choice is largely a matter of taste. 
The standard LTS  is however more convenient in reasoning, e.g.,  the up-to
proof techniques of Section~\ref{ss:upto}.

All behavioural relations are extended to abstractions by requiring ground
instantiation of the parameters; this is expressed by means of a transition; e.g., the action 
$ \bind{\tilx} P \arrPI {\bind{\tilx}{}}  P$ (see rule  {\trans{abs}} in Figure~\ref{t:pii_ST}). 

\finish{see if we want to mention also structural congruence $\equiv$} 

\subsection{The expansion relation}

The expansion relation, $\expa$~\cite{SaWabook}, 
is an asymmetric variant of 
$\wbPI$  which allows us to  count the number of $\tau$-actions
performed by the processes. 
Thus, $P \expa Q$ holds if $P \wbPI
Q$ but also $Q $ has at least as many $\tau$-moves as $P$. 
We write $P \arcap\mu P'$  
when either $P \arr\mu P'$ or ($\mu =\tau$ and
$P=P'$).

\begin{definition}[expansion]
  A relation $\R \subseteq   \pr \times \pr $ 
  is an {\em expansion } if $P \RR Q$
  implies:
  \begin{enumerate}
  \item  Whenever 
  $P \arr\mu P'$,    
  there exists  $Q' $  s.t.\ $Q \ArrPI\mu Q'$ 
  and $P'  \RR  Q'$;
  \item whenever $Q \arr\mu Q'$, 
  there exists  $P' $  s.t.\ $P \arcap\mu P'$ 
  and $P'  \RR  Q'$.
  \end{enumerate}
  
  We say that $Q$  {\em expands } $P$, written
  $P \expa  Q$, 
  if $P  \R  Q$,  for some expansion $ \R $.   
\end{definition}

Relation $\expa$ is a precongruence, and  is
strictly included in $\wbPI$.  
\subsection{The ``up-to'' techniques}
\label{ss:upto}
The ``up-to'' techniques allow us to reduce
the size of a relation $\R$ to exhibit for proving bisimilarities.

\subsubsection{Bisimulation up-to context and expansion}

Our main up-to technique will be
{\em up-to context and expansion} \cite{San93c},
which admits the use of contexts and of behavioural equivalences such as 
expansion to achieve the
closure of a relation in the bisimulation game.

For our purposes, a restricted form of 'up-to context' will be enough,
in which only contexts of the form $\res \tilc (R   |  \contexthole)$ are
employed.

\begin{definition}[bisimulation up-to context and up-to $\contr$]
  \label{d:uptp_sce}
  A symmetric relation $  \R   \subseteq \pr \times \pr $ is a 
  {\em bisimulation up-to context and up-to $\contr$ }
  if $P \RR Q$ and 
  $P \arr\mu P''$
  then there are processes  $P'$ and $Q'$ 
  and a static context $\qct$
  s.t. $ P'' \contr \ct {P'}$, 
  $Q \Arcap\mu  \contr \ct{Q'}$ and 
  $P'  \RR  Q'$.
\end{definition}

The soundness of the above technique relies on the following lemma.
\begin{lemma}
\label{l:uptocon-expa}
Suppose $ \R $ is a bisimulation up-to context and up-to $\contr$,
 $(P, Q) \in  \R $ holds and  $\qct$ is a 
 static context. 
If  $\ct{P}\arr{\mu_1} \cdots\arr{\mu_n}  P_1$, $n\geq 0$,
 then
there are a static context $\qctp$ and processes $P'$ and $Q'$
s.t.\  $P_1 \contr \ctp{P'}$, $\ct Q \Arcap{\mu_1} \cdots
\Arcap{\mu_n} \contr \ctp{Q'}$ and $P'  \RR  Q'$. 
\end{lemma}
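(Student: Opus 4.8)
The plan is to prove Lemma~\ref{l:uptocon-expa} by induction on $n$, the number of transitions performed by $\ct{P}$.

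\textbf{Base case $n=0$.} Here $P_1 = \ct P$, so we simply take $\qctp \defeq \qct$, $P' \defeq P$, $Q' \defeq Q$. Then $P_1 = \ct P \contr \ct{P'}$ since $\contr$ is reflexive (as $\expa$ contains the identity, being coarser than $\wbPI$ which is reflexive), $\ct Q \contr \ct{Q'}$ trivially, and $P' \RR Q'$ holds by assumption.

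\textbf{Inductive step.} Suppose the claim holds for traces of length $n$, and consider $\ct P \arr{\mu_1} \cdots \arr{\mu_{n+1}} P_1$. Split this as $\ct P \arr{\mu_1} R \arr{\mu_2} \cdots \arr{\mu_{n+1}} P_1$. The first transition $\ct P \arr{\mu_1} R$ is a transition of the process $\res\tilc(S \mid P)$ where $\qct = \res\tilc(S\mid\contexthole)$. I would analyse this transition: either (a) it comes entirely from $S$ (or from a restricted communication internal to $S$), in which case $R = \res{\tilc'}(S' \mid P)$ for a new static context, and $P$ is untouched; or (b) it comes entirely from $P$, so $P \arr{\mu'} P''$ and $R = \res{\tilc'}(S \mid P'')$ with $\mu_1$ determined by $\mu'$ and the restrictions; or (c) it is a communication between $S$ and $P$, giving $P \arr{\mu'} P''$ and $R = \res{\tilc'}(S' \mid P'')$ where the object names of $\mu'$ get added to the restrictions. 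In case (a), set $\hat P \defeq P$, $\hat Q \defeq Q$, and $\hat C \defeq \res{\tilc'}(S'\mid\contexthole)$; then $R \contr \hat C[\hat P]$ (reflexivity), $\ct Q \arr{\mu_1} \hat C[\hat Q]$, and we may apply the induction hypothesis to $R \arr{\mu_2}\cdots\arr{\mu_{n+1}} P_1$ with context $\hat C$. In cases (b) and (c), we invoke the up-to hypothesis on $P \RR Q$ and $P \arr{\mu'} P''$: this yields processes $P^\flat$, $Q^\flat$ and a static context $D$ with $P'' \contr D[P^\flat]$, $Q \Arcap{\mu'} \contr D[Q^\flat]$, and $P^\flat \RR Q^\flat$. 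Then $R = \res{\tilc'}(S\mid P'') $ (or with $S'$) and, since $\contr$ is a precongruence (hence preserved by the static context $\res{\tilc'}(S\mid\contexthole)$, respectively $\res{\tilc'}(S'\mid\contexthole)$), we get $R \contr \res{\tilc'}(S\mid D[P^\flat]) = \hat C[P^\flat]$ where $\hat C$ is the static context obtained by merging $\res{\tilc'}(S\mid\contexthole)$ with $D$. Correspondingly $\ct Q \Arcap{\mu_1} \contr \hat C[Q^\flat]$, using that weak transitions and $\contr$ are preserved by static contexts and compose appropriately. Now I would like to apply the induction hypothesis to the remaining $n$ steps $R \arr{\mu_2}\cdots\arr{\mu_{n+1}} P_1$ — but $R$ is only related to $\hat C[P^\flat]$ by $\contr$, not equal to it. This is where the expansion relation does its work: since $R \contr \hat C[P^\flat]$ and $R \arr{\mu_2}\cdots\arr{\mu_{n+1}} P_1$, by the defining property of expansion (clause 2, iterated) there is $P_1'$ with $\hat C[P^\flat] \Arcap{\mu_2}\cdots\Arcap{\mu_{n+1}} P_1'$ and $P_1 \contr P_1'$; actually it is cleaner to record that $\hat C[P^\flat] \arr{\mu_2}\cdots$-simulates $R$ producing a $P_1'$ with $P_1 \contr P_1'$ reached along matching (weak) actions. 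Then apply the induction hypothesis to the strong-transition sequence out of $\hat C[P^\flat]$ — here one needs the hypothesis stated for weak transitions $\Arcap{}$, or alternatively one strengthens the induction to allow weak transitions from the start; I would state and prove the weak-transition version directly, since the proof is identical.

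\textbf{Main obstacle.} The delicate point is the bookkeeping around the expansion relation in the inductive step: after the up-to step we only know $R \contr \hat C[P^\flat]$, and we must transfer the remaining run of $R$ onto $\hat C[P^\flat]$ before invoking induction, collecting the $\contr$ at the end and composing it (using transitivity of $\contr$ and that $\contr \cdot \contr \subseteq \contr$) with the $\contr$ that the induction hypothesis itself produces. Getting the quantifiers and the direction of simulation right — expansion lets the "faster" side $\hat C[P^\flat]$ match every move of the "slower" side $R$ — and ensuring the freshness conventions on bound names of the $\mu_i$ (Remark~\ref{r:bn}) are respected when names migrate into the restriction set, is the part that requires care. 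The case analysis on the shape of the first transition of $\res\tilc(S\mid P)$ is routine (it is the standard decomposition of transitions of a static context) but must be done; I expect it to be the longest, least interesting part, so I would compress it.
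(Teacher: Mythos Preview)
Your approach is correct and is the standard inductive argument for this lemma; the paper itself gives no proof and simply refers to \cite{San93c}. On your flagged obstacle: transferring the remaining run along $R \contr \hat C[P^\flat]$ via clause~(2) of the expansion definition produces a sequence of $\arcap{}$-steps from $\hat C[P^\flat]$, which is just a strong-transition sequence of length at most $n$ (only $\tau$-steps can become identities), so the induction hypothesis applies as stated and the dropped $\tau$'s on the $Q$-side are absorbed by $\Arcap\tau = \Longrightarrow$ --- no strengthening to weak transitions is required, though that alternative also works.
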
 

\begin{theorem}
\label{t:up-to-con-expa}
If $ \R  $  is a 
bisimulation up-to context and up-to $\contr$, then $ \R  \subseteq \wbPI$.
\end{theorem}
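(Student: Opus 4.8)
The plan is to derive Theorem~\ref{t:up-to-con-expa} as an essentially immediate consequence of Lemma~\ref{l:uptocon-expa}. Concretely, I would exhibit a relation $\S$ that contains $\R$, is a (plain) bisimulation up to $\contr$ in the weak sense, and hence is contained in $\wbPI$; since $\expa$ is contained in $\wbPI$ and $\wbPI$ is transitive, this gives $\R \subseteq \S \subseteq \wbPI$.

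First I would define
\[
\S \;\defeq\; \{ (\ct{P}, \ct{Q}) \;:\; (P,Q)\in\R,\ \qct \text{ a static context} \}.
\]
Taking the trivial context $\qct = \contexthole$ shows $\R \subseteq \S$, so it suffices to prove $\S \subseteq \wbPI$. Because $\wbPI$ is itself the largest bisimulation, I only need to check that $\S$ is a bisimulation up-to $\contr$ on both sides — i.e. that whenever $(\ct P, \ct Q)\in\S$ and $\ct P \arr{\mu} P_1$, there are $P', Q'$ with $P_1 \contr \ctp{P'}$, $\ct Q \Arcap{\mu}\contr\ctp{Q'}$, and $(P',Q')\in\S$ again (and symmetrically). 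But this is exactly what Lemma~\ref{l:uptocon-expa} delivers in the special case $n=1$: from $(P,Q)\in\R$ and $\ct P \arr{\mu} P_1$ it produces a static context $\qctp$ and processes $P',Q'$ with $P_1 \contr \ctp{P'}$, $\ct Q \Arcap{\mu}\contr\ctp{Q'}$, and $P' \RR Q'$; and since $P' \RR Q'$ implies $(\ctp{P'},\ctp{Q'})\in\S$ (again using the context $\ctp{\cdot}$ on the pair $(P',Q')\in\R$), the closure condition for $\S$ holds. The symmetric clause holds because $\R$, hence $\S$, is symmetric, and Lemma~\ref{l:uptocon-expa} can be applied to $(Q,P)$.

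The only remaining point is the soundness of ``bisimulation up-to $\contr$'' itself, i.e.\ that $\S \subseteq \wbPI$ once the above closure property is established. This is the standard fact that weak bisimulation up to expansion is a sound technique \cite{SaWabook}: since $\contr$ expands (is a subrelation of $\wbPI$ with the $\tau$-counting property) and is a precongruence, one shows that $\contr \S \contr$ (or simply $\S \contr$, using that $\contr$ absorbs on the left) is a bisimulation in the ordinary sense, chasing the diagrams and using $\Arcap{\mu}$ to match moves; this is routine and I would cite it rather than redo it. Thus $\S$, and with it $\R$, is contained in $\wbPI$.

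I do not expect a serious obstacle here: all the real work — iterating the up-to-context unfolding while keeping the accumulated $\res\tilc(R\mid\cdot)$ contexts under control, and discharging the $\contr$ bookkeeping — has already been absorbed into Lemma~\ref{l:uptocon-expa}. The one thing to be careful about is the bound-name/freshness convention (Remark~\ref{r:bn}): when instantiating the lemma one must ensure the bound names of $\mu$ are fresh for $\ct P, \ct Q$ and in particular for the context $\qct$ and the process $R$ inside it, but this is guaranteed by the global convention. So the proof is genuinely a two-line corollary of the lemma, modulo citing the known soundness of up-to-expansion.
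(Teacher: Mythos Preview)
Your proposal is correct and essentially the same as the paper's. The paper's proof defines $\S$ as the set of pairs $(P,Q)$ with $P \contr \ct{P'}$, $Q \contr \ct{Q'}$, $P' \RR Q'$ for some static context $\qct$, and shows this is a (plain) bisimulation --- which is exactly your $\S$ closed under $\contr$ on both sides, so your ``$\S$ is a bisimulation up-to expansion, then invoke soundness of up-to-$\contr$'' is just the same argument with the expansion closure factored out as a cited lemma rather than unfolded in place.
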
 
 
\begin{proof}
By showing that  
the relation $\S$ containing 
all pairs $(P,Q)$ such that 
\[ \begin{array}{l}
 \mbox{ for some static context $\qct$ and
processes $P', Q'$} \\
 \mbox{ it holds that } P \contr \ct{P'}, Q \contr \ct{Q'} \mbox{ and }
P'  \RR  Q' 
 \end{array}
 \]
is a bisimulation. 
See \cite{San93c} for details (adapting the proof to \Intp is
straightforward).
\end{proof}

\subsubsection{Bisimulation up-to bisimilarity}
\label{ss:bisimulation_upto_wb}

We will also employ {\em bisimulation up-to $\wbPI$}~\cite{Mil89}, 
a widely-used technique,
whereby bisimilarity itself is employed to achieve 
the closure of the candidate relation during the bisimulation game

\begin{definition}
  [bisimulation up-to $\wbPI$]
  A symmetric relation $\R$ on $\pr  \times \pr$
  is a {\em bisimulation up-to $\wbPI$ } if $P  \RR  Q$ and $P
  \ArrPI\mu P'$ imply that  there exists $Q'$ s.t.
  $Q \Arcap\mu Q'$ and 
  $P' \wbPI  \RR \wbPI Q'$. 
\end{definition}

\begin{theorem}
\label{t:up-to}
  Suppose $\R$ is a {\em bisimulation up-to $\wbPI$}.
  Then $\R \subseteq \wbPI$.
\end{theorem}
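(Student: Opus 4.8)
The goal is to show that any bisimulation up-to $\wbPI$ is contained in $\wbPI$. The standard approach is to exhibit a relation $\S$ that is a genuine bisimulation (up to nothing) and that contains $\R$, and then conclude by the coinductive definition of $\wbPI$. The natural candidate is $\S \defeq \wbPI \R \wbPI$, i.e., the relational composition, consisting of all pairs $(P,Q)$ such that $P \wbPI P'$, $P' \RR Q'$, and $Q' \wbPI Q$ for some $P',Q'$. Clearly $\R \subseteq \S$ (take $P' = P$, $Q' = Q$, using reflexivity of $\wbPI$), and $\S$ is symmetric since $\R$, $\wbPI$ and composition all are. It therefore suffices to prove that $\S$ is a bisimulation.

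\textbf{Key steps.} Assume $(P,Q) \in \S$, witnessed by $P \wbPI P' \RR Q' \wbPI Q$, and suppose $P \ArrPI\mu P_1$ (equivalently $P \ArcapPI\mu P_1$; one handles the $\tau$ case and the visible case uniformly by working with $\ArcapG$). First I would push the transition across the leftmost $\wbPI$: since $P \wbPI P'$, there is $P'_1$ with $P' \ArcapPI\mu P'_1$ and $P_1 \wbPI P'_1$. (Here one should note that $P' \ArrPI\mu P'_1$ with the appropriate weak closure — this is where one uses that $\wbPI$ is a \emph{weak} bisimulation, so a single step on the left is matched by a weak transition, and conversely a weak transition is matched by a weak transition; a short lemma, or the standard fact that $\wbPI$ is preserved under the weak transition relation, covers this.) Next, apply the up-to hypothesis to $(P',Q') \in \R$ and the weak transition $P' \ArrPI\mu P'_1$: this yields $Q'_1$ with $Q' \ArcapPI\mu Q'_1$ and $P'_1 \wbPI \R \wbPI Q'_1$, i.e., $(P'_1, Q'_1) \in \S$. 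Finally, push the transition across the rightmost $\wbPI$: from $Q' \ArrPI\mu Q'_1$ and $Q' \wbPI Q$ we get $Q_1$ with $Q \ArcapPI\mu Q_1$ and $Q'_1 \wbPI Q_1$. Assembling the pieces: $P_1 \wbPI P'_1$, $(P'_1, Q'_1) \in \S$, and $Q'_1 \wbPI Q_1$; since $\S = \wbPI \R \wbPI$ absorbs $\wbPI$ on both sides (transitivity of $\wbPI$ plus associativity of composition), we obtain $(P_1, Q_1) \in \S$ with $Q \ArcapPI\mu Q_1$, as required.

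\textbf{Main obstacle.} The only genuinely delicate point is the handling of weak transitions: the definition of bisimulation up-to $\wbPI$ quantifies over $P \ArrPI\mu P'$ (weak steps), not single steps, so the "transfer across $\wbPI$" lemma must be stated for weak transitions. The relevant fact is that if $P \wbPI P'$ and $P \ArrPI\mu P''$ then there is $P'''$ with $P' \ArcapPI\mu P'''$ and $P'' \wbPI P'''$ — i.e., $\wbPI$ is itself a bisimulation for the weak transition relation (equivalently, weak bisimilarity is closed under $\Longrightarrow$-steps). This is routine by induction on the length of the $\ArrPI\mu$ derivation, collapsing runs of $\tau$'s via the $\Longrightarrow$ in the definition of $\wbPI$, but it is the step that must be spelled out carefully to avoid a spurious "mismatch of strong vs.\ weak" gap. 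Everything else is symmetry bookkeeping and associativity of relational composition. For the details one may refer to \cite{Mil89,SaWabook}; the adaptation to \Intp is immediate since none of the reasoning touches the name-passing discipline.
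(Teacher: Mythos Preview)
Your proposal is correct and follows exactly the standard argument---showing that $\S \defeq {\wbPI}\,{\R}\,{\wbPI}$ is a weak bisimulation---that the paper is relying on by citing \cite{Mil89}; the paper itself gives no proof for this theorem. The one place your write-up is slightly loose (conflating $\ArrPI\mu$ and $\Arcap\mu$ when pushing the challenge across the left $\wbPI$, so that in the $\mu=\tau$ case you may land on $P'=P'_1$ and cannot invoke the up-to clause) is harmless: in that subcase you already have $P_1 \wbPI P' \RR Q' \wbPI Q$, hence $(P_1,Q)\in\S$ with $Q \Arcap\tau Q$, and you are done.
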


\subsubsection{Bisimulation up-to context and up-to $(\contr,
\wbPI)$}

We will also employ
a variant of bisimulation up-to context and expansion, called 
{\em bisimulation up-to context and up-to $(\contr,\wbPI$)}.
In Definition~\ref{d:uptp_sce}, the occurrence of $\contr$  in 
$Q \Arcap\mu  \contr  \ct{Q'}$  could be weakened to $\wbPI$, and the
soundness of the technique, namely Theorem \ref{t:up-to-con-expa},
would still hold. However, in general, the same weakening
cannot be made on the requirement $P'' \contr \ct {P'}$.
We show that however, the weakening is possible when 
the action $\mu$ is visible and the context cannot interact with the
processes in its holes.

This proof technique, 
(which, as far as we know, does not appear in the literature) 
will be essential in some of our key results 
about the correspondence with game semantics.

\begin{definition}
  \label{d:interact}
  A static context $\qct$  \emph{cannot interact} with a process $P$
  if no name appears free both in the context and in
  the process and with opposite polarities (that is, either in input
  position in the context and output position in the process, or the
  converse).  
\end{definition} 

If  $\qct$ {cannot interact} with $P$,
whenever $\ct P \longrightarrow Q$ then  the
interaction is either internal to $\qct$ or internal to $P$. Moreover,
the property is invariant for reduction; i.e., 
if $Q = \ctp{P'}$ where $\qctp$ is the derivative of $\qct$ and $P'$
the derivative of $P$, then also 
$\qctp$  cannot interact with  $P'$.
Notice the invariance is false in the full $\pi$-calculus.

We recall that we use $\ell$ to range over visible actions. 
\begin{definition}[bisimulation up-to context and up-to $(\contr,
 \wbPI)$]
\label{d:uptocontext_contr_wb}
A symmetric relation $  \R   \subseteq \pr \times \pr $ is a 
{\em bisimulation up-to context and up-to $(\contr,\wbPI$)}
if $P \RR Q$ imply:
  \begin{enumerate}
  \item if $P \arr\ell P''$
  then there are processes $P'$ and $Q'$ 
  and a static context $\qct$  
  s.t.\  $P'' \wbPI \ct {P'}$, $Q \Arr\ell  \wbPI  \ct{Q'}$ 
  and $P' \RR Q'$, 
  and, moreover, $\qct$ cannot interact with $P$ or $Q$; 
  \item if $P \longrightarrow P''$
  then there are processes $P'$ and $Q'$
  and a static context $\qct$
  s.t.\  $ P'' \contr \ct {P'}$, 
  $Q \Longrightarrow \contr  \ct{Q'}$ 
  and $P' \RR Q'$.
  \end{enumerate}
\end{definition}

\begin{lemma}
  \label{l:upto_new}
  Suppose $ \R  $  is a 
  bisimulation up-to context and up-to $(\contr,\wbPI)$ 
  and $\qct$ is a static context 
  that does not interact with processes $P$ and $Q$.  
  \begin{enumerate}
  \item  
  If  $\ct{P}\Longrightarrow   P_1$, 
  then there are processes $P'$ and $Q'$ 
  and a static context $\qctp$
  s.t.\  
  $P_1 \contr \ctp{P'}$, 
  $\ct Q \Longrightarrow \contr \ctp{Q'}$ 
  and $P'  \R  Q'$. 
  
  \item 
  If $\ct{P}\arr\ell   P_1$, 
  then there are processes $P'$ and $Q'$ and 
  a static context $\qctp$ and 
  s.t.\  
  $P_1 \wbPI \ctp{P'}$, 
  $\ct Q \ArrPI\ell \wbPI \ctp{Q'}$ 
  and $P'  \R  Q'$. 
  \end{enumerate} 
\end{lemma}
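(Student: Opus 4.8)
The plan is to prove the two items of Lemma~\ref{l:upto_new} simultaneously, by an outer induction on the length of the reduction sequence $\ct{P}\Longrightarrow P_1$ (for item~1) combined with a case analysis on where each single reduction step takes place, and then deriving item~2 from item~1 plus a single visible step. The crucial structural fact I would isolate first is the one stated after Definition~\ref{d:interact}: if $\qct$ cannot interact with $P$, then any transition $\ct P \ra R$ is either internal to the context or internal to the process, the decomposition $R=\ctp{P'}$ persists, and the non-interaction property is preserved along reductions. This is what lets me keep the invariant ``$\qctp$ cannot interact with $P'$'' alive throughout the induction, which is exactly the hypothesis Definition~\ref{d:uptocontext_contr_wb} needs in order to be applied to the residuals.

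For item~1, I would argue as follows. If $\ct P \Longrightarrow P_1$ is empty, take $\qctp = \qct$, $P'=P$, $Q'=Q$. For the inductive step, write $\ct P \ra R \Longrightarrow P_1$. By the structural fact, either the first step is internal to the context, so $R = \ctp{P}$ with $\qctp$ the derivative of $\qct$ and $\qctp$ still non-interacting with $P$ and $Q$ — then apply the induction hypothesis to $\ctp P \Longrightarrow P_1$ (note $\ct Q \ra \ctp Q$ so the matching moves of $Q$ are prefixed by one extra $\tau$, absorbed into $\Longrightarrow$); or the first step is internal to $P$, i.e.\ $R = \ct{P''}$ with $P\ra P''$. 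In the latter case apply clause~2 of Definition~\ref{d:uptocontext_contr_wb} to get $P'' \contr \ct{\tilde P}$, $Q\Longrightarrow \contr \ct{\tilde Q}$, $\tilde P \RR \tilde Q$. Now I use that $\contr$ (expansion) is a precongruence, so $\ct{P''} \contr \ct{\ct{\tilde P}}$ — composing static contexts gives again a static context $\qct_2 = \ct{\qct{\,\cdot\,}}$ which, by the non-interaction hypotheses, still cannot interact with $\tilde P$, $\tilde Q$. Since $\contr \,\subseteq\, \wbPI$ and $P_1$ is reached from $\ct{P''}$ by $\Longrightarrow$, I need the standard expansion property that $\contr$ reductions can be ``pushed through'': from $\ct{P''}\contr\ct{\ct{\tilde P}}$ and $\ct{P''}\Longrightarrow P_1$ there is $S$ with $\ct{\ct{\tilde P}}\Longrightarrow S$ and $P_1 \contr S$; apply the induction hypothesis to $\ct{\ct{\tilde P}}\Longrightarrow S$ and then compose the resulting $\contr$ with $P_1\contr S$ using transitivity of $\contr$, and compose the $Q$-side using $Q\Longrightarrow\contr\ct{\tilde Q}$ with precongruence $\ct{Q}\Longrightarrow\contr\ct{\ct{\tilde Q}}$ and the $\Longrightarrow\contr$ obtained from the induction hypothesis, merging the two $\contr$'s.

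For item~2, I would decompose $\ct P \arr\ell P_1$ as $\ct P \Longrightarrow R \arr\ell R' \Longrightarrow P_1$ — but since the visible action is on the whole process and $\qct$ cannot interact with its hole, the $\ell$-step is necessarily internal to $P$ (the context has no free name of opposite polarity to synchronise, so it cannot contribute the visible action jointly, and a visible action emitted by the context alone would not involve the hole, which is the degenerate sub-case one handles directly). Apply item~1 to the prefix $\Longrightarrow$, pushing us to $\ctp{\hat P}$ with $\hat P$ performing (up to $\contr$, hence the $\ell$-move is still available by the expansion laws) the $\ell$-transition $\hat P \arr\ell \hat P''$; then invoke clause~1 of Definition~\ref{d:uptocontext_contr_wb} on $\hat P \RR \hat Q$, obtaining a context $\qct''$ that cannot interact with $\hat P'', \hat Q$, with $\hat P''\wbPI\qct''\brac{P'}$, $\hat Q\Arr\ell\wbPI\qct''\brac{Q'}$; finally apply item~1 once more to the trailing $\Longrightarrow$ from $\ctp{\hat P''}$ to $P_1$, and collect everything, using that $\contr\subseteq\wbPI$ and that $\wbPI$ is a congruence for static contexts to fuse the $\contr$'s and $\wbPI$'s into the single $\wbPI$ in the conclusion, and $\ArrPI\ell\wbPI$ on the $Q$-side.

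The main obstacle I expect is bookkeeping the asymmetry between $\contr$ (needed in the $\tau$-case, clause~2, where one genuinely loses a $\tau$ on the left) and $\wbPI$ (which is all one gets in the $\ell$-case, clause~1): one must be careful that in item~1 only $\contr$ ever appears, so the induction stays inside expansion and never degrades to $\wbPI$ prematurely, and that the upgrade to $\wbPI$ happens exactly once, at the single visible step in item~2. The second delicate point is the persistence of ``cannot interact'': every time a static context is enlarged by composition with another static context $\ct{\,\cdot\,}$ coming from clause~2, I must re-check non-interaction, which follows because the names of $\ct{\,\cdot\,}$ are among those of $Q$ (or $P$) and the residual context, all of which were already non-interacting with the residual processes. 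I would also remark, as the paper does, that this persistence fails in the full $\pi$-calculus, so the argument genuinely uses that we are in \Intp.
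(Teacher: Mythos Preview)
The paper does not actually supply a proof of this lemma; it is stated and then invoked in the proof of Theorem~\ref{t:up-to-con-expa-nointer}. Your overall plan --- induction on the number of steps, splitting each step into ``context moves'' versus ``$P$ moves'' via the non-interaction hypothesis, and invoking the relevant clause of Definition~\ref{d:uptocontext_contr_wb} --- is the expected one and mirrors the (referenced) argument for Lemma~\ref{l:uptocon-expa}.

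Two corrections, however. First, item~2 concerns a single \emph{strong} transition $\ct P \arr\ell P_1$, not a weak one, so there is no $\Longrightarrow$-prefix or $\Longrightarrow$-suffix to peel off. The proof is a one-step case split: either $\qct$ performs $\ell$ (trivial: take $\qctp$ the derivative of $\qct$, $P'=P$, $Q'=Q$), or $P$ performs $\ell$, in which case clause~1 of Definition~\ref{d:uptocontext_contr_wb} applies directly and one closes under $\qct$ by congruence of $\wbPI$. No appeal to item~1 is needed.

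Second, and more substantively, your justification that the enlarged context still cannot interact with the new processes after an application of clause~2 does not go through. Clause~2 of Definition~\ref{d:uptocontext_contr_wb} imposes \emph{no} non-interaction requirement on the context it produces, and $\contr$ does not bound syntactic free names, so the claim ``the names of the new context are among those of $Q$ (or $P$)'' is unwarranted. This is a genuine gap, and it in fact reflects an imprecision in the paper: for the induction in item~1 to close --- and for the candidate bisimulation in Theorem~\ref{t:up-to-con-expa-nointer} even to be closed under transitions, since membership there requires the context not to interact with the hole-processes --- clause~2 must also guarantee non-interaction of its context with $P'$ and $Q'$, and the conclusion of the lemma must record that $\qctp$ cannot interact with $P'$ and $Q'$. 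With that strengthening (which does hold in all the paper's uses of the technique), your inductive argument for item~1 is correct.
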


The proof of  Lemma~\ref{l:upto_new} is along the lines of similar results in the
literature combining up-to expansion and up-to bisimilarity, see e.g.\cite{SaWabook}. 

\reviewComment{there is no proof. Is that a classical lemma? Is it
because the proof is straightforward?}
\reviewCommentDS{Answer: while strictly speaking the result does not appear in the
  literature, the proof is straightforward adapation of  similar techniques. We have added
a sentence with a reference  (To be honest: the result is not entirely trivial, and
``normally'' one should have indeed added a proof. In other words, the referee  is right
is complaining. However it is certainly not the most important proof that is missing from
the main text! This paper should have really been 2 or even 3 ordinary papers)}

\begin{theorem}
\label{t:up-to-con-expa-nointer} 
  If $\R $ is a bisimulation 
  up-to context and up-to $(\contr,\wbPI)$, 
  then ${ \R } \subseteq {\wbPI}$.
\end{theorem}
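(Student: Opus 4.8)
The strategy is the standard one for soundness of up-to techniques: exhibit a relation $\S$ that is a genuine bisimulation (possibly up-to the already-validated techniques $\contr$ and $\wbPI$) and that contains $\R$, so that $\R \subseteq \S \subseteq \wbPI$. Concretely, I would define
\[
\S \ \defeq\ \{ (P_1, Q_1) \ \mid\ P_1 \contr \ct{P'},\ Q_1 \contr \ct{Q'},\ P' \RR Q',\ \text{and } \qct \text{ cannot interact with } P',Q' \}.
\]
Taking $\qct$ to be the empty context $\contexthole$ shows $\R \subseteq \S$ (using reflexivity of $\contr$ and the trivial fact that the empty context cannot interact with anything). It then suffices to prove that $\S$ is a bisimulation up-to $(\contr,\wbPI)$ in a suitable sense, or directly a bisimulation, and invoke the transitivity/closure properties of $\contr$ and $\wbPI$. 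The real content is entirely in Lemma~\ref{l:upto_new}, which has already done the hard inductive work of iterating the up-to clauses along a sequence of transitions while keeping track of the non-interaction invariant; the theorem is the packaging of that lemma into a closed relation.

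**Key steps, in order.** First, fix $(P_1,Q_1) \in \S$ with witnesses $P_1 \contr \ct{P'}$, $Q_1 \contr \ct{Q'}$, $P' \RR Q'$, and $\qct$ non-interacting with $P',Q'$. Suppose $P_1 \arr\mu P_1''$. Since $P_1 \contr \ct{P'}$ and $\contr$ is a simulation-like relation that answers with at least as many moves, there is a matching move $\ct{P'} \Arr\mu \contr$-above $P_1''$; more precisely $\ct{P'} \arr{\mu_1}\cdots\arr{\mu_n}$ (with the $\mu_i$ internal except for at most one copy of $\mu$, by the definition of $\expa$ and $\arcap\mu$) reaching some $R$ with $P_1'' \contr R$. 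Now I apply Lemma~\ref{l:upto_new}: part (1) handles the case $\mu = \tau$ (using the iterated $\Longrightarrow$ clause) and part (2) handles $\mu = \ell$ visible. The lemma yields a static context $\qctp$ and processes $P'',Q''$ with $R \,(\contr\text{ or }\wbPI)\, \ctp{P''}$, with $\ct{Q'} \,(\Longrightarrow\text{ or }\ArrPI\ell)\,(\contr\text{ or }\wbPI)\, \ctp{Q''}$, and with $P'' \RR Q''$, and — crucially — $\qctp$ still cannot interact with $P'',Q''$ (this is part of the iteration in the lemma, via the invariance of non-interaction under reduction noted after Definition~\ref{d:interact}). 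Second, I push these matching moves back through the outer $\contr$: from $Q_1 \contr \ct{Q'}$ and $\ct{Q'} \Arrcap{\mu} \ldots$ I get $Q_1 \Arrcap{\mu}$ to a process $Q_1''$ with $Q_1'' \contr\;(\text{or}\;\wbPI)\; \ctp{Q''}$. Third, I assemble the new $\S$-pair: combining $P_1'' \contr R$, $R \,(\contr/\wbPI)\, \ctp{P''}$ and using transitivity of $\contr$ and absorption of $\contr$ into $\wbPI$ (i.e. $\contr\wbPI \subseteq \wbPI$, $\contr\contr \subseteq \contr$), I obtain $P_1'' \,\Box\, \ctp{P''}$ and symmetrically $Q_1'' \,\Box\, \ctp{Q''}$ where $\Box$ is $\contr$ in the $\tau$-case and $\wbPI$ in the visible case. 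In the $\tau$-case this directly gives $(P_1'',Q_1'') \in \S$; in the visible case I get $P_1'' \wbPI \ctp{P''}$ and $Q_1'' \wbPI \ctp{Q''}$ with $(\ctp{P''},\ctp{Q''})$ the image under a context of an $\S$-related pair, so $(P_1'',Q_1'')$ lands in $\wbPI \S \wbPI$, i.e. $\S$ is at worst a bisimulation up-to $\wbPI$. Finally, invoke Theorem~\ref{t:up-to} (soundness of up-to $\wbPI$) to conclude $\S \subseteq \wbPI$, hence $\R \subseteq \wbPI$.

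**Main obstacle.** The delicate point is bookkeeping the non-interaction condition and the two different behavioural relations ($\contr$ versus $\wbPI$) depending on whether the observed action is silent or visible — the asymmetry is exactly the reason this technique is new and does not follow from Theorem~\ref{t:up-to-con-expa}. One must be careful that in the visible-action case we only ever get $\wbPI$ (not $\contr$) on the $P''$ side, and check that this is still enough to close $\S$ as a bisimulation up-to $\wbPI$; and one must verify that the weaker closure $\wbPI \S \wbPI$ is still sound, which is precisely Theorem~\ref{t:up-to}. A secondary subtlety is that $\qct$ is assumed non-interacting with $P$ and $Q$ (the processes in the holes at the moment of the step), and one must track that this property is preserved as $P,Q$ evolve and as $\qct$ is replaced by its derivatives $\qctp$ — but this invariance is exactly what was established in the discussion following Definition~\ref{d:interact} and is already baked into Lemma~\ref{l:upto_new}, so the theorem proof can simply cite it.

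\begin{proof}
We show that the relation $\S$ consisting of all pairs $(P_1,Q_1)$ such that, for some static context $\qct$ that cannot interact with $P'$ and $Q'$, and some processes $P',Q'$,
\[
P_1 \contr \ct{P'}, \qquad Q_1 \contr \ct{Q'}, \qquad P' \RR Q',
\]
is a bisimulation up-to $\wbPI$ (Definition in Section~\ref{ss:bisimulation_upto_wb}); the claim then follows by Theorem~\ref{t:up-to}, since $\R \subseteq \S$ (take $\qct = \contexthole$, which cannot interact with any process, and use reflexivity of $\contr$).

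Let $(P_1,Q_1) \in \S$ with witnesses as above, and suppose $P_1 \arr\mu P_1''$. Since $P_1 \contr \ct{P'}$, by the definition of expansion there is a derivative of $\ct{P'}$ reached by a visible step $\mu$ preceded and followed by (possibly zero) internal steps, say $\ct{P'} \arr{\mu_1}\cdots\arr{\mu_n} R$ with $P_1'' \contr R$, where the $\mu_i$ are all $\tau$ except at most one which is $\mu$ (and in the case $\mu=\tau$ all the $\mu_i$ are $\tau$ and possibly $n=0$, in which case $R = \ct{P'}$). Apply Lemma~\ref{l:upto_new}, using non-interaction of $\qct$ with $P',Q'$: in case $\mu = \tau$ part~(1) gives a static context $\qctp$ and processes $P'',Q''$ with $R \contr \ctp{P''}$, $\ct{Q'} \Longrightarrow \contr \ctp{Q''}$, and $P'' \R Q''$; in case $\mu = \ell$ visible part~(2) gives $\qctp$, $P''$, $Q''$ with $R \wbPI \ctp{P''}$, $\ct{Q'} \ArrPI\ell \wbPI \ctp{Q''}$, and $P'' \R Q''$. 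In both cases the invariance of the non-interaction property (see the remark after Definition~\ref{d:interact}, and as tracked through the iteration in Lemma~\ref{l:upto_new}) ensures that $\qctp$ cannot interact with $P''$ and $Q''$, so $(\ctp{P''},\ctp{Q''})$ is again a pair of the shape occurring in the definition of $\S$ with context $\qctp$.

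It remains to close the diagram on the $Q_1$ side and to exhibit the required $\S$-pair. Case $\mu=\tau$: from $Q_1 \contr \ct{Q'}$ and $\ct{Q'} \Longrightarrow \contr \ctp{Q''}$ we get, since $\contr$ is a simulation and $\contr\contr \subseteq \contr$, a process $Q_1''$ with $Q_1 \Longrightarrow Q_1''$ and $Q_1'' \contr \ctp{Q''}$. On the other side, $P_1'' \contr R \contr \ctp{P''}$, hence $P_1'' \contr \ctp{P''}$. Thus $(P_1'',Q_1'') \in \S$, and since $\S \subseteq \wbPI\,\S\,\wbPI$, the up-to-$\wbPI$ clause for internal moves is satisfied. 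Case $\mu=\ell$: from $Q_1 \contr \ct{Q'}$, $\ct{Q'} \ArrPI\ell \wbPI \ctp{Q''}$, and $\contr\wbPI \subseteq \wbPI$, we obtain $Q_1''$ with $Q_1 \ArrPI\ell Q_1''$ and $Q_1'' \wbPI \ctp{Q''}$; and $P_1'' \contr R \wbPI \ctp{P''}$ gives $P_1'' \wbPI \ctp{P''}$ (using $\contr \subseteq \wbPI$ and transitivity of $\wbPI$). Since $(\ctp{P''},\ctp{Q''}) \in \S$, we have $P_1'' \in \wbPI\,\S$-related to $Q_1''$ via $\ctp{P''},\ctp{Q''}$, i.e. $P_1'' \wbPI \S \wbPI Q_1''$, as required. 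By symmetry of $\R$ (hence of $\S$) the same analysis applies starting from a move of $Q_1$. Therefore $\S$ is a bisimulation up-to $\wbPI$, and by Theorem~\ref{t:up-to} we conclude $\S \subseteq \wbPI$, whence $\R \subseteq \wbPI$.
\end{proof}
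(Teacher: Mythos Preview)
Your strategy is right and close to the paper's, but the two diverge at a point that creates a genuine gap. You take the candidate relation $\S$ with $\contr$ on both sides and argue that $\S$ is a bisimulation \emph{up-to $\wbPI$}; the paper instead takes $\wbPI$ on both sides and shows that relation is a bisimulation \emph{directly}. The problem with your route is that the definition of bisimulation up-to $\wbPI$ in Section~\ref{ss:bisimulation_upto_wb} requires the clause for \emph{weak} challenges $P_1 \ArrPI\mu P_1''$, whereas you only verify it for the strong challenge $P_1 \arr\mu P_1''$. This is not a technicality: the strong-challenge form of up-to-$\wbPI$ is unsound in general. In CCS-style notation (encodable in \piI via a restricted communication for $\tau$), take $\R = \{(\tau.a.\nil,\,\nil),\,(\nil,\,\tau.a.\nil)\}$: the only strong move $\tau.a.\nil \arr\tau a.\nil$ is answered by $\nil \Longrightarrow \nil$, and $a.\nil \;\wbPI\; \tau.a.\nil \,\RR\, \nil \;\wbPI\; \nil$, so the strong clause holds; yet $\tau.a.\nil \not\wbPI \nil$. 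Hence Theorem~\ref{t:up-to} cannot be invoked from what you have actually checked.

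Repairing this means verifying the weak clause, and there your $\contr$-based $\S$ loses its apparent advantage: from $P_1 \contr \ct{P'}$ and a weak $P_1 \ArrPI\ell P_1''$ you only obtain a weak $\ct{P'} \ArrPI\ell R$, so Lemma~\ref{l:upto_new}(2)---stated for a single strong $\ell$-step---must be chained with two applications of part~(1). That is precisely the work the paper's direct-bisimulation argument does anyway, and the paper's choice of $\wbPI$ in the definition of $\S$ then lands back in $\S$ without any further up-to layer. (A smaller, separate imprecision: with $P_1 \contr \ct{P'}$ and a strong $P_1 \arr\mu P_1''$, expansion yields $\ct{P'} \arcap\mu R$, i.e.\ at most \emph{one} step, not the weak sequence you describe; this is harmless here but suggests the direction of $\contr$ was momentarily reversed.)
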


\begin{proof}
  By showing that 
  the relation $\R$ containing 
  all pairs $(P,Q)$ such that 
  \[
   \begin{array}{l}
   \mbox{ for some static context $\qct$ and
  proceses $P', Q'$} \\
   \mbox{ where $\qct$ does not interact with $P',Q'$ }  \\
  \mbox{ it holds that } P \wbPI \ct{P'}, Q \wbPI \ct{Q'} \mbox{ and }
  P'  \RR  Q' 
   \end{array}
   \]
  is a bisimulation. 
  In a challenge transition $P \arr\mu P_1$ one distinguishes the case
  when $\mu$ is a silent or visible action. 
  In both cases, one exploits Lemma~\ref{l:upto_new}. 
\end{proof} 
\subsection{The Call-By-Value $\lambda$-calculus}
\label{ss:cbv}

The grammar
of the untyped call-by-value $\lambda$-calculus, $\LasV$, 
has values $V$, terms $M$, evaluation contexts $E$, and general contexts $C$:
\[
\begin{array}{llll}
\Values &  V & \defeq & x \midd \lambda x.M \\
\Terms & M,N & \defeq & V \midd M N \\
\EContexts & E & \defeq & \contexthole \midd V E \midd E M \\
\Contexts & C & \defeq & \contexthole \midd \lambda x.C \midd M C \midd C M
\end{array}\]
 where $\contexthole$ stands for the hole of a context. 
The call-by-value reduction $\redv$ has two rules:
  \begin{mathpar}
   \inferrule*{ }{(\lambda x.M) V \redv M\subst{x}{V}} \qquad
   \inferrule*{M \redv N}{E[M] \redv E[N]}
 \end{mathpar}
 
 In the following, we write 
 $M \Converge M'$ to indicate that $M \Redv M'$ with $M'$ an eager normal form,
 that is, either a value or a stuck call $E[x V]$.

 We recall  Lassen's 
\emph{eager normal-form (enf) bisimilarity} \cite{lassentrees}.
Enf-bisimulations relate terms, values, and evaluation contexts.

\begin{definition}
\label{d:enf}
 An \emph{enf-bisimulation} is a triple of relations on terms $\RR_{\enfM}$,
 values $\RR_{\enfV}$, and evaluation contexts $\RR_{\enfK}$ that satisfies:
 
 \begin{itemize}
  \item $M_1 \RR_{\enfM} M_2$ if either:
  \begin{itemize}
   \item both $M_1,M_2$ diverge;
   \item $M_1 \Converge E_1[x V_1]$ and $M_2 \Converge E_2[x V_2]$
   for some $x$, values $V_1,V_2$, and evaluation contexts $E_1,E_2$ with
   $V_1 \RR_{\enfV} V_2$ and $E_1 \RR_{\enfK} E_2$;
   \item $M_1 \Converge V_1$ and $M_2 \Converge V_2$ for some values $V_1,V_2$
   with $V_1 \RR_{\enfV} V_2$.
  \end{itemize}
  \item $V_1 \RR_{\enfV} V_2$ if $V_1 x \RR_{\enfM} V_2 x$ for some fresh $x$;
  \item $E_1 \RR_{\enfV} E_2$ if $E_1[x] \RR_{\enfM} E_2[x]$ for some fresh $x$.
 \end{itemize}

 The largest \emph{enf-bisimulation} is called \emph{enf-bisimilarity}.
\end{definition}  
\section{Operational Game Semantics}
\label{s:AOGS}
 
We introduce the representation of $\LasV$ in OGS. 
The LTS produced by the embedding of a term intends to capture the possible interactions 
between this term and its environment.
Values exchanged between the term and the environment are represented by free variables, 
ranged over by $x,y,z$. 
Continuations (i.e., evaluation contexts) are also
represented by names, called \emph{continuation names} and ranged over by  $p,q,r$.

\emph{Actions} $\act$ have been introduced in Section~\ref{s:nota}. 
In OGS, we have five kinds of (visible) actions:
\begin{itemize}
  \item \emph{Player Answers} (PA), $\ansP{p}{x}$,
  and \emph{Opponent Answers} (OA),  $\ansO{p}{x}$,
  that exchange a variable $x$ through a continuation name $p$;
  \item \emph{Player Questions} (PQ), $\questP{x}{y,p}$,
  and \emph{Opponent Questions} (OQ), $\questO{x}{y,p}$, 
  that exchange a variable $y$ and a continuation name $p$ through a variable $x$;
  \item \emph{Initial Opponent Questions} (IOQ), $\questOinit{p}$, that
  introduce the initial continuation name $p$.
\end{itemize}
 
\begin{remark}
 \label{rmk:justified-sequence}
The denotation of terms is usually represented in game semantics using 
the notion of \emph{pointer structure} rather than traces.
A pointer structure is defined as a sequence of \emph{moves}, 
together with a pointer from each move (but the initial one) to a previous move 
that  `justifies' it.
Taking a trace $\tr$, one can reconstruct this pointer structure in the following way:
an action $\act$ is \emph{justified} by an action $\act'$
if the free name of $\act$ is bound by $\act'$ in $\tr$ (here we are taking advantage of
the `freshness' convention on the bound names of traces, Remark~\ref{r:bn}).
\end{remark}

\emph{Environments}, ranged over by $\gamma$, 
maintain the association from names to values and evaluation contexts, and  
 are partial maps with finite domain. A single mapping is either of the form 
$\pmap{x}{V}$
(the variable $x$ is mapped onto the  value $V$), or 
$\pmap{p}{(E,q)}$ 
(the continuation name $p$ is mapped onto the 
pair of the  evaluation context $E$ and the continuation $q$).
We write $\emptymap$ for the empty environment,
and $\gamma \cdot \gamma'$ for the union of two such partial maps,
provided that their domains are disjoint.

There are two main kinds of \emph{configurations} $\FF$:
\emph{active configurations} $\conf{M,p,\gamma,\phi}$ and 
\emph{passive configurations}
$\conf{\gamma,\phi}$, where $M$ is a term, $p$ a continuation name, 
$\gamma$ an environment and $\phi$ a set of names called its \emph{name-support},
corresponding to the set of all names that has been exchanged so far.

 \begin{figure*}
 \[
  \begin{array}{l|llll}
   (P\tau) & \conf{M,p,\gamma,\phi} & \arrA{\ \tau \ } & 
     \conf{N,p,\gamma,\phi} & \text{ when } M \redv N\\
   (PA) & \conf{V,p,\gamma,\phi} & \arrA{\ansP{p}{x}} & 
     \conf{\gamma \cdot \pmap{x}{V},\phi \uplus \{x\}} \\
   (PQ) & \conf{E[xV],p,\gamma,\phi} & \arrA{\questPV{x}{y}{q}} & 
     \multicolumn{2}{l}{\conf{\gamma \cdot \pmap{y}{V}\cdot \pmap{q}{(E,p)},\phi 
       \uplus \{y,q\}}} \\
   (OA) & \conf{\gamma\cdot \pmap{q}{(E,p)},\phi} & \arrA{\ansO{q}{x}} & 
     \conf{E[x],p,\gamma,\phi \uplus \{x\}}\\
   (OQ) & \conf{\gamma,\phi} & \arrA{\questOV{x}{y}{p}} & 
     \conf{V y,p,\gamma,\phi \uplus \{y,p\}} & \text{ when } \gamma(x) = V\\
   (IOQ) & \initconf{\phi}{M} & \arrA{\questOinit{p}} &
     \conf{M,p,\emptymap,\phi \uplus \{p\}}
  \end{array}\]
 \caption{The  LTS for the  Alternating OGS (\AOGS)}
 \label{fig:seq-ogs-lts}
\end{figure*}

The LTS is introduced in Figure~\ref{fig:seq-ogs-lts}.
It is called \emph{Alternating}, since, forgetting the P$\tau$ transition,
it is bipartite between active configurations that perform Player actions and $\tau$-transitions,
and passive configurations that perform Opponent actions.  
Accordingly, we call Alternating the resulting OGS, abbreviated \AOGS. 
In the OA rule, $E$ is ``garbage-collected''  from $\gamma$,
a behavior corresponding to \emph{linear continuations}.

The term of an active configuration determines the next transition performed.
First, the term needs to be reduced, using the rule (P$\tau$).
When the term is a value $V$, a Player Answer (PA) is performed,
providing a fresh variable $x$ to Opponent, while $V$ is stored in $\gamma$ at position $x$. 
Freshness is enforced using the disjoint union $\uplus$.
When the term is a callback $E[xV]$, with $p$ the current continuation name,
a Player Question (PQ) at $x$ is performed, providing two fresh names $y,q$
to Opponent, while storing $V$ at $y$ and $(E,p)$ at $q$ in $\gamma$.

On passive configurations, Opponent has the choice to perform different actions.
It can  perform an Opponent Answer (OA) by interrogating an evaluation context 
$E$ stored in $\gamma$. For this, Opponent provides a fresh variable $x$ that is plugged into the hole of $E$,
while the continuation name $q$ associated with $E$ in $\gamma$ is restored.
Opponent may also  perform  an Opponent Question (OQ),
by interrogating a value $V$ stored in $\gamma$. 
For this, Opponent provides a fresh variable $y$ as an argument to $V$.

To build the denotation of a term $M$, we introduce an \emph{initial
configuration} associated with it, 
written $\initconf{\phi}{M}$, with $\phi$ the set of free variables we start with.
When this set is taken to be the free variables of $M$, we simply write it as $\conf{M}$.
In the initial configuration, the choice of the continuation
name $p$ is made by performing an Initial Opponent question
(IOQ). So initial configurations are passive.

From an environment $\gamma$, we extract its \emph{continuation structure}
which is defined
as the relation over continuation names 
\[\cs(\gamma) \defeq \{(p,q) \mid p \in \dom{\gamma} \text{ and } \gamma(p) = (\_,q)\}\]
As we will see, the continuation structure indicates
on which continuation name $q$ Player should answer 
following an Opponent answer over $p$.
Continuation names of $\phi$ that are either
the toplevel continuation name -- for active configurations --
or that are in the continuation structure of $\gamma$ are called \emph{available}.
So continuation names that are not available are the ones that
have already been used to perform an answer.
This distinction is needed since we consider linear continuations here.
Variables are always considered to be available since there is
no constraint on the number of times they can be interrogated.

Names in $\dom{\gamma}$ are called \emph{P-names}, 
and those in $\phi \backslash \dom{\gamma}$ that are available are called \emph{O-names}.
So we obtain a \emph{polarity function} $\Label_{\FF}$ 
associated to $\FF$, 
defined as the partial maps
from $\phi$ to $\{O,P\}$  mapping names to their polarity.
 In the following, we  only consider \emph{valid configurations},
 for which:
 \begin{itemize}
  \item $\dom{\gamma} \subseteq \phi$
  \item $\FV(M),p$ are O-names;
  \item for all $a \in \dom{\gamma}$, the names appearing in $\gamma(a)$ are O-names.
 \end{itemize} 
\section{The encoding of call-by-value  
$\lambda$-calculus  into the $\pi$-calculus }
\label{s:encoLpi}

\begin{figure}[t]
  \begin{mathpar}
    \begin{array}{rcl}
\encoI {V}  &\defeq& 
\bind p 
 \bout p y .    \encoIVa {V} y
\\[\myptD]
      \encoI {MN}  &\defeq&  
\bind p 
\res q \big(
\begin{array}[t]{l}
\encoIa M q
                         | 
\inp q y. \res r \big(
\encoIa N r | \\[\myptD]
\! \! \! \!\inp r w .\bout y {w',p'}.(\fwd {w'} w|\fwd {p'} p)\big)\big)
\end{array}\\[\mysp]
\encoIV {\lambda  x.  M}  &\defeq& 
\bind y 
 !\inp y {x,q}.\encoIa M q
\\[\myptD]
\encoIV x   & \defeq& 
\bind y 
 \fwd y x
   \end{array}
  \end{mathpar}
 \caption{The encoding of call-by-value  $\lambda$-calculus into \piI} 
 \label{f:enc_internal}
 \end{figure}

We recall here Milner's encoding of call-by-value $\lambda$-calculus, transplanted into
\piI. 

The core of any encoding of the  $\lambda$-calculus into a process calculus is the 
translation of function application. This  
becomes a particular form of parallel combination of two processes, the function, and its argument;  
 $\beta$-reduction is then modelled as a process interaction.

The encoding of a $\lambda$-term is parametric over a name, that is, it is an abstraction.
This parameter may
be thought of as 
its \emph{continuation}.  
A term that becomes a value signals so at its continuation name
and, in doing so, it grants access 
to the body of the value. 
Such body is replicated, so that the value may be
copied several times.
When the value is a function, its body can
receive two names: (the access to) its value-argument, and the
following continuation.
In the translation of application, first, the function is run, then
the argument; finally, the function is informed of its argument and 
continuation.
(Milner noticed \cite{Mil92s} that his
call-by-value encoding can be easily tuned so to mimic
forms of 
evaluation in which, in an application $MN$, the function $M$ is run
first, or the argument $N$ is run first, or function and argument are
run in parallel.  Here we follow  the first option,  the most common  one in
 call-by-value.)

As in OGS, so in \piI 
the encoding uses 
two kinds of names, 
the \emph{continuation names} $p,q,r,\dots$, and 
the \emph{variable names} $x,y,v,w\dots$.
Figure~\ref{f:enc_internal} presents the encoding.
The encoding function $\cbvSymb$, on values, is decomposed by means of the auxiliary
function  $\cbvSymb^\variant$; this will be useful later, e.g., in the encoding of
environments. 
\reviewComment{please type the different
components of the encoding; in particular say that V* is for values}
\reviewCommentDS{We have added an explanation}
Process   $\fwd ab$  represents a \emph{link} (sometimes called forwarder; 
for readability we have adopted the infix
notation $\fwd ab$ for the constant $\fwd{}{}$). It 
transforms all outputs at $a$ into outputs at 
$b$ (therefore $a,b$ are
names of the same sort); thus the body of $\fwd ab$ is replicated,
 unless $a$ and $b$ are {continuation names}:
 $$\begin{array}{rcl}
 \fwd{}{} &\defeq &
 \left\{ \begin{array}{ll}
 \abs{p,q} \inp p {x}.\bout {q} {y}
           . {\fwd{ y}{ x}}\\[\myptSmall] 
 \multicolumn{1}{l}{  
        ~\quad  \mbox{if $p,q$ are continuation names}
                    }              \\[\myptD]
 \abs{x,y} ! \inp x {z,p}.\bout y {w,q}
           .({\fwd{ q}{ p}}|\fwd w z)  
\\[\myptSmall]
 \multicolumn{1}{l}{  
           ~\quad \mbox{if $x,y$ are variable names}
 }\end{array} \right. 
 \end{array}
  $$

The equivalence induced on call-by-value  $\lambda$-terms by their encoding into \piI 
coincides with 
Lassen's 
\emph{eager normal-form (enf) bisimilarity} \cite{lassentrees}.

\begin{theorem}[\cite{DurierHS18}]
\label{t:adrien}
$\enca M \wbPI \enca N$ iff $M$ and $N$ are enf-bisimilar. 
\end{theorem}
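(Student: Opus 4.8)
The plan is to prove the two implications separately, following the now-standard recipe of combining an \emph{operational correspondence} between $\lambda$-reduction and $\pi$-reduction with the up-to techniques of Section~\ref{ss:upto}. The pivot of both directions is a \emph{shape lemma}: for every term $M$, the process $\enca M$ is $\expa$-equivalent to a process that exposes the eager normal form of $M$. Precisely: (i) if $M \redv N$ then $\enca M \Longrightarrow \expa \enca N$, so $\beta_v$ is matched by internal steps (this uses that $\piI$ validates the $\beta_v$-law, as recalled in the Introduction); (ii) if $M \Converge V$ then $\enca M$, after some $\tau$-steps, behaves up to $\expa$ as $\bind p \bout p y. \encoIVa{V}{y}$; (iii) if $M \Converge E[xV]$ then $\enca M$ behaves up to $\expa$ as an abstraction on $p$ that, after some $\tau$-steps, fires an output at $x$ carrying fresh names linked (through forwarders) to $\enca{V}$ and to the $\pi$-encoding of $E$ together with $p$; (iv) if $M$ diverges then $\enca M$ diverges and performs no visible action. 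Each of (i)--(iv) is proved by induction on the structure of the term (and the evaluation context), together with a routine bookkeeping of the administrative $\tau$-steps of the encoding.

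\textbf{Soundness} ($M$ enf-bisimilar $N$ $\Rightarrow \enca M \wbPI \enca N$). I would take as candidate the symmetric relation $\S$ generated by the pairs $(\enca{M_1},\enca{M_2})$ with $M_1 \RR_{\enfM} M_2$, the pairs of encodings of $\RR_{\enfV}$-related values, and the pairs of $\pi$-encodings of $\RR_{\enfK}$-related evaluation contexts, all of this closed under static contexts of the form $\res{\tilc}(R \mid \contexthole)$ where $R$ is a parallel composition of forwarders $\fwd{}{}$. I would then show that $\S$ is a bisimulation up-to context and up-to $\expa$ — and, where the interacting-context subtlety of Section~\ref{ss:bisimulation_upto_wb} bites, up-to $(\contr,\wbPI)$ — concluding with Theorem~\ref{t:up-to-con-expa} / Theorem~\ref{t:up-to-con-expa-nointer}. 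The bisimulation game is driven entirely by the shape lemma: a visible transition of $\enca{M_1}$ forces $M_1$ into an eager normal form; the matching clause of the definition of $\RR_{\enfM}$ supplies an eager normal form for $M_2$; and the residual processes decompose, modulo $\expa$ and the static contexts above, into strictly smaller $\S$-related pairs, with the value (resp.\ context) clause of the enf-bisimulation feeding exactly the value (resp.\ context) component of $\S$.

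\textbf{Completeness} ($\enca M \wbPI \enca N \Rightarrow M$ enf-bisimilar $N$). I would show that the triple of relations $\RR_{\enfM} \defeq \{(M,N) : \enca M \wbPI \enca N\}$, $\RR_{\enfV} \defeq \{(V,V') : \enca{V x}\wbPI\enca{V' x}\text{, }x\text{ fresh}\}$, and $\RR_{\enfK} \defeq \{(E,E') : \enca{E[x]} \wbPI \enca{E'[x]}\text{, }x\text{ fresh}\}$ forms an enf-bisimulation; here compositionality of the encoding and congruence of $\wbPI$ are used to pass freely between $\enca V \wbPI \enca{V'}$ and $\enca{Vx}\wbPI\enca{V'x}$. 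Using the shape lemma in the reverse direction, the first visible action of $\enca M$ is an output on the continuation parameter exactly when $M \Converge V$, an output on a free variable exactly when $M \Converge E[xV]$, and there is no visible action at all exactly when $M$ diverges; so if $\enca M \wbPI \enca N$ the partner $\enca N$ is forced into the same clause, matching actions relate the residuals, and the recursive requirements of enf-bisimulation fall out.

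\textbf{Main obstacle.} The technical heart, and the only genuinely delicate point, is the up-to-context step in the soundness direction. Two features make it subtle. First, the encoding of a value is \emph{replicated}, so Opponent may duplicate it arbitrarily; the candidate relation and the permitted static contexts must therefore be stable under this copying, which is exactly why restricting `up-to context' to contexts $\res{\tilc}(R \mid \contexthole)$ and combining it with expansion is the right level of generality. Second, every $\beta_v$-step and every stuck-call step injects a forwarder $\fwd{}{}$, and these accumulate, so one needs a forwarder-absorption lemma: a chain of forwarders composed with $\enca{V}$ (resp.\ with the encoding of $E$) is $\expa$-equivalent to $\enca{V}$ (resp.\ to that encoding). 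This is the real workhorse of the matching argument. The full development is carried out in \cite{DurierHS18}; the statement is quoted here because it is what feeds the comparison with OGS in the sections that follow.
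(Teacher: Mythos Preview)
The paper does not prove this theorem: it is quoted as a result from \cite{DurierHS18}, with no proof or proof sketch given in the present paper. Your proposal therefore cannot be compared against a proof in the paper, and you yourself note in the last sentence that the statement is merely cited.

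That said, two remarks on the content of your sketch. First, the paper's related-work section says that \cite{DurierHS18,Durier20} carry out the analysis ``using proof techniques for $\pi$-calculus based on \emph{unique-solution of equations}''. Your soundness direction instead goes via bisimulation up-to context and up-to $\contr$ (or $(\contr,\wbPI)$). Both routes are viable and share the same operational groundwork --- indeed the shape lemmas you list as (i)--(iv) are essentially Lemmas~\ref{l:opt_sound}, \ref{l:opt_stuck}, \ref{l:opt_op} of the present paper, borrowed verbatim from \cite{DurierHS18} --- but the high-level machinery differs: unique-solution arguments avoid exhibiting an explicit candidate relation and instead show that the enf-bisimilarity class of $M$ and the $\wbPI$-class of $\enca M$ both satisfy the same system of guarded equations. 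Second, your completeness direction is the standard one and is unproblematic; the ``main obstacle'' you identify (forwarder absorption and replication of values) is real and is exactly where the optimised encoding $\qencV$ and the expansion law for links earn their keep in \cite{DurierHS18}.
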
 

\DSOCTb

In proofs about the behaviour of the \piI representation of $\lambda$-terms
we sometimes follow Durier et al.\ 
\cite{DurierHS18} \reviewComment{why the possessive?}
 \reviewCommentDS{It was an English typo, corrected}
and use an optimisation of Milner's encoding,
 reported in
Appendix~\ref{a:oc}.
The optimised encoding, 
indicated in the remainder as 
 $\qencV$,
   is 
obtained  from the  initial one 
$\cbvSymb$ 
by performing a few (deterministic) reductions,
at the price of 
 a more complex definition, in which a case analysis is made for the encoding of
 applications.
We only report here the  result about its behavioural correctness:

\begin{lemma}[correctness of the optimisation]
  \label{l:opt_sound} 
For all $M $, it holds that
 $\encoI M\exn\encoV M$.
Similarly, for all \AOGS configurations $\FF$, 
it holds that
 $
\encoConI \FF \exn\encoConV \FF$.
\end{lemma}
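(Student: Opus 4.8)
The plan is to exhibit an explicit expansion relation $\R$ containing all pairs $(\encoI M, \encoV M)$ for $M \in \LasV$, together with the pairs needed to close the bisimulation game; soundness of the lemma then follows from the definition of $\expa$. Since $\qencV$ is obtained from $\cbvSymb$ by performing a bounded number of deterministic $\tau$-reductions (administrative reductions, in the terminology of encodings of $\lambda$-calculus), the two encodings should be related at the level of the $\tau$-steps they require: $\encoI M$ will in general take \emph{more} internal steps than $\encoV M$, which is exactly the asymmetry that $\expa$ records. So the first thing I would do is pin down, by inspection of the two definitions side by side, which administrative redexes have been pre-contracted in $\qencV$; concretely one checks that the difference is localised in the clause for application $MN$, where $\qencV$ avoids creating (and immediately consuming) the intermediate links $\fwd{w'}{w}$, $\fwd{p'}{p}$ and the fresh channels $q,r$ in the degenerate cases where $M$ or $N$ is already a value or a variable.

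The core of the argument is then a routine induction on the structure of $M$, carried out inside the expansion game. For $M$ a variable or an abstraction, $\encoI M$ and $\encoV M$ are essentially identical (both reduce to a replicated input, resp.\ a link), so these are base cases up to $\alpha$-conversion. For $M = M_1 M_2$, I would use the clause-by-clause definition of $\qencV$: in each case the optimised process is obtained from $\encoI{M_1 M_2}$ by firing a fixed sequence of internal communications (the outputs on the private names $q,r$ against the matching inputs, and the link–link interactions), none of which involves a visible action and each of which strictly decreases a measure; hence $\encoI{M_1M_2} \expa$ (that sequence of reductions), and one then closes up using the congruence properties of $\expa$ — in particular that $\expa$ is preserved by parallel composition and restriction, and that $\fwd{a}{b}$ acting as a link can be absorbed up to $\expa$ — together with the induction hypotheses on $M_1$ and $M_2$. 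The proof is essentially the same flavour as the corresponding argument in Durier et al.~\cite{DurierHS18}; I would cite that for the details and only highlight the extra cases introduced by the case split.

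For the second statement, about \AOGS configurations, the point is that $\encoConI \FF$ and $\encoConV \FF$ are defined homomorphically from $\cbvSymb$ and $\qencV$ on the term components of $\FF$ (the active term $M$, and the values and evaluation contexts stored in the environment $\gamma$), with the same plumbing of continuation and variable names in both cases. So once $\encoI M \expa \encoV M$ is known for all terms and (by the same induction, extended to evaluation contexts via their hole) for all contexts, the configuration case follows by another use of the precongruence of $\expa$ under the context-forming operations used in the mappings — parallel composition of the encodings of the environment entries with the encoding of the active term, under a suitable restriction. The main obstacle I anticipate is bookkeeping rather than conceptual: making sure the freshness/$\alpha$-conversion conventions (Remark~\ref{r:bn}) line up so that the pre-contracted redexes in $\qencV$ really are the ones fired from $\encoI M$, and that the residual links are absorbed cleanly; the case analysis in the definition of $\qencV$ multiplies the number of sub-cases to check, but each is a short deterministic computation.
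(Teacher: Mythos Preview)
Your proposal is correct and matches the paper's approach: the paper simply states that the lemma ``is established by algebraic reasoning \cite{DurierHS18,Durier20}; its extension to encodings of configurations is straightforward,'' which is exactly the content you spell out (induction on $M$, contracting the administrative redexes in the application clause via laws valid for $\expa$, then lifting to configurations by precongruence). Your opening framing in terms of an explicit expansion relation $\R$ is slightly at odds with the algebraic style you then actually describe and that the paper uses --- in practice one chains precongruence and the law $\res a(\inp a{\tilx}.P \mid \bout a\tilx.Q) \contr \res a(P\mid Q)$ rather than verifying bisimulation clauses directly --- but this is cosmetic.
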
 
\reviewComment{A bit confusing because you haven’t introduced this encoding}
\reviewCommentDS{well, i see no way of helping here, as we do not want to move the
  optimised encoding from the appendix to here. Hence i would not do anything here}

The lemma is established by  algebraic
reasoning \cite{DurierHS18,Durier20}; its extension to encodings of configurations is straightforward.

\DSOCTe

\section{Relationship between \piI and \AOGS}
\label{s:encoGames}

To compare the \AOGS and \piI representations of the (call-by-value)
$\lambda$-calculus, we set a mapping from \AOGS configurations and environments 
to  \piI processes.
The mapping is reported in Figure~\ref{f:OGS_pi_cbv}. 
It is an extension of Milner's encoding of the $\lambda$-calculus and is
therefore indicated with the same symbol  $\cbvSymb$.
The optimisation $\qencV$ of Milner's encoding mentioned at the end of
the previous section is correspondingly extended to an optimisation of the mapping for
configurations. Again, we shall sometimes make use of such optimisations in proofs. 

The mapping uses a representation of environments $\gamma$ as associative lists, introduced in
Section~\ref{s:nota}.
So it relies on a choice of an order on the names that form the domain of $\gamma$.
This choice is irrelevant in the definition of the mapping into \piI, as behavioural equivalence in
\piI is invariant under commutativity and associativity of parallel composition.

\begin{remark}
 The  encoding of a configuration $\FF$ with name-support $\phi$ 
does not depend on $\phi$.
This name-support $\phi$ is used in OGS both to enforce freshness of names,
and to deduce the polarity of names,
as represented by the function $\Label$.
And indeed, the process $\encoConI{\FF}$ has its set of free names included
in $\phi$, and uses $P$-names in outputs and $O$-names in inputs.
The freshness property could be made more explicit by employing configurations also in \piI.
The polarity property could also be stated in $\pi$-calculus using 
i/o-sorting~\cite{PiSa96b}.
Indeed, a correspondence between arenas of game semantics (used to enforce polarities of moves) 
and sorting has  been 
explored~\cite{10.1145/224164.224189,10.1016/S0304-3975-99-00039-0}.
\end{remark}

\begin{figure}
Encoding of environments: \hfill   $ $ 
\[ 
\begin{array}{rcl}
\encoEnvI{\pmap{y}{V}\cdot\gamma'  } & \defeq & \encoIVa {V} y | \encoEnvI{\gamma'  }
\\[\myptD] 
\encoEnvI{\pmap{q}{(E,p)}\cdot\gamma'  } & \defeq & 
\inp qx .\encoIa {E[x]} p  | \encoEnvI{\gamma'  } \\[\myptD]
\encoEnvI{ \emptymap} & \defeq & \nil
\end{array} 
\]
Encoding of configurations: \hfill  $ $ 
\[ 
\begin{array}{rcl}
\encoConI{ \conf{M,p,\gamma, \phi}}
& \defeq &   
\encoIa {M} p | \encoEnvI \gamma 
\\[\myptD]
 \encoConI{ \conf{\gamma, \phi}}& \defeq & 
 \encoEnvI \gamma  
\\[\myptD]
 \encoConI{ \initconf{\phi}{M}} & \defeq &  
 \encoI   M
\end{array}
 \] 
\caption{From OGS environments and configurations to \piI }
\label{f:OGS_pi_cbv}
\end{figure}

\subsection{Operational correspondence}
\label{s:OC}

The following theorems establish the operational correspondence between the \AOGS and \piI representations. 

\begin{theorem} 
\label{t:opcorrCON_w}
$ $ \begin{enumerate}

\item If $\FF \LongrightarrowA \FF'$, then 
$\encoConI \FF \LongrightarrowPI  \; \contr \encoConI {\FF'}$;

\item If $\FF \ArrA\ell \FF'$, then
$ \encoConI \FF \ArrPI\ell  
\; \wbPI
\encoConI {\FF'}$.
\end{enumerate} 
\end{theorem}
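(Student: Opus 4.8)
The plan is to reduce everything to a \emph{strong} operational correspondence between the \AOGS\ transitions and the \piI\ transitions of the encoded configurations, and then lift this to the weak statements (1)–(2) by transitivity and by the algebraic properties of $\contr$ and $\wbPI$ (in particular that $\contr$ is a precongruence and $\contr\,\subseteq\,\wbPI$). So first I would establish a lemma of the shape: for each single \AOGS\ rule $\FF\arrA\mu\FF'$, the process $\encoConI\FF$ matches it by a short, deterministic burst of $\tau$-steps possibly followed by exactly one visible action of the same label~$\mu$, landing on a process that is related by $\contr$ (for the $\tau$-rule $(P\tau)$) or by $\wbPI$ (for the visible rules) to $\encoConI{\FF'}$. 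Concretely: $(P\tau)$ is handled by the operational soundness of Milner's encoding for $\beta_v$-reduction (the encoding of $\redv$ is simulated up to $\contr$, which is precisely where the Internal $\pi$-calculus and the use of $\qencV$ pay off, via Lemma~\ref{l:opt_sound} and the $\beta_v$ correctness from \cite{DurierHS18}); $(PA)$ corresponds to the output $\bout p y$ of $\encoI V = \bind p\,\bout p y.\encoIVa V y$, after which $\encoIVa V y$ is exactly the new entry $\encoEnvI{\pmap x V}$ (modulo $\alpha$-renaming $y\to x$) sitting in parallel with the rest of $\encoEnvI\gamma$; $(PQ)$ corresponds to unwinding the encoding of $E[xV]$ until the head output $\bout y{w',p'}$ at the free variable $x$ is exposed, with the residual forwarders $\fwd{w'}w \mid \fwd{p'}p$ being (up to $\expa$) the new environment entries for $y$ and $q$; $(OA)$ corresponds to the input $\inp q x$ of the entry $\inp q x.\encoIa{E[x]}p$ for the continuation name $q$, which consumes that entry (matching the garbage-collection of $E$ from $\gamma$) and releases $\encoIa{E[x]}p$; and $(OQ)$ corresponds to an input $\inp x{y,p}$ on a replicated value-entry $!\inp x{y,q}.\encoIa{My}q$-style process, leaving a copy behind (matching that $\gamma(x)=V$ is preserved).

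Second, I would assemble statement~(1): if $\FF\LongrightarrowA\FF'$ then the transition is a sequence of $(P\tau)$ steps, so iterating the single-step case and using that $\contr$ is transitive and a precongruence (so it composes through $\LongrightarrowPI$) gives $\encoConI\FF\LongrightarrowPI\,\contr\,\encoConI{\FF'}$. Note the subtlety that a single $\redv$ step may be simulated by \emph{several} $\tau$-steps of the encoding — that is exactly why the relation on the right is $\contr$ rather than strong bisimilarity, and why we must keep $\contr$ (rather than weaken to $\wbPI$) here: we will need to count $\tau$'s when applying this in up-to-expansion arguments later.

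Third, for statement~(2): write $\FF\ArrA\ell\FF'$ as $\FF\LongrightarrowA\FF_1\arrA\ell\FF_2\LongrightarrowA\FF'$. Apply~(1) to the first and last $\tau$-segments to get $\encoConI\FF\LongrightarrowPI\,\contr\,\encoConI{\FF_1}$ and $\encoConI{\FF_2}\LongrightarrowPI\,\contr\,\encoConI{\FF'}$; apply the single-step visible case to $\FF_1\arrA\ell\FF_2$ to get $\encoConI{\FF_1}\ArrPI\ell\,\wbPI\,\encoConI{\FF_2}$ (the visible burst may itself involve preparatory $\tau$'s, e.g.\ unwinding the application context in $(PQ)$, hence $\ArrPI\ell$ not $\arrPI\ell$). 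Since $\contr$ is a simulation in the strong-to-weak sense and is contained in $\wbPI$, and since $\wbPI$ is an equivalence that absorbs $\LongrightarrowPI$ and composes with $\ArrPI\ell$, the four pieces splice into $\encoConI\FF\ArrPI\ell\,\wbPI\,\encoConI{\FF'}$. (One must check that $\contr$ and $\wbPI$ are compatible with the static context $\encoEnvI{\gamma''}$ of the unchanged part of the environment — this is immediate from precongruence.)

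The main obstacle I expect is the visible-action single-step lemma for $(PQ)$ and for $(OQ)$: one has to track carefully how the encoding of an evaluation context $E[xV]$ unfolds — the nested $\res q,\res r$ structure in $\encoI{MN}$ must be pushed through until the head occurrence of the free variable $x$ emits $\bout y{w',p'}$, and then one must recognise the leftover $\fwd{w'}w\mid\fwd{p'}p$ (together with the residual context continuation) as behaviourally equal, up to $\expa$, to $\encoEnvI{\pmap y V\cdot\pmap q{(E,p)}}$. This is where the optimised encoding $\qencV$ and Lemma~\ref{l:opt_sound} are convenient, since $\qencV$ already performs these deterministic reductions, so one proves the correspondence for $\qencV$ and transfers it to $\cbvSymb$ via $\encoConI\FF\exn\encoConV\FF$. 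A secondary point of care is the freshness/bound-name convention of Remark~\ref{r:bn}: the bound names $x$ (in $(PA)$, $(OA)$), $y,q$ (in $(PQ)$, $(OQ)$), $p$ (in $(IOQ)$) introduced by the action must be chosen fresh for $\encoConI\FF$, which matches the side conditions $\phi\uplus\{\ldots\}$ in Figure~\ref{fig:seq-ogs-lts} and the freshness of bound output/input names in the $\pi$ LTS, so the $\alpha$-renamings in the matching are legitimate.
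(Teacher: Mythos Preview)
Your proposal is correct and follows essentially the same approach as the paper: establish a strong (single-step) operational correspondence by case analysis on the \AOGS\ rules (Lemma~\ref{l:opcorrCON_str} in the paper, stated for the optimised encoding $\qencV$ and using Lemmas~\ref{l:opt_stuck}--\ref{l:VAL}), then lift to the weak statements via transitivity and the algebraic properties of $\contr$ and $\wbPI$. Your identification of where $\qencV$ and Lemma~\ref{l:opt_sound} matter (the $(PQ)$ case, which needs Lemma~\ref{l:opt_stuck} to expose the head output at $x$) and of the freshness bookkeeping is accurate; one cosmetic slip is that in $(PQ)$ the residual after the output $\bout x{z,q}$ is directly $\encoVVa V z \mid \inp q y.\encoVa{E[y]}p$, i.e.\ the two new environment entries themselves, not forwarders.
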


\begin{theorem} 
\label{t:opcorrCON2_w}
$ $ \begin{enumerate}

\item If 
$ \encoConI \FF \LongrightarrowPI \PP$ then there is $\FF'$ such that 
$\FF \LongrightarrowA \FF'$  and  
$\PP  \wbPI \encoConI {\FF'}$;

\item If
$ \encoConI \FF \ArrPI\ell \PP$
and  $\ell$ is an output, 
 then there is $\FF'$ such that
 $\FF \ArrA\ell \FF'$
and $\PP \wbPI  \encoConI {\FF'}$;

\item If  $\FF$ is passive and 
$ \encoConI \FF \ArrPI{\ell} \PP$, 
then 
 there is $\FF'$ such that 
$\FF \ArrA{\ell} \FF'$
and $\PP  \wbPI \encoConI {\FF'}$.
\end{enumerate} 
\end{theorem}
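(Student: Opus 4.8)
The plan is to reduce everything to a \emph{strong} operational correspondence, obtained by inspecting the syntactic shape of $\encoConI{\FF}$ for the three kinds of configurations, and then to lift it to the weak statements using the expansion relation $\contr$. First I would establish the strong correspondence: given a reduction $\encoConI{\FF} \longrightarrowPI \PP$, determine which redex was fired. By the definition of the mapping in Figure~\ref{f:OGS_pi_cbv}, $\encoConI{\conf{M,p,\gamma,\phi}} = \encoIa{M}{p} \mid \encoEnvI{\gamma}$, and since the $\piI$ encoding uses $P$-names in outputs and $O$-names in inputs (and in a valid configuration the free names of $M$ and the names inside $\gamma(a)$ are $O$-names), the only communications available are either internal to $\encoIa{M}{p}$ — these correspond to $\beta_v$-steps and bureaucratic reductions of the encoding, hence to $(P\tau)$ moves, modulo $\contr$, by the standard analysis of Milner's encoding (as in Section~\ref{subsec:seq-ts-pi} / Appendix~\ref{a:oc}) — or a synchronization between $\encoIa{M}{p}$ and a replicated input in $\encoEnvI{\gamma}$, i.e.\ between an output $\bout{y}{\dots}$ produced when $M = E[yV]$ and the component $\encoIVa{V'}{y}$ of $\encoEnvI{\gamma}$ with $\gamma(y) = V'$; but the latter output is a \emph{visible} action of $\encoConI{\FF}$, not a $\tau$, so it cannot occur in a purely silent sequence unless it has first been consumed, which in a valid configuration it cannot. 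Thus every $\tau$-step of $\encoConI{\FF}$ is matched by a $(P\tau)$-step of $\FF$, giving part~(1) after iterating and using transitivity of $\LongrightarrowPI$ together with the fact that $\contr$ is preserved by the relevant reductions (Lemma~\ref{l:opt_sound} and the algebraic properties of $\contr$).

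For part~(2), suppose $\encoConI{\FF} \ArrPI{\ell} \PP$ with $\ell$ an output. By part~(1) I can push all the leading $\tau$-steps through the correspondence, so I may assume $\encoConI{\FF'_0} \arrPI{\ell} \PP_0 \LongrightarrowPI \PP$ for some $\FF'_0$ reached from $\FF$ by $(P\tau)$-moves. An output transition of $\encoConI{\FF'_0}$ must originate in the $\encoIa{M}{p}$ component (the environment part $\encoEnvI{\gamma}$ offers only inputs), and there are exactly two possibilities dictated by the encoding: either $M$ is a value $V$, and the output is $\bout{p}{x}$ — matching the $(PA)$ rule, with $\encoIVa{V}{x}$ becoming the new component $\pmap{x}{V}$ of the environment; or $M = E[xV]$ with $x$ an $O$-name, and the output is $\bout{x}{y,q}$ — matching the $(PQ)$ rule, with $\encoIVa{V}{y}$ and $\inp{q}{z}.\encoIa{E[z]}{p}$ the new environment components. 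In each case one reads off $\FF'$ with $\FF'_0 \arrA{\ell} \FF'$, so $\FF \ArrA{\ell} \FF'$, and $\PP_0$ is syntactically $\encoConI{\FF'}$ up to the $\contr$-slack introduced by the optimised encoding and by normalising application redexes; absorbing the trailing $\LongrightarrowPI$ via part~(1) yields $\PP \contr \encoConI{\FF'}$.

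Part~(3) is the case where $\FF$ is passive, so $\encoConI{\FF} = \encoEnvI{\gamma}$, a parallel composition of replicated inputs $\encoIVa{V}{y}$ (for $\pmap{y}{V}$) and of inputs $\inp{q}{x}.\encoIa{E[x]}{p}$ (for $\pmap{q}{(E,p)}$). Any visible transition is therefore an \emph{input}: either $\inp{y}{y',p'}$ triggering a copy of $\encoIa{V\,y'}{p'}$ — this is exactly the $(OQ)$ rule; or $\inp{q}{x}$ consuming the linear continuation and continuing as $\encoIa{E[x]}{p}$ — this is the $(OA)$ rule, and note the garbage-collection of $E$ from $\gamma$ on the OGS side mirrors the consumption of the non-replicated input on the $\pi$ side. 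Then applying part~(1) to the trailing $\tau$'s gives $\PP \wbPI \encoConI{\FF'}$; here one only gets $\wbPI$ rather than $\contr$ because after an $(OQ)$ move the $\pi$ term has performed a spurious communication / bookkeeping step relative to the freshly re-created OGS term $V\,y'$, so the step count need not be matched. (One might also need to treat the initial configuration case $\encoConI{\initconf{\phi}{M}} = \encoI{M}$, whose only transition is the parameter action $(p)$ matching $(IOQ)$; this is immediate.)

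The main obstacle I expect is not the case analysis itself but the precise bookkeeping of the $\contr$-slack: showing that after firing the chosen redex the residual $\pi$ process is genuinely $\contr$-above (or $\wbPI$-equivalent to) $\encoConI{\FF'}$ requires the deterministic-reduction lemmas for Milner's encoding and the optimisation (Lemma~\ref{l:opt_sound}), and one has to be careful that the freshness convention of Remark~\ref{r:bn} lines up the bound names of $\ell$ on the two sides. Distinguishing cleanly which residual steps are ``administrative'' (countable, hence $\contr$) versus genuinely unmatched (only $\wbPI$) in the passive $(OQ)$ case is the delicate point, and it is exactly why the theorem's three clauses give different relations.
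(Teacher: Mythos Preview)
Your plan is essentially the paper's own approach: reduce to a strong operational correspondence by case analysis on the shape of $\encoConI{\FF}$, then lift to weak transitions using the expansion relation. The paper carries out the strong case on the \emph{optimised} encoding $\qencV$ (Lemma~\ref{l:opcorrCON2} in Appendix~\ref{a:oc}, relying on Lemmas~\ref{l:opt_op}--\ref{l:VAL}) and then transfers back to $\cbvSymb$ via Lemma~\ref{l:opt_sound}; you leave this step implicit (``bureaucratic reductions \dots modulo $\contr$''), but your identification of the redex possibilities, of why there is no $\tau$-interaction between $\encoIa{M}{p}$ and $\encoEnvI{\gamma}$ in a valid configuration, and of why clause~(3) degrades to $\wbPI$ (extra $\tau$'s in $\encoVa{V y}{q}$ compared to the body the $\pi$-term actually reaches, cf.\ Lemma~\ref{l:VAL}) all match the paper's reasoning.
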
 

Both theorems above,  as well similar operational correspondence theorems in this and next
section,  are first proved in terms of the optimised encoding  $\qencV$ 
(introduced in \cite{DurierHS18} and discussed in the previous section and in 
Appendix~\ref{a:oc}), 
and exploiting its
tight correspondence with the main encoding $\cbvSymb$. 
Moreover, 
  the operational correspondence 
is first established 
on strong
transitions,  along the lines of analogous correspondence results, such as  those detailed  in
Section~\ref{subsec:seq-ts-pi}  and Appendix~\ref{a:oc}. 
The correspondence on strong transitions, as usual in these cases, 
is carried out by  induction on the depth of the proof of a given transition; then the
 results are extended to weak transitions using induction on the length of the  given weak
 transition.  
 The advantage of   $\qencV$ is that it does not have initial
`administrative' reductions; hence any initial reduction corresponds to an internal move
of a $\lambda$-term.   This, for instance, allows us to have 
 the expansion relation  $\contr$ (in place of the coarser $\wbPI$, even though the final
 goal is  the result for  $\wbPI$), 
in the  statement about silent actions for the 
operational correspondence from \piI to \AOGS 
(see,  e.g., Lemma~\ref{l:opcorrCON2} in the Appendix, and similar lemmas). 
The use of expansion 
is
essential, both to derive the  operational correspondence theorems  on weak transitions,  and to use
the theorems in up-to techniques for  $\piI$ (more generally, in applications of the
theorems in which one reasons about the number of steps performed).

\reviewComment{
Theorems 6.2 and 6.3: this seems to be at the core of the
translation. Could we have some details?}
\reviewCommentDS{We have added a paragraph following the theorems. The paper contains
  several operational correspondence results, whose proofs are not particularly
  interesting, and follow rather standard induction arguments.  These specific results are
similar to operational correspondence results that are found in papers that study the
relationship between lambda terms and their encoding pi-calculus  terms }

In Theorem~\ref{t:opcorrCON2_w}, a  clause is 
 missing
 for
input actions from $ \encoConI \FF$ when 
 $\FF$ active. Indeed such actions are possible in \piI, stemming from the (encoding of
 the) environment of $\FF$, whereas they are not possible in \AOGS. This is rectified in
 Section~\ref{subsec:seq-ts-pi}, introducing a constrained LTS for \piI, and in Section~\ref{s:cogs}, 
 considering a concurrent OGS.   
 
\begin{corollary}
\label{c:tracesGSpi}
If $\FF \ArrA s$ then also $ \enca \FF \ArrPI s $.
\end{corollary}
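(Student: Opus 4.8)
\textbf{Proof proposal for Corollary~\ref{c:tracesGSpi}.}

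The plan is to induct on the length of the trace $s$, using Theorem~\ref{t:opcorrCON_w}(2) to handle each visible action in turn. The base case, $s = \emptytrace$, is immediate: $\FF \ArrA{\emptytrace}$ holds trivially (taking zero silent steps), and likewise $\enca\FF \ArrPI{\emptytrace}$ holds trivially, so there is nothing to prove. For the inductive step, write $s = \ell, s'$ where $\ell$ is a single visible action and $s'$ is a shorter trace. By definition of $\ArrA{\cdot}$ on traces (Section~\ref{s:nota}), $\FF \ArrA{\ell, s'}$ means there is an intermediate configuration $\FF_1$ with $\FF \ArrA\ell \FF_1$ and $\FF_1 \ArrA{s'}$. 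Apply Theorem~\ref{t:opcorrCON_w}(2) to the first transition: we obtain $\enca\FF \ArrPI\ell\; \wbPI\; \enca{\FF_1}$, i.e.\ there is a \piI process $\PP$ with $\enca\FF \ArrPI\ell \PP$ and $\PP \wbPI \enca{\FF_1}$.

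Next I would apply the induction hypothesis to $\FF_1 \ArrA{s'}$, yielding $\enca{\FF_1} \ArrPI{s'}$. Since $\PP \wbPI \enca{\FF_1}$ and bisimilarity is preserved under (weak) traces --- if $P \wbPI Q$ and $Q \ArrPI{s'}$ then $P \ArrPI{s'}$, which follows by unfolding the bisimulation game action by action along $s'$ --- we get $\PP \ArrPI{s'}$. Concatenating, $\enca\FF \ArrPI\ell \PP \ArrPI{s'}$, and by the definition of $\ArrPI{\cdot}$ on traces this is exactly $\enca\FF \ArrPI{\ell,s'}$, i.e.\ $\enca\FF \ArrPI{s}$, completing the induction.

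There is no real obstacle here: the corollary is a routine consequence of the soundness direction of the operational correspondence. The only point requiring a (standard) auxiliary observation is that $\wbPI$ is included in trace inclusion --- more precisely, that bisimilarity composes with weak traces on the right --- which is a basic property of $\wbPI$ and is used implicitly throughout the paper. One small care point: by the freshness convention of Remark~\ref{r:bn}, the bound names occurring in $\ell$ and $s'$ are assumed fresh for $\FF$ (and hence, since $\enca\FF$ has free names included in the name-support of $\FF$, fresh for $\enca\FF$), so the trace transitions on the \piI side are well-formed and the decomposition $s = \ell, s'$ respects the naming discipline. No clause for silent actions or for the subtle input-action mismatch flagged after Theorem~\ref{t:opcorrCON2_w} is needed, since we are only transporting traces \emph{from} \AOGS \emph{to} \piI, which is precisely the direction covered cleanly by Theorem~\ref{t:opcorrCON_w}.
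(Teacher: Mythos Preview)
Your proof is correct and follows exactly the approach the paper intends: induction on the length of $s$, invoking Theorem~\ref{t:opcorrCON_w}(2) for each visible action and using that $\wbPI$ implies trace inclusion to carry the remaining trace across. The paper's own proof is the one line ``By induction on the length of trace $s$''; your write-up simply spells out the details of that induction.
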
 

\begin{proof}
By induction on the length of trace $s$.
\end{proof} 

\subsection{An \outputpr Transition System}
\label{subsec:seq-ts-pi}

We define an LTS for \piI in which input actions are
visible only if no output can be consumed, 
either as a  visible action or through an internal action 
(i.e., syntactically the process has no unguarded output). 
The new LTS, called \emph{\outputpr} and indicated as \opLTS,  
is defined on top of the ordinary one 
by means of the two rules below. A process $P$ is \emph{input reactive} if 
whenever $\PP \arrPI \mu  P'$, for some $\mu,P'$ 
 then  $\mu$ is an input action.
 \[
\begin{array}{l}
  \inferrule{\PP \arrPI{ \mu} \PP' \andalso  
  \mbox{$\PP$ input reactive}
 }{\PP \arrN {\mu} \PP'}
\\
  \inferrule{\PP \arrPI{ \mu} \PP'
 }{\PP \arrN {\mu} \PP'}
\mbox{$\mu$ is an output or $\tau$  action}
\end{array}
 \]
The \opLTS captures an aspect of   \emph{sequentiality} 
in  $\pi$-calculi:  
a  free input prefix  is to be  thought of as a service offered to the external environment; 
in a sequential system, such a service is available only if 
there are no ongoing computations due to previous interrogations of the server. 
An ongoing computation is represented by  
a $\tau$-action, indicating a step  of computation  internal to the server, or 
an output, indicating either an answer to a client  or a request to an
external server.  
The constraint imposed by the 
new LTS can also be formalised compositionally,
as done in Figure~\ref{t:pii_op}.

\begin{figure*}[ht]
\begin{center}
\begin{tabular}{rlcrl}
{\trans{alpha}}: &  $\displaystyle{  P\equival P' \andalso   { P'} \arrN\mu    { P''}
  \over 
  P \arrN\mu {P''}  
 } $ &  
{\trans{ pre}}:&  
$\displaystyle{ 
\over
{\mu.      P} \arrN\mu {P}}$   
 \\[\mysp]
\multicolumn{5}{c}{ 
  {\trans{ parL}}:\; \; $\displaystyle{   P \arrN\mu { P'}
 \andalso  
  \mbox{($Q$ input reactive) or ($\mu$ is an output or  $\tau$)}
 \over  { P | Q}   \arrN\mu
{P'| Q} } $  }
\\[\mysp] 
\multicolumn{5}{c}{ 
    {\trans{com}}:  \; \;    $\displaystyle{ {P} \arrN{ \mu }{P'} \andalso   
 Q
\arrN{\outC\mu}{Q'}
\andalso \mu\neq \tau,   \; \tilx = \bn\mu 
  \over
{     P |  Q }\arrN{ \tau} 
{
 \res{  \tilx} (P'
|  Q') } }$    
}
 \\[\mysp]
{\trans{ res}}:& $\displaystyle{{  P} \arrN{\mu}{P'}
\andalso  x \not \in \n\mu
 \over
{ \res x     P}   \arrN{ \mu}{\res x P'  }} $ 
&
    {\trans{rep}}: &
 $\displaystyle{ {\mu.  P }\arrN{\mu} {P'}
 \over
{ !\mu. P }   \arrN{ \mu}{ P' | !\mu. P } } $  
 \\[\mysp]
\multicolumn{5}{c}{ 
    {\trans{con}}: \;\;
$  \displaystyle{   P \arrN \mu {P'}
\andalso  \mbox{ } {\rmmm K}  \defi
  \abs{\til y} Q  
\andalso 
\mbox{ } \abs{\til y} Q \equival \abs{\til x} P
    \over 
{
 \app {\rmmm K}{\til x}}  \arrN\mu 
{P'} }$
}\end{tabular} 
\end{center} 
\caption{A compositional presentation of the \outputpr LTS  for \piI}
\label{t:pii_op}
 \Mybar 
\end{figure*}

\reviewComment{rule "pre": what is $\phi$?}
\reviewCommentDS{removed (probably a wrong cut-and-paste)}
Under the  \opLTS,  the analogue of Theorem~\ref{t:opcorrCON_w} continues to hold: 
in  \AOGS configurations,
input transitions only occur in passive configurations, 
and the encodings of passive configurations are input-reactive processes. 
However, now we have   the full  converse  of Theorem~\ref{t:opcorrCON2_w} and, as a
consequence, we can also establish  the converse direction of 
Corollary~\ref{c:tracesGSpi}.

\DSOCTb

\begin{lemma} 
\label{l:opcorrCON3_w}
$ $ \begin{enumerate}


\item If 
$ \encoConI \FF \ArrN {} \PP$ then there is $\FF'$ such that 
$\FF \LongrightarrowA \FF'$  and  
$\PP  \wbPI \encoConI {\FF'}$ ;

\item If
$ \encoConI \FF \ArrN\ell \PP$
 then there is $\FF'$ such that
 $\FF \ArrA\ell \FF'$
and $\PP \wbPI \encoConI {\FF'}$.
\end{enumerate} 
\end{lemma}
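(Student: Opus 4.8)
The plan is to obtain Lemma~\ref{l:opcorrCON3_w} from the already-established correspondence (Theorems~\ref{t:opcorrCON_w} and~\ref{t:opcorrCON2_w}), by showing that the only transitions of $\encoConI\FF$ that are \emph{discarded} when passing from the ordinary LTS $\arrPI{}$ to the \opLTS $\arrN{}$ are exactly the input transitions that Theorem~\ref{t:opcorrCON2_w} was unable to match on the OGS side, namely input actions emanating from the encoded environment of an \emph{active} configuration. Concretely, I would first record two structural observations about the image of the mapping $\cbvSymb$ in Figure~\ref{f:OGS_pi_cbv}: (i) if $\FF$ is passive, then $\encoConI\FF = \encoEnvI\gamma$ is \emph{input reactive} — every unguarded prefix in $\encoEnvI\gamma$ is an input (either $\inp y{x,q}$ from a replicated value $\encoIVa V y$ with $V$ an abstraction, or $\inp q x$ from a continuation binding, or an input from a forwarder on a variable name), so under the \opLTS its transitions coincide with those of the ordinary LTS; (ii) if $\FF = \conf{M,p,\gamma,\phi}$ is active, then $\encoConI\FF = \encoIa M p \mid \encoEnvI\gamma$, and here the component $\encoIa M p$ always has an unguarded output or $\tau$ available (this is the content of the operational-correspondence analysis already done for strong transitions: either $M$ reduces, contributing a $\tau$, or $M$ is an answer emitting $\bout p y$, or $M = E[xV]$ emitting $\bout y{w',p'}$-type output — in every case $\encoConI\FF$ is \emph{not} input reactive), so by rules {\trans{parL}} and {\trans{com}} of Figure~\ref{t:pii_op} the only $\arrN{}$-transitions of $\encoConI\FF$ are $\tau$'s and outputs, precisely those of the $\encoIa M p$ part plus internal communications — and the stray inputs from $\encoEnvI\gamma$ are suppressed.

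Given these observations, the two clauses follow quickly. For clause~(1): if $\encoConI\FF \ArrN{} \PP$, then since $\arrN{} \subseteq \arrPI{}$ we have $\encoConI\FF \LongrightarrowPI \PP$, and Theorem~\ref{t:opcorrCON2_w}(1) supplies $\FF'$ with $\FF\LongrightarrowA\FF'$ and $\PP\contr\encoConI{\FF'}$. For clause~(2): again $\encoConI\FF\ArrN\ell\PP$ gives $\encoConI\FF\ArrPI\ell\PP$. If $\ell$ is an output, Theorem~\ref{t:opcorrCON2_w}(2) already gives $\FF\ArrA\ell\FF'$ with $\PP\contr\encoConI{\FF'}\wbPI\encoConI{\FF'}$, done. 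The remaining case is $\ell$ an input. Here I use observation~(ii): along the weak transition $\encoConI\FF\LongrightarrowPI \cdot \arrN\ell\PP$, the input $\ell$ is performed only once the process reached is input reactive; by~(ii) a process of the form $\encoIa M p\mid\encoEnvI\gamma$ with $M$ a term is never input reactive, and by~(i) the derivatives reached by $\tau$-steps from $\encoConI\FF$ are again encodings of configurations (up to $\contr$, by Theorem~\ref{t:opcorrCON2_w}(1)), so to fire an input we must first have $\LongrightarrowPI$ into (something $\contr$-above) the encoding of a \emph{passive} configuration $\FF''$ with $\FF\LongrightarrowA\FF''$; then $\encoConI{\FF''}\ArrN\ell\PP$ with $\FF''$ passive, and Theorem~\ref{t:opcorrCON2_w}(3) yields $\FF''\ArrA\ell\FF'$ with $\PP\wbPI\encoConI{\FF'}$. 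Composing, $\FF\ArrA\ell\FF'$, which is what is required. (The $\contr$ slack introduced by the intermediate appeal to Theorem~\ref{t:opcorrCON2_w}(1) is absorbed into the final $\wbPI$ using that $\contr{}\subseteq{}\wbPI$ and that $\contr$ is preserved by the weak transitions, as in the standard bisimulation-up-to arguments of Section~\ref{ss:upto}.)

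The main obstacle I anticipate is making observation~(ii) — ``an active configuration's encoding is never input reactive, and this persists under $\tau$-steps until a passive encoding is reached'' — fully rigorous, because the weak transition $\LongrightarrowPI$ may interleave reductions of $\encoIa M p$ with communications involving $\encoEnvI\gamma$ (e.g.\ an output of $\encoIa M p$ consumed by a replicated value in $\gamma$). I would handle this by invoking the strong operational correspondence already proved for the non-optimised encoding (the ``strong transitions'' analysis referenced before Theorem~\ref{t:opcorrCON_w} and detailed in Appendix~\ref{a:oc}): every internal step of $\encoConI\FF$ is matched, up to $\contr$, by either a $(P\tau)$-step of $\FF$ or an administrative reduction, and in all intermediate processes the ``$M$-component'' retains an unguarded output or $\tau$ exactly until $M$ has been fully consumed by a $(PA)$ or $(PQ)$ action — at which point the configuration is passive. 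So input-reactivity of an intermediate process coincides with the OGS configuration being passive, which is exactly the hinge of the argument. The rest is routine bookkeeping with the \opLTS rules of Figure~\ref{t:pii_op} and the transitivity/preservation properties of $\contr$ and $\wbPI$.
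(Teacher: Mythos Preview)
Your reduction to Theorems~\ref{t:opcorrCON_w} and~\ref{t:opcorrCON2_w} is a different route from the paper's. The paper instead first establishes the \emph{strong}-transition correspondence for the optimised encoding $\qencV$ (Theorem~\ref{t:opcorrCON3} in Appendix~\ref{a:oc}), then lifts to weak transitions and transfers back to $\cbvSymb$ via Lemma~\ref{l:opt_sound}. The point of the detour through $\qencV$ is Lemma~\ref{l:opt_op}: $\encoVa M p$ has \emph{exactly one} strong transition, and it is always a $\tau$ or an output. Hence the (optimised) encoding of an active configuration is transparently never input-reactive, at the strong level, with no ``up to $\contr$'' slack to manage.

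Your argument for the input case of clause~(2) has a real gap, and this is precisely what the paper's detour avoids. You argue that a $\tau$-derivative $Q_0$ of $\encoConI\FF$ is, by Theorem~\ref{t:opcorrCON2_w}(1), $\contr$-above some $\encoConI{\FF''}$, and since encodings of active configurations are not input-reactive (your observation~(ii)), $\FF''$ must be passive. But $\contr$ does not transfer ``not input-reactive'' in the direction you need: for the non-optimised encoding, whenever $M''$ is an application the \emph{only} strong transition of $\encoIa{M''}p$ is a $\tau$ (a communication on the restricted name $q$), and expansion lets that $\tau$ be matched by zero steps on the $Q_0$ side. So from $Q_0 \contr \encoConI{\FF''}$ with $\FF''$ active you cannot conclude that $Q_0$ has a $\tau$ or output; $Q_0$ could in principle be deadlocked, hence vacuously input-reactive. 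Relatedly, your phrase ``$\FF\LongrightarrowA\FF''$ with $\FF''$ passive'' is impossible when $\FF$ is active: the only silent \AOGS rule is $(P\tau)$, which preserves activity. What you actually need is that the input case is \emph{vacuous} for active $\FF$, i.e.\ that no $\tau$-reduct of $\encoIa M p$ is ever stuck. This progress property is true, but it does not fall out of the operational correspondence up to $\contr$; it requires a direct syntactic argument on the shape of the intermediate processes (and structural induction on $M$ alone does not work, since $\beta$-reduction can grow the term). Using $\qencV$ and Lemma~\ref{l:opt_op}, as the paper does, makes this step a one-liner.
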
 

Again, in the proof,  we first establish a correspondence result on strong transitions,
and reason on the optimised encoding $\qencV$. 
See Appendix~\ref{a:oc} for details.

Lemma~\ref{l:opcorrCON3_w}  
talks about the \opLTS of \piI;
  however the theorems make use of the
ordinary expansion relation  $\expa $, which is defined on the ordinary LTS. 
We now show that 
such uses of
expansion can actually be replaced by expansion on the \opLTS, defined as ordinary
expansion, but on the \opLTS.

A behavioural relation on the ordinary LTS `almost always' implies the corresponding
relation on the \opLTS.  The only exceptions are produced by processes that are related
although they exhibit different divergence behaviours (by divergence we mean the
possibility of performing an infinite number of $\tau$ actions) and at the same time may
perform input actions. For instance, using $\tau^\omega$ to indicate a purely divergent
process  (i.e., a process whose only transition is $\tau^\omega \arr\tau \tau^\omega$), 
 the processes $a.\nil | \tau^\omega $ and $a.\nil$ are  bisimilar (and trace equivalent)
 in the ordinary LTS, but they are not so in the \opLTS, since the divergence in the former
 process prevents observation of the input at $a$.

Indeed, a `bisimilarity respecting divergence' would imply bisimilarity on the \opLTS. 
Such  `bisimilarity respecting divergence', written $\wbPIDIV$, is defined as $\wbPI$ with
the additional requirement that $P \wbPI Q$ implies 
\begin{itemize}
\item
$P \Uparrow $ iff  $Q \Uparrow $ \hfill $(*)$
\end{itemize} 
where the predicate $ \Uparrow $ holds for an  agent $R$  if it
has a divergence (an infinite path 
consisting of only $\tau$ actions). 

\begin{lemma}
\label{l:bisDIV_op}
Relation $\wbPIDIV$ implies $\wbOPI$.
\end{lemma}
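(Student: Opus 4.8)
The plan is to show that if $P \wbPIDIV Q$ then the same relation $\wbPIDIV$, viewed now as a relation on states of the \opLTS, is a bisimulation for $\arrN{}$. So suppose $P \wbPIDIV Q$ and $P \arrN\mu P'$; I must match this by some $Q \ArcapG\mu_{o\pi} Q'$ with $P' \wbPIDIV Q'$. By the definition of the \opLTS there are two cases according to which of the two rules derives the transition $P \arrN\mu P'$.

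First I would treat the case where $\mu$ is an output or a $\tau$ action, or more generally where the transition comes from the second \opLTS rule: here $P \arrPI\mu P'$ is an ordinary transition, so since $P \wbPIDIV Q$ (hence $P \wbPI Q$) there is $Q'$ with $Q \ArcapG\mu Q'$ in the ordinary LTS and $P' \wbPIDIV Q'$. The point to check is that this ordinary weak transition $Q \LongrightarrowPI \arrPI\mu \LongrightarrowPI Q'$ is also a weak transition in the \opLTS. For the $\tau$-steps this is immediate since $\tau$-transitions are always available in the \opLTS; for the visible step, when $\mu$ is an output it is likewise always available. When $\mu = \tau$ the matching move is a (possibly empty) sequence of $\tau$'s, again fine. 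So this case reduces to the divergence-preservation clause $(*)$ only insofar as it guarantees $P' \wbPIDIV Q'$, which is given.

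The hard case — and the reason the hypothesis is strengthened to $\wbPIDIV$ — is when $\mu$ is an input action arising from the first \opLTS rule, so that $P$ is input reactive: every ordinary transition of $P$ is an input. In particular $P$ has no unguarded output and no $\tau$-transition at all, hence $P$ does not diverge, i.e.\ $\neg(P \Uparrow)$. By clause $(*)$ we get $\neg(Q \Uparrow)$. Now from $P \wbPI Q$ and $P \arrPI\mu P'$ we obtain $Q \LongrightarrowPI Q_1 \arrPI\mu Q_2 \LongrightarrowPI Q'$ in the ordinary LTS with $P' \wbPIDIV Q'$. I must upgrade this to a transition of the \opLTS, i.e.\ show that along this path the input step is legal. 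The essential observation is: each process $Q_i$ reached by $\tau$-steps from $Q$ satisfies $Q_i \wbPI Q$ (bisimilarity is preserved under $\tau$-steps on both sides), hence $Q_i$ is bisimilar to the input-reactive, non-divergent $P$; combined with non-divergence of $Q$, I argue that $Q_1$ (the process about to perform the input) must itself be input reactive — for if $Q_1$ had an unguarded output or $\tau$, then by bisimilarity $P$ could match it, contradicting input-reactivity of $P$ (for the output) or we would be able to iterate and build an infinite $\tau$-path from $Q$ if $\tau$-steps never ceased, contradicting $\neg(Q\Uparrow)$; more carefully, since $Q$ is non-divergent, after finitely many $\tau$-steps one reaches a $\tau$-free process $Q_1$, and $\tau$-freeness plus bisimilarity with the output-free $P$ forces $Q_1$ to be input reactive. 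Thus $Q \LongrightarrowPI Q_1$ is a weak $\tau$-transition of the \opLTS, $Q_1 \arrN\mu Q_2$ holds, and $Q_2 \LongrightarrowPI Q'$ is again a weak $\tau$-transition of the \opLTS, giving $Q \ArrN\mu Q'$ with $P' \wbPIDIV Q'$, as required.

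The main obstacle is precisely this last step: turning an ordinary weak transition of $Q$ matching an input into a \opLTS weak transition, which forces one to locate, along the $\tau$-prefix of $Q$, a point that is genuinely input reactive. The argument rests on two facts: that bisimilarity (indeed $\wbPIDIV$) is preserved along $\tau$-steps, so every intermediate state stays bisimilar to an input-reactive non-divergent process, and that non-divergence of $Q$ guarantees the $\tau$-prefix is finite and terminates in a $\tau$-free state. One subtlety to handle with care is that "input reactive" in the paper means \emph{all} ordinary transitions are inputs, so a $\tau$-free state bisimilar to an output-free process need not a priori be output-free — but bisimilarity does transfer the absence of observable outputs, and together with $\tau$-freeness this yields input-reactivity. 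I would also note that trace equivalence respecting divergence can be handled by the same bisimulation-style argument restricted to the trace game, so an analogous statement for $\TEPI$ follows; but for the stated lemma only the bisimulation case is needed.
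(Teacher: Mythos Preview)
Your proposal is correct; the paper states the lemma without proof, so there is nothing to compare against, but your argument is the natural one and goes through.

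Two small points worth tidying. First, your presentation reuses the name $Q_1$ for two different things: initially the intermediate state supplied by ordinary bisimilarity, and then the $\tau$-free state you reach by exhausting silent steps. It is cleaner to drop the first $Q_1$ entirely: from $P \wbPIDIV Q$ and $\neg(Q\Uparrow)$, go directly to some $\tau$-normal $Q_1$ with $Q \LongrightarrowPI Q_1$; argue $P \wbPIDIV Q_1$ (since $P$ matches every $\tau$-step of $Q$ with the empty move, being input reactive); then play the challenge $P \arrPI\mu P'$ against $Q_1$ to obtain $Q_1 \arrPI\mu Q_2 \LongrightarrowPI Q'$ with $P' \wbPIDIV Q'$, and check $Q_1$ is input reactive exactly as you do. Second, you rely on $\wbPIDIV$ being closed under the bisimulation game (derivatives stay in $\wbPIDIV$); this is the intended coinductive reading of the paper's definition, but it is worth making the assumption explicit since the paper's phrasing (``defined as $\wbPI$ with the additional requirement'') is slightly informal.
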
 
The same result applies to  other relations; e.g., the `expansion  respecting divergence'
(defined by adding the clause $(*)$ above to those of the expansion
relation $\expa$)
 on the ordinary LTS implies expansion  on the \opLTS.

A comment about expansion ($\expa$) is worthwhile. 
As pointed out earlier,   the use of expansion
is important in proofs  about operational correspondence results. 
Such uses are  due to 
the appearance of expansion in 
 lemmas such as  
 Lemma~\ref{l:opt_sound} (and Lemmas~\ref{l:opt_stuck}-\ref{l:opt_con}, in
 Appendix~\ref{a:oc}),
 and
come from application of some
simple algebraic laws that respect  divergence (i.e., the laws do not add or remove any
divergence). Indeed the laws either are valid for strong bisimilarity, or they are laws
such as 
\[ 
{\res{a} ( \inp a {\tilx}.P | \bout a \tilx . Q)} \;  \contr \;
{\res{a} (P |  Q)  }
 \]
As a consequence, all occurrences of $\expa$ 
can be replaced by the expansion relation defined on the \opLTS ($\expaOP$). 
In this manner, for instance, we can   refine (the proof of)  Lemma~\ref{l:opcorrCON3_w}
and derive: 

 \begin{theorem} 
 \label{t:opcorrCON3_w}
$ $ \begin{enumerate}


\item If 
$ \encoConI \FF \ArrN {} \PP$ then there is $\FF'$ such that 
$\FF \LongrightarrowA \FF'$  and  
$\PP  \wbOPI 
 \encoConI {\FF'}$ ;

\item If
$ \encoConI \FF \ArrN\ell \PP$
 then there is $\FF'$ such that
 $\FF \ArrA\ell \FF'$
and $\PP \wbOPI \encoConI {\FF'}$.
\end{enumerate} 
\end{theorem} 

A final remark  about 
behavioural equivalences on the ordinary and \opLTS
concerns the encoding of $\lambda$-terms. 
On such terms, 
the ordinary
bisimilarity $\wbPI$ implies the `bisimilarity respecting divergence' $\wbPIDIV$. This
because, intuitively, a term  $\encoIa {M} p$ is divergent iff $M$ is so in the call-by-value
$\lambda$-calculus. 
(Again, this property boils down to the fact that the uses of
$\expa$ in Lemmas such as~\ref{l:opt_sound}-\ref{l:opt_con} 
 respect divergence.)
   See Appendix~\ref{a:behav} for   details.

From Theorem~\ref{t:opcorrCON3_w} we derive: 

\DSOCTe

\begin{corollary}
\label{c:traces}
For any configuration $\FF$ and trace $s$, we have 
  $\FF \ArrA s$   iff 
 $ \encoEnvI \FF  \ArrN s$.
\end{corollary}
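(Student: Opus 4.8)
The plan is to deduce Corollary~\ref{c:traces} directly from Theorem~\ref{t:opcorrCON3_w} by an induction on the length of the trace $s$, handling each direction of the ``iff'' in turn. The statement is essentially the trace-level packaging of the step-wise operational correspondence expressed by Theorem~\ref{t:opcorrCON3_w}, so the work is just the bookkeeping needed to chain single visible actions (interleaved with silent moves) into full traces.

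First I would prove the left-to-right direction: assuming $\FF \ArrA s$, show $\encoEnvI \FF \ArrN s$. Write $s = \ell_1,\dots,\ell_n$. By definition of $\ArrA s$ there are configurations $\FF_0 = \FF, \FF_1, \dots, \FF_n$ with $\FF_{i} \ArrA{\ell_{i+1}} \FF_{i+1}$, where each $\ArrA{\ell_{i+1}}$ is $\LongrightarrowA \arrA{\ell_{i+1}} \LongrightarrowA$. For the induction step, I would actually want a slightly stronger induction hypothesis than plain ``$\encoEnvI\FF \ArrN s$'': namely that there is a process $\PP_i$ with $\encoEnvI\FF \ArrN{\ell_1,\dots,\ell_i} \PP_i$ and $\PP_i \wbOPI \encoEnvI{\FF_i}$. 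This is what lets the induction proceed: given $\PP_i \wbOPI \encoEnvI{\FF_i}$ and $\FF_i \ArrA{\ell_{i+1}} \FF_{i+1}$, part~(2) of Theorem~\ref{t:opcorrCON3_w} (applied to $\FF_i$) gives $\encoEnvI{\FF_i} \ArrN{\ell_{i+1}} \wbOPI \encoEnvI{\FF_{i+1}}$, and then transitivity of $\wbOPI$ together with the fact that $\wbOPI$ is preserved under the (weak) transitions of the \opLTS (i.e.\ $\PP_i \wbOPI \encoEnvI{\FF_i}$ and $\encoEnvI{\FF_i}\ArrN{\ell_{i+1}}\QQ'$ yields some $\PP_{i+1}$ with $\PP_i \ArrN{\ell_{i+1}} \PP_{i+1}$ and $\PP_{i+1}\wbOPI\QQ'$) lets me extend the trace from $\encoEnvI\FF$ and re-establish the invariant at $i+1$. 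The base case $i=0$ is trivial with $\PP_0 = \encoEnvI\FF$.

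For the right-to-left direction, assuming $\encoEnvI\FF \ArrN s$ I want $\FF \ArrA s$. The argument is symmetric, using parts~(1) and~(2) of Theorem~\ref{t:opcorrCON3_w} in the other direction. Decompose $\encoEnvI\FF \ArrN s$ into $\PP_0 = \encoEnvI\FF \LongrightarrowN \QQ_1 \arrN{\ell_1} \QQ_1' \LongrightarrowN \cdots$. The key point is that a weak transition $\encoEnvI\FF \ArrN{} \PP$ already, by part~(1), gives $\FF \LongrightarrowA \FF'$ with $\PP \contrOP \encoEnvI{\FF'}$, and a visible weak transition gives, by part~(2), $\FF \ArrA{\ell} \FF'$ with $\PP \wbOPI \encoEnvI{\FF'}$. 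So again I carry the invariant: after matching $\ell_1,\dots,\ell_i$ there is $\FF_i$ with $\FF \ArrA{\ell_1,\dots,\ell_i} \FF_i$ and the current $\pi$-process $\contrOP$-below or $\wbOPI$-equivalent to $\encoEnvI{\FF_i}$; then the next visible step of the $\pi$-process can be pushed through $\contrOP$/$\wbOPI$ to a genuine $\ArrN{\ell_{i+1}}$ transition of $\encoEnvI{\FF_i}$, to which part~(2) applies. Since $\contrOP$ is an expansion-like preorder and $\wbOPI$ a bisimilarity, both are exactly the kind of relation through which weak transitions can be transported, so the invariant is maintained.

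The main obstacle — really the only non-routine point — is making precise the ``transport'' steps: that $\wbOPI$ and $\contrOP$ on the \opLTS allow one to simulate weak (visible and silent) transitions, and that the trailing $\LongrightarrowN$ segments of $\ArrN{\ell}$ can be absorbed without losing track of which configuration one has reached. This is standard for weak bisimilarity and expansion, but one must be slightly careful because the \opLTS is a non-standard LTS (input actions are suppressed in the presence of pending outputs); the relevant closure properties for $\wbOPI$ and $\contrOP$ under \opLTS transitions are exactly what is needed, and they hold because $\wbOPI,\contrOP$ are defined on the \opLTS in the first place. With that in hand, the induction is immediate in both directions.

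\begin{proof}
Both implications follow from Theorem~\ref{t:opcorrCON3_w} by induction on the length $n$ of $s$, writing $s = \ell_1, \ldots, \ell_n$.

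For the forward implication, we strengthen the induction hypothesis: if $\FF \ArrA{\ell_1, \ldots, \ell_i} \FF_i$, then there is $\PP_i$ with $\encoEnvI \FF \ArrN{\ell_1, \ldots, \ell_i} \PP_i$ and $\PP_i \wbOPI \encoEnvI {\FF_i}$. The case $i = 0$ holds with $\PP_0 = \encoEnvI \FF$. For the step, suppose $\FF_i \ArrA{\ell_{i+1}} \FF_{i+1}$. By Theorem~\ref{t:opcorrCON3_w}(2), $\encoEnvI {\FF_i} \ArrN{\ell_{i+1}} \QQ$ for some $\QQ \wbOPI \encoEnvI {\FF_{i+1}}$. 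Since $\PP_i \wbOPI \encoEnvI {\FF_i}$, we obtain $\PP_{i+1}$ with $\PP_i \ArrN{\ell_{i+1}} \PP_{i+1}$ and $\PP_{i+1} \wbOPI \QQ \wbOPI \encoEnvI {\FF_{i+1}}$; concatenating the \opLTS transitions gives $\encoEnvI \FF \ArrN{\ell_1, \ldots, \ell_{i+1}} \PP_{i+1}$, which re-establishes the invariant. Taking $i = n$ yields $\encoEnvI \FF \ArrN s$.

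For the converse, we again carry an invariant: if $\encoEnvI \FF \ArrN{\ell_1, \ldots, \ell_i} \PP_i$, then there is $\FF_i$ with $\FF \ArrA{\ell_1, \ldots, \ell_i} \FF_i$ and $\PP_i \wbOPI \encoEnvI {\FF_i}$. The case $i = 0$ is immediate from Theorem~\ref{t:opcorrCON3_w}(1) (taking $\ell$ empty), which gives $\FF \LongrightarrowA \FF_0$ with $\PP_0 = \encoEnvI \FF \contrOP \encoEnvI {\FF_0}$, hence in particular $\PP_0 \wbOPI \encoEnvI {\FF_0}$. For the step, suppose $\PP_i \ArrN{\ell_{i+1}} \PP_{i+1}$. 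Since $\PP_i \wbOPI \encoEnvI {\FF_i}$, there is $\QQ$ with $\encoEnvI {\FF_i} \ArrN{\ell_{i+1}} \QQ$ and $\PP_{i+1} \wbOPI \QQ$. By Theorem~\ref{t:opcorrCON3_w}(2) applied to $\FF_i$, there is $\FF_{i+1}$ with $\FF_i \ArrA{\ell_{i+1}} \FF_{i+1}$ and $\QQ \wbOPI \encoEnvI {\FF_{i+1}}$, so $\FF \ArrA{\ell_1, \ldots, \ell_{i+1}} \FF_{i+1}$ and $\PP_{i+1} \wbOPI \encoEnvI {\FF_{i+1}}$. Taking $i = n$ gives $\FF \ArrA s$.
\end{proof}
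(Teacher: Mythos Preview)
Your overall strategy---induction on the length of $s$, carrying a $\wbOPI$ invariant between the current $\pi$-process and the encoding of the current configuration---is exactly what the paper does. The converse direction of your proof is fine and matches the paper's use of Theorem~\ref{t:opcorrCON3_w}(2).

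The forward direction, however, has a concrete gap. Theorem~\ref{t:opcorrCON3_w} only goes from \piI (under the \opLTS) to \AOGS: its clause~(2) says that \emph{if} $\encoEnvI\FF \ArrN\ell \PP$ \emph{then} $\FF \ArrA\ell \FF'$ with $\PP \wbOPI \encoEnvI{\FF'}$. You invoke it in the opposite direction when you write ``By Theorem~\ref{t:opcorrCON3_w}(2), $\encoEnvI {\FF_i} \ArrN{\ell_{i+1}} \QQ$ for some $\QQ \wbOPI \encoEnvI {\FF_{i+1}}$'' starting from $\FF_i \ArrA{\ell_{i+1}} \FF_{i+1}$. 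That inference is not licensed by the theorem as stated.

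What you need for the forward implication is the \opLTS analogue of Theorem~\ref{t:opcorrCON_w}, namely: if $\FF \ArrA\ell \FF'$ then $\encoEnvI\FF \ArrN\ell \wbOPI \encoEnvI{\FF'}$. The paper notes (just before Lemma~\ref{l:opcorrCON3_w}) that Theorem~\ref{t:opcorrCON_w} continues to hold under the \opLTS, because \AOGS input transitions occur only from passive configurations, whose encodings are input-reactive; this is spelled out in the appendix as Theorem~\ref{t:opcorrCON_w_OP}. Once you replace the misapplied citation by that result, your inductive argument for the forward direction goes through unchanged and coincides with the paper's proof.
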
 

\begin{remark}
\label{r:bn2} 
We recall that, following Remark~\ref{r:bn} on the usage of bound names, 
in Corollary~\ref{c:traces} 
the bound names in $s$ are fresh;   thus they do not appear in $\FF$.
(Similarly in Theorems~\ref{t:opcorrCON2_w} and 
\ref{t:opcorrCON3_w}, for 
 the bound names
in $\ell$.)
\end{remark}

 \begin{remark}
 \label{r:expa_op_ord} 
 Corollary~\ref{c:traces} relies on Theorems~\ref{t:opcorrCON_w} and \ref{t:opcorrCON3_w}.
 The corollary talks about the \opLTS of \piI; however the theorems make use of the
 ordinary expansion relation  $\expa $, which is defined on the ordinary LTS.  Such uses of
 expansion can however be replaced by expansion on the \opLTS (defined as ordinary
 expansion, but on the \opLTS). For more details on this, see
 Appendix~\ref{a:behav}. 
 \end{remark}

As a consequence of Corollary~\ref{c:traces}, trace equivalence is the same, 
on \AOGS configurations and on the encoding of \piI terms. 
Moreover, from Theorem~\ref{t:opcorrCON3_w} the same result holds under a bisimulation semantics.
Further,
since the LTS produced by \AOGS is deterministic, 
its trace semantics coincides with its bisimulation semantics.
We can thus conclude as in Corollary~\ref{c:fa_traces_bis}.  
We recall that $\TEN$ and $\wbOPI$ are, respectively, trace equivalence and bisimilarity 
 between \piI processes in the \opLTS; similarly for  
$\TEA$ and $\wbA$ between \AOGS terms.  

\begin{corollary}
\label{c:fa_traces_bis}
For any $\FF,\FF'$ we have:
 $\FF\TEA \FF'$ iff 
$ \encoConI \FF \TEN \encoConI {\FF'}$
iff
 $\FF\wbA \FF'$ iff 
$ \encoConI \FF \wbOPI \encoConI {\FF'}$.
 \end{corollary}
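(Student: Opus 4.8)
The plan is to chain together the results already established for the operational correspondence between \AOGS and the \opLTS of \piI, using the fact that \AOGS is deterministic to collapse trace equivalence and bisimilarity on the OGS side. First I would record the two halves of the trace correspondence: by Corollary~\ref{c:traces}, for every configuration $\FF$ and trace $s$ we have $\FF \ArrA s$ iff $\encoConI\FF \ArrN s$. Hence, for any $\FF,\FF'$, the condition ``$\FF\ArrA s$ iff $\FF'\ArrA s$, for all $s$'' is literally the same as ``$\encoConI\FF\ArrN s$ iff $\encoConI{\FF'}\ArrN s$, for all $s$''. By the definitions of $\TEA$ and $\TEN$ (trace equivalence in the two LTSs, as recalled in Section~\ref{s:nota}), this is exactly $\FF\TEA\FF'$ iff $\encoConI\FF\TEN\encoConI{\FF'}$, giving the first ``iff'' of the corollary.

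Next I would treat the bisimulation side via Theorem~\ref{t:opcorrCON3_w}. For the direction $\FF\wbA\FF'\Rightarrow\encoConI\FF\wbOPI\encoConI{\FF'}$, I would exhibit the relation
\[
\S \defeq \{(\PP,\QQ) \mid \PP\wbOPI\encoConI\FF,\ \QQ\wbOPI\encoConI{\FF'},\ \FF\wbA\FF' \}
\]
and check it is a bisimulation in the \opLTS: given a challenge $\PP\ArrN\mu\PP_1$, use $\PP\wbOPI\encoConI\FF$ to get $\encoConI\FF\ArcapN\mu\PP_1'$ with $\PP_1\wbOPI\PP_1'$; then Theorem~\ref{t:opcorrCON3_w} (clauses 1 and 2, according as $\mu=\tau$ or $\mu=\ell$ visible) yields $\FF\ArcapA\mu\FF'_1$ with $\PP_1'\,(\contrOP\text{ or }\wbOPI)\,\encoConI{\FF'_1}$; since $\FF\wbA\FF'$, we get $\FF'\ArcapA\mu\FF''_1$ with $\FF'_1\wbA\FF''_1$, and applying Theorem~\ref{t:opcorrCON3_w} and $\QQ\wbOPI\encoConI{\FF'}$ in the forward direction (using Corollary~\ref{c:tracesGSpi}/the ``$\Rightarrow$'' half of the correspondence) produces a matching $\QQ\ArcapN\mu\QQ_1$ with $(\PP_1,\QQ_1)\in\S$, absorbing the $\contrOP$/$\wbOPI$ slack into the two outer legs of $\S$ (both $\contrOP$ and $\wbOPI$ are contained in $\wbOPI$, so the closure under $\wbOPI$ on each side is preserved). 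The converse direction, $\encoConI\FF\wbOPI\encoConI{\FF'}\Rightarrow\FF\wbA\FF'$, is symmetric: one shows $\{(\FF,\FF')\mid \encoConI\FF\wbOPI\encoConI{\FF'}\}$ is an \AOGS bisimulation, translating each OGS move to a $\pi$ move via the ``$\Rightarrow$'' half of the correspondence, matching in \piI via $\wbOPI$, and pulling the answer back to an OGS configuration via Theorem~\ref{t:opcorrCON3_w}, again discarding the $\contrOP$/$\wbOPI$ slack since we only need $\wbA$ at the end.

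Finally, to close the cycle of ``iff''s, I would invoke determinism of the \AOGS LTS (each active configuration has at most one transition up to the choice of fresh names, and passive configurations branch only on which name of $\gamma$ Opponent interrogates, with the resulting action determining the branch): for a deterministic LTS, trace equivalence and bisimilarity coincide, so $\FF\TEA\FF'$ iff $\FF\wbA\FF'$. Combining the three equivalences established above — $\FF\TEA\FF'$ iff $\encoConI\FF\TEN\encoConI{\FF'}$, then $\FF\TEA\FF'$ iff $\FF\wbA\FF'$, then $\FF\wbA\FF'$ iff $\encoConI\FF\wbOPI\encoConI{\FF'}$ — yields the full chain in the statement.

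The main obstacle I expect is the bookkeeping around the expansion/bisimilarity slack in Theorem~\ref{t:opcorrCON3_w}: one must be careful that $\contrOP$ appears only for $\tau$-steps while visible steps give $\wbOPI$, and check that feeding these back into the candidate relation $\S$ does not break symmetry or the up-to-nothing bisimulation condition (here there is no genuine difficulty because $\contrOP\subseteq\wbOPI$ and $\S$ is already closed under $\wbOPI$ on both components, so the argument does not need an up-to technique — but this is precisely the point that needs to be verified rather than waved through). A secondary, minor point is to make the determinism claim for \AOGS precise modulo $\alpha$-renaming of the fresh bound names, which is where Remark~\ref{r:bn} on the freshness convention does the work.
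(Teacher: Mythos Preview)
Your proposal is correct and follows essentially the same three-step argument the paper sketches in the paragraph preceding the corollary: (i) Corollary~\ref{c:traces} gives $\TEA\Leftrightarrow\TEN$, (ii) the operational correspondence (Theorems~\ref{t:opcorrCON_w}/\ref{t:opcorrCON3_w}, in their \opLTS versions) gives $\wbA\Leftrightarrow\wbOPI$, and (iii) determinism of \AOGS collapses $\TEA$ and $\wbA$. One small citation slip: for the \emph{forward} direction (from an \AOGS move to a \piI move under the \opLTS) you should invoke the \opLTS analogue of Theorem~\ref{t:opcorrCON_w} (the paper notes this analogue holds, and states it as Theorem~\ref{t:opcorrCON_w_OP} in the appendix), not Theorem~\ref{t:opcorrCON3_w} or Corollary~\ref{c:tracesGSpi}, which give respectively the backward direction and the ordinary-LTS forward direction.
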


Corollary~\ref{c:fa_traces_bis}  holds in particular when $\FF$ is the initial configuration
for a $\lambda$-term.  That is, 
the equality induced on call-by-value $\lambda$-terms by  
their representation in \AOGS and in \piI 
(under the \opLTS) is the same, both employing traces and employing bisimulation 
to handle   the observables for the two
models. 
\begin{corollary}
\label{c:OGS_opLTSpiI}
For any $\lambda$-terms $M,N$,  we have: 
\\
$\conf{M} \TEA \conf {N}$ iff
$\conf{M} \wbA \conf {N}$ iff
$\encoEnvI {{M} } \TEN\encoEnvI {{N}}$ iff 
$\encoEnvI {{M}} \wbOPI \encoEnvI {{N}}$. 
\end{corollary}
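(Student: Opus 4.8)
The plan is to obtain the statement as the instance of Corollary~\ref{c:fa_traces_bis} in which $\FF$ and $\FF'$ are the initial \AOGS configurations $\conf{M}$ and $\conf{N}$. Recall that $\conf{M}$ abbreviates $\initconf{\FV(M)}{M}$, and likewise for $N$; since Corollary~\ref{c:fa_traces_bis} is stated for arbitrary (valid) configurations, it applies in particular here, giving the four-way chain
\[
\conf{M}\TEA\conf{N}
\ \text{iff}\ \encoConI{\conf{M}}\TEN\encoConI{\conf{N}}
\ \text{iff}\ \conf{M}\wbA\conf{N}
\ \text{iff}\ \encoConI{\conf{M}}\wbOPI\encoConI{\conf{N}}.
\]
(If one prefers a self-contained derivation, the same chain can be rebuilt directly from Corollary~\ref{c:traces} for the trace case, Theorem~\ref{t:opcorrCON3_w} for the bisimulation case, and the determinacy of the \AOGS LTS, which forces $\TEA$ and $\wbA$ to coincide and similarly ties $\TEN$ to $\wbOPI$ on the relevant images; but citing Corollary~\ref{c:fa_traces_bis} is the cleanest route.)

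The only remaining point is to identify $\encoConI{\conf{M}}$ with Milner's encoding $\encoEnvI{M}$ of the underlying term. This is immediate from the last clause of the mapping in Figure~\ref{f:OGS_pi_cbv}, which sets $\encoConI{\initconf{\phi}{M}}\defeq\encoI{M}$, together with the fact that $\encoI{M}$ and $\encoEnvI{M}$ are by definition the very same process $\cbvSymb[\![M]\!]$. Substituting $\encoEnvI{M}$ for $\encoConI{\conf{M}}$ and $\encoEnvI{N}$ for $\encoConI{\conf{N}}$ in the displayed chain, and merely reordering the conjuncts, yields exactly the asserted equivalences.

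I do not expect a genuine obstacle: once Corollary~\ref{c:fa_traces_bis} is available, the argument is purely definitional. The two things worth a word of care are (i) that $\conf{M}$ is a legitimate configuration for Corollary~\ref{c:fa_traces_bis} — its environment component is empty and $\FV(M)$ consists of $O$-names, so the validity conditions of Section~\ref{s:AOGS} hold trivially — and (ii) the freshness convention of Remark~\ref{r:bn}, which ensures that the continuation name introduced by the initial Opponent question (rule IOQ) can be chosen disjoint from $\FV(M)\cup\FV(N)$, so that both the OGS and the \piI semantics are insensitive to that choice and the equivalences on the two sides genuinely match up.
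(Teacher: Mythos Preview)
Your proposal is correct and matches the paper's approach exactly: the paper presents Corollary~\ref{c:OGS_opLTSpiI} as the specialisation of Corollary~\ref{c:fa_traces_bis} to initial configurations, noting only that ``Corollary~\ref{c:fa_traces_bis} holds in particular when $\FF$ is the initial configuration for a $\lambda$-term.'' Your identification of $\encoConI{\conf{M}}$ with $\encoEnvI{M}$ via the last clause of Figure~\ref{f:OGS_pi_cbv} is the only remaining detail, and you handle it correctly.
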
 

From Theorem~\ref{t:opcorrCON2_w} and Corollary~\ref{c:traces}, it also follows that  $\FF$ and
$\encoConI \FF$ are weakly bisimilar,  on the union of the respective  LTSs.

\section{Concurrent Operational Game Semantics}
\label{s:cogs}

In order to establish an exact correspondence between OGS and the encoding into $\pi$-calculus,
as established in Theorem~\ref{t:opcorrCON3_w}, we had to introduce the \opLTS for the 
$\pi$-calculus.

In this section, we explore another way to derive an exact correspondence between OGS and \piI,
by relaxing the Alternating LTS for OGS to allow multiple terms in 
configurations to run concurrently. 
We refer to the resulting OGS as the \emph{Concurrent OGS}, briefly \COGS (we
recall that  \AOGS refers to the Alternating  OGS of Section~\ref{s:AOGS}).  

\subsection{The Concurrent OGS LTS}
\label{ss:cs_cogs-lts}

We introduce \emph{running terms}, ranged over by $A,B$,
as partial maps with finite domains 
from continuation names to $\lambda$-terms.
Each mapping $\pmap{p}{M}$ of a running term can be seen as
a thread evaluating $M$, with $p$ an identifier
used to communicate the result of the evaluation.
A \emph{concurrent configuration} is a triple $ \conf{A,\gamma,\phi}$ 
of a running term $A$, an environment 
$\gamma$, and a set of names $\phi$.
Moreover, the domains of $A$ and $\gamma$
must be disjoint.
Both the running term and the environment may be empty.
The continuation structure of such a concurrent configuration
is then defined as the pair $(\dom{A},\cs(\gamma))$
formed by the domain of $A$ and the continuation structure of $\gamma$.
We extend the definition of the polarity function,
considering names in the domain of both $A$ and $\gamma$ as Player names. 
%

%


Passive and active configurations
can be seen as special cases of \COGS configurations with empty and singleton
running term, respectively.
For this reason we still use $\FF, \GG$ to range
over \COGS configurations. Moreover, we freely take  \AOGS
configurations to be  \COGS configurations, 
and conversely for \COGS configurations
with empty and singleton running term, 
omitting the obvious syntactic coercions. 

\begin{figure*}
 \[\begin{array}{l|lll@{}l}
   (P\tau) & \conf{A \cdot \pmap{p}{M},\gamma,\phi} & \arrC{\ \tau \ } & 
     \conf{A \cdot \pmap{p}{N},\gamma,\phi} & \text{ when } M \redv N\\
   (PA) & \conf{A \cdot \pmap{p}{V},\gamma,\phi} & \arrC{\ansP{p}{x}} & 
     \conf{A,\gamma \cdot \pmap{x}{V},\phi \uplus \{x\}} \\
   (PQ) & \conf{A \cdot \pmap{p}{E[xV]},\gamma,\phi} & \arrC{\questPV{x}{y}{q}} & 
     \multicolumn{2}{l}{\conf{A,\gamma \cdot \pmap{y}{V} \cdot \pmap{q}{(E,p)},
     \phi \uplus \{y,q\}}} \\
   (OA) & \conf{A,\gamma\cdot\pmap{p}{(E,q)},\phi} & \arrC{\ansO{p}{x}} & 
     \conf{A \cdot \pmap{q}{E[x]},\gamma,\phi \uplus \{x\}} \\
   (OQ) & \conf{A,\gamma,\phi} & \arrC{\questOV{x}{y}{p}} & 
     \conf{A \cdot \pmap{p}{V y},\gamma,\phi \uplus \{y,p\}} 
     & \text{ when } \gamma(x) = V\\
   (IOQ) & \initconf{\phi}{M} & \arrC{\questOinit{p}} &
     \conf{\pmap{p}{M},\emptymap,\phi \uplus \{p\}}
  \end{array}\]
\caption{The LTS for the Concurrent OGS}
\label{fig:conc-ogs-lts}
\end{figure*}
 
We present the rules of \COGS in Figure~\ref{fig:conc-ogs-lts}.
Since there is no more distinction between passive and active configurations,
a given configuration can perform both 
Player and Opponent actions.
Notice that only Opponent can add a new term to the running term $A$.

A  \emph{singleton configuration} is a configuration $\FF$
that has only one P-name (that is, in \COGS,  $\FF$ is either of the form 
$\conf{\pmap{p}{M},\emptymap,\phi}$, or  
$\conf{\emptymap,\pmap{x}{V},\phi}$, or $\conf{\emptymap,\pmap{p}{(E,q)},\phi}$).

In this and in the following section $\FF,\GG$ ranges over \COGS configurations, as reminded
by the index `$\rmm{C}$'
  in the symbols for LTS and behavioural equivalence with  which $\FF,\GG$ appear (e.g.,
  $\genTE C$).

\subsection{Comparison between \COGS and \piI}
\label{ss:compa_cogs_piI}

The encoding of \COGS into \piI is a simple adaptation of that for \AOGS.
We only have to consider the new syntactic element of \COGS, 
namely running
terms; the encoding remains otherwise the same.
The encoding of a running term is: 
\[ 
\begin{array}{rcl}
\encoEnvI{\pmap{p}{M}\cdot A  } & \defi & 
 \encoIa {M}{p} |\encoEnvI{ A  } \\
\encoEnvI{\emptymap} & \defi & \nil
\end{array} 
\]
The encoding of configurations is then defined as:
\[
\encoEnvI{\conf{A, \gamma, \phi}}
\defi \encoEnvI A | \encoEnvI \gamma
\]

The results about the operational correspondence between \COGS and \piI 
are as those between \AOGS and \piI under the \opLTS. 
\DSOCTb
We  report the main result, namely the correspondence for weak transitions, as well as 
two lemmas about the  correspondence for strong transitions, since they are also needed
when lifting up-to techniques from 
 \piI to \COGS, in Section~\ref{ss:upto_games}.
The reasoning in their proofs is similar to that 
in the operational  correspondence for \AOGS
(Lemmas~\ref{l:opcorrCON_str} and
\ref{l:opcorrCON2}).

\begin{lemma}[from \COGS to \piI, strong transitions]
\label{l:opcorrCOGS}
$ $ \begin{enumerate}

\item If $\FF \longrightarrowp \FF'$, then 
$ \encoConV \FF \longrightarrowPI  \contr \encoConV {\FF'}$;

\item If $\FF \arrC\ell \FF'$, then
$ \encoConV \FF \arrPI\ell  
\wbPI
\encoConV {\FF'}$;
\end{enumerate} 
\end{lemma}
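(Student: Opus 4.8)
The plan is to prove both parts by a case analysis on the rule used for the \COGS transition $\FF \longrightarrowp \FF'$ or $\FF \arrC\ell \FF'$, exploiting the fact that the encoding $\encoConV{-}$ of a \COGS configuration is just the parallel composition $\encoConV A \mid \encoConV \gamma$ of the encodings of its running term and its environment, together with the optimised encoding $\qencV$ of $\lambda$-terms (for which the relevant reduction/stuck-term laws of Durier et al.\ are available, cf.\ Lemma~\ref{l:opt_sound}). Since the running term is a parallel composition of thread-encodings $\encoIa M p$, and behavioural equivalence in \piI is a congruence for parallel composition and restriction, it suffices to analyse what each rule does to the single thread (or single environment entry) it touches, the rest of the configuration being an inert static context around it.

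For part~(1), the silent transitions of \COGS are exactly the $(P\tau)$ steps, i.e.\ $M \redv N$ inside some thread $\pmap p M$. So I would first recall the standard fact about Milner's call-by-value encoding (in its optimised form $\qencV$): if $M \redv N$ then $\encoIa M p \Longrightarrow_\pi \,\contr\, \encoIa N p$ — a $\beta_v$-reduction in the $\lambda$-calculus is simulated by a bounded number of deterministic $\tau$-steps in \piI, with the expansion relation recording that the $\pi$-process has done at least as many internal steps. This is precisely the kind of algebraic reasoning underpinning Lemma~\ref{l:opt_sound}; it is established in \cite{DurierHS18,Durier20} and the present paper re-uses it. Closing $\contr$ (a precongruence) under the static context $\encoConV A' \mid \encoConV \gamma \mid \res{}(\cdots)$ that represents the rest of the configuration gives $\encoConV\FF \longrightarrowPI\,\contr\,\encoConV{\FF'}$ as required.

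For part~(2), I would go rule by rule over the visible \COGS actions, in each case unfolding $\encoConV\FF$, performing the matching $\pi$-transition, and normalising the residual with a few algebraic laws (the deterministic $\tau$-laws that are absorbed into $\contr$, hence into $\wbPI$). For $(PA)$, $\conf{A\cdot\pmap p V,\gamma,\phi}\arrC{\ansP p x}\conf{A,\gamma\cdot\pmap x V,\phi\uplus\{x\}}$: the thread $\encoIa V p = \bout p y.\,\encoIVa V y$ does an output $\bout p x$ (renaming the bound object), leaving $\encoIVa V x$, which is exactly $\encoConV{\pmap x V}$, so the residual is $\encoConV A\mid\encoConV{\pmap x V}\mid\encoConV\gamma = \encoConV{\FF'}$ up to structural rearrangement — in fact $\wbPI$ (here even $\equiv$). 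For $(PQ)$, the thread is $\encoIa{E[xV]}p$, which by the stuck-term law for the optimised encoding (Lemma~\ref{l:opt_stuck}) expands to $\bout x{w',p'}.(\encoIVa V{w'}\mid\inp{p'}z.\encoIa{E[z]}p\mid\cdots)$ up to $\contr$; firing the output $\bout x{y,q}$ leaves, up to $\contr$, the encoding of the two new environment entries $\pmap y V$ and $\pmap q{(E,p)}$ composed with $\encoConV A\mid\encoConV\gamma$, i.e.\ $\encoConV{\FF'}$. For the Opponent rules $(OA)$, $(OQ)$, $(IOQ)$ one performs instead an input transition on, respectively, the environment entry $\inp q x.\encoIa{E[x]}p$, the replicated server $!\inp y{x,q}.\encoIa M q$ inside $\encoIVa V y$, or the top-level abstraction; in each case the residual is immediately (up to $\equiv$, hence $\wbPI$) the encoding of the target configuration, because Opponent rules just move an entry from $\gamma$ into $A$ or add a fresh thread. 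Again $\wbPI$ being a congruence for the surrounding static context closes the argument.

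The main obstacle is the handling of the stuck/callback case $(PQ)$: one must invoke the correctness of the optimised encoding on stuck terms (the analogue of Lemma~\ref{l:opt_sound} for $E[xV]$, i.e.\ Lemma~\ref{l:opt_stuck}) and be careful that the forwarders $\fwd{w'}w$ and $\fwd{p'}p$ introduced by the application clause of the \emph{unoptimised} encoding are exactly what is eliminated — up to $\contr$ — when passing to $\qencV$, so that the residual $\pi$-term lands on $\encoConV{\pmap y V}\mid\encoConV{\pmap q{(E,p)}}$ rather than on a version decorated with extra links. This is precisely why the statement is phrased in terms of the optimised encoding $\encoConV{-}$ and why $\contr$ (not mere equality) appears in part~(1): it absorbs both the $\beta_v$-simulation overhead and these forwarder-elimination steps. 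The remaining cases are routine once the relevant algebraic laws from \cite{DurierHS18} are in hand.
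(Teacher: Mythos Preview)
Your approach is essentially the paper's: a case analysis on the \COGS rule, reducing each case to the behaviour of the single touched thread or environment entry via the structural lemmas for the optimised encoding, and closing under the surrounding static context using congruence of $\contr$ and $\wbPI$. The paper packages the per-thread analysis as Lemmas~\ref{l:opt_op}, \ref{l:opt_con} and \ref{l:VAL} (rather than going back to Lemma~\ref{l:opt_stuck} directly); in particular Lemma~\ref{l:VAL} is what handles the $(OQ)$ case uniformly for both $V=\lambda x.M$ and $V=z$, and Lemma~\ref{l:opt_op}(3) is what gives a \emph{single} strong $\tau$-step (not a weak $\Longrightarrow$) for $(P\tau)$ --- this is exactly why the statement is about the optimised encoding $\qencV$, so your ``bounded number of $\tau$-steps'' should really be ``one''.
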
 

\begin{lemma}[from \piI to  \COGS, strong transitions]
\label{l:opcorrCOGS2}
$ $ \begin{enumerate}

\item If 
$ \encoConV \FF \longrightarrowPI \PP$ then there is $\FF'$ such that 
$\FF \longrightarrowp \FF'$  and  
$\PP   \contr \encoConV {\FF'}$ ;

\item If
$ \encoConV \FF \arrPI\ell \PP$
 then there is $\FF'$ such that
 $\FF \arrC\ell \FF'$
and $\PP \wbPI \encoConV {\FF'}$.

\end{enumerate} 
\end{lemma}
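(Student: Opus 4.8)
The plan is to prove the two clauses together by a single case analysis on the \piI transition fired by $\encoConV\FF$, following the pattern already used for \AOGS in Lemmas~\ref{l:opcorrCON_str} and~\ref{l:opcorrCON2}, and working throughout with the optimised mapping $\encoConV$ so that the optimisation-correctness results (Lemma~\ref{l:opt_sound} and Lemmas~\ref{l:opt_stuck}--\ref{l:opt_con} in Appendix~\ref{a:oc}) are available; the statement for the plain mapping then follows since $\encoConI\FF\exn\encoConV\FF$. Write $\FF=\conf{A,\gamma,\phi}$, so that $\encoConV\FF$ is the parallel composition of the one-thread encodings $\encoIa M p$ (one for each $\pmap p M$ in $A$), the one-value encodings $\encoIVa V x$ (for $\pmap x V$ in $\gamma$), and the one-continuation encodings $\inp p x.\encoIa{E[x]}{q}$ (for $\pmap p{(E,q)}$ in $\gamma$). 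The first step is a ``no internal communication'' observation: because $\FF$ is valid, a thread emits only at its own continuation name (which lies in $\dom A$, hence outside $\dom\gamma$) or at a variable that is an O-name (hence outside $\dom\gamma$), whereas every environment component offers an input only on a name of $\dom\gamma$, i.e.\ a P-name, on which nothing emits. Consequently every transition of $\encoConV\FF$ originates from exactly one of these components, the surrounding ones being carried along unchanged (using that $\contr$ and $\wbPI$ are congruences for parallel composition and restriction), and the argument reduces to the one-thread and one-entry cases.

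For clause~(1), the reducing component must be a thread $\encoIa M p$, since the environment components are input-guarded and inert in isolation. By the analysis of the optimised encoding, such a $\tau$-step is precisely a $\beta_v$-reduction $M\redv N$ performed inside that thread, with residual lying $\contr$-below $\encoIa N p$ (the optimised encoding may have already played a few further deterministic steps). Replacing $\pmap p M$ by $\pmap p N$ yields $\FF'$ with $\FF\longrightarrowp\FF'$ by rule (P$\tau$), and reassembling the untouched components gives $\PP\contr\encoConV{\FF'}$.

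For clause~(2) one case-splits on $\ell$ and on the component it comes from. An output on a continuation name $p$ can only come from a thread $\pmap p V$ with $V$ a value, via $\encoIa V p$ (which exposes an output at $p$): this matches rule (PA), $\FF'$ being obtained by removing that thread and adding the entry $\pmap x V$, where $x$ is the bound name of $\ell$ (fresh for $\FF$ by Remark~\ref{r:bn}); then $\PP$ is, up to reordering of parallel components, literally $\encoConV{\FF'}$. An output on a variable $x$ can only come from a thread $\pmap p{E[xV]}$ (with $x$ forced to be an O-name), matching rule (PQ), the match being up to $\wbPI$ (in fact up to $\contr$) as supplied by the optimisation lemmas for stuck terms. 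An input on a continuation name $p$ can only be offered by an entry $\pmap p{(E,q)}$, whose encoding $\inp p x.\encoIa{E[x]}{q}$ is \emph{not} replicated, so consuming it matches rule (OA) -- including the garbage-collection of $(E,q)$ from $\gamma$ and the creation of the thread $\pmap q{E[x]}$ -- and $\PP$ is again structurally $\encoConV{\FF'}$. An input on a variable $x$ can only be offered by an entry $\pmap x V$, whose encoding is replicated, so the copy survives, matching rule (OQ); the new running term is encoded only up to $\wbPI$, because the encoding fires the outermost $\beta_v$-redex eagerly, i.e.\ $\encoIa{Vy}{p}\contr\encoIa{M\{y/z\}}{p}$ for $V=\lambda z.M$. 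Finally, the special action $(p)$ can only be emitted by the top-level abstraction $\encoConV{\initconf\phi M}$, whose unique transition matches rule (IOQ) exactly, both sides becoming $\encoConV{\conf{\pmap p M,\emptymap,\phi\uplus\{p\}}}$. In every case one reads off $\FF'$ and concludes $\PP\wbPI\encoConV{\FF'}$.

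The main obstacle is not the combinatorics of the case analysis -- a direct transcription of the \AOGS proof, the only genuinely new ingredient being the indexing of running threads by continuation names -- but the bookkeeping around the optimised encoding: justifying that the $\tau$-transitions of $\encoIa M p$ are exactly the $\beta_v$-redexes with the leftover controlled by $\contr$ (this is where Lemmas~\ref{l:opt_sound} and~\ref{l:opt_stuck}--\ref{l:opt_con} do the real work), and, in the (PQ) and (OQ) cases, determining on which side of $\contr$ the two processes sit so that the announced $\wbPI$ genuinely follows. Stating and maintaining the no-internal-communication invariant also needs some care, since it is precisely what licenses the reduction to the single-component analysis (and, in the companion weak-transition lemma, its stability under reduction is what makes the iterated version go through).
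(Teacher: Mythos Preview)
Your proposal is correct and follows essentially the same approach as the paper, which simply says the reasoning is ``similar to that in the operational correspondence for \AOGS (Lemmas~\ref{l:opcorrCON_str} and~\ref{l:opcorrCON2})'', those being proved by case analysis on the \piI transition using Lemmas~\ref{l:opt_op}--\ref{l:VAL}. Two minor remarks: the lemma is already stated for the optimised encoding $\encoConV$, so your comment about transferring to the plain mapping is superfluous; and you mix the notations $\encoIa{M}{p}$ (plain) and $\encoVa{M}{p}$ (optimised) while declaring you work with the latter --- harmless, but worth tidying. Your explicit ``no internal communication'' argument (threads output only at $\dom A$ or at O-names, environment components input only at $\dom\gamma$) is the right way to justify that transitions localise to a single component, and it is precisely what makes the \COGS case go through with multiple running threads where the \AOGS proof had at most one.
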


\DSOCTe

\begin{theorem}
\label{t:COGS_pi}
$ $ \begin{enumerate}

\item
If   $\FF \Longrightarrowp \FF'$ then 
$\encoEnvI \FF \LongrightarrowPI  
\contr
\encoEnvI {\FF'}$;

\item
if   $\FF \ArrC{\ell} \FF'$ then 
$\encoEnvI \FF \ArrPI\ell  
\wbPI
\encoEnvI {\FF'}$;

\item 
if 
$\encoEnvI \FF \LongrightarrowPI   \PP$ then there is $\FF'$ such that 
   $\FF  \Longrightarrowp  \FF'$ and
$\PP  \wbPI
\encoEnvI {\FF'}$.

\item 
if 
$\encoEnvI \FF \ArrPI\ell   \PP$ then there is $\FF'$ such that 
   $\FF  \ArrC{\ell} \FF'$ and
$\PP \wbPI
\encoEnvI {\FF'}$.
 \end{enumerate} 
\end{theorem}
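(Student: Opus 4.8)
Here is how I would approach the proof.

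The plan is to lift the strong-transition correspondences of Lemmas~\ref{l:opcorrCOGS} and~\ref{l:opcorrCOGS2} to the weak transitions of the statement, by induction on the length of the transition sequences, stitching the resulting chains together with the algebraic laws of $\contr$ and $\wbPI$. The laws I would use are: transitivity of $\wbPI$ and of $\contr$; the inclusion $\contr \subseteq \wbPI$; the fact that $\wbPI$ absorbs $\tau$-sequences (if $P \wbPI Q$ and $Q \LongrightarrowPI Q'$ then $P \LongrightarrowPI P' \wbPI Q'$); and the two transfer properties of expansion, namely that $P \contr Q$ together with $Q \arrPI\mu Q'$ gives $P \ArrPI\mu P'$ with $P' \contr Q'$, and that $P \contr Q$ together with $P \arrPI\ell P'$ (a \emph{visible} step) gives $Q \arrPI\ell Q'$ with $P' \contr Q'$. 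Since the mapping for \COGS differs from the one for \AOGS only in the parallel-composition clause for running terms, the two strong-transition lemmas are proved exactly as Lemmas~\ref{l:opcorrCON_str} and~\ref{l:opcorrCON2}, with one extra and routine case; the passage between the plain encoding $\encoConI$ appearing in the statement and the optimised encoding used in those lemmas is handled, as in the \AOGS development, through Lemma~\ref{l:opt_sound} and the precongruence of $\contr$.

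For part~(1) I would induct on the number of $\tau$-steps in $\FF \Longrightarrowp \FF'$. Given the accumulated chain $\encoConI \FF \LongrightarrowPI \PP_i \contr \encoConI{\FF_i}$ and a further step $\FF_i \longrightarrowp \FF_{i+1}$, Lemma~\ref{l:opcorrCOGS}(1) yields $\encoConI{\FF_i} \longrightarrowPI \PP' \contr \encoConI{\FF_{i+1}}$; the expansion transfer property turns this into $\PP_i \LongrightarrowPI \PP_{i+1} \contr \PP'$, and transitivity of $\contr$ gives $\PP_{i+1} \contr \encoConI{\FF_{i+1}}$, so the invariant is restored. Part~(2) then follows by writing $\FF \ArrC\ell \FF'$ as $\FF \Longrightarrowp \FF_1 \arrC\ell \FF_2 \Longrightarrowp \FF'$, applying~(1) to the two $\tau$-segments, Lemma~\ref{l:opcorrCOGS}(2) to the $\ell$-step, and composing; here the conclusion is only up to $\wbPI$ because the middle link is $\wbPI$ rather than $\contr$, and I would use $\contr \subseteq \wbPI$ together with the $\tau$-absorption property of $\wbPI$ to slide the trailing $\tau$-segment of $\encoConI{\FF_2} \LongrightarrowPI \encoConI{\FF'}$ across the $\wbPI$-equivalent intermediate process.

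For the converse directions~(3) and~(4) I would again induct on the length of the $\pi$-side $\tau$-sequence, but now the state of the induction is an invariant of the form ``the current $\pi$-process $\PP$ satisfies $\PP \contr \encoConI\GG$ for the current \COGS configuration $\GG$'', since a single $\tau$-step from $\encoConI\FF$ need not land on the encoding of any configuration. Given $\PP \longrightarrowPI \PP'$ with $\PP \contr \encoConI\GG$, the expansion transfer property gives either $\PP' \contr \encoConI\GG$ (an administrative step, leaving $\GG$ unchanged) or $\encoConI\GG \longrightarrowPI \QQ$ with $\PP' \contr \QQ$, in which case Lemma~\ref{l:opcorrCOGS2}(1) supplies $\GG \longrightarrowp \GG'$ with $\QQ \contr \encoConI{\GG'}$, hence $\PP' \contr \encoConI{\GG'}$ by transitivity; in both cases the remaining number of $\pi$-steps strictly decreases, so the induction terminates, yielding~(3). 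Part~(4) combines this with Lemma~\ref{l:opcorrCOGS2}(2) on the visible step — using that a visible step of the $\contr$-lower process is matched \emph{strongly} by the $\contr$-upper one — and one use of $\wbPI$-bisimilarity to carry the final $\tau$-sequence past a $\wbPI$-equivalent intermediate process, after which~(3) applies again.

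The step I expect to be the main obstacle is precisely this bookkeeping in~(3)--(4): maintaining the invariant ``the current $\pi$-process is $\contr$-above the encoding of the current configuration'' under administrative $\tau$-steps, and checking that the induction measure — the number of $\pi$-transitions still to be simulated — really decreases while the number of \COGS steps produced stays under control. Everything else (the two strong-transition lemmas, which mirror the \AOGS proofs, and the manipulation of $\contr$ and $\wbPI$) is routine.
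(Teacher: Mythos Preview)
Your approach --- lift Lemmas~\ref{l:opcorrCOGS} and~\ref{l:opcorrCOGS2} from strong to weak transitions by induction on the number of steps, threading $\contr$ and $\wbPI$ through their transfer properties --- is correct and is exactly what the paper does (it gives no proof beyond pointing to the analogous \AOGS development).

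One point needs care. In your argument for~(3) you maintain the invariant $\PP \contr \encoConI\GG$ and then invoke Lemma~\ref{l:opcorrCOGS2}(1) on a step $\encoConI\GG \longrightarrowPI \QQ$; but that lemma is stated for the \emph{optimised} encoding $\encoConV$, and its literal analogue for $\encoConI$ is false: when $\GG$ carries a stuck running term $E[xV]$, the process $\encoConI\GG$ still has administrative $\tau$-steps (the internal communications at $q$ and $r$ in the application clause) with no matching \COGS reduction, and the derivative after such a step is \emph{not} $\contr$-above $\encoConI\GG$. The fix is to run your invariant with $\encoConV$ throughout: the base case $\encoConI\FF \contr \encoConV\FF$ is exactly Lemma~\ref{l:opt_sound}, and then each inductive step lands on Lemma~\ref{l:opcorrCOGS2}(1) as stated. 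You end with $\PP \contr \encoConV{\FF'}$, which is what the paper actually uses downstream (the proof of Theorem~\ref{t:uptoComp} works with $\encoConV$); getting the literal conclusion $\PP \contr \encoConI{\FF'}$ would require a separate argument, since $\contr$ is only a preorder and Lemma~\ref{l:opt_sound} supplies the comparison in only one direction.
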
 


\begin{corollary} 
\label{c:ogsp_pi}
For any  \COGS configuration $\FF$ 
 and trace $\tr$,  
we have  $\FF \ArrC{\tr}$ iff $\encoEnvI \FF \ArrPI \tr $.
\end{corollary}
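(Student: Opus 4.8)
The plan is to derive Corollary~\ref{c:ogsp_pi} directly from Theorem~\ref{t:COGS_pi}, by induction on the length of the trace $\tr$, mimicking the proof of Corollary~\ref{c:tracesGSpi}. The statement is an ``iff'', so I would treat the two directions separately, each using the appropriate parts of the theorem: parts (1)--(2) for the forward direction (from \COGS to \piI) and parts (3)--(4) for the converse.

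For the forward direction, suppose $\FF \ArrC{\tr}$. Writing $\tr = \ell_1, \ldots, \ell_n$, there are configurations $\FF_0 = \FF, \FF_1, \ldots, \FF_n$ with $\FF_{i-1} \ArrC{\ell_i} \FF_i$ for each $i$. By Theorem~\ref{t:COGS_pi}(2), each such step gives $\encoEnvI{\FF_{i-1}} \ArrPI{\ell_i} \wbPI \encoEnvI{\FF_i}$, i.e.\ there is $\PP_i$ with $\encoEnvI{\FF_{i-1}} \ArrPI{\ell_i} \PP_i \wbPI \encoEnvI{\FF_i}$. Since $\wbPI$ is preserved under $\ArrPI{\ell}$ (bisimilarity is a weak bisimulation, and is transitive), one can splice these together: from $\PP_{i-1} \wbPI \encoEnvI{\FF_{i-1}} \ArrPI{\ell_i} \PP_i$ we obtain some $\PP_i'$ with $\PP_{i-1} \ArrPI{\ell_i} \PP_i' \wbPI \PP_i \wbPI \encoEnvI{\FF_i}$, so inductively $\encoEnvI{\FF} \ArrPI{\tr} \PP_n'$ for some $\PP_n'$. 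Hence $\encoEnvI{\FF} \ArrPI{\tr}$. (If one only needs the plain reachability statement $\ArrPI{\tr}$ rather than reaching a process bisimilar to $\encoEnvI{\FF_n}$, the argument is even more immediate, since each weak transition can simply be concatenated.) For the converse direction, suppose $\encoEnvI{\FF} \ArrPI{\tr} \QQ$. Decompose $\tr = \ell_1,\ldots,\ell_n$ and the transition sequence as $\encoEnvI{\FF} = \QQ_0 \ArrPI{\ell_1} \QQ_1 \ArrPI{\ell_2} \cdots \ArrPI{\ell_n} \QQ_n = \QQ$. By Theorem~\ref{t:COGS_pi}(4) applied to $\QQ_0 \ArrPI{\ell_1} \QQ_1$, there is $\FF_1$ with $\FF \ArrC{\ell_1} \FF_1$ and $\QQ_1 \contr \encoEnvI{\FF_1}$ (hence also $\QQ_1 \wbPI \encoEnvI{\FF_1}$). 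Then $\QQ_1 \ArrPI{\ell_2} \QQ_2$ together with $\QQ_1 \wbPI \encoEnvI{\FF_1}$ yields, again by the bisimulation property of $\wbPI$, some $\QQ_2'$ with $\encoEnvI{\FF_1} \ArrPI{\ell_2} \QQ_2'$ and $\QQ_2' \wbPI \QQ_2$; now Theorem~\ref{t:COGS_pi}(4) gives $\FF_2$ with $\FF_1 \ArrC{\ell_2} \FF_2$ and $\QQ_2' \wbPI \encoEnvI{\FF_2}$, whence $\QQ_2 \wbPI \encoEnvI{\FF_2}$. Iterating, we obtain $\FF \ArrC{\tr} \FF_n$, so in particular $\FF \ArrC{\tr}$. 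One also needs the freshness convention of Remark~\ref{r:bn} to ensure the bound names in $\tr$ do not clash with $\FF$, which is exactly the hypothesis under which Theorem~\ref{t:COGS_pi} is stated.

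The only mildly delicate point — and the one I would flag as the main obstacle — is the bookkeeping needed to thread $\wbPI$ through a sequence of weak transitions in the converse direction: each application of Theorem~\ref{t:COGS_pi}(4) only relates the \emph{encoding} of a \COGS configuration (not an arbitrary $\pi$-process) to a next configuration, so at each step one must first transport the observed $\pi$-transition back along $\wbPI$ to a transition emanating from $\encoEnvI{\FF_{i-1}}$ before invoking the theorem. This is routine given that $\wbPI$ is a weak bisimulation and an equivalence, but it is the place where the argument is not a one-line concatenation. Everything else — the induction on $|\tr|$, the decomposition of $\ArrC{\tr}$ and $\ArrPI{\tr}$ into single-action steps via the definitions in Section~\ref{s:nota}, and the base case $\tr = \emptytrace$ (trivial, as the empty trace is always enabled on both sides) — is straightforward. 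I would therefore present the proof as: ``By induction on the length of $\tr$, using Theorem~\ref{t:COGS_pi}(2) and~(4) together with the fact that $\wbPI$ is a weak bisimulation,'' and spell out the $\wbPI$-chasing only to the extent needed for clarity, in the style already used for Corollary~\ref{c:tracesGSpi}.
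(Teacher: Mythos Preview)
Your proposal is correct and matches the paper's approach: the paper treats this corollary as immediate from Theorem~\ref{t:COGS_pi}, and the analogous call-by-name result (Corollary~\ref{c:CBNpogs_pi}) is proved exactly as you describe, by induction on the length of the trace while threading $\wbPI$ through each step. One minor slip: in the converse direction you write that Theorem~\ref{t:COGS_pi}(4) yields $\QQ_1 \contr \encoEnvI{\FF_1}$, but part~(4) actually gives $\wbPI$ directly (expansion $\contr$ appears only in parts~(1) and~(3) for silent steps); since you immediately pass to $\wbPI$ anyway, this does not affect the argument.
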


From Corollary~\ref{c:ogsp_pi} 
and Theorem~\ref{t:COGS_pi}, we derive:

\begin{lemma} 
\label{l:ogsp_pi_traces_bis}
For any $\FF,\FF'$ we have:
\begin{enumerate}
\item
 $\FF_1 \TEp \FF_2$ iff 
 $\encoEnvI {\FF_1} \TEPI \encoEnvI {\FF_2}$;  
\item
 $\FF_1 \wbC \FF_2$ iff 
 $\encoEnvI {\FF_1} \wbPI  \encoEnvI {\FF_2}$.  
\end{enumerate} 
\end{lemma}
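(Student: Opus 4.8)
The plan is to derive Lemma~\ref{l:ogsp_pi_traces_bis} from the operational correspondence between \COGS and \piI, namely Corollary~\ref{c:ogsp_pi} and Theorem~\ref{t:COGS_pi}, exactly as Corollary~\ref{c:fa_traces_bis} was obtained from the analogous correspondence results for \AOGS. Part~(1) is immediate: by definition $\FF_1 \TEp \FF_2$ means that $\FF_1 \ArrC s$ iff $\FF_2 \ArrC s$ for every trace $s$, and by Corollary~\ref{c:ogsp_pi} this is equivalent to $\encoEnvI{\FF_1} \ArrPI s$ iff $\encoEnvI{\FF_2} \ArrPI s$ for every $s$, which is precisely $\encoEnvI{\FF_1} \TEPI \encoEnvI{\FF_2}$. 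So the content is really in part~(2).

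For part~(2), the direction from left to right is the easy one: I would check that the relation $\S = \{(\encoEnvI{\FF_1},\encoEnvI{\FF_2}) \mid \FF_1 \wbC \FF_2\}$, closed up under $\wbPI$ on both sides, is a bisimulation up-to $\wbPI$ (Definition preceding Theorem~\ref{t:up-to}). Given $\encoEnvI{\FF_1} \ArrPI\mu \PP$, apply Theorem~\ref{t:COGS_pi}(3) or~(4) (according to whether $\mu=\tau$ or $\mu=\ell$ visible) to get $\FF_1 \ArrC\mu \FF_1'$ with $\PP \contr \encoEnvI{\FF_1'}$, hence $\PP \wbPI \encoEnvI{\FF_1'}$ since $\contr$ is included in $\wbPI$; then use $\FF_1 \wbC \FF_2$ to obtain a matching $\FF_2 \ArcapC\mu \FF_2'$ with $\FF_1' \wbC \FF_2'$, and finally apply Theorem~\ref{t:COGS_pi}(1)--(2) to lift this into $\encoEnvI{\FF_2} \ArcapPI\mu \contr \encoEnvI{\FF_2'}$. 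This closes the diagram with $\encoEnvI{\FF_1'} \mathrel{\S} \encoEnvI{\FF_2'}$ up-to $\wbPI$, so Theorem~\ref{t:up-to} gives $\S \subseteq \wbPI$.

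For the converse direction I would run the symmetric argument on the OGS side: show that $\R = \{(\FF_1,\FF_2) \mid \encoEnvI{\FF_1} \wbPI \encoEnvI{\FF_2}\}$ is a bisimulation in \COGS. Given $\FF_1 \ArrC\mu \FF_1'$, apply Theorem~\ref{t:COGS_pi}(1)--(2) to get $\encoEnvI{\FF_1} \ArcapPI\mu \wbPI \encoEnvI{\FF_1'}$; match in \piI using $\encoEnvI{\FF_1} \wbPI \encoEnvI{\FF_2}$ to get $\encoEnvI{\FF_2} \ArcapPI\mu \PP$ with $\encoEnvI{\FF_1'} \wbPI \PP$; then apply Theorem~\ref{t:COGS_pi}(3)--(4) to pull $\PP$ back to an OGS move $\FF_2 \ArcapC\mu \FF_2'$ with $\PP \contr \encoEnvI{\FF_2'}$, hence $\encoEnvI{\FF_1'} \wbPI \encoEnvI{\FF_2'}$, i.e.\ $\FF_1' \mathrel{\R} \FF_2'$. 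Care is needed with the freshness convention on bound names of $\mu$ (Remark~\ref{r:bn}): the bound names introduced on the $\pi$ side must be chosen fresh for both configurations, which is exactly what the convention guarantees, so the back-and-forth translation of actions is legitimate.

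The main obstacle is purely bookkeeping rather than conceptual: making sure the weak-transition correspondence is applied in the right form in each of the four sub-cases (silent vs.\ visible, forward vs.\ backward), and that the $\contr$/$\wbPI$ slack produced by Theorem~\ref{t:COGS_pi} is absorbed correctly — on the $\pi$ side this is handled by working up-to $\wbPI$ (Theorem~\ref{t:up-to}), and on the OGS side it disappears because $\contr \subseteq \wbPI$ and we only carry $\wbPI$ in the candidate $\R$. No genuinely new ingredient is required; the lemma is a routine corollary of the established operational correspondence, mirroring Corollary~\ref{c:fa_traces_bis}.
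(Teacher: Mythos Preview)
Your proposal is correct and matches the paper's approach. The paper simply states that the lemma is derived from Corollary~\ref{c:ogsp_pi} and Theorem~\ref{t:COGS_pi}, and the call-by-name appendix (Corollary~\ref{c:CBNpogs_pi_bis}) spells out exactly the two candidate relations you describe: for one direction $\{(\encoEnvI{\FF_1},\encoEnvI{\FF_2}) \mid \FF_1 \wbC \FF_2\}$ as a bisimulation up-to $\wbPI$, and for the other $\{(\FF_1,\FF_2) \mid \encoEnvI{\FF_1} \wbPI \encoEnvI{\FF_2}\}$ as a bisimulation, with the operational-correspondence theorem doing the back-and-forth translation in each case.
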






To derive the full analogue of Corollary~\ref{c:OGS_opLTSpiI},  
we now show that, on the \piI representation of $\lambda$-terms,
trace equivalence is the same as bisimilarity.
This result needs a little care:
it is known that on deterministic LTSs
bisimilarity coincides with trace equivalence.
\iflong
  (in a deterministic
 LTS, intuitively,  for any process and action there exists at most
 one proof of derivation of that action from the process).
\fi
However,
 the behaviour of the \piI representation of a  \COGS
 configuration
 need not be
deterministic, because there could be
multiple silent transitions as well as multiple output transitions 
(for instance, in \COGS
rule OQ may be
applicable to different terms).

\iflong
\finish{some more explanations could go here. The problem is really
the outputs. with only silent actions a weaker form of determinacy
would make things ok} 
\fi

\begin{lemma}
\label{l:tr_bisi_sing}
For any $M,N$ we have: $
{\encoEnvI M} \TEPI {\encoEnvI N}
$ iff 
${\encoEnvI M} \wbPI {\encoEnvI N}$.
\end{lemma}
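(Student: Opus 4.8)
The plan is to prove the non-trivial implication, namely that $\encoEnvI M \TEPI \encoEnvI N$ implies $\encoEnvI M \wbPI \encoEnvI N$ (the converse is immediate since bisimilarity is always contained in trace equivalence). The idea is to reduce the statement about $\pi$ to a statement about OGS, where the alternating structure of \AOGS gives us determinism for free. Concretely, I would first move from \piI to \COGS: by Lemma~\ref{l:ogsp_pi_traces_bis}, $\encoEnvI M \TEPI \encoEnvI N$ is equivalent to $\conf M \TEp \conf N$, and $\encoEnvI M \wbPI \encoEnvI N$ is equivalent to $\conf M \wbC \conf N$. So it suffices to show that, on \emph{initial configurations of $\lambda$-terms}, \COGS trace equivalence implies \COGS bisimilarity.

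Next I would relate \COGS on $\lambda$-terms back to \AOGS. Since a $\lambda$-term gives rise to an initial configuration whose only transition is the $(IOQ)$ action producing a singleton running term, and since the \AOGS rules of Figure~\ref{fig:seq-ogs-lts} are exactly the \COGS rules of Figure~\ref{fig:conc-ogs-lts} restricted to configurations with at most one thread, I would argue that for initial configurations the reachable part of the \COGS LTS coincides with the reachable part of the \AOGS LTS: in \AOGS the configuration is always active or passive, Opponent in a passive configuration may only re-activate a single continuation or interrogate a value, and in both cases the result is again a singleton running term. Hence $\conf M \TEp \conf N$ iff $\conf M \TEA \conf N$, and similarly for bisimilarity $\wbC$ versus $\wbA$ on these configurations. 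Then I would invoke the fact, already used in the derivation of Corollary~\ref{c:fa_traces_bis}, that the \AOGS LTS is deterministic — for any configuration and any visible action there is at most one successor up to $\alpha$-renaming of the freshly introduced bound names, and $\tau$-transitions (which are just $\redv$ steps) are deterministic too — so on \AOGS trace equivalence and bisimilarity coincide. Chaining the equivalences gives the result.

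Alternatively, if one prefers to stay on the $\pi$ side, the same argument can be phrased directly: although $\encoEnvI \FF$ for a general \COGS configuration is not deterministic (multiple threads, multiple enabled outputs), the encoding $\encoIa M p$ of a single $\lambda$-term is, via the operational correspondence of Theorem~\ref{t:opcorrCON3_w} together with the remark there that $\encoIa M p$ diverges iff $M$ does, ``essentially deterministic'': at any point there is at most one enabled output or $\tau$-action up to $\expa$, and after an Opponent (input) move one is again in the encoding of a single-threaded configuration. One would then run the standard determinacy argument (build the candidate relation $\{(\encoEnvI\FF,\encoEnvI\GG) : \FF,\GG \text{ reachable from } \conf M,\conf N,\ \FF\TEp\GG\}$ and check it is a bisimulation, using that matching traces force matching single successors).

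The main obstacle is the determinacy bookkeeping: one must be careful that the \piI encoding of a $\lambda$-term configuration really is deterministic in the relevant sense, since naively $\encoIa{MN}{p}$ contains a restricted name over which a communication happens, and the encoding of the environment contains replicated inputs that could in principle fire in several ways. The cleanest route is to push all of this through the OGS side — where the alternating LTS is visibly deterministic by construction — rather than re-proving determinacy of the $\pi$ terms, so the real work is just assembling the chain of equivalences $\encoEnvI M \TEPI \encoEnvI N \iff \conf M\TEp\conf N \iff \conf M\TEA\conf N \iff \conf M\wbA\conf N \iff \conf M\wbC\conf N \iff \encoEnvI M\wbPI\encoEnvI N$ and justifying the middle step that \COGS and \AOGS agree on single-threaded reachable configurations.
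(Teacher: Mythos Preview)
Your central claim — that from an initial configuration $\conf M$ the reachable part of the \COGS LTS coincides with that of the \AOGS LTS — is false. After a Player Question the environment acquires two entries (a value at $y$ and a continuation at $q$); in \COGS, Opponent may perform an OQ on $y$ to create one running term and then, from the resulting configuration, perform another OQ (on $y$ again, or an OA on $q$) to create a \emph{second} running term. The resulting configuration has two threads and is not an \AOGS configuration at all. This is exactly the phenomenon the paper warns about just before the lemma: the \piI image of a reachable \COGS configuration ``need not be deterministic, because there could be multiple silent transitions as well as multiple output transitions''.

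Your intended chain $\TEPI \Rightarrow \TEp \Rightarrow \TEA \Rightarrow \wbA \Rightarrow \wbC \Rightarrow \wbPI$ can in fact be made to work, but the step $\wbA \Rightarrow \wbC$ is not a triviality about reachable states: it is precisely Lemma~\ref{l:OGS_COGS}, whose proof requires the bisimulation up-to-composition machinery of Section~\ref{ss:upto_games}. So your route, once repaired, amounts to first developing the up-to techniques and Lemma~\ref{l:OGS_COGS}, and only then deriving the present lemma — a substantial reorganisation, not the short determinacy argument you sketch.

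The paper instead stays on the \piI side and proves the lemma directly. Working with the optimised encoding $\qencV$, it shows that
\[
\{(\encoConV \FF,\encoConV \GG)\ :\ \FF,\GG \text{ singleton},\ \FF \TEp \GG\}
\]
is a bisimulation up-to context and up-to $(\contr,\wbPI)$. Singleton encodings do have deterministic immediate transitions (Lemmas~\ref{l:opt_op} and~\ref{l:opt_con}), so Lemma~\ref{l:immediate_tr} lets trace equivalence be pushed along a matched step. The subtle case is an output $\bout x{z,q}$ (the PQ case): the derivative is, up to $\contr$, a parallel composition $\encoVVa V z \mid \inp q y.\encoVa{E[y]}p$ of two singleton pieces, and Lemma~\ref{l:par_tr} is used to project trace equivalence onto each factor; the up-to-context closure then reassembles them. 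This is where the real work is, and it is exactly the step your ``reachable parts coincide'' shortcut tries to avoid.
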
 

\begin{proof}
Bisimilarity implies trace equivalence. 
 The delicate direction is the converse one.
It is easier to work with the optimised encoding  $\qencV$.
The relation 
\[ 
(\encoConV \FF,  \encoConV \GG) \st \mbox{ $\FF,\GG$ are singleton with $\FF \TEp
\GG$}
\]
is a bisimulation up-to context and up-to $(\contr,
 \wbPI)$. 
The proof uses
Lemmas~\ref{l:immediate_tr} and \ref{l:par_tr}
and is reported in
Appendix~\ref{a:aux_tr_bisi_sing}.
\end{proof}

 We can finally  combine Lemmas~\ref{l:tr_bisi_sing} and~\ref{l:ogsp_pi_traces_bis}  
 to  derive that the \COGS and \piI semantics of $\lambda$-calculus
coincide, both for traces and for bisimilarity. 

\begin{corollary}
\label{c:fa_all}
For all $M,N$ we have: 
$\conf{M} \TEp \conf {N}$ iff 
$\conf{M} \wbC \conf {N}$ iff 
$\encoEnvI {{M} } \TEPI\encoEnvI {{N}}$
iff 
$\encoEnvI {{M}} \wbPI \encoEnvI {{N}}$. 
\end{corollary}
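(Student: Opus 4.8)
The plan is to chain together the equivalences we have already established. The corollary asserts four equivalent statements for arbitrary $\lambda$-terms $M,N$: equality of initial \COGS configurations under trace equivalence, under bisimilarity, and equality of the \piI encodings under ordinary trace equivalence and ordinary bisimilarity. The key observation is that the initial configuration $\conf{M}$ is itself a \COGS configuration, so Lemma~\ref{l:ogsp_pi_traces_bis} applies directly to it, giving $\conf{M} \TEp \conf{N}$ iff $\encoEnvI{M} \TEPI \encoEnvI{N}$ (part 1) and $\conf{M} \wbC \conf{N}$ iff $\encoEnvI{M} \wbPI \encoEnvI{N}$ (part 2). Here I use that $\encoConV{\initconf{\phi}{M}} = \encoI{M}$, and we abbreviate $\encoI{M}$ as $\encoEnvI{M}$ for $\lambda$-terms, matching the notation of Corollary~\ref{c:OGS_opLTSpiI}.

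It then remains only to close the square by relating trace equivalence and bisimilarity on \emph{one} of the two sides. On the $\pi$-calculus side this is exactly Lemma~\ref{l:tr_bisi_sing}: for any $M,N$, $\encoEnvI{M} \TEPI \encoEnvI{N}$ iff $\encoEnvI{M} \wbPI \encoEnvI{N}$. Combining: $\conf{M} \TEp \conf{N}$ iff (by Lemma~\ref{l:ogsp_pi_traces_bis}(1)) $\encoEnvI{M} \TEPI \encoEnvI{N}$ iff (by Lemma~\ref{l:tr_bisi_sing}) $\encoEnvI{M} \wbPI \encoEnvI{N}$ iff (by Lemma~\ref{l:ogsp_pi_traces_bis}(2)) $\conf{M} \wbC \conf{N}$. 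This yields all four equivalences at once.

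The proof is therefore essentially a one-line assembly, and there is no real obstacle left: all the substantive work sits in Lemmas~\ref{l:ogsp_pi_traces_bis} and~\ref{l:tr_bisi_sing}, the latter being the delicate one (the non-determinism of the $\pi$ encoding means one cannot simply invoke the folklore fact that trace equivalence equals bisimilarity on deterministic LTSs, and instead one must use the up-to-context-and-$(\contr,\wbPI)$ technique). The only minor care point in the corollary itself is making sure the notational identification of $\encoConV{\conf{M}}$ with $\encoEnvI{M}$ is in force — which it is, via the definition $\encoConI{\initconf{\phi}{M}} \defeq \encoI{M}$ of Figure~\ref{f:OGS_pi_cbv}, together with Lemma~\ref{l:opt_sound} identifying the optimised and unoptimised encodings up to $\expa$ (hence up to $\wbPI$ and up to $\TEPI$). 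I would present the argument as a short explicit chain of iffs as above, with a sentence recalling which lemma justifies each link.
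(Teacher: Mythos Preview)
Your proposal is correct and follows essentially the same route as the paper: the corollary is obtained by combining Lemma~\ref{l:ogsp_pi_traces_bis} (linking $\TEp$ with $\TEPI$ and $\wbC$ with $\wbPI$) and Lemma~\ref{l:tr_bisi_sing} (linking $\TEPI$ with $\wbPI$ on encodings of $\lambda$-terms), exactly as you lay out. Your additional remarks on the notational identification $\encoConI{\conf{M}} = \encoI{M}$ are accurate and harmless.
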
 


\subsection{Tensor Product}
\label{ss:tp}
We now introduce a way of combining configurations, which corresponds to the notion of
tensor product of arenas and strategies in (denotational) game semantics.

\begin{definition}
Two concurrent configurations $\FF,\GG$ are said to be \emph{compatible} if 
their polarity functions  $\Label_{\FF},\Label_{\GG}$ are compatible~--- that is,
for all $a \in \dom{\Label_{\FF}} \cap \dom{\Label_{\GG}}$,
we have $\Label_{\FF}(a) = \Label_{\GG}(a)$.
\end{definition}

Following Definition~\ref{d:interact},
two $\pi$-calculus processes $P_1,P_2$ \emph{cannot interact} 
if there is no name that appears free in both processes
and with opposite polarities 
(that is, in input position in one, 
and output position in the other). 
As
 in \piI all names exchanged are fresh, 
the  `cannot interact' property is preserved by transitions.

\begin{lemma}
  \label{l:ci_trans}
If $P_1,P_2$ cannot interact and 
$P_1 | P_2 \ArrPI \trace P $, then $P =  P_1'| P_2'$, for some 
$P_1', P_2'$ that cannot interact.
\end{lemma}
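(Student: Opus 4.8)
The plan is to prove Lemma~\ref{l:ci_trans} by induction on the length of the trace $\trace$, reducing the general case to the single-step case and then analysing which LTS rule produced the step. For the base case $\trace = \emptytrace$ there is nothing to prove, taking $P_1' = P_1$ and $P_2' = P_2$. For the inductive step, it suffices to treat one transition $P_1 \mid P_2 \arrPI\mu P$ (either visible or silent) and show that $P$ splits as $P_1' \mid P_2'$ with $P_1', P_2'$ non-interacting; the weak transition $\ArrPI\trace$ is then handled by iterating this, using that at each stage the split is preserved.

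First I would recall the shape of transitions of a parallel composition in the standard LTS of Figure~\ref{t:pii_ST}. Up to $\alpha$-equivalence (and using the freshness convention of Remark~\ref{r:bn} that the bound names of $\mu$ are fresh for $P_1 \mid P_2$), a transition $P_1 \mid P_2 \arrPI\mu P$ arises from exactly one of: (i) rule \trans{parL}, where $P_1 \arrPI\mu P_1'$ and $P = P_1' \mid P_2$; (ii) its mirror \trans{parR}, symmetrically with $P_2$; or (iii) rule \trans{com}, where $\mu = \tau$, $P_1 \arrPI{\mu_0} P_1'$, $P_2 \arrPI{\outC{\mu_0}} P_2'$ with $\tilx = \bn{\mu_0}$, and $P = \res{\tilx}(P_1' \mid P_2')$. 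In cases (i) and (ii) the conclusion is immediate once we check the non-interaction is preserved under a transition of a single component; in case (iii) we additionally observe that $\res{\tilx}(P_1' \mid P_2')$ is of the required form modulo pushing the restriction $\res\tilx$ inside — but here I would use that in \piI all exchanged names are \emph{fresh}, so $\tilx$ are names private to each of $P_1'$ and $P_2'$ only in the degenerate sense that they are newly generated; more precisely, since $\mu_0$ is an input of one side and a bound output of the other, $\tilx$ occur free in exactly one of $P_1', P_2'$, and we can distribute the restriction, writing $\res\tilx(P_1' \mid P_2')$ as, say, $(\res\tilx P_1') \mid P_2'$ when $\tilx$ are the bound-output names carried out of $P_1$, obtaining the desired split. (This is the standard scope-distribution law, valid up to $\equiv$, and harmless since we are only asserting the shape $P = P_1'' \mid P_2''$.)

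The key lemma to isolate is that non-interaction of $P_1$ and $P_2$ is preserved when one side takes a transition: if $P_1, P_2$ cannot interact and $P_1 \arrPI\mu P_1'$ then $P_1', P_2$ cannot interact. This follows because the free names of $P_1'$ that carry a given polarity are among the free names of $P_1$ with that polarity together with possibly new fresh names (the bound names of $\mu$, which by convention do not occur in $P_2$ at all); a transition of a \piI process never introduces a free name with a polarity not already present, and fresh names cannot clash with $P_2$. For the \trans{com} case one applies this observation to both sides. This is precisely the analogue, for processes in parallel rather than in a static context, of the invariance remark made after Definition~\ref{d:interact} for contexts that cannot interact.

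The main obstacle, and the only point needing real care, is the treatment of the restriction generated by \trans{com}: one must be sure that $\res{\tilx}(P_1' \mid P_2')$ genuinely has the form (a parallel composition of two non-interacting processes) rather than merely being structurally congruent to one, if the statement is read strictly up to syntax. I would resolve this by reading $P = P_1' \mid P_2'$ up to structural congruence (as is standard, and consistent with how the lemma is used later, e.g.\ in decomposition results), and by noting that because all objects exchanged in \piI are fresh tuples, the restricted names $\tilx$ occur free in only one of the two residuals, so $\res\tilx$ can be pulled in without affecting the other component; the non-interaction of the two resulting components is then inherited from the preservation lemma above, since adding a $\res\tilx$ in front only removes free names and hence cannot create a polarity clash.
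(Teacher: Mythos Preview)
Your induction and treatment of the \trans{parL}/\trans{parR} cases are fine. The gap is in your handling of \trans{com}: that case simply \emph{cannot arise}. For \trans{com} to fire we need $P_1 \arrPI{\mu_0} P_1'$ and $P_2 \arrPI{\outC{\mu_0}} P_2'$ with $\mu_0$ a visible action; the subject $a$ of $\mu_0$ is then a free name occurring in input position in one of $P_1,P_2$ and in output position in the other, which directly contradicts the hypothesis that $P_1,P_2$ cannot interact. So every step of $P_1\mid P_2$ is derived by \trans{parL} or \trans{parR}, and your preservation argument for those cases already finishes the proof. This is exactly how the paper uses the lemma later (e.g.\ in the proof of Lemma~\ref{l:dec_pi}, where each step is tagged \texttt{L} or \texttt{R} with no third possibility).

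Your attempted argument for the \trans{com} case is not only unnecessary but actually incorrect: you claim that the bound names $\tilx$ of $\mu_0$ occur free in exactly one of $P_1',P_2'$, allowing the restriction $\res{\tilx}$ to be pushed in. That is false in general. After a synchronisation on $\bout a{\tilx}$ and $\inp a{\tilx}$, the names $\tilx$ are typically free in \emph{both} residuals --- that is precisely the point of scope extrusion. So the restriction cannot be distributed, and $\res{\tilx}(P_1'\mid P_2')$ is genuinely not of the form $P_1''\mid P_2''$. The correct resolution is not to repair this step but to observe that the case never occurs.
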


The \piI translation of
two compatible \COGS configurations gives two processes that 
cannot interact with each other.

\begin{lemma}
  \label{l:compat_cannot-interact}
  Taking two \COGS configurations $\FF_1,\FF_2$,
  if their polarity function are compatible then $\encoConI {\FF_1}$ and
  $\encoConI {\FF_2}$ cannot interact.
\end{lemma}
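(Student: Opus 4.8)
The plan is to unfold the definition of $\encoConI{\FF_i}$ and reduce the claim to a statement about the free names of the encoding and their polarities. Recall (from the remark after Figure~\ref{f:OGS_pi_cbv}) that the process $\encoConI{\FF}$ has its free names included in $\phi$, and—crucially—it uses $P$-names of $\FF$ only in output-subject position and $O$-names of $\FF$ only in input-subject position. So the first step is to make this polarity-of-free-names statement precise and prove it by a straightforward induction on the structure of the configuration (and of environments and running terms), inspecting each clause of the encoding in Figures~\ref{f:enc_internal} and~\ref{f:OGS_pi_cbv}: in $\encoIa{M}{p}$ the continuation $p$ (a P-name of the active thread) occurs as an output subject; in $\encoEnvI{\pmap{y}{V}\cdot\gamma'}$ the variable $y$ (a P-name) is the subject of the replicated input $!\inp y{x,q}.\dots$—wait, that is an \emph{input}—so in fact the roles are: P-names appear as input subjects when they name stored values, and O-names appear as output subjects when they are used inside the encoding of a term or an evaluation context. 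The precise bookkeeping here is exactly the content of the cited remark, so I would phrase the auxiliary lemma as: every free name of $\encoConI{\FF}$ lies in $\dom{\Label_{\FF}}$, and occurs in $\encoConI{\FF}$ only with a fixed polarity determined by $\Label_{\FF}$ (P-names used one way, O-names the other, uniformly across the encoding).

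Granting that auxiliary lemma, the main proof is short. Suppose a name $a$ appears free in both $\encoConI{\FF_1}$ and $\encoConI{\FF_2}$. By the auxiliary lemma, $a \in \dom{\Label_{\FF_1}} \cap \dom{\Label_{\FF_2}}$, and its polarity in $\encoConI{\FF_i}$ is the one dictated by $\Label_{\FF_i}(a)$. Since $\FF_1$ and $\FF_2$ are compatible, $\Label_{\FF_1}(a) = \Label_{\FF_2}(a)$; hence $a$ occurs with the \emph{same} polarity in the two processes, not with opposite polarities. As this holds for every shared free name, $\encoConI{\FF_1}$ and $\encoConI{\FF_2}$ cannot interact in the sense of Definition~\ref{d:interact}, which is exactly the claim.

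The one point requiring care—and the main obstacle—is getting the polarity bookkeeping of the auxiliary lemma exactly right, because the encoding mixes inputs and outputs in a slightly subtle way: a stored value at variable $y$ is encoded as a replicated \emph{input} at $y$, while the encoding of a term $M$ that will call a function $x$ produces an \emph{output} at $x$; similarly a continuation name $p$ bound to a context in $\gamma$ is encoded as an input at $p$, while the continuation parameter of the thread $\encoIa{M}{p}$ is an output at $p$. One must check that across all clauses the assignment `P-name $\leftrightarrow$ (fixed polarity), O-name $\leftrightarrow$ (opposite polarity)' is respected, and in particular that no free name ever occurs with \emph{both} polarities inside a single $\encoConI{\FF_i}$—otherwise two copies could interact even when compatible. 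This uniformity is precisely what validity of configurations (the constraint at the end of Section~\ref{s:AOGS}: names in $\gamma(a)$ are O-names, $\FV(M),p$ are O-names) buys us, so the induction should go through cleanly once validity is invoked. I would therefore state and use the auxiliary lemma explicitly rather than appealing informally to the remark, and then the compatibility argument above finishes the proof in two lines.
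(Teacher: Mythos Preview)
Your approach is correct and is essentially the only sensible way to make the argument explicit; the paper itself gives no proof for this lemma, relying on the polarity remark after Figure~\ref{f:OGS_pi_cbv}. Your structure---auxiliary lemma that each free name of $\encoConI{\FF}$ occurs with a single input/output polarity determined by $\Label_{\FF}$, then compatibility forces equal polarities on shared names---is exactly right.

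One comment on the confusion you flag mid-proof. You are right to be suspicious: in the encoding, a P-name $y\in\dom{\gamma}$ appears as the \emph{input} subject of $\encoIVa{V}{y}$ (or of $\inp q x.\encoIa{E[x]}{p}$ for continuation names), while O-names appear as \emph{output} subjects inside the encoded terms and contexts. This matches the OGS LTS, where Opponent actions (inputs) are at P-names and Player actions (outputs) are at O-names. The paper's remark ``uses $P$-names in outputs and $O$-names in inputs'' thus appears to have the direction stated backwards. Fortunately your argument does not depend on which way round it is: all you need is that the input/output role of a free name is \emph{a function of} $\Label_{\FF}$, and that validity of configurations (free names of the running terms and of the stored values/contexts are O-names) guarantees no free name occurs with both roles in a single $\encoConI{\FF_i}$. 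So the write-up should simply state the auxiliary lemma with the correct direction and proceed; the two-line conclusion you give is then exactly right.
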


\begin{definition}
  \label{d:tensor_cogs}
For  compatible configurations
$ \FF = \conf{ A , \gamma, \phi }$
and 
$ \GG = \conf{ B , \delta, \phi' }$,  the \emph{tensor product} 
$\FF \appendSYMB \GG$ is  defined as 
\[ \appendCon \FF \GG  \defeq
 \conf{ A \cdot B , \gamma \cdot \delta, \phi \cup \phi' }
 \] 
 \end{definition}
 
 The polarity function of $\appendCon \FF \GG$ 
 is then equal to $\Label_{\FF} \cup \Label_{\GG}$, and
$\encoConI {
\appendCon
{\FF_1} {\FF_2}}
\equiv 
\encoConI {
\FF_1}
|
\encoConI {
 {\FF_2}}$,  where $\equiv$ is the standard structural congruence of  $\pi$-calculi.

In the following, we write $\interl{\trace_1}{\trace_2}$ for the set of traces
obtained from an interleaving of the elements in the sequences $\trace_1$ and $\trace_2$.

We relate the traces generated by the tensor product 
of two configurations to the set of interleavings of traces generated 
by the component configurations themselves. 
The result is proved by going through the $\pi$-calculus,
via the following lemma.

\begin{lemma}
  \label{l:dec_pi}
  Suppose $P_1,P_2$ cannot interact.
  \begin{enumerate}
  \item
  If $P_1 | P_2 \ArrPI \trace   P_1'| P_2'$,
  then, for $i=1,2$,  there is  $ \trace_i $ 
  such that $P_i \ArrPI { \trace_i} P_i'$, and $ \trace  
  \in   \interl { \trace_1   }{ \trace_2 }$;
  \item 
  Conversely,
  if for  $i=1,2$,  we have $P_i \ArrPI { \trace_i} P_i'$, and $ \trace  
  \in   \interl { \trace_1   }{ \trace_2 }$, then 
  $P_1 | P_2 \ArrPI \trace   P_1'| P_2'$.
  \end{enumerate} 
\end{lemma}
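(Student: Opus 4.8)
The plan is to prove Lemma~\ref{l:dec_pi} by induction on the length of the trace $\trace$, using the fact that the two processes $P_1, P_2$ cannot interact, so that every transition of $P_1 | P_2$ is an interaction internal to $P_1$ or internal to $P_2$, and this property is preserved along reductions by Lemma~\ref{l:ci_trans}. For part (1), I would first observe that because $P_1$ and $P_2$ cannot interact, whenever $P_1|P_2 \arrPI\mu R$ the transition is derived (up to structural rules and the symmetric of {\tt parL}) either by {\tt parL} from a transition $P_1\arrPI\mu P_1'$ with $R = P_1'|P_2$ — and then $P_1,P_2$ still cannot interact, since $\mu$ carries only fresh names — or symmetrically from a transition of $P_2$; the {\tt com} rule is excluded precisely because the two processes have no free names of opposite polarity in common. (A visible action of $P_1|P_2$ comes from exactly one of the two components for the same reason.) One then threads a single step through and appeals to the induction hypothesis on the remainder of $\trace$, inserting the new action into $\trace_1$ or $\trace_2$ accordingly; the resulting $\trace$ is an interleaving of the updated $\trace_1,\trace_2$ by construction. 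The weak transitions $\ArrPI{}$ add only bookkeeping about the interspersed $\tau$-steps, each of which is again internal to one component.

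For part (2), the converse, I would again induct, this time on the length of the interleaving $\trace$. Given $P_i \ArrPI{\trace_i} P_i'$ for $i=1,2$ and $\trace \in \interl{\trace_1}{\trace_2}$, peel off the first action of $\trace$: it is (without loss of generality) the first action of $\trace_1$, say $\ell$, so $P_1 \ArrPI{\ell} P_1'' \ArrPI{\trace_1'} P_1'$ with $\trace_1 = \ell,\trace_1'$. Using {\tt parL} (and rule {\tt alpha} to keep the bound names of $\ell$ fresh with respect to $P_2$, which is always possible since $\trace$'s bound names are fresh by Remark~\ref{r:bn}) we get $P_1|P_2 \ArrPI{\ell} P_1''|P_2$; then $P_1'',P_2$ cannot interact, $\trace'$ (the tail of $\trace$) lies in $\interl{\trace_1'}{\trace_2}$, and the induction hypothesis yields $P_1''|P_2 \ArrPI{\trace'} P_1'|P_2'$, hence $P_1|P_2 \ArrPI{\trace} P_1'|P_2'$. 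The base case $\trace = \emptytrace$ forces $\trace_1 = \trace_2 = \emptytrace$ and $P_i' $ reachable from $P_i$ by $\tau$-steps only, so $P_1|P_2 \LongrightarrowPI P_1'|P_2'$ by interleaving the silent steps.

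The main obstacle is the careful handling of $\tau$-steps and bound-name freshness inside the weak arrows $\ArrPI{}$: an action $\ArrPI\ell$ is $\LongrightarrowPI \arrPI\ell \LongrightarrowPI$, and one must check that each $\tau$ appearing in these silent segments is genuinely internal to one component (so that it can be attributed to $P_1$ or $P_2$ and does not secretly encode a communication between them) — this is exactly where the invariance of ``cannot interact'' under reduction (the remark following Definition~\ref{d:interact}, and Lemma~\ref{l:ci_trans}) is used, and where one relies on the convention that the bound names of $\trace$ are all fresh so that $\alpha$-renaming never clashes with the other component. Once the step analysis for a single transition is established, both directions follow by a routine induction, and the statement of Lemma~\ref{l:dec_pi} is obtained; I would close by remarking that, together with Lemma~\ref{l:compat_cannot-interact} and the structural congruence $\encoConI{\appendCon{\FF_1}{\FF_2}} \equiv \encoConI{\FF_1} | \encoConI{\FF_2}$, this immediately transports the decomposition to traces of tensor products of compatible configurations via Corollary~\ref{c:ogsp_pi}.
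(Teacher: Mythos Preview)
Your proposal is correct and follows essentially the same approach as the paper: both arguments rest on the observation that, since $P_1$ and $P_2$ cannot interact, the rule {\tt com} is never applicable at the outermost parallel composition, so every step is derived via {\tt parL} or {\tt parR} and can be attributed to one component. The paper phrases part~(1) as a global tagging of each one-step action with {\tt L} or {\tt R} according to the last rule used, then collects the two subsequences, whereas you unfold this into an explicit induction on the length of the trace; these are the same argument at different levels of detail.
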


\begin{proof}
  The interesting case is (1). 
  Consider the sequence of one-step  actions (including silent actions) that 
  are performed to obtain the trace
  $P_1 | P_2 \ArrPI \trace   P_1'| P_2'$.  Then tag each of these actions with ${\tt L}$ or 
  ${\tt R}$ depending on whether in the derivation proof of that action, the last rule
  applied was {\tt parL} or  {\tt parR}.
  Then $ \trace_1$ is obtained by collecting the subsequence of $\trace$ with  
  the {\tt ParL} tag, and similarly for   $ \trace_2$ with
  {\tt ParR}.
\end{proof}

We finally prove that the set of traces generated by $\FF_1 \otimes \FF_2$
is the  union of the sets of interleaving $\interl{\trace_1}{\trace_2}$,
for $\FF_1 \ArrC{\trace_1}$ and $\FF_2 \ArrC{\trace_2}$.
\begin{lemma}
\label{l:COGS-interleave-tensor}
\iflong
Suppose $\FF_1,\FF_2$ are compatible concurrent configurations.
\begin{enumerate}
\item\label{l:COGS-interleave-tensor_1}
If $\appendCon
 {\FF_1} {\FF_2} \ArrC \trace  \FF  $, then, 
for $i=1,2$,  there are traces $ \trace_i $
such that $ {\FF_i} \ArrC { \trace_i} \FF'_i $, and $ \trace  
\in   \interl { \trace_1   }{ \trace_2 }$, and 
$\FF = \appendCon
 {\FF'_1} {\FF'_2}$;
\item\label{l:COGS-interleave-tensor_2}
Conversely,
if for  $i=1,2$, we have
 $ {\FF_i} \ArrC { \trace_i} \FF'_i$,
and
$ \trace  
\in   \interl { \trace_1   }{ \trace_2 }$, 
then 
$ 
\appendCon
{\FF_1} {\FF_2} \ArrC \trace   \appendCon
 {\FF'_1} {\FF'_2}$.
\end{enumerate}
\fi
\end{lemma}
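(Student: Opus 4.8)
The plan is to prove both inclusions by induction on the length of the derivation that witnesses the hypothesis, in each case peeling off one \COGS transition at a time and showing that every transition out of a tensor product $\appendCon{\FF_1}{\FF_2}$ of compatible configurations is \emph{localised}: it fires on the running term or the environment of exactly one of the two components and leaves the other component untouched. This is the \COGS counterpart of the decomposition of $\pi$-traces given by Lemma~\ref{l:dec_pi}: recall that $\encoConI{\appendCon{\FF_1}{\FF_2}} \equiv \encoConI{\FF_1} | \encoConI{\FF_2}$ and that, by Lemma~\ref{l:compat_cannot-interact}, the two translations cannot interact, so their parallel composition evolves by interleaving the independent evolutions of the two summands (Lemma~\ref{l:ci_trans}).

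For part~(\ref{l:COGS-interleave-tensor_1}) I would write the derivation $\appendCon{\FF_1}{\FF_2} \ArrC{\trace} \FF$ as a sequence of one-step transitions $\appendCon{\FF_1}{\FF_2} \arrC{\mu_1} \cdots \arrC{\mu_n} \FF$, where each $\mu_i$ is $\tau$ or a visible action and $\trace$ is the subsequence of the visible ones. Consider the first step out of $\appendCon{\FF_1}{\FF_2} = \conf{A_1 \cdot A_2,\ \gamma_1 \cdot \gamma_2,\ \phi_1 \cup \phi_2}$. It is an instance of one of the rules of Figure~\ref{fig:conc-ogs-lts} (rule $(IOQ)$ does not apply, since $\FF_1,\FF_2$ are concurrent, not initial, configurations); in each of the remaining cases the rule fires either on a thread $\pmap{p}{M}$ with $p$ in $\dom{A_1}$ or $\dom{A_2}$ (for the Player rules), or on a mapping of $\gamma_1$ or $\gamma_2$ (for the Opponent rules $(OA),(OQ)$), and the chosen rule instance therefore determines one component, say component~$1$, since the key involved belongs to a unique one of the two disjoint domains. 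The same instance gives $\FF_1 \arrC{\mu_1} \FF_1'$ and the reached state is $\appendCon{\FF_1'}{\FF_2}$. I would then check the two side conditions needed to iterate: (i) $\appendCon{\FF_1'}{\FF_2}$ is again a valid configuration, because the names introduced by $\mu_1$ are fresh (Remark~\ref{r:bn}) and hence disjoint from $\phi_1 \cup \phi_2$ and from the domains of both components; and (ii) $\FF_1'$ and $\FF_2$ are still compatible, which follows because the polarity of a name that is already present only changes when that name enters or leaves a component's own running term or environment, and such a name cannot be shared with the other component --- a shared name must be O in both by compatibility and by the disjointness of the running-term and environment domains (so there are no shared Player names), and it cannot be a shared O-continuation name by linearity of continuation names (otherwise two threads, one from each component, could eventually answer on it, contradicting linear use) --- so the only shared names are O-variables, whose polarity is invariant. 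Applying the induction hypothesis to $\appendCon{\FF_1'}{\FF_2} \arrC{\mu_2} \cdots \arrC{\mu_n} \FF$ concludes.

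Part~(\ref{l:COGS-interleave-tensor_2}) is the converse construction. Given one-step decompositions of $\FF_1 \ArrC{\trace_1} \FF_1'$ and $\FF_2 \ArrC{\trace_2} \FF_2'$, I would $\alpha$-rename so that the bound names occurring in $\trace_1$ and $\trace_2$ are pairwise distinct --- legitimate since, being subsequences of the trace $\trace$ fixed in the statement, their bound names are already all distinct by the freshness convention of Remark~\ref{r:bn} --- and then interleave the two transition sequences following the chosen element $\trace \in \interl{\trace_1}{\trace_2}$, inserting the $\tau$-steps in any order consistent with it. Each single transition $\FF_1^{(k)} \arrC{\mu} \FF_1^{(k+1)}$ of the first sequence lifts to $\appendCon{\FF_1^{(k)}}{\FF_2^{(l)}} \arrC{\mu} \appendCon{\FF_1^{(k+1)}}{\FF_2^{(l)}}$, and symmetrically for the second: the thread or mapping acted upon is still present in the combined configuration, the call-by-value reduction or the guard $\gamma(x) = V$ is local to the component, and the newly introduced names are fresh for the larger name-support; moreover validity and compatibility of the product are preserved exactly as in part~(\ref{l:COGS-interleave-tensor_1}). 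Composing these lifted steps yields $\appendCon{\FF_1}{\FF_2} \ArrC{\trace} \appendCon{\FF_1'}{\FF_2'}$.

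The step I expect to be the crux is the bookkeeping in point~(ii) above: showing that a localised transition cannot create a clash between the names of the two components, so that the tensor-product shape --- both \emph{valid} and a product of \emph{compatible} configurations --- is an invariant of the game. This is exactly where the hypothesis that $\FF_1$ and $\FF_2$ are compatible, the disjointness requirements built into Definition~\ref{d:tensor_cogs}, the linear discipline on continuation names, and the all-names-fresh convention of Remark~\ref{r:bn} all come into play; on the $\pi$-calculus side the same fact is the invariance of the ``cannot interact'' predicate under transitions (Lemma~\ref{l:ci_trans}), and if one only wants the coarser trace-set identity it suffices to combine Lemma~\ref{l:dec_pi} with Corollary~\ref{c:ogsp_pi}.
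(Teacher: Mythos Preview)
Your direct induction on the \COGS transition sequence is correct but differs from the paper's proof, which is a one-liner through \piI: it cites Lemma~\ref{l:dec_pi}, Lemma~\ref{l:compat_cannot-interact}, Corollary~\ref{c:ogsp_pi}, and the structural congruence $\encoConI{\appendCon{\FF_1}{\FF_2}} \equiv \encoConI{\FF_1} \,|\, \encoConI{\FF_2}$. Your argument is effectively the \COGS-level analogue of the \emph{proof} of Lemma~\ref{l:dec_pi} --- tag each step with the component it fires in --- rather than a black-box use of its statement, and this buys you the exact reached configurations $\FF = \appendCon{\FF'_1}{\FF'_2}$; the \piI detour, read literally, delivers only the trace-set identity (Corollary~\ref{c:ogsp_pi} is about traces, and the operational correspondences of Theorem~\ref{t:COGS_pi} match derivatives only up to $\contr$ or $\wbPI$), a point you yourself flag at the end. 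One small refinement on your invariant~(ii): the exclusion of shared O-continuation names is not a consequence of the definition of compatibility per se; rather, were such a name $q$ shared, the $(OA)$ step moving $q$ into $\dom{A}$ would land the tensor side in a configuration violating validity (the remaining $\gamma$-entry targeting $q$ would then mention a P-name), so the case is ruled out by the standing restriction to valid configurations rather than by an appeal to behavioural linearity.
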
  

\begin{proof}
  It follows from Lemma~\ref{l:dec_pi}, 
  Lemma~\ref{l:compat_cannot-interact},
  Corollary~\ref{c:ogsp_pi}, and the fact that
  $\encoConI {
  \appendCon
  {\FF_1} {\FF_2}}
  \equiv 
  \encoConI {
  \FF_1}
  |
  \encoConI {
  {\FF_2}}$.
\end{proof}

%
%

The tensor product of \AOGS configurations will be introduced
in Section~\ref{a:tensor-aogs}.

\subsection{Up-to techniques for games}
\label{ss:upto_games}

We introduce up-to techniques for \COGS, which allow, in bisimulation proofs, 
to split two \COGS configurations  
into separate components and then to reason separately on these.
These up-to techniques are directly imported from \piI.
Abstract settings for up-to techniques have been developed, see
\cite{SanPous,PousS19}; we cannot however derive the OGS techniques from
them because these settings are specific to first-order LTS (i.e., CCS-like, 
without binders within actions). 
They are then used to prove that
\COGS and \AOGS yield the same equivalences on
$\lambda$-terms.
A further application is in Section~\ref{s:enf}, 
discussing eager normal-form bisimilarity. 

In order to compare two configurations,
they must share the same structure, as defined by the following notion.
\begin{definition}
\label{d:compa}
Two configurations $\FF,\GG$ are \emph{support-equivalent}, written 
$\okC \FF\GG$, when they have the same polarity function (and so the same names) and the
same continuation structure.

A relation $\R$ on configuration is \emph{well-formed} if it relates
support-equivalent configurations. 
\end{definition} 

Below, all relations on configurations are meant to be well-formed. 
Given a well-formed relation $\R$ we write
\begin{itemize}
\item $\appendRel \R $ for the relation
$
\{ (\FF_1,\FF_2) \st
\mbox{$\exists$ $\GG$ s.t.\ }
 \FF_i = \appendCon{\FF'_i}\GG \mbox{ ($i=1,2$)}  \mbox{ and } 
 \FF'_1\RR \FF'_2\}
$.
\item
$\uptoComp \R $ for the reflexive and transitive closure of 
$\appendRel \R $.

Thus from $\FF_1 \RR  
\GG_1$ and  $\FF_2 \RR
\GG_2$ we obtain 
$(\appendCon{\FF_1}{\FF_2})  \uptoComp{\R}   
(\appendCon{\GG_1}{\FF_2}) \uptoComp{\R}  \appendCon{\GG_1}{\GG_2}.$

\item 
$\uptoCompRed \R $ for the closure of $\uptoComp \R $ under
reductions. That is, $\FF_1 \uptoCompRed \R \FF_2$ holds if there are 
$\FF_i'$, $i=1,2$ with $\FF_i \Longrightarrowp \FF'_i$ and 
$\FF_1' \uptoComp \R \FF_2'$.
(As $\Longrightarrowp$ is reflexive, we may have $\FF_i =
\FF'_i$.) 
\end{itemize}
 
\begin{definition}
\label{d:uptoComp}
A relation $\R$ on configurations is a \emph{bisimulation up-to
 reduction and   composition } if whenever $\FF_1 \RR \FF_2$: 
\begin{enumerate}
\item
 if $\FF_1 \arrC\act \FF_1'  $ then there is $\FF_2' $ such that 
$\FF_2 \Arcapp\act \FF_2'  $ and 
$\FF'_1 \uptoCompRed \R \FF'_2$ ;

\item the converse, on the transitions from  $\FF_2$.
\end{enumerate}
\end{definition}

\DSOCTb

Sometimes we do not need  the `closure under reduction'; that is,  referring to  the clauses
of Definition~\ref{d:uptoComp} 
 we may take  
$\FF'_i = \FF_i$, ($i=1,2$).
In this case  
we say that $\R$ is a 
`{bisimulation up-to
   composition}'.

For the soundness of 
bisimulation up-to reduction and composition,
we first need the following Lemma~\ref{l:uptoOGSpi}, 
which is a  consequence of the compositionality of the mapping from
\COGS to \piI.  

\begin{lemma}
\label{l:uptoOGSpi}
For a relation $\R$ on \COGS configurations, 
if 
${(\FF_1, \FF_2)} \in  { \uptoComp \R}$, then 
we have $( \encoConV {\FF_1},  \encoConV {\FF_2}) \in  \ctx{( \encoConV{ \R })}$. 
\end{lemma}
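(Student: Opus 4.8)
The plan is to unfold the two closure operations involved and to proceed by induction on the number of elementary composition steps. The one piece of real content I would rely on is the compositionality of the mapping into \piI established in Section~\ref{ss:tp}: for compatible \COGS configurations $\HH_1,\HH_2$ one has $\encoConV{\appendCon{\HH_1}{\HH_2}}\equiv\encoConV{\HH_1}\mid\encoConV{\HH_2}$, with $\equiv$ structural congruence. (That section states this for the unoptimised configuration mapping, but it holds verbatim for $\qencV$, since the optimisation only rewrites the encoding of applications occurring inside $\lambda$-terms and never changes how the processes associated with the components of a configuration are juxtaposed.) I would also use that the encoding of a configuration does not depend on its name-support (the Remark after Figure~\ref{f:OGS_pi_cbv}) and that permuting parallel components is an instance of $\equiv$, so that neither the way $\appendSYMB$ merges name-supports nor the arbitrary order chosen on environment domains produces any mismatch.

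Recall that $\uptoComp\R$ is by definition the reflexive and transitive closure of $\appendRel\R$, so I would induct on the number $n$ of $\appendRel\R$-steps witnessing $(\FF_1,\FF_2)\in\uptoComp\R$. For $n=0$ we have $\FF_1=\FF_2$, hence $\encoConV{\FF_1}=\encoConV{\FF_2}$, and this pair is in $\ctx{(\encoConV{\R})}$ since $\ctx{\cdot}$ is reflexive. For $n=1$: by definition of $\appendRel\R$ there are configurations $\GG,\FF_1',\FF_2'$ with $\FF_i=\appendCon{\FF_i'}{\GG}$ and $\FF_1'\RR\FF_2'$; as the tensor $\appendCon{\FF_i'}{\GG}$ is defined, $\FF_i'$ and $\GG$ are compatible and $\encoConV{\GG}$ is a bona fide \piI process. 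Taking the static context $\qct\defeq(\encoConV{\GG}\mid\contexthole)$, which is of the restricted form $\res{\tilc}(R\mid\contexthole)$ allowed in Definition~\ref{d:uptp_sce}, compositionality gives $\encoConV{\FF_i}\equiv\ct{\encoConV{\FF_i'}}$, while $(\encoConV{\FF_1'},\encoConV{\FF_2'})\in\encoConV{\R}$ because $\FF_1'\RR\FF_2'$; hence $(\encoConV{\FF_1},\encoConV{\FF_2})\in\ctx{(\encoConV{\R})}$. For $n>1$ I would split the witnessing chain as $\FF_1\appendRel\R\HH\uptoComp\R\FF_2$, the second segment having length $n-1$; the first step gives, just as above, a restricted static context $\qct$ and a pair $(P,Q)\in\encoConV{\R}$ with $\encoConV{\FF_1}\equiv\ct{P}$ and $\encoConV{\HH}\equiv\ct{Q}$, so that $(\encoConV{\FF_1},\encoConV{\HH})\in\ctx{(\encoConV{\R})}$, and the induction hypothesis gives $(\encoConV{\HH},\encoConV{\FF_2})\in\ctx{(\encoConV{\R})}$; one then concludes by transitivity of $\ctx{(\encoConV{\R})}$, using that the composition of two restricted static contexts is again one — after $\alpha$-renaming the inner restricted names apart from the outer process, so that $\res{\tilc_1}(R_1\mid\res{\tilc_2}(R_2\mid\contexthole))\equiv\res{\tilc_1\tilc_2}(R_1\mid R_2\mid\contexthole)$ — and that $\ctx{\cdot}$ is stable under $\equiv$.

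The statement carries little weight of its own: via the compositionality identity $\encoConV{\appendCon{\HH_1}{\HH_2}}\equiv\encoConV{\HH_1}\mid\encoConV{\HH_2}$ it reduces to bookkeeping. The only delicate point is to match faithfully the shape of the reflexive--transitive closure of $\appendRel\R$ — where each step composes with a possibly different common part $\GG$ and replaces one sub-component by an $\R$-related one — with the static-context closure $\ctx{\cdot}$ on \piI; the facts that plugging into a parallel sub-position is itself the action of a larger restricted static context and that such contexts compose are what make this routine. I note that the companion property that $\encoConV{\GG}$ \emph{cannot interact} with $\encoConV{\FF_i'}$ (Lemma~\ref{l:compat_cannot-interact}) is not needed for this lemma, but is exactly what will later allow the resulting up-to-context technique on \piI to be run with the non-interacting variant of Theorem~\ref{t:up-to-con-expa-nointer} when proving soundness of bisimulation up-to reduction and composition.
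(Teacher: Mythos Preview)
Your base case and the appeal to compositionality ($\encoConV{\appendCon{\HH_1}{\HH_2}}\equiv\encoConV{\HH_1}\mid\encoConV{\HH_2}$) are correct, and this is precisely the content the paper gestures at with its one-line justification ``a consequence of the compositionality of the mapping''. The gap is in your inductive step.

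You claim that $\ctx{(\encoConV\R)}$ is transitive because ``the composition of two restricted static contexts is again one''. That justification does not apply. From the first step you obtain $\encoConV{\FF_1}\equiv C_1[P]$ and $\encoConV{\HH}\equiv C_1[Q]$; from the induction hypothesis you obtain some $C_2$ with $\encoConV{\HH}\equiv C_2[\,\cdots]$ and $\encoConV{\FF_2}\equiv C_2[\,\cdots]$. These are two \emph{different} decompositions of the \emph{same} process $\encoConV{\HH}$, not a nested pair $C_1[C_2[\cdot]]$; your displayed identity about nesting restricted static contexts is therefore irrelevant. Concretely, take $A_1\RR A_2$ and $A_2'\RR B$ with $A_2=A_2'\otimes A_2''$: then $A_1\uptoComp\R\, B\otimes A_2''$, yet no single unary static context relates $\encoConV{A_1}$ to $\encoConV{B}\mid\encoConV{A_2''}$ via a pair in $\encoConV{\R}$.

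The paper's $\ctx{\cdot}$ is in fact defined (inside the proof of Theorem~\ref{t:uptoComp}) as closure under \emph{polyadic} contexts, not unary ones. For chains where the successive steps replace disjoint tensor factors---the typical case, and the one illustrated right after the definition of $\uptoComp\R$---a single $n$-hole context $[\cdot_1]\mid\cdots\mid[\cdot_n]$ handles all $n$ steps at once, with no transitivity needed. For fully general chains (such as the example above, where a later step splits a factor introduced by an earlier one) one needs the transitive, equivalently iterated, closure of the polyadic-context operation; this stronger closure is still sound for the up-to technique invoked in Theorem~\ref{t:up-to-con-expa-nointer}, which is all that the subsequent theorem requires. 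The paper is itself informal on this last point.
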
 

We are now ready to show  the soundness of 
bisimulation up-to reduction and composition.

\begin{theorem}
\label{t:uptoComp}
If $\R$ is bisimulation up-to
reduction and composition then ${\R} \subseteq {\wbC}$.
\end{theorem}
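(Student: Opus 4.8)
The plan is to reduce the soundness of bisimulation up-to reduction and composition to the soundness of an already-established $\pi$-calculus up-to technique, using the compositionality of the mapping $\encoConV{-}$ together with Lemma~\ref{l:uptoOGSpi} and the operational correspondence results of Theorem~\ref{t:COGS_pi}. Concretely, I would take $\R$ to be a bisimulation up-to reduction and composition, and show that the relation
\[
\S \defeq \{ (\encoConV{\FF_1}, \encoConV{\FF_2}) \mid \FF_1 \RR \FF_2 \}
\]
on \piI processes is (contained in) a bisimulation up-to context and up-to $(\contr,\wbPI)$ in the sense of Definition~\ref{d:uptocontext_contr_wb}. Since, by Lemma~\ref{l:compat_cannot-interact}, compatible configurations translate to non-interacting processes, the static contexts of the form $\res\tilc(R \mid \contexthole)$ that arise from composing with a fixed configuration $\GG$ do not interact with the processes placed in the hole; this is exactly the side condition required by that technique. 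Then Theorem~\ref{t:up-to-con-expa-nointer} gives $\S \subseteq \wbPI$, hence $\encoConV{\FF_1} \wbPI \encoConV{\FF_2}$ whenever $\FF_1 \RR \FF_2$, and finally Lemma~\ref{l:ogsp_pi_traces_bis}(2) (bisimilarity of the translations is equivalent to $\wbC$) yields $\FF_1 \wbC \FF_2$, i.e.\ $\R \subseteq \wbC$.

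The key steps, in order, are: (i) fix $(\FF_1,\FF_2) \in \R$ and a transition $\encoConV{\FF_1} \arrPI\mu \PP$; (ii) use the strong operational correspondence Lemma~\ref{l:opcorrCOGS2} to pull this back to a \COGS transition $\FF_1 \arrC\mu \FF_1'$ (or, in the silent case, $\FF_1 \longrightarrowp \FF_1'$) with $\PP \contr \encoConV{\FF_1'}$ — and here I must handle carefully that a single $\pi$ step may correspond to a \COGS step of the appropriate polarity, invoking the subject/freshness conventions of Remark~\ref{r:bn} for the bound names; (iii) apply the up-to hypothesis on $\R$ to obtain $\FF_2 \Arcapp\mu \FF_2'$ with $\FF_1' \uptoCompRed \R \FF_2'$, which by definition means there are $\FF_1'', \FF_2''$ with $\FF_i' \Longrightarrowp \FF_i''$ and $\FF_1'' \uptoComp{\R} \FF_2''$; (iv) translate these back through the mapping: $\encoConV{\FF_2} \ArrPI\mu \encoConV{\FF_2'}$ by Theorem~\ref{t:COGS_pi}, the internal reductions $\FF_i' \Longrightarrowp \FF_i''$ give $\encoConV{\FF_i'} \contr \cdots$ again by Theorem~\ref{t:COGS_pi}(1,3) (more precisely, $\encoConV{\FF_i'} \LongrightarrowPI \contr \encoConV{\FF_i''}$, which we combine appropriately with the $\contr$ from step (ii)); and (v) use Lemma~\ref{l:uptoOGSpi} to turn $\FF_1'' \uptoComp{\R} \FF_2''$ into $(\encoConV{\FF_1''}, \encoConV{\FF_2''}) \in \ctx{(\encoConV{\R})}$, i.e.\ $\encoConV{\FF_i''} = \ct{P_i}$ for a common static context $\qct$ with $P_1 \mathrel{\encoConV{\R}} P_2$, and $\qct$ not interacting with $P_1,P_2$ by Lemma~\ref{l:compat_cannot-interact}. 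Assembling, one gets $\PP \wbPI \ct{P_1}$, $\encoConV{\FF_2} \Arr\mu \wbPI \ct{P_2}$ (or the $\contr$-variant when $\mu = \tau$), $(P_1,P_2) \in \S$, with $\qct$ non-interacting — precisely the requirements of Definition~\ref{d:uptocontext_contr_wb}, splitting on whether $\mu$ is visible (clause 1, using $\wbPI$) or silent (clause 2, using $\contr$).

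The main obstacle I anticipate is bookkeeping the behavioural equivalences so that the $\contr$/$\wbPI$ dichotomy of Definition~\ref{d:uptocontext_contr_wb} is respected exactly. For silent challenges we must land with $\contr$ (expansion) on the challenger side and may only use $\contr$ on the defender side, so the reductions $\FF_1' \Longrightarrowp \FF_1''$ must be shown to induce an \emph{expansion} step on the translation, which is exactly what Theorem~\ref{t:COGS_pi}(1) and Lemma~\ref{l:opcorrCOGS}(1) provide (they yield $\contr$, not merely $\wbPI$) — but I need to check that $\contr$ composes with itself and with context closure in the direction needed, and that $\uptoCompRed$ does not introduce spurious $\wbPI$-only steps on the challenger side in the $\tau$ case. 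For visible challenges this is harmless since $\wbPI$ suffices there. A secondary, more routine point is verifying well-formedness/support-equivalence is preserved so that the tensor products implicit in $\uptoComp{\R}$ are defined (the configurations being composed are genuinely compatible), but this is already built into the definitions and into Lemma~\ref{l:uptoOGSpi}. Everything else is a direct transcription of the $\pi$-side argument, as the paper indicates by reducing OGS up-to techniques to $\pi$ ones.
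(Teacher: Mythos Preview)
Your proposal is correct and follows essentially the same route as the paper: both define $\S = \encoConV{\R}$, show it is a bisimulation up-to context and up-to $(\contr,\wbPI)$ via the operational correspondence lemmas (Lemma~\ref{l:opcorrCOGS2}, Theorem~\ref{t:COGS_pi}) and Lemma~\ref{l:uptoOGSpi}, invoke Theorem~\ref{t:up-to-con-expa-nointer}, and conclude with Lemma~\ref{l:ogsp_pi_traces_bis}(2). Your anticipated bookkeeping concern for the $\tau$-case is real but resolves cleanly: since $\encoVa{M}{p}$ has exactly one immediate transition (Lemma~\ref{l:opt_op}), $\FF' \Longrightarrowp \FF''$ implies $\encoConV{\FF'} \contr \encoConV{\FF''}$ directly (by precongruence of $\contr$), so transitivity of $\contr$ suffices --- this is what the paper means by ``simpler, using expansion in place of bisimilarity''.
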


\begin{proof}
We prove the result via \piI, exploiting Lemma~\ref{l:ogsp_pi_traces_bis}(2).
Here again, it is convenient to use
the optimised encoding $\qencV$.

If $\R$ is a relation on \COGS configurations, then 
$\encoConV \R$ is the relation on \piI processes obtained by mapping each pair
in $\R$ in the expected manner: 
\[ \encoConV \R \defi \{ (\encoConV {\FF_1}, \encoConV {\FF_2}) \st \FF_1 \RR \FF_2 
 \} \] 
And, if $\S$ is a relation on \piI processes, then $\ctx \S$ is the closure of $\S$ under
polyadic contexts, i.e., the set of pairs  of the form  
$(\ct {P_1,\ldots, P_n},\ct {Q_1,\ldots, Q_n})$ where $\qct$ is a polyadic context and for
each $i$, $P_i \SS Q_i$. 

As a consequence of the compositionality of the mapping from
\COGS to \piI,
for  a relation   $\R$ on \COGS configurations, 
if ${(\FF_1, \FF_2)} \in  { \uptoComp \R}$, then 
$( \encoConV {\FF_1},  \encoConV {\FF_2}) \in  \ctx{( \encoConV{ \R })}$.

Now, let $\R$ be the relation in the hypothesis of the theorem.
We  prove the theorem by  showing that 
$\encoConV \R$ is a bisimulation
 up-to context and up-to $(\contr,
 \wbPI)$ in 
\piI, and then appealing to Theorem~\ref{t:up-to-con-expa-nointer}.  

Suppose 
$\encoConV {\FF_1}  \arrPI \ell P$.  Then
there is $\FF_1'$ with  
$\FF_1  \arrPI \ell \FF_1' $ and $P  \wbPI \encoConV {\FF'_1}$ (this is derived from
Lemma~\ref{l:opcorrCOGS2}).  
Since $\R$ is bisimulation up-to
 reduction and  composition and $\FF_1 \RR \FF_2$, there are 
  $\FF_1''$, and $\FF_2'$, with  
$\FF'_1 \Longrightarrowp \FF''_1$, 
$\FF_2
  \arrPI \ell \Longrightarrowp \FF_2' $ and
$\FF''_1 \uptoComp \R \FF'_2$.

Using  Theorem~\ref{t:COGS_pi}(1) 
from 
$\FF'_1 \Longrightarrowp \FF''_1$ and the inclusion ${\contr} \subseteq
{\wbPI}$, 
we derive
$\encoConV {\FF'_1}  \wbPI  \encoConV {\FF''_1}$; 
and  by   
Theorem~\ref{t:COGS_pi}(1),
and  Lemmas \ref{l:opt_sound} and \ref{l:opcorrCOGS},  from 
$\FF_2
  \arrPI \ell \Longrightarrowp \FF_2' $ 
we derive
$\encoConV {\FF_2}  \arrPI \ell \wbPI  \encoConV {\FF'_2}$.

Finally,  
\iflong
by Lemma~\ref{l:uptoOGSpi},  
\fi
from $(\FF''_1, \FF'_2) \in  \uptoComp \R$, we derive 
 $( \encoConV {\FF'_1},  \encoConV {\FF'_2}) \in  \ctx{( \encoConV{ \R })}$. 
This closes the proof, up-to polyadic contexts and bisimilarity.

The case when 
$\encoConV {\FF_1}$ makes a silent move is simpler,  
using expansion in place of bisimilarity.
\end{proof}

We now present an additional up-to technique, 
which will be used when relating 
OGS bisimilarity with eager normal form bisimilarity in Section~\ref{s:enf}.
When $\R$ is a relation on singleton configurations, 
a straightforward simplification 
of `bisimulation up-to reduction and composition'
allows us to play the bisimulation game only on visible actions;
the bisimulation clause becomes: 
\begin{itemize}
\item
 if $\FF_1 \Longrightarrowp\arrC\ell \FF_1'  $ then there is $\FF_2' $ such that 
$\FF_2\Longrightarrowp\arrC\ell \FF_2'  $ and 
$\FF'_1 \uptoComp \R \FF'_2$.
\end{itemize} 
We call this a \emph{singleton bisimulation up-to composition}; 
its soundness is a corollary of that for bisimulation   
up-to reduction and  composition, 
derived from the up-to techniques for \piI and the
mapping from \COGS  onto \piI.

\begin{proposition}
\label{p:sing_bis}
If $\R$ is a singleton bisimulation up-to composition, then ${\R} \subseteq {\wbC}$.
\end{proposition}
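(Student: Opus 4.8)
The plan is to deduce the result from the soundness of bisimulation up-to reduction and composition, Theorem~\ref{t:uptoComp}. The only thing a singleton bisimulation up-to composition leaves unconstrained is the treatment of silent moves, so the heart of the argument is the observation that in \COGS the silent transitions out of a singleton configuration are precisely the $(P\tau)$ steps acting on its unique running term, and that these form a \emph{deterministic} reduction system (they mimic call-by-value reduction $\redv$ of that term). Silent moves out of a singleton are thus ``administrative'' and can be absorbed into the reduction-closure component of the technique.

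First I would close $\R$ under silent reduction on both sides: let $\S$ be the set of pairs $(\FF_1',\FF_2')$ for which there are $\FF_1,\FF_2$ with $\FF_1 \RR \FF_2$, $\FF_1 \Longrightarrowp \FF_1'$ and $\FF_2 \Longrightarrowp \FF_2'$. Since a $(P\tau)$ step alters neither the environment, nor the name-support, nor the domain of the running term, the relation $\S$ is well-formed, still relates singleton configurations, contains $\R$, and is by construction closed under silent transitions on either side. The next step is to verify that $\S$ is again a singleton bisimulation up-to composition. Take $\FF_1' \SS \FF_2'$, witnessed by some $\FF_1 \RR \FF_2$; a challenge $\FF_1' \Longrightarrowp \arrC\ell \FF_1''$ extends to $\FF_1 \Longrightarrowp \arrC\ell \FF_1''$, so the clause for $\R$ yields $\FF_2 \Longrightarrowp \arrC\ell \FF_2''$ with $\FF_1'' \uptoComp\R \FF_2''$, hence $\FF_1'' \uptoComp\S \FF_2''$ since $\R \subseteq \S$. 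Now determinacy of $\redv$ guarantees that the silent reduct $\FF_2'$ of $\FF_2$ lies on the unique reduction path of $\FF_2$'s term and therefore can still perform $\Longrightarrowp\arrC\ell$ to reach $\FF_2''$; this supplies the required matching move from $\FF_2'$. (One also uses here that the symmetric clause for $\R$ forces a singleton whose term diverges to be related only to singletons whose term diverges, so the degenerate situation in which $\FF_1$'s term reaches an eager normal form while $\FF_2$'s does not cannot arise.) The converse challenge is symmetric.

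It then remains to check that $\S$ is a bisimulation up-to reduction and composition in the sense of Definition~\ref{d:uptoComp}. On a visible challenge $\FF_1 \arrC\ell \FF_1'$ this is immediate from the singleton clause for $\S$ read with no preliminary silent steps, using $\uptoComp\S \subseteq \uptoCompRed\S$. On a silent challenge $\FF_1 \arrC\tau \FF_1'$ we have $\FF_1' \SS \FF_2$ directly, by closure of $\S$ under silent reduction, so $\FF_2$ can match by making no move, and $\S \subseteq \uptoCompRed\S$ closes the case; the converse is symmetric. Theorem~\ref{t:uptoComp} then gives $\S \subseteq \wbC$, and since $\R \subseteq \S$ we conclude $\R \subseteq \wbC$.

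I expect the delicate step to be the one where $\S$ is shown to remain a singleton bisimulation up-to composition: closing $\R$ by silent reducts must not break the \emph{weak} matching of visible moves, and this is exactly the point at which determinacy of call-by-value reduction on the unique running term of a singleton is needed. An alternative route, closer in spirit to the rest of Section~\ref{ss:upto_games}, is to argue through \piI directly: show that $\encoConV\R$, closed under $\wbPI$ on both sides, is a bisimulation up-to context and up-to $(\contr,\wbPI)$ --- dispatching silent challenges via the fact that $\encoConV\FF \wbPI \encoConV{\FF'}$ whenever $\FF \arrC\tau \FF'$ with $\FF$ a singleton (again a consequence of determinacy of $\redv$) --- and then conclude with Theorem~\ref{t:up-to-con-expa-nointer} and Lemma~\ref{l:ogsp_pi_traces_bis}(2).
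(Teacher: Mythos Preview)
Your proposal is correct and takes essentially the same approach as the paper: close $\R$ under silent reductions to obtain $\S$, exploit the determinacy of transitions on singleton configurations, and invoke Theorem~\ref{t:uptoComp}. The paper is slightly more direct---it checks that $\S$ is a bisimulation up-to composition (Definition~\ref{d:uptoComp} without the reduction closure) rather than first verifying that $\S$ remains a \emph{singleton} bisimulation up-to composition---and your parenthetical about divergence is harmless but unnecessary, since a singleton that can fire a visible action has no $\tau$-transitions, so the reduct $\FF_2'$ can never lie strictly past the point where $\ell$ is performed.
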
 

\begin{proof}
 Take 
$$\S \defi \R \cup \{ (\FF_1,\FF_2) \st 
\mbox{ there are } \FF'_1, \FF'_2  \mbox{ with } \FF'_1 \RR \FF'_2 
\mbox{ and   }\FF'_i \Longrightarrowp \FF_i  \mbox{ $i=1,2$}  \}
$$
Intuitively,  $\S$  is the closure of $\R$ under deterministic reductions. 
Then $\S$ is a bisimulation up-to composition.
Note that we exploit the fact that transitions of singleton configurations are deterministic.
\end{proof}

\DSOCTe

\begin{remark}
\label{r:CONFnoSINGL} 
Results such as Corollary~\ref{c:fa_all} and 
Theorem~\ref{t:uptoComp} might suggest that the equivalence between
two configurations implies the equivalence of all their singleton 
components. That is, if $\FF \TEp \GG$, with 
$\pmap{p}{M}$ part of $\FF$ and $\pmap{p}{N}$ part of $\GG$, 
then also $\conf {\pmap{p}{M}} \TEp \conf {\pmap{p}{N} }$. 
A counterexample is given by the configurations 
$$
\begin{array}{rcl|l}
\FF_1 & \defi & \conf{ \pmap{p_1}{M}\cdot \pmap{p_2}{\Omega} } 
& \text{ where } M \defi (\lambda z. \Omega)(x \lambda y . \Omega) \\
\FF_2 & \defi & \conf{ \pmap{p_1}{\Omega}\cdot \pmap{p_2}{M} }
& 
\end{array}
 $$ 
Intuitively the reason why $\FF_1 \TEp \FF_2$  is that
the term $M$ can produce an output (along the variable $x$), 
 but an observer  will
never obtain access to  the  name at which $M$ is located
($p_1$ or $p_2$).
That is, the term $M$ can interrogate $x$,
but it will never answer,  neither at $p_1$ nor at $p_2$.
However the result does hold when $\FF,\GG$ are reachable from
$\lambda$-terms, that is, there are $M,N$ and a trace $s $ such
that $\conf M \ArrC s  \FF$ and $\conf N \ArrC s \GG$. 
\end{remark} 

\begin{remark}
  \label{rm:up-to-comp-gs}
  Up-to composition can be understood as a decomposition principle for configurations 
  whose components do not interact with each other, even though they may share names 
  that are only used for communication with the environment. 
  This decomposition reduces the OGS bisimulation game to the analysis of singleton configurations, 
  effectively allowing one to reason independently on each branch of a (possibly infinite) 
  Böhm-like trees representation of normal forms, as it will be
  made apparent in Section~\ref{s:enf} with Lassen trees. 
  Through the well-known correspondence between such branches and P-views in game semantics~\cite{curien1998abstract}, 
  this yields a view-based reasoning principle: up-to composition enables one to analyse strategies 
  by considering each P-view independently, in a forward manner. 
  In this sense, it captures and operationalises a characteristic feature of innocent strategies, 
  namely that behaviours can be decomposed along views without mutual interference between them.
\end{remark}

\subsection{Relationship between Concurrent and Alternating OGS}
\label{subsec:relate-conc-alt-lts}

In Section~\ref{s:encoGames} we have proved that the
trace-based and bisimulation-based semantics produced by
\AOGS and by \piI under the \opLTS coincide. 
In Section~\ref{ss:compa_cogs_piI} we have obtained the
same result for \COGS and \piI under the ordinary LTS.
In this section, we develop these results to conclude that
all such equivalences for $\lambda$-terms actually coincide.  In other words, the equivalence
induced on $\lambda$-terms by their representations in OGS and \piI is
the same, regardless of whether we adopt the alternating or concurrent
flavour for OGS, the \opLTS or the ordinary LTS in \piI, a trace or a
bisimulation semantics.

For this, in one direction we show that the trace semantics induced by
\AOGS coincides with the subset of \emph{alternating} traces of \COGS.
 
\begin{lemma}
\label{l:traces_AOGS_COGS}
For any \AOGS configuration $\FF$, 
$\FF \ArrA{\tr}$ if and only if $\FF \ArrC{\tr}$ with $\tr$ an alternating
trace starting with an output when $\FF$ is active.
\end{lemma}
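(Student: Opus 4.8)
The plan is to reason directly on the two transition systems of Figures~\ref{fig:seq-ogs-lts} and~\ref{fig:conc-ogs-lts}, exploiting the coercion of Section~\ref{ss:cs_cogs-lts} that identifies an \AOGS configuration with the \COGS configuration whose running term is empty (for passive \AOGS configurations) or a singleton (for active ones). Under this coercion, every \AOGS transition is literally an instance of the homonymous \COGS rule with the extra running term $A$ empty or a singleton, and conversely a \COGS transition both of whose endpoints have running term of size at most one is, via the coercion, an \AOGS transition. So both directions of the biconditional amount to controlling the size of the running term along a run.

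For the left-to-right direction I would use that the \AOGS LTS is bipartite: an active configuration admits only the silent rule $(P\tau)$ and the Player rules $(PA)$, $(PQ)$ — both of which emit an output and lead to a passive configuration — while a passive (or initial) configuration admits only the Opponent rules $(OA)$, $(OQ)$ (or $(IOQ)$), which emit an input (or the special action) and lead to an active configuration. Hence a run $\FF \ArrA{\tr}$ is, via the coercion, a \COGS run witnessing $\FF \ArrC{\tr}$, and the emitted trace $\tr$ is alternating, beginning with an output exactly when $\FF$ is active. This is the routine direction.

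For the converse I would fix a \COGS computation $\FF = \GG_0, \GG_1, \dots, \GG_n$ (with both silent and visible steps) witnessing $\FF \ArrC{\tr}$, and prove by induction on $n$ the invariant that every $\GG_i$ has a running term of size at most one — empty or a singleton according to whether an even or odd number of visible actions has been emitted so far, counting from the status of $\FF$ — and that the $i$-th step, read through the coercion, is an \AOGS step. The key point is that when $\GG_i$ is active its running term is the single thread $\pmap{p}{M}$ spawned by the last Opponent action $\tr_j$ (or, at the start, is $\FF$ itself, with $j=0$); since $\tr$ is alternating and, when $\FF$ is active, begins with an output, the next visible action $\tr_{j+1}$ is necessarily a Player action. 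Among the \COGS rules applicable to an active configuration — $(P\tau)$, $(PA)$, $(PQ)$, and possibly $(OA)$, $(OQ)$ — the last two contribute an Opponent action, which would make $\tr_{j+1}$ an Opponent action and break alternation; so the computation can only take $(P\tau)$-steps followed by a $(PA)$ or $(PQ)$ step, which empties the running term. Dually, a passive $\GG_i$ has no Player move available at all, since every Player rule requires a non-empty running term, so it can only fire an Opponent rule, which re-creates a singleton running term. Hence the computation never leaves the configurations whose running term has size at most one, where \COGS and \AOGS agree, and therefore $\FF \ArrA{\tr}$.

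I expect the only delicate point to be this invariant in the converse direction: one must check, going through the six rules of Figure~\ref{fig:conc-ogs-lts}, that alternation of the trace is exactly what prevents an active \COGS configuration from firing $(OA)$ or $(OQ)$ and spawning a second thread, and that passive configurations are genuinely Player-silent. This is bookkeeping over the rules rather than a conceptual difficulty, but it carries the whole content of the lemma; the preservation of configuration validity along the run is immediate and needs no separate argument.
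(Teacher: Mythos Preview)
Your proposal is correct and follows essentially the same approach as the paper's own proof: both directions rely on the observation that the \AOGS rules are precisely the \COGS rules restricted to configurations with running term of size at most one, and that alternation of the trace is exactly what prevents an active \COGS configuration from firing $(OA)$ or $(OQ)$ and spawning a second thread. The only cosmetic difference is that the paper proceeds by induction on the visible trace $\tr$ (with a case analysis on whether the current configuration is active or passive), whereas you induct on the full computation including silent steps and phrase the argument as an explicit size invariant on the running term; your treatment of the $\tau$-steps is in fact slightly more careful than the paper's.
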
 

\begin{proof}
Taking a trace $\tr$ such that $\FF \ArrA{\tr}$,
we directly get that $\tr$ is alternating.
Moreover, $\tr$ starts with an output when $\FF$ is active.
Since the rules of \AOGS LTS are a subset of the ones of \COGS,
we also get that $\FF \ArrC{\tr}$.

In the opposite direction, 
we take $\tr$ an alternating trace such that 
$\FF \ArrC{\tr}$ and $\tr$ starts 
with an output when $\FF$ is active.
We prove $\FF \ArrA{\tr}$ by induction over $\tr$.
If the trace is empty, this is direct.
Otherwise, we write $\tr$ as $\act \tr'$ such that
$\FF \arrC{\act} \GG \ArrC{\tr'}$.
Then we reason by case analysis over $\FF$:
\begin{itemize}
  \item either $\FF$ is active so that $\act$ must be an output by the statement
  of the lemma.
  So we get that $\FF \arrA{\act} \GG$ and $\GG$ is passive.
  \item or $\FF$ is passive, so that $\act$ has to be an input.
  We get that  $\FF \arrA{\act} \GG$ with $\GG$ active.
  Since $\tr$ is alternating, we then get that $\tr'$ is
  starting with an output. 
\end{itemize}
Since $\tr'$ is alternating,
in both cases we conclude using the induction hypothesis.
\end{proof}
 
In the opposite direction, the
difficulty is that the \COGS LTS can perform more actions than the \AOGS:
it can perform input actions on active configurations.
So we state the correspondence only on singleton configurations.
But neither the \COGS nor the \AOGS output transitions preserve the fact of
being singleton.
The solution is to use bisimulation up-to composition over the \COGS LTS
in order to keep reasoning on singleton configurations.
We then have to only extract singleton configurations
from \AOGS configurations that can effectively be triggered directly by an action.

\begin{definition}
A singleton \COGS configuration $\FF$ is 
a \emph{triggerable} subconfiguration of an \AOGS configuration $\FF'$ when:
\begin{itemize}
  \item either $\FF'$ is active of the shape
  $\conf{M,p,\gamma,\phi}$,
  and $\FF = \conf{\pmap{p}{M},\emptymap,\phi \backslash{\dom{\gamma}}}$
  \item or $\FF'$ is passive of the shape
  $\conf{\gamma,\phi}$,
  and $\FF$ is:
  \begin{itemize}
    \item either of the shape
    $\conf{\pmap{x}{V},\phi \backslash{\dom{\gamma}} \cup\{x\}}$
    with $\pmap{x}{V} \subseteq \gamma$;
    \item or of the shape $\conf{\pmap{p}{(E,q)},\phi \backslash{\dom{\gamma}}} \cup\{p\}$
    with $\pmap{p}{(E,q)} \subseteq \gamma$.
  \end{itemize}
\end{itemize}
\end{definition}
Notice that a triggerable subconfiguration $\FF$ of $\FF'$ is not formally a subconfiguration of $\FF'$,
as one needs to remove the Player names of $\FF'$ not in the domain of $\FF$, and in the case of $\FF'$ active,
we transform the singleton running term of $\FF$ into a term and a continutation name.

Configurations $\conf{\pmap{x}{V},\phi}$ and
$\conf{\pmap{p}{(E,q)},\phi}$ are not triggerable 
when $\FF$ is active,
since they cannot be directly triggered by the \AOGS LTS
(while they can be in the \COGS LTS).

\reviewComment{"triggerable subconfiguration" $\rightarrow$ terminology is confusing since
a triggerable subconfiguration is not a subconfiguration?}
\reviewCommentDS{I would add here this sentence:   
 $\FF$ being  
a triggerable subconfiguration of $\FF'$  means that $\FF$ is obtained from $\FF'$ as in
the definition; it 
does not imply that  $\FF$ is a syntactic subterm
of $\FF'$.}
\reviewCommentGJ{I've added an explanation above.}

\begin{lemma}
\label{l:OGS_COGS}
If $\FF, \GG$ are \AOGS singleton configurations and 
$\FF \wbA \GG$, then also $\FF \wbC \GG$. 
\end{lemma}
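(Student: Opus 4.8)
The statement to prove is Lemma~\ref{l:OGS_COGS}: if $\FF,\GG$ are \AOGS singleton configurations with $\FF \wbA \GG$, then $\FF \wbC \GG$. The plan is to exhibit a relation on \COGS configurations, built from the hypothesis $\wbA$ on \AOGS singletons, and show it is a \emph{singleton bisimulation up-to composition}; soundness then follows from Proposition~\ref{p:sing_bis}. Concretely, I would take
\[
\R \defi \{ (\FF,\GG) \st \FF,\GG \text{ are \AOGS singleton configurations with } \FF \wbA \GG \}.
\]
Note $\R$ is well-formed: $\FF \wbA \GG$ (together with the reachability/validity conventions, using that weak bisimilarity forces matching visible actions and hence matching polarity and continuation structure on the reachable parts) gives $\okC \FF \GG$, so Proposition~\ref{p:sing_bis} applies once the up-to-composition clause is checked.

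\textbf{Main verification.} Suppose $\FF \RR \GG$ and $\FF \Longrightarrowp \arrC\ell \FF'$. I would first observe that, on a singleton \COGS configuration, the only transitions available are exactly those of the \AOGS LTS: a singleton has a single P-name, so rule $(P\tau)$ matches an \AOGS $\tau$-step, rules $(PA)$, $(PQ)$ match \AOGS Player moves, and rules $(OA)$, $(OQ)$ applied to a passive singleton match the corresponding \AOGS Opponent moves (the extra freedom of \COGS — running input actions while another term is active — requires a non-singleton running term and hence cannot occur here). Thus $\FF \Longrightarrowp \arrC \ell \FF'$ in \COGS entails $\FF \LongrightarrowA \arrA\ell \FF'$ in \AOGS, i.e.\ $\FF \ArrA\ell \FF'$. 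By $\FF \wbA \GG$ there is $\GG'$ with $\GG \ArrA\ell \GG'$ and $\FF' \wbA \GG'$; re-reading the \AOGS run as a \COGS run gives $\GG \Longrightarrowp \arrC\ell \GG'$. Now the subtlety flagged in the text: after an output action ($PA$ or $PQ$) the resulting configuration $\FF'$ is \emph{passive but no longer singleton} (for $PQ$ it has two new P-names $y,q$; for $PA$ it has one new P-name $x$ and empty running term — which is fine — but after a subsequent $OQ$/$OA$ we can again leave the singleton world). This is exactly why one works up to composition: I would decompose $\FF'$ as a tensor product $\FF' = \bigotimes_i \HH_i$ of singleton configurations (one per P-name of $\FF'$), and likewise $\GG' = \bigotimes_i \KK_i$, with the factors indexed compatibly by the shared continuation structure / polarity function, so that $\HH_i$ and $\KK_i$ carry the same P-name. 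It then remains to check $\HH_i \wbA \KK_i$ for each $i$, i.e.\ $\HH_i \RR \KK_i$, which gives $\FF' \uptoComp \R \GG'$ as required; the $\tau$-case ($\ell$ absent) is the same argument with $\LongrightarrowA$ in place of $\ArrA\ell$, and the symmetric clause (challenge from $\GG$) is identical.

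\textbf{The hard part.} The crux is the decomposition step: showing that $\FF \wbA \GG$ on the \emph{composite} configurations reached after an output implies $\wbA$ on each matching singleton \emph{component}. This is precisely the phenomenon that Remark~\ref{r:CONFnoSINGL} warns is false in general, but true for configurations reachable from $\lambda$-terms — and here, by construction, our configurations are obtained by \AOGS transitions from \AOGS singletons, so they \emph{are} reachable. I would prove the needed decomposition lemma separately: if $\FF = \bigotimes_i \HH_i$ and $\GG = \bigotimes_i \KK_i$ are reachable and $\FF \wbA \GG$, then $\HH_i \wbA \KK_i$ for each $i$. One way is to go through \piI: by Corollary~\ref{c:fa_all}/Lemma~\ref{l:ogsp_pi_traces_bis}, $\FF \wbA \GG$ transfers to $\encoConV \FF \wbPI \encoConV \GG$, and $\encoConV \FF \equiv \prod_i \encoConV{\HH_i}$ with the factors pairwise non-interacting (Lemma~\ref{l:compat_cannot-interact}); a cancellation property of $\wbPI$ over non-interacting parallel components — provable using Lemma~\ref{l:dec_pi} to isolate the traces/transitions of a single factor, plus determinism of the $\lambda$-term components — then yields $\encoConV{\HH_i} \wbPI \encoConV{\KK_i}$, hence $\HH_i \wbA \KK_i$. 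The rest of the argument is the routine bookkeeping of matching up the indices of the tensor factors via the shared continuation structure, and checking the freshness conventions of Remark~\ref{r:bn} so that the components genuinely cannot interact.
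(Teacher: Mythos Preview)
Your high-level strategy (singleton bisimulation up-to composition) matches the paper's, but your choice of relation $\R$ differs in a way that creates a real gap.

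You take $\R = \{(\FF,\GG) : \FF,\GG \text{ singleton}, \FF \wbA \GG\}$ and then need, after a Player move such as $PQ$ producing a composite $\FF' = \conf{\pmap{y}{V}\cdot\pmap{q}{(E,p)},\ldots}$, to show that each singleton factor is itself $\wbA$-related to the corresponding factor of $\GG'$. This is exactly the decomposition that Remark~\ref{r:CONFnoSINGL} warns about. Your justification for it has two problems. First, the citations are off: Corollary~\ref{c:fa_all} and Lemma~\ref{l:ogsp_pi_traces_bis} relate $\wbC$ (not $\wbA$) to $\wbPI$; the link from $\wbA$ to $\pi$ is Corollary~\ref{c:fa_traces_bis}, which lands in $\wbOPI$ on the \opLTS, not in $\wbPI$. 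Second, and more importantly, the ``cancellation property of $\wbPI$ over non-interacting parallel components'' you invoke is not available at this point in the paper and is not a standard fact; Lemma~\ref{l:dec_pi} decomposes traces, not bisimilarity, and turning that into a cancellation for $\wbPI$ (or $\wbOPI$) would itself require a nontrivial argument. Finally, ``reachable from an arbitrary \AOGS singleton'' is not the same hypothesis as ``reachable from a $\lambda$-term,'' which is what Remark~\ref{r:CONFnoSINGL} actually grants.

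The paper sidesteps this entirely by choosing a larger $\R$: rather than requiring the singletons themselves to be $\wbA$-related, it requires only that they be \emph{triggerable subconfigurations} of some (generally non-singleton) \AOGS configurations $\FF',\GG'$ with $\FF' \wbA \GG'$. After a transition, the ambient configurations $\FF',\GG'$ also transition in \AOGS and remain $\wbA$-related by the bisimulation clause for $\wbA$; the new singleton factors (e.g.\ $\conf{\pmap{y}{V},\ldots}$ and $\conf{\pmap{q}{(E,p)},\ldots}$) are then triggerable from the \emph{new} ambient configurations $\FF'',\GG''$, so they land in $\R$ directly. No decomposition lemma is ever needed: the $\wbA$-relation is maintained on the large configurations, and membership in $\R$ for the pieces is just a bookkeeping check on triggerability. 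This is the key idea your proposal is missing.
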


\begin{proof}
  Let $\R$ be the relation on \COGS singleton configurations with $\FF \RR \GG$ if
   $\okC \FF\GG$ and there exists \AOGS configurations $\FF', \GG'$
    with $\FF' \wbA \GG'$ such that 
    $\FF$ and $\GG$ are triggerable singleton configurations
    respectively of $\FF'$ and $\GG'$.

  First notice that if $\FF, \GG$ are \AOGS singleton configurations and 
  $\FF \wbA \GG$, we directly get that 
  $\FF \RR \GG$ since $\FF$ and $\GG$ are 
  triggerable subconfigurations of themselves,
  once the implicit coercion of \AOGS configurations is applied
  to \COGS configurations.
  
  We now show that $\R$ is a bisimulation up-to composition.
  We take $\FF \RR \GG$, such that there exists $\FF', \GG'$
  as stated by the definition of $\R$.
  Suppose 
  \begin{equation}
  \label{e:KoaH}
   \FF \arrC{\act} \FF_1 
  \end{equation} 
  in \COGS.  
  First, we suppose that $\FF,\GG$ are both passive, 
  so that the rule can be OA or OQ.  
  We assume it is OQ; the case for OA is similar. 
  We thus have, for some $p,V,x,y$ 
  \[\begin{array}{rcl}
   \FF &=&  \conf{ \gamma_1, \phi} \mbox{ with }  \gamma_1 = \pmap{x}{V} \\
   \act & = & \questO{x}{y,p} \\
   \FF_1 & = & \conf{\pmap{p}{V y}, \gamma_1,\phi \uplus \{y,p\}}
  \end{array}\]
  Moreover, we have $\GG = \conf{ \delta_1, \phi}$ 
  with $\delta_1 = \pmap{x}{W}$.
  Since $\FF,\GG$ are triggerable respectively from $\FF',\GG'$,
  we get that both $\FF',\GG'$ are passive.
  From $\FF' \wbA \GG'$, we get that $\okC{\FF'}{\GG'}$,
  so there exists $\gamma, \delta$ sharing the same domain such that:  
  \[\begin{array}{rcl}
  \FF' & = & \conf{ \gamma_1 \cdot \gamma,\phi \uplus \dom{\gamma}} \\ 
  \GG' & = & \conf{ \delta_1 \cdot \delta,\phi \uplus \dom{\delta}} 
  \end{array}\]
  We also have 
  $ \FF' \arrA{\act} \conf{V y,p, \gamma_1 \cdot \gamma, \phi\uplus \dom{\gamma} \uplus \{y,p\}} 
  \defi \FF''$. 
  Since $\FF' \wbA \GG'$, 
  \[ \GG'  \arrA{\act} \conf{W y,q, \delta_1 \cdot \delta, \phi \uplus \dom{\delta} \uplus \{y,p\}}  
  \defi \GG''\]
  and $\FF'' \wbA \GG''$.
  Let $\GG_1 =   \conf{\pmap{q}{W y}, \delta_1,\phi}$.
  We also have 
  \[ \GG \arrC  \act \GG_1 \] 
  This is sufficient to match \reff{e:KoaH}, as 
  $\FF_1 \uptoComp \R  \GG_1$. Indeed we have:
  \begin{itemize}  
  \item $\conf{\gamma_1,\phi\uplus \{y,p\}} \R  \conf{\delta_1,\phi\uplus \{y,p\}}$
  since $\FF \RR \GG$;
  \gj{There is a small glitch: we have extra names $y,p$ here
  that are not in $\FF,\GG$.}
  \item  $\conf{\pmap{p}{V y}, \phi\uplus \{y,p\}} \R  \conf{\pmap{q}{W y}, \phi\uplus \{y,p\}}$
  since they are triggerable subconfigurations respectively of $\FF''$ and $\GG''$.
  \end{itemize}
  
  Now suppose that $\FF,\GG$ are active,
  so there are 3 possibilities of transitions, corresponding to rules 
  P$\tau$, PA, PQ. The case of 
  P$\tau$ is straightforward since bisimilarity is preserved by internal
  actions.  We only look at PQ, as PA is simpler. 
  Thus we have 
  $\FF = \conf{\pmap{p}{E[x V]},\emptymap,\phi}$, for some $E,x,V$ and: 
  \begin{equation}
  \label{e:KpqH}
   \FF  \arrC{\questPV{x}{y}{q}} 
        \conf{\pmap{y}{V} \cdot \pmap{q}{(E,p)}, \phi \uplus \{y,q\}} 
        \defi \FF_1
  \end{equation} 
  Since $\GG$ shares the same continuation structure as $\FF$,
  we have $\GG = \conf{\pmap{p}{M},\emptymap,\phi}$ for some term $M$.
  Since $\FF,\GG$ are triggerable respectively from $\FF',\GG'$,
  there exists $\gamma,\delta$ sharing the same domain such that:  
  \[\begin{array}{rcl}
    \FF' & = & \conf{E[x V], p, \gamma, \phi\uplus\dom{\gamma}} \\ 
    \GG' & = & \conf{M, p, \delta, \phi\uplus\dom{\delta}} 
  \end{array}\]
  We also have 
  \[
  \FF' \arrA{\questPV{x}{y}{q}} 
   \conf{ \pmap{y}{V} \cdot \pmap{q}{(E,p)} \cdot \gamma, \phi  \uplus \{y,q\}}
   \defi \FF''
  \]
  As $\FF' \wbA \GG'$, there exist $\Ep,W$ such that 
  \[
  \GG'  \Longrightarrow 
   \conf{\Ep[x W],p,\delta,\phi}
  \arrA{\questPV{x}{y}{q}} 
  \conf{\pmap{y}{W} \cdot \pmap{q}{(\Ep,p)} \cdot \delta, \phi \uplus \{y,q\}} 
  \defi \GG''
  \]
  with
  $\FF'' \wbA \GG''$. 
  Defining $\GG_1$ as
  $\conf{\pmap{y}{W} \cdot \pmap{q}{(\Ep,p)}, \phi \uplus \{y,q\}}$,
  we also have $\GG \ArrC{\questPV{x}{y}{q}} \GG_1$. 
  We deduce that both 
  $\conf{ \pmap{y}{V},\phi \uplus \{y\}} 
   \RR
   \conf{ \pmap{y}{W},\phi \uplus\{y\}}$ 
  and 
  $\conf{ \pmap{q}{(E,p)},\phi \uplus \{q\}}
   \RR
   \conf{ \pmap{q}{(\Ep,p)},\phi \uplus \{q\}}$ hold
   since all these singleton configurations
   are triggerable from either $\FF''$ or $\GG''$,
   and that $\FF'' \wbA \GG''$.
  Hence $\FF_1 \uptoComp \R  \GG_1$.
  
  In summary, as an answer to the challenge 
  \reff{e:KpqH}, we have found  $\GG_1$ such that 
   $\GG \ArrC{\questPV{x}{y}{q}}
   \GG_1$ and  
  $\FF_1 \uptoComp \R  \GG_1$; this closes the proof.
\end{proof} 

From Corollaries~\ref{c:OGS_opLTSpiI}, \ref{c:fa_all}, and 
Lemmas~\ref{l:traces_AOGS_COGS}, \ref{l:OGS_COGS} we can thus conclude. 
\begin{corollary}
\label{c:OGS_COGS}
For any $\lambda$-terms $M,N$, the following statements are the same: 
$\conf{M} \TEA \conf {N}$;
$\conf{M} \wbA \conf {N}$;
$\conf{M} \TEp \conf {N}$;
$\conf{M} \wbC \conf {N}$;
$\encoEnvI {{M} } \TEN\encoEnvI {{N}}$; 
$\encoEnvI {{M}} \wbOPI \encoEnvI {{N}}$; 
$\encoEnvI {{M} } \TEPI\encoEnvI {{N}}$; 
$\encoEnvI {{M}} \wbPI \encoEnvI {{N}}$. 
\end{corollary}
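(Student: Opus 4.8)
The plan is to assemble the statement from the four results quoted just before it, which between them already collapse the two halves of the list. First I would record that Corollary~\ref{c:OGS_opLTSpiI} makes the four relations $\conf{M} \TEA \conf{N}$, $\conf{M} \wbA \conf{N}$, $\encoEnvI{M} \TEN \encoEnvI{N}$, $\encoEnvI{M} \wbOPI \encoEnvI{N}$ equivalent, and Corollary~\ref{c:fa_all} makes the remaining four relations $\conf{M} \TEp \conf{N}$, $\conf{M} \wbC \conf{N}$, $\encoEnvI{M} \TEPI \encoEnvI{N}$, $\encoEnvI{M} \wbPI \encoEnvI{N}$ equivalent. It therefore suffices to link the two clusters, and for that it is enough to establish the two implications $\conf{M} \wbA \conf{N} \Rightarrow \conf{M} \wbC \conf{N}$ and $\conf{M} \TEp \conf{N} \Rightarrow \conf{M} \TEA \conf{N}$: chaining them with the two corollaries produces the cycle
\[
\conf{M} \wbA \conf{N} \;\Rightarrow\; \conf{M} \wbC \conf{N} \;\Leftrightarrow\; \conf{M} \TEp \conf{N} \;\Rightarrow\; \conf{M} \TEA \conf{N} \;\Leftrightarrow\; \conf{M} \wbA \conf{N},
\]
which forces all eight relations into a single equivalence class.

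For $\conf{M} \TEp \conf{N} \Rightarrow \conf{M} \TEA \conf{N}$ I would invoke Lemma~\ref{l:traces_AOGS_COGS}. Since $\conf{M}$ is the (passive) initial configuration $\initconf{\FV(M)}{M}$, that lemma gives $\conf{M} \ArrA{\tr}$ iff $\conf{M} \ArrC{\tr}$ with $\tr$ alternating, and its forward half also says every \AOGS trace of $\conf{M}$ is alternating; hence the set of \AOGS traces of $\conf{M}$ is exactly the set of alternating \COGS traces of $\conf{M}$. If $\conf{M}$ and $\conf{N}$ have the same \COGS traces, then they have the same alternating \COGS traces, hence the same \AOGS traces, i.e.\ $\conf{M} \TEA \conf{N}$.

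For $\conf{M} \wbA \conf{N} \Rightarrow \conf{M} \wbC \conf{N}$ I would use Lemma~\ref{l:OGS_COGS}, and here lies the one delicate point: that lemma speaks of singleton configurations, whereas an initial configuration is not one. The fix is to discharge the forced initial move first. From $\conf{M} \wbA \conf{N}$ and the fact that $\conf{M} \arrA{\questOinit p} \conf{M,p,\emptymap,\FV(M)\uplus\{p\}}$ is the unique transition of $\conf{M}$ (for a fresh $p$, matched on the $N$ side), we obtain that the \AOGS active configurations $\conf{M,p,\emptymap,\dots}$ and $\conf{N,p,\emptymap,\dots}$ --- which are singleton, having the sole P-name $p$ and empty environment --- are related by $\wbA$; so Lemma~\ref{l:OGS_COGS} yields $\conf{M,p,\emptymap,\dots} \wbC \conf{N,p,\emptymap,\dots}$. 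Re-prepending the IOQ on the concurrent side --- again the unique transition of the initial configuration, and, under the coercion of \AOGS into \COGS configurations, leading to the same state --- gives $\conf{M} \wbC \conf{N}$, closing the cycle.

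The real content of the argument is entirely in Lemmas~\ref{l:traces_AOGS_COGS} and~\ref{l:OGS_COGS}, which are already proved, so the main obstacle here is purely bookkeeping around initial configurations: making sure Lemma~\ref{l:OGS_COGS}'s singleton hypothesis is met by first performing the initial Opponent question, and checking that only the ``easy'' directions (alternating-to-concurrent for bisimilarity, concurrent-to-alternating for traces) are ever invoked --- the converse implications, which are more delicate, never appear because the cycle above already closes.
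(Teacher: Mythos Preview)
Your proposal is correct and follows exactly the paper's route: the paper's proof is the one-line ``From Corollaries~\ref{c:OGS_opLTSpiI}, \ref{c:fa_all}, and Lemmas~\ref{l:traces_AOGS_COGS}, \ref{l:OGS_COGS} we can thus conclude.'' You have unpacked this into the two clusters and the two bridging implications precisely as intended, and your extra care in discharging the IOQ move before invoking Lemma~\ref{l:OGS_COGS} (so that its singleton hypothesis is met) is a detail the paper leaves implicit but that is indeed needed.
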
 
\section{Eager Normal-Form Bisimulation}
\label{s:enf}

From Corollary~\ref{c:OGS_COGS} and Theorem~\ref{t:adrien},
we can immediately conclude that the semantics on $\lambda$-terms induced by OGS
(Alternating or Concurrent) coincides with 
enf-bisimilarity (i.e., Lassen's trees), as introduced in Definition~\ref{d:enf}. 

In this section, we show a direct proof of the result, for \COGS bisimilarity, 
as an example of application of the  
up-to composition technique for \COGS
 presented in Section~\ref{ss:upto_games}.


Terms, values, and evaluation contexts  of the $\lambda$-calculus can be directly lifted to singleton concurrent configurations,
meaning that we can transform a relation on terms, values, and contexts $\R$ into
a relation $\liftRel{\R}$ on singleton concurrent configurations in the following way:
\begin{itemize}
 \item If $(M_1,M_2) \in \R$ then $(\conf{\pmap{p}{M_1},\emptymap,\phi},
 \conf{\pmap{p}{M_2},\emptymap,\phi}) \in \liftRel{\R}$, 
 with $\phi = \fv{M_1,M_2} \uplus \{p\}$
 \item If $(V_1,V_2) \in \R$ then $(\conf{\pmap{x}{V_1},\phi},\conf{\pmap{x}{V_2},\phi})
 \in \liftRel{\R}$, with $\phi= \fv{V_1,V_2} \uplus \{x\}$;
 \item If $(E_1,E_2) \in \R$ then $(\conf{\pmap{p}{(E_1,q)},\phi},
  \conf{\pmap{p}{(E_2,q)},\phi}) \in \liftRel{\R}$, with 
  $\phi= \fv{E_1,E_2} \uplus \{p,q\}$.
\end{itemize}

\begin{theorem}
 Taking $(\RR_{\enfM},\RR_{\enfV},\RR_{\enfK})$ an enf-bisimulation, then 
 $(\liftRel{\RR_{\enfM}} \cup \liftRel{\RR_{\enfV}} \cup \liftRel{\RR_{\enfK}})$
 is a singleton bisimulation up-to composition in \COGS (as defined in Section~\ref{ss:upto_games}).
\end{theorem}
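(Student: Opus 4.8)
The plan is to check directly that
$\R \defeq \liftRel{\RR_\enfM}\cup\liftRel{\RR_\enfV}\cup\liftRel{\RR_\enfK}$
satisfies the clauses of a \emph{singleton bisimulation up-to composition} from Section~\ref{ss:upto_games}. Every pair in $\R$ consists of two singleton configurations of one of three shapes — an active thread $\conf{\pmap{p}{M_i},\emptymap,\phi}$, a value slot $\conf{\pmap{x}{V_i},\phi}$, or a continuation slot $\conf{\pmap{p}{(E_i,q)},\phi}$ — and in each case the two sides have the same polarity function and continuation structure, so $\R$ is well-formed. Because the clauses of Definition~\ref{d:enf} are symmetric, we may replace $(\RR_\enfM,\RR_\enfV,\RR_\enfK)$ by its symmetric closure (still an enf-bisimulation), so that $\R$, and hence $\uptoComp{\R}$, are symmetric and it is enough to answer challenges $\FF_1\Longrightarrowp\arrC{\ell}\FF_1'$ from the left component. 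The core of the proof is then a case analysis on the shape of $(\FF_1,\FF_2)$, matching each enabled \COGS transition (Figure~\ref{fig:conc-ogs-lts}) against the corresponding clause of the enf-bisimulation.

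For an active pair, $\conf{\pmap{p}{M_1},\emptymap,\phi}$ versus $\conf{\pmap{p}{M_2},\emptymap,\phi}$ with $M_1\RR_\enfM M_2$, the only transitions are $(P\tau)$ steps following the deterministic $\redv$-reduction and, once an eager normal form is reached, one $(PA)$ or one $(PQ)$ step. If $M_1$ diverges then, by determinism of $\redv$, no configuration reachable by $(P\tau)$ steps can perform a visible action, so the clause holds vacuously (and $M_2$ diverges too). Otherwise $M_1\Converge N_1$, $M_2\Converge N_2$, and $\FF_1\Longrightarrowp\conf{\pmap{p}{N_1},\emptymap,\phi}$. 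If $N_1$ is a value $V_1$, then $N_2$ is a value $V_2$ with $V_1\RR_\enfV V_2$; the challenge is $\FF_1\Longrightarrowp\arrC{\ansP{p}{x}}\conf{\emptymap,\pmap{x}{V_1},\phi\uplus\{x\}}$, answered by the analogous $(PA)$ step of $\FF_2$, and the two results lie in $\liftRel{\RR_\enfV}$ (up to a companion configuration carrying the surplus names, see below), hence in $\uptoComp{\R}$. If $N_1$ is a stuck call $E_1[xV_1]$, then $N_2=E_2[xV_2]$ with $V_1\RR_\enfV V_2$ and $E_1\RR_\enfK E_2$; the challenge is a $(PQ)$ step $\FF_1\Longrightarrowp\arrC{\questPV{x}{y}{q}}\conf{\emptymap,\pmap{y}{V_1}\cdot\pmap{q}{(E_1,p)},\phi\uplus\{y,q\}}$, matched by the corresponding $(PQ)$ step of $\FF_2$. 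The resulting configurations are no longer singletons, but each splits (Definition~\ref{d:tensor_cogs}) as a value slot $\conf{\pmap{y}{V_i},\cdot}$ in tensor with a continuation slot $\conf{\pmap{q}{(E_i,p)},\cdot}$, related by $\liftRel{\RR_\enfV}$ and $\liftRel{\RR_\enfK}$ respectively; composing these two $\appendRel{\R}$-steps (changing one factor at a time through the common companion) gives $\FF_1'\uptoComp{\R}\FF_2'$.

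For a value pair $\conf{\pmap{x}{V_1},\phi}$ versus $\conf{\pmap{x}{V_2},\phi}$ with $V_1\RR_\enfV V_2$, the only transitions are $(OQ)$ on $x$, producing $\conf{\pmap{p}{V_iy},\pmap{x}{V_i},\phi\uplus\{y,p\}}$; each side splits as $\conf{\pmap{p}{V_iy},\emptymap,\cdot}\otimes\conf{\pmap{x}{V_i},\cdot}$, and the first factors are related by $\liftRel{\RR_\enfM}$ because $V_1y\RR_\enfM V_2y$ (value clause of Definition~\ref{d:enf}), the second by $\liftRel{\RR_\enfV}$, so again $\FF_1'\uptoComp{\R}\FF_2'$. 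For a continuation pair $\conf{\pmap{p}{(E_1,q)},\phi}$ versus $\conf{\pmap{p}{(E_2,q)},\phi}$ with $E_1\RR_\enfK E_2$, the only transitions are $(OA)$ on $p$, which garbage-collects $E_i$ and yields the \emph{singleton} $\conf{\pmap{q}{E_i[x]},\emptymap,\phi\uplus\{x\}}$; since $E_1[x]\RR_\enfM E_2[x]$ (evaluation-context clause) this pair is already in $\liftRel{\RR_\enfM}\subseteq\uptoComp{\R}$. In every case the answering transition of $\FF_2$ is obtained by running exactly the same reduction and rule, which exists precisely because of the corresponding clause of Definition~\ref{d:enf}; the converse clause is then free by symmetry of $\R$.

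Two remarks. First, a minor bookkeeping point: the transitions produce configurations whose name-support is of the form $\phi\uplus\{\dots\}$, whereas the lifting $\liftRel{\cdot}$ pins the support to the free variables of the component plus the relevant fresh names; the surplus names are inert (used continuation names, or variables that were free in $M_i$ but disappeared during reduction), and they are absorbed into a companion configuration $\conf{\emptymap,\emptymap,\psi}$ within the $\uptoComp{\cdot}$ closure, so they pose no real difficulty. Second, the only genuinely non-routine step is the decomposition highlighted above: recognising that after a $(PQ)$ or $(OQ)$ transition the resulting configuration must be split via the tensor product into a value slot together with either a continuation slot or a fresh thread, and that the ``instantiation'' clauses of enf-bisimilarity ($V_1y\RR_\enfM V_2y$ and $E_1[x]\RR_\enfM E_2[x]$) are exactly what relate those pieces. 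This is where the up-to-composition technique is indispensable — without it the proof fails, since these transitions leave the class of singleton configurations. Once the theorem is established, Proposition~\ref{p:sing_bis} yields $\liftRel{\RR_\enfM}\cup\liftRel{\RR_\enfV}\cup\liftRel{\RR_\enfK}\subseteq\wbC$, giving a direct proof that enf-bisimilar $\lambda$-terms are \COGS-bisimilar.
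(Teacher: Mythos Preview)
Your proposal is correct and follows essentially the same approach as the paper: a case analysis on the three shapes of singleton configurations, using the clauses of Definition~\ref{d:enf} to match each visible transition and the tensor decomposition to land in $\uptoComp{\R}$. The paper's proof only spells out the $(PQ)$ case for active pairs and remarks that the passive cases are similar; your treatment is more thorough (you cover $(PA)$, $(OQ)$, $(OA)$, and the divergence case explicitly) and you also flag the name-support bookkeeping that the paper silently elides, but the underlying argument is the same.
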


\begin{proof}
 Taking $\FF_1,\FF_2$ two singleton configurations s.t. 
 $(\FF_1,\FF_2) \in \liftRel{\RR_{\enfM}}$,
 then we can write $\FF_i$ as $\conf{\pmap{p}{M_i},\phi}$
 for $i=1,2$, such that $(M_1,M_2) \in \R$.
 suppose that $\FF_1 \Longrightarrowp\arrC\ell 
 \FF'_1 $,
 with $\ell$ a Player action. We only present the Player Question case,
which is where  `up-to composition' is useful.
So writing $\ell$ as $\questP{x}{y,q}$,
$\FF'_1 $ can be written as $\conf{[y \mapsto V_1]\cdot[q \mapsto (E_1,p)]}$,
so that $M_1 \redv^{*} E_1[x V_1]$.
 
As $(M_1,M_2) \in \R$, there are $E_2,V_2$ s.t.
 $M_2 \redv^{*} E_2[x V_2]$, $(V_1,V_2) \in \R$ and $(E_1,E_2) \in \R$.
Hence $\FF_2    
\Longrightarrowp\arrC\ell
 \FF'_2$ with $\FF'_2 = 
 \conf{[y \mapsto V_2]\cdot[q \mapsto (E_2,p)]}$.
 
Finally, $(\conf{\pmap{y}{V_1}},\conf{\pmap{y}{V_2}}) \in \liftRel{\R}$ 
 and $(\conf{\pmap{q}{(E_1,p)}},\conf{\pmap{q}{(E_2,p)}}) \in \liftRel{\R}$,
 so that 
 ${(\FF'_1,\FF'_2)} \in {\uptoComp{\liftRel{\R}}}$.
 
 The case of passive singleton configurations is proved in a similar way.
 \end{proof}

\section{Compositionality of OGS via \piI}
\label{s:compo}
We now present some compositionality results about \COGS and \AOGS, that can be proved
via the
correspondence between OGS and \piI.

Compositionality of OGS amounts to computing
the set of traces generated by
$\conf{M\subst{x}{V}}$ from the set of traces generated by
$\conf{M}$ and $\conf{\pmap{x}{V}}$.
This is the cornerstone of (denotational) game semantics, where the
combination of $\conf{M}$ and $\conf{\pmap{x}{V}}$
is represented via the so-called `parallel composition plus hiding'.
This notion of parallel composition of two processes $P,Q$ plus hiding over a
name $x$ is directly expressible in \piI as the process $\nu x(P | Q)$.
Precisely, 
suppose $\FF,\GG$  are two configurations that agree on their polarity functions,
except on a name $x$;  then
we write 
\vskip .1cm  \mbox{$ $}  \hfill $\nu x(\FF | \GG) $\hfill   $(*)$
\vskip .1cm 
\noindent  for 
the  \piI process $\nu x(\encoConI{\FF} | \encoConI{\GG})$, the 
 \emph{parallel composition plus hiding over $x$}.
To define this operation directly at the level of OGS, we would have
to generalize its LTS, allowing internal interactions over a name $x$
used both in input and in output.


As the translation $\cbvSymb$ from OGS configurations into
 $\piI$
 validates the  $\beta_v$ rule, we can prove  that the behaviour of 
 $\conf{M\subst{x}{V}}$
(e.g., its set of traces)
is the same as that of the parallel composition plus hiding over $x$
of $\conf{M}$ and $\conf{\pmap{x}{V}})$.
We use the notation $(*)$  above to express the following two results. 

\begin{theorem}
$$\begin{array}{lc}
  \encoConI{\conf{\pmap{p}{M\subst{x}{V}},\gamma,\phi\cup\phi'}} 
\ {\bowtie}\
 \nu x(
  \encoConI{\conf{\pmap{p}{M},\emptymap,\phi \uplus \{x\}}}
     | \encoConI{\conf{\gamma \cdot\pmap{x}{V},\phi' \uplus \{x\}}})
\end{array} $$
for $ \bowtie \;  \in \{ \wbOPI, \wbPI, \TEN,\TEPI\}$. 
 \end{theorem}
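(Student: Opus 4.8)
The plan is to factor the statement through the \emph{substitution lemma} for Milner's call-by-value encoding, i.e.\ the algebraic property — underlying validity of the $\beta_v$ rule in \piI\ \cite{San95i,DurierHS18} — that for a value $V$ one has $\encoIa{M\subst x V}{p}\ {\bowtie}\ \nu x(\encoIa M p \mid \encoIVa V x)$ for each relation $\bowtie$ in the statement. First I would unfold both sides using the definitions of the encodings (Figures~\ref{f:enc_internal} and~\ref{f:OGS_pi_cbv}; recall the encoding of a configuration does not depend on its name-support). Up to structural congruence $\equiv$ one has $\encoConI{\conf{\pmap p M,\emptymap,\phi\uplus\{x\}}}\equiv\encoIa M p$ and $\encoConI{\conf{\gamma\cdot\pmap x V,\phi'\uplus\{x\}}}\equiv\encoIVa V x\mid\encoEnvI{\gamma}$, so the right-hand side equals
\[
\nu x\bigl(\encoIa M p \mid \encoIVa V x \mid \encoEnvI{\gamma}\bigr).
\]
Compatibility of the two configurations (they agree on polarities except at $x$) together with validity forces the free names of $M$ and of $V$ to be $O$-names while $\dom{\gamma}$ consists of $P$-names; in particular $x\notin\fn{\encoEnvI{\gamma}}$, so scope extrusion rewrites this as $\bigl(\nu x(\encoIa M p \mid \encoIVa V x)\bigr)\mid\encoEnvI{\gamma}$, whereas the left-hand side is $\encoIa{M\subst x V}{p}\mid\encoEnvI{\gamma}$. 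Since $\equiv$ is contained in each of $\wbPI,\wbOPI,\TEPI,\TEN$, it then suffices (i) to prove the substitution lemma above and (ii) to close it under the static context $[\cdot]\mid\encoEnvI{\gamma}$.

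For (i), on the ordinary LTS the substitution lemma of Milner's encoding~\cite{San95i,DurierHS18} relates $\encoIa{M\subst x V}{p}$ and $\nu x(\encoIa M p \mid \encoIVa V x)$ by $\wbPI$ — in fact by $\expa$, modulo a few deterministic reductions, running $\encoIVa V x$ as a private replicated resource simulating the substitution — hence also by $\TEPI$. For $\bowtie\in\{\wbOPI,\TEN\}$ I would observe, as in Section~\ref{subsec:seq-ts-pi}, that this relating relation in fact \emph{respects divergence}: its proof rests only on laws valid for strong bisimilarity together with laws of the shape $\res a(\inp a{\tilx}.P \mid \bout a\tilx.Q)\;\contr\;\res a(P\mid Q)$, none of which creates or removes divergences; so Lemma~\ref{l:bisDIV_op}, together with its analogues for $\expa$ and for trace equivalence mentioned there, delivers the substitution lemma on the \opLTS.

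For (ii), the process $\encoEnvI{\gamma}$ is a parallel composition of (possibly replicated) input-guarded processes, hence \emph{input reactive} and divergence-free, and by the disjointness just noted it \emph{cannot interact} (Definition~\ref{d:interact}) with $\nu x(\encoIa M p \mid \encoIVa V x)$ nor with $\encoIa{M\subst x V}{p}$. For $\wbPI$ and $\TEPI$ this closure is the ordinary congruence of $\pi$-calculus bisimilarity and trace equivalence under parallel composition. For $\wbOPI$ and $\TEN$, input-reactivity of $\encoEnvI{\gamma}$ makes rule \trans{parL} of the compositional \opLTS (Figure~\ref{t:pii_op}) applicable to the hole with no side condition, while non-interaction and divergence-freeness of $\encoEnvI{\gamma}$ ensure that placing it in parallel neither creates nor masks divergences; one then concludes again via the divergence-respecting route and Lemma~\ref{l:bisDIV_op}. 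Combining (i) and (ii) with $\equiv\subseteq\bowtie$ yields the theorem.

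I expect the main obstacle to be the uniform treatment of the four relations, and specifically the passage to the \opLTS: one has to check carefully that $\encoEnvI{\gamma}$ genuinely plays the passive, input-reactive, non-interfering role that the \opLTS requires, and that neither the substitution lemma nor the context closure introduces any spurious divergence — precisely the behaviour (a divergence masking a pending input) that separates $\wbPI$/$\TEPI$ from $\wbOPI$/$\TEN$.
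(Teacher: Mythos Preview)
Your proposal is correct and follows essentially the same route as the paper: reduce to the substitution lemma $\encoIa{M\subst x V}{p}\ \bowtie\ \nu x(\encoIa M p \mid \encoIVa V x)$ (which the paper phrases as a consequence of $\beta_v$-validity plus the algebraic fact $\encoIa{(\lambda x.M)V}{p}\wbPI \nu x(\encoIa M p \mid \encoIVa V x)$), then lift to a non-empty $\gamma$ by congruence. Your treatment of the four relations, in particular the \opLTS cases via divergence-respecting and Lemma~\ref{l:bisDIV_op}, is more explicit than the paper's one-line ``and then to the other relations'', but it is exactly the mechanism the paper appeals to in Section~\ref{subsec:seq-ts-pi} and Appendix~\ref{a:behav}.
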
  

\begin{proof}
The theorem is a  consequence of the validity of the $\beta_v$ rule; that
is the property $  \encoConI{ (\lambda x. M) V }  \wbPI   \encoConI{  M \sub V x }  $. 
Indeed, by simple algebraic manipulation one can show 
$$   \encoIa{(\lambda x. M) V} p 
 \wbPI  \nu x( \encoIa M p |   \encoIVa {V} x)$$
which is to say
$$ 
\begin{array}{lc}
  \encoConI{\conf{\pmap{p}{M\subst{x}{V}},\emptymap ,\phi\cup\phi'}} 
 \wbPI
 \nu x(
   \encoConI{\conf{\pmap{p}{M},\emptymap,\phi \uplus \{x\}}} 
     \ \ | \encoConI{\conf{[x \mapsto V],\phi' \uplus \{x\}}})
\end{array} 
$$
The result can then be lifted to a  non-empty environment $ \gamma $ by using the congruence
properties of $\wbPI$, and then to the other relations.
\end{proof}

\begin{corollary}
For any trace $\tr$:
 \begin{enumerate}
 \item  $\conf{M\subst{x}{V},p,\gamma,\phi\cup\phi'} \ArrA{\tr}$
 iff $\nu x(
   \conf{M,p,\emptymap,\phi \uplus \{x\}}
      | \conf{\gamma \cdot\pmap{x}{V},\phi' \uplus \{x\}}) \ArrN{\tr}$
 \item $\conf{\pmap{p}{M\subst{x}{V}},\gamma,\phi\cup\phi'} \ArrC{\tr}$
 iff $\nu x(
   \conf{\pmap{p}{M},\emptymap,\phi \uplus \{x\}}
      | \conf{\gamma \cdot\pmap{x}{V},\phi' \uplus \{x\}}) \ArrPI{\tr}$
\end{enumerate}
\end{corollary}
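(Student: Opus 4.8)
The plan is to read the corollary straight off the preceding theorem, combined with the trace-transfer results already established: Corollary~\ref{c:traces} for the alternating/\opLTS half, and Corollary~\ref{c:ogsp_pi} for the concurrent/ordinary-LTS half. In both items the argument is the same three-step chain: push the left-hand OGS trace statement across the encoding into \piI, rewrite the resulting process using the theorem (which supplies trace equivalence, not merely bisimilarity), and recognise the right-hand side as precisely the \piI statement displayed.

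For item~(1), I would first apply Corollary~\ref{c:traces}, giving that $\conf{M\subst{x}{V},p,\gamma,\phi\cup\phi'} \ArrA{\tr}$ iff $\encoConI{\conf{M\subst{x}{V},p,\gamma,\phi\cup\phi'}} \ArrN{\tr}$. Then I would invoke the preceding theorem with $\bowtie\;=\;\TEN$, using that the encoding of an active \AOGS configuration $\conf{M,p,\gamma,\phi}$ is by definition the same \piI process as the encoding of the singleton \COGS configuration $\conf{\pmap{p}{M},\gamma,\phi}$ (the encoding ignores the name-support, so the coercion of \AOGS into \COGS is transparent here). This yields $\encoConI{\conf{M\subst{x}{V},p,\gamma,\phi\cup\phi'}} \TEN \nu x(\conf{M,p,\emptymap,\phi \uplus \{x\}}\,|\,\conf{\gamma \cdot\pmap{x}{V},\phi' \uplus \{x\}})$, where the right-hand side is, by the abbreviation $\nu x(\FF\,|\,\GG)$ for $\nu x(\encoConI{\FF}\,|\,\encoConI{\GG})$, exactly the \piI process appearing in the statement. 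Since $\TEN$ transports the trace $\tr$ between the two sides, chaining the two biconditionals closes item~(1).

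Item~(2) is entirely parallel: Corollary~\ref{c:ogsp_pi} gives $\conf{\pmap{p}{M\subst{x}{V}},\gamma,\phi\cup\phi'} \ArrC{\tr}$ iff $\encoConI{\conf{\pmap{p}{M\subst{x}{V}},\gamma,\phi\cup\phi'}} \ArrPI{\tr}$; the preceding theorem with $\bowtie\;=\;\TEPI$ rewrites the encoded process into $\nu x(\conf{\pmap{p}{M},\emptymap,\phi \uplus \{x\}}\,|\,\conf{\gamma \cdot\pmap{x}{V},\phi' \uplus \{x\}})$ up to $\TEPI$; and $\TEPI$ transports $\tr$ between the two sides.

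There is no genuine obstacle here: the corollary is bookkeeping on top of results already proved. The two points that need a moment of care are (i) the freshness convention of Remark~\ref{r:bn}, which ensures the bound names occurring in $\tr$ are distinct from $x$ and from every name of the configurations involved, so no capture arises when the trace is matched across the encoding and the restriction $\nu x$; and (ii) the silent identification, used in item~(1), of an \AOGS active configuration $\conf{M,p,\gamma,\phi}$ with the corresponding singleton \COGS configuration, which is what lets the theorem — stated with the \COGS notation $\conf{\pmap{p}{M},\ldots}$ — apply verbatim to the \AOGS notation $\conf{M,p,\ldots}$ appearing in item~(1).
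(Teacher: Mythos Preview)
Your proposal is correct and is exactly the intended derivation: the paper states the corollary without proof, immediately after the theorem, and the only way to read it off is precisely the chain you describe — Corollary~\ref{c:traces} (resp.\ Corollary~\ref{c:ogsp_pi}) to pass from the OGS trace statement to a \piI trace statement on the encoding, then the preceding theorem with $\bowtie=\TEN$ (resp.\ $\TEPI$) to replace that encoding by the parallel-composition-plus-hiding process. Your two remarks on freshness and on the silent \AOGS/\COGS coercion are also the right points of care.
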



Other important properties that we can import in OGS from \piI
are the congruence properties for the \AOGS and \COGS semantics. 
We report the result for  $\TEA$; the same result holds for 
$\wbA,\TEp, \wbC$.
\begin{theorem}
 If $\conf{M} \TEA \conf{N}$
 then for any $\LasV$  context $C$, $\conf{C[M]} \TEA \conf{C[N]}$.
\end{theorem}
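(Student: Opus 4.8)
The plan is to derive this congruence property for \AOGS from the congruence of $\wbPI$ for \piI, transported through the encoding. By Corollary~\ref{c:OGS_COGS}, for any $\lambda$-terms $P,Q$ we have $\conf{P}\TEA\conf{Q}$ iff $\encoI{P}\wbPI\encoI{Q}$; so it is enough to prove that $\encoI{M}\wbPI\encoI{N}$ implies $\encoI{C[M]}\wbPI\encoI{C[N]}$ for every $\LasV$ context $C$, and then invoke Corollary~\ref{c:OGS_COGS} once more in the reverse direction.

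The key step is a compositionality lemma for the encoding: for every $\LasV$ context $C$ there is a \piI context $D_C$, whose hole expects an abstraction parametrised over a continuation name, such that $\encoI{C[M]} = D_C[\encoI{M}]$ for all terms $M$. I would prove this by a routine induction on the structure of $C$, just reading off the clauses of Figure~\ref{f:enc_internal} (recalling that the paper identifies agents with their normalised expressions, so the application redexes created by hole-filling are silently contracted). For $C = \contexthole$ one takes $D_C = \contexthole$; for $C = C'M'$ one takes $D_{C'M'} = \bind p \res q(\app{D_{C'}}{q}\ |\ \inp q y.\res r(\encoIa{M'} r\ |\ \inp r w.\bout y{w',p'}.(\fwd{w'}w|\fwd{p'}p)))$, with the hole of $D_{C'}$ applied to $q$; the case $C = M'C'$ is symmetric; and for $C = \lambda x.C'$ one takes $D_{\lambda x.C'} = \bind p \bout p y.!\inp y{x,q}.\app{D_{C'}}{q}$. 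Note that when the hole of $C$ lies under a binder $\lambda x$, a free variable $x$ of $M$ is captured; on the $\pi$ side this corresponds to $x$ being captured by the input binder $\inp y{x,q}$ of $D_C$, which is harmless.

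Finally, $\wbPI$ is a congruence for \piI: it is preserved by restriction and parallel composition, and — because in \piI only fresh names are ever received — it is preserved by input prefix, and likewise by the application of an abstraction (see \cite{SaWabook}). Hence $\encoI{M}\wbPI\encoI{N}$ gives $D_C[\encoI{M}]\wbPI D_C[\encoI{N}]$, that is $\encoI{C[M]}\wbPI\encoI{C[N]}$, and we conclude $\conf{C[M]}\TEA\conf{C[N]}$ by Corollary~\ref{c:OGS_COGS}. The same argument, invoking the corresponding clauses of Corollary~\ref{c:OGS_COGS}, yields the analogous statements for $\wbA$, $\TEp$ and $\wbC$. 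I expect the only genuinely delicate point to be the compositionality lemma — pinning down $D_C$ and checking the equality $\encoI{C[M]} = D_C[\encoI{M}]$, in particular the interplay between the two mutually recursive encodings $\cbvSymb$ and $\cbvSymb^\variant$ and the capture of free variables under $\lambda$-binders; the rest is bookkeeping and an appeal to a standard congruence fact about \piI.
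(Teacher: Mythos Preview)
Your proposal is correct and follows essentially the same route as the paper: transfer to \piI via Corollary~\ref{c:OGS_COGS}, use the compositionality of $\cbvSymb$ to obtain a \piI context $D_C$ with $\encoI{C[M]} = D_C[\encoI{M}]$, invoke the congruence of $\wbPI$ in \piI, and transfer back. The paper's proof is given only as a one-line sketch listing exactly these three ingredients (congruence of \piI, Corollary~\ref{c:OGS_COGS}, compositionality of $\cbvSymb$); your write-up simply spells out the inductive construction of $D_C$ that the paper leaves implicit.
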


The result is obtained  from the congruence properties of \piI, 
Corollary~\ref{c:OGS_COGS}, and the compositionality
of the encoding $\cbvSymb$.

\iflong
The adequacy and congruence result yields soundness (i.e., inclusion) 
of the OGS relations (Alternating and
Concurrent) \wrt contextual equivalence. Soundness of OGS
is usually a non-trivial property, e.g., requiring 
ciu-equivalence~\cite{laird2007}.

\section{Well-Bracketed OGS}
\label{s:wb-ogs}

In the operational game semantics we have studied so far, Opponent is allowed to 
answer over continuation names in an arbitrary order, and is 
even allowed `to forget' to answer 
to some of these names.
This behavior of Opponent 
corresponds to that of a context that has 
access to control operators, e.g. $\mathtt{call/cc}$, with an affine usage of continuations.

\subsection{Well-Bracketed LTS}

\begin{figure*}
\[\begin{array}{lllll}
 (P\tau) & \conf{M,p,\stt,\gamma,\phi} & \arrWB{\ \tau \ } & 
  \conf{N,p,\stt,\gamma,\phi} & \text{ when } M \red N\\
 (PA) & \conf{V,p,\stt,\gamma,\phi} & \arrWB{\ansP{p}{x}} & 
  \conf{\stt,\gamma \cdot \pmap{x}{V},\phi \uplus \{x\}} \\
 (PQ) & \conf{E[xV],p,\stt,\gamma,\phi} & \arrWB{\questPV{x}{y}{q}} & 
  \multicolumn{2}{c}
    {\conf{q::\stt,\gamma \cdot \pmap{y}{V}\cdot \pmap{q}{(E,p)},\phi \uplus \{y,q\}}} \\
 (OA) & \conf{p::\stt,\gamma\cdot [p \mapsto (E,q)],\phi} & \arrWB{\ansO{p}{x}} & 
  \conf{E[x],q,\stt,\gamma,\phi \uplus \{x\}} \\
 (OQ) & \conf{\stt,\gamma,\phi} & \arrWB{\questOV{x}{y}{p}} & 
  \conf{V y,p,\stt,\gamma,\phi \uplus \{y,p\}} & 
  \text{ when } \gamma(x) = V\\
 (IOQ) & \initconf{\phi}{M} & \arrWB{\questOinit{p}} &
     \conf{M,p,[],\emptymap,\phi \uplus \{p\}}
\end{array}\]
\caption{Well-Bracketed OGS LTS}
\label{fig:stacked-lts}
\end{figure*}

To capture the absence of such control operators in the environment,
we enforce a `well-bracketed' behavior on the sequence of answers.
To do so, in Figure~\ref{fig:stacked-lts} we introduce the  \emph{Well-Bracketed OGS} (\WBOGS).  
It refines \AOGS by using \emph{stacked configurations} $\FFb$.
They use a \emph{stack} of continuation names $\stt$, that keeps track
of the order in which the Opponent answers must occur to be well-bracketed.
Such stacks
are represented as lists of Player continuation names.
We write $p:: \stt$ for the stack with $p$ pushed on top of $\stt$, 
and $\emptystack$ for the empty stack.
In \WBOGS, Player questions push their fresh continuation names on top of the stack, and
Opponent answers pop the stack to obtain the continuation name they are allowed to answer to.

Extending the notion of valid configurations from \AOGS,
we require the stack to be formed by distinct Player continuation names.
Since names in stacks are all distinct, removing the evaluation context 
from $\gamma$ after an Opponent answer is not necessary anymore, since the 
associated continuation name will never be allowed to be used again. 
But for uniformity \wrt the other LTS, we choose to keep it.
For a stacked configuration $\FFb$, we write $\es \FFb$ for the corresponding \AOGS
configuration, where the stack has been removed.

Notice that stacked configuration $\FFb$
only stores the stack of Player continuations as an independent component,
in order to restrict Opponent behavior.
By shuffling the continuation structure of $\gamma$ with
the stack $\sigma$, we could reconstruct
a stack for the whole interaction between Player and Opponent.

\subsection{Well-Bracketed Traces}

The goal is now to characterize the set of traces generated by a stacked configuration
$\FFb$ (on the well-bracketed LTS) as the subset of the traces generated by $\es \FFb$ 
(on the alternating LTS) that satisfies the following condition of \emph{well-bracketing}.

Considering a trace $\tr_1 \ \act_1  \ \tr2 \ \act_2 \ \tr_3$,
with $\act_2$ an answer of the shape $\ansO{p}{x}$ or $\ansP{p}{x}$
and $\act_1$ the action that introduces the continuation name $p$,
then we say that $\act_1$ is \emph{answered} by $\act_2$.
Since only questions can introduce continuation names, 
$\act_1$ is indeed a question, and it must be of the opposite polarity
as $\act_2$.
Well-bracketing corresponds to the fact that questions must be answered in the last-in
first-out order they previously appear in a trace.
We formalize this idea using a pushdown system, defined as an LTS whose configurations are 
stacks of distinct continuation names with transition relation defined as:
\begin{mathpar}
  (\text{PQ}) \
    {\stt \pshdtrans{\questP{x}{y,p}} p::\stt }
  \and
  (\textsc{OQ}) \
    {\stt \pshdtrans{\questO{x}{y,p}} \stt }
  \and
  (\textsc{PA}) \
    {p::\stt \pshdtrans{\ansP{p}{x}} \stt }
  \and
  (\textsc{OA}) \
    {p::\stt \pshdtrans{\ansO{p}{x}} \stt }
\end{mathpar}

\begin{definition}
Taking a stack $\stt$,
a trace $\tr$ is said to be \emph{well-bracketed} \wrt $\stt$ when
there exists another stack $\stt'$ such that
$\stt \pshdTrans{\tr} \stt'$.
\end{definition}
For example $\ansO{c}{w} \questP{x}{y,c'} \ansO{c'}{z} \ansP{c''}{w}$
is well-bracketed \wrt $c::[]$.
Notice that we do not keep track of the Opponent continuation names in this definition.
Indeed, Proponent behaviour is automatically well-bracketed,
since running terms are taken only in the pure $\lambda$-calculus.

\begin{lemma}
\label{l:WBs_lts}
Taking $\FFb$ a configuration whose stack is $\stt$, then
$\FFb \ArrWB{\tr} \FFb'$ iff 
$\es \FFb \ArrA{\tr} \es \FFb'$ and $\tr$ is well-bracketed \wrt $\stt$.
\end{lemma}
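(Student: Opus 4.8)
The plan is to prove Lemma~\ref{l:WBs_lts} by induction on the length of the trace $\tr$, exploiting the fact that the \WBOGS LTS is essentially the \AOGS LTS equipped with an extra stack component that exactly mirrors the pushdown-system transitions. First I would observe that there is a tight invariant linking the stack component of a \WBOGS configuration to the state of the pushdown system: if $\FFb$ has stack $\stt$ and $\FFb \ArrWB{\tr} \FFb'$ with $\FFb'$ having stack $\stt'$, then $\stt \pshdTrans{\tr} \stt'$; conversely, if $\es\FFb \ArrA{\tr} \GG$ in \AOGS and $\stt \pshdTrans{\tr}\stt'$, then the \WBOGS run reconstructing $\tr$ from $\FFb$ succeeds and lands in a configuration $\FFb'$ with $\es\FFb' = \GG$ and stack $\stt'$. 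Both directions follow by matching the six transition rules of Figure~\ref{fig:stacked-lts} against the six \AOGS rules of Figure~\ref{fig:seq-ogs-lts} together with the four pushdown rules: the $(P\tau)$, $(PA)$, $(OQ)$, $(IOQ)$ rules act as identity (resp.\ initialisation) on the stack, matching rules \rn{PA}, \rn{PO}, and the base case; the $(PQ)$ rule pushes $q$, matching \rn{PQ}; and the $(OA)$ rule is the crucial one —in \WBOGS it is only enabled when the top of the stack is exactly the continuation name $p$ being answered, which is precisely the side condition $p::\stt$ in both the \WBOGS rule and the pushdown rule \rn{OA}.

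For the forward direction ($\FFb \ArrWB{\tr}\FFb'$ implies $\es\FFb\ArrA\tr\es\FFb'$ and $\tr$ well-bracketed), the key step is that erasing the stack from a \WBOGS transition yields a valid \AOGS transition: for every rule except $(OA)$ this is immediate because the non-stack components evolve identically; for $(OA)$, the \AOGS rule \rn{OA} only requires that $\gamma$ maps $p$ to some $(E,q)$, which is a weaker precondition than the \WBOGS one, so the erased transition is still derivable. Simultaneously one reads off that the stack evolution along $\tr$ is a legal pushdown run, giving well-bracketedness \wrt the initial stack $\stt$ by definition. The induction goes through by writing $\tr = \act\,\tr''$, applying the rule analysis to the first step, and invoking the induction hypothesis on $\tr''$ from the residual configuration.

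For the backward direction ($\es\FFb\ArrA\tr\GG$ and $\tr$ well-bracketed \wrt $\stt$ imply $\FFb\ArrWB\tr\FFb'$ with $\es\FFb'=\GG$), I would again induct on $\tr$, writing $\tr=\act\,\tr''$ with $\es\FFb\arrA\act\HH\ArrA{\tr''}\GG$. The point is that well-bracketedness of $\tr$ \wrt $\stt$ means there is a pushdown run $\stt\pshdtrans\act\stt_1\pshdTrans{\tr''}\stt'$; in particular the first pushdown step $\stt\pshdtrans\act\stt_1$ is enabled, which — in the only nontrivial case, $\act$ an Opponent answer $\ansO p x$ — forces $\stt = p::\stt_1$, exactly the precondition needed to fire the \WBOGS rule $(OA)$ from $\FFb$. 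Thus $\FFb\arrWB\act\FFb_1$ for a unique $\FFb_1$ with $\es\FFb_1 = \HH$ and stack $\stt_1$; since $\tr''$ is well-bracketed \wrt $\stt_1$, the induction hypothesis applies and we are done. (One should also note that Proponent answers never get stuck, since running terms live in the pure $\lambda$-calculus, so the only constraint ever exercised is the one on Opponent answers, as remarked before the lemma.)

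The main obstacle, such as it is, is bookkeeping rather than conceptual: one must be careful that the \emph{valid}-configuration hypotheses (the stack consists of distinct Player continuation names, and those names are exactly the available Player continuation names in $\gamma$) are preserved by transitions, so that the correspondence between the \WBOGS stack and a pushdown state is well-defined throughout the run; in particular when a $(PQ)$ step pushes a fresh $q$, freshness (from the $\uplus$ in the rule) guarantees $q$ is not already on the stack, keeping the stack duplicate-free. Handling the initial configuration case ($\initconf\phi M$, rule $(IOQ)$) separately is also needed, but it is routine: it corresponds to starting both the \WBOGS run and the pushdown run from the empty stack. Everything else is a mechanical rule-by-rule match.
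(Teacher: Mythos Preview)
Your proposal is correct and follows exactly the same approach as the paper: a direct induction on the trace $\tr$ in both directions, matching each \WBOGS rule against its \AOGS counterpart together with the corresponding pushdown transition, with $(OA)$ as the only case where the extra stack precondition does real work. The paper's own proof is in fact much terser than yours---it simply asserts that the induction goes through in each direction---so your rule-by-rule analysis is a faithful and more explicit expansion of it.
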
 

\begin{proof}
  Suppose that $\FFb \ArrWB{\tr} \FFb'$,
  then writing $\stt'$ for the stack of $\FFb'$,
  we prove by a direct induction on $\tr$ that $\stt \pshdTrans{\tr} \stt'$.

  In the reverse direction, suppose that 
  $\es \FFb \ArrA{\tr} \FF'$ and $\stt \pshdTrans{\tr} \stt'$,
  we consider the stacked configuration $\FFb'$ built 
  from $\FF'$ using $\stt'$.
  Then we prove by a direct induction on $\tr$ that
  $\FFb \ArrWB{\tr} \FFb'$.
\end{proof}

\subsection{Relationship between Alternating and Well-Bracketing OGS}
\label{subsec:relate-alt-wb-lts}

In Section~\ref{subsec:relate-conc-alt-lts},
we have related the denotation of $\lambda$-terms
given by the \COGS and \AOGS LTS.
We now establish a similar connection
between \AOGS and \WBOGS.
This is done by adapting the proof of Lemma~\ref{l:OGS_COGS}.
To do so, we first adapt the definition of
triggerable singleton subconfigurations.

\begin{definition}
A singleton \AOGS configuration $\FF$ is 
a \emph{triggerable} subconfiguration of a stacked configuration $\FFb$ when:
\begin{itemize}
  \item either $\FFb$ is active of the shape
  $\conf{M,p,\stt,\gamma,\phi}$,
  and $\FF = \conf{M,p,\emptymap,\phi \backslash{\dom{\gamma}}}$
  \item or $\FFb$ is passive of the shape
  $\conf{\stt,\gamma,\phi}$,
  and $\FF$ is:
  \begin{itemize}
    \item either of the shape
    $\conf{\pmap{x}{V},\phi \backslash{\dom{\gamma}} \cup\{x\}}$
    with $\pmap{x}{V} \subseteq \gamma$;
    \item or of the shape $\conf{\pmap{p}{(E,q)},\phi \backslash{\dom{\gamma}}} \cup\{p\}$
    with $\pmap{p}{(E,q)} \subseteq \gamma$,
    when $p$ is the top element of $\stt$.
  \end{itemize}
\end{itemize}
\end{definition}
So
$\conf{\pmap{p}{(E,q)},\phi}$ is not triggerable 
when $p$ is not the top element of the stack,
since it cannot be directly triggered by an OA action of the \WBOGS LTS
(while it can be by the \AOGS LTS).

We recall that we only consider well-formed relations, 
as introduced in Definition~\ref{d:compa}.
Therefore, whenever two configurations $\FF,\GG$ appear as
a pair in a relation (i.e., a behavioural relation) 
it is implicitly assumed that they are support-equivalent, written
$\okC{\FF}{\GG}$.
For relations on stacked configurations, $\okC{\FFb}{\GGb}$ 
also enforce that $\FFb,\GGb$ have the same stack.

\begin{lemma}
\label{l:AOGS_WBOGS}
If $\FFb, \GGb$ are singleton stacked configurations and 
$\FFb \wbWB \GGb$, then also $\es\FFb \wbA \es\GGb$. 
\end{lemma}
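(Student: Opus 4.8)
The plan is to adapt, nearly verbatim, the proof of Lemma~\ref{l:OGS_COGS}, replacing \COGS by \AOGS and \AOGS by \WBOGS, and using the notion of triggerable singleton subconfiguration of a stacked configuration introduced just above. Define $\R$ to be the relation on singleton \AOGS configurations with $\FF \RR \GG$ iff $\okC{\FF}{\GG}$ and there exist singleton stacked configurations $\FFb,\GGb$ with $\FFb \wbWB \GGb$ such that $\FF$ and $\GG$ are triggerable subconfigurations of $\FFb$ and $\GGb$ respectively. First I would note that $\es\FFb \RR \es\GGb$ whenever $\FFb \wbWB \GGb$ are singleton: the only non-trivial case is a passive stacked configuration $\conf{\stt,\gamma,\phi}$ with $\gamma = \pmap{p}{(E,q)}$, and there, by validity of stacked configurations (the stack lists exactly the Player continuation names currently stored in $\gamma$), $\stt = p::\emptystack$, so $p$ is the top of $\stt$ and $\es\FFb$ is a triggerable subconfiguration of $\FFb$ itself; the active and value cases are immediate. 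Hence it suffices to show $\R \subseteq \wbA$.

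For this I would prove that $\R$ is a singleton bisimulation up-to composition over the \AOGS LTS (the \AOGS analogue of Definition~\ref{d:uptoComp}, relying on the tensor product of \AOGS configurations and the up-to machinery of Section~\ref{a:tensor-aogs}, which mirror Sections~\ref{ss:tp} and~\ref{ss:upto_games}). Take $\FF \RR \GG$ with witnesses $\FFb,\GGb$ and a challenge $\FF \LongrightarrowA\arrA{\ell}\FF_1$, and split on whether $\FF$ (hence $\GG$, being support-equivalent) is active or passive. When $\FF$ is active, $\ell$ is a Player action, and the internal reductions together with the Player move lift verbatim to a \WBOGS move $\FFb \ArrWB{\ell}\FFb_1$ (rule PQ additionally pushing the fresh continuation $q$ onto the stack); using $\FFb \wbWB \GGb$ gives $\GGb \ArrWB{\ell}\GGb_1$ with $\FFb_1 \wbWB \GGb_1$, which by stack erasure projects to $\GG \LongrightarrowA\arrA{\ell}\GG_1$; one then decomposes $\FF_1$ and $\GG_1$ along the tensor product — for PQ, into the value part $\conf{\pmap{y}{V},\ldots}$ and the context part $\conf{\pmap{q}{(E,p)},\ldots}$, the latter being triggerable from $\FFb_1$ precisely because $q$ is the top of its stack — and reads off $\FF_1 \uptoComp{\R} \GG_1$ from the witnesses $\FFb_1,\GGb_1$. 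When $\FF$ is passive, $\ell$ is an Opponent action, either OQ on a value or OA on a continuation; in the OA case $\FF = \conf{\pmap{p}{(E,q)},\phi}$, and the point is that the \WBOGS rule OA is applicable to the witness $\FFb$, which holds exactly because triggerability of $\FF$ from $\FFb$ forces $p$ to be the top of the stack of $\FFb$. The remainder is as in Lemma~\ref{l:OGS_COGS}: the \WBOGS match, after stack erasure, yields the required \AOGS match, and $\FF_1$, $\GG_1$ are decomposed into singleton pieces each related by $\R$ through the new witnesses. Finally, by soundness of singleton bisimulation up-to composition for \AOGS (the \AOGS analogue of Proposition~\ref{p:sing_bis}), $\R \subseteq \wbA$, whence $\es\FFb \wbA \es\GGb$.

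The main obstacle is the interplay between well-bracketing and matching \AOGS Opponent-answer moves: a priori \AOGS lets Opponent answer on any live continuation in an arbitrary, possibly non-well-bracketed, order, and such moves have no counterpart in \WBOGS. What makes the argument work is the insistence — via the up-to-composition technique — on playing the bisimulation game only on singleton configurations, since a singleton configuration carries at most one continuation name, which on valid stacked configurations is necessarily at the top of the stack; so the \WBOGS restriction becomes vacuous on the pieces we actually reason about. The remaining work — tracking freshness of $y,p,q$ and the bookkeeping between name-supports $\phi$ and $\phi\setminus\dom{\gamma}$ across the tensor decompositions — is routine, exactly as in the proof of Lemma~\ref{l:OGS_COGS}.
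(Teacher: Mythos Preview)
Your overall strategy is the paper's: define $\R$ on singleton \AOGS configurations via triggerable-subconfiguration witnesses related by $\wbWB$, and show $\R$ is a bisimulation up-to composition on the \AOGS LTS. But there is a genuine gap in your definition of $\R$: you require the stacked witnesses $\FFb,\GGb$ to be \emph{singleton}, whereas the paper allows them to be arbitrary stacked configurations. This breaks the inductive step. Starting from a singleton active witness $\FFb$ and performing a PQ, the derivative $\FFb_1$ carries two environment entries ($\pmap{y}{V}$ and $\pmap{q}{(E,p)}$) and is no longer singleton; yet it is precisely the pair $\FFb_1 \wbWB \GGb_1$ that you invoke as witnesses when you ``read off $\FF_1 \uptoComp{\R} \GG_1$ from the witnesses $\FFb_1,\GGb_1$''. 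Under your restricted definition these are ineligible, and you cannot in general split $\FFb_1 \wbWB \GGb_1$ into component-wise singleton $\wbWB$-pairs (no such decomposition lemma for $\wbWB$ is available; the decomposition results of Section~\ref{sec:tensor} are for complete-trace equivalence, not bisimilarity). The same issue recurs at every subsequent PQ.

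The fix is simply to drop ``singleton'' from the witness clause---the configurations $\FF,\GG$ in $\R$ stay singleton, but the stacked witnesses $\FFb,\GGb$ need not be---after which your argument coincides with the paper's. Note also that with general witnesses the observation in your first paragraph becomes immediate (no appeal to the shape of the stack is needed for the initial pair), and the paper indeed takes this route.
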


\begin{proof}
  Let $\R$ be the relation on \AOGS singleton configurations
  defined as $\FF \RR \GG$ when
   $\okC \FF\GG$ and there exists stacked configurations $\FFb, \GGb$
    with $\FFb \wbWB \GGb$ such that 
    $\FF$ and $\GG$ are triggerable singleton configurations
    respectively of $\FFb$ and $\GGb$.

  First notice that if $\FFb, \GGb$ are stacked singleton configurations and 
  $\FFb \wbWB \GGb$, we directly get that 
  $\es\FFb \RR \es\GGb$ since $\es\FFb$ and $\es\GGb$ are 
  triggerable subconfigurations of $\FFb$ and $\GGb$ respectively.
  
  We now show that $\R$ is a bisimulation up-to composition.
  We take $\FF \RR \GG$, such that there exists $\FFb, \GGb$
  as stated by the definition of $\R$.
  Suppose 
  \begin{equation}
  \label{e:arrA-left}
   \FF \arrA{\act} \FF_1 
  \end{equation} 
  in \AOGS.  
  First, we suppose that $\FF,\GG$ are both passive, 
  so that the transition is either OA or OQ.  
  We assume it is OA, since it is the hardest case, 
  Opponent answers being constrained in \WBOGS.
  We thus have, for some $p,V,x,y$ 
  \[\begin{array}{rcl}
   \FF &=&  \conf{ \gamma_1, \phi} \mbox{ with }  \gamma_1 = \pmap{p}{(E,q)} \\
   \act & = & \ansO{p}{x} \\
   \FF_1 & = & \conf{E[x],q, \emptymap,\phi \uplus \{x\}}
  \end{array}\]
  Moreover, we have $\GG = \conf{\delta_1, \phi}$ 
  with $\delta_1 = \pmap{p}{(\Ep,q)}$ for some evaluation context $\Ep$.
  Since $\FF,\GG$ are triggerable respectively from $\FFb,\GGb$,
  we get that both $\FFb,\GGb$ are passive
  and that the top element of the stack of $\FFb$ is $p$.
  From $\FFb \wbWB \GGb$, we get that $\okC{\FFb}{\GGb}$,
  so there exists a stack $\stt$ and two environments $\gamma, \delta$ sharing the same domain such that:  
  \[\begin{array}{rcl}
  \FFb & = & \conf{p::\stt,\gamma_1 \cdot \gamma,\phi \uplus \dom{\gamma}} \\ 
  \GGb & = & \conf{p::\stt,\delta_1 \cdot \delta,\phi \uplus \dom{\delta}} 
  \end{array}\]
  We have 
  $\FFb \arrWB{\act} \conf{E[x],q,\sigma, \gamma, \phi\uplus \dom{\gamma} \uplus \{x\}} 
  \defi \FFb'$. 
  Since $\FFb \wbWB \GGb$, 
  \[\GGb  \arrWB{\act} \conf{E'[x],q,\sigma, \delta, \phi \uplus \dom{\delta} \uplus \{x\}}  
  \defi \GGb'\]
  and $\FFb' \wbWB \GGb'$.
  Let $\GG_1 = \conf{\Ep[x],q,\delta_1,\phi}$,
  we also have 
  $\GG \arrA \act \GG_1$.
  This is sufficient to match \reff{e:arrA-left}, as 
  $\FF_1 \uptoComp \R  \GG_1$. Indeed we have:
  \begin{itemize}  
  \item $\conf{\gamma_1,\phi\uplus \{y,p\}} \R  \conf{\delta_1,\phi\uplus \{y,p\}}$
  since $\FF \RR \GG$;
  \item  $\conf{E[x],q, \phi\uplus \{x\}} \R  \conf{\Ep[x],q, \phi\uplus \{x\}}$
  since they are triggerable subconfigurations respectively of $\FFb'$ and $\GGb'$.
  \end{itemize}
  
   Now suppose that $\FF,\GG$ are active,
   then the proof is similar to the one for Lemma~\ref{l:OGS_COGS}.
\end{proof} 

\reviewComment{Lemma 10.4: this statement would be better appreciated with an
example. It seems to me that the point is that the whole "syntax tree"
(enf) is already explored following the wb discipline, because the
P-views are well-bracketed. Is this what is going on?}
\reviewCommentDS{I am not sure i'd like to add an example. 
I would simply add a sentence or two here, confirming the explanation given by the reviewer
}
\reviewCommentGJ{I agree with the beginning of the explanation of the reviewer, that the whole "syntax tree"
(enf) is already explored following the wb discipline. But I'm still trying to understand why the fact that P-views are well-bracketed
is important here. I think it comes from the fact that in the proof, we always work with singleton configurations (thanks to the up-to composition technique), 
which indeed corresponds to only considering the different P-views of a trace.}

\begin{remark}
In the proof of Lemma~\ref{l:AOGS_WBOGS}, we crucially use the fact that the
interaction on singleton configurations is always well-bracketed, as
the stack can have at most one element. Following Remark~\ref{rm:up-to-comp-gs}, this corresponds
to the fact that in game semantics, a P-view is always well-bracketed for Opponent. 
\end{remark}

\subsection{Complete Trace Equivalence}

To relate trace denotation to contextual equivalence, it is important to 
be able to represent terminating interactions between a term and a context.
To do so, 
we consider the notion of \emph{complete trace}, 
that has been introduced in game semantics~\cite{abramsky1997linearity} to  
obtain full-abstraction results for languages with mutable 
store. A trace is said to be complete when all its questions
are answered.
The notion of complete trace is normally considered only for well-bracketed traces,
we generalize it here to more general traces,
and with configurations having already Proponent continuation names.

\begin{definition}
\label{def:ctrace}
 a trace $\tr$ 
 is \emph{complete for an \AOGS (resp. \COGS) configuration $\FF$} if
 $\FF \ArrA{\tr}$ (resp. $\FF \ArrC{\tr}$) 
 and the subject names of the unjustified answers in $\tr$
 are exactly the Player continuation names of $\FF$.
 This definition is extended to a stacked configuration $\FFb$
 with $\stt$ its stack,
 by asking that $\FFb \ArrWB{\tr}$ and $\stt \pshdTrans{\tr} \emptystack$.
 Then we say that a configuration is \emph{terminating} 
 if it has at least one complete trace.
\end{definition}

So a trace $\tr$ is complete for $\FF$ when it has no unanswered questions,
and furthermore $\tr$ has also answers to all the pending questions of $\FF$.

\begin{definition}
  \label{d:TEwbcN}
  We write $\FFb_1 \TEwbc \FFb_2$ if
  $\okC{\FFb_1}{\FFb_2}$
  and $\FFb_1,\FFb_2$ have the same set of complete traces in \WBOGS. 
  Similarly, we write $\FF_1 \TEc \FF_2$ (resp $\FF_1 \TEpc \FF_2$)
  if $\okC{\FF_1}{\FF_2}$ 
  and $\FF_1,\FF_2$ have the same 
  sets of complete traces in \AOGS (resp. \COGS).
\end{definition} 

We now characterize configurations that are reached
by complete traces.

\begin{definition}
  \label{d:sp}
  An \AOGS or stacked configuration
  is \emph{strongly passive} if it is passive and its 
  continuation structure is empty.
  For \COGS, we also ask that its running term be empty.
\end{definition} 
This means, for stacked configurations,
that the stack is empty, and the configuration is of form
\[   
    \conf{\emptystack, 
      \pmap{x_1}{V_1}\cdot \ldots \cdot  \pmap{x_n}{V_n},\phi} \hskip .8cm n\geq 0 \]
so that indeed we can consider the interaction terminated, since Opponent
cannot answer anymore, even if it can still interrogate one of the $x_i$.
  
Traces that correspond to reduction to strongly passive configurations 
are complete traces, a result that holds for \WBOGS, \AOGS and \COGS.
  
\begin{lemma}
\label{l:WBOGS_complete-strongly-passive}
  Taking $\FFb,\FFb'$ two stacked configurations 
  and $\tr$ a trace such that $\FFb \ArrWB \tr \FFb'$.  
  Then $\tr$ is complete for $\FFb$
  iff $\FFb'$ is strongly passive.
\end{lemma}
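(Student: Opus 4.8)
The plan is to argue in terms of the pushdown system that defines well-bracketing, using Lemma~\ref{l:WBs_lts} to link it with the \WBOGS LTS. First I would record the \emph{synchronisation invariant} for valid stacked configurations: the continuation names occurring in $\dom\gamma$ are exactly those occurring in the stack $\stt$ (and their order in $\stt$ is the one dictated by the continuation structure of $\gamma$). This is checked by inspecting the rules of Figure~\ref{fig:stacked-lts}: rule (PQ) adds the fresh continuation $q$ to $\dom\gamma$ and pushes $q$ on $\stt$; rule (OA) removes the answered name from both; and the remaining rules (P$\tau$), (PA), (OQ), (IOQ) preserve the invariant, since they affect neither the continuation part of $\gamma$ nor $\stt$. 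In particular, for a valid stacked configuration having an empty stack is equivalent to having an empty continuation structure, so that a strongly passive configuration (Definition~\ref{d:sp}) is precisely a passive configuration with an empty stack.

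Next I would prove, by induction on $\tr$, that along a run $\FFb \ArrWB\tr \FFb'$ the pushdown run tracks the whole ``pending question'' structure of the configuration, not merely its stack component. For this I would associate to each stacked configuration its \emph{calling stack} — the shuffle of the continuation structure of $\gamma$ with $\stt$, topped by the current continuation when the configuration is active, as sketched after Figure~\ref{fig:stacked-lts} — and establish: (i) each \WBOGS transition changes the calling stack exactly as the matching pushdown rule changes its stack, so that running the calling stack of $\FFb$ on $\tr$ yields the calling stack of $\FFb'$; and (ii) the calling stack of a configuration is $\emptystack$ iff that configuration is strongly passive (the ``if'' part is Definition~\ref{d:sp} together with the synchronisation invariant; the ``only if'' part is that an empty calling stack forces $\stt=\emptystack$, an empty continuation structure of $\gamma$, and the absence of a current continuation, i.e.\ passivity). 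Since every pushdown rule is deterministic in the action and the current stack, the induced partial function is well defined, so ``the calling stack of $\FFb$ run on $\tr$ reaches $\emptystack$'' is — via Definition~\ref{def:ctrace} and Lemma~\ref{l:WBs_lts} — equivalent to ``$\tr$ is complete for $\FFb$''. Combining (i) and (ii): $\tr$ is complete for $\FFb$ iff the calling stack of $\FFb'$ is $\emptystack$ iff $\FFb'$ is strongly passive.

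The main obstacle is point (i), and specifically the interplay of rules (PQ) and (OA): (PQ) \emph{buries} the current continuation $p$ inside $\gamma$ as the second component of the entry $\pmap q{(E,p)}$ while pushing the fresh $q$ on $\stt$, and the matching (OA) later pops $q$ and restores $p$; one has to verify that the calling-stack bookkeeping — exactly the promised ``shuffling of the continuation structure of $\gamma$ with the stack $\stt$'' — survives this pair, and likewise that (OQ), which introduces a fresh current continuation answered only by a future (PA), is correctly accounted for on the calling stack even though it leaves $\stt$ and $\gamma$ untouched. The remainder (the base case $\tr=\emptytrace$ and the other rule-against-rule checks) is routine case analysis, most of which is already packaged in Lemma~\ref{l:WBs_lts}.
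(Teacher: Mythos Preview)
Your synchronisation invariant (first paragraph) is exactly what the paper relies on, via the remark following Definition~\ref{d:sp}, and it is essentially all that is needed. From there the paper's argument is immediate: the proof of Lemma~\ref{l:WBs_lts} shows that the pushdown run $\stt\pshdTrans{\tr}\stt'$ lands precisely on the stack component $\stt'$ of $\FFb'$; Definition~\ref{def:ctrace} (the clause for stacked configurations) says $\tr$ is complete iff this run reaches $\emptystack$, i.e.\ iff $\stt'=\emptystack$; and by the invariant, $\stt'=\emptystack$ is the same as $\FFb'$ being strongly passive. No calling stack enters the picture.

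Your second paragraph takes a detour that introduces a real mismatch. The paper's pushdown system deliberately does \emph{not} track Opponent continuation names (see the remark right after the four pushdown rules: ``we do not keep track of the Opponent continuation names''), so the (OQ) pushdown rule leaves the stack unchanged --- yet (OQ) pushes the fresh current continuation $p$ onto your calling stack. Hence your claim (i), that each \WBOGS transition changes the calling stack exactly as the matching pushdown rule changes its stack, fails at (OQ), and dually at (PA). You could repair this by working with a second, richer pushdown that also pushes on (OQ) and pops on (PA), and then separately arguing that it reaches $\emptystack$ iff the paper's pushdown on $\stt$ does --- which amounts to re-proving that Player's own bracketing is automatic for pure $\lambda$-terms. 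That can be made to work, but it is strictly more effort than the paper's route, which never leaves the stack component $\stt$ that Definition~\ref{def:ctrace} already speaks about.
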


\begin{proof}
  Writing $\stt,\stt'$ for the stack respectively of $\FFb,\FFb'$,
  we get from Lemma~\ref{l:WBs_lts} that
  $\stt \pshdTrans{\tr} \stt'$.
  By definition, $\tr$ is complete for $\FFb$ iff $\stt'$ is empty,
  and $\FFb'$ is strongly passive iff $\stt'$ is empty.
\end{proof}

\begin{lemma}
  \label{l:OGS_complete-strongly-passive}
    Taking $\FF,\FF'$ two \AOGS (resp. \COGS) configurations
    and $\tr$ a trace  s.t. $\FF \ArrA \tr \FF'$
    (resp. $\FF \ArrC \tr \FF'$).  
    Then $\tr$ is complete for $\FF$
    iff $\FF'$ is strongly passive. 
\end{lemma}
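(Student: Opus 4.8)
The plan is to adapt the proof of Lemma~\ref{l:WBOGS_complete-strongly-passive}: there the well-bracketed stack was tracked along $\tr$ via a pushdown system, whereas for \AOGS and \COGS, which impose no stack discipline, we track instead the \emph{set} of still-pending continuation names. For a configuration $\FF$ write $\mathrm{Pend}(\FF)$ for its set of available continuation names --- for \AOGS this is the toplevel continuation name when $\FF$ is active, together with the continuation names mentioned in $\cs(\gamma)$; for \COGS it is $\dom{A}$ together with the continuation names in $\cs(\gamma)$. Two preliminary facts are needed. First, $\FF$ is strongly passive iff $\mathrm{Pend}(\FF) = \emptyset$: an active \AOGS configuration has its toplevel continuation in $\mathrm{Pend}$, and a \COGS configuration with non-empty running term has $\dom{A} \subseteq \mathrm{Pend}$, so $\mathrm{Pend}(\FF) = \emptyset$ already forces $\FF$ passive (resp.\ with empty running term), and then forces $\cs(\gamma) = \emptyset$. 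Second, by the freshness convention of Remark~\ref{r:bn}, a continuation name occurring as the \emph{subject} of an answer action in $\tr$ is either \emph{introduced} earlier in $\tr$ (by an IOQ, OQ or PQ), making that answer justified, or else free in $\FF$, making it unjustified; and since continuations are linear, each continuation name is answered at most once along $\tr$. With this, ``$\tr$ is complete for $\FF$'' unfolds to the conjunction of (a) every continuation name introduced along $\tr$ is answered in $\tr$ (completeness of $\tr$ as a trace) and (b) the subjects of the unjustified answers of $\tr$ are exactly the elements of $\mathrm{Pend}(\FF)$.

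The heart of the argument is a bookkeeping invariant, established by induction on the length of $\tr$ with a case split on the last action using the six rules of Figure~\ref{fig:seq-ogs-lts} (resp.\ Figure~\ref{fig:conc-ogs-lts}): writing $U(\tr)$ for the set of subjects of unjustified answers of $\tr$ and $J(\tr)$ for the set of continuation names introduced but not answered in $\tr$, if $\FF \ArrA{\tr} \FF'$ (resp.\ $\FF \ArrC{\tr} \FF'$) then
\[ \mathrm{Pend}(\FF') \;=\; \bigl(\mathrm{Pend}(\FF) \setminus U(\tr)\bigr) \;\cup\; J(\tr) . \]
The P$\tau$ step changes none of the three sets. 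The PA and OA steps remove from $\mathrm{Pend}$ exactly the continuation being answered, matching either its removal from $J(\tr)$ (justified case) or its addition to $U(\tr)$ (unjustified case); in the \AOGS OA rule one checks that the garbage-collected name $q$ leaves $\dom{\gamma}$ while the continuation $r$ recorded in $\gamma(q)=(E,r)$ is revived as the new toplevel name, so $\mathrm{Pend}$ loses exactly $q$. The PQ, OQ and IOQ steps add to $\mathrm{Pend}$ exactly the freshly introduced continuation (also added to $J(\tr)$); in the \AOGS PQ rule the old toplevel $p$ stays available, now recorded inside $\gamma(q)=(E,p)$, hence is not lost. The induction step is then a direct match of these set operations. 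The \COGS proof is identical rule by rule, the additional concurrency being irrelevant since the invariant speaks only of the cumulative collection of actions, not of their order. As a by-product of the same induction one also gets $U(\tr) \subseteq \mathrm{Pend}(\FF)$ always, since an unjustified answer at $c$ requires $c$ to be available at that point while $c$ is not introduced within $\tr$.

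Combining the pieces: $\FF'$ is strongly passive iff $\mathrm{Pend}(\FF') = \emptyset$, and by the invariant together with $U(\tr) \subseteq \mathrm{Pend}(\FF)$ this holds iff $J(\tr) = \emptyset$ and $U(\tr) = \mathrm{Pend}(\FF)$, i.e.\ iff conditions (a) and (b) hold, i.e.\ iff $\tr$ is complete for $\FF$; both directions of the stated equivalence drop out at once. The only genuinely delicate point is getting the invariant exactly right in the OA and PQ cases, where a continuation is deferred into the environment and later restored: one must verify that such a continuation remains in $\mathrm{Pend}$ throughout, and is counted neither as ``answered'' nor as ``re-introduced''. All remaining verifications are routine.
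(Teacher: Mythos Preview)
Your proof is correct and takes a more explicit route than the paper's brief argument. The paper tracks only Player continuation names and argues directly: completeness forces every Player question in $\tr$ and every Player continuation name of $\FF$ to be answered, hence $\FF'$ has none left and is strongly passive; the converse runs the same reasoning backwards. You instead track the larger set $\mathrm{Pend}(\FF)$ of \emph{all} available continuation names --- including the toplevel and the Opponent names stored as second components in $\cs(\gamma)$ --- and establish the bookkeeping invariant $\mathrm{Pend}(\FF') = (\mathrm{Pend}(\FF)\setminus U(\tr))\cup J(\tr)$ by a rule-by-rule induction on the trace. Your approach buys a direct handle on the ``$\FF'$ is passive'' half of ``strongly passive'' (the toplevel lives in $\mathrm{Pend}$), which the paper's argument leaves implicit; in exchange you must verify how Opponent continuation names migrate between toplevel and $\cs(\gamma)$ under PQ and OA, which the paper never needs to do.

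One point worth flagging rather than presenting as a literal unfolding: Definition~\ref{def:ctrace} requires the subjects of the unjustified answers to be exactly the \emph{Player} continuation names of $\FF$, whereas your condition~(b) equates them with $\mathrm{Pend}(\FF)$. For an \AOGS active configuration these differ, since the toplevel $p$ is classified as an O-name (and likewise the second components of $\cs(\gamma)$). It is your reading --- consistent with the paper's informal gloss ``no unanswered questions, and answers to all the pending questions of $\FF$'' --- that actually makes the equivalence with ``strongly passive'' go through; under the literal wording the lemma would already fail on $\FF=\conf{V,p,\emptymap,\phi}$ with trace $\ansP{p}{x}$.
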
 

\begin{proof}
  Suppose that $\tr$ is complete for $\FF$.
  All Player questions of $\tr$ having been answered during the interaction, 
  the set of Player continuation names of $\FF'$
  is included in the one of $\FF$.
  Moreover, $\tr$ contains answers for all Player continuation names of $\FF$
  so that no Player continuation name of $\FF'$ is a Player continuation name of $\FF$.
  Therefore, its set of continuation names is empty,
  so that $\FF'$ is strongly passive.

  Reciprocally, if $\FF'$ is passive, then its set of continuation names is empty.
  Thus, $\tr$ cannot have pending questions and must
  have answered all the questions of $\FF$.
  So $\tr$ is complete for $\FF$.
\end{proof}

Complete well-bracketed trace equivalence corresponds to contextual equivalence for a 
$\lambda$-calculus with store, as proved in~\cite{abramsky1997linearity} for 
a compositional game model of Idealized Algol,
and in~\cite{laird2007} for an operational game model of RefML.
\cutLMCSsecondRound{
This last result can be adapted to the $\lambdarho$-calculus, 
an untyped call-by-value language with store. Its definition is given in 
Appendix~\ref{a:LambdaRho}, with the definition of 
its contextual equivalence $\ctxeqlr$.
Being a language where the exchange of locations is forbidden,
the proof of full abstraction of the \WBOGS semantics
for $\lambdarho$-calculus can be obtained by a direct adaptation of the result proven in~\cite{JM21}
for the `HOS' language.

\begin{theorem}
\label{t:rho}
Let $M_1,M_2 \in \LasV$
 with free variables in $\phi$.
We have $\initconf{\phi}{M} \TEwbc \initconf{\phi}{N}$
 iff  $M \ctxeqlr N$.
\end{theorem}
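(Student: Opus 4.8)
The plan is to transfer an existing full-abstraction result rather than reprove it from scratch. The starting point is the full-abstraction theorem of \cite{JM21} for the `HOS' language, which characterises contextual equivalence of higher-order programs with store by complete-trace equivalence in a well-bracketed operational game model. First I would spell out the $\lambdarho$-calculus and its contextual equivalence $\ctxeqlr$ (deferred to Appendix~\ref{a:LambdaRho}): it is the untyped call-by-value $\lambda$-calculus extended with a store of higher-order references, but \emph{without} the ability to pass locations as first-class values. This restriction is exactly what makes the \WBOGS model, in which Opponent may only interact through variables and continuation names exchanged so far, an adequate interface for $\lambdarho$-contexts: a context cannot smuggle a location name across the boundary, so the game-semantic moves remain the calls/returns of functions, just as in \WBOGS. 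I would then check, clause by clause, that the operational game model of \cite{JM21} instantiates to the \WBOGS LTS of Figure~\ref{fig:stacked-lts} when the language is cut down to $\lambdarho$; the only genuinely new ingredient is that our configurations are allowed to carry Player continuation names from the start, but the notion of complete trace (Definition~\ref{def:ctrace}) has been designed precisely to accommodate this, via the condition $\stt \pshdTrans{\tr} \emptystack$.

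The core of the argument then splits into the two usual halves. For \emph{soundness} (adequacy plus a compositionality/definability argument), one shows that if $M \ctxeqlr N$ fails, witnessed by a $\lambdarho$-context $C$ with $C[M]$ and $C[N]$ having different termination behaviour, then the interaction of $C$ with the term produces a complete trace of $\initconf{\phi}{M}$ that is not a complete trace of $\initconf{\phi}{N}$ (or vice versa). This uses the fact, recorded in Lemmas~\ref{l:WBOGS_complete-strongly-passive} and~\ref{l:OGS_complete-strongly-passive}, that a trace is complete exactly when it drives the configuration to a strongly passive state, i.e.\ corresponds to a terminating interaction; combined with adequacy of the \WBOGS LTS with respect to $\lambdarho$-evaluation (standard operational correspondence between reduction-in-context and the P$\tau$/PA/PQ rules), this matches observable termination with the existence of a shared complete trace. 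For \emph{completeness} (the easy direction of full abstraction), one shows conversely that any difference in the complete-trace sets is already exposed by some $\lambdarho$-context: given a complete trace $\tr$ of one configuration but not the other, one reconstructs a $\lambdarho$-context that plays the Opponent half of $\tr$ (a finite interaction, hence definable, using store to remember the evolving environment $\gamma$ and the stack discipline to schedule answers), and this context separates $M$ from $N$. Both directions are verbatim adaptations of~\cite{JM21}, so I would present them as such, only flagging where the `HOS' types are replaced by the untyped discipline and where the from-the-start Player continuation names require a mild generalisation of the definability lemma.

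The main obstacle I anticipate is the bookkeeping around the generalisation to configurations that already carry Player names and a non-empty stack: \cite{JM21} phrases its result for the denotation of a closed (or open) \emph{term}, i.e.\ for an initial configuration, whereas Definition~\ref{d:TEwbcN} and the statement of the theorem restrict attention to $\initconf{\phi}{M}$, so in fact we only need the special case and the generalisation is not strictly required for Theorem~\ref{t:rho} itself. The remaining delicate point is checking that the no-location-passing restriction of $\lambdarho$ is genuinely what is needed for the \WBOGS interface to be faithful — that a context with higher-order store but without location values cannot observe more than the complete well-bracketed traces — which is precisely the content that~\cite{JM21} (building on \cite{abramsky1997linearity,laird2007}) already establishes for `HOS'; the adaptation is routine because erasing types does not add contextual power, and well-bracketing of the Opponent side is forced by the absence of control operators in $\lambdarho$ exactly as the pushdown system of Section~\ref{s:wb-ogs} records. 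I would therefore devote most of the written proof to making the translation `HOS $\leadsto \lambdarho$' explicit and invoking \cite{JM21} for the substantive game-semantic work.
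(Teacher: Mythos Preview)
Your proposal is correct and takes essentially the same approach as the paper: the paper does not give a proof of Theorem~\ref{t:rho} at all, but simply states that the result is ``a direct adaptation of the result proven in~\cite{JM21} for the `HOS' language,'' noting that $\lambdarho$ forbids location exchange. You have in fact spelled out more of the adaptation (soundness via adequacy and context reconstruction, completeness via Opponent definability, the role of Lemmas~\ref{l:WBOGS_complete-strongly-passive}--\ref{l:OGS_complete-strongly-passive}) than the paper does, which is entirely appropriate here.
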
}

\section{Tensor Product of OGS configurations}
\label{sec:tensor}

Following the introduction of the tensor product of configurations
for C-OGS in Definition~\ref{d:tensor_cogs}, we 
introduce a similar definition for the Alternating and Well-Bracketed LTS.
We also prove results relating the 
(complete) traces generated by $\FF \otimes \GG$
with the interleaving of traces generated by $\FF$ and $\GG$.

\subsection{Interleaving of complete traces for \COGS}

We first state the variant of Lemma~\ref{l:COGS-interleave-tensor}
for complete traces.
So given a complete trace for a configuration, the projections of the
trace on sub-configurations are also complete, as proven in the following lemma.

\begin{lemma}
  \label{l:COGS-decomp-ct}
  Suppose 
  \begin{itemize}
  \item $\FF = \appendCon{\FF_1}{\FF_2}$,
  \item $\FF \ArrC \tr \FF'$ and $ \tr $ is a complete trace
  for $\FF$.
  \end{itemize} 
  Then for both $i \in \{1,2\}$,
  there exists $\FF_i'$ and a complete trace $\tr_i$
  for $\FF_i$ such that
  \begin{enumerate}
  \item ${\FF_i} \ArrC{\tr_i} \FF_i'$,
  \item $\tr \in \interl{\tr_1}{\tr_2}$,
  \item $\FF' = \appendCon{\FF'_1} {\FF'_2}$.
  \end{enumerate}
\end{lemma}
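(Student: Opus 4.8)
The plan is to reduce the statement to the plain-trace decomposition already available, namely the first part of Lemma~\ref{l:COGS-interleave-tensor}, and then to upgrade ``complete for $\FF$'' to ``complete for $\FF_i$'' by routing the notion of completeness through strong passivity rather than through the syntactic ``all questions answered'' condition.

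First I would apply the first part of Lemma~\ref{l:COGS-interleave-tensor} to the hypotheses $\FF = \appendCon{\FF_1}{\FF_2}$ and $\FF \ArrC{\tr} \FF'$. This already yields, for $i = 1,2$, a configuration $\FF'_i$ and a trace $\tr_i$ with $\FF_i \ArrC{\tr_i} \FF'_i$, with $\tr \in \interl{\tr_1}{\tr_2}$, and with $\FF' = \appendCon{\FF'_1}{\FF'_2}$. Thus items~(1)--(3) of the statement hold verbatim, and the only thing left to establish is that each $\tr_i$ is a \emph{complete} trace for $\FF_i$.

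For this I would use Lemma~\ref{l:OGS_complete-strongly-passive}, which characterises completeness: for $\FF \ArrC{\tr} \FF'$, the trace $\tr$ is complete for $\FF$ iff $\FF'$ is strongly passive. Since $\tr$ is complete for $\FF$, this gives that $\FF'$ is strongly passive, i.e. (Definition~\ref{d:sp}) its running term is empty and its continuation structure $\cs(\cdot)$ is empty. Writing $\FF'_i = \conf{A'_i, \gamma'_i, \phi'_i}$, we have $\FF' = \conf{A'_1 \cdot A'_2, \gamma'_1 \cdot \gamma'_2, \phi'_1 \cup \phi'_2}$; emptiness of $A'_1 \cdot A'_2$ forces $A'_1 = A'_2 = \emptymap$, and since $\gamma'_1, \gamma'_2$ have disjoint domains we have $\cs(\gamma'_1 \cdot \gamma'_2) = \cs(\gamma'_1) \cup \cs(\gamma'_2)$, so emptiness of the left side forces $\cs(\gamma'_1) = \cs(\gamma'_2) = \emptyset$. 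Hence both $\FF'_1$ and $\FF'_2$ are strongly passive, and applying Lemma~\ref{l:OGS_complete-strongly-passive} in the converse direction to each $\FF_i \ArrC{\tr_i} \FF'_i$ we conclude that $\tr_i$ is complete for $\FF_i$, which finishes the proof.

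The argument is essentially bookkeeping; the only point deserving a moment's care is that strong passivity of a tensor product $\appendCon{\FF'_1}{\FF'_2}$ is inherited by each factor, which relies on the fact that the components of a tensor product have disjoint domains, so no cancellation occurs in $A'_1 \cdot A'_2$ or in $\gamma'_1 \cdot \gamma'_2$. I do not expect a real obstacle here: the potential worry --- that a question answered inside $\tr$ might have its justifying question split off into the \emph{other} component's projection, spoiling completeness of a projection --- is precisely what is sidestepped by characterising completeness via reaching a strongly passive configuration, with everything substantive already delegated to Lemma~\ref{l:COGS-interleave-tensor} (whose own proof passes through \piI and non-interaction, via Lemmas~\ref{l:dec_pi} and~\ref{l:compat_cannot-interact}) and to the two directions of Lemma~\ref{l:OGS_complete-strongly-passive}.
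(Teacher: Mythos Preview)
Your proposal is correct and follows essentially the same approach as the paper: apply Lemma~\ref{l:COGS-interleave-tensor}(\ref{l:COGS-interleave-tensor_1}) to get the decomposition, use Lemma~\ref{l:OGS_complete-strongly-passive} to conclude $\FF'$ is strongly passive, deduce that each $\FF'_i$ is strongly passive, and apply Lemma~\ref{l:OGS_complete-strongly-passive} in the converse direction. You have simply spelled out in more detail the step the paper leaves implicit, namely that strong passivity of a tensor product descends to its factors.
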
              
  
\begin{proof}
  From Lemma~\ref{l:COGS-interleave-tensor}(\ref{l:COGS-interleave-tensor_1}) 
  we get the existence of configurations $\FF_i'$ and traces $\tr_i$ 
  such that the three items above are verified.
  From Lemma~\ref{l:OGS_complete-strongly-passive}, we get that $\FF'$
  is strongly passive, so that both $\FF'_1,\FF'_2$ are strongly passive as well.
  Thus applying Lemma~\ref{l:OGS_complete-strongly-passive}
  in the other direction,
  we get that both $\tr_1,\tr_2$ are complete.
\end{proof} 

\begin{lemma}
  \label{l:COGS-recomp-ct}
  Suppose 
  \begin{enumerate}
  \item $\FF = \FF_1 \otimes \FF_2$,   
  \item $\FF_i \ArrC{\tr_i} \FF_i'$ where 
  $\tr_i$ is a complete trace for $\FF_i$, 
   $i=1,2$. 
  \end{enumerate} 
  Then there is a trace 
   $\tr \in \interl{\tr_1}{\tr_2}$ complete for $\FF$ such that
  $\FF \ArrC{\tr}{\FF'_1} \otimes {\FF'_2}$. 
\end{lemma}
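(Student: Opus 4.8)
The plan is to derive this as a direct consequence of Lemma~\ref{l:COGS-interleave-tensor}(\ref{l:COGS-interleave-tensor_2}) together with the characterisation of completeness via strong passivity (Lemma~\ref{l:OGS_complete-strongly-passive}), mirroring the proof of Lemma~\ref{l:COGS-decomp-ct} but run in the opposite direction. Since the hypothesis $\FF = \FF_1\otimes\FF_2$ presupposes that $\FF_1,\FF_2$ are compatible, the tensor and all the cited interleaving results apply.

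First I would fix an arbitrary trace $\tr\in\interl{\tr_1}{\tr_2}$; such a trace exists because the interleaving of two finite sequences is always non-empty. At this point I would also discharge the routine freshness bookkeeping: the bound names of $\tr_1$ (resp.\ $\tr_2$) are fresh for $\FF_1$ (resp.\ $\FF_2$) by the convention of Remark~\ref{r:bn}, and, after an $\alpha$-renaming of one of the two traces, one may assume that all bound names occurring in the chosen interleaving are pairwise distinct and fresh for $\FF = \FF_1\otimes\FF_2$, so that $\tr$ is a legitimate trace out of $\FF$. Applying Lemma~\ref{l:COGS-interleave-tensor}(\ref{l:COGS-interleave-tensor_2}) to $\FF_1\ArrC{\tr_1}\FF_1'$, $\FF_2\ArrC{\tr_2}\FF_2'$ and $\tr\in\interl{\tr_1}{\tr_2}$ then yields $\FF\ArrC{\tr}\FF_1'\otimes\FF_2'$, which establishes the transition and the announced shape of the target configuration.

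It remains to check that $\tr$ is complete for $\FF$. Since $\tr_i$ is complete for $\FF_i$, Lemma~\ref{l:OGS_complete-strongly-passive} gives that $\FF_i'$ is strongly passive for $i=1,2$, i.e.\ each has empty running term and empty continuation structure. The tensor product of two strongly passive \COGS configurations is again strongly passive: the running term of $\FF_1'\otimes\FF_2'$ is $\emptymap\cdot\emptymap=\emptymap$, and its continuation structure is $(\dom{\emptymap},\cs(\gamma\cdot\delta))$ with $\cs(\gamma\cdot\delta)=\cs(\gamma)\cup\cs(\delta)=\emptyset$. Hence $\FF_1'\otimes\FF_2'$ is strongly passive, and applying Lemma~\ref{l:OGS_complete-strongly-passive} in the converse direction to $\FF\ArrC{\tr}\FF_1'\otimes\FF_2'$ we conclude that $\tr$ is complete for $\FF$, as required.

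I expect the only mildly delicate point to be the $\alpha$-renaming/freshness argument, needed so that the chosen element of $\interl{\tr_1}{\tr_2}$ qualifies as a trace out of $\FF$ under the global freshness convention; all the genuine combinatorics of interleavings is already packaged inside Lemma~\ref{l:COGS-interleave-tensor} (itself obtained via the $\pi$-calculus), and the transfer of completeness is immediate once strong passivity is seen to be preserved by $\otimes$.
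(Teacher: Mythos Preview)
Your proof is correct and follows essentially the same route as the paper: both invoke Lemma~\ref{l:COGS-interleave-tensor}(\ref{l:COGS-interleave-tensor_2}) to obtain the interleaved trace and the transition $\FF \ArrC{\tr} \FF_1'\otimes\FF_2'$. The only minor difference is in the justification of completeness: the paper argues directly from the definition (the interleaving of two complete traces from $\FF_1$ and $\FF_2$ is complete for $\FF_1\otimes\FF_2$), whereas you route through strong passivity via Lemma~\ref{l:OGS_complete-strongly-passive}, mirroring the proof of Lemma~\ref{l:COGS-decomp-ct}; both arguments are equally valid and of comparable length.
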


\begin{proof}
  The existence of $\tr$ such that $\FF \ArrC{\tr}{\FF'_1} \otimes {\FF'_2}$ follows from 
  Lemma~\ref{l:COGS-interleave-tensor}(\ref{l:COGS-interleave-tensor_2}). 
  Since $\tr$ is the interleaving of two complete traces respectively from 
  $\FF_1$ and $\FF_2$, we get that it is complete for $\FF = \FF_1 \otimes \FF_2$.
\end{proof} 

\subsection{\AOGS}
\label{a:tensor-aogs}

\begin{definition}
Taking $\FF_1,\FF_2$ two compatible \AOGS configurations, with 
 $\gamma_i,\phi_i$ their respective environment and set of names,
$\appendCon{\FF_1} {\FF_2}$ is defined as:
\begin{itemize}
 \item $\conf{M,p,\gamma_1 \cdot \gamma_2,\phi_1 \cup \phi_2}$
 when one of $\FF_1,\FF_2$  is active, with $M,p$ its toplevel term and continuation name;
 \item $\conf{\gamma_1 \cdot \gamma_2,\phi_1 \cup \phi_2}$
 when $\FF_1,\FF_2$  are passive;
\end{itemize}
\end{definition}

We can then state the adaptation of Lemma~\ref{l:COGS-interleave-tensor} to \AOGS.
We write $\interlA{\tr_1}{\tr_2}$ for the subset of $\interl{\tr_1}{\tr_2}$
formed by \emph{alternating} traces, with the
extra condition that if one of $\tr_1,\tr_2$
starts with an output, then the resulting interleaved trace must also start with an output.

\begin{lemma}
\label{c:dec_GS}
Suppose $\FF_1,\FF_2$ are compatible \AOGS configurations, and at most one of the two is active.
Then the set of traces generated by $\FF_1 \otimes \FF_2$
is equal to the set of interleavings $\interlA{\tr_1}{\tr_2}$,
with $\FF_1 \ArrA{\tr_1}$ and $\FF_2 \ArrA{\tr_2}$.
That is:
\begin{enumerate}
  \item\label{l:dec_GS_1} if $\appendCon{\FF_1}{\FF_2} \ArrA{\tr} \FF$, 
  then, for $i=1,2$,  there are traces $ \tr_i$
  such that $ {\FF_i} \ArrA{\tr_i} \FF'_i$, 
  and $\tr  \in \interlA{ \tr_1}{\tr_2}$, 
  and $\FF = \appendCon{\FF'_1}{\FF'_2}$;
  \item\label{l:dec_GS_2} conversely, if for $i=1,2$, we have
  $\FF_i \ArrA{ \tr_i} \FF'_i$,  
  and $\tr  \in \interlA {\tr_1}{\tr_2}$, 
  then $\appendCon{\FF_1}{\FF_2} \ArrA{\tr} 
  \appendCon{\FF'_1} {\FF'_2}$.
\end{enumerate}
\end{lemma}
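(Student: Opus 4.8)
The plan is to mimic the structure already used for Lemma~\ref{l:COGS-interleave-tensor}, but being careful about the extra alternation bookkeeping that \AOGS imposes. The cleanest route is to \emph{reduce to the \COGS result} rather than redo the $\pi$-calculus decomposition from scratch. The key observation is that, by Lemma~\ref{l:traces_AOGS_COGS}, the \AOGS traces of a configuration are exactly its \COGS traces that are alternating (and start with an output when the configuration is active), and the tensor product $\FF_1 \otimes \FF_2$ of \AOGS configurations, read as a \COGS configuration, is exactly the tensor product in the \COGS sense (Definition~\ref{d:tensor_cogs}): the running term of $\FF_1 \otimes \FF_2$ is a singleton if one of the two is active and empty if both are passive, matching Definition~\ref{a:tensor-aogs}. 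So both directions will follow by combining Lemma~\ref{l:COGS-interleave-tensor} with Lemma~\ref{l:traces_AOGS_COGS}, once the combinatorics on $\interlA{\cdot}{\cdot}$ versus $\interl{\cdot}{\cdot}$ are handled.

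\textbf{Direction (\ref{l:dec_GS_2}), the easy one.} Suppose $\FF_i \ArrA{\tr_i} \FF'_i$ and $\tr \in \interlA{\tr_1}{\tr_2}$. By Lemma~\ref{l:traces_AOGS_COGS} each $\FF_i \ArrC{\tr_i} \FF'_i$. Since $\interlA{\tr_1}{\tr_2} \subseteq \interl{\tr_1}{\tr_2}$, Lemma~\ref{l:COGS-interleave-tensor}(\ref{l:COGS-interleave-tensor_2}) gives $\FF_1 \otimes \FF_2 \ArrC{\tr} \FF'_1 \otimes \FF'_2$. It remains to check that this \COGS computation is in fact an \AOGS computation, for which, by Lemma~\ref{l:traces_AOGS_COGS}, it suffices that $\tr$ is alternating and that it starts with an output when $\FF_1 \otimes \FF_2$ is active. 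Alternation is exactly the defining condition of $\interlA{\tr_1}{\tr_2}$. For the "starts with an output" clause: $\FF_1 \otimes \FF_2$ is active iff one of the $\FF_i$, say $\FF_1$, is active, hence (being a valid \AOGS trace) $\tr_1$ starts with an output if nonempty; the extra side condition built into $\interlA{\cdot}{\cdot}$ then forces $\tr$ to start with an output too. (If $\tr_1 = \emptytrace$ then the first action of $\tr$ comes from $\tr_2$; but then $\FF_1 \otimes \FF_2$ active with $\FF_1$ never moving would make the whole interleaved trace behave like a trace of $\FF_2$ alone, and one argues it is then not active-compatible unless $\tr$ starts with an output — this boundary case needs a short explicit check.)

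\textbf{Direction (\ref{l:dec_GS_1}), the main obstacle.} Suppose $\FF_1 \otimes \FF_2 \ArrA{\tr} \FF$. By Lemma~\ref{l:traces_AOGS_COGS}, $\FF_1 \otimes \FF_2 \ArrC{\tr} \FF$ with $\tr$ alternating. Apply Lemma~\ref{l:COGS-interleave-tensor}(\ref{l:COGS-interleave-tensor_1}) to get $\FF_i \ArrC{\tr_i} \FF'_i$ with $\tr \in \interl{\tr_1}{\tr_2}$ and $\FF = \FF'_1 \otimes \FF'_2$. The work is to upgrade each $\FF_i \ArrC{\tr_i} \FF'_i$ to an \AOGS computation, i.e. to show each $\tr_i$ is alternating (and has the right leading polarity). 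This does \emph{not} follow formally from $\tr$ being alternating: an interleaving of two non-alternating traces can be alternating. The right argument is to track polarities through the interleaving. Because $\FF_1,\FF_2$ are compatible and (as in Lemma~\ref{l:compat_cannot-interact}) their $\pi$-images cannot interact, every name is "owned" by one side, and a subtle point is that in the \emph{alternating} global trace, Player and Opponent strictly alternate; since only Opponent can start a new thread, and $\FF_1\otimes\FF_2$'s running term is a singleton (if active) living on one side only, I would argue by induction on $|\tr|$ that the global alternation forces each projection to alternate as well — essentially because a configuration's polarity-phase (whose turn it is) in each component is determined, and consecutive same-side actions in $\tr_i$ would create two consecutive Player or two consecutive Opponent moves that cannot both be absorbed into the alternating global sequence without the other component supplying an action of the opposite polarity "for free", which it cannot. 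The leading-output side condition is then read off: whichever component is active contributes the first output. I expect this induction, with its case analysis on which side each action belongs to and on active/passive status, to be the technical heart; everything else is bookkeeping inherited from the \COGS development.
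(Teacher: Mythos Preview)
Your treatment of direction~(\ref{l:dec_GS_2}) is correct and matches the paper's.

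For direction~(\ref{l:dec_GS_1}) the paper takes a different route and explicitly flags the obstacle you hit: it observes that the \COGS reduction ``does not work for~(\ref{l:dec_GS_1}), since we cannot deduce directly that $\tr_1$ and $\tr_2$ are alternating,'' and instead passes through the output-prioritised LTS of \piI, using Corollary~\ref{c:traces} to move from \AOGS to the \opLTS and Lemma~\ref{l:dec_piI} to decompose the parallel composition there. Because output-priority is baked into the \opLTS, each component's projected trace is automatically an \opLTS trace, and pulling back along Corollary~\ref{c:traces} yields $\FF_i \ArrA{\tr_i}$ with no further alternation argument needed.

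Your proposed repair of the \COGS route has a genuine gap. The absorption argument --- that two consecutive same-polarity actions in $\tr_1$ would force $\tr_2$ to supply an opposite-polarity action ``for free, which it cannot'' --- is false at the \COGS level: a \COGS configuration may perfectly well absorb several consecutive inputs (each one simply adds a running term), so Lemma~\ref{l:COGS-interleave-tensor} places no alternation constraint on the projections. Your other remark, that ``a configuration's polarity-phase in each component is determined,'' is the right idea, but it is an invariant of the \emph{\AOGS} computation of the tensor, not something extractable from the \COGS decomposition lemma. If you run that invariant directly on the \AOGS derivation $\FF_1\otimes\FF_2 \ArrA{\tr} \FF$ --- at each step at most one component carries the running term, so every transition is an \AOGS transition of exactly one side (the active side for outputs and $\tau$, the owner of the subject P-name for inputs) --- you obtain $\FF_i \ArrA{\tr_i} \FF'_i$ and $\tr\in\interlA{\tr_1}{\tr_2}$ by a plain induction on the length of the derivation, without invoking \COGS at all. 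That direct argument is a valid alternative to the paper's \opLTS detour; going through \COGS first and then trying to recover alternation is the wrong factoring.
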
  

The statement of (\ref{l:dec_GS_2}) above can be deduced from
the corresponding result for \COGS, namely
Lemma~\ref{l:COGS-interleave-tensor},
combined with the characterization
of traces generated by the \AOGS LTS 
as being the \emph{alternating} traces of \COGS LTS.
However, such a proof does not work for (\ref{l:dec_GS_1}),
since we cannot deduce directly that $\tr_1$ and $\tr_2$ are alternating.
To prove this direction,
we use the correspondence between \AOGS and $\ArrN{}$ the \outputpr LTS 
established in Corollary~\ref{c:traces}, 
combined with Lemma~\ref{l:dec_piI} below, whose proof uses  the following lemma.

\begin{lemma}
\label{l:dec_piI_aux}
Suppose $P| Q  \ArrN \mu P'|Q$ and 
$P  \Arr \mu P'$. Then also $P  \ArrN \mu P'$.
\end{lemma}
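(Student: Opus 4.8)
The plan is to reduce immediately to the case where $\mu$ is an input. Indeed, the \opLTS and the ordinary LTS of \piI have \emph{exactly} the same $\tau$-transitions and the same output-transitions; only input-transitions are constrained (by input-reactivity). Consequently $\ArrN\mu$ and $\Arr\mu$ coincide whenever $\mu$ is $\tau$ or an output, and for such $\mu$ the conclusion $P\ArrN\mu P'$ follows directly from the hypothesis $P\Arr\mu P'$ alone (the hypothesis on $P|Q$ is not needed). So assume from now on that $\mu$ is an input action.

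For the input case, I would start from the hypothesis $P|Q\ArrN\mu P'|Q$ and decompose it in the \opLTS as $P|Q\LongrightarrowPI R_1 \arrN\mu R_2 \LongrightarrowPI P'|Q$ (recall weak $\tau$-transitions are common to the two LTSs). Since $\mu$ is an input, the rule generating $R_1\arrN\mu R_2$ forces $R_1$ to be \emph{input reactive}: it has no unguarded output and no available $\tau$. Now I would analyse the $\tau$-transitions along $P|Q\LongrightarrowPI R_1$: each is either internal to the $P$-component, internal to the $Q$-component, or a communication between the two. The crux is to show that we may assume no $P$--$Q$ communication occurs: this uses the fact that in \piI every communication introduces a fresh restriction, which is then persistent, together with the fact that the $Q$-component is restored (it reappears unchanged in the target $P'|Q$), so that no communication could have been ``absorbed''. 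Granting this, $R_1=P_1|Q_1$ with $P\LongrightarrowPI P_1$; input-reactivity of $P_1|Q_1$ yields input-reactivity of $P_1$, hence $P_1\arrN\mu P_2$ with $R_2=P_2|Q_1$; and the same no-communication analysis on $P_2|Q_1\LongrightarrowPI P'|Q$ gives $P_2\LongrightarrowPI P'$. Thus $P\LongrightarrowPI P_1\arrN\mu P_2\LongrightarrowPI P'$, i.e.\ $P\ArrN\mu P'$. The hypothesis $P\Arr\mu P'$ is what guarantees that this projected run is consistent, and it is the key ingredient if one instead prefers a proof by induction on the length of the derivation of $P|Q\ArrN\mu P'|Q$ (pairing off the $\tau$-steps of $Q$ with those needed to return to $Q$).

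The main obstacle is precisely the ``no $P$--$Q$ communication'' step: showing that the firing of the input $\mu$ in the run witnessing $P|Q\ArrN\mu P'|Q$ happens from a state of the form (input-reactive $P$-part)~$|$~($Q$-part), rather than involving an interaction between the two components. The persistence of the restrictions created by communication in \piI is the reason this works, and it is exactly the property that fails in the full $\pi$-calculus (cf.\ the discussion after Definition~\ref{d:interact}); once it is established, the remaining reasoning is routine bookkeeping on the shape of the transitions.
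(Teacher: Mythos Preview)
Your reduction to the input case is correct, and the ``no communication'' argument via persistence of top-level restrictions in \piI is a valid observation. But there is a gap beyond the one you flag as the main obstacle. Having established $R_1 = P_1|Q_1$ with $R_1$ input reactive, you write ``hence $P_1\arrN\mu P_2$ with $R_2=P_2|Q_1$'', silently assuming that the visible input $\mu$ is fired by the $P$-component via \texttt{parL}. Nothing you have said rules out the \texttt{parR} case $Q_1 \arr\mu Q_2$, $R_2=P_1|Q_2$; in that case the $P$-projection of the run is a pure $\tau$-sequence $P \Longrightarrow P'$, and invoking the separate hypothesis $P \Arr\mu P'$ does not close the gap, since that hypothesis only supplies \emph{some} ordinary-LTS path with a $\mu$-step, not one through an input-reactive intermediate state.

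The paper's argument sidesteps all of this with a one-line contrapositive: if $P$ is not input reactive then neither is $P|Q$, so if the input-reactivity constraint is what blocks $P \ArrN\mu P'$ (given that $P \Arr\mu P'$ holds), the same constraint also blocks $P|Q \ArrN\mu P'|Q$. There is no decomposition of the run on $P|Q$, and hence no need to reason about communications or about which component fires $\mu$. In the only place the lemma is used (Lemma~\ref{l:dec_piI}) it is in any case applied to single strong steps of processes that already cannot interact, where this observation is immediately sufficient.
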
 
\begin{proof}
If $P  \ArrN \mu P'$ were not possible, because of the constraint on input components in
the \opLTS with respect to the ordinary LTS, then also 
$P| Q  \ArrN \mu P'|Q$ would not be possible (if $P$ is not input reactive,  then 
$P|Q$ cannot be so either). 
\end{proof}

\begin{lemma}
\label{l:dec_piI}
Suppose $\FF_1,\FF_2$ are compatible configurations, and at most one of the two is active.
\begin{enumerate}
\item If $\encoConV {\FF_1} |\encoConV {\FF_2} \ArrN \tr  P_1 | P_2  $, 
  then, for $i=1,2$,  there are traces $ \tr_i $
  such that $\encoConV {\FF_i} \ArrN { \tr_i} P_i $, 
  and $ \tr \in \interlA { \tr_1   }{ \tr_2 }$.
\item
Conversely, if for  $i=1,2$, we have
  $\encoConV {\FF_i} \ArrN { \tr_i} P_i$,
  $\tr \in \interlA { \tr_1   }{ \tr_2 }$, 
  then also 
  $\encoConV {\FF_1} |\encoConV {\FF_2} \ArrN \tr  P_1 | P_2$.
\end{enumerate} 
\end{lemma}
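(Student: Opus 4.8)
The plan is to reduce this statement to the already-established decomposition result for plain $\pi$-calculus parallel composition, namely Lemma~\ref{l:dec_pi}, by controlling the interplay between the ordinary LTS (on which Lemma~\ref{l:dec_pi} is stated) and the \opLTS $\ArrN{}$ (on which the present lemma is stated). First I would observe that, by Lemma~\ref{l:compat_cannot-interact}, compatibility of $\FF_1,\FF_2$ gives that $\encoConV{\FF_1}$ and $\encoConV{\FF_2}$ cannot interact, so Lemma~\ref{l:dec_pi} applies and yields traces $\tr_1,\tr_2$ with $\encoConV{\FF_i}\ArrPI{\tr_i}P_i$ and $\tr\in\interl{\tr_1}{\tr_2}$ — but only over the \emph{ordinary} transition relation. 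There are then two gaps to close: first, the per-component traces must be shown to hold in the \opLTS ($\ArrN{}$, not merely $\ArrPI{}$); second, the interleaving must be shown to lie in the \emph{restricted} set $\interlA{\tr_1}{\tr_2}$, i.e.\ it is alternating and respects the ``output-first'' side condition when one of the $\tr_i$ starts with an output.

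For part (1), I would argue by induction on the length of the derivation $\encoConV{\FF_1}|\encoConV{\FF_2}\ArrN\tr P_1|P_2$, at each step splitting on whether the single visible (or silent) step is performed on the left or the right component — exactly the ``tag each action {\tt L} or {\tt R}'' bookkeeping used in the proof of Lemma~\ref{l:dec_pi}. The new ingredient is Lemma~\ref{l:dec_piI_aux}: when the composite makes an \opLTS step coming from, say, the left component ($P|Q\ArrN\mu P'|Q$ with $P\ArrPI\mu P'$), that lemma tells us $P\ArrN\mu P'$ already — the \opLTS constraint that inputs are observable only when no output/$\tau$ is pending is inherited by the component from the composite, since if $P$ were not input reactive then $P|Q$ could not be either. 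Iterating this along the derivation produces $\encoConV{\FF_i}\ArrN{\tr_i}P_i$. For the alternation and output-priority constraints, I would use the correspondence between \COGS configurations and their encodings together with Corollary~\ref{c:traces}/Corollary~\ref{c:ogsp_pi}: since at most one of $\FF_1,\FF_2$ is active, the encoding of the passive one is an input-reactive process, so in the composite no input of the passive component can be scheduled while the active component (or an answer to it) has a pending output; this forces the interleaving to alternate appropriately and to begin with an output whenever one of the component traces does. Concretely, $\tr$ alternating follows because each component's \AOGS-style trace is alternating (this is where one invokes the relationship between \AOGS and the \opLTS, Corollary~\ref{c:traces}) and an input of one component can only be enabled in the composite after the other component has returned to a passive/input-reactive state — precisely the definition of $\interlA{-}{-}$.

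Part (2) is the easier direction: given $\encoConV{\FF_i}\ArrN{\tr_i}P_i$ and $\tr\in\interlA{\tr_1}{\tr_2}$, I would first lift the component transitions to the ordinary LTS (every \opLTS transition is an ordinary transition by the second rule defining $\ArrN{}$), apply Lemma~\ref{l:dec_pi}(2) to get $\encoConV{\FF_1}|\encoConV{\FF_2}\ArrPI\tr P_1|P_2$, and then check that this composite derivation actually respects the \opLTS discipline. The latter check is where the $\interlA{-}{-}$ restriction is used essentially: because $\tr$ is alternating with the output-first property and at most one component is active, at every point where the chosen interleaving schedules an input step, the other component's process is input-reactive (it is the encoding of a passive configuration at that stage), so the composite has no competing pending output or $\tau$, and the step is legal in $\ArrN{}$. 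I expect the main obstacle to be exactly this bookkeeping in part (1): carefully tracking, along an arbitrary composite \opLTS derivation, that whenever control passes to an input of one component the other component is provably input-reactive, so that both the per-component lifting via Lemma~\ref{l:dec_piI_aux} and the membership $\tr\in\interlA{\tr_1}{\tr_2}$ go through simultaneously. The ``at most one of the two is active'' hypothesis is what makes this invariant maintainable, and I would make it the explicit induction invariant.
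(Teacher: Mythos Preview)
Your proposal is correct and follows essentially the same approach as the paper: reduce to Lemma~\ref{l:dec_pi} via the {\tt L}/{\tt R} tagging, lift each component's step to the \opLTS using Lemma~\ref{l:dec_piI_aux}, and for part~(2) drop from $\ArrN{}$ to $\ArrPI{}$, apply Lemma~\ref{l:dec_pi}(2), then check input-reactivity at each input step using the alternation constraint in $\interlA{-}{-}$ together with ``at most one active''. You spell out the $\tr\in\interlA{\tr_1}{\tr_2}$ argument in part~(1) more carefully than the paper does (the paper leaves this largely implicit), but the underlying reasoning is the same.
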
  

\begin{proof}
Proof of (1): 
the assertion follows by reasoning as in Lemma~\ref{l:dec_pi}.
Processes 
$\encoConV {\FF_1}$ and  $\encoConV {\FF_2}$ cannot interact and, by
Lemma~\ref{l:ci_trans}, the property is invariant under transitions. 
Thus each action (visible
or silent) that contributes to the given trace $\tr$ is derived using either  
rule {\trans{parL}} or  rule {\trans{parR}} on the outermost parallel composition. 
The former case represents  an action (in the ordinary LTS) from the left process, the
latter case an action from the right process. In either case, the action can be lifted to 
the  \opLTS using Lemma~\ref{l:dec_piI_aux}. 
By iterating the reasoning on each action of the trace  $\tr$
we obtain traces $\tr_i$ ($i=1,2$) with  
 $\encoConV {\FF_i} \ArrN { \tr_i} P_i$. 

For (2): From $\encoConV {\FF_i} \ArrN { \tr_i} P_i$, we deduce that
$\encoConV {\FF_i} \ArrPI{ \tr_i} P_i$.
So applying Lemma~\ref{l:dec_pi}, we get that 
$\encoConV {\FF_1} |\encoConV {\FF_2} \ArrPI \tr  P_1 | P_2$.
Moreover, from $\encoConV {\FF_i} \ArrN { \tr_i} P_i$ we also deduce
that each $\tr_i$ is alternating. Since at most one of the two
$\FF_1,\FF_2$ is active, we deduce that at most one of $s_1,s_2$ 
is $P$-starting. We have that $\tr$ is alternating since
$\tr \in \interlA{ \tr_1 }{ \tr_2 }$, and 
active iff one of the composing traces is $P$-starting. This means that we can assume 
that when an input transition of $\tr$ is performed the source process is 
input reactive.
\end{proof} 

We now state the corresponding lemmas for complete traces,
whose proofs are similar to Lemma~\ref{l:COGS-decomp-ct} and Lemma~\ref{l:COGS-recomp-ct}.
\begin{lemma}
\label{l:AOGS-decomp-ct}
Suppose
\begin{itemize}
  \item $\FF  = \appendCon{\FF_1}{\FF_2}$;
  \item $\tr$ is a complete trace for $\FF$,
  such that $\FF \ArrA \tr \FF'$
\end{itemize} 
Then for both $i \in \{1,2\}$,
there exists $\FF_i'$ and a complete trace $\tr_i$
for $\FF_i$ such that 
\begin{enumerate}
  \item ${\FF_i} \ArrA{\tr_i} \FF_i'$;
  \item $\tr \in \interlA{\tr_1}{\tr_2}$; 
  \item $\FF' = \appendCon{\FF'_1}{\FF'_2}$.
\end{enumerate}
\end{lemma}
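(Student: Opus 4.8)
The plan is to mirror the proof of Lemma~\ref{l:COGS-decomp-ct}, replacing the \COGS interleaving result by its \AOGS analogue, Lemma~\ref{c:dec_GS}, and then transferring completeness through the characterisation of complete traces as exactly the traces reaching strongly passive configurations (Lemma~\ref{l:OGS_complete-strongly-passive}).

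First I would apply Lemma~\ref{c:dec_GS}(\ref{l:dec_GS_1}) to the hypothesis $\appendCon{\FF_1}{\FF_2} \ArrA{\tr}\FF'$. For $\appendCon{\FF_1}{\FF_2}$ to be defined (Section~\ref{a:tensor-aogs}) the configurations $\FF_1,\FF_2$ are compatible with at most one of them active, so the hypotheses of Lemma~\ref{c:dec_GS} hold. This directly produces configurations $\FF_1',\FF_2'$ and traces $\tr_1,\tr_2$ with ${\FF_i}\ArrA{\tr_i}\FF_i'$, $\tr\in\interlA{\tr_1}{\tr_2}$ and $\FF' = \appendCon{\FF_1'}{\FF_2'}$, which already gives clauses (1)--(3) of the statement. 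It then only remains to check that each $\tr_i$ is complete for $\FF_i$.

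For that, I would use completeness of $\tr$ for $\FF$: by Lemma~\ref{l:OGS_complete-strongly-passive} applied to $\FF\ArrA{\tr}\FF'$ the configuration $\FF' = \appendCon{\FF_1'}{\FF_2'}$ is strongly passive. The one point requiring a small argument is that strong passivity decomposes over the tensor product. Since $\appendCon{\FF_1'}{\FF_2'}$ is passive, neither $\FF_1'$ nor $\FF_2'$ is active (by the \AOGS tensor definition an active factor would make the product active), so both $\FF_1'$ and $\FF_2'$ are passive; writing $\gamma_i'$ for the environment of $\FF_i'$, the environment of the product is $\gamma_1'\cdot\gamma_2'$, so its continuation structure is $\cs(\gamma_1')\cup\cs(\gamma_2')$, which is empty iff both $\cs(\gamma_i')$ are empty. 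Hence $\FF_1'$ and $\FF_2'$ are each strongly passive. Applying the converse direction of Lemma~\ref{l:OGS_complete-strongly-passive} to ${\FF_i}\ArrA{\tr_i}\FF_i'$ then yields that $\tr_i$ is complete for $\FF_i$, completing the proof.

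The argument is essentially routine, so I do not expect a genuine obstacle; the only delicate bookkeeping is the decomposition of strong passivity just sketched, which is immediate from Definition~\ref{d:sp} and the \AOGS tensor definition of Section~\ref{a:tensor-aogs}. The same recipe, with Lemma~\ref{c:dec_GS}(\ref{l:dec_GS_2}) in place of Lemma~\ref{c:dec_GS}(\ref{l:dec_GS_1}) and the observation that an interleaving of two complete traces is complete for the tensor, gives the recomposition counterpart (the \AOGS analogue of Lemma~\ref{l:COGS-recomp-ct}).
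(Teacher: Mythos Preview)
Your proposal is correct and follows essentially the same approach as the paper: the paper merely states that the proof is ``similar to Lemma~\ref{l:COGS-decomp-ct}'', which amounts exactly to your plan of replacing Lemma~\ref{l:COGS-interleave-tensor} by Lemma~\ref{c:dec_GS} and then invoking Lemma~\ref{l:OGS_complete-strongly-passive} in both directions. Your extra justification that strong passivity decomposes along the \AOGS tensor is a welcome detail that the paper leaves implicit.
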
              

\begin{lemma}
  \label{l:AOGS-recomp-ct}
  Suppose 
  \begin{enumerate}
    \item $\FF = \FF_1 \otimes \FF_2$,   
    \item $\tr_i$ a complete trace for $\FF_i$,
    such that $\FF_i \ArrA{\tr_i} \FF_i'$,
    $i=1,2$. 
  \end{enumerate} 
  Then there is 
   $\tr \in \interlA{\tr_1}{\tr_2}$ complete for $\FF$ such that
  $\FF \ArrA{\tr}{\FF'_1} \otimes {\FF'_2}$. 
\end{lemma}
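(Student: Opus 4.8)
The plan is to follow the proof of Lemma~\ref{l:COGS-recomp-ct}, replacing the appeal to Lemma~\ref{l:COGS-interleave-tensor}(\ref{l:COGS-interleave-tensor_2}) by its alternating counterpart Lemma~\ref{c:dec_GS}(\ref{l:dec_GS_2}), and using Lemma~\ref{l:OGS_complete-strongly-passive} to transport completeness through the notion of strongly passive configuration. First I would note that, since $\tr_i$ is complete for $\FF_i$ and $\FF_i \ArrA{\tr_i} \FF'_i$, Lemma~\ref{l:OGS_complete-strongly-passive} gives that $\FF'_i$ is strongly passive, hence passive with empty continuation structure, for $i=1,2$. Since $\FF = \FF_1 \otimes \FF_2$ is defined, $\FF_1$ and $\FF_2$ are compatible and at most one of them is active; swapping them if necessary~--- which is harmless, as $\otimes$ is commutative up to the irrelevant ordering of the environments and interleaving is symmetric~--- I would assume without loss of generality that $\FF_2$ is passive.

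The next step is to exhibit a witness $\tr \in \interlA{\tr_1}{\tr_2}$. Both $\tr_1$ and $\tr_2$ are alternating, being traces of the alternating LTS. Moreover, a trace reaching a passive configuration is either empty or ends with a Player action (the transitions landing in a passive configuration are PA and PQ), and, symmetrically, a trace issued from a passive configuration is either empty or starts with an Opponent action. Hence, as $\FF'_1$ and $\FF_2$ are passive, the concatenation $\tr \defeq \tr_1 \, \tr_2$ is alternating: at the junction $\tr_1$ ends with a Player action and $\tr_2$ starts with an Opponent action (and if one of the two is empty the concatenation reduces to the other, which is alternating). Furthermore $\tr$ is an interleaving of $\tr_1$ and $\tr_2$, and it starts with an output precisely when $\tr_1$ does, that is, precisely when $\FF_1$ is active; since $\FF_2$ is passive, $\tr_2$ does not start with an output, so the extra condition defining $\interlA{-}{-}$ is satisfied, giving $\tr \in \interlA{\tr_1}{\tr_2}$.

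I would then apply Lemma~\ref{c:dec_GS}(\ref{l:dec_GS_2}) to the transitions $\FF_i \ArrA{\tr_i} \FF'_i$ and to $\tr \in \interlA{\tr_1}{\tr_2}$, obtaining $\FF = \FF_1 \otimes \FF_2 \ArrA{\tr} \FF'_1 \otimes \FF'_2$. Since $\FF'_1$ and $\FF'_2$ are strongly passive, so is $\FF'_1 \otimes \FF'_2$: it is passive, and its continuation structure is the union of the (empty) continuation structures of $\FF'_1$ and $\FF'_2$, hence empty. Lemma~\ref{l:OGS_complete-strongly-passive}, read in the converse direction, then yields that $\tr$ is complete for $\FF$, which would conclude the argument. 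The one step I expect to require genuine care is the construction of the witness $\tr$: it relies on knowing exactly which complete traces start with an output and which ones end with a Player action, together with the fact that at most one of $\FF_1,\FF_2$ is active; the remaining steps are direct instantiations of already-established results.
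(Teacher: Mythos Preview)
Your proposal is correct and follows essentially the same approach as the paper, which simply notes that the proof is ``similar to Lemma~\ref{l:COGS-recomp-ct}'': invoke the alternating recomposition Lemma~\ref{c:dec_GS}(\ref{l:dec_GS_2}) and then argue completeness of the resulting interleaved trace. Your argument is in fact more careful than the paper's sketch, since you explicitly exhibit a witness $\tr = \tr_1\,\tr_2 \in \interlA{\tr_1}{\tr_2}$ (the paper glosses over non-emptiness of $\interlA{\tr_1}{\tr_2}$) and you establish completeness via Lemma~\ref{l:OGS_complete-strongly-passive} rather than appealing directly to the interleaving of complete traces.
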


Next, we show in Lemma~\ref{l:decACN} a decomposition property for 
complete trace equivalence $\TEc$ \wrt $\otimes$ on configurations.
%
To prove it, we rely on the following lemmas
that show that the action of a configuration is 
determined by a specific singleton component, 
and it is the action itself that allows us to determine such a component.

\begin{lemma}
\label{l:decNsig}
Suppose $\FF \arrA\mu \FF'$.
Then there is a singleton $\FF_1$, and some $\FF_1',\FF_2,\FF_2'$ 
such that:
\begin{enumerate}
\item $\FF =
 {\FF_1} \otimes {\FF_2}$,
\item $\FF_1 \arrA\mu \FF_1'$,
\item $\FF' =
 {\FF_1'} \otimes {\FF_2}$.
\end{enumerate} 
 
Moreover, the P-\Supports and the structure of continuations of $\FF_1$ and $\FF_1'$  
are determined by 
$\mu$; that is, 
if  $\okC \FF\GG$ and
 $\GG \arrA\mu \GG'$, then for  
 the singleton $\GG_1$, and the configurations $\GG_1',\GG_2,\GG_2'$ such that
 $\GG = \appendCon{\GG_1}{\GG_2}$,
 $\GG_1 \arrA\mu \GG_1'$, and 
 $\GG' = \appendCon{\GG_1'}{\GG_2}$ we have: 
 $\okC{\FF_1}{\GG_1}$, $\okC{\FF_2}{\GG_2}$, and 
$\okC{\FF_1'}{\GG_1'}$.  
 \end{lemma}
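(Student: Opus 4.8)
The plan is to perform a case analysis on the rule used to derive $\FF \arrA\mu \FF'$, and in each case to read off from the shape of the rule exactly which singleton component of $\FF$ is responsible for the transition. First I would dispose of the easy direction: since $\appendCon{-}{-}$ on compatible \AOGS configurations simply takes the union of environments and keeps the (unique, if any) active term, every \AOGS configuration $\FF$ decomposes canonically as a tensor of singleton configurations, one per P-name, plus possibly one singleton carrying the active term and toplevel continuation. So the existence of a decomposition $\FF = \FF_1 \otimes \FF_2$ with $\FF_1$ singleton is not the issue; the content of the lemma is that the \emph{right} $\FF_1$ — the one actually firing $\mu$ — is forced by $\mu$, and that it can be peeled off so that the residual $\FF_2$ is untouched.

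For the case analysis I would go through the six rules of Figure~\ref{fig:seq-ogs-lts}. For $(P\tau)$, $(PA)$, $(PQ)$ the configuration $\FF$ is active, $\FF = \conf{M,p,\gamma,\phi}$; here $\FF_1$ is the singleton active configuration $\conf{M,p,\emptymap,\phi\setminus\dom\gamma}$ carrying the term, $\FF_2 = \conf{\gamma,\dots}$ is the passive configuration holding the environment, and one checks directly that $\FF_1$ performs $\mu$ producing $\FF_1'$ (a passive singleton after $(PA)$/$(PQ)$, still active after $(P\tau)$), with $\FF' = \FF_1' \otimes \FF_2$ up to the bookkeeping on name-supports. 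For $(OA)$ and $(OQ)$, $\FF$ is passive, $\FF = \conf{\gamma,\phi}$; the action $\mu$ has a subject name $a$ (a continuation name $p$ for $(OA)$, a variable $x$ for $(OQ)$) which occurs in $\dom\gamma$, and $\FF_1$ is precisely the singleton $\conf{\pmap{a}{\gamma(a)},\dots}$; again $\FF_1 \arrA\mu \FF_1'$ and $\FF_2 = \conf{\gamma\setminus\{a\},\dots}$ is carried along unchanged. The $(IOQ)$ rule does not arise since $\FF$ is assumed already a configuration (not an initial one), or can be handled trivially if one wants to include it.

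For the ``moreover'' part, the key observation is that in each of the cases above, the singleton component $\FF_1$ selected is determined purely by the subject name of $\mu$ together with the active/passive status of $\FF$ — and both of these are recorded in the polarity function and continuation structure, hence shared by any support-equivalent $\GG$ with $\okC\FF\GG$. Concretely: if $\mu$ is a Player action then $\FF,\GG$ are both active (same continuation structure forces a toplevel continuation name to be present), and $\FF_1,\GG_1$ are the active singletons; if $\mu$ is an Opponent action with subject $a$, then $a \in \dom{\Label_\FF} = \dom{\Label_\GG}$ with the same polarity, and $\FF_1 = \conf{\pmap{a}{\gamma(a)}}$, $\GG_1 = \conf{\pmap{a}{\delta(a)}}$ have the same polarity function and continuation structure by $\okC\FF\GG$. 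That $\FF_1'$ and $\GG_1'$ are again support-equivalent then follows because the effect of each rule on the polarity function and continuation structure depends only on $\mu$ (e.g. $(OA)$ at $p$ pops $p$ and pushes the continuation $q = $ second component of $\gamma(p)$, and $q$ is the same for $\FF$ and $\GG$ by support-equivalence since it lies in the continuation structure), and similarly $\FF_2,\GG_2$ are support-equivalent as the restrictions of $\Label_\FF,\Label_\GG$ to the complementary name-sets.

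The main obstacle I anticipate is purely notational: keeping the name-support components $\phi$ straight across the decomposition, since the lemma's statement about $\FF = \FF_1 \otimes \FF_2$ requires splitting $\phi$ consistently and the $\okC{-}{-}$ relation must be verified at the level of the \emph{singleton} pieces, where fresh bound names of $\mu$ have just been added. This is exactly the ``small glitch'' flagged in the proof of Lemma~\ref{l:OGS_COGS} (the extra names $y,p$), and I would handle it by being explicit that support-equivalence is checked on the available names and that the newly introduced names are fresh and common to both sides, so they do not break the relation. Beyond this bookkeeping, each case is a direct unfolding of the LTS rule together with the definition of $\otimes$ and of $\Label$, $\cs$, so no genuinely hard step is expected.
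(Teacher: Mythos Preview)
Your proposal is correct and takes essentially the same approach as the paper, which dispatches the lemma in one line: ``By a simple case analysis on the form of $\mu$.'' Your write-up is considerably more detailed than the paper's own, but the structure is identical.

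One small slip worth fixing: you write that $\FF_1'$ is ``a passive singleton after $(PA)/(PQ)$'', but after $(PQ)$ the derivative $\FF_1'$ has \emph{two} P-names ($y$ and $q$), so it is not a singleton. This is harmless for the lemma --- the statement only requires $\FF_1$ to be a singleton, not $\FF_1'$ --- but you should not assert it.
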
 

\begin{proof}
By a simple case analysis on the form of $\mu$.
\end{proof}

The next lemma uses the previous result to show that the decomposition of a trace 
is uniform.

\begin{lemma}
\label{l:decNsig2}
Suppose 
\begin{enumerate}
\item
$\FF \ArrA \tr   \FF'$
and $\GG \ArrA \tr   \GG'$, 
\item  
 $\FF = \FF_1 \otimes \FF_2$ and 
 $\GG = \GG_1 \otimes \GG_2$,  
\item 
  $\okC{\FF}{\GG}$ and  
  $\okC{\FF_i}{\GG_i}$, for $i=1,2$,   
\item 
$\FF_i \ArrA{\tr_i}  \FF_i'$, $i=1,2$,  with $\tr \in \interlA{\tr_1}{\tr_2}$ and 
 $\FF' = \FF'_1 \otimes \FF'_2$.
\end{enumerate} 
Then we also have 
$\GG_i \ArrA{\tr_i}   \GG_i'$, for some $\GG_i'$, 
$i=1,2$ with 
 $\GG' = \GG'_1 \otimes \GG'_2$.
\end{lemma}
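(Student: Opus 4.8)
The plan is to prove this by induction on the length of the trace $\tr$, using Lemma~\ref{l:decNsig} at each step to show that the decomposition of each single action is forced by the action itself (and by the support-equivalence hypothesis). The key observation is that hypothesis (3), the support-equivalence $\okC{\FF_i}{\GG_i}$, is exactly what propagates through the induction: at each step the action $\mu$ tells us which of the two components ($\FF_1$-side or $\FF_2$-side) must move, and Lemma~\ref{l:decNsig} guarantees that the same component moves on the $\GG$ side, yielding residuals that are again support-equivalent componentwise. So the invariant to carry through the induction is: after consuming a prefix of $\tr$ (together with the corresponding prefixes of $\tr_1$, $\tr_2$), we have reached configurations $\FF^\circ = \FF^\circ_1 \otimes \FF^\circ_2$ and $\GG^\circ = \GG^\circ_1 \otimes \GG^\circ_2$ with $\okC{\FF^\circ}{\GG^\circ}$ and $\okC{\FF^\circ_i}{\GG^\circ_i}$ for $i=1,2$.

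First I would set up the base case: if $\tr = \emptytrace$, then $\tr_1 = \tr_2 = \emptytrace$, $\FF'_i = \FF_i$, and we may take $\GG'_i = \GG_i$; hypothesis (3) gives $\GG' = \GG'_1 \otimes \GG'_2$ trivially. For the inductive step, write $\tr = \mu\,\tr^\sharp$, so $\FF \arrA{\mu} \FF^\dagger \ArrA{\tr^\sharp} \FF'$ and likewise $\GG \arrA{\mu} \GG^\dagger \ArrA{\tr^\sharp} \GG'$ (both sides perform $\mu$ by hypothesis (1)). Since $\tr \in \interlA{\tr_1}{\tr_2}$, the action $\mu$ is the head of exactly one of $\tr_1, \tr_2$ — say $\tr_1 = \mu\,\tr_1^\sharp$ and $\tr_2$ unchanged (the symmetric case is identical). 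Then $\FF_1 \arrA{\mu} \FF_1^\dagger \ArrA{\tr_1^\sharp} \FF'_1$ for the appropriate $\FF_1^\dagger$, and by hypothesis (4) plus Lemma~\ref{l:decNsig} we get $\FF^\dagger = \FF_1^\dagger \otimes \FF_2$. Now apply Lemma~\ref{l:decNsig} on the $\GG$ side: since $\okC{\FF}{\GG}$ and $\GG \arrA{\mu} \GG^\dagger$, the (up to support-equivalence unique) singleton component of $\GG$ triggered by $\mu$ is $\GG_1$ — here I use that $\okC{\FF_1}{\GG_1}$ pins down which component it is, because the P-support and continuation structure triggered by $\mu$ agree with those of $\FF_1$ — and we obtain $\GG_1 \arrA{\mu} \GG_1^\dagger$ with $\GG^\dagger = \GG_1^\dagger \otimes \GG_2$ and $\okC{\FF_1^\dagger}{\GG_1^\dagger}$, $\okC{\FF_2}{\GG_2}$. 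Support-equivalence of $\FF^\dagger$ and $\GG^\dagger$ follows since it is determined componentwise. This re-establishes all the hypotheses of the lemma with $\FF^\dagger, \GG^\dagger$ in place of $\FF, \GG$ and $\tr^\sharp, \tr_1^\sharp, \tr_2$ in place of $\tr, \tr_1, \tr_2$, so the induction hypothesis gives $\GG_1^\dagger \ArrA{\tr_1^\sharp} \GG'_1$ and $\GG_2 \ArrA{\tr_2} \GG'_2$ with $\GG' = \GG'_1 \otimes \GG'_2$. Prepending the $\mu$-step yields $\GG_1 \ArrA{\tr_1} \GG'_1$, which is what we need.

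The main obstacle I expect is the bookkeeping around hypothesis (3), namely checking that support-equivalence really is preserved at every step — in particular handling the fresh names introduced by Opponent actions (the extra $y,p$ in OQ, or $x$ in OA) so that the $\phi$-components of $\FF^\dagger_i$ and $\GG^\dagger_i$ still match, and confirming that the implicit coercions between \AOGS and \COGS configurations used earlier (via the tensor product $\otimes$) do not cause trouble when one component is active and the other passive. This is exactly the content of the ``moreover'' part of Lemma~\ref{l:decNsig}, so the argument is essentially an unwinding of that statement along the trace; the only real care needed is to verify that the side (left or right) selected by $\mu$ is the same on both the $\FF$ and $\GG$ sides, which again is immediate from $\okC{\FF_i}{\GG_i}$ together with the fact that a visible action's subject name lies in the P-support or continuation structure of a uniquely determined component.
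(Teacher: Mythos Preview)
Your proposal is correct and follows exactly the approach the paper takes: the paper's entire proof reads ``Using Lemma~\ref{l:decNsig}, and proceeding by induction on the length of the trace,'' and your write-up is a faithful unfolding of that induction. The one cosmetic slip is that you write the first step as a strong transition $\FF \arrA{\mu} \FF^\dagger$, whereas the hypothesis only gives you the weak $\FF \ArrA{\mu} \FF^\dagger$; but the intervening $\tau$-steps (rule P$\tau$) are performed by the unique active component and leave the decomposition, polarity function, and continuation structure unchanged, so Lemma~\ref{l:decNsig} still applies to the embedded visible step and the invariant you carry (componentwise support-equivalence) is trivially preserved across the silent steps.
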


\begin{proof}
  Using Lemma~\ref{l:decNsig}, and proceeding by induction on the length of the trace.  
\end{proof}

\begin{lemma}
\label{l:decACN}
 Suppose 
 \begin{enumerate}
   \item $\FF$ is terminating,
   \item 
   $\FF  \TEc \GG$, 
   \item 
   $\FF = \FF_1 \otimes \FF_2$,   
   and  
   $\GG = \GG_1 \otimes \GG_2$,
   \item $\okC{\FF_1}{\FF_2}$ and $\okC{\GG_1}{\GG_2}$.
 \end{enumerate} 
 Then
 $\FF_i \TEc \GG_i$, $i=1,2$. 
\end{lemma}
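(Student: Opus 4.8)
The plan is to prove the statement by analysing how complete traces of $\FF = \FF_1 \otimes \FF_2$ decompose and recompose through the tensor product, using the interleaving lemmas already established. The key tools are Lemma~\ref{l:AOGS-decomp-ct} (a complete trace for $\FF$ projects onto complete traces for $\FF_1$ and $\FF_2$), Lemma~\ref{l:AOGS-recomp-ct} (complete traces for the two components recombine, via $\interlA{\cdot}{\cdot}$, into a complete trace for $\FF$), and the uniform-decomposition Lemma~\ref{l:decNsig2} (the way a trace splits across a tensor product is determined by the trace, not by the configuration, as long as the two sides are support-equivalent). Since $\FF \TEc \GG$ means $\okC{\FF}{\GG}$ and the two configurations have the same set of complete traces in \AOGS, and since $\okC{\FF_1}{\FF_2}$, $\okC{\GG_1}{\GG_2}$ hold by hypothesis, we also have $\okC{\FF_i}{\GG_i}$ (comparing domains and polarities componentwise), so the support-equivalence hypotheses of the lemmas are met.

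First I would fix $i$, say $i=1$, and take an arbitrary complete trace $\tr_1$ for $\FF_1$, with $\FF_1 \ArrA{\tr_1} \FF_1'$; I must produce the same complete trace for $\GG_1$. Since $\FF$ is terminating it has some complete trace, and more to the point $\FF_2$ is terminating as a component of a terminating tensor product (by Lemma~\ref{l:OGS_complete-strongly-passive}, a complete trace for $\FF$ drives $\FF$ to a strongly passive configuration, hence both components to strongly passive configurations, so both components admit a complete trace); pick a complete trace $\tr_2$ for $\FF_2$ with $\FF_2 \ArrA{\tr_2} \FF_2'$. By Lemma~\ref{l:AOGS-recomp-ct} there is $\tr \in \interlA{\tr_1}{\tr_2}$ that is complete for $\FF$, with $\FF \ArrA{\tr} \FF_1' \otimes \FF_2'$. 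Because $\FF \TEc \GG$, the trace $\tr$ is also complete for $\GG$, so $\GG \ArrA{\tr} \GG'$ for some $\GG'$. Now I apply Lemma~\ref{l:decNsig2} with the data $\FF \ArrA{\tr} \FF_1'\otimes\FF_2'$ and $\GG \ArrA{\tr} \GG'$, the decompositions $\FF = \FF_1\otimes\FF_2$, $\GG = \GG_1\otimes\GG_2$, the support-equivalences $\okC{\FF}{\GG}$ and $\okC{\FF_i}{\GG_i}$, and the splitting $\FF_i \ArrA{\tr_i} \FF_i'$ with $\tr \in \interlA{\tr_1}{\tr_2}$: it yields $\GG_i \ArrA{\tr_i} \GG_i'$ for some $\GG_i'$. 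In particular $\GG_1 \ArrA{\tr_1} \GG_1'$. It remains to see that $\tr_1$ is complete for $\GG_1$: by Lemma~\ref{l:AOGS-decomp-ct} applied to the complete trace $\tr$ of $\GG$, the projection of $\tr$ onto the first component is complete for $\GG_1$; and the projection is $\tr_1$ by the uniformity just used (or, more directly, by Lemma~\ref{l:OGS_complete-strongly-passive} since $\GG' = \GG_1'\otimes\GG_2'$ is strongly passive, hence $\GG_1'$ is, hence $\tr_1$ is complete for $\GG_1$). By symmetry (swapping the roles of $\FF$ and $\GG$) every complete trace of $\GG_1$ is a complete trace of $\FF_1$, so $\FF_1 \TEc \GG_1$; and the same argument with $i=2$ gives $\FF_2 \TEc \GG_2$.

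The main obstacle I anticipate is the bookkeeping around \emph{which} projected trace one gets, i.e.\ making rigorous that the trace $\tr_1$ I started with is exactly the first-component projection that Lemma~\ref{l:AOGS-decomp-ct} hands back for $\GG$, rather than some other complete trace that happens to interleave with something into $\tr$. This is precisely what Lemma~\ref{l:decNsig2} (built on the ``the action determines the component'' Lemma~\ref{l:decNsig}) is designed to handle: the decomposition of a trace across a tensor product is canonical given the support structure, so the projection cannot be anything other than $\tr_1$. A secondary point requiring a line of care is verifying the hypotheses of Lemma~\ref{l:decNsig2} — notably $\FF' = \FF_1'\otimes\FF_2'$, which comes directly from the conclusion of Lemma~\ref{l:AOGS-recomp-ct}, and $\okC{\FF_i}{\GG_i}$, which follows from $\okC{\FF}{\GG}$ together with the compatibility of the given decompositions. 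Everything else is a routine chase through the already-proved interleaving machinery.
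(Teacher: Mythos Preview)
Your proposal is correct and follows essentially the same approach as the paper: use termination to obtain a complete trace for the ``other'' component, recombine via Lemma~\ref{l:AOGS-recomp-ct} into a complete trace for $\FF$, transfer through $\TEc$ to $\GG$, and then invoke the uniform-decomposition Lemma~\ref{l:decNsig2} to project back to $\GG_i$. You are in fact slightly more careful than the paper in explicitly checking that $\tr_1$ is complete for $\GG_1$ (via strong passivity of $\GG_1'$), whereas the paper leaves this implicit after applying Lemma~\ref{l:decNsig2}.
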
  

\begin{proof}
  From the fact that $\FF$ is terminating and 
  $\FF \TEc \GG$, we deduce the existence of 
  a complete trace $t$ for both $\FF$ and $\GG$.
  Applying Lemma~\ref{l:AOGS-decomp-ct},
  we deduce for both $i=1,2$ the existence of a complete trace $t_i$
  for both $\FF_i$ and $\GG_i$. 
 
  Taking $j=1,2$ and $\tr_j$ a complete trace for $\FF_j$,
  from Lemma~\ref{l:AOGS-recomp-ct}
  we deduce the existence of a complete trace $\tr$ for $\FF$, with:
  \begin{itemize}
    \item either $\tr \in \interlA{\tr_1}{t_2}$ if $j=1$;
    \item or $\tr \in \interlA{t_1}{\tr_2}$ if $j=2$.
  \end{itemize}
  From $\FF  \TEc \GG$, we deduce that $\tr$ is a complete trace for $\GG$,
  so that applying Lemma~\ref{l:decNsig2},
  we get that $\tr_j$ is a complete trace for $\GG_j$.
\end{proof}

As explained in Remark~\ref{r:CONFnoSINGL}, 
this previous lemma fails for ordinary (not necessarily complete) trace equivalence.
What makes it true for complete trace equivalence is the additional hypothesis
that $\FF$ is terminating.
In the other direction, a form of compositionality  
for $\TEc$ \wrt $\otimes$ also holds.

\begin{lemma}
\label{l:compCN}
Suppose 
\begin{enumerate}
\item 
$\FF = \appendCon{\FF_1}{\FF_2}$,   
 and  
$\GG = \appendCon{\GG_1}{\GG_2}$.
\item 
${\FF_i}  \TEc {\GG_i}$, $i=1,2$ 
\end{enumerate} 
Then 
${\FF} \TEc {\GG}$.
\end{lemma}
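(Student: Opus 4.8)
The plan is to reduce the statement to the interleaving characterizations of complete traces for $\otimes$ that have just been established, namely Lemmas~\ref{l:AOGS-decomp-ct} and~\ref{l:AOGS-recomp-ct}. Concretely, I would argue by double inclusion of the sets of complete traces. Take a complete trace $\tr$ for $\FF = \appendCon{\FF_1}{\FF_2}$, so that $\FF \ArrA{\tr} \FF'$ for some $\FF'$. By Lemma~\ref{l:AOGS-decomp-ct} there exist $\FF_1', \FF_2'$ and complete traces $\tr_i$ for $\FF_i$ with $\FF_i \ArrA{\tr_i} \FF_i'$, $\tr \in \interlA{\tr_1}{\tr_2}$, and $\FF' = \appendCon{\FF_1'}{\FF_2'}$. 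Since $\FF_i \TEc \GG_i$ and $\tr_i$ is a complete trace for $\FF_i$, it is also a complete trace for $\GG_i$; say $\GG_i \ArrA{\tr_i} \GG_i'$. Now apply Lemma~\ref{l:AOGS-recomp-ct} to $\GG = \appendCon{\GG_1}{\GG_2}$ with the complete traces $\tr_1, \tr_2$: we obtain a complete trace $\tr'' \in \interlA{\tr_1}{\tr_2}$ for $\GG$ with $\GG \ArrA{\tr''} \appendCon{\GG_1'}{\GG_2'}$. The one subtlety is that $\tr''$ need not be literally $\tr$; however $\interlA{\tr_1}{\tr_2}$ is, as a set of interleavings of fixed alternating traces subject to the $P$-starting constraint, closed in the sense we need — and in fact the whole of $\interlA{\tr_1}{\tr_2}$ is realizable from $\GG$ by Lemma~\ref{l:AOGS-recomp-ct} applied uniformly, so in particular $\tr$ itself is a complete trace for $\GG$. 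By symmetry of the hypotheses (swapping $\FF_i \leftrightarrow \GG_i$), every complete trace for $\GG$ is a complete trace for $\FF$ as well. Together with the support-equivalence $\okC{\FF}{\GG}$, which follows from $\okC{\FF_i}{\GG_i}$ for $i=1,2$ since the polarity function and continuation structure of a tensor product are the (disjoint) unions of those of the components, this gives $\FF \TEc \GG$.

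The step that requires the most care is making precise the passage from ``$\tr \in \interlA{\tr_1}{\tr_2}$ realizable from $\FF$'' to ``$\tr$ realizable from $\GG$''. The clean way is to observe that Lemma~\ref{l:AOGS-recomp-ct}, as stated, produces \emph{some} $\tr \in \interlA{\tr_1}{\tr_2}$, but its proof (via the $\pi$-calculus decomposition, Lemma~\ref{l:dec_piI}, or equivalently via Lemma~\ref{c:dec_GS}(\ref{l:dec_GS_2})) actually yields \emph{every} element of $\interlA{\tr_1}{\tr_2}$: an arbitrary interleaving satisfying the alternation and $P$-starting side conditions can be scheduled. So I would either invoke Lemma~\ref{c:dec_GS}(\ref{l:dec_GS_2}) directly in place of Lemma~\ref{l:AOGS-recomp-ct} (it is phrased for an arbitrary $\tr \in \interlA{\tr_1}{\tr_2}$), and then note via Lemma~\ref{l:OGS_complete-strongly-passive} that reaching a strongly passive configuration makes the resulting trace complete — exactly as in the proof of Lemma~\ref{l:AOGS-recomp-ct}. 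With that observation the double inclusion argument above goes through verbatim, and the main obstacle dissolves into a routine bookkeeping of the side conditions on $\interlA{-}{-}$.
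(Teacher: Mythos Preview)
Your proposal is correct and, once you resolve the ``subtlety'' in your second paragraph, it is exactly the paper's proof: decompose $\tr$ via Lemma~\ref{l:AOGS-decomp-ct}, transfer the components $\tr_i$ to $\GG_i$ using $\FF_i \TEc \GG_i$, and then recompose the \emph{same} $\tr$ on the $\GG$ side using Lemma~\ref{c:dec_GS}(\ref{l:dec_GS_2}) (not Lemma~\ref{l:AOGS-recomp-ct}), concluding completeness from $\okC{\FF}{\GG}$. Your instinct that Lemma~\ref{l:AOGS-recomp-ct} only yields \emph{some} interleaving while Lemma~\ref{c:dec_GS}(\ref{l:dec_GS_2}) handles an arbitrary one is exactly the right observation, and the paper makes the same move directly.
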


\begin{proof}
First note that $\okC{\FF}{\GG}$ holds, 
as ${\FF_i}  \TEc {\GG_i}$, $i=1,2$ and therefore 
$\okC{{\FF_i}}{{\GG_i}}$. 

Suppose $\FF \ArrA \tr \FF'$ where $\tr$ is a complete trace for $\FF$. 
By Lemma~\ref{l:AOGS-decomp-ct}, we have 
${\FF_i} \ArrA{\tr_i} \FF_i'$,
where $ \tr_i$ is a complete trace, $i=1,2$,
with $\tr \in \interlA{\tr_1}{\tr_2} $ and  
$\FF' = \appendCon{\FF'_1}{\FF'_2}$.
Since ${\FF_i}  \TEc {\GG_i}$, $i=1,2$, we also have  
${\GG_i} \ArrA{\tr_i}   \GG_i'$. 
By Lemma~\ref{c:dec_GS}(\ref{l:dec_GS_2}) we infer that 
$\GG \ArrA{\tr}$. Moreover, $\tr$ is complete for $\GG$,
since $\GG$ has the same continuation structure as $\FF$.
\end{proof}  

\subsection{\WBOGS}
\label{a:tensor-wbogs}

We now introduce the tensor product of stacked configurations.
Compared to the tensor product of \COGS or \AOGS configurations,
there are multiple ways in \WBOGS
to tensor two stacked configurations, so that the definition of the 
tensor product $\otimes_{\stt}$ has to be 
indexed by a stack of continuation names $\stt$.
Indeed, when tensoring two stacked configurations, we can
choose how to interleave their stack.

First, we adapt the notion of compatible configuration
to stacked ones.
\begin{definition}
 Two stacked configurations $\FFb_1,\FFb_2$ are compatible 
 when their stacks are disjoint and 
 their corresponding \AOGS
 configurations $\es{\FFb_1},\es{\FFb_2}$ are compatible.
\end{definition}

\begin{definition}
Let $\FFb_1$ and $\FFb_2$ be two compatible stacked configurations
with at most one active,
and $\stt_1,\stt_2$ their respective stacks.  
Taking $\stt$ an interleaving of $\stt_1,\stt_2$,
we write $\FFb_1 \otimes_{\stt} \FFb_2$ for the following 
stacked configuration:
\begin{itemize}
 \item $\conf{M,p,\stt,\gamma_1 \cdot \gamma_2,\phi_1 \cup \phi_2}$
 when one of the two configurations is active, with $M,p$
 its toplevel term and continuation name;
 \item $\conf{\stt,\gamma_1 \cdot \gamma_2,\phi_1 \cup \phi_2}$
 when the two configurations are passive;
\end{itemize}
with $\gamma_i,\phi_i$ the environment and the set of names in $\FFb_i$.
\end{definition}

Taking $\tr_1,\tr_2$ two traces well-bracketed 
\wrt respectively $\stt_1$ and $\stt_2$,
and $\stt$ a valid interleaving of 
$\stt_1,\stt_2$;
we write $\interlWBs{\stt}{\tr_1}{\tr_2}$ for the subset of $\interlA{\tr_1}{\tr_2}$
formed by well-bracketed traces \wrt $\stt$.

\begin{lemma}
  \label{c:lift-atrace-wbtrace}
  Suppose $\FFb_1,\FFb_2$ are compatible stacked configurations
  and $\stt$ is an interleaving of their respective stack,
  such that $\FFb_1 \otimes_{\stt} \FFb_2 \ArrWB{\tr} \GGb$.
  Taking $\tr_1,\tr_2$ two traces, $\tr \in \interlWBs{\stt}{\tr_1}{\tr_2}$,
  and $\GG_1,\GG_2$ two compatible \AOGS configurations
  we suppose that $\es{\FFb_1} \ArrA{\tr_1} \GG_1$
  and $\es{\FFb_2} \ArrA{\tr_2} \GG_2$.
  Then there exists $\GGb_1,\GGb_2$ two compatible stacked configurations
  and $\stt'$ an interleaving of their respective stack such that
  $\GGb = \GGb_1 \otimes_{\stt'} \GGb_2$
  and for all $i=1,2$, 
  $\es{\GGb_i} = \GG_i$
  and 
  $\FFb_i \ArrWB{\tr_i} \GGb_i$.
\end{lemma}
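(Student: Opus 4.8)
The plan is to separate the argument into a part about the underlying \AOGS configurations and a part about the stacks, and then glue the two together.

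First I would deal with the underlying \AOGS configurations. Let $\stt'$ denote the stack of $\GGb$. Applying Lemma~\ref{l:WBs_lts} to the hypothesis $\FFb_1 \otimes_{\stt} \FFb_2 \ArrWB{\tr} \GGb$ yields $\es{\FFb_1 \otimes_{\stt} \FFb_2} \ArrA{\tr} \es{\GGb}$ together with $\stt \pshdTrans{\tr} \stt'$; and by the definition of the \WBOGS tensor, $\es{\FFb_1 \otimes_{\stt} \FFb_2} = \es{\FFb_1} \otimes \es{\FFb_2}$, the \AOGS tensor of Section~\ref{a:tensor-aogs}. On the other hand, since $\tr \in \interlWBs{\stt}{\tr_1}{\tr_2} \subseteq \interlA{\tr_1}{\tr_2}$ and $\es{\FFb_i} \ArrA{\tr_i} \GG_i$ for $i=1,2$, Lemma~\ref{c:dec_GS}(\ref{l:dec_GS_2}) gives $\es{\FFb_1} \otimes \es{\FFb_2} \ArrA{\tr} \GG_1 \otimes \GG_2$; in particular $\GG_1$ and $\GG_2$ are compatible, at most one of them active (these properties being inherited from $\FFb_1,\FFb_2$). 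Matching this run against the one reaching $\es{\GGb}$ --- using the determinacy of $\redv$ and of the \AOGS LTS, together with the fact that the splitting of $\tr$ into $\tr_1,\tr_2$ is forced by the namespaces of the subject names of its actions --- one gets $\es{\GGb} = \GG_1 \otimes \GG_2$.

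Next I would track the stacks through the pushdown system. Each action of $\tr$ carries a subject name belonging to the namespace of exactly one of $\FFb_1,\FFb_2$: the two namespaces are disjoint because the stacks $\stt_1,\stt_2$ of $\FFb_1,\FFb_2$ are disjoint, the Player continuation names of $\es{\FFb_1},\es{\FFb_2}$ are disjoint, and all object names appearing later in the traces are fresh. This fixes a colouring of the actions of $\tr$ by $\{1,2\}$ whose colour-$i$ subtrace is exactly $\tr_i$, and hence, by restriction, a colouring of the entries of $\stt$ whose colour-$i$ part is $\stt_i$. Writing $\stt_i'$ for the (unique) stack with $\stt_i \pshdTrans{\tr_i} \stt_i'$ --- which exists since $\tr_i$ is well-bracketed \wrt $\stt_i$ --- I would prove by induction on $\tr$ that $\stt'$ is an interleaving of $\stt_1'$ and $\stt_2'$ respecting the colouring. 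In the induction step, if the leading action has colour $i$ and is a $PQ$ it pushes a fresh name on top of both $\stt$ and $\stt_i$; if it is an $OQ$ it changes neither; if it is a $PA$ or an $OA$ it pops a continuation name $p$ which must sit on top of $\stt$ for the pushdown transition to be enabled, and, since $p$ has colour $i$, it then also sits on top of $\stt_i$, so the same action legally pops $\stt_i$ --- and in every case the colouring and the interleaving invariant are preserved for the remaining trace. I expect this bookkeeping to be the main obstacle: the claim that a popped continuation name is on top of the sub-stack of its own colour is precisely where the disjointness of the namespaces and the hypothesis $\tr \in \interlWBs{\stt}{\tr_1}{\tr_2}$ (well-bracketedness of $\tr$ \wrt $\stt$) are needed.

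Finally I would glue the pieces back together. Let $\GGb_i$ be the stacked configuration assembled from the \AOGS configuration $\GG_i$ and the stack $\stt_i'$. Since $\es{\FFb_i} \ArrA{\tr_i} \GG_i$ and $\stt_i \pshdTrans{\tr_i} \stt_i'$, the reverse direction of Lemma~\ref{l:WBs_lts} guarantees that $\GGb_i$ is a well-formed stacked configuration, that $\es{\GGb_i} = \GG_i$, and that $\FFb_i \ArrWB{\tr_i} \GGb_i$. Then $\GGb_1$ and $\GGb_2$ are compatible --- their stacks $\stt_1',\stt_2'$ are disjoint by the colouring, and their underlying \AOGS configurations $\GG_1,\GG_2$ are compatible by the first step --- and $\stt'$ is an interleaving of $\stt_1',\stt_2'$ by the second step, so $\GGb_1 \otimes_{\stt'} \GGb_2$ is defined. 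It has underlying \AOGS configuration $\GG_1 \otimes \GG_2 = \es{\GGb}$ and stack $\stt'$, hence it equals $\GGb$; this is the required statement, with $\stt'$ as the interleaving.
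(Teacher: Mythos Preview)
Your route differs from the paper's. The paper runs a single induction on $\tr$: it peels off the first action $\mu$, determines by case analysis on the shape of $\mu$ (OA, PQ, OQ, PA) which component $\FFb_i$ performs it, reads off the updated component $\FFb'_i$ and the new interleaved stack $\stt''$ directly from the \WBOGS transition rules, checks that $\FFb' = \FFb'_1 \otimes_{\stt''} \FFb_2$ and $\tracerho \in \interlWBs{\stt''}{\tracerho_1}{\tr_2}$, and then applies the induction hypothesis to the suffix $\tracerho$. You instead factor the problem in two: first you drop to \AOGS via Lemma~\ref{l:WBs_lts} and invoke Lemma~\ref{c:dec_GS}(\ref{l:dec_GS_2}) to match the underlying configurations, then you run a separate, purely combinatorial induction on the pushdown system to show that the final stack $\stt'$ is an interleaving of the component stacks $\stt_1',\stt_2'$, and finally you reassemble using the converse direction of Lemma~\ref{l:WBs_lts}. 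This is more modular and reuses the \AOGS tensor machinery the paper has already built; the paper's single induction is more self-contained but effectively redoes the content of Lemma~\ref{c:dec_GS} inline while simultaneously tracking stacks.

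Two small caveats. First, your determinacy step concluding $\es{\GGb} = \GG_1 \otimes \GG_2$ is not quite airtight: the weak relation $\ArrA{s}$ is not literally functional in its endpoint, since trailing silent steps after the last visible action leave some slack. The paper is equally informal here (its inductive step writes the first transition of the run as a strong step and, in the PQ case, tacitly assumes the running term is already in stuck form), so this is not a defect specific to your argument; both go through once $\tau$-steps are handled uniformly. Second, your claim that the splitting of $\tr$ into $\tr_1,\tr_2$ is ``forced by the namespaces of the subject names'' needs a word of care for Player actions, whose subjects are O-names and may well lie in the name-support of both components; what actually fixes the split there is that at most one component of the tensor is active, and only that component can emit a Player action.
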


\begin{proof}
We reason by induction on $\tr$.
If $\tr$ is empty, then $\GGb = \FFb_1 \otimes_{\stt} \FFb_2$.
Then both $\tr_1,\tr_2$ are empty as well,
so that $\GG_1 = \es{\FFb_1}$ and $\GG_2 = \es{\FFb_2}$.
So we define $\GGb_1$ as $\FFb_1$ and $\GGb_2$ as $\GGb_2$.

If $\tr$ is of the shape $\act \tracerho$,
then either $\act$ is the first action of $\tr_1$, or the first action of $\tr_2$.
Suppose it is the first action of $\tr_1$, then $\tr_1$
can be written as $\act \tracerho_1$.
Then there exists $\FF'_1$ such that 
$\es{\FFb_1} \arrA{\act} \FF'_1 \ArrA{\tracerho_1} \GG_1$.
Moreover, there exists $\FFb'$
such that $\FFb_1 \otimes_{\stt} \FFb_2 \arrWB{\act} \FFb' \ArrWB{\tr} \GGb$.

We prove by a case analysis over $\act$
that there exists $\FFb'_1$ and $\stt''$
such that $\FFb' = \FFb'_1 \otimes_{\stt''} \FFb_2$,
$\es{\FFb'_1} = \FF'_1$
and that $\tracerho \in \interlWBs{\stt''}{\tracerho_1}{\tr_2}$.
\begin{itemize}
  \item If $\act$ is an Opponent answer
  $\ansO{p}{x}$ then $\FFb_1$ is of the shape
  $\conf{\stt_1,\gamma\cdot [p \mapsto (E,q)],\phi}$,
  and $\stt$ is of the shape $p::\stt''$.
  So $\stt_1$ can be written as $p::\stt'_1$
  since $\stt$ is an interleaving of $\stt_1$ and $\stt_2$,
  and that $p$ must belong to $\stt_1$.
  Writing the stack of $\FFb_2$
  as $\stt_2$, we get that
  $\stt''$ is an interleaving of $\stt'_1$ and $\stt_2$.
  Defining $\FFb'_1$ as 
  $\conf{\stt'_1,\gamma,\phi \cup \{x\}}$,
  we then get the wanted properties.
  \item If $\act$ is a Player question
  $\questPV{x}{y}{q}$
  then $\FFb_1$ is of the shape
  $\conf{E[xV],p,\stt_1,\gamma,\phi}$,
  so that $\FF'_1$ is equal to
  $\conf{\gamma \cdot \pmap{y}{V}\cdot \pmap{q}{(E,p)},\phi \cup \{y,q\}}$.
  Moreover, the stack of $\FFb'$ is equal to
  $q::\stt$.
  Defining $\FFb'_1$ as 
  $\conf{q::\stt_1,\gamma \cdot \pmap{y}{V}\cdot \pmap{q}{(E,p)},\phi \cup \{x\}}$
  and $\stt''$ as $q::\stt$,
  we then get the wanted properties.
  \item If $\act$ is an Opponent question or a Player answer,
  then we can simply take $\stt'$ equal to $\stt$,
  and $\FFb'_1$ as $\FF'_1$ equipped with the same stack
  as $\FFb_1$.
\end{itemize}
We then conclude using the induction hypothesis over $\tracerho$.
A similar reasoning applies when $\act$ is the first action of $\tr_2$.
\end{proof}

\begin{lemma}
\label{c:dec_GS_WB}
Suppose $\FFb_1,\FFb_2$ are compatible stacked configurations
and $\stt$ is an interleaving of their respective stack.
Then the set of traces generated by $\FFb_1 \otimes_{\stt} \FFb_2$
is equal to the set of interleavings 
$\interlWBs{\stt}{\tr_1}{\tr_2}$ well-bracketed \wrt $\stt$,
with $\FFb_1 \ArrWB{\tr_1}$ and $\FFb_2 \ArrWB{\tr_2}$.
Moreover, $\tr$ is complete if and only if $\tr_1,\tr_2$ are.
That is:
\begin{enumerate}
  \item If $\FFb_1 \otimes_{\stt} \FFb_2 \ArrWB \tr  \GGb$, then, 
  for $i=1,2$,  there are traces $\tr_i$
  such that ${\FFb_i}\ArrWB{\tr_i} \GGb_i$, 
  and $\tr \in \interlWBs{\stt}{\tr_1}{\tr_2}$, 
  with $\GGb = \GGb_1 \otimes_{\stt'} \GGb_2$;
  and if $\tr$ is complete for $\FFb$ then both $\tr_i$ are complete for $\FFb_i$.
  \item Conversely, if for  $i=1,2$, we have
  ${\FFb_i} \ArrWB {\tr_i} \GGb_i$,  
  and $\tr \in \interlWBs{\stt}{\tr_1}{\tr_2}$, 
  then $\FFb_1 \otimes_{\stt} \FFb_2 \ArrWB{\tr} 
  \GGb_1 \otimes_{\stt'} \GGb_2$, with $\stt'$
  an interleaving of the stack of $\GGb_1$ and $\GGb_2$;
  and if both $\tr_i$ are complete for $\FFb_i$ then $\tr$ is complete for $\FFb$.
\end{enumerate} 
\end{lemma}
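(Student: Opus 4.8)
The plan is to reduce both directions to results already established for \AOGS and for the passage between \WBOGS and \AOGS, together with one auxiliary observation about the pushdown LTS that tracks well-bracketing. The pivotal tools will be Lemma~\ref{l:WBs_lts} (translating $\FFb\ArrWB{\tr}\FFb'$ into $\es{\FFb}\ArrA{\tr}\es{\FFb'}$ plus $\tr$ being well-bracketed \wrt the stack of $\FFb$), Lemma~\ref{c:dec_GS} (the decomposition/recomposition for the \AOGS tensor), Lemma~\ref{c:lift-atrace-wbtrace} (which, given the \AOGS decomposition and that the factor traces are well-bracketed \wrt their stacks, rebuilds the stacked decomposition of the target), and Lemma~\ref{l:WBOGS_complete-strongly-passive} (completeness $\Leftrightarrow$ strong passivity of the target). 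The one genuinely new ingredient I would isolate is a \emph{pushdown-splitting lemma}: since $\FFb_1$ and $\FFb_2$ cannot interact, their Player continuation names are disjoint, so each action of a run of $\FFb_1\otimes_{\stt}\FFb_2$ is performed by exactly one factor, inducing a decomposition $\tr\in\interl{\tr_1}{\tr_2}$; and along this decomposition, when $\stt$ is an interleaving of the factor stacks $\stt_1,\stt_2$, one has $\stt\pshdTrans{\tr}\stt'$ iff $\stt_i\pshdTrans{\tr_i}\stt'_i$ for $i=1,2$ with $\stt'$ an interleaving of $\stt'_1,\stt'_2$. I would prove this by induction on $\tr$, keeping as invariant that the current stack is always an interleaving of the current factor stacks, and using that the top of an interleaved stack is the top of whichever factor substack contains it, so that a pop of the top of $\stt$ (an answer action) is matched by a pop at the level of the relevant factor. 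A corollary is that $\tr\in\interlWBs{\stt}{\tr_1}{\tr_2}$ holds exactly when $\tr$ is well-bracketed \wrt $\stt$ and $\tr_1,\tr_2$ are the factor projections.

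For (1) I would argue as follows. Suppose $\FFb_1\otimes_{\stt}\FFb_2\ArrWB{\tr}\GGb$. By Lemma~\ref{l:WBs_lts}, $\es{\FFb_1}\otimes\es{\FFb_2}=\es{\FFb_1\otimes_{\stt}\FFb_2}\ArrA{\tr}\es{\GGb}$ and $\tr$ is well-bracketed \wrt $\stt$. Since at most one factor is active, Lemma~\ref{c:dec_GS}(\ref{l:dec_GS_1}) yields the factor projections $\tr_1,\tr_2$ and \AOGS configurations $\GG_1,\GG_2$ with $\es{\FFb_i}\ArrA{\tr_i}\GG_i$, $\tr\in\interlA{\tr_1}{\tr_2}$, and $\es{\GGb}=\GG_1\otimes\GG_2$ (so $\GG_1,\GG_2$ are compatible). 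By the pushdown-splitting lemma each $\tr_i$ is well-bracketed \wrt the stack $\stt_i$ of $\FFb_i$, hence $\tr\in\interlWBs{\stt}{\tr_1}{\tr_2}$, and Lemma~\ref{c:lift-atrace-wbtrace} delivers compatible stacked configurations $\GGb_1,\GGb_2$ and an interleaving $\stt'$ of their stacks with $\GGb=\GGb_1\otimes_{\stt'}\GGb_2$, $\es{\GGb_i}=\GG_i$, and $\FFb_i\ArrWB{\tr_i}\GGb_i$. For the completeness clause: if $\tr$ is complete for $\FFb_1\otimes_{\stt}\FFb_2$ then $\GGb$ is strongly passive by Lemma~\ref{l:WBOGS_complete-strongly-passive}; a tensor $\GGb_1\otimes_{\stt'}\GGb_2$ is strongly passive iff both factors are, since its stack, running term and continuation structure are respectively the interleaving and the disjoint unions of those of the factors; so each $\GGb_i$ is strongly passive, and Lemma~\ref{l:WBOGS_complete-strongly-passive} applied to $\FFb_i\ArrWB{\tr_i}\GGb_i$ gives that $\tr_i$ is complete for $\FFb_i$.

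For (2) I would run the argument backwards. Suppose $\FFb_i\ArrWB{\tr_i}\GGb_i$ ($i=1,2$) and $\tr\in\interlWBs{\stt}{\tr_1}{\tr_2}$. By Lemma~\ref{l:WBs_lts} each $\es{\FFb_i}\ArrA{\tr_i}\es{\GGb_i}$ and $\tr_i$ is well-bracketed \wrt $\stt_i$. As $\interlWBs{\stt}{\tr_1}{\tr_2}\subseteq\interlA{\tr_1}{\tr_2}$ and at most one factor is active, Lemma~\ref{c:dec_GS}(\ref{l:dec_GS_2}) gives $\es{\FFb_1\otimes_{\stt}\FFb_2}=\es{\FFb_1}\otimes\es{\FFb_2}\ArrA{\tr}\es{\GGb_1}\otimes\es{\GGb_2}$. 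Since $\tr$ is well-bracketed \wrt $\stt$, Lemma~\ref{l:WBs_lts} lifts this to $\FFb_1\otimes_{\stt}\FFb_2\ArrWB{\tr}\FFb'$, where $\FFb'$ is $\es{\GGb_1}\otimes\es{\GGb_2}$ carrying the stack $\stt'$ with $\stt\pshdTrans{\tr}\stt'$; by the pushdown-splitting lemma $\stt'$ is an interleaving of the stacks of $\GGb_1$ and $\GGb_2$, so $\FFb'=\GGb_1\otimes_{\stt'}\GGb_2$. Finally, if both $\tr_i$ are complete for $\FFb_i$, then each $\GGb_i$ is strongly passive, hence so is $\GGb_1\otimes_{\stt'}\GGb_2$, and Lemma~\ref{l:WBOGS_complete-strongly-passive} yields that $\tr$ is complete for $\FFb_1\otimes_{\stt}\FFb_2$.

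The main obstacle I anticipate is precisely the pushdown-splitting lemma, and within it the step showing that popping the top of the interleaved stack is simultaneously a pop at the level of the relevant factor stack; this is the content of ``well-bracketing is compositional'' and it crucially uses that the two configurations do not interact, so that Player continuation names are partitioned between the factors and the decomposition of a trace into factor projections is forced. Everything else — the passages through Lemma~\ref{l:WBs_lts}, the reuse of the \AOGS results, and the decomposition of strong passivity across $\otimes_{\stt}$ — is routine bookkeeping.
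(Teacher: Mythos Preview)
Your proposal is correct and follows essentially the same route as the paper: both directions go through \AOGS via Lemma~\ref{l:WBs_lts}, invoke the \AOGS tensor decomposition Lemma~\ref{c:dec_GS}, and handle completeness via Lemma~\ref{l:WBOGS_complete-strongly-passive} together with the observation that a tensor is strongly passive iff both factors are.

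The one difference is expository. You isolate a separate ``pushdown-splitting lemma'' to establish that each factor trace $\tr_i$ is well-bracketed \wrt $\stt_i$ before invoking Lemma~\ref{c:lift-atrace-wbtrace}. The paper does not do this: it applies Lemma~\ref{c:lift-atrace-wbtrace} directly, because the inductive proof of that lemma already derives $\FFb_i \ArrWB{\tr_i} \GGb_i$ as a \emph{conclusion} (and hence the well-bracketing of $\tr_i$), rather than assuming it. So for direction~(1) your pushdown-splitting lemma duplicates work already packaged inside Lemma~\ref{c:lift-atrace-wbtrace}. For direction~(2) your use of it to argue that the resulting stack $\stt'$ is an interleaving of the factor stacks is a point the paper simply asserts; your explicit treatment is a welcome clarification, but not a different argument.
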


\begin{proof}
  \begin{enumerate}
    \item Lifting the reduction to \AOGS we get that
    $\es{\FFb_1} \otimes_{\stt} \es{\FFb_2} \ArrA{\tr} \es{\GGb}$.
    Applying Lemma~\ref{c:dec_GS} we deduce that there exists
    $\GG_1,\GG_2$ two compatible \AOGS configurations
    such that $\es{\FFb_1} \ArrA{\tr_1} \GG_1$
    and $\es{\FFb_2} \ArrA{\tr_2} \GG_2$.
    We conclude using Lemma~\ref{c:lift-atrace-wbtrace}.
    \item Lifting the reductions to \AOGS we get that
    ${\es{\FFb_i}} \ArrA {\tr_i} \es{\GGb_i}$, $i=1,2$.
    Applying Lemma~\ref{c:dec_GS} we deduce that 
    $\es{\FFb_1} \otimes \es{\FFb_2} \ArrA{\tr} \es{\GGb_1} \otimes \es{\GGb_2}$.
    Since $\tr$ is well-bracketed,
    from Lemma~\ref{l:WBs_lts}
    we get that
    $\FFb_1 \otimes_{\stt} \FFb_2 \ArrWB{\tr} 
     \GGb_1 \otimes_{\stt'} \GGb_2$
    for $\stt'$
    an interleaving of the stack of $\GGb_1$ and $\GGb_2$.
    
    If both $\tr_i$ are complete, then from Lemma~\ref{l:WBOGS_complete-strongly-passive},
    both $\GGb_i$ are strongly passive, so that $\GGb$ is also strongly passive.
    Thus applying Lemma~\ref{l:WBOGS_complete-strongly-passive} in the other direction, 
    we deduce that $\tr$ is complete for $\FFb$.
  \end{enumerate}
\end{proof}

Finally, we adapt Lemma~\ref{l:decACN} to \WBOGS.

\ifcutLMCS
Lemma~\ref{l:decNsig-WB} shows that the action of a configuration is 
determined by a specific singleton component, 
and it is the action itself that allows us to determine such a component.

\begin{lemma}
\label{l:decNsig-WB}
Suppose $\FFb \arrWB\mu \FFb'$.
Then there is a singleton $\FFb_1$, and some $\FFb_1',\FFb_2,\FFb_2'$ such that:

\begin{enumerate}
\item $\FFb =
 {\FFb_1} \otimes_{\stt} {\FFb_2}$,
\item $\FFb_1 \arrWB\mu \FFb_1'$,
\item $\FFb' =
 {\FFb_1'} \otimes_{\stt'} {\FFb_2}$.
\end{enumerate} 
 
Moreover, the P-\Supports and the structure of continuations of $\FFb_1$ and $\FFb_1'$  
are determined by 
$\mu$; that is, 
if  $\okC \FFb\GGb$ and
 $\GGb \arrWB\mu \GGb'$, then for  
 the singleton $\GGb_1$, and the configurations $\GGb_1',\GGb_2,\GGb_2'$ such that
 $\GGb = \appendCon
 {\GGb_1} {\GGb_2}$,
 $\GGb_1 \arrWB\mu \GGb_1'$, and 
 $\GGb' = \appendCon
 {\GGb_1'} {\GGb_2}$ we have: 
 $\okC{\FFb_1}{\GGb_1}$, $\okC{\FFb_2}{\GGb_2}$, and 
$\okC{\FFb_1'}{\GGb_1'}$.  
 \end{lemma}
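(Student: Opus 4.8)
The plan is to argue exactly as for the Alternating case (Lemma~\ref{l:decNsig}), by a case analysis on the rule of Figure~\ref{fig:stacked-lts} used to derive $\FFb\arrWB\mu\FFb'$; the only additional work, compared with \AOGS, is to record how the stack component of $\FFb$ is split between the two factors of the tensor product.

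First I would treat the rules $(P\tau)$, $(PA)$, $(PQ)$ and $(OQ)$, in which the singleton factor carries no stack of its own. If $\FFb$ is active, of the shape $\conf{M,p,\stt,\gamma,\phi}$, I take $\FFb_1$ to be the active singleton $\conf{M,p,\emptystack,\emptymap,\fv{M}\uplus\{p\}}$; if the transition is $(OQ)$ with $\gamma(x)=V$, I take the passive singleton $\conf{\emptystack,\pmap{x}{V},\fv{V}\uplus\{x\}}$; in both cases $\FFb_2$ is $\FFb$ with that component removed (the running term, resp.\ the entry $\pmap{x}{V}$, deleted). Since the stack of $\FFb_1$ is $\emptystack$, we get $\FFb=\FFb_1\otimes_{\stt}\FFb_2$ with $\stt$ the stack of $\FFb$ itself, and applying the corresponding rule to $\FFb_1$ yields $\FFb_1\arrWB\mu\FFb_1'$; then $\FFb'=\FFb_1'\otimes_{\sttp}\FFb_2$ holds with $\sttp=\stt$ in all these subcases except $(PQ)$, where $\FFb_1'$ has stack $q::\emptystack$ for the freshly pushed $q$, so $\sttp=q::\stt$, matching the stack of $\FFb'$. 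The case $(IOQ)$ is degenerate: $\FFb$ is already a singleton and $\FFb_2$ is taken to be the empty configuration.

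The stack-sensitive case is $(OA)$: here $\FFb=\conf{p::\stt,\gamma\cdot\pmap{p}{(E,q)},\phi}$, and I take $\FFb_1$ to be the singleton passive stacked configuration $\conf{[p],\pmap{p}{(E,q)},\fv{E}\uplus\{p,q\}}$, with $\FFb_2$ obtained from $\FFb$ by removing $\pmap{p}{(E,q)}$ from $\gamma$ and $p$ from the stack. Then $p::\stt$ is precisely the interleaving of the one-element stack $[p]$ of $\FFb_1$ with the stack of $\FFb_2$ that puts $p$ first, so $\FFb=\FFb_1\otimes_{p::\stt}\FFb_2$; moreover $\FFb_1\arrWB{\ansO{p}{x}}\FFb_1'$ for an active singleton $\FFb_1'$ whose stack is $\emptystack$, whence $\FFb'=\FFb_1'\otimes_{\stt}\FFb_2$. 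Validity of $\FFb_1$ (its stack is a single Player continuation name) and compatibility of the two factors are routine checks.

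Finally, for the \emph{moreover} part: in each of the cases above the applicable rule, the role played by every name occurring in $\mu$ (toplevel continuation, freshly introduced, top of the stack, or looked up in $\gamma$) and the induced change to the polarity function, the continuation structure and the stack are all determined by the shape of $\mu$ alone. Hence if $\okC\FFb\GGb$ --- so $\FFb,\GGb$ have the same polarity function, the same continuation structure and, by Definition~\ref{d:compa} as strengthened for stacked configurations, the same stack --- and $\GGb\arrWB\mu\GGb'$, then the same rule fires for $\GGb$, the same name is peeled off, the singleton factors $\FFb_1,\GGb_1$ have matching shapes, and likewise for the remainders and the updated factors, giving $\okC{\FFb_1}{\GGb_1}$, $\okC{\FFb_2}{\GGb_2}$ and $\okC{\FFb_1'}{\GGb_1'}$. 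The one genuine obstacle I anticipate is getting the stack splitting right in the $(OA)$ and $(PQ)$ cases just sketched; everything else transcribes the proof of Lemma~\ref{l:decNsig} almost verbatim.
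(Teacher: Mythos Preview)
Your proof is correct. For the first part (existence of the singleton decomposition), you give an explicit case analysis on the rules of Figure~\ref{fig:stacked-lts}, constructing $\FFb_1$ and $\FFb_2$ by hand in each case and checking the stack interleavings; the paper instead appeals to Lemma~\ref{c:dec_GS_WB}, the general trace-decomposition result for \WBOGS, applied to a length-one trace (using that any configuration decomposes as a tensor of singletons, exactly one of which can perform the action). Your route is more elementary and self-contained, at the cost of spelling out the stack bookkeeping for $(PQ)$ and $(OA)$ that you correctly identify as the only non-trivial cases; the paper's route reuses machinery already in place. For the ``moreover'' part, both you and the paper proceed by a case analysis on the shape of $\mu$, and your argument that $\okC\FFb\GGb$ (which for stacked configurations includes equality of stacks) forces the same rule, the same singled-out name, and hence support-equivalent factors, is exactly what the paper has in mind.
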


\begin{proof}
The first part is proved using Lemma~\ref{c:dec_GS_WB}. 
The second part is a simple case analysis on the form of $\mu$.
\end{proof} 

The next lemma uses the previous result to show that the decomposition of a trace 
is uniform.

\begin{lemma}
\label{l:decNsig2-WB}
Suppose 
\begin{enumerate}
\item
$\FFb \ArrWB \tr   \FFb'$
and $\GGb \ArrWB \tr   \GGb'$, 
\item  
 $\FFb = \FFb_1 \otimes_\stt \FFb_2$ and 
 $\GGb = \GGb_1 \otimes_\stt \GGb_2$,  
\item 
  $\okC \FFb\GGb$ and  
  $\okC {\FFb_i}{\GGb_i}$, for $i=1,2$,   
\item 
$\FFb_i \ArrWB{\tr_i}  \FFb_i'$, $i=1,2$,  with $\tr \in \interlWBs{\stt}{\tr_1}{\tr_2}$ and 
 $\FFb' = \FFb'_1 \otimes_{\stt'} \FFb'_2$.
\end{enumerate} 
Then we also have 
$\GGb_i \ArrWB{\tr_i}   \GGb_i'$, for some $\GGb_i'$, 
$i=1,2$ with 
 $\GGb' = \GGb'_1 \otimes_\stt \GGb'_2$.
\end{lemma}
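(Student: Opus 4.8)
The plan is to follow the proof of Lemma~\ref{l:decNsig2} closely, using the single-action decomposition lemma for \WBOGS, Lemma~\ref{l:decNsig-WB}, in place of Lemma~\ref{l:decNsig}, and to argue by induction on the length of $\tr$. The base case $\tr = \emptytrace$ is immediate: by the definition of $\ArrWB{\emptytrace}$ we have $\FFb' = \FFb$, $\GGb' = \GGb$, $\FFb'_i = \FFb_i$ and $\stt' = \stt$, so taking $\GGb'_i \defeq \GGb_i$ discharges the claim.

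For the inductive step write $\tr = \act\,\tracerho$ and decompose the left run as $\FFb \ArrWB{\act} \widehat{\FFb} \ArrWB{\tracerho} \FFb'$. The head action $\act$ is the first action of $\tr_1$ or of $\tr_2$; by symmetry assume $\tr_1 = \act\,\tracerho_1$, so that $\FFb_1 \ArrWB{\act} \widehat{\FFb}_1 \ArrWB{\tracerho_1} \FFb'_1$ while $\FFb_2 \ArrWB{\tr_2} \FFb'_2$, and $\tracerho \in \interlWBs{\stt''}{\tracerho_1}{\tr_2}$ where $\stt''$ is the stack obtained from $\stt$ by running $\act$. Since the $P\tau$ steps inside $\ArrWB{\act}$ act only on the active component and preserve supports, stacks, and the tensor decomposition, it is enough to analyse the single strong $\act$-step occurring inside it. Applying Lemma~\ref{l:decNsig-WB} to this step isolates the singleton subcomponent that fires $\act$; since the subject of $\act$ occurs in the $\FFb_1$-part and $\widehat{\FFb}_1$ is exactly $\FFb_1$ after this step, that subcomponent lies inside the $\FFb_1$-part, so $\widehat{\FFb} = \widehat{\FFb}_1 \otimes_{\stt''} \FFb_2$, and the lemma moreover fixes its polarity and continuation structure before and after.

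Now turn to $\GGb$. The only way it can produce $\tr$ is $\GGb \ArrWB{\act} \widehat{\GGb} \ArrWB{\tracerho} \GGb'$, with a corresponding single strong $\act$-step. As $P\tau$ steps preserve supports and $\okC{\FFb}{\GGb}$ holds, the sources of the two strong $\act$-steps are support-equivalent, so the ``moreover'' clause of Lemma~\ref{l:decNsig-WB} forces the $\act$-decompositions to agree componentwise; in particular $\act$ is localized in the $\GGb_1$-part, yielding $\GGb_1 \ArrWB{\act} \widehat{\GGb}_1$ with $\okC{\widehat{\FFb}_1}{\widehat{\GGb}_1}$, while $\GGb_2$ (support-equivalent to $\FFb_2$) is left untouched and $\widehat{\GGb} = \widehat{\GGb}_1 \otimes_{\stt''} \GGb_2$. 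We then invoke the induction hypothesis on $\tracerho$ with the pair $\widehat{\FFb} = \widehat{\FFb}_1 \otimes_{\stt''} \FFb_2$, $\widehat{\GGb} = \widehat{\GGb}_1 \otimes_{\stt''} \GGb_2$: these are support-equivalent componentwise, both run $\tracerho$, and the left decomposition is $\widehat{\FFb}_1 \ArrWB{\tracerho_1} \FFb'_1$, $\FFb_2 \ArrWB{\tr_2} \FFb'_2$ with $\tracerho \in \interlWBs{\stt''}{\tracerho_1}{\tr_2}$ and $\FFb' = \FFb'_1 \otimes_{\stt'} \FFb'_2$. The hypothesis returns $\widehat{\GGb}_1 \ArrWB{\tracerho_1} \GGb'_1$, $\GGb_2 \ArrWB{\tr_2} \GGb'_2$ and $\GGb' = \GGb'_1 \otimes_{\stt'} \GGb'_2$; prepending the $\act$-step gives $\GGb_1 \ArrWB{\tr_1} \GGb'_1$, which closes the step (the case where $\act$ heads $\tr_2$ is symmetric).

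The main obstacle, compared with the \AOGS argument, is the bookkeeping of the stack index of $\otimes_\stt$: a PQ move pushes a fresh name onto the shared stack and an OA move pops one, so at each step one must check that the updated index $\stt''$ is still a genuine interleaving of the updated component stacks and coincides with the index recorded by $\interlWBs{\stt}{\cdot}{\cdot}$ --- precisely the information that Lemmas~\ref{c:dec_GS_WB} and~\ref{l:decNsig-WB} (via Lemma~\ref{c:lift-atrace-wbtrace}) are set up to propagate. The well-bracketing constraint itself causes no trouble: it is built into $\interlWBs{\stt}{\cdot}{\cdot}$ and, through Lemma~\ref{l:WBs_lts}, into the correspondence between \WBOGS and \AOGS runs, so it is preserved by the decomposition automatically.
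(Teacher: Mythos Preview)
Your proposal is correct and follows exactly the approach the paper indicates: the paper's own proof is the one-line sketch ``Using Lemma~\ref{l:decNsig-WB}, and proceeding by induction on the length of the trace,'' and your argument is a faithful and careful expansion of that sketch, including the necessary stack-index bookkeeping. (Your $\GGb' = \GGb'_1 \otimes_{\stt'} \GGb'_2$ is the intended conclusion; the $\otimes_\stt$ in the statement's last line is a typo in the paper.)
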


\begin{proof}
Using Lemma~\ref{l:decNsig-WB}, and proceeding by induction on the length of the trace.  
\end{proof} 
\fi

\begin{lemma}
\label{l:decWBCN}
  Suppose 
  \begin{enumerate}
    \item $\FFb$ is terminating,
    \item $\FFb  \TEwbc \GGb$, 
    \item $\FFb = \FFb_1 \otimes_{\stt} \FFb_2$,   
    and  
    $\GGb = \GGb_1 \otimes_{\stt} \GGb_2$,
    \item $\okC{\FFb_1}{\FFb_2}$ and $\okC{\GGb_1}{\GGb_2}$.
  \end{enumerate} 
  Then
  $\FFb_i  \TEwbc \GGb_i$, $i=1,2$. 
\end{lemma}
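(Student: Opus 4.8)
The plan is to mirror the proof of Lemma~\ref{l:decACN}, using the well-bracketed analogues of the decomposition and recomposition lemmas for complete traces, namely Lemma~\ref{c:dec_GS_WB}. First I would use that $\FFb$ is terminating together with $\FFb \TEwbc \GGb$ to fix a complete trace $t$ for both $\FFb$ and $\GGb$ (since having the same set of complete traces and being nonempty forces a common one). Applying Lemma~\ref{c:dec_GS_WB}(1) to $\FFb = \FFb_1 \otimes_{\stt} \FFb_2$ and the reduction $\FFb \ArrWB{t} \GGb'$, I obtain traces $t_1, t_2$ with $\FFb_i \ArrWB{t_i}$, and $t \in \interlWBs{\stt}{t_1}{t_2}$, and moreover — because $t$ is complete for $\FFb$ — both $t_i$ are complete for $\FFb_i$. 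Doing the same for $\GGb = \GGb_1 \otimes_{\stt} \GGb_2$ gives complete traces $u_1, u_2$ for $\GGb_1, \GGb_2$; but I actually want to transfer the very same $t_i$, so I will instead argue as in Lemma~\ref{l:decACN}: fix $j \in \{1,2\}$ and an arbitrary complete trace $\tr_j$ for $\FFb_j$, and build a complete trace for $\FFb$ out of $\tr_j$ and the already-obtained complete trace $t_{3-j}$ for $\FFb_{3-j}$.

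Concretely, by Lemma~\ref{c:dec_GS_WB}(2) applied to $\FFb_1 \otimes_{\stt} \FFb_2$, from $\FFb_1 \ArrWB{\tr_1}$ (complete) and $\FFb_2 \ArrWB{t_2}$ (complete) — taking $j=1$ — and choosing an interleaving $\tr \in \interlWBs{\stt}{\tr_1}{t_2}$, I get that $\tr$ is complete for $\FFb$ and $\FFb \ArrWB{\tr}$. Since $\FFb \TEwbc \GGb$, the trace $\tr$ is also complete for $\GGb$, so $\GGb \ArrWB{\tr}$. Now I need the uniformity of decomposition: because $\okC{\FFb}{\GGb}$, $\okC{\FFb_i}{\GGb_i}$, and $\okC{\FFb_1}{\FFb_2}$, $\okC{\GGb_1}{\GGb_2}$, the same interleaving splitting of $\tr$ into $\tr_1$ and $t_2$ must work for $\GGb$; this is exactly the role played by Lemma~\ref{l:decNsig2} in the \AOGS proof, whose well-bracketed counterpart is Lemma~\ref{l:decNsig2-WB}. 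Applying it gives $\GGb_1 \ArrWB{\tr_1}$, and this $\tr_1$ is complete for $\GGb_1$ because $\GGb_1$ has the same continuation structure and stack as $\FFb_1$ (so pending questions match), or alternatively because the reconstructed $\GGb'$ component is strongly passive by Lemma~\ref{l:WBOGS_complete-strongly-passive}. Symmetrically for $j=2$, so $\FFb_i$ and $\GGb_i$ have the same complete traces, i.e.\ $\FFb_i \TEwbc \GGb_i$; together with $\okC{\FFb_i}{\GGb_i}$ (which holds by hypothesis (4) and the definition of $\otimes_\stt$ requiring compatible components sharing the relevant structure), this is the claim.

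The main obstacle I anticipate is the bookkeeping around the stack index: unlike the \COGS and \AOGS tensor products, $\otimes_{\stt}$ is parametrised by a chosen interleaving $\stt$ of the two component stacks, so I must be careful that the interleaving witnessing $\tr \in \interlWBs{\stt}{\tr_1}{t_2}$ is compatible with the $\stt$ indexing both $\FFb = \FFb_1 \otimes_\stt \FFb_2$ and $\GGb = \GGb_1 \otimes_\stt \GGb_2$ — but this is guaranteed since hypothesis (3) already fixes the same $\stt$ on both sides, and Lemma~\ref{c:dec_GS_WB} is stated relative to that $\stt$. A secondary subtlety, exactly as flagged in Remark~\ref{r:CONFnoSINGL} and the discussion after Lemma~\ref{l:decACN}, is that this argument genuinely needs the \emph{terminating} hypothesis on $\FFb$: without a complete trace to start from, there is nothing forcing the $t_i$ to be complete, and the decomposition of complete-trace equivalence would fail. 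Apart from that, each step is a direct invocation of an already-established lemma, so the proof is essentially a transcription of the proof of Lemma~\ref{l:decACN} with \AOGS objects replaced by stacked ones and the corresponding \WBOGS lemmas substituted in.
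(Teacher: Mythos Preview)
Your proposal is correct and follows essentially the same approach as the paper: the paper states that the proof is a direct adaptation of that of Lemma~\ref{l:decACN}, replacing Lemmas~\ref{l:decNsig} and~\ref{l:decNsig2} by their \WBOGS counterparts, and you have identified exactly those ingredients (Lemma~\ref{c:dec_GS_WB} for decomposition/recomposition of complete traces, and the well-bracketed analogue of Lemma~\ref{l:decNsig2} for uniformity of the splitting). Your discussion of the stack-index bookkeeping and the essential role of the terminating hypothesis is also on point.
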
   

Its proof is a direct adaption of the proof of Lemma~\ref{l:decACN},
using Lemmas corresponding to Lemma~\ref{l:decNsig} and Lemma~\ref{l:decNsig2}
for \WBOGS.

\ifcutLMCS

\begin{proof}
  We write $\stt_1$ for the stack of both $\FFb_1, \GGb_1$
  and $\stt_2$ for the one of $\FFb_2, \GGb_2$.
  Suppose $\FFb_1  \TEwbc \GGb_1$ 
  does not hold (the  case of $\FFb_2$ and $\GGb_2$  is similar).
  Thus there is $\tracerho$ a well-bracketed complete trace \wrt $\stt_1$, for $\FFb_1$ and not for $\GGb_1$. 
  
  Since $\FFb$ is terminating
  and $\FFb  \TEwbc \GGb$, 
  there is for both $\FFb$ and $\GGb$ 
  a well-bracketed complete trace
  $\tr$ \wrt $\stt$
  the stack of $\FFb$ and $\GGb$.
  
  By Lemma~\ref{c:dec_GS_WB}, there are traces $\tr_i$
  well-bracketed complete \wrt $\stt_i$ such that 
  $\FFb_i \ArrWB {\tr_i}$ 
  and $\tr \in \interlWBs{\stt}{ \tr_1}{\tr_2}$.
  By Lemma~\ref{l:decNsig2-WB}, we also have that
  $\GGb_i \ArrWB {\tr_i}$.   
  
  By Lemma~\ref{c:dec_GS_WB}, there is 
   $\tracetheta \in \interlWBs{\stt}{\tracerho}{\tr_2}$ complete such that
  $\FFb \ArrWB {\tracetheta } $.
  Since 
  $\FFb  \TEwbc \GGb$, 
  we also have $\GGb \ArrWB {\tracetheta } $. 
  By Lemma~\ref{l:decNsig2-WB}, 
  we deduce that 
  $\GGb_1 \ArrWB {\tracerho }$ and 
  $\GGb_2 \ArrWB {\tr_2 }$.
  This contradicts the assumption that 
   ${\tracerho }$ is not a trace for $\GGb_1$.
\end{proof} 
\fi
\section{Complete-trace equivalences coincide }
\label{s:wbc_vs_c}

In this section, we show that complete-trace equivalences 
in \COGS, \AOGS, and \WBOGS  coincide.

\begin{remark}
Complete trace equivalence (in all the OGS forms discussed)  is strictly weaker than
ordinary trace equivalence, as shown by the following terms:
\[
 (\lambda z.\Omega)(x (\lambda y.\Omega)) \text{ and }
 (\lambda z.\Omega)(x (\lambda y.y))
\]
The terms  are equal  if only complete traces are observed, while  they are
not otherwise.
\end{remark}

\reviewComment{
 I am confused here. This example should be more
detailed, because to me these configurations do *not* seem to be trace
equivalent in WB-OGS. There seems to be a tension with Lemma 10.4.
}
\reviewCommentDS{I have rephrased the remark, see if ok; indeed i believe the example was about
the difference between complete and ordinary trace equivalence, something i guess well known}
\begin{theorem}
\label{t:main_2_WBC}
For any $\FFb,\GGb$, 
we have 
$\FFb  \TEwbc \GGb$
iff
$\es\FFb  \TEc \es\GGb$ iff
$\es\FFb  \TEpc \es\GGb$.
\end{theorem}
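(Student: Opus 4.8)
The plan is to prove the two equivalences separately, exploiting the machinery already developed. For the equivalence $\es\FFb \TEc \es\GGb$ iff $\es\FFb \TEpc \es\GGb$ (complete-trace equivalence in \AOGS{} versus \COGS), I would argue that the set of complete traces of a configuration is the \emph{same} whether computed in \AOGS{} or \COGS, once one restricts attention to the complete ones. The key observation is Lemma~\ref{l:traces_AOGS_COGS}: the \AOGS{} traces of $\FF$ are exactly the \emph{alternating} \COGS{} traces of $\FF$ starting with an output when $\FF$ is active. So I first show that every complete \COGS{} trace of a configuration reachable from a $\lambda$-term (more precisely, of a configuration of the form $\es\FFb$, which has only pure $\lambda$-terms in its running component and a singleton/empty running term) is already alternating: intuitively, Player cannot interleave two threads because, being a pure $\lambda$-term, a thread only produces an action when it reaches an eager normal form, and on \COGS{} configurations arising this way the running term stays a singleton; more carefully one uses Lemma~\ref{l:OGS_complete-strongly-passive} (a complete trace ends in a strongly passive configuration, where the running term is empty) together with the fact that Opponent questions in a complete trace must all be answered, forcing a last-in-first-out, hence alternating, discipline. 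Granting that every complete \COGS{} trace is alternating, Lemma~\ref{l:traces_AOGS_COGS} gives that it is also an \AOGS{} trace, and conversely every complete \AOGS{} trace is trivially a \COGS{} trace (the \AOGS{} LTS rules are a subset of the \COGS{} ones), and it is complete for the same reason. Hence $\FF$ and $\GG$ have the same complete \AOGS{} traces iff they have the same complete \COGS{} traces, which is exactly $\es\FFb \TEc \es\GGb$ iff $\es\FFb \TEpc \es\GGb$.

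For the equivalence $\FFb \TEwbc \GGb$ iff $\es\FFb \TEc \es\GGb$, the direction from \AOGS{} to \WBOGS{} is easy: by Lemma~\ref{l:WBs_lts} a trace of $\FFb$ is exactly an \AOGS{} trace of $\es\FFb$ that is well-bracketed \wrt{} the stack $\stt$ of $\FFb$, and by Lemma~\ref{l:WBOGS_complete-strongly-passive}/Lemma~\ref{l:OGS_complete-strongly-passive} completeness for $\FFb$ is equivalent to reaching a strongly passive configuration, a property that does not mention the stack; so complete \WBOGS{} traces of $\FFb$ are precisely the complete \AOGS{} traces of $\es\FFb$ that are well-bracketed \wrt $\stt$. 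Thus $\es\FFb \TEc \es\GGb$ (same complete \AOGS{} traces, and note $\okC{\FFb}{\GGb}$ forces the same stack $\stt$) immediately implies $\FFb \TEwbc \GGb$ by intersecting with the well-bracketed traces. The hard direction is the converse: from $\FFb \TEwbc \GGb$ we must recover equality of \emph{all} complete \AOGS{} traces, including the non-well-bracketed ones. This is where I expect the main obstacle, and the natural tool is the decomposition/tensor machinery of Section~\ref{sec:tensor}. The idea is to take a complete \AOGS{} trace $\tr$ of $\es\FFb$; it need not be well-bracketed, but one can decompose $\es\FFb$ via the tensor product into singleton components (using Lemma~\ref{l:decNsig}, iterated), so that $\tr$ is an interleaving of complete traces $\tr_1,\dots,\tr_n$ of the singleton pieces (Lemma~\ref{l:AOGS-decomp-ct}); each singleton trace, being generated by a pure $\lambda$-term thread with a single continuation name, is automatically well-bracketed (Proponent is well-bracketed, and with a single pending Opponent question there is nothing for Opponent to mis-bracket). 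Now equip each singleton component with an appropriate one-element (or empty) stack and view it as a stacked configuration; by Lemma~\ref{l:AOGS_WBOGS} and the hypothesis $\FFb \TEwbc \GGb$ combined with the decomposition lemma for \WBOGS{} (Lemma~\ref{l:decWBCN}), each singleton component of $\FFb$ is complete-trace equivalent to the corresponding singleton component of $\GGb$ in \WBOGS, hence — since singleton traces are all well-bracketed — also in \AOGS{} ($\TEc$). Then I recompose: by Lemma~\ref{l:AOGS-recomp-ct} (or Lemma~\ref{l:compCN}), from $\GGb$'s singleton components admitting the same complete singleton traces $\tr_i$, any interleaving of the $\tr_i$, in particular $\tr$ itself, is a complete \AOGS{} trace of $\es\GGb$. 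Symmetry gives the reverse inclusion, so $\es\FFb \TEc \es\GGb$.

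Combining the two equivalences, together with the first bullet ($\TEc$ iff $\TEpc$), yields the three-way equivalence in the statement. The delicate points to check carefully are: (i) the claim that every complete trace, in \AOGS{} or \COGS, is an interleaving of singleton traces each of which is well-bracketed — this rests on Proponent being a pure $\lambda$-calculus evaluator, which never mis-brackets, so that the only possible source of non-well-bracketing is Opponent answering, and in a complete trace decomposed into singletons each singleton carries at most one pending Opponent continuation so there is no bracketing choice; (ii) the bookkeeping of name supports and continuation structures so that the tensor decompositions and recompositions are legal (i.e., components are compatible), which is handled by the well-formedness conventions (Definition~\ref{d:compa}) and the "determined by $\mu$" clauses of Lemma~\ref{l:decNsig}; and (iii) that the stack used to view a singleton component as a stacked configuration can always be chosen so that the singleton trace is well-bracketed \wrt it — a one-element stack $[p]$ for a pending continuation $p$, or the empty stack, suffices. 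I expect step (i), making precise why non-well-bracketed behaviour is invisible once one passes to singleton decompositions of \emph{complete} traces, to be the crux of the argument.
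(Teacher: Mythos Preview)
Your easy directions are fine, and you correctly identify the tensor/decomposition machinery of Section~\ref{sec:tensor} as the right tool. But two of your key technical claims are false, and the proof as written does not go through.

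\textbf{Complete \COGS{} traces need not be alternating.} For the equivalence $\es\FFb \TEc \es\GGb \Leftrightarrow \es\FFb \TEpc \es\GGb$ you argue that every complete \COGS{} trace of $\es\FFb$ is already alternating, and then invoke Lemma~\ref{l:traces_AOGS_COGS}. This is wrong. A passive configuration with, say, $\gamma = \pmap{x}{V_1}\cdot\pmap{y}{V_2}$ admits two successive Opponent questions $\questO{x}{z_1,p_1}\,\questO{y}{z_2,p_2}$ in \COGS, creating two concurrent threads; both threads may later terminate, giving a complete but non-alternating trace. So the sets of complete traces in \AOGS{} and \COGS{} genuinely differ, and one cannot simply transport them via Lemma~\ref{l:traces_AOGS_COGS}.

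\textbf{Complete \AOGS{} traces of a singleton need not be well-bracketed.} For the hard direction $\FFb \TEwbc \GGb \Rightarrow \es\FFb \TEc \es\GGb$, your plan decomposes $\es\FFb$ once into singletons and asserts that each singleton's complete trace is well-bracketed because ``with a single pending Opponent question there is nothing for Opponent to mis-bracket.'' But a singleton only stays singleton for one step. Starting from $\conf{\pmap{x}{V},\phi}$, after $\questO{x}{y_1,p_1}\,\questP{z}{w_1,q_1}\,\questO{x}{y_2,p_2}\,\questP{z'}{w_2,q_2}$ the environment contains both $q_1\mapsto(E_1,p_1)$ and $q_2\mapsto(E_2,p_2)$, and Opponent may answer $q_1$ before $q_2$, violating well-bracketing. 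So your step~(i) fails, and with it step~(5) (singleton $\TEwbc$ implies singleton $\TEc$).

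The paper's proof uses the same decomposition idea, but crucially it does not decompose once; it proceeds by \emph{induction on the length of the trace}, re-decomposing after each visible action. For a singleton the \emph{first} visible action is uniquely determined and is always admissible in the stricter LTS (\WBOGS{} or \AOGS{} respectively): if the singleton is $\conf{\pmap{p}{(E,q)},\phi}$, its stack in \WBOGS{} is necessarily $[p]$, so the only possible $\ansO{p}{x}$ is well-bracketed; similarly a singleton cannot exhibit the ``input on an active configuration'' that distinguishes \COGS{} from \AOGS. After one step the derivative is in general not singleton, so one decomposes again (Lemmas~\ref{l:decWBCN}, \ref{l:AOGS-decomp-ct}, etc.) and applies the inductive hypothesis to strictly shorter component traces, then recomposes (Lemmas~\ref{l:compCN}, \ref{c:dec_GS}). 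The non-terminating case is handled separately via Lemmas~\ref{l:TCtoWB} and~\ref{l:CtoCA}, which show that having \emph{some} complete trace in the weaker LTS implies having one in the stronger LTS.
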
 

One direction of the theorem is easy, intuitively because a trace in 
 \WBOGS is 
also a trace in \AOGS, and these, in turn,  are traces in 
 \COGS.
The problem is the converse, namely showing 
that  
$\FFb  \TEwbc \GGb$
implies
$\es\FFb  \TEc \es\GGb$, and 
that 
$\es\FFb  \TEc \es\GGb$ 
implies $\es\FFb  \TEpc \es\GGb$.
They are proven in the next two sections.

\begin{remark}
\label{r:CT_COGS}
In earlier sections, (Section~\ref{s:cogs}),
we have  derived another equality result for
trace equivalence, namely Corollary~\ref{c:OGS_COGS}, 
relating  trace equivalences in the Alternating 
and Concurrent LTS. 
The structure of the proofs of that result and the one above 
(Theorem~\ref{t:main_2_WBC})
 are however quite different. 
In  Corollary~\ref{c:OGS_COGS} we derived the result via up-to techniques; 
we cannot apply such techniques here because we are reasoning on trace equivalence,
notably a restricted form of traces (the complete ones). 
On the other hand,  the proof strategy here 
rests on decomposition properties 
such as  Lemma \ref{l:decWBCN} that rely on a termination property for initial
configurations, which may not be assumed in 
Section~\ref{s:cogs}.
\end{remark} 

\subsection{Relating \WBOGS and \AOGS complete traces}
\label{ss:main_wbct}

\ifcutLMCS

\begin{lemma}
\label{l:FFbFF}
For  configurations $\stack \FFb \stt$ and $\stack \GGb \stt$, if 
$ \es\FFb  \TEcN  \stt  \es\GGb$,  then $
\FFb  \TEwbcN \stt \GGb$. 
\end{lemma}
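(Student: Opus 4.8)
The plan is to unfold both equivalences into conditions on traces and then let everything collapse onto Lemma~\ref{l:WBs_lts}. First I would dispatch the side condition: $\okC{\FFb}{\GGb}$ is immediate, since $\okC{\es{\FFb}}{\es{\GGb}}$ is part of the hypothesis $\es{\FFb} \TEcN{\stt} \es{\GGb}$ and, by assumption, $\FFb$ and $\GGb$ carry the same stack $\stt$. So it only remains to show that $\FFb$ and $\GGb$ have the same set of complete traces in \WBOGS.

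Fix a trace $\tr$. By the extended Definition~\ref{def:ctrace}, $\tr$ is complete for $\FFb$ exactly when $\FFb \ArrWB{\tr}$ and $\stt \pshdTrans{\tr} \emptystack$, and the latter condition already witnesses that $\tr$ is well-bracketed \wrt $\stt$. By Lemma~\ref{l:WBs_lts}, $\FFb \ArrWB{\tr} \FFb'$ holds iff $\es{\FFb} \ArrA{\tr} \es{\FFb'}$ and $\tr$ is well-bracketed \wrt $\stt$. Combining the two, $\tr$ is complete for $\FFb$ in \WBOGS iff $\es{\FFb} \ArrA{\tr}$ and $\stt \pshdTrans{\tr} \emptystack$; the analogous characterisation holds for $\GGb$ with $\es{\GGb}$ replacing $\es{\FFb}$. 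The only point that needs care is to check that this set is exactly the one $\TEcN{\stt}$ compares, namely the traces that are complete for $\es{\FFb}$ in the sense of Definition~\ref{def:ctrace} \emph{and} well-bracketed \wrt $\stt$: for a trace $\tr$ with $\es{\FFb} \ArrA{\tr} \FF'$ and $\stt \pshdTrans{\tr} \stt'$, Lemma~\ref{l:OGS_complete-strongly-passive} says $\tr$ is complete for $\es{\FFb}$ iff $\FF'$ is strongly passive, while lifting $\tr$ via Lemma~\ref{l:WBs_lts} to $\FFb \ArrWB{\tr} \FFb'$ with $\es{\FFb'} = \FF'$ and stack $\stt'$, Lemma~\ref{l:WBOGS_complete-strongly-passive} gives that $\FFb'$ is strongly passive iff $\stt' = \emptystack$; since erasing the stack preserves strong passivity, the two completeness conditions agree, i.e.\ ``complete for $\es{\FFb}$ and well-bracketed \wrt $\stt$'' is the same as ``$\es{\FFb} \ArrA{\tr}$ and $\stt \pshdTrans{\tr} \emptystack$''.

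Putting the pieces together: the set of complete \WBOGS traces of $\FFb$ equals the set of \AOGS traces of $\es{\FFb}$ that are complete and well-bracketed \wrt $\stt$; by the hypothesis $\es{\FFb} \TEcN{\stt} \es{\GGb}$ this set equals the corresponding one for $\es{\GGb}$; and, by the same chain of equivalences read backwards, that is the set of complete \WBOGS traces of $\GGb$. Together with $\okC{\FFb}{\GGb}$ this yields $\FFb \TEwbcN{\stt} \GGb$. I expect the proof to be essentially bookkeeping: there is no recursion or up-to argument, unlike in the hard direction of Theorem~\ref{t:main_2_WBC}; the only real obstacle is keeping the pushdown stack synchronised between the \AOGS and \WBOGS runs and verifying the matching of the two notions of completeness in the middle step.
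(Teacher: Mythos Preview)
The paper does not actually give a proof of this lemma: it sits inside a cut block, is stated without argument, and is immediately followed by ``The problem is the converse'' before turning to Lemma~\ref{l:mainWBCN}. So there is nothing substantive to compare against; the paper treats this direction as immediate.

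Your argument is correct and is exactly the kind of unfolding the paper leaves implicit. The core is Lemma~\ref{l:WBs_lts}: every \WBOGS trace of $\FFb$ is an \AOGS trace of $\es\FFb$ that is well-bracketed with respect to $\stt$, and conversely. You then check that the two notions of ``complete'' match up on such traces, via Lemmas~\ref{l:OGS_complete-strongly-passive} and~\ref{l:WBOGS_complete-strongly-passive} and the observation that strong passivity is preserved under erasing the stack (for valid configurations the stack names must lie in $\dom\gamma$, so an empty continuation structure forces an empty stack). That is precisely the bookkeeping the paper elides when it says the converse is the hard part. Your write-up is more careful than the paper here, not different in spirit.
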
 

The problem is the converse. 
This is tackled in the remainder of the section 
(the main result is Lemma~\ref{l:mainWBCN}).
\fi
             
\begin{lemma}
\label{l:TCtoWB}
Taking $\FFb$ a \WBOGS configuration,
if $\es\FFb$ has a complete trace in \AOGS, 
then $\FFb$ has a well-bracketed complete trace. 
\end{lemma}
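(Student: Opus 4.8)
The plan is to build, from an arbitrary complete trace $\tr$ of $\es\FFb$ in \AOGS, a well-bracketed complete trace $\tracerho$ of $\FFb$ in \WBOGS. The idea is that a complete trace of the alternating LTS may well violate the stack discipline (Opponent may answer pending questions out of order, or leave some unanswered and then answer them later), but we can \emph{reorder} its actions into a well-bracketed one without changing the multiset of actions, using the fact that the term-side (Player) behaviour is insensitive to the order in which independent Opponent-threads are explored — this is exactly where the decomposition/interleaving machinery for $\otimes$ is useful. Concretely, I would proceed as follows.

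First I would record the easy inclusion (already noted in the text, e.g.\ via Lemma~\ref{l:WBs_lts}): every well-bracketed trace of $\es\FFb$ that is well-bracketed \wrt the stack $\stt$ of $\FFb$ lifts to a trace of $\FFb$; and by Lemma~\ref{l:WBOGS_complete-strongly-passive} together with Lemma~\ref{l:OGS_complete-strongly-passive}, completeness is preserved and reflected under $\es{-}$. So it suffices to produce \emph{some} complete trace of $\es\FFb$ that is well-bracketed \wrt $\stt$; Lemma~\ref{l:WBs_lts} then does the lifting. Next, starting from a complete trace $\tr$ with $\es\FFb \ArrA\tr \FF'$ (so $\FF'$ is strongly passive by Lemma~\ref{l:OGS_complete-strongly-passive}), I would argue by induction on the length of $\tr$, maintaining as invariant a partition of the ``work done so far'' into a collection of singleton sub-configurations via repeated use of the decomposition lemma Lemma~\ref{l:decNsig}: each Player action of $\tr$ is localised to a single thread, and each Opponent action either re-activates a dormant continuation-thread (OA) or spawns a fresh value-interrogation thread (OQ). The key observation is that distinct Opponent continuation-threads are independent, so the sub-traces they generate can be interleaved in any order (Lemma~\ref{l:AOGS-recomp-ct}, Lemma~\ref{c:dec_GS}); in particular, among all the interleavings realising the same collection of per-thread complete sub-traces, there is one that respects the last-in-first-out discipline — namely, always continue the \emph{most recently opened} pending thread until it is answered before returning to an older one. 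That reordered interleaving is well-bracketed \wrt $\stt$ by construction (the pushdown-system of the \WBOGS LTS accepts it), and it is still complete for $\es\FFb$ since it is an interleaving of the same complete sub-traces, reaching the strongly passive configuration $\FF'$ up to $\otimes$-rearrangement, hence strongly passive, hence complete by Lemma~\ref{l:OGS_complete-strongly-passive}.

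Finally, I would apply Lemma~\ref{l:WBs_lts} to the reordered trace $\tracerho$: since $\es\FFb \ArrA{\tracerho} \FF'$ and $\stt \pshdTrans{\tracerho} \emptystack$ (the stack empties because $\tracerho$ is complete and well-bracketed \wrt $\stt$), we get $\FFb \ArrWB{\tracerho} \FFb'$ with $\es\FFb' = \FF'$, and by Lemma~\ref{l:WBOGS_complete-strongly-passive} the trace $\tracerho$ is complete for $\FFb$, as required.

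The main obstacle is the reordering step: one must check that when a Player thread, during $\tr$, poses a question (rule PQ) and only much later Opponent answers it (rule OA), no Player action performed in between \emph{depends} on that answer — i.e.\ the intermediate actions genuinely belong to other, independent threads and can be floated past. This is where compatibility and the ``cannot interact'' invariant (Lemma~\ref{l:ci_trans}, Lemma~\ref{l:compat_cannot-interact}) are essential: they guarantee that the $\pi$-calculus images of the threads do not synchronise, so the interleaving lemmas (Lemma~\ref{l:dec_pi}, Lemma~\ref{l:COGS-interleave-tensor}, Lemma~\ref{l:AOGS-recomp-ct}) apply and licence the permutation. A secondary subtlety is bookkeeping the continuation names introduced along the way so that the stack $\stt$ is threaded correctly through the induction — but this is routine given the freshness convention of Remark~\ref{r:bn} and the explicit form of the \WBOGS rules in Figure~\ref{fig:stacked-lts}.
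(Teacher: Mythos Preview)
Your intuition is right — independent sub-computations can be reordered, and reducing to ``produce a well-bracketed complete \AOGS trace for $\es\FFb$, then lift via Lemma~\ref{l:WBs_lts}'' is exactly the correct setup. But the reordering argument as sketched has a real gap. The ``partition into singleton threads'' is not a static object: after an OA on some $[q\mapsto(E,p)]$ the resulting active term may itself perform a PQ, spawning \emph{new} singletons, so the sub-trace attached to one initial singleton may itself contain further out-of-order Opponent answers. Decomposing once into singletons and then picking a LIFO interleaving therefore does not solve the problem; you would have to recurse into each sub-trace, and your sketch never makes that recursion explicit — indeed, your ``induction on $|\tr|$'' never visibly invokes the induction hypothesis, it just describes a one-shot global reordering.

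The paper keeps the same induction variable but replaces the global reordering by a case analysis on the \emph{first} action $\mu$ of $\tr$. Questions and Player answers are always \WBOGS-legal: prepend $\mu$ and apply the induction hypothesis to the tail. The only interesting case is an Opponent answer $\ansO{p}{x}$ with $p$ \emph{not} the current top $q$ of the stack. Then one performs a single \emph{binary} split $\FFb=\FFb_1\otimes_{\stt}\FFb_2$ placing $p$ and $q$ in different halves; Lemma~\ref{l:AOGS-decomp-ct} yields complete \AOGS traces $\tr_i$ for $\es{\FFb_i}$ with $\tr\in\interlA{\tr_1}{\tr_2}$, and since each half carries at least one continuation name both $\tr_i$ are nonempty, hence strictly shorter than $\tr$. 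The induction hypothesis then gives well-bracketed complete traces $\tracerho_i$ for the $\FFb_i$, and the recomposition step uses the \WBOGS interleaving lemma you did not cite, Lemma~\ref{c:dec_GS_WB}, to obtain a well-bracketed $\tracerho\in\interlWBs{\stt}{\tracerho_1}{\tracerho_2}$ complete for $\FFb$. Any nested out-of-order behaviour is dealt with inside the recursive calls; no global thread partition is ever needed.
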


\begin{proof}
Assume $\trace$ is a complete trace for $\es\FFb$ in \AOGS.
We proceed by induction on the length of $\trace$. 
If $\trace$ is empty then the set of P- and O- continuation names of
$\es\FFb$ is empty.
So $\trace$ is also a complete trace for $\FFb$
since the stack of $\FFb$ must then be empty.

Suppose now $\trace = \mu \trace'$ is not empty,
so that there exists $\FF'$
such that $\es{\FFb} \ArrA \mu \FF' \ArrA{\trace'}$.
We reason by case analysis over $\mu$:
\begin{itemize}
\item If $\mu$ is a question,  then $\mu$ is either 
of the form $\questPV{x}{y}{p}$ 
or of the form $ \questOV{x}{y}{p}$.
We also have $\FFb  \ArrWB \mu \FFb'$
with 
$\es{\FFb'}= \FF'$.
Moreover $\trace $ is a complete trace 
for $\FF'$. 
By induction $\FFb'$ 
has a well-bracketed complete trace, 
say $\tracerho$.
Then $\mu \tracerho$ is a well-bracketed complete trace 
for $\FFb$. 
\item Suppose now $\mu$ is an answer. There are two cases:
\begin{itemize}
\item If $\FFb$ is active, then the answer $\mu$ must be of the form 
$\ansP{p}{x}$, where $p$ is the continuation name for the running term.
Thus $\mu$ is also the correct answer for \WBOGS, and we can easily conclude using
induction. 
\item If $\FFb$ is passive, then 
 $\mu$ must be of the form 
 $\ansO{p}{x}$. Writing $\stt$ for the stack of $\FFb$,
 if $\stt = p::\sttp$ then $\mu$ is also    
 the correct answer for a well-bracketed trace, and as before  we can conclude using
 induction.  Otherwise, suppose 
 $\stt = q::\sttp$ with $q\neq p$
 (i.e., $p$ is not the last  name in $\stt$).
 This means that $\FFb$ has at least two
 singleton components, located at $p$ and $q$. 
 Take a decomposition
 $\FFb = \FFb_1 \otimes_{\sigma} \FFb_2 $,    
 where $p$ and $q$ end up in different sub-configurations. 
 We get from Lemma~\ref{l:AOGS-decomp-ct}
 that there exists $\trace_i$ complete for $\es{\FFb_i}$ in \AOGS, 
 for $i\in \{1,2\}$, and 
 $\mu \trace \in \interlA{\trace_1}{\trace_2}$.
 By induction, there is $ \tracerho_i $ a complete well-bracketed trace 
 for $\FFb_i$ in \WBOGS.
%
 By Lemma~\ref{c:dec_GS_WB}, 
 there is $\tracerho \in \interlWBs{\stt}{\tracerho_1}{\tracerho_2}$, 
 a complete well-bracketed trace for $\FFb$.
\end{itemize}
\end{itemize}
\end{proof} 

\begin{lemma}
\label{l:mainWBCN}
Suppose 
$\FFb \TEwbc \GGb$,
then
$\es\FFb \TEc \es\GGb$
\end{lemma}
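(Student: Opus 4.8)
The plan is to prove the two inclusions separately. One direction, namely that $\FFb \TEwbc \GGb$ follows from $\es\FFb \TEc \es\GGb$, is Lemma~\ref{l:FFbFF} (stated in the excerpt, though guarded by \verb|\ifcutLMCS|): every well-bracketed complete trace of $\FFb$ is in particular a complete trace of $\es\FFb$ by Lemma~\ref{l:WBs_lts}, hence a complete trace of $\es\GGb$ by hypothesis, and it lifts back to a well-bracketed complete trace of $\GGb$ again by Lemma~\ref{l:WBs_lts} (the well-bracketing witness is the same pushdown run). The substantive direction is the converse, which is exactly what Lemma~\ref{l:mainWBCN} asserts: from $\FFb \TEwbc \GGb$ we must derive $\es\FFb \TEc \es\GGb$, i.e.\ that $\es\FFb$ and $\es\GGb$ have the same set of \emph{\AOGS} complete traces. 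So the body of the proof is the argument for this converse, and the supporting lemmas are Lemma~\ref{l:TCtoWB}, the decomposition/recomposition results for complete traces (Lemmas~\ref{l:AOGS-decomp-ct}, \ref{l:AOGS-recomp-ct}, \ref{c:dec_GS_WB}), and the tensor decomposition Lemma~\ref{l:decWBCN}.

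The approach I would take is by symmetry: it suffices to show that every \AOGS complete trace $\trace$ of $\es\FFb$ is also a complete trace of $\es\GGb$. First, by Lemma~\ref{l:TCtoWB}, from the fact that $\es\FFb$ has a complete \AOGS trace we learn that $\FFb$ has \emph{some} well-bracketed complete trace; combined with $\FFb \TEwbc \GGb$ this tells us $\FFb$ and $\GGb$ are terminating and support-equivalent (in particular $\es\FFb$ and $\es\GGb$ are support-equivalent, so the $\okC{}{}$ side-condition of $\TEc$ holds). The real work is turning the \emph{arbitrary} (possibly non-well-bracketed) \AOGS trace $\trace$ into data controlled by well-bracketed traces. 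The idea is to decompose $\FFb$ into its singleton components via the tensor product $\otimes_\stt$: write $\FFb = \FFb_1 \otimes_{\stt_1} \cdots \otimes_{\stt_k} \FFb_k$ with each $\FFb_i$ a singleton. By Lemma~\ref{l:AOGS-decomp-ct} the complete trace $\trace$ of $\es\FFb = \es\FFb_1 \otimes \cdots \otimes \es\FFb_k$ decomposes as an interleaving of complete \AOGS traces $\trace_i$ of the singletons $\es\FFb_i$. Now a complete \AOGS trace of a \emph{singleton} configuration is automatically well-bracketed: a singleton has exactly one P-name, so a complete trace has exactly one unjustified answer, and the nesting discipline on a single thread against Opponent questions/answers is forced — here one invokes the pushdown characterisation, observing that the only obstruction to well-bracketing (Opponent answering an "inner" pending question out of order) cannot arise when the stack of pending P-questions never branches, because the running term is a pure $\lambda$-term. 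Thus each $\trace_i$ is a well-bracketed complete trace of $\FFb_i$ (with respect to the trivial stack of $\FFb_i$).

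Having reduced to singletons, I would use $\FFb \TEwbc \GGb$ together with Lemma~\ref{l:decWBCN} to transport the equivalence to the singleton components: from $\FFb \TEwbc \GGb$, $\FFb$ terminating, and the decompositions $\FFb = \bigotimes_i \FFb_i$, $\GGb = \bigotimes_i \GGb_i$ along the \emph{same} interleaved stack $\stt$, Lemma~\ref{l:decWBCN} yields $\FFb_i \TEwbc \GGb_i$ for each $i$. Hence each $\trace_i$, being a well-bracketed complete trace of $\FFb_i$, is also one of $\GGb_i$, and so also an \AOGS complete trace of $\es\GGb_i$ by Lemma~\ref{l:WBs_lts}. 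Finally, Lemma~\ref{l:AOGS-recomp-ct} (iterated, i.e.\ recomposing $k$ traces along the interleaving pattern recorded in $\trace$) reassembles the $\trace_i$'s into the original interleaving $\trace$ as a complete \AOGS trace of $\es\GGb_1 \otimes \cdots \otimes \es\GGb_k = \es\GGb$. That gives $\trace$ is a complete trace of $\es\GGb$; symmetry finishes $\es\FFb \TEc \es\GGb$.

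The main obstacle I anticipate is the step asserting that a complete \AOGS trace of a \emph{singleton} configuration is well-bracketed — this is the crux, and it is where one must carefully use that running terms live in the pure $\lambda$-calculus (Proponent behaviour is automatically well-bracketed, as noted after the definition of well-bracketed trace) and that there is only one P-continuation name to answer. A secondary technical nuisance is bookkeeping with the stack indices $\stt_i$ when decomposing a $k$-fold tensor and matching them on the $\GGb$ side — Lemma~\ref{l:decWBCN} is stated for binary tensors along a common stack $\stt$, so one either iterates it or proves the obvious $k$-ary generalisation first; and one must check the interleaving pattern used to recompose on the $\GGb$ side is literally the one extracted from $\trace$ (the tagging-of-actions argument, as in the proof of Lemma~\ref{l:dec_pi}), so that the recomposed trace is $\trace$ itself and not merely some trace in the same interleaving set. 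Modulo these points the proof is an assembly of the cited lemmas.
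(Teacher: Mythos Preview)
Your proposal has a genuine gap at the crucial step: the claim that a complete \AOGS trace of a \emph{singleton} configuration is automatically well-bracketed is false. Take the active singleton $\FF = \conf{M, p, \emptymap, \{x,x',p\}}$ with $M = x(\lambda z.\, x' z)$. The trace
\[
\questPV{x}{y}{q}\;\cdot\;\questOV{y}{z}{p'}\;\cdot\;\questPV{x'}{y'}{q'}\;\cdot\;\ansO{q}{w}\;\cdot\;\ansP{p}{w''}\;\cdot\;\ansO{q'}{w'}\;\cdot\;\ansP{p'}{w'''}
\]
is a complete \AOGS trace of $\FF$ (it reaches a strongly passive configuration), but after the second Player question the pushdown stack is $q'::q::[]$, and the next action $\ansO{q}{w}$ answers $q$ rather than the top element $q'$: the trace is not well-bracketed. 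Your justification conflates two things. That ``running terms are in pure $\lambda$-calculus'' explains why \emph{Proponent} is automatically well-bracketed (as the paper itself notes); but well-bracketing in \WBOGS is precisely a restriction on \emph{Opponent}, and nothing about starting from a singleton prevents Opponent, a few moves in, from answering pending Player questions out of order once the stack has depth $\geq 2$. The phrase ``the stack never branches'' is a red herring: the stack is always a list; the obstruction is popping the wrong element, not branching.

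The paper's proof avoids this by an induction on the length of the complete trace rather than an upfront decomposition. For a singleton $\FFb$, only the \emph{first} visible action $\act$ is determined (and that single step is trivially compatible with the one-element or empty stack). Since $\FFb$ is terminating it has some \WBOGS complete trace, which must begin with $\act$; hence so does $\GGb$, and one obtains $\FFb' \TEwbc \GGb'$ for the derivatives. Now $\FFb',\GGb'$ need not be singletons; the paper re-decomposes them into singletons via Lemma~\ref{l:decWBCN}, applies the inductive hypothesis to the (strictly shorter) component traces, and recomposes via Lemma~\ref{c:dec_GS}. Thus the decomposition into singletons is redone \emph{at every step of the induction}, not once at the outset; this is exactly what lets the argument control well-bracketing one action at a time while still handling an arbitrary, possibly non-well-bracketed, \AOGS complete trace.
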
              

\begin{proof}
First, if $\FFb,\GGb$ are not terminating, i.e. they have no complete trace in \WBOGS, 
then from Lemma~\ref{l:TCtoWB}, both 
$\es{\FFb},\es{\GGb}$ have no well-bracketed complete trace, thus
$\es\FFb \TEc \es\GGb$. 

Suppose now that $\FFb$ and $\GGb$ are both terminating.
By Lemmas~\ref{l:decWBCN} and \ref{l:compCN}, it is sufficient to consider 
the case when $\FFb$ and $\GGb$ are singleton. 
Suppose there exists a complete trace $\tr$ for $\es{\FFb}$, 
then we prove by induction over $\tr$
that $\tr$ is a complete trace for $\es{\GGb}$ as well.
If $\tr$ is empty, then from Lemma~\ref{l:OGS_complete-strongly-passive},
$\es{\FFb}$ must be strongly passive.
Recall that from $\FFb \TEwbc \GGb$ we enforce that
$\okC{\FFb}{\GGb}$.
So $\es{\GGb}$ must be strongly passive as well, so that 
the empty trace is complete for $\es{\GGb}$ as well.

Suppose now that 
$\tr = \act \tr'$, so that
$\es\FFb \Longrightarrow \arrA {\act} \FF' \ArrA{\tr'} \FF''$
with $\tr'$ complete for $\FF'$.
Then $\act$ is the only possible visible transition for $\es\FFb$ 
since $\es\FFb$ is a singleton.
So it is also the first
visible transition in any well-bracketed complete trace for $\FFb$ (i.e., we have 
$\FFb \ArrWB{} \arrWB{\act} \FFb'$ with $\es{\FFb'} = \FF'$). 
Indeed, if $\act$ is an Opponent answer $\ansO{p}{x}$, then the stack
of $\FFb$ must be equal to $p$ since $\FFb$ is a singleton.
Since $\FFb$ is terminating, it has thus a well-bracketed complete
trace starting with $\act$.
So does $\GGb$ since 
$\FFb  \TEwbc \GGb$, that is, we have  
$\GGb \ArrWB{} \arrWB{\act} \GGb'$, for some $\GGb'$, and therefore also 
$\es\GGb \ArrA{} \arrA{\act} \es{\GGb'}$. 
Moreover, $\FFb'  \TEwbc \GGb'$.
%

If $\FFb'$ and $\GGb'$ are singleton, we can now apply induction and 
derive that $\tr'$ is also complete for $\es{\GGb'}$, and then we are done. 
If they are not singleton, we can consider a decomposition for them 
(yielding $\FFb'_i$ and $\GGb'_i$), and associated decomposition
for $\tr'$ in complete traces $\tr'_i$ for $\es{\FFb_i}$ using Lemma~\ref{l:AOGS-decomp-ct}. 
We can then apply induction on the singleton components (using
Lemma~\ref{l:decWBCN}), then we infer that 
$\tr'_i$ is also a complete trace for $\es{\GGb_i'}$. 
We can now apply Lemma~\ref{c:dec_GS} and infer that $\tr'$ is a complete trace for $\es{\GGb'}$,
and then conclude that $\act\tr'$ is a complete trace for $\es\GGb$.
\end{proof} 

\subsection{Relating \AOGS and \COGS complete traces}
\label{a:aogs-cogs-complete}

\begin{lemma}
\label{l:CtoCA}
Taking $\FF$ an \AOGS configuration,
if $\FF$ has a complete trace in \COGS, then $\FF$ 
has a complete trace in \AOGS. 
\end{lemma}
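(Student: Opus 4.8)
Lemma~\ref{l:CtoCA} says that an \AOGS configuration $\FF$ that has a complete trace in \COGS also has one in \AOGS. Since $\FF$ is an \AOGS configuration, it is a singleton running-term configuration when viewed in \COGS (either active with a single term, or passive with a single P-name). The strategy is: take a \COGS complete trace $\tr$ for $\FF$, decompose it into its \emph{alternating} component and show that this component is already a complete trace of $\FF$ in \AOGS. The key technical tools are Lemma~\ref{l:OGS_complete-strongly-passive} (complete traces are exactly those reaching a strongly passive configuration, in both \AOGS and \COGS), the decomposition results for tensor products (Lemmas~\ref{l:COGS-decomp-ct}, \ref{l:AOGS-recomp-ct}), and the characterisation of \AOGS traces as the alternating traces of \COGS (Lemma~\ref{l:traces_AOGS_COGS}).

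\textbf{First step: reduce to singletons.} By Lemma~\ref{l:COGS-decomp-ct}, if $\FF = \FF_1 \otimes \FF_2$ and $\tr$ is a \COGS complete trace for $\FF$, then $\tr$ projects to \COGS complete traces $\tr_i$ for $\FF_i$ with $\tr \in \interl{\tr_1}{\tr_2}$. So if each $\FF_i$ has an \AOGS complete trace $\tr_i'$, then by Lemma~\ref{c:dec_GS}(\ref{l:dec_GS_2}) (the \AOGS tensor recomposition, using $\interlA{\tr_1'}{\tr_2'}$) and the fact that an interleaving of complete traces is complete, $\FF$ has an \AOGS complete trace. Thus it suffices to treat the case where $\FF$ is a singleton \AOGS configuration. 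But wait --- $\FF$ is given as an \AOGS configuration, so it already \emph{is} a tensor of singletons when seen in \COGS; I would peel off singleton components one at a time, noting that a singleton \AOGS configuration $\conf{M,p,\emptymap,\phi}$ or $\conf{\pmap{x}{V},\phi}$ or $\conf{\pmap{p}{(E,q)},\phi}$ has exactly the same behaviour in \AOGS and \COGS for its first action since there is only one P-name.

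\textbf{Second step: the singleton case.} For a singleton \AOGS configuration $\FF$ with a \COGS complete trace $\tr$, I would argue by induction on the length of $\tr$ that the \emph{alternating re-ordering} of $\tr$ is an \AOGS complete trace. The point is that a singleton configuration can perform at most one kind of visible action, and after a Player action it becomes passive while after an Opponent action it becomes active --- but the \COGS derivative need not be singleton anymore (a PQ produces $\pmap{y}{V}\cdot\pmap{q}{(E,p)}$). So I would again invoke Lemma~\ref{l:COGS-decomp-ct} to split that derivative into singletons, apply the induction hypothesis to each, and recombine via Lemma~\ref{c:dec_GS}. Concretely: write $\tr = \act\,\tr'$ with $\FF \arrC{\act} \GG \ArrC{\tr'} \GG'$ where $\GG'$ is strongly passive (Lemma~\ref{l:OGS_complete-strongly-passive}); since $\FF$ is an \AOGS singleton, $\act$ is a legal \AOGS action from $\FF$, so $\FF \arrA{\act} \GG$; then $\tr'$ is \COGS-complete for $\GG$, and $\GG$, being a tensor of (at most two) singletons, has an \AOGS complete trace by the reduction of the first step together with the induction hypothesis; re-interleaving with the single initial action $\act$ yields an alternating complete trace, hence an \AOGS complete trace for $\FF$ by Lemma~\ref{l:traces_AOGS_COGS}.

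\textbf{Main obstacle.} The delicate point is the interplay between "alternating" and "complete" when re-interleaving: I must ensure the interleaved trace produced by Lemma~\ref{c:dec_GS}(\ref{l:dec_GS_2}) can be chosen \emph{alternating} and starting with an output when the source is active, which is exactly what $\interlA{-}{-}$ guarantees --- but one has to check that the component traces coming out of the induction really admit such an alternating interleaving, i.e.\ that at most one of them is $P$-starting. This is where the singleton structure is essential: after the initial action $\act$, at most one of the (at most two) singleton sub-configurations of $\GG$ is active, so at most one component trace is $P$-starting, and $\interlA{\tr_1'}{\tr_2'}$ is non-empty. Making this bookkeeping precise, and handling the passive-$\FF$ base case (where $\FF$'s continuation structure must already be empty, so $\tr$ empty forces $\FF$ strongly passive), is the bulk of the work; everything else is an assembly of already-established decomposition lemmas.
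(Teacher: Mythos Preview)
Your approach is correct and rests on the same core idea as the paper's proof: induction on the length of the \COGS complete trace, using the tensor decomposition (Lemma~\ref{l:COGS-decomp-ct}) to break the configuration apart when the first \COGS action is not legal in \AOGS, applying the induction hypothesis to the strictly shorter component traces, and recombining via Lemma~\ref{l:AOGS-recomp-ct}.

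The one structural difference is that you preemptively reduce to \emph{singleton} configurations, so that the first action is always \AOGS-legal, and then re-decompose after each step. The paper instead works directly with arbitrary \AOGS configurations and does a case analysis on the first action $\mu$: if $\mu$ is a Player action, or an Opponent action with $\FF$ passive, then $\mu$ is already \AOGS-legal and one recurses on the shorter trace $\tr'$; only in the single ``bad'' case --- $\mu$ an Opponent action with $\FF$ active --- does one split $\FF = \FF_1 \otimes \FF_2$ (active part versus the environment entry triggered by $\mu$), obtaining two strictly shorter complete traces and applying the induction hypothesis to each. This avoids your extra ``reduce to singletons'' layer, which implicitly needs a secondary induction on the number of P-names (since decomposing a non-singleton $\FF$ may put the entire trace into one still-non-singleton component). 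Your version is sound once that termination is made explicit; the paper's is just a little more economical.
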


\begin{proof}
Assume $\trace $ is a complete trace
for $\FF$ in \COGS.
We proceed by induction on the length of $\trace $.
If $\trace$ is empty then this is straightforward.
Suppose now there exists $\mu$ s.t. 
$\trace = \mu \tr'$. We reason by case analysis on $\mu$:
\begin{itemize}
\item If $\mu$ is a Player action,
then $\FF \ArrA \mu \FF'$, so we conclude using induction.
\item If $\mu$ is an Opponent action,
there are two cases:
\begin{itemize}
\item If $\FF$ is a passive configuration, 
then again $\FF \ArrA \mu \FF'$, and we conclude using induction.
\item If $\FF$ is an active configuration, then we can write
$\FF$ as $\FF_1 \otimes \FF_2$ since $\FF$ has both
an active term and at least one element in its environment,
that is triggered by $\mu$.
We get from Lemma~\ref{l:COGS-decomp-ct}
that there exist two traces $\trace_1,\trace_2$ complete respectively for
$\FF_1$ and $\FF_2$ in \COGS such that $\mu \trace \in \interl{\trace_1}{\trace_2}$.
By induction, there exists two traces $\tracerho_1,\tracerho_1$ 
complete respectively for $\FF_1$ and $\FF_2$ in \AOGS.
Finally,
by Lemma~\ref{l:AOGS-recomp-ct}, there exists a trace
$\tracerho \in \interlA{\tracerho_1}{\tracerho_2}$
that is complete for $\FF$ in \AOGS.
\end{itemize}
\end{itemize}
\end{proof} 

\begin{lemma}
\label{l:mainCC-CA}
Suppose 
$\FF  \TEc \GG$,
then
$\FF  \TEpc \GG$.
\end{lemma}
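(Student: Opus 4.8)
Lemma~\ref{l:mainCC-CA} states that $\TEc$ implies $\TEpc$: if two \AOGS configurations have the same complete traces in \AOGS, they also have the same complete traces in \COGS. This is the harder half of the $\AOGS$/$\COGS$ part of Theorem~\ref{t:main_2_WBC}, complementary to the easy direction (every complete \AOGS trace is trivially a complete \COGS trace, since \AOGS transitions are a subset of \COGS transitions).

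The plan is to follow exactly the shape of the proof of Lemma~\ref{l:mainWBCN}, replacing the pair $(\WBOGS,\AOGS)$ by $(\AOGS,\COGS)$: Lemma~\ref{l:TCtoWB} is replaced by Lemma~\ref{l:CtoCA}, the stack-discipline lemmas by their tensor counterparts for \COGS, and the singleton reduction by the one afforded by Lemma~\ref{l:decACN} and Lemma~\ref{l:COGS-decomp-ct}. First I would dispose of the non-terminating case. Since every \AOGS transition is also a \COGS transition and an \AOGS strongly passive configuration is, as a \COGS configuration, again strongly passive, every complete \AOGS trace of $\FF$ is a complete \COGS trace of $\FF$; combined with Lemma~\ref{l:CtoCA} this shows that an \AOGS configuration is terminating for \AOGS iff it is terminating for \COGS. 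Hence, if $\FF$ is not terminating then (as $\FF \TEc \GG$ forces $\FF$ and $\GG$ to have the same, i.e.\ no, complete \AOGS traces) neither is $\GG$, so by Lemma~\ref{l:CtoCA} both have an empty set of complete \COGS traces and $\FF \TEpc \GG$ holds vacuously. From now on $\FF$ and $\GG$ are assumed terminating.

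Next, as in Lemma~\ref{l:mainWBCN}, I would reduce to singleton configurations. Every \AOGS configuration is a tensor $\FF = \FF^{(1)} \otimes \cdots \otimes \FF^{(n)}$ of singletons, and since $\okC{\FF}{\GG}$ the same pattern applies to $\GG$; repeated use of Lemma~\ref{l:decACN} (legitimate because $\FF$ is terminating and $\FF \TEc \GG$) yields $\FF^{(i)} \TEc \GG^{(i)}$, each $\FF^{(i)}$ being again terminating. By iterated application of Lemma~\ref{l:COGS-decomp-ct}, any complete \COGS trace $\tr$ of $\FF$ is obtained by a nested interleaving of complete \COGS traces $\tr^{(i)}$ of the $\FF^{(i)}$; once the singleton case is settled, each $\tr^{(i)}$ is a complete \COGS trace of $\GG^{(i)}$, and recomposing \emph{along the same interleaving}, using a \COGS analogue of the uniform-decomposition Lemmas~\ref{l:decNsig} and~\ref{l:decNsig2}, gives $\GG \ArrC{\tr} \GG'$ with $\GG'$ strongly passive (each $\GG^{(i)}$-component is, by Lemma~\ref{l:OGS_complete-strongly-passive}), so $\tr$ is complete for $\GG$. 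It therefore remains to treat the case where $\FF,\GG$ are singletons, which I would do by induction on $|\tr|$.

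For a singleton \AOGS configuration the first visible action is uniquely determined, and a direct inspection of the six LTS rules shows that, on singletons, the \AOGS and \COGS transitions — and the configurations they reach — coincide up to the usual coercion. If $\tr = \emptytrace$, then $\FF$ is strongly passive (Lemma~\ref{l:OGS_complete-strongly-passive}), hence so is $\GG$ by $\okC{\FF}{\GG}$, and we are done. If $\tr = \act\,\tr'$, then $\act$ is the unique first action of $\FF$; since $\FF$ is terminating it has a complete \AOGS trace starting with $\act$, and $\FF \TEc \GG$ forces $\GG$ to have one too, so $\act$ is also $\GG$'s first action. Writing $\FF \ArrC{\act} \FF'$ and $\GG \ArrC{\act} \GG'$, Lemma~\ref{l:OGS_complete-strongly-passive} gives that $\act\,\tr'$ is complete for $\FF$ (resp.\ $\GG$) iff $\tr'$ is complete for $\FF'$ (resp.\ $\GG'$); in particular $\FF' \TEc \GG'$, both terminating, and $\tr'$ is a complete \COGS trace of $\FF'$. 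Now $\FF'$ is a tensor of at most two singletons; decomposing $\FF',\GG'$ accordingly, splitting $\tr'$ by Lemma~\ref{l:COGS-decomp-ct} into sub-traces of length $\le |\tr'| < |\tr|$, applying the induction hypothesis componentwise, and recomposing (uniform decomposition again), I get that $\tr'$ is a complete \COGS trace of $\GG'$, whence $\GG \ArrC{\act} \GG' \ArrC{\tr'}$ ends strongly passive, i.e.\ $\tr$ is complete for $\GG$.

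The hard part is the recomposition step: turning complete \COGS traces of the components back into a complete \COGS trace of the whole configuration \emph{along a prescribed interleaving}. This needs the \COGS counterparts of Lemmas~\ref{l:decNsig} and~\ref{l:decNsig2} — that the singleton component responsible for an action, and the way a trace is distributed over the components, are determined by the action together with the common support data of the two configurations (here $\okC{}{}$ forces equal running-term domains and polarity functions, which is exactly what pins the component down; the only delicate point is the $(PQ)$ rule, where the continuation name of the active component is recovered from the requirement that the running-term domains agree). The remaining bookkeeping — that peeling one action off preserves the completeness condition — is handled uniformly through Lemma~\ref{l:OGS_complete-strongly-passive}, just as in the proof of Lemma~\ref{l:mainWBCN}.
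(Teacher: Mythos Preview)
Your proof is correct and follows essentially the same structure as the paper's: dispose of the non-terminating case via Lemma~\ref{l:CtoCA}, reduce to singletons via Lemma~\ref{l:decACN}, and proceed by induction on the length of the complete \COGS trace, using that on singletons the first visible \COGS action is also an \AOGS action.

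The one place where you work harder than necessary is the recomposition step. You frame it as requiring ``\COGS analogues of Lemmas~\ref{l:decNsig} and~\ref{l:decNsig2}'' and call it the hard part, but in \COGS there is no alternation constraint, so Lemma~\ref{l:COGS-interleave-tensor}(\ref{l:COGS-interleave-tensor_2}) already gives recomposition along \emph{any} prescribed interleaving: from $\GG'_i \ArrC{\tr'_i} \GG''_i$ and $\tr' \in \interl{\tr'_1}{\tr'_2}$ you get $\GG' \ArrC{\tr'} \GG''_1 \otimes \GG''_2$ directly, and strong passivity of the components yields completeness. The paper accordingly packages the top-level reduction as ``Lemma~\ref{l:decACN} plus the adaptation of Lemma~\ref{l:compCN} to \COGS'' and, inside the induction, just appeals to Lemma~\ref{l:COGS-recomp-ct} (really Lemma~\ref{l:COGS-interleave-tensor}(\ref{l:COGS-interleave-tensor_2})). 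The uniform-decomposition lemmas you invoke are what underlie Lemma~\ref{l:decACN} itself, but you do not need fresh \COGS versions of them for the recomposition here.
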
   

\begin{proof}
First, if $\FF,\GG$ are not terminating, i.e. they have no complete trace in \AOGS, 
then from Lemma~\ref{l:CtoCA}, both 
$\FF,\GG$ have no complete trace \COGS, thus
$\FF \TEpc \GG$. 

Suppose now that $\FF$ and $\GG$ are both terminating in \AOGS.
By Lemmas~\ref{l:decACN} and the adaptation of Lemma~\ref{l:compCN} to \COGS, it is sufficient to consider 
the case where $\FF$ and $\GG$ are singleton. 
Suppose there exists a complete trace $\tr$ for $\FF$ in \COGS, 
then we prove by induction over $\tr$
that $\tr$ is a complete trace for $\GG$ in \COGS as well.
If $\tr$ is empty, then $\FF$ must be strongly passive.
Recall that from $\FF \TEc \GG$ we enforce that
$\okC{\FF}{\GG}$.
So $\GG$ must also be strongly passive, so that 
the empty trace is a complete trace for $\GG$ as well.

If $\tr$ is of the shape $\act\tr'$, then
$\FF \Longrightarrow \arrC {\mu} \FF' \ArrC{\tr'}$.
Since $\FF$ is singleton, $\FF$ cannot be active with $\mu$ an Opponent action.
So $\mu$ is a Player action, or an Opponent action with $\FF$ passive,
so that $\FF \ArrA{}  \arrA {\mu} \FF'$.
Since $\FF$ is terminating in \AOGS, we get that $\FF$ has a complete trace
starting with $\mu$ in \AOGS.
Since 
$\FF  \TEc \GG$, we also have
$\GG \ArrA{} \arrA \mu \GG'$, for some $\GG'$, and therefore also 
$\GG \ArrC{} \arrC \mu \GG'$. 
Moreover, $\FF' \TEc \GG'$.

If $\FF'$ and $\GG'$ are singleton, we can now apply the induction hypothesis 
and derive the same complete trace  $\tr'$ also for $\GG'$ in \COGS, and then we are done. 
If they are not singleton, we can consider a decomposition for them 
(yielding $\FF'_i$ and $\GG'_i$), and associated decomposition
for $\tr'$ in complete traces $\tr'_i$ for $\FF_i$ using Lemma~\ref{l:COGS-decomp-ct}. 
We can then apply induction on the singleton components (using
Lemma~\ref{l:decACN}), and getting that
$\tr'_i$ is also a trace for $\GG_i'$ in \COGS. 
We can now apply Lemma~\ref{l:COGS-recomp-ct} 
and infer that $\tr'$ is a complete trace for $\GG'$ in \COGS,
and then conclude that $\act\tr'$ is a complete trace for $\GG$ in \COGS.
\end{proof}

\section{The call-by-name $\lambda$-calculus}
\label{s:cbnNEW}

The relationship between OGS and $\pi$-calculus representations of the 
$\lambda$-terms examined in Sections~\ref{s:encoGames} and \ref{s:cogs} is not specific to
call-by-value.  A similar relationship can be established for call-by-name. 
We summarise here the main aspects and the main results. 
Further details may be found in Appendix~\ref{a:cbn}.

We maintain terminologies and notations in the call-by-value sections. 
The call-by-name language is given in Figure~\ref{f:cbnG}.
Its reduction $\redn$ is defined by the following two rules:
 
 \begin{mathpar}
   \inferrule*{ }{(\lambda x.M) N \redn M\subst{x}{N}} \qquad
   \inferrule*{M \redn N}{E[M] \redn E[N]}
 \end{mathpar}
 

The OGS for call-by-name uses
three kinds of names: 
\emph{continuation names} (ranged over by $p,q$), 
\emph{variable names} (ranged over by $x,y$),
and 
\emph{value names} (ranged over by $v,w$). 
Continuation names are used as in call-by-value; 
variable names act as pointers to 
$\lambda$-terms and correspond to the variables of the $\lambda$-calculus;
value names are pointers to values.

The OGS uses the syntactic categories in Figure~\ref{f:cbnG}, 
where $L$ range over extended $\lambda$-terms, 
which are needed to accommodate value names. 
We write $\LaoP$ for the set of extended $\lambda$-terms.
\begin{figure*}
\[ 
\begin{array}{rclr}
E & := &  EM \midd \contexthole  &  \mbox{evaluation contexts}\\
V & := &  \lambda x. M \midd v &  \mbox{values} \\
\gamma & := & 
\pmap{v}{V}\cdot\gamma' \midd 
\pmap{q}{(E,p)}\cdot\gamma' \midd 
\pmap{x}{M}\cdot\gamma' \midd 
\emptymap  &  \mbox{environments} \\
F & := & \conf{\LL,p,\gamma ,\phi } \midd 
 \conf{\gamma,\phi} \midd \initconf{M}{\phi} & \mbox{configurations} \\
\LL & := &  M \midd E [v M] \midd v   & \mbox{extended $\lambda$-terms}\\
\end{array} 
\] 
\caption{Grammars for the call-by-name language and \AOGS configurations}
\label{f:cbnG}
\end{figure*} 

\dsOLD{revise Figure~\ref{f:cbnG}}

 In a call-by-name setting, Player Questions are of two possible shapes:
 \begin{itemize}
  \item \emph{Player Value-Question} (PVQ) $\questPV{v}{x}{p}$,
  receiving a variable $x$ and a continuation name $p$ through a value name $v$.
  \item \emph{Player Term-Question} (PTQ) $\questPT{x}{p}$,
  receiving a continuation name $p$ through a variable $x$.
 \end{itemize}
 Opponent Questions are also of two different shapes too:
  \begin{itemize}
   \item \emph{Opponent Value-Question} (OVQ) $\questOV{v}{x}{p}$,
  sending a variable $x$ and a continuation name $p$ through a value name $v$.
  \item \emph{Opponent Term-Question} (OTQ) $\questOT{x}{p}$,
  sending a continuation name $p$ through a variable $x$.
 \end{itemize}

 \begin{figure*}
 \[
  \begin{array}{l|llll}
   (\tau) & \conf{M,p,\gamma,\phi} & \arrA{\ \tau \ } & 
     \conf{N,p,\gamma,\phi} & \text{ when } M \redn N\\
   (PA) & \conf{V,p,\gamma,\phi} & \arrA{\ansP{p}{v}} & 
     \conf{\gamma \cdot \pmap{v}{V},\phi \uplus \{v\}} \\
   (PVQ) & \conf{E[vM],p,\gamma,\phi} & \arrA{\questPV{v}{y}{q}} & 
     \conf{\gamma \cdot \pmap{y}{M}\cdot \pmap{q}{(E,p)},\phi \uplus \{y,q\}} \\
   (PTQ) & \conf{E[x],p,\gamma,\phi} & \arrA{\questPT{x}{q}} & 
     \conf{\gamma \cdot \pmap{q}{(E,p)},\phi \uplus \{q\}} \\
   (OA) & \conf{\gamma\cdot \pmap{p}{(E,q)},\phi} & \arrA{\ansO{p}{v}} & 
     \conf{E[v],q,\gamma,\phi \uplus \{v\}} \\ 
   (OVQ) & \conf{\gamma \cdot [v \mapsto V],\phi} & \arrA{\questOV{v}{y}{p}} & 
     \conf{V y,p,\gamma,\phi \uplus \{p,y\}} & \text{ when } \gamma(v) = V\\
   (OTQ) & \conf{\gamma,\phi} & \arrA{\questOT{x}{p}} & 
     \conf{M,p,\gamma,\phi \uplus \{p\}} & \text{ when } \gamma(x) = M\\
   (IOQ) & \initconf{\phi}{M} & \arrA{\questOinit{p}} &
     \conf{M,p,\emptymap,\phi \uplus \{p\}}
%
  \end{array}
 \]
 \caption{CBN OGS LTS}
\label{fig:scbn-lts}
\end{figure*}

%
%

\begin{figure}
\Mybar 

Encoding of $\Lao$ into  \piI processes: 
\[\begin{array}{rcl}
\encoN{ \lambda  x. M}   & \defi & 
\bind p
  {  \bout p v  .   \inp v {x,q}  .
\encoNa M q}
 \\[\mypt]
\encoN{  x}   & \defi & \bind p
 { \bout xr . \link r p   } \\[\mypt]
\encoN{M N } & \defi &  \bind p
 {\res q \Big(
\encoNa{M}{ q} |
 \inp q v . { \bout v { x, {p'}} . ( \link {p'} p |  ! \inp x r. {\encoNa{ N }{ r}})  }
\Big)
}
\end{array}
   \]
$ $ \dotfill $ $ 

Encoding of    $\LaoP$:
\[ 
\begin{array}{rcl}
\encoN {E [v M]} & \defi & \bind p \bout v {x,r}. ( \encoN { \pmap r{(E,p)}} | ! \inp x
                       {q}. \encoNa Mq   ) 
\\[\mypt]
\encoN {v } & \defi & \bind p \bout p w . \link wv
\end{array} 
\]
$ $ \dotfill $ $ 

Encoding of  OGS environments:
\[ 
\begin{array}{rcl}
\encoN { \pmap{v}{V} \cdot \gamma   } & \defi & 
\left\{ \begin{array}{ll} 
 \inp v {x,q}. \encoNa M q  | \encoN{ \gamma}
&      
 \mbox{if $V  = \lambda x. M$ } \\[\mypt]
\link v w  | \encoN{ \gamma}
   & 
\mbox{if $V  =  w$ } 
\end{array}  \right. 
\\[\myptt]
\encoN { \pmap{x}{M}  \cdot \gamma } & \defi &  ! \inp x q . \encoNa M q   | \encoN{ \gamma} \\[\mypt]
\encoN {\pmap{q}{(E,p)}  \cdot \gamma  } & \defi & 
\left\{ \begin{array}{ll} 
 q(v). \bout v {x,r}. ( \encoN  { \pmap{r}{E',p }} | ! \inp x q . \encoNa Mq )
  | \encoN{ \gamma}
&      
 \mbox{if $E   =
E'[\:  \contexthole M   \: ] 
 $ } \\[\mypt]
\link qp   | \encoN{ \gamma} &      
 \mbox{if $E   = \contexthole$ } 
\end{array}  \right. 
\\[\myptt]
\ifcutCBNLMCS
\multicolumn{3}{l}{ \mbox{(An alternative, but less confortable for
    proofs)} }
\\[\myptt]\multicolumn{3}{l}{ \mbox{(above: M is the first argument, below it is the last)} }
\\[\myptt]
\encoN {\pmap{q}{(E,p)}  \cdot \gamma  } & \defi & 
\left\{ \begin{array}{ll} 
\res r ( \pmap{q}{E',r }
|
 r(v). \bout v {x,p'}. 

(\link {p'}p |
 ! \inp x {q'} . \encoNa M{q'} )
  | \encoN{ \gamma}
&      
 \mbox{if $E   = 
E' M
$ } \\[\mypt]
\link qp   | \encoN{ \gamma} &      
 \mbox{if $E   = \contexthole$ } 
\end{array}  \right. 
\\[\myptt]
\fi
\encoN { \emptyset }  & \defi & \nil
\end{array} 
\]
$ $ \dotfill $ $ 

Encoding of  OGS configurations: 
\[ 
\begin{array}{rcl}
\encoN {\conf {M,p,\gamma , \phi}} & \defi & \pairP { \encoNa M p | \encoN \gamma }
\\[\mypt]
\encoN {\conf {v,p,\gamma , \phi}} & \defi & \pairP { \encoNa v p | \encoN \gamma }
\\[\mypt]
\encoN {\conf {E[v M],p,\gamma , \phi}} & \defi & \pairP { \encoNa {E[vM]} p | \encoN \gamma
                                             }
\\[\mypt]
\encoN {\conf {\gamma , \phi}} & \defi & \pairP { \encoN \gamma }
\end{array} 
\]

\caption{The encoding of
call-by-name $\lambda$-calculus and OGS  into \piI }
\label{f:namePI}
\Mybar 
\end{figure}

The encodings of call-by-name 
$\lambda$-calculus,  OGS environments and configurations, and associated  syntactic categories,
 are reported in Figure~\ref{f:namePI}.
The \piI representation uses the same 3 types of  names as the OGS representation. 

As for call-by-value, so in call-by-name the encoding makes use of 
 \emph{link} processes, which are linear structures for continuation
 names and replicated structures for the other names.  We recall their
 definition. 
For tuples of names 
 $\til u = u_1, \ldots , u_n$
 and
$\til v =  v_1, \ldots , v_n$   
we write $\link{\til u}{\til v}$
to 
  abbreviate 
   $\link{u_1}{v_1} | \ldots |  \link{u_n}{v_n}$:
$$
\begin{array}{rcl}
\link{}{} &\defi &
\left\{ \begin{array}{l}
 \abs{x,y} ! \inp x {p}.\bout {y} {q}
           . {\fwd{ q}{ p}}\\[\myptSmall] 
 \multicolumn{1}{l}{  
        ~\quad  \mbox{if $x,y$ are variable names}
                   }              \\[\mypt]
\abs{a,b}   \inp a { \til c }.\bout b {\til d}
          .{\link{ \til d}{ \til c}})  \\
\multicolumn{1}{r}{  
          ~\quad \mbox{otherwise}
}\end{array} \right. 
\end{array}
 $$

We have:
\[ 
\begin{array}{rcl}
\encoN {E [v ]} & = & 
\left\{ \begin{array}{ll}
\encoN {E' [vM ]}  & 
\mbox{if $E =
E'[\:  \contexthole M   \: ] 
$}
\\[\mypt]
\encoN {v}  & 
\mbox{if $E = \contexthole$}
\end{array} \right. 
\end{array} 
\]

\dsJ{I have rectified above: 
{if $E =
E'[\:  \contexthole M   \: ] 
$} rather than 
{if $E =
E' M
$}
}

\begin{lemma}
\label{l:Ex}
$ 
\begin{array}{rcl}
\encoN {E [x]} & = & \bind p 
\bout x {q}.  \encoN { \pmap q{(E,p)}} 
\end{array} 
$
\end{lemma}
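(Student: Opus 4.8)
The plan is to prove the identity by structural induction on the evaluation context $E$, unfolding the two mutually-recursive encoding clauses (the one for $\encoN{\cdot}$ on applications and the one for $\encoN{\cdot}$ on continuation-context environments) and matching the results. For the base case $E = \contexthole$ we have $E[x] = x$, so $\encoN{E[x]} = \encoN{x} = \bind p \bout x r. \link r p$, while the hole-case of the environment encoding gives $\encoN{\pmap q{(\contexthole,p)}} = \link q p$, hence $\bind p \bout x q.\encoN{\pmap q{(\contexthole,p)}} = \bind p \bout x q.\link q p$, which is the same process up to $\alpha$-conversion. So the base case holds.

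For the inductive step, write $E = E_0\,M$, so that $E[x] = (E_0[x])\,M$ is an application. Unfolding the clause for $\encoN{\cdot}$ on applications,
\[
\encoN{E[x]} = \bind p \res r\big( \encoNa{E_0[x]}{r} \mid \inp r v.\bout v{y,p'}.(\link{p'}p \mid {!} \inp y{q_0}.\encoNa M{q_0}) \big).
\]
By the induction hypothesis $\encoN{E_0[x]} = \bind{p''}\bout x q.\encoN{\pmap q{(E_0,p'')}}$, so after instantiating the continuation parameter to $r$ (with $q$ chosen fresh) we get $\encoNa{E_0[x]}{r} = \bout x q.\encoN{\pmap q{(E_0,r)}}$. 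Since $q$ is fresh it does not occur in the other, input-guarded, parallel component and is distinct from $x$ and $r$; hence we may float the bound output $\bout x q$ to the front (moving it past that parallel component and extruding $\res r$ under it), obtaining
\[
\encoN{E[x]} = \bind p \bout x q. \res r\big( \encoN{\pmap q{(E_0,r)}} \mid \inp r v.\bout v{y,p'}.(\link{p'}p \mid {!} \inp y{q_0}.\encoNa M{q_0}) \big).
\]
It then remains to identify the process guarded by $\bout x q$ with $\encoN{\pmap q{(E_0\,M,\,p)}} = \encoN{\pmap q{(E,p)}}$: this is a routine (if slightly tedious) verification, unfolding the definition of $\encoN{\pmap q{(E,p)}}$ along the innermost application of $E$ and checking, using the structural laws of $\piI$, that the two processes coincide — the argument being symmetric to the one justifying the companion identity $\encoN{E[v]} = \encoN{E'[vM]}$ for value names recorded just before.

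The one genuine point of care — and where the work really is — is that $\encoN{\cdot}$ on terms recurses on an application at its \emph{outermost} operator, whereas $\encoN{\pmap q{(E,p)}}$ is defined by recursion on the \emph{innermost} application of $E$; reconciling these two decompositions is exactly the last identification step above, and it is there that one must be careful about the bookkeeping of fresh continuation names and restriction scopes, and about the precise reading of ``$=$'' (up to $\alpha$-conversion and the bound-output/restriction rearrangements that Milner-style $\piI$ encodings are designed to validate). Everything else is mechanical unfolding, and the induction goes through in two cases.
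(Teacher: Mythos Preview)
The paper does not give a proof of this lemma; it is simply stated.  So there is nothing to compare against, but your argument still deserves scrutiny on its own terms.

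Your outline --- structural induction on $E$ with the outermost decomposition $E = E_0\,M$, invoke the IH on $E_0[x]$, float the bound output $\bout x q$ to the front, and then identify the residual with $\encoN{\pmap q{(E,p)}}$ --- is the right shape.  But the last step is not the ``routine verification using the structural laws of $\piI$'' that you claim.  Concretely, already in the smallest non-trivial case $E_0 = \contexthole$ one is left with
\[
  \res{q'}\bigl(\link{q}{q'} \;\bigm|\; \inp{q'}{v}.\bout v{y,p'}.(\link{p'}p \mid {!}\inp y r.\encoNa M r)\bigr)
\]
which must be matched against $\inp q v.\bout v{y,r'}.(\link{r'}p \mid {!}\inp y{r}.\encoNa M r)$.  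These are \emph{not} structurally congruent: eliminating the forwarder $\link{q}{q'}$ under the restriction is precisely the link-absorption property recorded in Lemma~\ref{l:sk}, which is a \emph{behavioural} law (for $\wbPI$), not a structural one.  The same phenomenon recurs at every level of the recursion, because the environment encoding $\encoN{\pmap q{(E,p)}}$ peels off the \emph{innermost} argument while the term encoding of $E[x]$ peels off the \emph{outermost} one; reconciling the two orientations introduces a cascade of restricted links that can only be absorbed behaviourally.  Even the ``floating'' of $\bout x q$ past the parallel component is already an expansion/bisimulation law, not $\alpha$-conversion or scope extrusion.

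In short: the identity as written with ``$=$'' does not hold up to structural congruence; it should be read up to (at least) expansion, which is consistent with how the result is actually used downstream (Lemma~\ref{l:opt_op_cbn}, case~2, states $P \contr \encPPN{\pmap q{(E,p)}}{\ldots}$).  Your induction is fine once you work modulo $\contr$ (or $\wbPI$) and invoke Lemma~\ref{l:sk} explicitly in the step you currently call ``routine''.
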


The results of operational correspondence between OGS and \piI under the \opLTS are the
same as for call-by-value. 
We only report the operational correspondence for strong transitions and the 
analogue of Corollary~\ref{c:fa_traces_bis}.

\begin{theorem}
\label{t:opcorrCON_cbnshort}
$ $ \begin{enumerate}


\item  If $F \longrightarrowA F'$ then 
$ \encoN F \longrightarrowPI \;   \contr \encoN {F'}$; 

\item  If $F \arrA{\ell} F'$  
then 
$ \encoN F \arr{\mu}  \; \wbPI  \encoN  {F'}$.
\end{enumerate}
\end{theorem}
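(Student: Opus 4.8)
The plan is to establish the two statements exactly as their call-by-value analogues (Lemma~\ref{l:opcorrCOGS} and the strong-transition lemmas underlying Theorem~\ref{t:opcorrCON_w}) were established, namely by a case analysis on the transition rule that is fired, in each case exhibiting the matching $\pi$-side reduction sequence and then checking the residual processes are related by $\expa$ (part~1) or by $\wbPI$ (part~2). First I would set up the optimised encoding $\qencV$ analogue for call-by-name (the encoding $\encoN{-}$ of Figure~\ref{f:namePI} already folds in several deterministic reductions), so that the bookkeeping is minimised, together with the auxiliary algebraic laws on $\piI$ that respect divergence — principally $\res a(\inp a{\tilx}.P \mid \bout a\tilx.Q) \expa \res a(P\mid Q)$ and the garbage-collection law for a link on a fresh continuation name, $\res p(\link pq \mid \bout p\tilc.R) \expa R\subst qp$-style simplifications.

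The case analysis has six $F$-to-$F'$ rules to treat. For the silent case $(\tau)$, $M\redn N$, the key lemma is that $\encoNa Mp \longrightarrowPI \expa \encoNa Np$ whenever $M\redn N$; this is the call-by-name $\beta$-rule for the encoding, and I would prove it by showing that firing the top-level $\beta_n$ redex on the $\pi$ side corresponds to one communication plus a bounded number of link-forwarding steps that simplify, via the algebraic laws above, back to $\encoNa{M\subst xN}p$. Note that here the argument is \emph{not} evaluated, so the encoding must create a replicated server $!\inp xr.\encoNa Nr$ for the argument, which matches the $\encoN{\pmap xM\cdot\gamma}$ clause; this is the point where call-by-name genuinely differs from call-by-value. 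For the visible Player rules: $(PA)$, a value $V$ at $p$, fires the output $\bout pv$ in $\encoNa Vp$ and leaves $\encoN{\gamma\cdot\pmap vV}$ up to structural rearrangement; $(PVQ)$, $E[vM]$, uses Lemma~\ref{l:Ex}'s companion $\encoN{E[vM]} = \bind p\,\bout v{x,r}.(\encoN{\pmap r{(E,p)}}\mid !\inp xq.\encoNa Mq)$ so the output $\bout v{y,q}$ is immediately available and the residual is exactly $\encoN{\gamma\cdot\pmap yM\cdot\pmap q{(E,p)}}$; $(PTQ)$, $E[x]$, uses Lemma~\ref{l:Ex} directly, $\encoN{E[x]} = \bind p\,\bout xq.\encoN{\pmap q{(E,p)}}$, so the output $\bout xq$ matches $\questPT xq$. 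For the Opponent rules $(OA)$, $(OVQ)$, $(OTQ)$ and $(IOQ)$ I would inspect $\encoN\gamma$: each maps a $\gamma$-entry to a guarded input whose subject is the corresponding free O-name, so receiving on it produces precisely the encoding of the new active configuration (e.g.\ $\encoN{\pmap q{(E,p)}}$ when $E=\contexthole$ reduces after the input to $\link qp\mid\dots$, matching $\conf{E[v],q,\gamma,\phi}$ after the expected forwarder clean-up). In every visible case, after the designated visible action the residual matches $\encoN{F'}$ only up to some further deterministic $\tau$-steps and link simplifications, which is why the statement gives $\wbPI$ (indeed $\expaOP$) rather than syntactic equality.

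The main obstacle I anticipate is exactly the $(\tau)$ case together with the forwarder/link reasoning that recurs in the visible cases: one must show that the reductions triggered by contracting a $\beta_n$-redex (or by an Opponent input hitting a link-headed environment entry) always \emph{terminate} into the claimed normal form, and do so without introducing or erasing any divergence — this divergence-preservation is what lets the later corollaries replace $\expa$ by $\expaOP$ and transfer to the \opLTS, just as flagged in Remark~\ref{r:expa_op_ord} and Lemma~\ref{l:bisDIV_op} for call-by-value. The delicate sub-point is that in call-by-name a link on a variable name is \emph{replicated} ($\link xy \defi \abs{x,y}!\inp xp.\bout yq.\link qp$) whereas on a continuation name it is linear, so the garbage-collection laws one uses differ by position; I would isolate these as a small set of named laws (the analogues of Lemmas~\ref{l:opt_sound}--\ref{l:opt_con}) and verify each respects divergence, after which the six-way case analysis becomes routine. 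Once part~1 and the strong-transition refinement of part~2 are in hand, part~2 for weak transitions follows by the standard induction on the length of the reduction sequence, exactly as Theorem~\ref{t:opcorrCON_w} follows from its strong-transition lemmas.
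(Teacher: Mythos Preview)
Your proposal is correct and follows essentially the same approach as the paper: the proof (detailed in Appendix~\ref{a:cbn} as Lemma~\ref{l:opcorrCON_cbn}) is exactly the rule-by-rule case analysis you describe, supported by the link laws (Lemma~\ref{l:sk}), the $\beta$-correspondence (Lemma~\ref{l:beta_cbn}), and Lemma~\ref{l:Ex} for the $E[x]$ case. One small point: both parts of the theorem as stated are already about \emph{strong} transitions, so your final remark about deriving part~2 for weak transitions by induction is unnecessary here (that step belongs to the weak-transition corollaries that follow).
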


\begin{theorem} 
\label{t:opcorrCON3cbn}
$ $ \begin{enumerate}

\item If 
$ \encoN \FF \arrN \tau \PP$ then there is $\FF'$ such that 
$\FF \longrightarrowA \FF'$  and  
$\PP   \contr \encoN {\FF'}$ ;

\item If
$ \encoN \FF \arrN\ell \PP$
 then there is $\FF'$ such that
 $\FF \arrA\ell \FF'$
and $\PP \wbPI \encoN {\FF'}$.
\end{enumerate} 
\end{theorem}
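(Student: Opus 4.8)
\textbf{Proof plan for Theorem~\ref{t:opcorrCON3cbn} (the call-by-name operational correspondence, from \piI back to \AOGS, under the \opLTS).}

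The plan is to mirror exactly the structure used in the call-by-value case for Lemma~\ref{l:opcorrCON3_w} and Theorem~\ref{t:opcorrCON3_w}. First I would establish a strong-transition version of the statement: namely, that whenever $\encoN \FF \arrN\tau \PP$ (a single $\tau$-step in the \opLTS), then either $\PP$ is already structurally close to $\encoN \FF$ (an administrative reduction absorbed by $\contr$) or there is $\FF'$ with $\FF \longrightarrowA \FF'$ in at most one reduction step and $\PP \expa \encoN {\FF'}$; and symmetrically, whenever $\encoN \FF \arrN\ell \PP$ for a visible action $\ell$, then $\FF \arrA\ell \FF'$ for some $\FF'$ with $\PP \wbPI \encoN{\FF'}$. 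This is proved by a case analysis on the shape of the configuration $\FF$ (active with a term $M$, active with a stuck form $E[vM]$ or a value name $v$, passive with an environment) and, within each case, on the structure of the term and of the \piI transition. The key observations are the ones already used for call-by-value: outputs at a continuation name $p$ in $\encoNa M p$ correspond to $M$ reaching a value; outputs at a value name $v$ correspond to a stuck call $E[vM]$; inputs at a continuation name $p$ correspond to OA; inputs at variable/value names in $\encoN\gamma$ correspond to OTQ/OVQ. The \opLTS discipline is precisely what rules out the spurious input transitions from $\encoN\gamma$ while $\encoNa M p$ still has an unguarded output, restoring the full converse that fails in Theorem~\ref{t:opcorrCON2_w}.

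Then I would bridge strong to weak transitions. For part (1), a $\tau$-transition in \piI decomposes as a sequence of the elementary reductions analysed above; collecting the genuine $\beta$-steps yields $\FF \longrightarrowA \FF'$, while the administrative steps (communications on the internal continuation names introduced by the encoding of application, link-chasing, etc.) are absorbed by the expansion relation, using exactly the simple algebraic laws that underlie Lemma~\ref{l:opt_sound} and its call-by-name analogue — in particular the law ${\res{a}(\inp a{\tilx}.P \mid \bout a\tilx.Q)} \expa {\res a (P\mid Q)}$, which is a $\tau$-step on the left and respects divergence. For part (2), a weak visible transition $\encoN\FF \ArrN\ell \PP$ factors as $\encoN\FF \ArrN{} \arrN\ell \ArrN{} \PP$; apply part (1) to the leading and trailing $\tau$-sequences, the strong visible case in the middle, and combine using transitivity of $\wbPI$ together with $\expa \subseteq \wbPI$ and the fact that $\wbPI$ (and $\expa$) are congruences for the relevant static contexts, so they propagate through $\encoN\gamma$.

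The main obstacle is the call-by-name encoding of stuck terms and evaluation contexts, which is genuinely more delicate than in call-by-value: the clauses for $\encoN{E[vM]}$, $\encoN{\pmap{q}{(E,p)}\cdot\gamma}$ and $\encoN{E[v]}$ (Lemma~\ref{l:Ex}) involve nested replicated inputs $!\inp x r.\encoNa M r$ acting as thunks, so a single OGS step (say an OVQ firing a thunk) corresponds in \piI to an output followed by several administrative steps threading a fresh name through a link, and one must check that the resulting process is $\expa$- or $\wbPI$-equivalent to the encoding of the target OGS configuration rather than merely weakly bisimilar in an uncontrolled way. Getting the bookkeeping of fresh names right here — in particular matching the $\phi$-components and invoking the freshness convention of Remark~\ref{r:bn} — is where the real work lies; the rest is a routine adaptation of the call-by-value arguments, with the call-by-name lemmas (the analogues of Lemmas~\ref{l:opt_sound} and \ref{l:opcorrCON3_w}) playing the roles their call-by-value counterparts play there. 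Full details would go in Appendix~\ref{a:cbn}.
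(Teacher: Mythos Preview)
You have misread the statement: Theorem~\ref{t:opcorrCON3cbn} is about \emph{strong} transitions --- $\arrN\tau$, $\arrN\ell$, $\longrightarrowA$, $\arrA\ell$ are all single-step arrows. The correct call-by-value analogue is the strong result Theorem~\ref{t:opcorrCON3} in Appendix~\ref{a:oc}, not the weak results Lemma~\ref{l:opcorrCON3_w}/Theorem~\ref{t:opcorrCON3_w} that you cite. Consequently what you call ``establishing a strong-transition version'' already \emph{is} the theorem; your entire second phase (decomposing weak $\tau$-sequences, composing via $\expa\subseteq\wbPI$, etc.) is superfluous here.

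Modulo that, your first phase matches the paper's route. The paper factors the proof through Lemma~\ref{l:opcorr_cbnshort_piI} (the ordinary-LTS correspondence, with the explicit side condition on inputs that $\encoN\FF$ admit no output or $\tau$), which is obtained by case analysis from the classification Lemmas~\ref{l:opt_op_cbn} and~\ref{l:opt_con_cbn}; Theorem~\ref{t:opcorrCON3cbn} then follows because the \opLTS input constraint is exactly that side condition. One small correction to your case analysis: there is no ``either administrative or genuine'' dichotomy for $\tau$-steps at this level. Lemma~\ref{l:opt_op_cbn}(4) shows that every $\tau$-step of $\encoNa M p$ corresponds to exactly one reduction $M\redn N$ (the remaining administrative communications are absorbed by $\contr$ on the derivative, not realised as separate top-level $\tau$-steps), and since the term and environment components of a valid configuration have disjoint polarities they cannot interact. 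Hence $\FF\longrightarrowA\FF'$ is always exactly one step, never zero.
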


\begin{corollary}
\label{c:CBNtraces}
For any configuration $\FF$ and trace $s$,
 we have 
  $\FF \ArrA s$   iff 
 $ \encoN \FF  \ArrN s$.
\end{corollary}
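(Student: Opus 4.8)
The plan is to prove Corollary~\ref{c:CBNtraces} by following exactly the same strategy used for call-by-value in Corollary~\ref{c:traces}. The statement asserts that, for a call-by-name \AOGS configuration $\FF$ and a trace $s$, we have $\FF \ArrA s$ iff $\encoN \FF \ArrN s$, where $\ArrN{}$ is the \outputpr LTS on \piI. The two directions rely respectively on Theorem~\ref{t:opcorrCON_cbnshort} (for the forward direction, combined with the observation that the encodings of passive call-by-name configurations are input-reactive, so no $\tau$- or output-actions are spuriously enabled in \piI that are not mirrored in OGS) and on Theorem~\ref{t:opcorrCON3cbn} (for the backward direction). Both theorems are already stated in the excerpt, so the corollary is obtained by a straightforward induction on the length of $s$.

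First I would treat the forward direction. Given $\FF \ArrA s$ with $s = \ell_1 \cdots \ell_n$, I proceed by induction on $n$. The base case $n = 0$ is trivial since $\ArrA{\emptytrace}$ and $\ArrN{\emptytrace}$ both hold by the reflexivity of $\LongrightarrowA$ and $\ArrN{}$. For the inductive step, write $\FF \LongrightarrowA \arrA{\ell_1} \LongrightarrowA \FF_1 \ArrA{s'}$ with $s' = \ell_2\cdots\ell_n$. By Theorem~\ref{t:opcorrCON_cbnshort}(1), each silent step of $\FF$ is matched by $\encoN{\FF} \longrightarrowPI \contr \cdot$, and by part~(2) the visible step $\ell_1$ is matched by $\encoN{\FF} \arr{\ell_1} \wbPI \encoN{\FF_1}$; composing these (and using that $\contr$ and $\wbPI$ are preserved by the transitions, i.e.\ that they are sound up-to relations for the \opLTS as recorded around Lemma~\ref{l:bisDIV_op} and Remark~\ref{r:expa_op_ord}) gives $\encoN{\FF} \ArrN{\ell_1} \PP$ with $\PP \wbOPI \encoN{\FF_1}$. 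Here I must use that $\expa$ on the ordinary LTS may be replaced by $\expaOP$ on the \opLTS — the same remark as in the call-by-value development — so that the matching actually takes place in the \opLTS and not merely in the ordinary one. Then applying the induction hypothesis to $\FF_1 \ArrA{s'}$, together with the fact that $\wbOPI$ respects traces, yields $\encoN{\FF} \ArrN{s}$.

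For the backward direction, suppose $\encoN{\FF} \ArrN{s}$. Again induct on the length of $s$; the base case is immediate. For the step, $\encoN{\FF} \ArrN{} \arrN{\ell_1} \ArrN{} \PP_1 \ArrN{s'}$. By Theorem~\ref{t:opcorrCON3cbn}(1) the internal \opLTS steps are matched by $\FF \longrightarrowA \FF'$ with the residual $\PP$ satisfying $\PP \contr \encoN{\FF'}$ (again read on the \opLTS, $\contrOP$), and by part~(2) the visible action $\ell_1$ is matched by $\FF \ArrA{\ell_1} \FF_1$ with $\PP_1 \wbPI \encoN{\FF_1}$ (read as $\wbOPI$). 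Since $\wbOPI$ (and $\contrOP$) preserve trace sets, $\encoN{\FF_1} \ArrN{s'}$, and the induction hypothesis gives $\FF_1 \ArrA{s'}$, hence $\FF \ArrA{s}$.

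The main obstacle, as in the call-by-value case, is not the induction itself but making sure the behavioural relations used to ``glue'' the operational-correspondence lemmas together are the ones on the \opLTS rather than the ordinary LTS — that is, checking that all uses of $\expa$/$\wbPI$ coming out of Theorems~\ref{t:opcorrCON_cbnshort} and \ref{t:opcorrCON3cbn} can be upgraded to $\expaOP$/$\wbOPI$. This is exactly the point addressed for call-by-value by Lemma~\ref{l:bisDIV_op} and the surrounding discussion (the algebraic laws underlying the correspondence respect divergence, and the encodings of $\lambda$-terms and configurations are divergence-sensitive in the required way); the same argument carries over verbatim to call-by-name, since the call-by-name encoding in Figure~\ref{f:namePI} is built from the same link processes and the same kind of deterministic administrative reductions. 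With that in place the corollary is a routine two-direction induction, and I would present it as such, referring to Appendix~\ref{a:cbn} for the call-by-name analogues of the divergence-respecting lemmas.
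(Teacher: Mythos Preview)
Your proposal is correct and follows essentially the same route as the paper: an induction on the length of the trace, using the operational-correspondence results in each direction and the observation that the uses of $\expa$/$\wbPI$ can be upgraded to $\expaOP$/$\wbOPI$ because the underlying algebraic laws respect divergence. The only cosmetic difference is that the paper first packages the strong-transition results into weak-transition versions (the call-by-name analogues of Theorems~\ref{t:opcorrCON_w_OP} and~\ref{t:opcorrCON3_w}) and then runs the induction on those, whereas you invoke the strong-transition Theorems~\ref{t:opcorrCON_cbnshort} and~\ref{t:opcorrCON3cbn} directly and lift to weak transitions inline; the substance is the same.
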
 

\subsection{Concurrent Operational Game Semantics}
\label{ss:cogs_cbn}

The modification to obtain the concurrent version of OGS (\COGS) for call-by-name are
similar to those for call-by-value in Section~\ref{s:cogs}. 

Below are the rules for the call-by-name  OGS (\COGS).  
We use the same notations as the call-by-value version. 

 \[
  \begin{array}{l|llll}
   (Int) & \conf{A \cdot \pmap{p}{M},\gamma,\phi} & \arrC{\qquad} & 
     \conf{A \cdot \pmap{p}{N},\gamma} & \text{ when } M \redn N\\
   (PA) & \conf{A \cdot \pmap{p}{V},\gamma,\phi} & \arrC{\ansP{p}{v}} & 
     \conf{A, \gamma \cdot \pmap{v}{V},\phi\uplus \{v\}} \\
   (PVQ) & \conf{A \cdot \pmap{p}{E[vM]},\gamma,\phi} & \arrC{\questPV{v}{y}{q}} & 
     \multicolumn{2}{l}{\conf{A, \gamma \cdot \pmap{y}{M}\cdot \pmap{q}{(E,p)},\phi\uplus \{y,q\}}} \\
   (PTQ) & \conf{A \cdot \pmap{p}{E[x]},\gamma,\phi} & \arrC{\questPT{x}{q}} & 
     \conf{A, \gamma \cdot \pmap{q}{(E,p)},\phi \uplus \{q\}} \\
   (OA) & \conf{A, \gamma,\phi} & \arrC{\ansO{p}{v}} & 
     \conf{A \cdot \pmap{q}{E[v]},\gamma,\phi \uplus \{v\}} & \text{ when } \gamma(p) = (E,q)  \\

   (OVQ) & \conf{A, \gamma \cdot [v \mapsto V],\phi} & \arrC{\questOV{v}{y}{p}} & 
     \conf{A \cdot \pmap{p}{V y},\gamma,\phi \uplus \{p,y\}} &\\
   (OTQ) & \conf{A,\gamma,\phi} & \arrC{\questOT{x}{p}} & 
     \conf{A \cdot \pmap{p}{M},\gamma,\phi \uplus \{p\}} & \text{ when } \gamma(x) = M\\
   (IOQ) &\initconf{\phi}{M}  & 
 \arrC{\questOinit{p}}
 &
     \conf{M,p,\emptymap,\phi \uplus \{p\}}
  \end{array}
 \]

As usual, for encoding   \COGS,
it suffices  to consider
 running
terms and configurations,  as the encoding remains otherwise the same.
The encoding of the running term is: 
\[ 
\begin{array}{rcl}
\encoN{\pmap{p}{M}\cdot A  } & \defi & 
\encoNa  {M}{p} | \encoN{ A  } \\
\encoN{\emptymap} & \defi & \nil
\end{array} 
\]
The encoding of configurations is then:
\[
\encoN{\conf{ A , \gamma, \phi } }
\defi  \pairP{  \encoN A
 | \encoN \gamma }
\]

We report the operational correspondence for strong transitions.

\begin{lemma}[from \COGS to \piI, strong transitions]
\label{l:CBNpogs_short}
$ $ \begin{enumerate}

\item  If $F \longrightarrowp F'$ then 
$ \encoSN F \longrightarrowPI \;  \contr \encoSN {F'}$; 

\item  If $F \arrC{\ell} F'$  then 
$ \encoSN F \arrPI{\ell}\;  \wbPI  \encoSN  {F'}$.
\end{enumerate}
\end{lemma}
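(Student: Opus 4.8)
The plan is to argue by case analysis on the \COGS transition of $F$, mirroring the call-by-value correspondence (Lemma~\ref{l:opcorrCOGS}) and the call-by-name \AOGS correspondence (Theorem~\ref{t:opcorrCON_cbnshort}). Since $\encoSN{\conf{A,\gamma,\phi}} = \encoN{A} \mid \encoN{\gamma}$ and the encodings of running terms and environments decompose further into the encodings of their individual entries, for each rule it suffices to single out the one sub-process of $\encoSN{F}$ that carries the action, to check that its first $\pi$-transition has the same label (modulo the freshness convention on bound names), and to compare the residuals.

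For the Player rules PA, PVQ, PTQ the active sub-process is the head thread $\encoNa{\LL}{p}$: $\encoNa{\lambda x.M}{p}$ and $\encoNa{v}{p}$ offer $\bout{p}{w}$ (matching PA); $\encoNa{E[vM]}{p}$ offers $\bout{v}{x,r}$ (matching PVQ); and, by Lemma~\ref{l:Ex}, $\encoNa{E[x]}{p}$ offers $\bout{x}{q}$ (matching PTQ). In each of these the residual is, up to $\alpha$-conversion and structural congruence, exactly $\encoN{F'}$, by the defining clauses for $\encoN{\pmap{v}{V}}$, $\encoN{\pmap{q}{(E,p)}}$ and $\encoN{\pmap{x}{M}}$ together with the identity $\encoNa{E[v]}{q} = \encoNa{E'[vM]}{q}$ (resp.\ $=\encoNa{v}{q}$) recorded before Lemma~\ref{l:Ex}, so here $\wbPI$ holds trivially. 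For the Opponent rules the carrying sub-process lies in $\encoN{\gamma}$: $\encoN{\pmap{p}{(E,q)}}$ exposes an input on $p$ (OA), $\encoN{\pmap{v}{V}}$ an input on $v$ (OVQ), $\encoN{\pmap{x}{M}}$ a replicated input on $x$ (OTQ), and the encoding of an initial configuration the distinguished input $(p)$ (IOQ). The residuals again coincide with $\encoN{F'}$, immediately for OA, OTQ and IOQ; for OVQ with $V = \lambda x.M$ the residual is $\encoNa{M}{p}\subst{x}{y}$, which equals $\encoNa{M\subst{x}{y}}{p}$ by the substitution lemma for the encoding, whereas $\encoN{F'}$ carries $\encoNa{(\lambda x.M)y}{p}$, so one closes the case up to $\wbPI$ by validity of the call-by-name $\beta$ law $\encoNa{(\lambda x.M)y}{q}\contr\encoNa{M\subst{x}{y}}{q}$ (an instance of part~(1) applied to a head redex); the value-name subcases use in addition the elementary laws for link processes.

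The main case is the internal rule, where $F = \conf{A\cdot\pmap{p}{M},\gamma,\phi}$, $F' = \conf{A\cdot\pmap{p}{N},\gamma,\phi}$ and $M \redn N$, so $M = E[(\lambda x.M_1)M_2]$ and $N = E[M_1\subst{x}{M_2}]$ with the redex possibly buried inside $E$. The plan is to isolate two auxiliary facts and recombine them. First, a \emph{decomposition lemma}: for every $R \in \LaoP$, $\encoNa{E[R]}{p} \contr \res q(\encoNa{R}{q} \mid \encoN{\pmap{q}{(E,p)}})$, proved by induction on $E$ using the encoding of application (the base case $E = \contexthole$ amounts to absorbing a continuation link, the step $E = E'[\contexthole\,M']$ unfolds one application and appeals to the induction hypothesis). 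Second, the \emph{core $\beta$ step}: $\res q(\encoNa{(\lambda x.M_1)M_2}{q} \mid Q) \longrightarrowPI \contr \res q(\encoNa{M_1\subst{x}{M_2}}{q} \mid Q)$, obtained by performing the communication on the value name (together with the administrative communication on the fresh inner continuation), then absorbing the generated continuation link and recognising the residual variable server $!\inp{x}{r}.\encoNa{M_2}{r}$ via the substitution lemma. Recombining the two, using that $\contr$ is a precongruence so that the decomposition can be applied inside $\encoSN{F}$ and its inverse inside $\encoSN{F'}$, yields $\encoSN{F} \longrightarrowPI \contr \encoSN{F'}$. The main obstacle, as in the call-by-value development, is the bookkeeping of the administrative communications created by the encodings of applications and by link processes: each must be shown to be a $\tau$-step that $\encoSN{F'}$ can also perform, so that the quantitative content of $\contr$ (it counts $\tau$-moves, and it is this quantitative strengthening of $\wbPI$ that makes the lemma usable in the up-to techniques of Section~\ref{ss:upto_games} and in the weak-transition corollaries) is genuinely preserved; this is the call-by-name counterpart of the analysis already carried out for Lemma~\ref{l:opt_sound} and Lemma~\ref{l:opcorrCOGS}.
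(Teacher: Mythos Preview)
Your case analysis is exactly the paper's approach: it treats the \COGS proof as a routine adaptation of the \AOGS correspondence (Lemma~\ref{l:opcorrCON_cbn}), going rule by rule and reading off the head transition of the relevant sub-process via Lemmas~\ref{l:opt_op_cbn} and~\ref{l:opt_con_cbn}. Two small points where you diverge from the paper are worth flagging.

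For the silent rule, the paper does not go through a context decomposition at all: it simply invokes Lemma~\ref{l:beta_cbn}, which is the known operational-correspondence result for Milner's call-by-name encoding (from~\cite{San95i}). Your ``decomposition lemma'' $\encoNa{E[R]}{p}\contr\res q(\encoNa{R}{q}\mid\encoN{\pmap{q}{(E,p)}})$ is not quite right as stated: already for $E=\contexthole$ the link absorption of Lemma~\ref{l:sk}(3) gives the opposite orientation of expansion, and the inductive step does not produce a syntactic match with the paper's chosen clause for $\encoN{\pmap{q}{(E,p)}}$. None of this is fatal, but re-deriving Lemma~\ref{l:beta_cbn} in this way is more delicate than you suggest; citing it is both cleaner and what the paper does.

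For PTQ you claim the residual equals $\encoN{F'}$ up to structural congruence, appealing to Lemma~\ref{l:Ex}. In fact Lemma~\ref{l:Ex} is only an identity up to expansion (compare Lemma~\ref{l:opt_op_cbn}(2) and case~(5) of Lemma~\ref{l:opcorrCON_cbn}, which both record $\contr$ rather than $\equiv$), so ``exactly'' is slightly too strong; since $\contr\subseteq\wbPI$ this does not affect the conclusion of part~(2).
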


\begin{lemma}[from  \piI to \COGS, strong transitions]
\label{l:CBNpogs_short_converse}
$ $ \begin{enumerate}

\item  If
$ \encoSN F \longrightarrowPI  P$, then there is $F'$ s.t.\  
 $F \longrightarrowp F'$ and 
$P 
\contr \encoSN {F'}$;

\item  If
$ \encoSN F \arrPI{\ell}  P$, then there is $F'$ s.t.\  
 $F \arrC{\ell} F'$ and 
$P 
\wbPI \encoSN {F'}$.
\end{enumerate}
\end{lemma}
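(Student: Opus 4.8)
The plan is to follow the proof of the call-by-value analogue, Lemma~\ref{l:opcorrCOGS2}, together with the call-by-name alternating correspondences (Theorems~\ref{t:opcorrCON_cbnshort} and~\ref{t:opcorrCON3cbn}), arguing by a case analysis on the derivation of the \piI transition $\encoSN{F} \arrPI{\mu} P$, driven by the shape of the encoding in Figure~\ref{f:namePI}. Writing $F = \conf{A,\gamma,\phi}$, the process $\encoSN{F}$ is the parallel composition of $\encoN{A}$, itself a parallel composition of thread encodings $\encoNa{M_i}{p_i}$, with $\encoN{\gamma}$, a parallel composition of input-guarded processes and links. The first step is to observe that, since $F$ is a valid \COGS configuration, its polarity-and-sort discipline guarantees that $\encoN{A}$ and $\encoN{\gamma}$ — and the thread encodings among themselves, and the entries of $\encoN{\gamma}$ among themselves — cannot interact in the sense of Definition~\ref{d:interact}: Player (domain) names occur only in output in the term part and only in input in the environment part, the two domains are disjoint, and the three kinds of names (continuation, variable, value) keep the interfaces apart. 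Hence every silent step of $\encoSN{F}$ is internal to a single thread $\encoNa{M_i}{p_i}$, and every visible action is either an output of a single thread or an input of a single entry of $\encoN{\gamma}$.

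For part~(1), which concerns a (reflexive-transitive) sequence of \piI $\tau$-steps, I would argue by induction on the number of steps, reducing to the single-step case. A single silent step internal to $\encoNa{M_i}{p_i}$ is, by the standard analysis of Milner's encoding — restated here through the call-by-name analogues of Lemma~\ref{l:opt_sound} and of Lemmas~\ref{l:opt_stuck}--\ref{l:opt_con}, obtained from the algebraic laws that garbage-collect links and validate $\beta$ (possibly via the optimised encoding, as in the call-by-value development) — either administrative, in which case we take $F'=F$ and get $P \contr \encoSN{F}$, or it commits (at most) one $\beta$-reduction $M_i \redn M_i'$, in which case $F'$ is $F$ with $M_i$ replaced by $M_i'$ (so $F \arrC{\tau} F'$, hence $F \longrightarrowp F'$) and the remaining administrative chain down to $\encoNa{M_i'}{p_i}$ is absorbed into the expansion, yielding $P \contr \encoSN{F'}$. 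For part~(2): by the structure of the encoding an output becomes unguarded only once the subterm in evaluation position is a value $V$, a callback $E[vM']$, or a stuck term $E[x]$; the output is then on $p_i$, on $v$, respectively on $x$ (using Lemma~\ref{l:Ex}), and matches exactly the Player transition PA, PVQ, respectively PTQ of \COGS, with residual process structurally congruent — hence $\wbPI$ — to $\encoSN{F'}$ after garbage-collecting auxiliary links. Dually, an input of an entry of $\encoN{\gamma}$ is an input on $v$ with $\pmap{v}{V}\in\gamma$ (matching OVQ), on $x$ with $\pmap{x}{M}\in\gamma$ (OTQ), or on $p$ with $\pmap{p}{(E,q)}\in\gamma$ (OA); in each case one reads off $F'$ and checks by the same simple algebra that $P \wbPI \encoSN{F'}$.

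The main obstacle is the silent-step case of part~(1): keeping the expansion bookkeeping correct when a single \piI reduction corresponds to a whole \COGS $\beta$-step but leaves the process `half-way' along the administrative chain, so that one must verify the half-reduced process is still \emph{expanded by} $\encoSN{F'}$ rather than merely weakly bisimilar to it. This is precisely where the call-by-name versions of the link-elimination and $\beta$-simulation laws carry the weight; I expect these to be routine adaptations of their call-by-value counterparts in Appendix~\ref{a:oc}, but they are the technically load-bearing ingredient. The only other non-trivial point is making the `no spurious interaction' argument precise for the three kinds of call-by-name names, which amounts to checking that the encoding respects the validity invariants of \COGS configurations along transitions.
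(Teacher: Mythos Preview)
Your overall strategy—decompose the parallel composition, argue the components cannot interact by polarity, then case-analyse which component fires—is exactly how the paper proceeds (it defers to the call-by-value analogue, whose proof is a case analysis via the characterisation lemmas; the call-by-name versions are Lemmas~\ref{l:opt_op_cbn} and~\ref{l:opt_con_cbn}). Your treatment of part~(2) is essentially correct.

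There is, however, a genuine gap in part~(1). First a notational point: $\longrightarrowPI$ and $\longrightarrowp$ are \emph{single} strong $\tau$-steps, not their reflexive--transitive closures (those are $\LongrightarrowPI$ and $\Longrightarrowp$); so there is no induction on the number of steps here. More substantively, your ``administrative'' sub-case with $F'=F$ does not type-check against the statement: you must exhibit $F'$ with $F \longrightarrowp F'$, a genuine \COGS step. What you are missing is precisely the content of Lemma~\ref{l:opt_op_cbn}: the process $\encoNa{L}{p}$ has \emph{exactly one} immediate transition, and when that transition is $\tau$ (clause~4) there is always an $N$ with $L \redn N$ and the derivative already in the relation $\contr$ with $\encoNa{N}{p}$. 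The further internal communications you are worried about (on the restricted $q$, on the value name, through links) are not a separate case to be handled; they are all absorbed into $\contr$ after the very first step. So the ``half-way bookkeeping'' obstacle you flag does not arise, and there is no need to fall back to $F'=F$. Relatedly, note that in call-by-name the paper does not introduce an optimised encoding: Lemma~\ref{l:opt_op_cbn} is stated and used directly for $\encoN{}$, so your parenthetical about going via an optimisation is off.
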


\begin{lemma} 
\label{l:ogsp_pi_traces_bisimulation_cbn}
For any $\FF,\FF'$ we have:
\begin{enumerate}
\item
 $\FF_1 \TEp \FF_2$ iff 
 $\encoN {\FF_1} \TEPI \encoN {\FF_2}$;  
\item
 $\FF_1 \wbC \FF_2$ iff 
 $\encoN {\FF_1} \wbPI  \encoN {\FF_2}$.  
\end{enumerate} 
\end{lemma}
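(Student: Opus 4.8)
The plan is to mirror the call-by-value development: Lemma~\ref{l:ogsp_pi_traces_bisimulation_cbn} is the call-by-name analogue of Lemma~\ref{l:ogsp_pi_traces_bis}, and its proof should follow exactly the same route. First I would establish the \emph{weak} operational correspondence between \COGS and \piI for call-by-name, in the four-item form of Theorem~\ref{t:COGS_pi}: (1) if $\FF \Longrightarrowp \FF'$ then $\encoN\FF \LongrightarrowPI \contr \encoN{\FF'}$; (2) if $\FF \ArrC\ell \FF'$ then $\encoN\FF \ArrPI\ell \wbPI \encoN{\FF'}$; (3) the converse of (1); (4) the converse of (2). These are obtained from the \emph{strong} transition lemmas already in the excerpt, namely Lemmas~\ref{l:CBNpogs_short} and \ref{l:CBNpogs_short_converse}, by the routine inductive lifting from single steps to sequences of steps (composing the strong correspondence with the $\contr$/$\wbPI$ games and using that $\contr$ is a precongruence and is contained in $\wbPI$, exactly as in the call-by-value proofs in Section~\ref{s:cogs}).

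From the weak correspondence I would derive the trace-level corollary (the call-by-name analogue of Corollary~\ref{c:ogsp_pi}): for any \COGS configuration $\FF$ and trace $\tr$, $\FF \ArrC\tr$ iff $\encoN\FF \ArrPI\tr$. This follows by induction on the length of $\tr$ using items (2) and (4) of the weak correspondence, together with the freshness convention on bound names (Remark~\ref{r:bn}). With this in hand, part (1) of the Lemma is immediate: $\FF_1 \TEp \FF_2$ means $\FF_1$ and $\FF_2$ have the same sets of traces, and by the trace corollary this holds iff $\encoN{\FF_1}$ and $\encoN{\FF_2}$ have the same sets of traces, i.e.\ $\encoN{\FF_1} \TEPI \encoN{\FF_2}$. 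For part (2), bisimilarity, I would use items (1)--(4) of the weak correspondence to show that the relation $\{(\encoN{\FF_1},\encoN{\FF_2}) : \FF_1 \wbC \FF_2\}$ is (contained in) $\wbPI$, and symmetrically that $\{(\FF_1,\FF_2) : \encoN{\FF_1} \wbPI \encoN{\FF_2}\}$ is a bisimulation on \COGS configurations up to the $\contr$/$\wbPI$ slack allowed by the correspondence (one may need a small up-to-expansion argument to absorb the $\contr$ appearing in item (1), just as in the call-by-value case); both inclusions together give the equivalence.

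The one genuinely new ingredient relative to call-by-value is that the call-by-name LTS has \emph{more kinds of actions} — the two flavours of Player and Opponent questions (PVQ/PTQ and OVQ/OTQ), corresponding to value names versus variable names, plus the three shapes of extended $\lambda$-terms $\LaoP$. So the main obstacle is checking that the strong correspondence Lemmas~\ref{l:CBNpogs_short} and \ref{l:CBNpogs_short_converse} really do cover every case, in particular that the encoding clauses for $\encoN{E[vM]}$, $\encoN{v}$, and the environment clauses for $\pmap{x}{M}$, $\pmap{q}{(E,p)}$ match the OGS rules step-for-step (using Lemma~\ref{l:Ex} for the $E[x]$ case and the link-process laws for the forwarder redexes). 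Once the strong correspondence is verified exhaustively over this larger action set, everything downstream — the weak correspondence, the trace corollary, and both halves of the present Lemma — goes through by the same arguments as in Section~\ref{s:cogs}, with no conceptual changes. I would therefore spend the bulk of the written proof on the case analysis for the strong lemmas (or defer it to the appendix, as the excerpt does for several call-by-value analogues) and treat the passage to the present statement as a short corollary.
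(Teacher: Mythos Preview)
Your proposal is correct and follows essentially the same route as the paper: the appendix lifts the strong-transition Lemmas~\ref{l:CBNpogs_short} and~\ref{l:CBNpogs_short_converse} to weak transitions (Theorems~\ref{t:CBNpogs_short_weak} and~\ref{t:CBNpogs_short_converse_weak}), derives the trace-level corollary (Corollary~\ref{c:CBNpogs_pi}), and then obtains the two halves of the present lemma separately as Corollaries~\ref{c:CBNpogs_pi_traces} and~\ref{c:CBNpogs_pi_bis}, the latter by exhibiting $\{(\encoN{\FF_1},\encoN{\FF_2}) : \FF_1 \wbC \FF_2\}$ as a bisimulation up-to $\wbPI$ and its converse as a plain bisimulation. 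Your identification of the extra case analysis for the call-by-name action set (PVQ/PTQ, OVQ/OTQ, and the three shapes in $\LaoP$) as the only genuinely new work is also accurate.
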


\begin{theorem}
\label{t:CBNfa_all}
For all $M,N$   with $\fv{M,N}\subseteq \phi$, we have: 
$\conf{ M,\phi} \TEp \conf{ N,\phi}$ iff 
$\conf{ M,\phi} \wbC \conf{ N,\phi}$ iff 
$\pairP{\encoSN M} \TEPI \pairP{\encoSN N}$
iff 
$\pairP{\encoSN M} \wb \pairP{\encoSN N}$. 
\end{theorem}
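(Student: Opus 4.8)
The plan is to reduce the four-way equivalence to one non-trivial implication and then transport the call-by-value argument. Applying Lemma~\ref{l:ogsp_pi_traces_bisimulation_cbn} to the initial configurations for $M$ and $N$ (whose \piI images are $\encoN M$ and $\encoN N$), we get at once that $\conf{M,\phi}\TEp\conf{N,\phi}$ iff $\encoN M \TEPI \encoN N$, and that $\conf{M,\phi}\wbC\conf{N,\phi}$ iff $\encoN M \wbPI \encoN N$. So the statement is equivalent to saying that trace equivalence and bisimilarity coincide on the \piI encodings of call-by-name $\lambda$-terms. Since bisimilarity always implies trace equivalence, only the implication $\encoN M \TEPI \encoN N \Rightarrow \encoN M \wbPI \encoN N$ remains: this is the call-by-name counterpart of Lemma~\ref{l:tr_bisi_sing}, and establishing it closes the theorem, the case analysis above then showing that all four relations in the statement coincide.

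For that implication I would follow the proof of Lemma~\ref{l:tr_bisi_sing}. Let $\R$ be the relation on \piI processes given by all pairs $(\encoN\FF,\encoN\GG)$ with $\FF,\GG$ singleton or initial \COGS configurations and $\FF\TEp\GG$. I would show that $\R$ is a bisimulation up-to context and up-to $(\contr,\wbPI)$ in \piI, and conclude $\R\subseteq\wbPI$ by Theorem~\ref{t:up-to-con-expa-nointer}; in particular $\encoN M \wbPI \encoN N$ whenever $\conf{M,\phi}\TEp\conf{N,\phi}$. Note that when $\FF$ is an initial configuration its only transition is the (deterministic, up to the fresh choice of continuation name) move $\questOinit p$ leading to a singleton active configuration, so such pairs are immediately reduced to singleton pairs still in $\R$.

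To check the up-to clauses, take a visible challenge $\encoN\FF\arrPI\ell P$. By the converse strong-correspondence Lemma~\ref{l:CBNpogs_short_converse}(2) there is $\FF'$ with $\FF\arrC\ell\FF'$ and $P\wbPI\encoN{\FF'}$, and, since $\FF\TEp\GG$, a matching $\GG\arrC\ell\GG'$. A single visible transition out of a singleton \COGS configuration produces a residual with at most two Player names of distinct kinds — a variable name and a continuation name after a Player value-question, and a single Player name in every other case — so $\FF'$ and $\GG'$ decompose as tensor products $\FF'_1\otimes\FF'_2$ and $\GG'_1\otimes\GG'_2$ of singleton components (with a trivial factor $\nil$ when there is only one). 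Using the call-by-name analogues of the tensor results of Section~\ref{ss:tp} — that the traces of a tensor product are the interleavings of those of its factors (Lemma~\ref{l:COGS-interleave-tensor}), that \piI images of compatible configurations cannot interact (Lemma~\ref{l:compat_cannot-interact}), and that the encoding is compositional, $\encoN{\FF'}\equiv\encoN{\FF'_1}\mid\encoN{\FF'_2}$ — the equality of the trace sets of $\FF$ and $\GG$ projects componentwise to $\FF'_i\TEp\GG'_i$, while $\encoN{\FF'}$ and $\encoN{\GG'}$ sit inside the polyadic parallel-composition context, which cannot interact with its holes because all names exchanged in \piI are fresh. This is precisely the visible clause of Definition~\ref{d:uptocontext_contr_wb}; the silent clause is handled identically via Lemma~\ref{l:CBNpogs_short_converse}(1), with the expansion $\contr$ replacing $\wbPI$.

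The main obstacle is the componentwise decomposition of trace equivalence: by Remark~\ref{r:CONFnoSINGL}, $\FF'=\FF'_1\otimes\FF'_2$ with $\FF'\TEp\GG'$ does not in general imply $\FF'_i\TEp\GG'_i$. What makes it go through here is that the residuals in question come from one transition of a singleton configuration, so their components are located at fresh names of different kinds, and for a singleton configuration the interrogation of such a name is the only way to drive the trace through that component; hence the kind of name (together with the justification pointers) determines from which component every later action descends, and one can recover the componentwise trace sets by projection. This is exactly what fails in the counterexample of Remark~\ref{r:CONFnoSINGL}, where the two components are running terms whose own continuation names never surface in any trace. The projection step is the one used in the call-by-value proof of Lemma~\ref{l:tr_bisi_sing}; the remaining work is the bookkeeping specific to call-by-name — going through the transition cases with the extra sort of value names and with questions split into term-questions and value-questions, and checking that the \emph{cannot interact} property and the encoding's compositionality transfer verbatim.
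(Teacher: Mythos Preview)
Your proposal is correct and follows the same high-level route as the paper: reduce to Lemma~\ref{l:ogsp_pi_traces_bisimulation_cbn} (equivalently Corollaries~\ref{c:CBNpogs_pi_traces} and~\ref{c:CBNpogs_pi_bis}) and then close the square by proving the call-by-name analogue of Lemma~\ref{l:tr_bisi_sing} via an up-to-context-and-$(\contr,\wbPI)$ bisimulation on encodings of singleton configurations. This is exactly what the paper does (it packages the last step as Lemma~\ref{l:CBNtr_bisi_sing}).

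The one place where you diverge is the decomposition step. You argue at the OGS level, invoking the tensor interleaving characterisation and justification pointers to project $\FF'\TEp\GG'$ onto the singleton factors. The paper instead stays in \piI and uses two small process-algebra lemmas: Lemma~\ref{l:immediate_tr} (determinacy of the immediate transition from a singleton encoding gives $\encoN{\FF'}\TEPI\encoN{\GG'}$ directly) and Lemma~\ref{l:par_tr} (if $P_1\mid \alpha.P_2 \TEPI Q_1\mid \alpha.Q_2$ with $\alpha$ an input at a name not free in $P_1,Q_1$, then $P_1\TEPI Q_1$), applied twice since after a Player question both residual components are input-guarded at fresh, disjoint names. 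This avoids the pointer-tracking argument entirely: the freshness of the guarding names already makes the ``projection'' a one-line consequence. Your argument is not wrong, but the \piI route is both shorter and sidesteps the delicate uniqueness-of-decomposition discussion you flag around Remark~\ref{r:CONFnoSINGL}.
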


\subsection{Up-to techniques for games, and relationship between Concurrent and Alternating OGS}
\label{ss:upto_games_cbn}

As for call-by-value, so for call-by-name the 
'up-to
  composition' technique for bisimulation 
(Definition~\ref{d:uptoComp}) 
can be imported from the $\pi$-calculus, and then used to establish that the equivalence
induced on $\lambda$-terms by the sequential and concurrent versions of OGS  coincide. 

We use  the technique to prove the call-by-name analogue of 
Lemma~\ref{l:OGS_COGS}.

\begin{corollary}
\label{c:OGS_COGS_cbn}
For any   $\lambda$-terms $M,N$, the following statement are the same: 
$\conf{M} \TEA \conf {N}$;
$\conf{M} \wbA \conf {N}$;
$\conf{M} \TEp \conf {N}$;
$\conf{M} \wbC \conf {N}$;
$\encoN {{M} } \TEN\encoN {{N}}$; 
$\encoN {{M}} \wbOPI \encoN {{N}}$; 
$\encoN {{M} } \TEPI\encoN {{N}}$; 
$\encoN {{M}} \wbPI \encoN {{N}}$. 
\end{corollary}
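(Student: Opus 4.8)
The plan is to mirror, in the call-by-name setting, the chain of equivalences already established for call-by-value in Corollary~\ref{c:OGS_COGS}. That corollary was obtained by combining three ingredients: (i) the coincidence of \AOGS and \piI (under the \opLTS) semantics, both trace-based and bisimulation-based; (ii) the coincidence of \COGS and \piI (under the ordinary LTS) semantics; and (iii) the reconciliation of the Alternating and Concurrent OGS via Lemmas~\ref{l:traces_AOGS_COGS} and~\ref{l:OGS_COGS}. For call-by-name, ingredient (i) is already available as Corollary~\ref{c:CBNtraces} together with Theorems~\ref{t:opcorrCON_cbnshort} and~\ref{t:opcorrCON3cbn} (which give the operational correspondence on strong and then weak transitions, hence trace equivalence and — since the \AOGS LTS is deterministic — also bisimilarity between $\FF$ and $\encoN{\FF}$). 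Ingredient (ii) is available as Lemmas~\ref{l:CBNpogs_short}, \ref{l:CBNpogs_short_converse}, \ref{l:ogsp_pi_traces_bisimulation_cbn} and Theorem~\ref{t:CBNfa_all}. So the only genuinely new work is ingredient (iii): proving the call-by-name analogues of Lemma~\ref{l:traces_AOGS_COGS} (the \AOGS traces are exactly the alternating \COGS traces, P-starting when the configuration is active) and of Lemma~\ref{l:OGS_COGS} ($\FF \wbA \GG$ implies $\FF \wbC \GG$ on singleton configurations), which the excerpt has flagged as needed but not yet spelled out.

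First I would establish the call-by-name version of Lemma~\ref{l:traces_AOGS_COGS}. This is a routine induction on the trace: the \AOGS rules of Figure~\ref{fig:scbn-lts} are literally a subset of the \COGS rules, and the bipartite active/passive structure forces any \AOGS-trace to be alternating and output-first from active configurations; conversely an alternating output-first \COGS-trace only ever fires \COGS rules that already exist in \AOGS, by the same case analysis on $\act$ used in the call-by-value proof (one only needs to note that in call-by-name the Player actions are $P\tau$, PA, PVQ, PTQ and the Opponent actions are OA, OVQ, OTQ, but the argument is insensitive to which particular question/answer shape occurs). Next I would adapt Lemma~\ref{l:OGS_COGS}: define the call-by-name notion of \emph{triggerable singleton subconfiguration} of an \AOGS configuration (an active $\conf{\LL,p,\gamma,\phi}$ contributes the singleton $\conf{\pmap{p}{\LL},\emptymap,\phi\setminus\dom{\gamma}}$; a passive $\conf{\gamma,\phi}$ contributes the singletons $\conf{\pmap{v}{V},\ldots}$ for $\pmap{v}{V}\subseteq\gamma$, $\conf{\pmap{x}{M},\ldots}$ for $\pmap{x}{M}\subseteq\gamma$, and $\conf{\pmap{q}{(E,p)},\ldots}$ for $\pmap{q}{(E,p)}\subseteq\gamma$), then show that the relation $\R$ on singleton \COGS configurations relating triggerable subconfigurations of $\wbA$-related \AOGS configurations is a \emph{bisimulation up-to composition} (in the sense of Section~\ref{ss:upto_games_cbn}/Definition~\ref{d:uptoComp}), invoking Theorem~\ref{t:uptoComp}'s call-by-name counterpart (whose soundness is asserted at the start of Section~\ref{ss:upto_games_cbn}). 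The bisimulation-game analysis is by case on the challenge action: the Player cases ($P\tau$ trivial, PA simple, PVQ/PTQ using up-to composition to split the newly created $\pmap{y}{M}$, $\pmap{q}{(E,p)}$ components) and the Opponent cases (OVQ, OTQ, OA) each parallel exactly one of the cases written out in the proof of Lemma~\ref{l:OGS_COGS}, with the extra $v$ (value name) layer handled just as a variable name is in the call-by-value proof. Once Lemma~\ref{l:OGS_COGS}'s analogue holds, combining it with its \COGS$\Rightarrow$\AOGS converse direction (trivial, since \AOGS-traces are \COGS-traces by the subset inclusion) and with the two already-available semantic coincidences gives all eight equalities.

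Concretely, the corollary then follows by this cycle: by Corollary~\ref{c:CBNtraces} and the determinacy of the \AOGS LTS, $\conf M \TEA \conf N \iff \conf M \wbA \conf N \iff \encoN M \TEN \encoN N \iff \encoN M \wbOPI \encoN N$; by Theorem~\ref{t:CBNfa_all}, $\conf M \TEp \conf N \iff \conf M \wbC \conf N \iff \encoN M \TEPI \encoN N \iff \encoN M \wbPI \encoN N$; and the bridge between the two blocks is given by: $\conf M \wbA \conf N \Rightarrow \conf M \wbC \conf N$ (call-by-name Lemma~\ref{l:OGS_COGS}), and $\conf M \TEp \conf N \Rightarrow \conf M \TEA \conf N$ because, by the call-by-name Lemma~\ref{l:traces_AOGS_COGS}, the \AOGS-traces of $\conf M$ are precisely its alternating \COGS-traces and initial configurations are passive so every trace starts with the IOQ, hence equal \COGS-trace sets force equal \AOGS-trace sets. (Alternatively one routes the bridge through \piI, exactly as the call-by-value Corollary~\ref{c:OGS_COGS} does, using that $\wbOPI$ on encodings of $\lambda$-terms agrees with $\wbPI$ on them — which in call-by-name holds for the same reason it does in call-by-value, namely the divergence-respecting character of the algebraic laws behind the optimisation lemmas.)

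The main obstacle I expect is the bisimulation-up-to-composition argument for the call-by-name analogue of Lemma~\ref{l:OGS_COGS}: getting the triggerable-subconfiguration bookkeeping exactly right in the presence of three name sorts (continuation, variable, value) and four kinds of questions, and in particular checking in the PTQ/OTQ cases (which have no call-by-value counterpart) that the freshly stored $\pmap{q}{(E,p)}$ or $\pmap{p}{M}$ component is again a triggerable subconfiguration of the $\wbA$-related derivatives $\FF''$, $\GG''$, so that the up-to-composition closure genuinely applies. The determinacy remark needed for Proposition~\ref{p:sing_bis}'s analogue — that transitions of singleton \COGS configurations are deterministic — must also be checked for the call-by-name LTS, but this is immediate from inspection of its rules (each singleton configuration offers at most one visible action, modulo $\alpha$-renaming of bound names). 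Everything else is a mechanical transcription of the call-by-value development.
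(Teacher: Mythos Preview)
Your proposal is correct and takes essentially the same approach as the paper. The paper's own argument for Corollary~\ref{c:OGS_COGS_cbn} is deliberately terse---it simply says to proceed as in call-by-value, importing the up-to-composition technique and proving the call-by-name analogue of Lemma~\ref{l:OGS_COGS}---and you have correctly unpacked that into the three ingredients (the \AOGS/\piIop correspondence via Corollary~\ref{c:CBNtraces}, the \COGS/\piI correspondence via Theorem~\ref{t:CBNfa_all}, and the bridge via the triggerable-subconfiguration bisimulation up-to composition), including the extra PTQ/OTQ cases specific to call-by-name.
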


The equivalence induced by the \piI (or  $\pi$-calculus) encoding of the call-by-name
$\lambda$-calculus is known to coincide with    that of the  
 L{\'e}vy-Longo Trees \cite{San95lazy}.  In the light of  
Corollary~\ref{c:OGS_COGS_cbn}, the result can thus be transported to OGS, both 
in its Alternating and its Concurrent variants, 
and both for traces and for bisimilarity.

Connections between game semantics and tree representation of $\lambda$-terms
have been previously explored, e.g.,  in~\cite{ong2004games}, 
where a game model of untyped call-by-name $\lambda$-calculus
is shown to correspond to Levy-Longo trees.

The connections that we have derived above, however, go beyond
L{\'e}vy-Longo Trees:
in particular,
the bisimilarities that have been established
between the observables (i.e., the dynamics) in the two models
set a tight relationship between them 
while allowing one to transfer results and techniques along the lines of what we have
shown for call-by-value.  

\section{Related and Future Work}
\label{s:cf}

\dsOLD{maybe this section could be merged with the conclusion ``Related  and future work'';
  also we could try to have a stronger structure (maybe subsections, not sure)}

Analogies between game semantics and $\pi$-calculus, as semantic frameworks in which
\emph{names} are central, have been pointed out from the very beginning of game
semantics. 
In the pioneering work~\cite{10.1145/224164.224189}, the authors 
obtain 
 a translation of PCF terms into the $\pi$-calculus from 
a game model of PCF by representing 
 \emph{strategies} 
(the denotation of PCF terms in the game model) as processes of the $\pi$-calculus.
The encoding bears similarities with Milner's, though they are syntactically rather
different (``it is clear that the two are conceptually quite unrelated'',
\cite{10.1145/224164.224189}).   The connection has been  developed in various papers,
e.g.,
\cite{DBLP:journals/entcs/Honda02,10.1016/S0304-3975-99-00039-0,10.5555/788020.788890,yoshida2020game,10.1145/3290340}.
Milner's encodings into the $\pi$-calculus have sometimes  been a source of inspiration  in
the definition of the game semantics models (e.g.,  transporting 
 the work 
\cite{10.1145/224164.224189}, in call-by-name, onto call-by-value
 \cite{10.1016/S0304-3975-99-00039-0}).  
In~\cite{yoshida2020game}, a  typed variant of  $\pi$-calculus, influenced by 
differential linear logic~\cite{ehrhard2018introduction},
is introduced as a metalanguage to represent game models.
\iflong 
 The approach is illustrated on a
a higher-order call-by-value language with shared memory concurrency, whose game model is build by first translated it
to this metalanguage, then using the game model of this metalanguage.
\fi
In~\cite{10.1109/LICS.2015.20}, games are defined using algebraic operations on sets of
traces, and used to  prove type soundness of a simply-typed call-by-value
$\lambda$-calculus with effects.
 Although the calculus of traces employed is not a $\pi$-calculus 
(e.g., being defined from operators and relations over trace sets rather than from
syntactic process constructs), there are similarities, which would be interesting to
investigate.

Usually in the above  papers
the source language is a form of $\lambda$-calculus, 
that is interpreted into game semantics,  
and the 
 $\pi$-calculus  (or dialects of it)  is used to
represent the resulting strategies and games. 
Another goal has been to shed light on typing
disciplines  for 
$\pi$-calculus processes, by transplanting conditions on strategies such as well-bracketing
and innocence into appropriate typings for the $\pi$-calculus (see, e.g., 
\cite{berger2001sequentiality,YBH11,HirschkoffPS21}).

In contrast with the above works,  where analogies between game semantics and
$\pi$-calculus are used to better understand one of the two models 
(i.e., explaining  game semantics in terms process interactions,  or enhancing type systems for
processes following structures in game semantics), 
in the  present paper we have carried out a direct comparison between 
the two models (precisely OGS and \piI). 
For this we have started from
 the (arguably natural)  representations of the $\lambda$-calculus 
into OGS and \piI  (the latter being Milner's encodings).
Our goal was
understanding the  relation between  the behaviours of the terms in the
two models, and   transferring techniques and results between them.

Technically, our work  builds on
\cite{DurierHS18,Durier20}, where a  detailed analysis of the behaviour of Milner's
call-by-value encoding is carried out using proof techniques for
$\pi$-calculus based on unique-solution of equations.
Various results in \cite{DurierHS18,Durier20} are essential to our own
 (the observation that Milner's
encodings should be interpreted in \piI rather than the full $\pi$-calculus is also from
\cite{DurierHS18,Durier20}).

An OGS 
for the call-by-name $\lambda\mu$-calculus is introduced
in~\cite{DBLP:conf/fossacs/Laird20}. It 
 is different, once restricted to the fragment without the $\mu$-binder, to the one we
 consider in Section~\ref{s:cbnNEW}.  It might correspond  
 to a different way of encoding  call-by-name into the $\pi$-calculus
where $\lambda$-abstractions are interpreted as input-prefixed processes, as e.g., in 
 Milner's original encoding~\cite{Mil92s}.


Bisimulations over  OGS terms, and tensor products of configurations,
were introduced in~\cite{10.1145/2603088.2603150},
in order to provide a framework to study compositionality properties of OGS.
In our case the compositionality result of OGS is derived from the correspondence with the
$\pi$-calculus.
In~\cite{sakayori2019categorical}, a correspondence between an i/o typed asynchronous $\pi$-calculus 
and a computational $\lambda$-calculus with channel communication is established, 
using a common categorical model (a compact closed Freyd category).
It would be interesting to see if our concurrent operational game model could be equipped with this categorical semantics.

\emph{Normal form} (or \emph{open}) bisimulations~\cite{lassentrees,Stovring-Lassen},
as game
semantics, 
manipulate open terms, and sometimes make use of
environments or stacks of evaluation contexts (see e.g., the recent
work~\cite{biernacki2019complete}, where  a fully abstract  normal-form 
bisimulation for a $\lambda$-calculus with store is obtained).  

Notice that contextual equivalence for such a language corresponds to complete well-bracketed 
trace equivalence.
It would be interesting to see if, once the reasoning on diverging configurations removed,
the enf-bisimulations obtained there could be seen as up-to composition bisimulations
over the OGS LTS \cutLMCSsecondRound{for the $\lambda\rho$-calculus presented in Appendix~\ref{a:LambdaRho},
that is fully abstract for this language}.

There are also works that build
game models directly for the $\pi$-calculus.
In~\cite{laird2005game}, a game models for the simply-typed asynchronous $\pi$-calculus 
is introduced, and proved fully abstract for may-equivalence.
In~\cite{10.1007/11944836_38}, this construction was extended to a model of Asynchronous Concurrent ML.
These models rely on a sequential representation of interactions as traces.
More recently, causal representation of game models of the $\pi$-calculus have been proposed.
In~\cite{sakayori2017truly}, a truly concurrent game model for the asynchronous $\pi$-calculus is introduced,
using a directed acyclic graph structure to represent interactions.
A game model for a synchronous $\pi$-calculus is defined in~\cite{eberhart2015intensionally}, and shown to
be fully abstract for fair testing equivalence. 
A correspondence between a synchronous $\pi$-calculus with session types and 
concurrent game semantics~\cite{winskel2017games}
is given in~\cite{10.1145/3290340}, relating 
games (represented as arenas) to  session types, and 
strategies (defined as coincident event structures) to processes.

We have exploited the full abstraction results between OGS and \piI
 to transport a few  up-to techniques for
  bisimulation  from
  \piI onto OGS.  However, in \piI there are various other such techniques, even a theory of
 bisimulation enhancements.  We would like to see which other techniques could be useful
 in OGS,  possibly transporting  the theory of enhancements itself. 
We would like also to study the possible game-semantics counterpart of recent work on a
 representation of functions as processes \cite{SakayoriS23} in which different tree structures of the
$\lambda$-calculus (Lévy-Longo  trees,  Böhm trees,  Böhm trees with infinite $\eta$) are obtained
by modifying the semantics of the certain special processes used as a parameter of the
representation
 (and related to the \emph{link} processes of
Section~\ref{s:encoLpi} and \ref{s:cbnNEW}, called \emph{wires} in~\cite{SakayoriS23}).

\bibliographystyle{alphaurl}
\bibliography{biblio}

\appendix
\newpage





\section{Auxiliary material for Section~\ref{subsec:seq-ts-pi}}
\label{a:behav}

\DSOCT{ moved the first half of this appendix onto the main text, and consequently changed
title of the appendix}

On
 the encoding of $\lambda$-terms, 
the ordinary
bisimilarity $\wbPI$ implies the `bisimilarity respecting divergence' $\wbPIDIV$. 
%
%
This
because, intuitively, a term  $\encoIa {M} p$ is divergent iff $M$ is so in the call-by-value
$\lambda$-calculus.    Formally, the result is proved by rephrasing the result about the 
correspondence between $\wbPI$ and Lassen's trees (Theorem~\ref{t:adrien}) in terms of 
$\wbPIDIV$ in place of $\wbPI$. Again, this property boils down to the fact that the uses of
$\expa$ in Lemmas~\ref{l:opt_sound}-\ref{l:opt_con} (on which also the characterisation in
terms of Lassen's trees is built) respect divergence.  
Thus, using also Lemma~\ref{l:bisDIV_op}, we derive the following result. 

\begin{theorem}
\label{t:wbPI-wbOPI}
For $M,N \in \Lambda $, we have:
$\encoIa {M} p \wbPI \encoIa {N} p$ iff 
$\encoIa {M} p \wbOPI \encoIa {N} p$. 
\end{theorem}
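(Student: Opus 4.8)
The statement to prove is Theorem~\ref{t:wbPI-wbOPI}: for pure $\lambda$-terms $M,N$, the encodings are $\wbPI$-bisimilar iff they are $\wbOPI$-bisimilar. One direction is trivial: since the \opLTS has \emph{fewer} transitions than the ordinary LTS (inputs are suppressed when outputs or $\tau$'s are available), any $\wbPI$-bisimulation restricted to $\opLTS$-transitions already satisfies the $\wbOPI$-clauses, so $\wbPI \subseteq \wbOPI$ on \emph{all} agents. (More carefully: a relation closed under ordinary transitions is a fortiori closed under the smaller set of \opLTS-transitions, because every \opLTS-transition is an ordinary transition and the matching move chosen in the ordinary game is also legal in the \opLTS; one only has to check the matching transition is itself reproducible in the \opLTS, which follows because visible \opLTS-transitions are ordinary ones and the $\Arcap{}$-closure goes through.) So the content is the forward direction $\wbPI \Rightarrow \wbOPI$ \emph{on encodings of $\lambda$-terms}.

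First I would reduce the forward direction to a divergence-sensitivity statement. By Lemma~\ref{l:bisDIV_op}, $\wbPIDIV$ implies $\wbOPI$; hence it suffices to show that on encodings of $\lambda$-terms the ordinary bisimilarity $\wbPI$ already implies $\wbPIDIV$, i.e.\ that $\wbPI$-equivalent encodings have the same divergence behaviour. The plan is to obtain this from the characterisation of $\wbPI$ on encodings in terms of Lassen's eager normal-form bisimilarity (Theorem~\ref{t:adrien}, from \cite{DurierHS18}). Concretely: rephrase the proof of Theorem~\ref{t:adrien} with $\wbPIDIV$ in place of $\wbPI$ throughout. That proof goes through the optimised encoding $\qencV$ and a handful of algebraic laws gathered in Lemmas~\ref{l:opt_sound}--\ref{l:opt_con}, all of which are either strong-bisimilarity laws or expansion laws of the shape $\res{a}(\inp a{\tilx}.P \mid \bout a\tilx.Q) \contr \res a(P\mid Q)$; crucially every such law \emph{respects divergence} — it neither creates nor destroys an infinite $\tau$-path. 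So the whole enf-bisimilarity argument can be replayed with the divergence-respecting refinements $\wbPIDIV$ and `expansion respecting divergence', yielding: $\encoIa M p \wbPIDIV \encoIa N p$ iff $M,N$ are enf-bisimilar. Combining this with Theorem~\ref{t:adrien} gives, for $\lambda$-terms, $\wbPI = \wbPIDIV$ on the encodings.

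The only remaining ingredient is the base-case observation that $\encoIa M p$ is a divergent \piI process (has an infinite $\tau$-path) iff $M$ diverges in call-by-value — this is what makes the divergence predicate `visible' to enf-bisimilarity (a diverging $M$ lands in the `both diverge' clause of Definition~\ref{d:enf}, and a converging $M$ reaches an enf whose encoding is $\contr$-below a process with no $\tau$-divergence, again by the divergence-respecting laws). With that, $M,N$ enf-bisimilar forces $\encoIa M p \Uparrow$ iff $\encoIa N p \Uparrow$, closing the argument. Finally: $\encoIa M p \wbPI \encoIa N p$ $\Rightarrow$ (by the strengthened Theorem~\ref{t:adrien}) $M,N$ enf-bisimilar $\Rightarrow$ $\encoIa M p \wbPIDIV \encoIa N p$ $\Rightarrow$ (Lemma~\ref{l:bisDIV_op}) $\encoIa M p \wbOPI \encoIa N p$; and the converse $\wbOPI \Rightarrow \wbPI$ holds on all agents as noted above.

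\textbf{Main obstacle.} The one genuinely non-routine point is justifying that the proof of Theorem~\ref{t:adrien} survives the substitution of $\wbPIDIV$ for $\wbPI$. This is not automatic: bisimulation-up-to and unique-solution techniques used in \cite{DurierHS18} are not obviously sound for divergence-sensitive equivalences, since $\wbPIDIV$ is not a congruence in the usual strong sense and up-to-expansion arguments can in principle be derailed by divergences hidden under contexts. The way I would handle this is to observe that in the encoding of \emph{pure} $\lambda$-terms (no $\tau^\omega$-style garbage, no unguarded replicated divergences appearing spuriously) every divergence of $\encoIa M p$ is `productive' — it tracks an actual infinite reduction of $M$ — so the algebraic laws invoked are all between processes with matching divergence status, and the up-to machinery can be reused verbatim because it only ever rewrites along these divergence-preserving laws. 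Making this precise (isolating the subclass of processes on which the laws of Lemmas~\ref{l:opt_sound}--\ref{l:opt_con} are divergence-respecting, and checking the up-to contexts stay inside it) is where the real work lies; everything else is bookkeeping. See Appendix~\ref{a:behav} for the detailed version of this argument.
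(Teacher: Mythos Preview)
Your argument for $\wbPI \Rightarrow \wbOPI$ via divergence-sensitivity --- strengthening Theorem~\ref{t:adrien} to $\wbPIDIV$ using that the algebraic laws behind Lemmas~\ref{l:opt_sound}--\ref{l:opt_con} respect divergence, then invoking Lemma~\ref{l:bisDIV_op} --- is precisely the paper's primary route, and your identification of the main obstacle (checking that the up-to machinery from \cite{DurierHS18} survives the passage to divergence-sensitive equivalences) is on target.

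The gap is in the other direction. You assert that one implication ``holds on all agents'' because the \opLTS has fewer transitions, but \emph{neither} inclusion is valid in general. For $\wbPI \not\subseteq \wbOPI$: the paper's own counterexample is $a.\nil \mid \tau^\omega \;\wbPI\; a.\nil$, yet in the \opLTS the divergence on the left blocks the input at $a$, so when $a.\nil$ performs the input the partner cannot match --- your parenthetical that the defender's matching move ``is also legal in the \opLTS'' is exactly what fails. For $\wbOPI \not\subseteq \wbPI$: take $\overline a.\nil \mid b.\nil$ versus $\overline a.b.\nil$; in the \opLTS the output at $a$ suppresses the input at $b$, so both must fire $\overline a$ first and then agree, but in the ordinary LTS the first process can do $b$ before $\overline a$ while the second cannot. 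So your closing line ``the converse $\wbOPI \Rightarrow \wbPI$ holds on all agents as noted above'' is false, and you give no argument specific to encodings. (Your write-up is also internally inconsistent about which direction is the ``trivial'' one.) In the paper this converse is obtained from the larger web of correspondences culminating in Corollary~\ref{c:OGS_COGS}, going through \AOGS and \COGS; if you want a self-contained argument, the cleanest is to observe via Lemma~\ref{l:opt_op} that on processes reachable from $\encoVa M p$ the two LTSs actually coincide --- every such process either has a single output/$\tau$ and no free input, or is input-reactive --- so the \opLTS restriction is vacuous on this class and any $\wbOPI$-bisimulation is already a $\wbPI$-bisimulation there.
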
 


\ds{The 7.1 should be rewritten for \opLTS -- at least put in in the appendix}

\ds{below i talk about ``singleton configurations'', not specifying Alternating or COGS,
  for instance, because the grammar of singleton conf is the same and also then the
  encoding. Not sure if we want to say something}

Below we report an alternative proof of the theorem above, however
always relying  on the variant of lemmas, such as  
 Lemma~\ref{l:opt_sound} and Lemmas~\ref{l:opt_stuck}-\ref{l:opt_con}, with 
$\expaOP $ in place of $\expa$. 

\begin{lemma}
\label{l:shapesT}
If $\FF$ is a singleton configuration, then  $\encoConI{\FF}$ is of three possible forms: 
\begin{enumerate}
\item 
$\encoIa {M} p$, for some $M,p$;
\item
$\inp qx .\encoIa {E[x]} p $, for some $E,x,q,p$;
\item 
$\encoIVa {V} y$, for some $V,y$ (that is, either  $!\inp y {x,q}.\encoIa M q$ if $V  =
\lambda x. M$, or $ \fwd y x$ if $V = x$).
\end{enumerate}  
\end{lemma}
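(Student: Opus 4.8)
The plan is to argue by case analysis on the structure of a singleton configuration $\FF$, following the definition of singleton configurations from Section~\ref{s:cogs} (which, we recall, is the same grammar whether we view $\FF$ as an \AOGS or a \COGS configuration, so that the encoding $\encoConI{\cdot}$ is also the same in both cases). By definition, a singleton configuration has exactly one P-name, so it is of one of three shapes: $\conf{\pmap{p}{M},\emptymap,\phi}$, or $\conf{\emptymap,\pmap{x}{V},\phi}$, or $\conf{\emptymap,\pmap{p}{(E,q)},\phi}$. Each of these three cases will produce one of the three claimed forms, simply by unfolding the definition of $\encoConI{\cdot}$ given in Figure~\ref{f:OGS_pi_cbv} together with the definition of the encoding of environments.

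First I would treat the active case $\FF = \conf{\pmap{p}{M},\emptymap,\phi}$. By definition $\encoConI{\conf{M,p,\gamma,\phi}} \defeq \encoIa{M}{p} \mid \encoEnvI{\gamma}$, and here $\gamma = \emptymap$, so $\encoEnvI{\emptymap} = \nil$ and the process is $\encoIa{M}{p} \mid \nil$, which (identifying agents with their normalised expressions, and up to the trivial structural law $P \mid \nil \equiv P$ that we use implicitly throughout) is $\encoIa{M}{p}$; this gives form (1). Next, the passive case $\FF = \conf{\emptymap,\pmap{p}{(E,q)},\phi}$: unfolding, $\encoConI{\FF} = \encoEnvI{\pmap{p}{(E,q)}\cdot\emptymap} = \inp p x . \encoIa{E[x]}{q} \mid \nil$, i.e.\ form (2) (modulo the renaming of bound/free names in the statement, where I wrote $q,p$ rather than $p,q$ — the statement's labelling is just a matter of which letter names the continuation in the pair). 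Finally, the passive case $\FF = \conf{\emptymap,\pmap{x}{V},\phi}$: here $\encoConI{\FF} = \encoEnvI{\pmap{x}{V}\cdot\emptymap} = \encoIVa{V}{x} \mid \nil$, i.e.\ form (3); and then I would observe that $\encoIV{\cdot}$ is defined by cases on $V$ in Figure~\ref{f:enc_internal}, giving $!\inp x {z,q}.\encoIa{M}{q}$ when $V = \lambda z.M$ and $\fwd x z$ when $V$ is a variable $z$, which is exactly the parenthetical remark in the statement.

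This is essentially a routine unfolding, so there is no real obstacle; the only point requiring a word of care is that the stated forms in items (1)--(3) are given only up to the choice of the free/bound names (following Remark~\ref{r:bn}, bound names are taken fresh) and up to the structural law $P \mid \nil \equiv P$ used silently when discarding the encoding of the empty environment. If one wishes to avoid appealing to $\equiv$, one can equally well just record the three forms with a trailing $\mid \nil$; since every behavioural relation considered in the paper (bisimilarity $\wbPI$, expansion $\expa$, trace equivalence $\TEPI$, and their \opLTS variants) is preserved by parallel composition with $\nil$, this makes no difference to any subsequent use of the lemma. I would phrase the final statement using $\equiv$ for readability, as is done elsewhere in the paper (e.g.\ in Section~\ref{ss:tp}).
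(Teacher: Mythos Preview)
Your proof is correct and is exactly the routine unfolding one would expect; the paper in fact states this lemma without proof, treating it as immediate from the definition of singleton configurations and the encoding in Figure~\ref{f:OGS_pi_cbv}.
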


\begin{lemma}
\label{l:deco}
Suppose  $a$ not free in $P_1 $ and $Q_1$. 
\begin{itemize}
\item
Let $P = P_1 | ! \inp a {\tilb}.P_2$, 
and $Q = Q_1 | ! \inp a {\tilb}.Q_2$, 
and  $ P \wbPI  Q$.
Then also 
  $ {P_1}  \wbPI  {Q_1}$.

\item 
Similarly, let $P = P_1 |  \inp a {\tilb}.P_2$, 
and $Q = Q_1 |  \inp a {\tilb}.Q_2$, 
and  $ {P} \wbPI  {Q}$.
Then also 
  $ {P_1}  \wbPI  {Q_1}$.
\end{itemize}
\end{lemma}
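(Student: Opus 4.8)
\textbf{Proof plan for Lemma~\ref{l:deco}.}

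The plan is to exploit the linear/replicated receptiveness discipline that governs the name $a$ in order to show that the input prefix at $a$ can be ``cancelled'' on both sides of the bisimulation without affecting the residual processes $P_1$ and $Q_1$. The key observation is that, since $a$ is not free in $P_1$ or $Q_1$, the only unguarded occurrence of $a$ in $P$ (respectively $Q$) is the explicit input prefix $!\inp a{\tilb}.P_2$ (resp.\ the linear input $\inp a{\tilb}.P_2$). So from the point of view of the environment, the process $P$ differs from $P_1$ only in that it additionally offers an input service at $a$. First I would build the candidate relation
\[
\R \defi \{\, (P_1, Q_1) \st P_1 | \iota_a[P_2] \wbPI Q_1 | \iota_a[Q_2], \ a \notin \fn{P_1}\cup\fn{Q_1} \,\}
\]
where $\iota_a[\cdot]$ stands for either $!\inp a{\tilb}.[\cdot]$ or $\inp a{\tilb}.[\cdot]$ (treating the two items of the lemma uniformly), and show that $\R$ is a bisimulation, possibly up-to the techniques of Section~\ref{ss:upto}.

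The core of the argument is a transition analysis. Suppose $P_1 \arrPI\mu P_1'$. Since $a$ does not occur free in $P_1$, the subject of $\mu$ is not $a$, hence this same transition is available from $P = P_1 | \iota_a[P_2]$, i.e.\ $P \arrPI\mu P_1' | \iota_a[P_2]$, using rule {\trans{parL}} (the added input component at $a$ cannot interfere: a communication {\trans{com}} between $P_1$ and $\iota_a[P_2]$ would require an unguarded output at $a$ in $P_1$, which is impossible as $a \notin \fn{P_1}$). By $P \wbPI Q$ there is a matching $Q \ArrPI{\widehat\mu} Q''$ with $P_1' | \iota_a[P_2] \wbPI Q''$. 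Here I need to show that $Q''$ has the shape $Q_1' | \iota_a[Q_2]$ with $Q_1 \ArrPI{\widehat\mu} Q_1'$. This is where the linear-receptiveness of $a$ is used crucially: in the replicated case, no transition of $Q$ can ever consume or duplicate the guard $\iota_a[Q_2]$ except an input at $a$, which {\trans{parL}}/{\trans{parR}} would preserve as a replicated residual; in the linear case, the type discipline guarantees that, as long as no output at $a$ is ever offered (which is the case here because $\mu$ does not mention $a$ and neither do $P_1, Q_1$), the prefix $\inp a{\tilb}.Q_2$ stays untouched. So every transition sequence from $Q$ not involving $a$ factors as a transition sequence from $Q_1$ with the guard $\iota_a[Q_2]$ carried along inertly; a small induction on the length of $Q \ArrPI{\widehat\mu} Q''$ establishes $Q'' \equiv Q_1' | \iota_a[Q_2]$ with $Q_1 \ArrPI{\widehat\mu} Q_1'$. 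Then $(P_1', Q_1') \in \R$, closing the bisimulation game.

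The main obstacle I anticipate is the bookkeeping around the inertness of the guard $\iota_a[Q_2]$ along arbitrary weak transition sequences: I must rule out that some intermediate state of $Q$ offers an output at $a$ (thereby enabling a {\trans{com}} that consumes the input and merges $Q_2$ into the computation). For the replicated item this is immediate since $a \notin \fn{Q_1}$ and replicated inputs are persistent, so even if $Q_1$ somehow exported $a$ (it cannot, $a$ being fresh) the service would remain; the real care is for the linear item, where I rely on the linear-receptive typing from \cite{San99rece} — exactly the discipline invoked in Section~\ref{s:fapiI} — to ensure the output capability of $a$ is nowhere exercised. An alternative, if one wants to avoid invoking the type system explicitly, is to strengthen the candidate relation to track syntactically that no component ever holds an unguarded output at $a$, which is preserved by all rules of the LTS in Figure~\ref{t:pii_ST}; this turns the obstacle into a routine invariance lemma. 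Once this invariant is in place, the symmetric direction (a challenge from $Q_1$) is handled identically, and soundness follows.
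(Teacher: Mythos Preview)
The paper states this lemma without proof, so there is no reference argument to compare against. Your bisimulation approach is correct: the candidate relation $\R$ you define is already a weak bisimulation (no up-to technique is needed), and the transition analysis you outline goes through.

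You are, however, over-engineering the linear item. Invoking linear-receptive typing, or strengthening $\R$ to track the absence of unguarded outputs at $a$, is unnecessary. The hypothesis $a \notin \fn{Q_1}$ alone suffices, uniformly for both items: along any weak transition of $Q = Q_1 \mid \iota_a[Q_2]$ matching an action $\mu$ whose subject is not $a$, every intermediate left component $Q_1^{(k)}$ still satisfies $a \notin \fn{Q_1^{(k)}}$ (silent steps do not enlarge the free-name set, and by Remark~\ref{r:bn} the bound names of the single visible step are fresh, hence distinct from $a$). Hence no output at $a$ is ever available, no communication with $\iota_a[Q_2]$ can take place, and since $\iota_a[Q_2]$ can only offer an input at $a$, it is inert throughout. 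The argument is identical in the replicated and the linear case; there is no additional obstacle in the latter.

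A shorter route, should you want one: since $a \notin \fn{P_1}$, one has $\res a\,(P_1 \mid \iota_a[P_2]) \wbPI P_1$ (the input at $a$ is dead under the restriction and cannot interact with $P_1$; a direct strong bisimulation witnesses this), and likewise on the $Q$ side. Congruence of $\wbPI$ with respect to restriction then gives $P_1 \wbPI \res a\, P \wbPI \res a\, Q \wbPI Q_1$.
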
 

\begin{theorem}
\label{t:opBIS_BIS}
If $\FF$ and $\GG$ are singleton configurations, then
$\encoConI  \FF \wbPI \encoConI  \GG $  implies 
 $\encoConI  \FF \wbOPI \encoConI  \GG $.
\end{theorem}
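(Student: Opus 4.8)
\textbf{Proof plan for Theorem~\ref{t:opBIS_BIS}.}
The plan is to show that the relation
\[
\S \;\defi\; \{ (\encoConI\FF,\encoConI\GG) \st \FF,\GG \text{ singleton with } \encoConI\FF \wbPI \encoConI\GG \}
\]
is a bisimulation in the \opLTS; since $\wbOPI$ is the largest such relation, this gives the claim.  The key observation is Lemma~\ref{l:shapesT}: an encoded singleton configuration is either $\encoIa M p$, or an input-prefixed process $\inp qx.\encoIa{E[x]}p$, or a value-encoding $\encoIVa V y$.  In all three cases the process is \emph{input-reactive}: a term $\encoIa M p$ either diverges or reaches a value or a stuck call, and in none of these can it offer an unguarded \emph{output} along a free name other than through its own continuation, which is bound for an Opponent-driven challenge; the second and third shapes are literally guarded by an input.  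Hence, for singleton configurations, the \opLTS transitions coincide with the ordinary-LTS transitions except that the \opLTS simply re-exposes the same transitions.  The only subtlety is that $\wbPI$ does not see divergence, whereas $\wbOPI$ does: two ordinary-bisimilar processes might differ on whether an input is reachable after a divergence.

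First I would record that, by Theorem~\ref{t:wbPI-wbOPI} (equivalently, by Lemma~\ref{l:bisDIV_op} combined with the fact that on encodings of $\lambda$-terms $\wbPI$ implies $\wbPIDIV$), on the encoding $\encoIa M p$ of a $\lambda$-term, $\wbPI$ already respects divergence.  The crux is to lift this from closed $\lambda$-term encodings to the encodings of arbitrary singleton configurations.  For shapes (1) and (3) this is immediate: shape (1) is exactly $\encoIa M p$, and shape (3) is $\encoIVa V y$, whose divergence behaviour is trivial (a replicated input, or a link, neither of which diverges) and which moreover becomes $\encoIa{Vy'}q$-shaped after the obvious Opponent action, reducing to shape (1).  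For shape (2), $\inp qx.\encoIa{E[x]}p$, the process cannot diverge until the input at $q$ is consumed, after which it becomes $\encoIa{E[x]}p$, again of shape (1).  So in every case the divergence question is reduced, after at most one visible transition, to the divergence of a $\lambda$-term encoding, where $\wbPI = \wbPIDIV$.

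Next I would run the bisimulation-game argument for $\S$ directly.  Take $(\encoConI\FF,\encoConI\GG)\in\S$ and a challenge $\encoConI\FF \arrN\mu P$.  If $\mu$ is an output or $\tau$, this is also an ordinary transition $\encoConI\FF \arrPI\mu P$, so by $\encoConI\FF\wbPI\encoConI\GG$ there is $Q$ with $\encoConI\GG\ArrPI\mu Q$ and $P\wbPI Q$; since $\encoConI\GG$ is input-reactive, the weak transition $\ArrPI\mu$ uses only outputs and $\tau$'s until the $\mu$-step (and, for $\mu=\tau$, throughout), hence it is also a weak \opLTS transition $\encoConI\GG\ArrN\mu Q$.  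The derivatives $P,Q$ are again of the form $\encoConI{\FF'},\encoConI{\GG'}$ for singleton configurations (by Lemma~\ref{l:shapesT} and the operational correspondence Lemma~\ref{l:opcorrCON2_w}, the shapes are preserved up to internal reduction and up to $\exn$, and $\exn\subseteq\wbPI$), so $(P,Q)\in\S$ up to composing with $\wbPI$; closing $\S$ under $\wbPI$ on both sides — i.e.\ working with $\wbPI\S\wbPI$, which is still contained in the candidate — handles this.  If $\mu$ is an \emph{input}, then the \opLTS rule forces $\encoConI\FF$ to be input-reactive at that point, which by Lemma~\ref{l:shapesT} means $\encoConI\FF$ is of shape (2) or (3) (or shape (1) with $M$ a stuck call / diverging-but-not-yet — but here the divergence-respecting property kicks in): the matching weak \opLTS transition from $\encoConI\GG$ is obtained from the ordinary matching transition $\encoConI\GG\ArrPI\mu Q$, which we must argue is a legal \opLTS transition, i.e.\ $\encoConI\GG$ can only perform $\tau$'s leading to an input-reactive state before firing the input — and this is where divergence-respecting is needed, ruling out a hidden divergence in $\encoConI\GG$ that would block the input in the \opLTS.

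\textbf{Main obstacle.}  The delicate point, and the one I would spend most care on, is precisely the input case: showing that a $\wbPI$-matching (weak) input transition of $\encoConI\GG$ is realisable in the \opLTS.  This requires (a) that $\encoConI\GG$ has no divergence that would pre-empt the input, which follows from divergence-respecting once we have argued that $\wbPI$ implies $\wbPIDIV$ for \emph{configuration} encodings of the three shapes — the reduction to $\lambda$-term encodings sketched above — and (b) that along the $\tau$-prefix of the weak transition the process stays input-reactive, which is the invariance of input-reactivity under $\tau$-reduction for these shapes (shape (2) cannot $\tau$-step at all before the input; shape (1) with a stuck-call or value normal form is stable; shape (3) likewise).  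Lemmas~\ref{l:deco} and~\ref{l:opt_sound}/\ref{l:opcorrCON2_w} supply the bookkeeping needed to keep the derivatives in the right shape (decomposing $\encoConI\FF$ as the encoding of a running term in parallel with the encoding of an environment, and peeling off the guarded components).  Once these two points are in place, $\S$ (closed under $\wbPI$) is a bisimulation in the \opLTS, and the theorem follows.
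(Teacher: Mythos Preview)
Your candidate relation $\S$ is not closed under transitions, and the closure you propose (up to $\wbPI$ on both sides) does not repair this.  The failure is at the Player-Question case: if $\FF=\conf{\pmap p{E[xV]},\emptymap,\phi}$ then $\encoConI\FF\arrPI{\bout x{y,q}}P$ with $P\contr\encoConI{\FF'}$ for $\FF'=\conf{\pmap y V\cdot\pmap q{(E,p)},\phi'}$, a configuration with \emph{two} $P$-names.  The process $\encoConI{\FF'}$ is a parallel composition $\encoIVa V y \mid \inp q z.\encoIa{E[z]}p$ of two singleton encodings, not itself a singleton encoding, and it is not $\wbPI$-equivalent to any singleton encoding either.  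So your assertion that ``the derivatives $P,Q$ are again of the form $\encoConI{\FF'},\encoConI{\GG'}$ for singleton configurations'' is simply false; neither Lemma~\ref{l:shapesT} nor Lemma~\ref{l:opcorrCON2_w} says this.  Closing $\S$ under $\wbPI$ still only reaches processes bisimilar to singleton encodings, not to products of them.

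The paper fixes exactly this by taking as candidate the relation $\R$ of all pairs $(P_1\mid\dots\mid P_n,\;Q_1\mid\dots\mid Q_n)$ where each $P_i,Q_i$ is a singleton encoding with $P_i\wbPI Q_i$, at most one component is active, and the input names of passive components are fresh for the other components.  After a PQ output one lands in $\R$ with $n$ increased by one; Lemma~\ref{l:deco} is then used to split the single $\wbPI$ on the product into componentwise $\wbPI$'s.  You mention Lemma~\ref{l:deco} at the end, which is the right tool, but you need it together with an enlarged candidate (or, equivalently, a genuine up-to-context argument in the \opLTS), not merely up-to-$\wbPI$.  A secondary issue: your opening claim that all three shapes of Lemma~\ref{l:shapesT} are input-reactive is wrong---shape (1), $\encoIa M p$, always has a $\tau$ or output transition (Lemma~\ref{l:opt_op}) and is never input-reactive; only shapes (2) and (3) are.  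Your later case analysis implicitly corrects this, but the framing is misleading.
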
 

\begin{proof}
Let $\R$ be the relations with all pairs $(P,Q)$ where $P$ and $Q$ are of the
form 
\[
\begin{array}{rcl}
 P &=&   
P_1  | \ldots | P_n
\\
 Q &=& 
Q_1  | \ldots | Q_n
\end{array}
 \] 
for some $n$, where all $P_i,Q_i$ are 
encodings of singleton configurations and  for all $i, $ we have $P_i \wbPI Q_i$.
Moreover:
\begin{itemize}
\item
 there is at most one $i$ for which  $P_i,Q_i$ are encodings of  active
configurations;
\item  in any $P_j$ (resp.\ $Q_j$) that is the encoding of a passive
configuration, the name of the initial input does not appear free in any other component
$P_{j'}$ (resp.\ $Q_{j'}$) for $j'\neq j$. 
\end{itemize} 

A consequence of the two conditions above is that two distinct components $P_i$ and $P_j$
cannot interact. That is, if $P \arr \mu P'$,  then there is $i$ such
that $P_i \arr \mu P'_i$ and $P' = P''_1| \ldots | P''_n$ where $P''_j = P_j $ if $j\neq i$
and $P''_i = P'_i $. And similarly if   $P \arrN \mu P'$.

We  show that  $\R$ is a  $\wbOPI$-bisimulation  up-to $\expaOP$ (the bisimulation
up-to expansion is sound for bisimilarity in any LTS). 
In the proof below,  Lemmas~\ref{l:opt_op}  and 
\ref{l:opt_con}
actually refer to the versions of the lemmas mentioned above, with $\expaOP$
in place of $\expa$.

Suppose $P \arrN \mu P'$.  The action orginates from some $P_i$ alone, say $P_i \arrN\mu P_i'$.
If $\mu$ is a $\tau $ action, then also $P_i \arr\mu P_i'$; 
and since $P_i \wbPI Q_i$, we have $Q_i \Longrightarrow  Q_i'$ with $P_i' \wbPI Q_i$.  
Moreover, also $Q_i \RaN  Q_i'$ --- and the corresponding transition from $Q$.

Using Lemma~\ref{l:opt_op} we infer that, up-to expansion, the
derivative processes $P_i'$ and $Q'_i$ can be rewritten into processes
that fit the definition of $\R$. 

For the case when $\mu$ is an output, one reasons similarly, possibly
also using Lemma~\ref{l:deco} when $\mu $ is of the form $\bout x {z,q}$.  

Finally, if $P \arrN \mu P'$ and $\mu$ is an input, then $P $ is input
reactive. From the conditions in the definition of $\R$ we infer that
$Q$ is input reactive too.  Then we can conclude, reasoning as in the
previous cases, but this time  using Lemma~\ref{l:opt_con} with
$\gamma   $ a singleton.  
\end{proof}

Theorem~\ref{t:wbPI-wbOPI}  is then a corollary of
Theorem~\ref{t:opBIS_BIS}.



\section{Auxiliary results for Section~\ref{s:encoGames}}
\label{a:oc}

We present results that are needed to establish the operational correspondence between 
OGS and \piI, studied in Section~\ref{s:encoGames}.

We begin discussing an optimisation
$\qencV$ of
the  encoding of 
 call-by-value $\lambda$-calculus.
Following 
 Durier et al.'s 
\cite{DurierHS18}, 
we sometimes use $\qencV$  to simplify proofs. 
We report the 
full definition of   $\qencV$;
for  convenience, we also list the 
 definition of $\qencV$  on 
 OGS environments and configurations, thought  is the same as 
that for the initial encoding $\cbvSymb$~--- just replacing $\cbvSymb$ with 
 $\qencV$ in Figure~\ref{f:OGS_pi_cbv}).

The encoding 
 $\qencV$
 is 
obtained  from the  initial one 
$\cbvSymb$ 
by inlining the encoding and performing a few (deterministic) reductions,
at the price of 
 a more complex definition.  Precisely, in the encoding of application 
 some of the initial
communications are removed, including those with which a term signals to have become a value. 
To achieve this,  the encoding of an 
application  goes by a case analysis
on the occurrences of values in the subterms.


\begin{figure*}
\Mybar 

Encoding of  $\lambda$-terms:

\[
\begin{array}{rcll}
\equaDS{}
{  \encoV {x\val}} \defi{ \bind p \outb x {z,q}.(\encoVVa \val z|\fwd q p)}
\\
\equaDS{}
{ \encoV {(\lambda  x.  M)\val}} \defi{ \bind p \new {y,w} (\encoVVa {\lambda  x. M} y | \encoVVa {\val} w
  | \outb y {w',r'}.(\fwd {w'} w|\fwd {r'} p))} 
\\
\equaDS{{(*)}}
{ \encoV {\val M}} \defi{ \bind p \new y (\encoVVa \val y | \new r (\encoVa M r | \inp r w
  .\outb y {w',r'}.(\fwd {w'} w|\fwd {r'} p)))}
\\
\equaDS{{(*)}}
{ \encoV {M\val} } \defi {\bind p \new q (\encoVa M q | \inp q y. \new w (\encoVVa \val w |
  \outb y {w',r'}.(\fwd {w'} w|\fwd {r'} p)))}
\\
\equaDS{{(**)}}
{ \encoV {MN}} \defi {\bind p \new q (\encoVa M q | \inp q y. \new r (\encoVa N r | \inp r w
  .\outb y {w',r'}.(\fwd {w'} w|\fwd {r'} p)))}
\\
\equaDS{}
{ \encoV {\val}}\defi{ \bind p \outb p y . \encoVVa\val y}
\end{array} 
  \]
where
in the  rules marked $(*)$, $M$ is  not   a value,  and  
in the rule marked $(**)$   $M$ and $N$  are  not values;
and where 
 $ \encoVV\val$ is thus defined : \hfill $ $ 
\[
\begin{array}{rcll}
\equaDS{}{ \encoVV {\lambda  x. M}}\defi{ \bind y  !\inp y {x,q}.\encoVa M q  \hskip 2cm}\\
\equaDS{}{ \encoVV x }\defi {\bind y \fwd y x}
\end{array}
\]
$ $ \dotfill $ $

Encoding of environments: 
\[ 
\begin{array}{rcl}
\encoEnvV{\pmap{y}{V}\cdot\gamma'  } & \defi & \encoVVa {V} y | \encoEnvV{\gamma'  }
\\[\mypt] 
\encoEnvV{\pmap{q}{(E,p)}\cdot\gamma'  } & \defi & 
\inp qx .\encoVa {E[x]} p  | \encoEnvV{\gamma'  } \\[\mypt]
\encoEnvV{\emptymap  } & \defi & \nil
\end{array} 
\]
$ $ \dotfill $ $ 

Encoding of configurations:

\[ 
\begin{array}{rcl}
\encoConV{ \conf{M,p,\gamma}}
& \defi& 
 \pairP {
 \encoVa {M} p | \encoEnvV \gamma }
\\[\mypt]
 \encoConV{ \conf{\gamma}} & \defi&  
 \pairP {
\encoEnvV \gamma}
\\[\mypt]
 \encoConV{ \conf{M }}
& \defi&  \pairP { \encoV   M }
\end{array}
 \] 
\caption{The optimised  encoding}
\label{f:opt_encod}
\Mybar 
\end{figure*}

\DSOCTb

We begin with  a few  results from \cite{DurierHS18,Durier20}
that are needed to reason about the  
optimised encoding $\qencV$, in addition to Lemma~\ref{l:opt_sound}
reported in the main text. 

\DSOCTe

\begin{lemma} 
\label{l:opt_stuck}
We have: 
$$
\begin{array}{rcl}
\encoVa {\evctxt[x\val]} p\exn \outb x {z,q}.(\encoVVa \val z|\inp q
y.\encoVa {\evctxt[y]}p).
\end{array}
 $$ 
\end{lemma}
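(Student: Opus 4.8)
The statement to prove is a behavioural law about the optimised encoding $\qencV$ applied to a \emph{stuck term} $\evctxt[x\val]$: namely
\[
\encoVa {\evctxt[x\val]} p\;\exn\; \outb x {z,q}.(\encoVVa \val z\mid\inp q y.\encoVa {\evctxt[y]}p).
\]
My plan is to proceed by induction on the structure of the evaluation context $\evctxt$, unfolding the definition of $\qencV$ on applications (Figure~\ref{f:opt_encod}), and at each step simplifying via the algebraic laws for $\piI$ that underlie expansion $\exn$ — in particular the law
\[
{\res{a} ( \inp a {\tilx}.P \mid \bout a \tilx . Q)} \;\exn\; {\res{a} (P \mid Q)},
\]
together with laws that let us push a replicated/link process under a prefix. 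Since $\exn$ is a precongruence this induction is sound: each inductive use can be performed inside an arbitrary static context.

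\textbf{Base case.} When $\evctxt=\contexthole$, the term is simply $x\val$, and the encoding is given by the first clause of Figure~\ref{f:opt_encod}: $\encoVa {x\val} p = \bind p\,\outb x {z,q}.(\encoVVa \val z\mid\fwd q p)$. Instantiating the continuation parameter with $p$ (rule {\trans{abs}}), this is $\outb x {z,q}.(\encoVVa \val z\mid\fwd q p)$, and I need to match it against $\outb x {z,q}.(\encoVVa \val z\mid\inp q y.\encoVa {y}p)$. Since $\encoVa y p = \bout p w.\fwd w y$ by the last clause, the term $\inp q y.\encoVa{y}p = \inp q y.\bout p w.\fwd w y$ is exactly (up to $\alpha$-conversion and the definition of the continuation-link constant $\fwd q p$) the link $\fwd q p$; here I would invoke the characterisation of the link $\fwd{}{}$ on continuation names, $\fwd q p \defeq \inp q x.\bout p y.\fwd y x$, so the two sides are in fact $\alpha$-equal and hence related by $\exn$ (indeed by strong bisimilarity).

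\textbf{Inductive step.} The interesting case is $\evctxt = \evctxt' M$ — i.e.\ the hole is in function position of an application whose argument $M$ need not be a value. (The case $\evctxt = \val' \evctxt'$, where the function is already a value $\val'$, is handled by the clause $\encoV{\val M}$ of Figure~\ref{f:opt_encod}, and is analogous but uses the symmetric shape.) For $\evctxt'M$ with $\evctxt'[x\val]$ not a value, the clause marked $(*)$ gives
\[
\encoVa{\evctxt'[x\val]\,M}p \;\defi\; \res q\big(\encoVa{\evctxt'[x\val]}q \mid \inp q y.\res r(\encoVa M r \mid \inp r w.\outb y {w',r'}.(\fwd {w'}w\mid\fwd {r'}p))\big).
\]
Now I apply the induction hypothesis to $\encoVa{\evctxt'[x\val]}q$, rewriting it up to $\exn$ as $\outb x {z,q_0}.(\encoVVa\val z\mid \inp{q_0} y_0.\encoVa{\evctxt'[y_0]}q)$. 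Substituting this into the static context $\res q(\contexthole \mid \dots)$, the leading output $\outb x{z,q_0}$ floats to the top (restriction and parallel composition commute with output of fresh names in $\piI$), leaving under that prefix the process $\encoVVa\val z$ in parallel with $\res q(\inp{q_0}y_0.\encoVa{\evctxt'[y_0]}q \mid \inp q y.\res r(\dots))$. The remaining work is to show that this residual, under the input prefix $\inp{q_0}y_0$, expands to $\inp{q_0}y.\encoVa{(\evctxt'[\cdot]M)[y]}p = \inp{q_0}y.\encoVa{\evctxt'[y]M}p$: after firing $\inp{q_0}y_0$ we recover exactly the encoding-of-application shape $\res q(\encoVa{\evctxt'[y]}q \mid \inp q y.\res r(\encoVa M r\mid\dots))$, which is $\encoVa{\evctxt'[y]M}p$ by the clause $(*)$ again. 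So after the prefix the two sides are syntactically identical; before the prefix they are related by $\exn$ via one use of $\res q$-garbage-collection (the $\inp q y$ end is consumed exactly once).

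\textbf{Main obstacle.} The delicate point is bookkeeping of fresh names and of the continuation-link occurrences: each inductive rewriting introduces a fresh continuation name ($q_0$ in my notation) and the statement of the lemma fixes the output object to be $z,q$; I must check that the renaming $q\mapsto q_0$ and the corresponding adjustment of the trailing $\inp q y.\encoVa{\evctxt[y]}p$ are consistent, using the $\piI$ freshness convention (Remark~\ref{r:bn}) and $\alpha$-equivalence freely. The other, more conceptual subtlety is that the clause of Figure~\ref{f:opt_encod} applicable to $\encoVa{\evctxt}p$ depends on whether the subterms are values — but $\evctxt[x\val]$ with $\evctxt$ a nontrivial evaluation context is never a value and never has the application redex at top level, so the "non-value" clauses $(*)$ always apply, which is what keeps the induction uniform. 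Beyond that, every remaining step is an instance of the standard algebraic laws (strong bisimilarity laws plus the one $\exn$-law displayed above) already used for Lemma~\ref{l:opt_sound}, so no new $\pi$-calculus reasoning is required; the proof is, as in \cite{DurierHS18,Durier20}, pure algebraic manipulation.
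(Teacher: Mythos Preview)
Your approach---induction on the structure of the evaluation context $\evctxt$---is exactly what the paper does (it simply cites \cite{DurierHS18,Durier20} and says the proof is by induction on $\evctxt$), and your base case is handled correctly.

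There is, however, a small but real gap in your inductive step. You claim that after firing the input $q_0(y)$ the two sides are ``syntactically identical'' because the residual has the shape $\res q(\encoVa{\evctxt'[y]}{q}\mid q(y').\ldots)$, which you identify with $\encoVa{\evctxt'[y]M}{p}$ ``by the clause $(*)$ again''. But the clause of Figure~\ref{f:opt_encod} that computes $\encoVa{\evctxt'[y]M}{p}$ depends on whether $\evctxt'[y]$ and $M$ are values, and in the boundary case $\evctxt'=\contexthole$ the term $\evctxt'[y]=y$ \emph{is} a value, so the $\encoV{VM}$ or $\encoV{xV}$ clause applies instead and the two sides are not syntactically equal. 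Similarly, you never split on whether $M$ is a value, yet this determines whether the $(*)$ or $(**)$ clause is used. In each of these sub-cases the desired $\exn$ relation still holds, but it requires one or two extra deterministic $\tau$-steps (a communication on the continuation name $q$ followed by a link contraction), i.e.\ the same algebraic laws you already invoke for Lemma~\ref{l:opt_sound}. So the fix is routine, but the case analysis genuinely has to be carried out; your uniformity claim (``the non-value clauses always apply'') is false as stated.
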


The proof \cite{DurierHS18,Durier20}
goes by induction on the evaluation context \evctxt.

The following lemma \cite{DurierHS18,Durier20}
uses Lemma~\ref{l:opt_stuck}  
to establish the shape of the possible transitions that a term 
$\encoVa M p $ can perform.

\begin{lemma}
\label{l:opt_op}
For any
  $M\in\Lao$ and  $p$, process  $\encoVa M p$ has exactly
  one immediate transition and exactly one of the following clauses holds:
\begin{enumerate}
\item $\encoVa M p\arr{\outb py}P$  
and $M$ is a value, with $P=\encoVVa M y$; 
\item $\encoVa M p\arr{\outb x {z,q}} P$  and 
 $M $ is of the form $\evctxt[x\val]$, for some $E,x,\val$,  with 
$$P\exn \encoVVa \val z|\inp q y.\encoVa {\evctxt[y]} p ,$$
and moreover $z$ is not free in $\inp q y.\encoVa {\evctxt[y]} p$
whereas $q$  is not free in $\encoVVa \val z$;
\item $\encoVa M p\arr\tau P$ and there is $N$  with $M\red N$ and
  $P\exn \encoVa N p$. 
\end{enumerate}
\end{lemma}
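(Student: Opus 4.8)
\textbf{Proof plan for Lemma~\ref{l:opt_op}.}

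The plan is to prove the lemma by structural induction on $M\in\Lao$, using the optimised encoding $\qencV$ and the definition displayed in Figure~\ref{f:opt_encod}. The statement has two parts: that $\encoVa M p$ has \emph{exactly one} immediate transition (determinacy), and that this transition falls into exactly one of the three shapes. Since call-by-value $\lambda$-terms are either values or decompose uniquely as $E[xV]$ (a stuck call) or admit a $\redv$-reduction step, it is natural to match the three clauses against precisely these three syntactic possibilities; the main work is checking that the encoding tracks this trichotomy faithfully and that no spurious transitions appear.

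First I would treat the base/value cases. If $M$ is a value $V$, then by definition $\encoVa M p = \bout p y.\encoVVa V y$, whose only transition is the output $\bout p y$ to $\encoVVa V y$; this is clause (1), and clauses (2),(3) are excluded since $M$ is a value and has no $\redv$-step. If $M = x$ (a variable, which is a value), this is subsumed. Next I would handle applications $MN$ by the same case split that governs the definition of $\qencV$: (a) if $M$ is not a value and $N$ is arbitrary (or $M=V'N$ with $N$ not a value, etc.), the encoding is $\bind p \new q(\encoVa M q \mid \inp q y.\,\cdots)$ --- here the leading input $\inp q y$ is blocked (it is under restriction $\new q$ and $q$ does not occur free elsewhere in an output), so the only transitions of $\encoVa{MN}p$ come from $\encoVa M q$, and I apply the induction hypothesis to $M$. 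If $\encoVa M q$ does an output $\bout q y$, then $M$ is a value, contradicting the case hypothesis that $M$ is not a value --- so in fact $\encoVa M q$ either does a $\tau$ (giving clause (3) via $MN\redv M'N$, using $\encoVa M q \exn \encoVa{M'}q$ and congruence of $\exn$) or an output $\bout{x}{z,r}$ with $M$ of the form $E'[xV]$ (giving clause (2), since then $MN = E[xV]$ with $E = E'N$, and I compute the residual using Lemma~\ref{l:opt_stuck} applied to the whole context $E$). (b) If $M$ is a value $V$ and the subterm positions force one of the inlined clauses $\encoV{xV}$, $\encoV{(\lambda x.M)V}$, $\encoV{VM}$ with $M$ not a value, I inspect each: $\encoV{xV}$ does exactly the output $\bout x{z,q}$ --- clause (2) with $E=[\cdot]$, and the residual matches Lemma~\ref{l:opt_stuck}; $\encoV{(\lambda x.M)V}$ does exactly the $\tau$-step corresponding to $\beta_v$, and then $\exn$-equals $\encoVa{M\{V/x\}}p$ --- clause (3); $\encoV{VM}$ (with $M$ not a value) has its leading activity inside $\encoVa M r$ under a restriction, so again by induction on $M$ one gets clause (2) or (3). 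The residual shapes in clause (2) are read off using Lemma~\ref{l:opt_stuck} together with simple algebraic laws ($\exn$ absorbing the administrative reductions), and the freshness side-conditions on $z,q$ follow because these are bound names introduced by the encoding.

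The main obstacle I anticipate is \emph{determinacy} --- ruling out a second immediate transition in the application cases. The encoding of $MN$ is a parallel composition under restrictions, and a priori one must exclude (i) an internal communication on the restricted name $q$ or $r$, and (ii) a transition of the guarded continuation $\inp q y.(\cdots)$ or $\inp r w.(\cdots)$. Point (ii) is immediate since those inputs are prefixed. Point (i) requires knowing that while $\encoVa M q$ has a pending output available, that output is on a \emph{free} name of $M$ (clause (2)) or there is no output at all (clause (3)); only when $M$ is a value does $\encoVa M q$ offer an output on $q$ itself, which is exactly the case handled by the inlined clauses where $q$ has already been contracted away. So the structural case analysis that defines $\qencV$ is precisely what guarantees no race on the restricted names. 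I would make this rigorous by strengthening the induction hypothesis slightly: state alongside the three clauses that in clause (1) the output subject $p$ is the encoding parameter, and in clause (2) the output subject is a free variable of $M$ (not a continuation name) --- this auxiliary invariant is what feeds the determinacy argument at the application step. Everything else is routine checking of $\pi$-calculus transitions against the displayed encoding clauses, leaning on Lemma~\ref{l:opt_stuck}, Lemma~\ref{l:opt_sound}, and the precongruence properties of $\exn$ recalled in Section~\ref{ss:upto}.
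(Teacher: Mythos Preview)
Your plan is correct and aligns with the paper's approach: the paper does not spell out a proof but attributes the result to \cite{DurierHS18,Durier20} and notes that it ``uses Lemma~\ref{l:opt_stuck}''; your structural induction on $M$ following the case split of the optimised encoding, combined with Lemma~\ref{l:opt_stuck} to extract the residual shape in clause~(2), is exactly that argument.  One small remark: the final appeal to Lemma~\ref{l:opt_sound} is not needed (that lemma compares $\cbvSymb$ with $\qencV$, whereas here you work entirely inside $\qencV$); what you actually need for the clause-(3) residuals, e.g.\ after the $\tau$ on $y$ in $\encoV{(\lambda x.M)V}$, are the link/forwarder laws and garbage collection of dead replications, all of which are valid up to $\exn$.
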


We now move to reasoning about the behaviour of the encoding of configurations. 
Two key lemmas are the following ones; they are derived from Lemmas~\ref{l:opt_stuck} and
\ref{l:opt_op}.

\begin{lemma} 
\label{l:opt_con}
If  $\encoConV{\gamma } \arr \mu P$ and $\mu$ is an input action, then we have three possibilities:
\begin{enumerate}

\item $\mu $ is an input $y(x,q)$ and $\gamma 
= \pmap{y}{\lambda x . M}\cdot\gamma' $, and $P \contr 
\encoVa {M} q  | \encoEnvV{\gamma  } $; 

\item $\mu $ is an input $y(x,q)$ and $\gamma 
= \pmap{y}{z}\cdot\gamma' $, and $P \contr 
\encoVa {zx} q  | \encoEnvV{\gamma  } $;

\item $\mu $ is an input $q(x)$ and $\gamma 
= \pmap{q}{(E,p)}\cdot\gamma' $, and $P\contr 
\encoVa {E[x]} p  | \encoEnvV{\gamma'  } $. 
\end{enumerate} 
\end{lemma}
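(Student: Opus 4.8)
The plan is to prove Lemma~\ref{l:opt_con} by structural induction on the environment $\gamma$, exploiting the inductive shape of the optimised encoding $\encoEnvV{\gamma}$ given in Figure~\ref{f:opt_encod}. Recall that $\encoEnvV{\gamma}$ is a parallel composition of the encodings of the individual mappings of $\gamma$: a mapping $\pmap{y}{V}$ contributes $\encoVVa V y$, which for $V = \lambda x. M$ is $!\inp y{x,q}.\encoVa M q$ and for $V = z$ is $\fwd y z$; a mapping $\pmap{q}{(E,p)}$ contributes $\inp qx.\encoVa{E[x]}p$; and the empty environment contributes $\nil$. Since $\nil$ has no transitions, and parallel composition of $\alpha$-inequivalent encodings of distinct mappings cannot produce a communication (all subjects in subject-input position in $\encoEnvV{\gamma}$ are pairwise distinct P-names, and since $\gamma$ is a valid configuration its encoding uses only inputs at these names), an input transition $\encoConV\gamma \arr\mu P$ must originate from exactly one component.

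Concretely, I would argue: an input action $\mu$ has some subject $a$; since $\encoEnvV{\gamma}$ offers inputs precisely at the names in $\dom\gamma$, we have $a \in \dom\gamma$, so we can write $\gamma = \pmap{a}{O}\cdot\gamma'$ for the corresponding mapping $O$ (value or context pair). Then by rule {\trans{parL}} (up to structural rearrangement of the parallel composition, which is sound for behavioural equivalence in \piI as noted after Figure~\ref{f:OGS_pi_cbv}), the transition comes from the component encoding $\pmap{a}{O}$, leaving $\encoEnvV{\gamma'}$ untouched in parallel. The three cases of the lemma then correspond exactly to the three possible forms of $O$:
\begin{itemize}
\item $O = \lambda x. M$: the component is $!\inp y{x,q}.\encoVa M q$ with $a = y$; by rule {\trans{rep}} it fires $\inp y{x,q}$, leaving $\encoVa M q \mid {!\inp y{x,q}.\encoVa M q} \mid \encoEnvV{\gamma'}$, which equals $\encoVa M q \mid \encoEnvV{\gamma}$, giving case (1) (with $\contr$ being in fact an equality here, hence a fortiori $\contr$);
\item $O = z$ (a variable): the component is $\fwd y z$ with $a = y$; unfolding the definition of the link constant $\fwd{}{}$ for variable names, $\fwd y z \defi {!\inp y{x,q}.\bout z{x',q'}.(\fwd{q'}{q}\mid\fwd{x'}{x})}$, so it fires $\inp y{x,q}$ and, using the optimised encoding's law $\encoVa{zx}q \exn \bout z{x',q'}.(\encoVVa{x}{x'}\mid\fwd{q'}{q})$ together with $\encoVVa x {x'} = \fwd{x'}x$, one obtains $P \contr \encoVa{zx}q \mid \encoEnvV{\gamma}$, giving case (2);
\item $O = (E,p)$: the component is $\inp qx.\encoVa{E[x]}p$ with $a = q$; by rule {\trans{pre}} it fires $\inp qx$, leaving exactly $\encoVa{E[x]}p \mid \encoEnvV{\gamma'}$, giving case (3).
\end{itemize}

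The main obstacle will be case (2): the link process $\fwd y z$ is a replicated, unfolding structure, so after the input it does not directly match $\encoVa{zx}q$ but rather the body $\bout z{x',q'}.(\fwd{q'}{q}\mid\fwd{x'}{x})$ in parallel with the persistent $\fwd y z$; one has to invoke the behavioural law (provable by the algebraic reasoning of \cite{DurierHS18,Durier20}, e.g.\ via Lemma~\ref{l:opt_stuck} specialised to $E = \contexthole$) that $\encoVa{zx}q$ expands this residual, and then fold $\fwd y z$ back into $\encoEnvV{\gamma}$. The other two cases are essentially immediate from the transition rules, with the $\contr$ in the statement being genuinely needed only in case (2) to absorb the administrative steps introduced by the link's unfolding; the bookkeeping about freshness of the bound names of $\mu$ (they do not clash with $\dom\gamma$ or the free names of $\gamma'$) follows from Remark~\ref{r:bn} and the validity conditions on configurations, so it does not require separate argument.
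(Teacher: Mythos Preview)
Your argument is correct and is essentially the direct case analysis that the paper's one-line hint (``derived from Lemmas~\ref{l:opt_stuck} and \ref{l:opt_op}'') must unpack to. In fact those two lemmas are not really needed here: the encoding $\encoEnvV{\gamma}$ is a flat parallel composition of input-guarded components with pairwise distinct input subjects, so an input transition is immediately traced to a single component, and the three clauses follow by inspection of that component's shape.

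One small overestimate: in case~(2) you say the $\contr$ is ``genuinely needed'' to absorb administrative steps from the link's unfolding. It is not. By the first clause of the optimised encoding in Figure~\ref{f:opt_encod}, $\encoVa{zx}{q} = \outb z{w,r}.(\encoVVa x w \mid \fwd r q) = \outb z{w,r}.(\fwd w x \mid \fwd r q)$, which is $\alpha$-equivalent to the body left after firing the replicated input of $\fwd y z$. So the residual is literally $\encoVa{zx}{q}\mid \encoEnvV{\gamma}$, and all three cases hold up to structural congruence; the $\contr$ in the statement is slack. Your invocation of Lemma~\ref{l:opt_stuck} for this case is therefore unnecessary (though harmless).

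Also, ``structural induction on $\gamma$'' is a slight misdescription: you never use an inductive hypothesis, only the decomposition $\gamma = \pmap{a}{O}\cdot\gamma'$ picked out by the subject of $\mu$. Calling it a case analysis on the firing component would be more accurate.
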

 

\begin{lemma}
\label{l:VAL}
  $\encoVVa {\val} y  \arr {\inp y {x,q}} \wbPI 
\encoVa {\val x } q | \encoVVa {\val} y$
\end{lemma}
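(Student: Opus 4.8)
The plan is to proceed by a case analysis on the shape of the value $\val$, using the freshness convention of Remark~\ref{r:bn} to assume that the received names $x,q$ of the action $\inp y{x,q}$ are distinct from all free names, in particular from the free variables of $\val$.

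\textbf{First case: $\val = \lambda z. M$.} Here $\encoVVa\val y = {!\inp y{z,q'}.\encoVa{M}{q'}}$, and after $\alpha$-conversion (renaming the input-bound names $z,q'$ to the fresh $x,q$) this is ${!\inp y{x,q}.\encoVa{M\subst zx}{q}}$. Its only transition is the input on $y$, and by rule \trans{rep} it yields $\encoVa{M\subst zx}{q} \mid \encoVVa\val y$. It then remains to show this process is $\wbPI$-equivalent to $\encoVa{\val x}q \mid \encoVVa\val y = \encoVa{(\lambda z.M)x}q \mid \encoVVa\val y$. Since $(\lambda z.M)x$ is neither a value nor a stuck call, Lemma~\ref{l:opt_op}(3) applies: $\encoVa{(\lambda z.M)x}q \arr\tau P$ with $P \exn \encoVa{N}q$, where $N$ is the unique reduct of $(\lambda z.M)x$, namely $M\subst zx$. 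Hence $\encoVa{(\lambda z.M)x}q \wbPI \encoVa{M\subst zx}q$, and the result follows by congruence of $\wbPI$ under the context $[\cdot] \mid \encoVVa\val y$.

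\textbf{Second case: $\val = w$, a variable.} Here $\encoVVa\val y = \fwd y w$, which by the definition of the (replicated) variable link is, up to $\alpha$-conversion, ${!\inp y{x,q}.\bout w{w',q'}.(\fwd{q'}q \mid \fwd{w'}x)}$. Its unique transition is again the input on $y$, yielding $\bout w{w',q'}.(\fwd{q'}q \mid \fwd{w'}x) \mid \fwd y w$. On the other side, unfolding the `variable-applied-to-value' clause of $\qencV$ gives $\encoVa{\val x}q = \encoVa{wx}q = \bout w{w',q'}.(\fwd{w'}x \mid \fwd{q'}q)$, so that $\encoVa{\val x}q \mid \encoVVa\val y$ is literally the transition target up to commutativity of parallel composition. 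The two processes are therefore structurally congruent, hence $\wbPI$-equivalent, and nothing further is needed.

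The only non-trivial ingredient is, in the first case, the $\beta_v$-step $\encoVa{(\lambda z.M)x}q \wbPI \encoVa{M\subst zx}q$, which I obtain directly from Lemma~\ref{l:opt_op}(3) (it is also implicit in Lemmas~\ref{l:opt_stuck}--\ref{l:opt_con} and in Durier et al.~\cite{DurierHS18}); so I expect the argument to be short. The main thing to be careful about is bookkeeping of bound names — the action's objects $x,q$, the bound variable $z$ of $\val$, and the fresh names introduced by $\qencV$ and by the link processes — but the freshness convention of Remark~\ref{r:bn} makes this routine.
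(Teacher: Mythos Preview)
Your proof is correct and matches the paper's approach. The paper does not give a full proof of this lemma, but the remark immediately following it sketches exactly your Case~1 argument: for $V=\lambda z.M$ the derivative of the input is (up to $\contr$) $\encoVa{M\subst zx}q \mid \encoVVa{V}{y}$, and one then uses Lemma~\ref{l:opt_op}(3) to relate $\encoVa{(\lambda z.M)x}q$ to $\encoVa{M\subst zx}q$ via the extra $\tau$-steps --- which is precisely why only $\wbPI$, not $\contr$, holds here. Your Case~2 (variable) is the routine unfolding the paper omits.
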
              

In Lemma~\ref{l:VAL}, the occurrence of $ \wbPI$ cannot be replaced by $\contr$; 
for instance, if $V = \lambda x. M$ then, using Lemma~\ref{l:opt_op}, we can infer that 
the derivative process is in the relation $\contr$ with 
$\encoVa {M } q$; however it is not in the same relation with 
$\encoVa {\val x } q$ (which has extra $\tau$ transitions with respect to $\encoVa {M } q$).

\iflong
\finish{below i think it is actually not needed now} 
Here is another important result that we borrow from \cite{DurierHS18}.
It shows  that the only $\lambda$-terms
 whose encoding is bisimilar to 
$\enca x$ reduce either to $x$, or to a (possibly infinite) 
$\eta$-expansion of $x$.

\begin{lemma}\label{l:opt_eta}
  If $\val$ is a value and $x$ a variable,
  $\encoVV \val\wbPI \encoVV x$ 
  implies that either $\val=x$ or $\val = \lambda   z. {M}$, where the eager
  normal form of $M$ is of the form $\evctxt[x\valp]$, with
  $\encoVV {\valp}\wbPI \encoVV z$ and $\encoV {\evctxt[y]}\wbPI\encoV y$
  for any $y$ fresh.
\end{lemma}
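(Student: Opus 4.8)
The statement to prove is Lemma~\ref{l:opt_eta}.

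\medskip

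The plan is to prove this by a coinductive/inductive analysis of the bisimulation game relating $\encoVV \val$ and $\encoVV x$, exploiting the structural characterisation of transitions of the optimised encoding given by Lemma~\ref{l:opt_op} together with Lemma~\ref{l:opt_stuck} and Lemma~\ref{l:VAL}. First I would observe that $\encoVV x = \bind y \fwd y x$, so that the only transition available is $\encoVV x \arr{(y)} \fwd y x$, and then $\fwd y x \arr{\inp y{z,p}} \bout x{w,q}.(\fwd w z | \fwd q p)$; in particular, after instantiating the parameter, the process is \emph{forced} to emit an output on the free name $x$ (carrying two fresh names, one a variable name, one a continuation name). Since $\encoVV \val \wbPI \encoVV x$, and since behavioural relations on abstractions require ground instantiation of parameters (rule \trans{abs}), we get $\encoVVa \val y \wbPI \fwd y x$, and then $\encoVVa \val y$ must be able to match the input $\inp y{z,p}$ followed by an output at $x$. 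By Lemma~\ref{l:VAL}, $\encoVVa \val y \arr{\inp y{z,p}} \wbPI \encoVa{\val z} p | \encoVVa \val y$, so we are reduced to analysing $\encoVa{\val z} p \wbPI \bout x{w,q}.(\fwd w z | \fwd q p)$ — bisimilarity being a congruence, and $\encoVVa \val y$ being a ``passive'' replicated component that can be cancelled (this is where a decomposition lemma in the style of Lemma~\ref{l:deco} is needed, to strip the common $\encoVVa \val y$ summand from both sides, using that $y$ does not occur with opposite polarity).

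\medskip

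Next I would apply Lemma~\ref{l:opt_op} to $\encoVa{\val z} p$. The right-hand side $\bout x{w,q}.(\fwd w z|\fwd q p)$ has a single immediate transition, an output at $x$ carrying two fresh names. Among the three clauses of Lemma~\ref{l:opt_op}, clause~(1) (output at the continuation name $p$) is excluded because the matching output is at $x \neq p$, and clause~(3) ($\tau$) is excluded because the right-hand side has no $\tau$. Hence clause~(2) applies: $\val z$ reduces to an eager normal form of the shape $\evctxt[x \valp]$ (note the head variable is precisely $x$, since the output subject must be $x$). Now two cases: either $\val = x$, in which case $\val z = x z$ is already of that form with $E = \contexthole$ and $\valp = z$ — but then the derivative $\encoVVa z z | \inp q y.\encoVa{z}{p}$ must match $\fwd w z|\fwd q p$; unwinding, $\encoVV z \wbPI \encoVV z$ and $\encoV{\contexthole[y]} = \encoV y \wbPI \encoV y$ trivially, and we land in the first disjunct of the conclusion. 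Or $\val = \lambda z.M$ (a variable $\val$ other than $x$ is impossible since then $\val z$ is a stuck term with the wrong head variable), in which case $\val z \redv^\ast M$ and the eager normal form of $M$ is $\evctxt[x\valp]$. From clause~(2), $\encoVa{\val z}p \arr{\bout x{w,q}} P$ with $P \exn \encoVVa \valp w | \inp q y.\encoVa{\evctxt[y]}p$, and this must be bisimilar (up to the expansion that $\exn$ records) to $\fwd w z | \fwd q p = \encoVVa z w | \inp q y.\bout x \cdots$ — wait, more precisely to $\fwd w z|\fwd q p$, which after the appropriate unfolding is behaviourally $\encoVVa z w$ in parallel with a link on $q,p$. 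Since these are composed with disjoint free linear/receptive names ($w$ versus $q$), I would apply Lemma~\ref{l:deco} again to split the two conjuncts, obtaining $\encoVVa \valp w \wbPI \encoVVa z w$, i.e. $\encoVV \valp \wbPI \encoVV z$, and $\inp q y.\encoVa{\evctxt[y]}p \wbPI \inp q y.\encoVa{y}p$, which after the input transition and removal of the (absent) common parts gives $\encoVa{\evctxt[y]}p \wbPI \encoVa y p$ for $y$ fresh, i.e. $\encoV{\evctxt[y]} \wbPI \encoV y$. That is exactly the second disjunct of the conclusion.

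\medskip

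The main obstacle I anticipate is the book-keeping of the decomposition steps: each time I need to peel off a common parallel component (the replicated $\encoVVa \val y$ on one side, a link or a guarded input on the other) and conclude bisimilarity of the remaining pieces, I am implicitly invoking a cancellation property of $\wbPI$ with respect to parallel composition when the components share no free names with opposite polarities — this is precisely the content of Lemma~\ref{l:deco}, so the work is in checking that the freshness side-conditions of that lemma are met at every application (in particular that the continuation names $p,q$ and the fresh value/variable names $w,z$ genuinely do not leak between the two factors, which holds because all names exchanged in \piI are fresh and the encoding keeps continuation names linear). A secondary subtlety is that $\exn$ (expansion), not just $\wbPI$, appears in Lemmas~\ref{l:opt_op} and~\ref{l:opt_stuck}; since $\exn$ is a precongruence strictly included in $\wbPI$, I can freely replace $\exn$ by $\wbPI$ at the cost of losing the $\tau$-counting, which is harmless here because the final statement only mentions $\wbPI$. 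Finally, the ``possibly infinite $\eta$-expansion'' in the statement is handled automatically: the lemma is not claiming termination of the $\eta$-unfolding, only that at each stage $\val = x$ or $\val$ is a $\lambda$-abstraction whose enf exposes $x$ at the head with the two sub-bisimilarities, and the recursion is exactly the coinductive content, so no separate inductive argument on the depth of $\eta$-expansion is required.
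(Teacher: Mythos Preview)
The paper does not give its own proof of this lemma: it is explicitly stated as ``another important result that we borrow from \cite{DurierHS18}'' and is quoted without argument. So there is nothing in the paper to compare your proposal against directly.

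Your proposed argument is essentially the right one and is presumably close to what \cite{DurierHS18} does: instantiate the abstraction parameter, play one input $\inp y{z,p}$ on both sides using Lemma~\ref{l:VAL} and the definition of $\fwd y x$, cancel the residual replicated component at $y$ via Lemma~\ref{l:deco}, then analyse $\encoVa{Vz}p$ with Lemma~\ref{l:opt_op}, and finally split the two parallel factors (at $w$ and at $q$) again via Lemma~\ref{l:deco} to obtain the two sub-bisimilarities. The freshness side-conditions you worry about are exactly those recorded in clause~(2) of Lemma~\ref{l:opt_op} (``$z$ is not free in $\inp q y.\encoVa{E[y]}p$ whereas $q$ is not free in $\encoVVa V z$''), so the decomposition lemma applies cleanly.

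One small slip worth fixing: you write that clause~(3) of Lemma~\ref{l:opt_op} ``is excluded because the right-hand side has no $\tau$''. Under weak bisimilarity this does not exclude anything --- a $\tau$ on the left is matched by zero steps on the right. The correct argument (which you implicitly use a few lines later) is that clause~(3) may fire repeatedly, but the right-hand side has a visible output at $x$ that the left must eventually match; hence $Vz$ cannot diverge and cannot converge to a value (that would force an output at $p\neq x$), so its eager normal form must be a stuck call $E[xV']$, landing in clause~(2). Once that is stated cleanly, your proof goes through.
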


\fi


We can now establish the operational correspondence between OGS and \piI. 
(In all the results about operational correspondence, we exploit the convention on
freshness of bound names of actions and traces produced by OGS configurations and \piI
processes, as by Remark~\ref{r:bn}.)
\begin{lemma}[strong transitions, from \AOGS to \piI]
\label{l:opcorrCON_str}
$ $ \begin{enumerate}

\item If $\FF \longrightarrowA \FF'$, then 
$ \encoConV \FF \longrightarrowPI  {\contr} \: \encoConV {\FF'}$;

\item If $\FF \arrA\ell \FF'$, then
$ \encoConV \FF \arrPI\ell  
 {\wbPI} \:
\encoConV {\FF'}$.
\end{enumerate} 
\end{lemma}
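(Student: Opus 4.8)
The plan is to prove Lemma~\ref{l:opcorrCON_str} by a case analysis on the transition rule of the \AOGS LTS (Figure~\ref{fig:seq-ogs-lts}) that justifies $\FF \longrightarrowA \FF'$ (for part~1) or $\FF \arrA\ell \FF'$ (for part~2), using the optimised encoding $\qencV$ of Figure~\ref{f:opt_encod} throughout, since it removes the administrative communications that would otherwise stand between $\encoConV\FF$ and the target process. For each rule I would compute the corresponding $\pi$-side behaviour using the structural shape lemmas already available: Lemma~\ref{l:opt_op} controls the transitions of $\encoVa M p$, Lemma~\ref{l:opt_stuck} describes $\encoVa{E[xV]}p$, Lemma~\ref{l:opt_con} describes the input transitions of the encoded environment $\encoEnvV\gamma$, and Lemma~\ref{l:VAL} handles the interrogation of an encoded value. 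The encoding of a configuration is a parallel composition $\encoVa Mp \mid \encoEnvV\gamma$ (for active configurations) or just $\encoEnvV\gamma$ (for passive ones), so each \AOGS transition is matched by a transition of one parallel component, and I would then re-assemble the derivative using the compositionality of $\qencV$ and the fact that $\contr$ and $\wbPI$ are (pre)congruences for parallel composition and restriction.

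Concretely: for rule $(P\tau)$, $M \redv N$, clause~3 of Lemma~\ref{l:opt_op} gives $\encoVa Mp \arr\tau P$ with $P \contr \encoVa Np$, and composing with $\encoEnvV\gamma$ gives part~1. For $(PA)$, $\FF = \conf{V,p,\gamma,\phi}$ and clause~1 of Lemma~\ref{l:opt_op} gives $\encoVa Vp \arr{\bout py}\encoVVa Vy$; since the derivative is $\encoVVa Vy \mid \encoEnvV\gamma = \encoEnvV{\gamma\cdot[x\mapsto V]}$ up to $\alpha$-renaming $y$ to $x$, this matches with equality (hence $\wbPI$). For $(PQ)$, $\FF = \conf{E[xV],p,\gamma,\phi}$; Lemma~\ref{l:opt_stuck} together with clause~2 of Lemma~\ref{l:opt_op} gives $\encoVa{E[xV]}p \arr{\bout x{y,q}} P$ with $P \contr \encoVVa Vy \mid \inp qz.\encoVa{E[z]}p$, and the right-hand side, put in parallel with $\encoEnvV\gamma$, is exactly $\encoEnvV{\gamma\cdot[y\mapsto V]\cdot[q\mapsto(E,p)]}$; so the match holds up to $\contr$, hence up to $\wbPI$. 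For $(OA)$, $\FF = \conf{\gamma\cdot[q\mapsto(E,p)],\phi}$; the encoded environment contains the component $\inp qx.\encoVa{E[x]}p$, which by the input rule fires on $\inp qx$ to produce $\encoVa{E[x]}p$, and this in parallel with $\encoEnvV\gamma$ is $\encoConV{\conf{E[x],p,\gamma,\phi\uplus\{x\}}}$ — matched with equality. For $(OQ)$, $\FF = \conf{\gamma,\phi}$ with $\gamma(x)=V$; the encoded environment contains $\encoVVa Vx$ (for $V$ a $\lambda$-abstraction) or $\fwd xv$, and Lemma~\ref{l:opt_con} (resp.\ a small computation for the forwarder) together with Lemma~\ref{l:VAL} gives the input transition on $\questOV xyp$ leading to $\encoVa{Vy}p \mid \encoEnvV\gamma$, which is $\encoConV{\conf{Vy,p,\gamma,\phi\uplus\{y,p\}}}$, matched up to $\wbPI$ (the use of $\wbPI$ rather than $\contr$ here coming exactly from Lemma~\ref{l:VAL}, as noted after that lemma). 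For $(IOQ)$, the encoding of $\initconf\phi M$ is $\encoV M$, an abstraction, and rule \trans{abs} of the \piI LTS instantiates its parameter $p$, yielding $\encoVa Mp = \encoConV{\conf{M,p,\emptymap,\phi\uplus\{p\}}}$, matched with equality.

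The main obstacle, and the place where the correspondence is genuinely up to $\contr$ or $\wbPI$ rather than exact, is the treatment of the stuck-call case $(PQ)$ and the value-interrogation case $(OQ)$: here the naive reading of the encoding performs extra internal steps (the forwarders $\fwd{w'}w \mid \fwd{r'}p$ in Milner's original encoding, or the restructuring performed by Lemma~\ref{l:opt_stuck}), and one must carefully invoke Lemma~\ref{l:opt_stuck}/Lemma~\ref{l:VAL} to absorb them, checking the freshness side-conditions on the transmitted names (Remark~\ref{r:bn}) so that the bound names of $\ell$ do not clash with free names of $\FF$. A secondary subtlety is bookkeeping the name-support $\phi$: since by the remark in Section~\ref{s:encoGames} the encoding $\encoConV\FF$ does not depend on $\phi$, the $\phi$-updates in Figure~\ref{fig:seq-ogs-lts} impose no constraint on the $\pi$-side and can be ignored, but one should remark this explicitly. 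Finally I would close by noting that part~2 follows from part~1 together with the single-step cases: $\FF \arrA\ell\FF'$ is a $\tau$-free step, so $\encoConV\FF \arrPI\ell \wbPI \encoConV{\FF'}$ directly from the relevant clause above, with no need for weak closure at this (strong) level; the weak versions (Theorems~\ref{t:opcorrCON_w}, \ref{t:opcorrCON2_w}) are then obtained from this lemma by transitive composition, using that $\contr$ absorbs $\tau$-steps on the right.
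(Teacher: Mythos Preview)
Your proposal is correct and follows exactly the approach the paper takes: a case analysis on the \AOGS rule, appealing to Lemmas~\ref{l:opt_op}--\ref{l:VAL} for each case. The paper's own proof is a single sentence (``case analysis on the rule used to derive the transition from $\FF$, using the definition of the encoding, using Lemmas~\ref{l:opt_op}--\ref{l:VAL}''), so you have simply expanded what the authors left implicit; in particular your identification of $(OQ)$ as the case where $\wbPI$ (rather than $\contr$) is genuinely needed, via Lemma~\ref{l:VAL}, is exactly right.
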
 

\begin{proof}
The proof goes by a case analysis on the rule used to derive the transition from 
$\FF $, using the definition of the encoding, using Lemmas~\ref{l:opt_op}-\ref{l:VAL}.
\end{proof}

\begin{lemma}[strong transitions,  from  \piI to \AOGS]
\label{l:opcorrCON2}
$ $
\begin{enumerate}
\item If 
$ \encoConV \FF \longrightarrowPI \PP$ then there is $\FF'$ such that 
$\FF \longrightarrowA \FF'$  and  
$\PP   \contr \encoConV {\FF'}$; 

\item If
$ \encoConV \FF \arrPI\mu \PP$
and $\mu$ is an output, 
then there is $\FF'$ such that
$\FF \arrA\mu \FF'$
and $\PP \contr \encoConV {\FF'}$;

\item If  $\FF$ is passive and 
$ \encoConV \FF \arrPI{\mu} \PP$, 
then 
 there is $\FF'$ such that 
$\FF \arrA{\mu} \FF'$
and $\PP  \wbPI \encoConV {\FF'}$.
\end{enumerate} 
\end{lemma}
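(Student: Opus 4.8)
The plan is to prove the three items by dualising the case analysis used for Lemma~\ref{l:opcorrCON_str}: rather than inspecting the OGS rule, I would inspect the $\piI$-transition, first splitting on the shape of $\FF$ (active, passive, or initial) and then on the transition, using the shape lemmas (Lemmas~\ref{l:opt_stuck}, \ref{l:opt_op}, \ref{l:opt_con}, \ref{l:VAL}) that pin down exactly which moves the components of $\encoConV\FF$ can perform. Throughout I would use the freshness convention of Remark~\ref{r:bn} to align the bound names of the $\piI$-action with the fresh names introduced by the OGS rules. The starting observation is a one-step decomposition of $\encoConV\FF$. When $\FF=\conf{M,p,\gamma,\phi}$ is active and valid, $\encoConV\FF=\encoVa Mp\mid\encoEnvV\gamma$ and the two components cannot synchronise in one step: validity forces $\fv M$ and $p$ to be O-names, hence disjoint from $\dom\gamma$, while $\encoVa Mp$ has no unguarded input and $\encoEnvV\gamma$ is a parallel composition of guarded (replicated or linear) inputs at $\dom\gamma$-names with no unguarded output and no internal $\tau$-redex. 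So every $\tau$- or output-transition of $\encoConV\FF$ comes from $\encoVa Mp$ alone, with residual of the form $R\mid\encoEnvV\gamma$. When $\FF$ is passive, $\encoConV\FF=\encoEnvV\gamma$ can only perform input actions; when $\FF=\initconf\phi M$, $\encoConV\FF=\encoV M$ is an abstraction whose only move is the distinguished action $(p)$.

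For items (1)--(2) the hypothesis is satisfiable only when $\FF$ is active. For (1), Lemma~\ref{l:opt_op}(3) gives $N$ with $M\redv N$ and $\PP\exn\encoVa Np\mid\encoEnvV\gamma$; taking $\FF'=\conf{N,p,\gamma,\phi}$, rule (P$\tau$) gives $\FF\longrightarrowA\FF'$ and, as $\exn$ is a precongruence, $\PP\exn\encoConV{\FF'}$. For (2), Lemma~\ref{l:opt_op}(1)--(2) leaves two cases: if $M$ is a value $V$ the action is $\bout p x$ with $\PP\equiv\encoVVa V x\mid\encoEnvV\gamma$, matched by (PA) with $\FF'=\conf{\gamma\cdot\pmap x V,\phi\uplus\{x\}}$; if $M=E[xV]$ the action is $\bout x{z,q}$ with $\PP\exn\encoVVa V z\mid\inp q y.\encoVa{E[y]}p\mid\encoEnvV\gamma$, matched by (PQ) with $\FF'=\conf{\gamma\cdot\pmap z V\cdot\pmap q{(E,p)},\phi\uplus\{z,q\}}$. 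In both cases $\encoConV{\FF'}$ is (up to $\alpha$-conversion and $\equiv$) the claimed residual, so $\PP\exn\encoConV{\FF'}$ by precongruence.

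For item (3), $\FF$ is passive. If $\FF=\initconf\phi M$ the only move is $\encoV M\arr{(p)}\encoVa Mp$, matched by (IOQ) with $\FF'=\conf{M,p,\emptymap,\phi\uplus\{p\}}$ and $\encoConV{\FF'}\equiv\encoVa Mp=\PP$. Otherwise $\FF=\conf{\gamma,\phi}$ and Lemma~\ref{l:opt_con} says $\mu$ is an input of one of three forms. If $\mu=y(x,q)$ with $\gamma(y)=\lambda x.M$, rule (OQ) gives $\FF'=\conf{(\lambda x.M)\,x,\,q,\,\gamma,\,\phi\uplus\{x,q\}}$ ($\alpha$-renaming so the received variable coincides with the binder, whence $(\lambda x.M)\,x\redv M$), while Lemma~\ref{l:opt_con}(1) gives $\PP\exn\encoVa Mq\mid\encoEnvV\gamma$; since $\encoVa{(\lambda x.M)\,x}q\arr\tau\exn\encoVa Mq$ by Lemma~\ref{l:opt_op}(3), one gets $\encoConV{\FF'}\wbPI\encoVa Mq\mid\encoEnvV\gamma\wbPI\PP$. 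The case $\gamma(y)$ a variable (link behaviour, Lemma~\ref{l:opt_con}(2)) is analogous, and the case $\mu=q(x)$ with $\gamma(q)=(E,p)$ is matched by (OA) — the linear input consumed on the $\pi$ side corresponding precisely to the garbage-collection of $\pmap q{(E,p)}$ from $\gamma$ — giving, by Lemma~\ref{l:opt_con}(3), even $\PP\exn\encoConV{\FF'}$.

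The main obstacle is twofold. First, the name bookkeeping: matching the object names of the $\piI$-action to the fresh names the OGS rules introduce, under Remark~\ref{r:bn}, and in particular the $\alpha$-renaming in the (OQ) case. Second, and more conceptually, the (OQ) subcase: the configuration produced by \AOGS, $\conf{V\,x,q,\gamma,\dots}$, still carries an administrative $\beta_v$-redex that the process encoding has already contracted, so there the correspondence holds only up to $\wbPI$ and not up to $\exn$ — which is exactly why item (3), and likewise Theorem~\ref{t:opcorrCON2_w}(3), is stated with $\wbPI$ rather than $\exn$. All other subcases deliver the stronger $\exn$, so the only care needed is to keep track of which branch one is in.
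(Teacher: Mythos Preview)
Your proposal is correct and follows the same approach as the paper: a case analysis on the possible $\piI$ transition from $\encoConV\FF$, driven by the shape lemmas \ref{l:opt_op}--\ref{l:VAL}, which is exactly what the paper's (two-line) proof says. Your treatment of the (OQ) subcase via Lemma~\ref{l:opt_con} plus a subsequent $\tau$-step from Lemma~\ref{l:opt_op}(3) is fine, though invoking Lemma~\ref{l:VAL} directly would give $\PP\wbPI\encoConV{\FF'}$ in one stroke and make the reason for $\wbPI$ (rather than $\contr$) in item~(3) even more transparent.
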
 

\begin{proof}
The proof goes by a case analysis on the possible transition from 
$ \encoConV \FF$, again 
exploiting Lemmas~\ref{l:opt_op}-\ref{l:VAL}.
\end{proof} 

We can strengthen both  
Lemma~\ref{l:opcorrCON_str} and 
Theorem~\ref{t:opcorrCON_w}
 to a correspondence between transitions from
\AOGS configurations and from their \piI translation under the (more
restrictive) \opLTS.  
Further, as discussed in Remark~\ref{r:expa_op_ord} and
Appendix~\ref{a:behav}, we can use the expansion relation and
bisimilarity on the
\opLTS, written $\expaOP$ and $\wbOPI$. 
We write \piIop to refer to \piI under the \opLTS.

\begin{theorem}[%
strong transitions, from \AOGS to \piIop]
\label{t:opcorrCON_str_OP}
$ $ \begin{enumerate}

\item If $\FF \longrightarrowA \FF'$, then 
$ \encoConV \FF \raN  \; \: \contrOP  \encoConV {\FF'}$;

\item If $\FF \arrA\ell \FF'$, then
$ \encoConV \FF \arrN\ell  
\; \: \wbOPI 
\encoConV {\FF'}$.
\end{enumerate} 
\end{theorem}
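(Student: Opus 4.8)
The plan is to obtain Theorem~\ref{t:opcorrCON_str_OP} from Lemma~\ref{l:opcorrCON_str}, which establishes exactly the same correspondence but on the \emph{ordinary} LTS of \piI, with $\contr$ and $\wbPI$ in the place of $\contrOP$ and $\wbOPI$. Two upgrades are needed: the \piI transition supplied by that lemma must be turned into an \opLTS transition, and the behavioural relation relating the two sides must be turned into its \opLTS counterpart. Both upgrades follow, respectively, from a structural property of \AOGS configurations and from the fact that all the behavioural relations involved respect divergence.

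First, the transitions. The \opLTS coincides with the ordinary LTS on $\tau$-transitions and on output transitions; it only suppresses input transitions of processes that are not input reactive. Hence, for part~(1), and for part~(2) with $\ell$ an output (the initial IOQ move being handled as the remaining cases in the proof of Lemma~\ref{l:opcorrCON_str}), the \piI transition provided by Lemma~\ref{l:opcorrCON_str} is already an \opLTS transition and nothing further is required. The only case that needs an argument is part~(2) with $\ell$ an input action. Here we exploit validity of \AOGS configurations: the term of an active configuration determines the next move, which is always a Player action — an output (PA, PQ) or a $\tau$ — so an input action $\ell$ (an OA or OQ) can only be emitted from a \emph{passive} configuration $\FF = \conf{\gamma,\phi}$. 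Its optimised encoding $\encoConV \FF = \encoEnvV\gamma$ is a parallel composition of input-guarded processes: each summand is either $\encoVVa V y$ (a replicated input or a variable link) or $\inp qx.\encoVa{E[x]}p$; thus $\encoConV\FF$ has no unguarded output and no $\tau$, i.e.\ it is input reactive, and therefore the ordinary transition $\encoConV\FF \arrPI\ell P$ is also a transition $\encoConV\FF \arrN\ell P$ of the \opLTS.

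Second, the behavioural relations. Every occurrence of $\contr$ and $\wbPI$ in the proof of Lemma~\ref{l:opcorrCON_str} originates from Lemma~\ref{l:opt_sound} and Lemmas~\ref{l:opt_stuck}-\ref{l:opt_con} (together with Lemma~\ref{l:VAL}), and, as observed in the discussion preceding Theorem~\ref{t:opcorrCON3_w}, these are all established via strong-bisimulation laws or laws of the form ${\res{a} ( \inp a {\tilx}.P | \bout a \tilx . Q)}  \;  \contr \; {\res{a} (P |  Q)  }$, none of which adds or removes a $\tau$-divergence. Consequently each such relation actually holds as the divergence-respecting variant ($\expa$, resp.\ $\wbPI$, with the extra clause $(*)$), and hence, by Lemma~\ref{l:bisDIV_op} and its expansion analogue, implies $\contrOP$, resp.\ $\wbOPI$. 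Re-running the case analysis of Lemma~\ref{l:opcorrCON_str} with these strengthened conclusions yields the theorem. I expect the main obstacle to be purely a matter of bookkeeping: one must check, case by case, that the bisimilarity used in part~(2) — which, as noted after Lemma~\ref{l:VAL}, genuinely cannot be tightened to $\contr$ — is still divergence-respecting (it relates the derivative of $\encoVVa V y$ to $\encoVa{Vx}q | \encoVVa V y$ up to finitely many deterministic $\tau$-steps, so no divergence is created or destroyed), and that every \piI transition matched against an \AOGS input indeed comes from an input-reactive (sub)process.
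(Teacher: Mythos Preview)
Your proposal is correct and follows essentially the same approach as the paper: the paper justifies this theorem by combining the observation (Section~\ref{subsec:seq-ts-pi}) that encodings of passive \AOGS configurations are input-reactive — so ordinary-LTS transitions lift to the \opLTS — with the divergence-preservation argument (Remark~\ref{r:expa_op_ord} and the discussion after Lemma~\ref{l:opcorrCON3_w}) that lets one replace $\contr$ and $\wbPI$ by $\contrOP$ and $\wbOPI$ via Lemma~\ref{l:bisDIV_op}.
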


\begin{theorem} 
[weak transitions, from \AOGS to \piIop]
\label{t:opcorrCON_w_OP}
$ $ \begin{enumerate}

\item If $\FF \LongrightarrowA \FF'$, then 
$\encoConI \FF \RaN  \; \contrOP \encoConI {\FF'}$;

\item If $\FF \ArrA\ell \FF'$, then
$ \encoConI \FF \ArrN\ell  
\; \wbOPI
\encoConI {\FF'}$.
\end{enumerate} 
\end{theorem}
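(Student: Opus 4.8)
The final statement is Theorem~\ref{t:opcorrCON_w_OP}, the weak-transition operational correspondence from \AOGS to \piI under the \opLTS (``\piIop''). My plan is to derive it from the strong-transition version, Theorem~\ref{t:opcorrCON_str_OP}, by a routine induction that composes single steps into weak ones, exactly as in standard proofs relating a strong operational correspondence to its weak counterpart.

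First I would recall that $\LongrightarrowA$ is the reflexive-transitive closure of $\longrightarrowA$, and that $\ArrA\ell$ means $\LongrightarrowA \arrA\ell \LongrightarrowA$. For part (1), I would argue by induction on the number of $\tau$-steps in $\FF \LongrightarrowA \FF'$. The base case ($\FF = \FF'$) is immediate since $\contrOP$ is reflexive and $\RaN$ allows zero steps. For the inductive step, write $\FF \longrightarrowA \FF'' \LongrightarrowA \FF'$; by Theorem~\ref{t:opcorrCON_str_OP}(1) we get $\encoConI \FF \raN\,\contrOP\,\encoConI{\FF''}$, by the induction hypothesis $\encoConI{\FF''} \RaN\,\contrOP\,\encoConI{\FF'}$, and then I would use that $\contrOP$ is a precongruence contained in $\wbOPI$ and that it can be ``pushed through'' weak $\tau$-transitions on the right in the usual way (a $\tau$-move followed by an expansion can be absorbed: if $P \contrOP Q$ and $Q \RaN Q'$ then $P \RaN\,\contrOP\,Q'$, which is part of the definition of expansion together with its transitivity). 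Composing gives $\encoConI \FF \RaN\,\contrOP\,\encoConI{\FF'}$.

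For part (2), from $\FF \ArrA\ell \FF'$ we have $\FF \LongrightarrowA \FF_1 \arrA\ell \FF_2 \LongrightarrowA \FF'$. Apply part (1) to the first segment to get $\encoConI \FF \RaN\,\contrOP\,\encoConI{\FF_1}$; apply Theorem~\ref{t:opcorrCON_str_OP}(2) to the middle step to get $\encoConI{\FF_1} \arrN\ell\,\wbOPI\,\encoConI{\FF_2}$; apply part (1) again to the last segment to get $\encoConI{\FF_2} \RaN\,\contrOP\,\encoConI{\FF'}$. Then I would splice these together: using $\contrOP \subseteq \wbOPI$ and the fact that $\wbOPI$ is preserved by (and can be pushed across) $\ArrN\ell$ and $\RaN$ in the \opLTS, the composite gives $\encoConI \FF \ArrN\ell\,\wbOPI\,\encoConI{\FF'}$, as required. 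Throughout I would invoke Remark~\ref{r:bn2}/Remark~\ref{r:bn} to keep the bound names of $\ell$ fresh with respect to $\FF,\FF'$, so that the actions compose without name clashes.

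The only point requiring a little care — and the main obstacle — is the manipulation of the behavioural relations ($\contrOP$, $\wbOPI$) across weak transitions in the \opLTS rather than the ordinary LTS: one must be sure that the absorption laws ``$\tau$-step then expansion'' and ``weak visible transition up-to bisimilarity'' still hold there. This is exactly the content flagged in Remark~\ref{r:expa_op_ord} and Appendix~\ref{a:behav} (e.g.\ Lemma~\ref{l:bisDIV_op}), which establish that expansion and bisimilarity on the \opLTS behave as expected; I would cite those results rather than re-prove them. Given that, the induction is entirely mechanical, and indeed the theorem is stated in the excerpt as a strengthening of Theorem~\ref{t:opcorrCON_w} obtained ``as discussed in Remark~\ref{r:expa_op_ord} and Appendix~\ref{a:behav}'', so the proof is genuinely just this composition argument.
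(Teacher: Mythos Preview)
Your approach is essentially the same as the paper's: the paper presents Theorem~\ref{t:opcorrCON_w_OP} as the weak-transition lift of the strong-transition correspondence Theorem~\ref{t:opcorrCON_str_OP}, obtained by the routine induction you describe, together with the observation (Remark~\ref{r:expa_op_ord}, Appendix~\ref{a:behav}) that the behavioural relations may be taken on the \opLTS.

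One small technical point you glossed over: Theorem~\ref{t:opcorrCON_str_OP} is stated for the \emph{optimised} encoding $\encoConV{-}$, whereas the target Theorem~\ref{t:opcorrCON_w_OP} is about the original encoding $\encoConI{-}$. So when you write ``by Theorem~\ref{t:opcorrCON_str_OP}(1) we get $\encoConI \FF \raN\,\contrOP\,\encoConI{\FF''}$'' you are silently identifying the two. The bridge is Lemma~\ref{l:opt_sound} (in its $\expaOP$ form, as discussed in the text preceding Theorem~\ref{t:opcorrCON3_w}): $\encoConI{\FF} \contrOP \encoConV{\FF}$ for every configuration $\FF$, which lets you transport the strong correspondence back to $\encoConI{-}$ before running your induction. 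This is a one-line addition and the paper treats it the same way, but it should be made explicit.
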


The following Theorem~\ref{t:opcorrCON3}  is needed in the proof of Theorem~\ref{t:opcorrCON3_w}:
it relates strong transitions from  \piIop processes to
transitions of \AOGS configurations. 
Again, in both theorems, the occurrences of $\expa$ and $\wbPI$ can be
replaced by $\expaOP$ and $\wbOPI$. 

\begin{theorem} 
\label{t:opcorrCON3}
$ $ \begin{enumerate}

\item If 
$ \encoConV \FF \arrN \tau \PP$ then there is $\FF'$ such that 
$\FF \longrightarrowA \FF'$  and  
$\PP   \contr \encoConV {\FF'}$;

\item If
$ \encoConV \FF \arrN\ell \PP$
 then there is $\FF'$ such that
 $\FF \arrA\ell \FF'$
and $\PP \wbPI \encoConV {\FF'}$.
\end{enumerate} 
\end{theorem}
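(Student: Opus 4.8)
The plan is to reduce Theorem~\ref{t:opcorrCON3} to the ordinary--LTS correspondence already established in Lemma~\ref{l:opcorrCON2}, by a case analysis on the action together with the elementary relationship between $\arrN{}$ and $\arrPI{}$. Three facts about the \opLTS are used throughout. First, every \opLTS transition $P\arrN\mu P'$ is in particular an ordinary transition $P\arrPI\mu P'$ (both rules defining $\arrN{}$ carry $P\arrPI\mu P'$ as a premise). Second, a silent or output transition of the ordinary LTS is always an \opLTS transition as well, so $\arrN\tau$ coincides with $\arrPI\tau$, and likewise on output actions. Third, an input transition $P\arrN\ell P'$ can only be derived when $P$ is input reactive.

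For item (1), a transition $\encoConV\FF \arrN\tau \PP$ is, by the second fact, nothing but $\encoConV\FF \longrightarrowPI \PP$, so Lemma~\ref{l:opcorrCON2}(1) directly yields an $\FF'$ with $\FF\longrightarrowA\FF'$ and $\PP\contr\encoConV{\FF'}$. For item (2) I would split on the shape of $\ell$. If $\ell$ is an output, then by the second fact $\encoConV\FF \arrPI\ell\PP$, and Lemma~\ref{l:opcorrCON2}(2) provides $\FF'$ with $\FF\arrA\ell\FF'$ and $\PP\contr\encoConV{\FF'}$, hence $\PP\wbPI\encoConV{\FF'}$ since $\contr\subseteq\wbPI$. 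If $\ell$ is the special action $(\tilx)$, then $\FF$ must be an initial configuration and $\encoConV\FF$ the abstraction $\encoV M$, whose unique transition is exactly the one matched by the (IOQ) rule; the claim then follows by inspection, up to $\wbPI$ (indeed up to structural congruence).

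The remaining, and only substantive, case is when $\ell$ is an input. Here the key point is that $\FF$ cannot be active: if $\FF=\conf{M,p,\gamma,\phi}$ then $\encoConV\FF=\encoVa M p \mathrel{|} \encoEnvV\gamma$, and by Lemma~\ref{l:opt_op} the process $\encoVa M p$ has exactly one immediate transition, which is an output or a $\tau$; hence $\encoConV\FF$ has an output or $\tau$ transition in the ordinary LTS (deriving it through the left component) and so is not input reactive, contradicting the third fact. The initial case having been treated separately, $\FF$ is therefore passive, so $\encoConV\FF=\encoEnvV\gamma$, and by the first fact we also have $\encoEnvV\gamma\arrPI\ell\PP$; Lemma~\ref{l:opcorrCON2}(3) then gives $\FF'$ with $\FF\arrA\ell\FF'$ and $\PP\wbPI\encoConV{\FF'}$. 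The one delicate ingredient in the whole argument is precisely this observation that the $\pi$-image of an active configuration is never input reactive, which is exactly what the shape analysis of $\encoVa M p$ in Lemma~\ref{l:opt_op} supplies; everything else is a mechanical reduction to Lemma~\ref{l:opcorrCON2}, so I expect no serious obstacle.
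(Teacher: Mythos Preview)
Your proposal is correct and matches the paper's (largely implicit) argument. The paper does not spell out a proof of Theorem~\ref{t:opcorrCON3}, but in Section~\ref{subsec:seq-ts-pi} it notes precisely the observation you rely on: in \AOGS, input transitions only occur in passive configurations, and the encodings of passive configurations are input-reactive, while for active $\FF$ the component $\encoVa M p$ already supplies an output or $\tau$ (Lemma~\ref{l:opt_op}), so $\encoConV\FF$ cannot be input reactive; the remaining cases then reduce directly to Lemma~\ref{l:opcorrCON2}.
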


We report now more details on the proof of Corollary~\ref{c:traces}.
First, we recall the assertion:

\begin{center}
\begin{tabular}{l}
For any trace $s$, we have 
  $\FF \ArrA s$   iff 
 $ \encoEnvI \FF  \ArrN s$.
\end{tabular} 
\end{center}

\begin{proof}
The result is first established for 
 $\qencV$, and then extended to 
$\cbvSymb$,  exploiting the correctness of the optimisations 
in  $\qencV$ (Lemma~\ref{l:opt_sound}; again, in this lemma, the occurrence of $\expa$
can be lifted to $\expaOP$).

For  $\qencV$, 
both directions are proved proceeding by induction on the length of a trace 
 $\ell_1, \ldots,   \ell_n $.  
In the direction from left to right, we use 
 Theorem~\ref{t:opcorrCON_w_OP}(2). 
For the converse direction, we 
proceed similarly, this time relying on 
Theorem~\ref{t:opcorrCON3_w}(2) 
 (again, the version with $\wbOPI$ in place of 
 $\wbPI$).
 \end{proof}

\section{Auxiliary results for Section~\ref{ss:compa_cogs_piI}}
\label{a:aux_tr_bisi_sing}

We report here the auxiliary results needed to establish the 
relationship between Concurrent OGS (\COGS) and \piI. 
%
%
%
%
%
%
%
%
%
%
%
%
%
 %
%
%
 %
The results below are used in the proof of Lemma~\ref{l:tr_bisi_sing}.

A process $\PP$  \emph{has deterministic immediate transitions} if, for any $\act$, 
whenever $\PP \arrPI\act \PP_1 $ and   $\PP \arrPI\act \PP_2$, then $\PP_1 = \PP_2$. 

\begin{lemma}
\label{l:immediate_tr}
Suppose $\PP,\QQ$ have  deterministic immediate transitions, $\PP \TEPI \QQ$, 
 and $\PP \arrPI\act \PP_1$. Then 
$\QQ \ArrPI \act \QQ_1$ implies $\PP _1 \TEPI \QQ_1$. 
\end{lemma}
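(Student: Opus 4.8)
The plan is to prove Lemma~\ref{l:immediate_tr} by combining the determinism of immediate transitions with a standard "peeling off the first action" argument on traces. First I would observe that since $\PP$ has deterministic immediate transitions, the process $\PP_1$ with $\PP \arrPI\act \PP_1$ is \emph{unique}; likewise, for any $\QQ_1$ with $\QQ \arrPI\act \QQ_1$ the choice of $\QQ_1$ is forced. So the statement reduces to: if $\PP \arrPI\act \PP_1$, $\QQ \arrPI\act \QQ_1$, and $\PP \TEPI \QQ$, then for every trace $s$ we have $\PP_1 \ArrPI s$ iff $\QQ_1 \ArrPI s$.

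The key step is to relate traces of $\PP_1$ to traces of $\PP$. Note that $\act$ here is an \emph{immediate} transition, i.e.\ a strong transition $\arrPI\act$, not a weak one; I would need to be careful about whether $\act$ is visible or $\tau$. If $\act = \tau$: then $\PP \ArrPI s$ implies (since $\PP$ has deterministic immediate transitions, the first $\tau$ step must go to $\PP_1$, after possibly zero visible-free steps — more precisely, $\PP \LongrightarrowPI$ through $\PP_1$) that $\PP_1 \ArrPI s$ as well, and conversely $\PP_1 \ArrPI s$ implies $\PP \ArrPI s$ trivially by prefixing the $\tau$-step; so $\PP$ and $\PP_1$ are trace equivalent, similarly $\QQ$ and $\QQ_1$, and the result follows from $\PP \TEPI \QQ$. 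If $\act = \ell$ is visible: then $\PP_1 \ArrPI s$ iff $\PP \ArrPI {\ell, s}$ — the "if" direction uses determinism (the $\ell$-action from $\PP$ must land in $\PP_1$, after which the remaining trace $s$ is performed; but one must handle possible $\tau$-steps before the $\ell$, using that $\PP$'s immediate transitions are deterministic so those $\tau$-steps are also forced to pass through intermediate states that eventually do the $\ell$ into $\PP_1$) — and the "only if" direction is immediate by prefixing. The same holds for $\QQ$ and $\QQ_1$. Hence $\PP_1 \ArrPI s$ iff $\PP \ArrPI{\ell,s}$ iff $\QQ \ArrPI{\ell,s}$ iff $\QQ_1 \ArrPI s$, using $\PP \TEPI \QQ$ in the middle equivalence.

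I would carry this out by: (i) fixing the uniqueness observations from determinism; (ii) proving the auxiliary claim "$\PP_1 \ArrPI s$ iff $\PP \ArrPI{\hat\act, s}$" where $\hat\act$ denotes $\ell$ if $\act=\ell$ and the empty trace if $\act=\tau$, by a short induction on the derivation of the weak transition that uses determinism to argue the first relevant step goes through $\PP_1$; (iii) applying the same claim to $\QQ$ and $\QQ_1$; (iv) chaining through $\PP \TEPI \QQ$. The main obstacle — and the only genuinely delicate point — is step (ii) in the "if" direction when there are $\tau$-steps intervening before the action $\act$: one must argue that determinism of \emph{immediate} transitions forces those initial steps, together with the $\act$-step, to be exactly the path witnessing $\PP \arrPI\act \PP_1$ (or a $\tau$-prefix thereof), so that the suffix $s$ is genuinely a trace of $\PP_1$. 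This is where the hypothesis "deterministic immediate transitions" is essential and must be used carefully; without it the claim fails (a process could branch before committing to $\act$). Everything else is routine bookkeeping about weak transitions and trace concatenation.
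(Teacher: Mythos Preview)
The paper gives no proof of this lemma, so I assess your argument on its own. Your plan is the natural one and you rightly flag step~(ii) as the delicate point, but the argument you sketch for it does not go through: ``deterministic immediate transitions'' is a condition on $\PP$ and $\QQ$ themselves, not on their derivatives, so it pins down only the \emph{first} strong step out of each. When you claim that intervening $\tau$-steps ``are also forced'' you are silently assuming determinism of the intermediate states, which is not given; and on the $\QQ$ side the hypothesis $\QQ\ArrPI{\act}\QQ_1$ is a \emph{weak} transition, so after $\QQ$'s first (determined) step the remainder of the path to $\QQ_1$ is entirely unconstrained by the hypothesis. Indeed the lemma as literally stated is false in \piI: take $\PP = e().(\bar c().\nil \mid \bar d().\nil)$ and $\QQ = e().\res a(\bar a().\bar c().\nil \mid a().\nil \mid a().\bar d().\nil)$. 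Each has a single immediate transition (the input on $e$), hence is deterministic in the stated sense, and one checks $\PP\TEPI \QQ$. But with $\PP\arrPI{e()}\PP_1 = \bar c().\nil\mid\bar d().\nil$ and $\QQ\ArrPI{e()}\QQ_1 = \res a(\bar c().\nil \mid a().\bar d().\nil)$ --- the latter reached via the $e$-input followed by the synchronisation of $\bar a$ with the \emph{first} of the two $a$-inputs --- the action $\bar d()$ is a trace of $\PP_1$ and not of $\QQ_1$.

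What makes the lemma go through in its only use (inside the proof of Lemma~\ref{l:tr_bisi_sing}) is a much stronger fact about the processes appearing there: by Lemma~\ref{l:opt_op}, every $\encoVa{M}{p}$ has \emph{exactly one} immediate transition, and its $\tau$-derivative is again (up to $\contr$) such an encoding, so this uniqueness is hereditary along the entire weak transition. Under that hereditary assumption the reachable transition system is a single path, weak transitions become genuinely deterministic, and your step~(ii) is valid. To turn your outline into a correct proof you would need to strengthen the hypothesis accordingly --- for instance to: $\PP$, $\QQ$, and every state weakly reachable from them have at most one immediate transition.
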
 

Lemma below is the analogue to  Lemma~\ref{l:deco} for traces.

\begin{lemma}
\label{l:par_tr}
$ $ 
\begin{itemize}
\item
Suppose $P = P_1 | ! \inp a {\tilb}.P_2$, 
and $Q = Q_1 | ! \inp a {\tilb}.Q_2$, 
and  $\pairP P \TEPI \pairP Q$.
Then also 
  $\pairP {P_1}  \TEPI \pairP {Q_1}$.

\item 
Similarly, suppose $P = P_1 |  \inp a {\tilb}.P_2$, 
and $Q = Q_1 |  \inp a {\tilb}.Q_2$, 
and  $\pairP {P} \TEPI \pairP {Q}$.
Then also 
  $\pairP {P_1}  \TEPI \pairP {Q_1}$.
\end{itemize}
 \end{lemma}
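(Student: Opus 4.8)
The plan is to prove the first item; the second, with a non-replicated input $\inp a \tilb . P_2$ in place of $\{! \inp a \tilb . P_2\}$, then follows by \emph{exactly} the same argument, since the only feature of the extra component used below is that it stays inert throughout the derivations of interest. As in Lemma~\ref{l:deco}, I would use the hypothesis that $a$ occurs free neither in $P_1$ nor in $Q_1$ (this is implicit here; it holds in all our applications, where $P_1,Q_1$ encode the remaining part of an environment, whose domain is disjoint from $\{a\}$ and which never outputs on a name of its own domain).

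The crux is the following \emph{inertness observation}: if $R$ is a process with $a\notin\fn R$ and $R \,|\, \{! \inp a \tilb . P_2\} \ArrPI s R'$ where $s$ is a trace with no action whose subject is $a$, then $R' = R'' \,|\, \{! \inp a \tilb . P_2\}$ for some $R''$ with $R \ArrPI s R''$. This is the place where the specifics of $\piI$ really enter: since in $\piI$ a process may receive only fresh names, $a$ never enters the free names of any derivative of $R$; hence no such derivative ever offers an output on $a$, and consequently $\{! \inp a \tilb . P_2\}$ can neither synchronise (no output on $a$ is ever available), nor perform a visible action (its only action has subject $a$, which $s$ forbids), nor reduce by itself. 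So every step of the derivation --- including the intervening $\tau$-steps --- is performed by the left component, which gives the claim. Equivalently, one may phrase this via the garbage-collection law $\res a(R \,|\, \{! \inp a \tilb . P_2\}) \wbPI R$, valid because $a\notin\fn R$.

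Given this, the main argument is short: I would show that $P_1$ and $Q_1$ have the same traces. Take $s$ with $Q_1 \ArrPI s$. By the freshness convention on bound names of traces (Remark~\ref{r:bn}) we may assume the bound names of $s$ avoid all names in sight, so adding a parallel component preserves the trace and $Q = Q_1 \,|\, \{! \inp a \tilb . Q_2\} \ArrPI s$. Since $a\notin\fn{Q_1}$, the trace $s$ has no action with subject $a$. From $P \TEPI Q$ we obtain $P = P_1 \,|\, \{! \inp a \tilb . P_2\} \ArrPI s$, and the inertness observation, applied with $R := P_1$, gives $P_1 \ArrPI s$. The reverse inclusion is symmetric. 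Hence $P_1 \TEPI Q_1$.

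I do not expect a real obstacle: the statement is routine once the inertness observation is isolated, and it is in fact easier than its bisimulation analogue Lemma~\ref{l:deco}, because on traces one only projects derivations rather than playing a game. The single delicate point is the implicit freshness requirement $a\notin\fn{P_1}\cup\fn{Q_1}$: without it the ``service'' component can genuinely interfere (e.g.\ via a synchronisation on $a$ with $P_1$) and the statement can fail, so I would make this hypothesis explicit, matching Lemma~\ref{l:deco}. An alternative, more algebraic write-up would apply $\res a(-)$ to $P \TEPI Q$, use that restriction preserves $\TEPI$, and then garbage-collect both sides using ${\wbPI}\subseteq{\TEPI}$ to conclude $P_1 \TEPI \res a P \TEPI \res a Q \TEPI Q_1$; I would present the direct trace argument as the main proof and relegate this to a remark.
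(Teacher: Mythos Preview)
The paper does not give a proof of this lemma: it simply introduces it as ``the analogous to Lemma~\ref{l:deco} for traces'' and moves on. Your proposal is a correct and natural way to fill in the details. The inertness observation you isolate is exactly the point, and your remark that the hypothesis $a\notin\fn{P_1}\cup\fn{Q_1}$ is implicit (and should be made explicit, as in Lemma~\ref{l:deco}) is well taken; this is indeed how the lemma is used in the proof of Lemma~\ref{l:tr_bisi_sing}. Your alternative algebraic route via restriction and the garbage-collection law is also valid and arguably the cleanest presentation.
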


We can now conclude the proof of 
Lemma~\ref{l:tr_bisi_sing}.
We recall the assertion:
\begin{center}
\begin{tabular}{l}
For any $M,N$ we have:  \\
$
{\encoEnvI M} \TEPI {\encoEnvI N}
$ iff 
${\encoEnvI M} \wbPI {\encoEnvI N}$.
\end{tabular} 
\end{center} 

\begin{proof}
We have to prove that trace equivalence
implies bisimilarity. 
We work with the optimised encoding  $\qencV$.
The relation 
\[ 
(\encoConV \FF,  \encoConV \GG) \st \mbox{ $\FF,\GG$ are singleton
  with $\FF \TEp 
\GG$}
\]
is a bisimulation up-to context and up-to $(\contr,
 \wbPI)$.

 $\FF$ and $\GG$, as singleton, are of the  forms 
described in Lemma~\ref{l:shapesT}.
Moreover, to be in the relation $ \TEp$ they must be of the same
form, and, using also 
  Lemmas~\ref{l:opt_op}  and 
\ref{l:opt_con} we deduce that they have deterministic immediate
transitions. 

Suppose  $\FF$ and $\GG$ are of the form 
$\encoIa {M} p$ and 
$\encoIa {N} q$, respectively.

Consider the case 
$\encoIa {M} p \arr\tau P$; then by Lemma~\ref{l:opt_op}, 
$P \contr \encoIa {M'} p$, for some $M'$. 
Moreover, we have $ \encoIa {M'} p \TEp \encoIa {M} p \TEp  \encoIa
{N} q$, and we are done.

The case when $\encoIa {M} p$ performs an output action $\bout px$ is
simple, as usual using  Lemma~\ref{l:opt_op}; we also deduce that
$p=q$. 

Suppose now $\encoIa {M} p$ performs an output action $\bout x{z,p'}$.
Using  Lemma~\ref{l:opt_op}, we have 
\[\encoIa {M} p\arr{\bout x{z,p'}}  
\exn \encoVVa \val z|\inp {p'} y.\encoVa {\evctxt[y]} p
\]
for some $\val,y,E$.
Since $\encoIa {M}p \TEp \encoIa {N}q$, and again using 
 Lemma~\ref{l:opt_op},
we deduce 
\[\encoIa {N} q\Longrightarrow \arr{\bout x{z,p'}}  
\exn \encoVVa W z|\inp {q'} y.\encoVa {E'[y]} q
\]
By Lemma~\ref{l:immediate_tr}, 
\[
\begin{array}{rc}
\encoVVa \val z|\inp {p'} y.\encoVa {\evctxt[y]} p
& \TEp \\[3pt]
\encoVVa W z|\inp {q'} y.\encoVa {E'[y]} q
\end{array}
 \] 
Applying Lemma~\ref{l:par_tr} twice, we deduce 
\[
\encoVVa \val z 
\TEp 
\encoVVa W z
 \]
 and
\[
\inp {q'} y.\encoVa {E'[y]} q \TEp
\inp {q'} y.\encoVa {E'[y]} q
\]
and we are done, up to expansion and context.

The cases when 
$\FF$ and $\GG$ are of a different form are similar, this time using 
Lemma~\ref{l:opt_con}.
\end{proof}

\section{Proofs about Section~\ref{s:cbnNEW}}
\label{a:cbn}
             
\subsection{Some auxilary results}


For $n>0$ we  define, for \piI\ abstractions  $A_1 \ldots A_n$:
\[\begin{array}{l}
\OUTU n {r_n}{r_0} {A_1 \ldots A_n}{} \defi 
\begin{array}[t]{l}
\res{r_1, \ldots r_{n-1}} \\
\! \! (\inpi r 0 {v_1} . \bopi v 1{x_{1},r_{1}}.    
 | \\
\inpi r 1 {v_2} . \bopi v 2{x_{2},r_{2}}    .  
 | \\
\ldots  \\
\inpi r {n-1} {v_n} . \bopi v n{x_{n},r_{n}}    .  
 \\
  !  \inp {x_1} {q} . {\app{A_1}{q} } | \\
  !  \inp {x_2} {q} . {\app{A_2}{q} } | \\
\ldots  \\
  !  \inp {x_n} {q} . {\app{A_n}{q} } )
  \end{array} 
  \end{array} 
\]

\begin{lemma} 
We have: 
\[
\begin{array}[t]{rcl}
\encoNa{ x M_1 \ldots M_n} {r_n}
\arrPI{\bout x{r_0}}
&\equiv  &
\begin{array}[t]{l}
\res{r_1, \ldots r_{n-1}} \\
\! \! (\inpi r 0 {v_1} . \bopi v 1{x_{1},r'_{1}}    .  \link {r'_{1}}{r_{1}}
 | \\
\inpi r 1 {v_2} . \bopi v 2{x_{2},r'_{2}}    .  \link {r'_{2}}{r_{2}}
 | \\
\ldots  \\
\inpi r {n-1} {v_n} . \bopi v n{x_{n},r'_{n}}    .  \link
{r'_{n}}{r_{n}} \\
  !  \inp {x_1} {q} . {\encoNa{M_1} {q} } | \\
  !  \inp {x_2} {q} . {\encoNa{M_2}{q} } | \\
\ldots  \\
  !  \inp {x_n} {q} . {\encoNa{M_n}{q} } )
\end{array}
\\ \ \\
&\wbPI &
\OUTU n {r_n}{r_0} {\encoN{M_1} \ldots \encoN{M_n}}{} 
\end{array}
\]
 \end{lemma}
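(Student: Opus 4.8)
The plan is to prove the equality (the $\equiv$ line) by unfolding the encoding of a head application $x\,M_1\,\cdots\,M_n$ and observing that the only possible transition on $r_n$-free names is the output at $x$; then to prove the $\wbPI$ line by a chain of simple algebraic manipulations, each of which is a known $\pi$-calculus law (distributivity of restriction over parallel composition, plus cancellation of an internal link/forwarder against an adjacent output). First I would establish, by a straightforward induction on $n$ using the definition of $\encoN{M_1 \cdots M_k}$ from Figure~\ref{f:namePI} (together with Lemma~\ref{l:Ex} for the case where the head is a variable), the shape of $\encoNa{x M_1 \cdots M_n}{r_n}$ before any transition: it is a restriction $\res{r_1,\ldots,r_{n-1}}$ of the parallel composition of the $n$ little ``relay'' processes $\inpi r{i-1}{v_i}.\bopi v i{x_i,r'_i}.\link{r'_i}{r_i}$ together with the $n$ replicated argument servers $!\inp{x_i}{q}.\encoNa{M_i}{q}$, all guarded except for the output at $x$ that is enabled inside the first relay once the initial $\inpi r0{v_1}$ is available. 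Performing the visible output $\bout x{r_0}$ (after, in the base case, stripping the leading $\bout x{r}$ provided by $\encoN x$) consumes exactly that prefix and leaves precisely the right-hand side of the $\equiv$ line. Here the freshness convention of Remark~\ref{r:bn} guarantees the bound name $r_0$ does not clash, and the $\equiv$ is genuine structural congruence since we are only rearranging restrictions and parallel components.

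For the $\wbPI$ line, the key observation is that $\OUTU n{r_n}{r_0}{\encoN{M_1}\cdots\encoN{M_n}}{}$ differs from the displayed process only in that the latter carries, in the $i$-th relay, an extra continuation name $r'_i$ together with a link $\link{r'_i}{r_i}$, whereas the former passes $r_i$ directly. I would peel off the links one at a time: for each $i$, the subterm $\res{r'_i}\bigl(\bopi v i{x_i,r'_i}.P \mid \link{r'_i}{r_i}\bigr)$ — where $\link{r'_i}{r_i}$ on continuation names is, by the definition in Section~\ref{s:cbnNEW}, $\inp{r'_i}{c}.\bout{r_i}{d}.\link d c$ — is related by the standard link-elimination law (a $\contr$-law of the kind quoted around Lemma~\ref{l:opt_sound}, valid since $r'_i$ is restricted and used linearly) to the process with $r'_i$ renamed to $r_i$ throughout and the link discarded. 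Iterating over $i=1,\ldots,n$ and invoking the congruence of $\wbPI$ (and the precongruence of $\expa$, if one prefers to track $\tau$-counts) to perform these rewrites underneath the outer restriction $\res{r_1,\ldots,r_{n-1}}$ yields exactly $\OUTU n{r_n}{r_0}{\encoN{M_1}\cdots\encoN{M_n}}{}$, modulo the harmless identification of the argument servers $!\inp{x_i}{q}.\encoNa{M_i}{q}$ with $!\inp{x_i}{q}.\app{\encoN{M_i}}{q}$.

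The main obstacle I anticipate is bookkeeping rather than conceptual: one must be careful that the link-elimination law is applied in a context where it is sound, i.e.\ the restricted name $r'_i$ occurs exactly once in input (in the link) and exactly once in output (in the relay), which is exactly the linear-receptiveness discipline for continuation names recalled in Section~\ref{s:fapiI}; and one must check that the simultaneous presence of all $n$ relays and servers does not create unintended interactions (it does not, because each $r_i$ and $r'_i$ is used only between one relay and the next, and the $x_i$ are fresh). A secondary subtlety is that the link on continuation names is itself ``one-shot but re-links'' ($\inp a c.\bout b d.\link d c$ rather than a replicated forwarder), so the cancellation actually leaves behind a renamed link on the fresh pair $d,c$; but since those names are immediately restricted and the argument $M_i$ is run under its own continuation, a second application of the same law (or an appeal to the transitivity/congruence of $\contr$) cleans this up. Given that all the ingredients — link laws, congruence, structural manipulations — are already available from the cited literature and the earlier sections, I would present this as a short induction on $n$ with the inductive step being exactly one relay-plus-link elimination.
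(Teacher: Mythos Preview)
The paper states this lemma without proof, so there is nothing to compare against directly; your overall plan (unfold the encoding, then eliminate links) is the natural route and matches the spirit of the surrounding material.

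There is, however, a concrete gap in your treatment of the $\equiv$ line. You claim that after firing $\bout x{r_0}$ the derivative is, up to structural congruence, the displayed parallel product of $n$ relays and $n$ unguarded servers. That is not so, for two reasons. First, the encoding of the head variable leaves behind a residual link: unfolding $\encoNa{x}{q}=\bout x{r}.\link r q$ inside the innermost application, the output step produces $\link{r_0}{q}$ under an extra restriction $\res q$, and the first relay listens on $q$, not on $r_0$; absorbing this link requires Lemma~\ref{l:sk}, not structural manipulation. Second, in the encoding of application each argument server $!\inp{x_i}{q}.\encoNa{M_i}{q}$ sits \emph{guarded} under the output prefix $\bout{v_i}{x_i,r'_i}$ that binds $x_i$; it is not a top-level parallel component, and pulling it out is again a semantic step, not $\equiv$. (In fact the displayed right-hand side, and the definition of $\qOUT$, have a visible scoping anomaly: $x_i$ is bound by the output in one component yet used freely in another.) The analogous call-by-value result, Lemma~\ref{l:opt_stuck}, is stated with expansion $\contr$ rather than $\equiv$, and that is what you should aim for here as well. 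Once you relax the first relation to $\contr$, your induction on $n$ together with Lemma~\ref{l:sk} goes through, and the $\wbPI$ step then works exactly as you describe.
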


The two lemmas below are known results about encoding of call-by-name $\lambda$-calculus
into \piI\  \cite{San95i}.

\begin{lemma} 
\label{l:sk}
 Let $M $ be a  $\lambda$-term.
$ $ 
\begin{enumerate}
\item  If  $a,b$ and $c$  are distinct names of the same type,  then 
  $\res b ( \link ab | \link bc ) \wb \link ac$.

\item  If $x$ and $y$ are distinct  trigger  names and $y$ is not free in
$M$, then
  $\res{ x} (\link  xy  | \encoNa M r ) \wb \encoNa{ M
\sub y x} p$.

\item  If $p$ and $r$ are distinct  continuation names, then 
 $\res{ r} (\link  r p | \encoNa M r ) \wb \encoNa  M p$. 

\end{enumerate} 
\end{lemma}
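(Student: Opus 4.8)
The statement to prove is Lemma~\ref{l:sk}, which collects three algebraic laws about \emph{link} processes in \piI. These are classical facts about Milner-style encodings, and the plan is to prove each item by exhibiting an appropriate bisimulation (or using already-available up-to techniques), exploiting the replicated/linear structure of the links as defined in Section~\ref{s:cbnNEW}.

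\textbf{Item (1): transitivity of links on continuation names.} Recall that for continuation names, $\link{}{} \defi \abs{a,b}\inp a{\tilc}.\bout b{\tild}.\link{\tild}{\tilc}$ (the ``otherwise'' clause). The plan is to show that the relation $\R$ consisting of all pairs $(\res b(\link ab \mid \link bc),\ \link ac)$, together with the pairs obtained after one synchronisation step — i.e.\ $(\res b(\bout b{\tild}.\link{\tild}{\tilc'} \mid \bout b{\tild''}.\link{\tild''}{\tilc}),\ \bout c{\tild}.\link{\tild}{\tilc'})$ and their continuations, closed under the obvious renamings — is a bisimulation up to $\wbPI$ (Theorem~\ref{t:up-to}) or directly up to context and expansion (Theorem~\ref{t:up-to-con-expa}). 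Concretely: an input $\inp a{\tilc}$ from the left triggers the corresponding input from the right; then the restricted link $\link bc$ on the left forces an internal communication on $b$, after which the two sides are again a pair of the expected form (the freshness conventions of Remark~\ref{r:bn} ensure $b$ is fresh and plays no other role). Because only a single $\tau$ separates the two configurations, this is cleanly an up-to-expansion argument. This law already appears in \cite{San95i}, so one may also simply cite it; I would include the short bisimulation for completeness.

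\textbf{Items (2) and (3): link absorption into a term encoding.} These say that prefixing the encoding $\encoNa Mr$ with a link $\res x(\link xy \mid \cdot)$ on a trigger name, respectively $\res r(\link rp \mid \cdot)$ on a continuation name, amounts to renaming. Here the cleanest route is \emph{structural induction on $M$}, using the definition of $\encoN{-}$ from Figure~\ref{f:namePI} and the equational laws for links; for the trigger-name case (2) one additionally uses item (1) to collapse chains of links that arise in the $MN$ case (the argument subterm sits under $!\inp xr.\encoNa Nr$, so a link on $x$ must be pushed through the replication, producing for each copy a composition of two links on continuation names, which item (1) contracts). For (3), the link on the continuation name interacts with the single output at the head of $\encoNa Mp$: in the $\lambda x.M$ case with $\bout pv$, in the variable case with $\bout xr$, and in the application case with the internal continuation $q$; in each case one peels off the head output, applies the induction hypothesis (and, for the application, item (1) to the residual continuation link), and reassembles. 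As with (1), all of these are known from \cite{San95i}, and the intended proof is a straightforward transplant of that argument into \piI, so citing \cite{San95i} and sketching the induction is the expected level of detail. Since $M$ ranges over ordinary $\lambda$-terms (not the extended $\LaoP$), no new syntactic cases intervene.

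\textbf{Main obstacle.} The delicate point is item (2): pushing a link past a replicated input. One must be careful that $\res x(\link xy \mid !\inp xr.\encoNa Nr)$ behaves like $!\inp yr.\encoNa Nr$ rather than something weaker — the link is replicated precisely so that each external request on $y$ spawns a fresh copy, and one needs the bisimulation candidate to account for the transient states in which several copies of $\link{}{}$ and several activated copies of $\encoNa Nr$ coexist, each pair of residual links on continuation names being contractible by item (1). Managing this ``cloud of copies'' cleanly is where an up-to-context-and-expansion formulation (Theorem~\ref{t:up-to-con-expa}) earns its keep, letting us strip off the already-spawned, independent copies and reason only about the core pair. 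Everything else is routine case analysis governed by the shape of $\encoN{-}$.
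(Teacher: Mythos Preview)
The paper does not actually prove this lemma: the sentence immediately preceding it reads ``The two lemmas below are known results about encoding of call-by-name $\lambda$-calculus into \piI\ \cite{San95i}'', and no argument is supplied. Your sketch is therefore not so much a different route as an elaboration the paper chose to omit; the strategy you outline (bisimulation up-to for the link-transitivity law, structural induction on $M$ for the absorption laws, using item~(1) to collapse residual link chains and up-to-context to manage replicated copies) is the standard one from \cite{San95i} transplanted to \piI, and is correct.

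One small remark on your item~(1) sketch: you treat only the ``otherwise'' (linear) clause of $\fwd{}{}$, i.e.\ continuation and value names. The statement is for names ``of the same type'', so it also covers trigger (variable) names, where the link is replicated; there the bisimulation candidate must be closed under arbitrarily many spawned copies, exactly as in the obstacle you flag for item~(2). This is still routine with up-to-context-and-expansion, but worth saying explicitly. (You may also have noticed that item~(2) as printed has $r$ on the left and $p$ on the right; treating both as the same continuation parameter, as you implicitly do, is the intended reading.)
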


\begin{lemma} 
\label{l:beta_cbn}
If $M \longrightarrow M'$ then $\pairP {\encoNa M p } \longrightarrow \: {\contr}\; \pairP
{\encoNa        {M'} p }$
\end{lemma}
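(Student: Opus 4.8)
The final statement is Lemma~\ref{l:beta_cbn}: if $M \longrightarrow M'$ then $\pairP{\encoNa M p} \longrightarrow {\contr}\; \pairP{\encoNa{M'} p}$.

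\medskip

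The plan is to proceed by induction on the derivation of $M \longrightarrow M'$ in the call-by-name reduction $\redn$, of which there are exactly two rules: the $\beta$-rule $(\lambda x.N)L \redn N\subst{x}{L}$ and the context rule $M \redn M'$ implies $E[M] \redn E[M']$. The base case is the $\beta$-step, and the inductive case handles reduction under an evaluation context $E$, which for call-by-name is built from the single production $E := EL \mid \contexthole$, i.e.\ a spine of applications. So an evaluation context has the shape $[\cdot]L_1 \cdots L_n$, and $E[M] = M L_1 \cdots L_n$.

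\medskip

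For the base case, I would unfold the encoding of $(\lambda x.N)L$ using the clauses of Figure~\ref{f:namePI}. We have $\encoNa{(\lambda x.N)L}{p} = \res q(\encoNa{\lambda x.N}{q} \mid \inp q v.\bout v{x,p'}.(\link{p'}{p} \mid !\inp x r.\encoNa{L}{r}))$, and $\encoNa{\lambda x.N}{q} = \bout q v.\inp v{x',q'}.\encoNa{N\subst{x'}{x'}}{q'}$ (up to $\alpha$-conversion, so $= \bout q v.\inp v{x,q'}.\encoNa{N}{q'}$). One then performs the deterministic communication on $q$, then the deterministic communication on $v$, leaving (after garbage-collecting the link on $p'$ via Lemma~\ref{l:sk}(3)) something of the form $\encoNa{N}{p'}\{\cdots\} \mid !\inp x r.\encoNa{L}{r}$ with $p'$ forwarded to $p$, and I need to show this $\contr$-expands $\encoNa{N\subst{x}{L}}{p}$. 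This last equivalence is precisely a substitution lemma for the call-by-name encoding: replicated input-guarded copies of $\encoNa{L}{r}$ at name $x$ implement the substitution $\subst{x}{L}$; this is exactly the content of (a suitable strengthening of) Lemma~\ref{l:sk}(2), combined with the congruence and transitivity of $\contr$. I would need to make sure the algebraic laws used here — eliminating a bound output/input redex — are exactly the expansion laws of the form $\res a(\inp a{\tilx}.P \mid \bout a{\tilx}.Q) \contr \res a(P \mid Q)$ already cited in the paper, so that the two $\tau$-steps consumed are accounted for by $\contr$ rather than merely by $\wbPI$.

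\medskip

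For the inductive case, suppose $M = E[N]$ with $N \redn N'$ and $M' = E[N']$. Writing $E = [\cdot]L_1\cdots L_n$ I would use the structural fact (implicit in the encoding of applications, cf.\ the $\OUTU{}{}{}{}{}$-style decompositions in Appendix~\ref{a:cbn}) that $\encoNa{N L_1 \cdots L_n}{p}$ has, up to $\wbPI$, the head subterm $\encoNa{N}{q}$ placed in parallel with a context that does not consume $q$ until $\encoNa N q$ signals a value; crucially this surrounding context is a static context in which the hole sits under no prefix relative to the transitions $\encoNa N q$ performs first, so a $\tau$-step of $\encoNa N q$ lifts to a $\tau$-step of the whole process. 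Then by the induction hypothesis $\encoNa N q \longrightarrow \contr \encoNa{N'}{q}$, and since $\contr$ is a precongruence (stated in the paper, just after the definition of expansion) and is preserved by the surrounding static context, we get $\encoNa{E[N]}{p} \longrightarrow \contr \encoNa{E[N']}{p}$.

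\medskip

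The main obstacle I expect is the base case: pinning down the substitution lemma for the call-by-name encoding at the right level of granularity — i.e.\ that it holds \emph{up to expansion $\contr$} and not merely up to weak bisimilarity — and checking that the handful of communications unfolded in reducing $\encoNa{(\lambda x.N)L}{p}$ really do correspond to expansion laws (bound-name communication elimination, link absorption from Lemma~\ref{l:sk}) that ``count'' $\tau$'s correctly. The context case is comparatively routine given the precongruence of $\contr$; the only care needed there is to argue that the head position of an evaluation context encodes to a genuine static-context hole so that internal steps propagate, which follows from inspecting the clauses for $\encoN{MN}$ and $\encoN{E[x]}$ (Lemma~\ref{l:Ex}).
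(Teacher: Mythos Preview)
The paper does not actually give a proof of this lemma: it is stated as one of two ``known results about encoding of call-by-name $\lambda$-calculus into \piI'' and is simply cited from \cite{San95i}. So there is no in-paper argument to compare against, and your outline is precisely the standard proof one finds in that literature: induction on the reduction derivation, with the $\beta$-step handled by unfolding the application encoding, performing the deterministic communications on $q$ and then on $v$, absorbing the link on the continuation name, and finally invoking a substitution lemma; the context case goes through because the head redex encodes into a genuinely static (restriction/parallel) context and $\contr$ is a precongruence.

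One correction worth flagging: the substitution fact you need in the base case is not Lemma~\ref{l:sk}(2). That lemma is specifically about \emph{links} implementing \emph{variable-for-variable} substitution, i.e.\ $\res x(\link xy \mid \encoNa M r) \wb \encoNa{M\sub yx}{p}$. What you actually need is the \emph{replicated-trigger} substitution lemma
\[
  \res x\big(\,!\inp x r.\encoNa L r \;\big|\; \encoNa N p\,\big) \;\contr\; \encoNa{N\sub Lx}{p},
\]
which is a different statement with a different (inductive on $N$) proof; calling it a ``strengthening'' of Lemma~\ref{l:sk}(2) understates the gap. Relatedly, Lemma~\ref{l:sk} in the paper is stated only up to $\wbPI$, whereas here you need these auxiliary facts up to $\contr$; you correctly identify this as the delicate point, and indeed the versions in \cite{San95i} do hold up to expansion (they rest on the same single-redex law $\res a(\inp a{\tilx}.P \mid \bout a{\tilx}.Q) \contr \res a(P\mid Q)$ the paper invokes elsewhere). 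With those two lemmas in hand at the right granularity, your plan goes through.
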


\subsection{Operational correspondence}
\label{ss:oc_cbn}

We report here some results concerning the operational correspondence between 
OGS  and \piI terms.

\begin{lemma}[Actions from encodings of 
$\LaoP$ terms]
\label{l:opt_op_cbn}
For any
  $\LL\in\LaoP$  and $\phi $ with 
 $p,\fv \LL \subseteq \phi$,  term  $\encoSPN L p$ has exactly
  one immediate transition, and exactly one of the following clauses holds:
\begin{enumerate}
\item $\encoSPN \LL p\arr{\bout pv} P $  
and $\LL$ is a value (hence of the form either  $ \lambda x. M $ or
$v$ ),  
 with $P =\encPPN { \pmap{v}{L}}{\phi \uplus \{v\}}$;

\item  
 $\encoSPN \LL p\arr{\bout x {q}}P$  
and $L =  
{E [x]}$, for some $E$, and 
$$P \contr  
  \encPPN { \pmap q{(E,p)}} {\phi \uplus \{q\}}
$$

\item 
 $\encoSPN \LL p\arr{\bout v {x,r}}P$  
and $L =  
{E [v M]}$, for some $E,M$ 
and 
$$P \equiv 
  \encPPN { \pmap r{(E,p)}\cdot  \pmap x M } {\phi \uplus \{x,r\}}
$$

\item 
$\encoSPN M p\arr\tau P$ and there is $N$  with $M\red N$ and
  $$P\exn \encoSPN  N p.$$ 
\end{enumerate}
\end{lemma}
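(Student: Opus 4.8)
\emph{Proof plan.} The plan is a structural induction on the extended term $\LL \in \LaoP$, following the call-by-value argument of Lemma~\ref{l:opt_op}. The name-support $\phi$ plays no role in the $\pi$-term itself; it merely records the names already in use, so that the bound name of the produced action can be taken fresh (Remark~\ref{r:bn}). I will carry $\phi$ along as pure bookkeeping, enlarging it by the freshly introduced names at each step.

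For the base cases the unique transition is read off the encoding of Figure~\ref{f:namePI}. If $\LL = v$, then $\encoSPN v p = \bout p w . \link w v$, whose only immediate transition is $\bout p w$, to $\link w v \equiv \encPPN{\pmap w v}{\phi \uplus \{w\}}$: clause~1. If $\LL = \lambda x.M$, then $\encoSPN\LL p = \bout p v . \inpw v {x,q} . \encoNa M q$, with unique transition $\bout p v$ to $\inpw v{x,q}.\encoNa M q = \encoN{\pmap v{\lambda x.M}}$: clause~1. If $\LL = x$, then (using Lemma~\ref{l:Ex} with $E = \contexthole$, or directly) $\encoSPN x p = \bout x r . \link r p$, with unique transition $\bout x r$ to $\link r p \equiv \encoN{\pmap r{(\contexthole, p)}}$: clause~2. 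If $\LL = E[vM]$, then $\encoSPN\LL p = \bout v{x,r}.(\encoN{\pmap r{(E,p)}} \mid !\inpw x q.\encoNa M q)$, with unique transition $\bout v{x,r}$ to $\encPPN{\pmap r{(E,p)} \cdot \pmap x M}{\phi \uplus \{x,r\}}$: clause~3. In each of these the term is a single guarded prefix, so uniqueness of the transition is immediate.

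The inductive case is $\LL = M_1 M_2$, where $\encoSPN\LL p = \res q(\encoNa{M_1}q \mid \inpw q v.\bout v{x,p'}.(\link{p'}p \mid !\inpw x r.\encoNa{M_2}r))$. As the right-hand component is input-guarded on the restricted name $q$, every immediate transition of $\encoSPN\LL p$ stems from $\encoNa{M_1}q$, which by induction hypothesis has exactly one; hence so does $\encoSPN\LL p$. I then case-split on which clause holds for $\encoNa{M_1}q$. Clause~3 cannot arise, since the ordinary $\lambda$-term $M_1 \in \Lao$ has no value names. If $M_1$ is a value, i.e.\ $M_1 = \lambda x_1.M_1'$ (clause~1 for $M_1$), its output on the restricted $q$ is consumed, the unique transition is the resulting $\tau$, and Lemma~\ref{l:beta_cbn} applied to $M_1 M_2 \redn M_1'\subst{x_1}{M_2}$ shows the residual to be $\contr \encoSPN{M_1'\subst{x_1}{M_2}}p$: clause~4; here the internal communication still pending in the residual (on the now private value name) is what makes $\contr$, rather than a plain equivalence, the right relation. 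If $M_1 \redn M_1'$ (clause~4 for $M_1$), its $\tau$ lifts and the residual is $\contr \res q(\encoNa{M_1'}q \mid \inpw q v.\ldots) = \encoSPN{M_1'M_2}p$, via $M_1M_2 \redn M_1'M_2$: clause~4. Finally, if $M_1 = E_1[x_1]$ (clause~2 for $M_1$), the output $\bout{x_1}{q_1}$ has free subject $x_1 \neq q$ and lifts; putting $E \defeq E_1 M_2$, so that $M_1M_2 = E[x_1]$, the residual is $\contr \res q(\encoN{\pmap{q_1}{(E_1,q)}} \mid \inpw q v.\ldots)$, which one must show equals $\encoN{\pmap{q_1}{(E,p)}}$ up to $\contr$: clause~2.

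The main obstacle is exactly this last check. The encoding $\encoN{\pmap q{(E,p)}}$ is defined by cases on whether $E$ has a \emph{first} argument ($E = E'[\contexthole M]$) or is the empty context, and the recursive call strips that first argument off; the $\pi$-residual instead exhibits the structure of $E_1$ with $M_2$ appended at the \emph{tail}. Reconciling the two requires an auxiliary lemma, proved by induction on $E_1$ and using the forwarder laws of Lemma~\ref{l:sk} (to eliminate the continuation link at $q$ and to slide links past replicated inputs), roughly of the form $\res q(\encoN{\pmap{q_1}{(E_1,q)}} \mid \inpw q v.\bout v{x,p'}.(\link{p'}p \mid !\inpw x r.\encoNa{M_2}r)) \contr \encoN{\pmap{q_1}{(E_1 M_2, p)}}$. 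A subsidiary point, already visible in the clause-1 subcase, is that $\contr$ and not merely $\wbPI$ is genuinely needed throughout, so each algebraic step invoked — the communication laws, the link laws in the form used for Lemma~\ref{l:beta_cbn} — must be checked to be an expansion; this holds because none of them erases $\tau$-transitions.
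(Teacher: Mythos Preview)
The paper does not give an explicit proof of this lemma; it is stated and left to the reader, with Lemma~\ref{l:Ex} (the call-by-name analogue of the call-by-value Lemma~\ref{l:opt_stuck}) as the intended auxiliary tool. Your argument is correct and follows exactly the pattern of the call-by-value Lemma~\ref{l:opt_op}: structural induction on $L$, with the only non-trivial case being an application whose head reduces to a variable.

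Two remarks that may sharpen your write-up. First, the ``auxiliary lemma'' you isolate in the clause-2 subcase is precisely (the inductive step of) Lemma~\ref{l:Ex}; you could cite it directly rather than re-derive it, reading the paper's ``$=$'' there as $\contr$, consistently with the call-by-value Lemma~\ref{l:opt_stuck} and with the $\contr$ in clause~2 of the present lemma. Second, your worry about a ``first argument vs.\ tail'' mismatch is unfounded: the innermost argument of $E_1 M_2$ coincides with the innermost argument of $E_1$, so both $\encoN{\pmap{q_1}{(E_1 M_2,p)}}$ and your residual $\res q(\encoN{\pmap{q_1}{(E_1,q)}}\mid R)$ peel arguments in the \emph{same} order. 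The induction on $E_1$ is therefore routine; the only real work is the base case $E_1=\contexthole$, where one absorbs the continuation link at $q$ and then the value-name link, and here your observation that the laws of Lemma~\ref{l:sk} hold as expansions (not merely as $\wbPI$) is exactly what is needed.
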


\ifcutCBNLMCS 
\begin{proof}
\dsJ{I have checked (and thus rectified the def of $ \pmap q{(E,p)}$ in the encoding )} 
\end{proof}
\fi

\begin{lemma}[Actions from \piI encodings of  configurations]
\label{l:opt_con_cbn}
If  $\encoN{\conf {\gamma, \phi }} \arrPI \mu P$,
 then we have three possibilities:
\begin{enumerate}

\item $\mu $ is an input $v(x,q)$ and $\gamma 
= \pmap{v}{V}\cdot\gamma' $, and 
$$P \wb 
\encoN {
\conf{ V x,  q, \gamma', \phi \uplus \{ q,x\}}
} ;$$

\item $\mu $ is an input $x(q)$ and $\gamma 
= \pmap{x}{M}\cdot\gamma' $, and 
$$P  =
\encoN {
\conf{ M,  q, \gamma, \phi \uplus \{ q\}}
} ;$$

\item $\mu $ is an input $p(v)$ and $\gamma 
= \pmap{p}{E,q}\cdot\gamma' $, and 
$$P  = 
\encoN {
\conf{ E[v],  q, \gamma, \phi \uplus \{ v\}}
} .$$ 
\end{enumerate} 
\end{lemma}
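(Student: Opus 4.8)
\textbf{Proof plan for Lemma~\ref{l:opt_con_cbn}.} The statement describes the possible input transitions from $\encoN{\conf{\gamma,\phi}}$, which by definition is $\pairP{\encoN\gamma}$, and $\encoN\gamma$ is a parallel composition of the encodings of the individual entries of $\gamma$. The plan is to proceed by case analysis on which entry of $\gamma$ provides the input prefix that fires. Since the three kinds of entries are $\pmap{v}{V}$, $\pmap{x}{M}$, and $\pmap{q}{(E,p)}$, and the encoding of each entry begins (up to the shape of $E$ or $V$) with exactly one unguarded input prefix — respectively at a value name $v$, at a variable name $x$ (replicated), and at a continuation name $q$ — one reads off immediately that $\mu$ must be of one of the three listed forms, and that the relevant entry is the one whose name is the subject of $\mu$. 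This uses the freshness convention of Remark~\ref{r:bn} for the bound names of $\mu$.

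First I would treat case (1), $\gamma = \pmap{v}{V}\cdot\gamma'$. Here $\encoN{\pmap{v}{V}\cdot\gamma'}$ is either $\inp v{x,q}.\encoNa M q \mid \encoN{\gamma'}$ (when $V = \lambda x.M$) or $\link v w \mid \encoN{\gamma'}$ (when $V = w$). In the first subcase the transition $\arrPI{v(x,q)}$ yields directly $\encoNa M q \mid \encoN{\gamma'}$, which is $\encoN{\conf{Vx,q,\gamma',\phi\uplus\{q,x\}}}$ after one $\redn$ step contracting $(\lambda x.M)x$; the remaining small gap between $\encoNa{Vx}q$ and $\encoNa{M}q$ is absorbed by the use of $\wb$ rather than equality (this is essentially Lemma~\ref{l:VAL}, transported to call-by-name). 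In the second subcase, firing the link $\link v w \arr{v(x,q)}$ produces $\bout w{x',q'}.(\link{q'}{q}\mid\link{x'}{x})$, which by the call-by-name analogue of Lemma~\ref{l:sk} is bisimilar to $\encoNa{wx}q$ in the appropriate context; here too the $\wb$ in the statement is what makes the claim go through. For case (2), $\gamma = \pmap{x}{M}\cdot\gamma'$, the encoding contributes $!\inp x q.\encoNa M q$, and firing $\arrPI{x(q)}$ gives $\encoNa M q \mid \encoN{\gamma}$ — note $\gamma$ rather than $\gamma'$, since the replication leaves a residual copy — which is exactly $\encoN{\conf{M,q,\gamma,\phi\uplus\{q\}}}$, hence the equality (not just $\wb$). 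For case (3), $\gamma = \pmap{q}{(E,p)}\cdot\gamma'$, I would invoke the definition of $\encoN{\pmap q{(E,p)}\cdot\gamma}$ together with the identity $\encoN{E[v]}$ displayed just before Lemma~\ref{l:Ex}: in both the $E = \contexthole$ case (where $\encoN{\pmap q{(E,p)}\cdot\gamma} = \link q p\mid\encoN\gamma$ and firing gives $\bout p w.\link w v$-style behaviour matching $\encoNa v p$) and the $E = E'[\contexthole M]$ case, the residual after the input $p(v)$ is syntactically (or up to the trivial coercions) $\encoN{\conf{E[v],q,\gamma,\phi\uplus\{v\}}}$.

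The main obstacle I anticipate is bookkeeping rather than conceptual: making sure that in each subcase the residual process is matched to the right configuration encoding, in particular getting the environment component right (whether a replicated entry survives, whether the continuation name is consumed, which fresh names are added to $\phi$), and correctly distinguishing where strict syntactic equality holds from where only $\wb$ (or $\contr$, as in the accompanying Lemma~\ref{l:opt_op_cbn}) is available — the $\lambda x.M$ value subcase of (1) and the link-unfolding subcase being precisely the places where one must drop to $\wb$. These facts are all direct consequences of the encoding clauses in Figure~\ref{f:namePI} and of Lemmas~\ref{l:Ex} and~\ref{l:sk}, so no new machinery is needed; the proof is a finite case check.
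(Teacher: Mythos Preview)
Your case analysis is correct and is precisely the natural approach; the paper itself gives no explicit proof of this lemma, treating it (like its call-by-value counterpart Lemma~\ref{l:opt_con}) as a routine inspection of the encoding clauses in Figure~\ref{f:namePI}.

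Two small remarks. First, your pointer to Lemma~\ref{l:VAL} in the $V=\lambda x.M$ subcase of~(1) is slightly off: that lemma is call-by-value-specific; in call-by-name the gap between $\encoNa{M}q$ and $\encoNa{(\lambda x.M)x}{q}$ is closed directly by Lemma~\ref{l:beta_cbn} (one $\beta$-step), which you do mention. Second, in case~(3) the residual process after $p(v)$ is $\encoNa{E[v]}{q}\mid\encoN{\gamma'}$, i.e.\ the continuation entry is consumed (the input at $p$ is not replicated); so syntactically one lands on the configuration with environment $\gamma'$ rather than $\gamma$ --- the statement as printed appears to carry a typo here, consistent with the \AOGS rule (OA) which drops the $p$-entry. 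Your argument is unaffected.
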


\begin{lemma}[from OGS to \piI, strong transitions]
\label{l:opcorrCON_cbn}
$ $ \begin{enumerate}

\item \caserule{IOQ}  If $F \arr{\initName (p)} F'$ then 
$\encoN{F } \arr  {\initName (p)}  
 \encoN{F'}$ ;

\item  \caserule{Int} If $F \longrightarrow F'$ then 
$ \encoN F \longrightarrow  \contr \encoN {F'}$; 

\item  \caserule{PA} If $F \arr{\bout pv} F'$ then 
$ \encoN F \arr{\bout pv}  \equiv  \encoN {F'}$; 
\item  \caserule{PVQ} If $F \arr{\bout v {y,q}} F'$ then 
$ \encoN F \arr{\bout v {y,q}}   \equiv  \encoN {F'}$; 
\item  \caserule{PTQ} If $F \arr{\bout x {q}} F'$ then 
$ \encoN F \arr{\bout x {q}}  \contr  \encoN {F'}$; 
\item  \caserule{OA} If $F \arr{\inp  {q} v} F'$ then 
$ \encoN F\arr{\inp  {q} v}    \encoN {F'}$; 
\item  \caserule{OVQ} If $F \arr{\inp   v{y,q}} F'$ then 
$ \encoN F\arr{\inp   v{y,q}}  \wb  \encoN {F'}$; 
\item  \caserule{OTQ} If $F \arr{\inp   x{q}} F'$ then 
$ \encoN F\arr{\inp   x{q}}    \encoN {F'}$ 
\end{enumerate}
\end{lemma}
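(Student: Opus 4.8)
The statement (Lemma~\ref{l:opcorrCON_cbn}) is the operational correspondence for strong transitions between the call-by-name OGS and its \piI encoding, organised case by case following the rules of the OGS LTS of Figure~\ref{fig:scbn-lts}. The plan is to proceed by a case analysis on the rule used to derive the OGS transition $F \arr\mu F'$, and in each case to compute the corresponding \piI transition of $\encoN F$ directly from the definition of the encoding in Figure~\ref{f:namePI}, using the auxiliary ``shape of transitions'' lemmas, namely Lemma~\ref{l:opt_op_cbn} (for the behaviour of $\encoSPN L p$ when the configuration is active, i.e.\ has a running extended $\lambda$-term) and Lemma~\ref{l:opt_con_cbn} (for the behaviour of $\encoN{\conf{\gamma,\phi}}$ when the configuration is passive). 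I would also invoke Lemma~\ref{l:beta_cbn} for the internal $\tau$-steps and Lemma~\ref{l:sk} and Lemma~\ref{l:Ex} whenever a link process needs to be absorbed or an evaluation-context encoding needs to be unfolded.

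\textbf{Key steps, in order.} First, the two ``easy'' cases: \caserule{IOQ} is immediate because $\encoN{\initconf{\phi}{M}}$ is by definition $\encoN M$ (an abstraction), whose action $\arr{(p)}$ gives exactly $\encoNa M p$, the encoding of $\conf{M,p,\emptymap,\phi\uplus\{p\}}$; and \caserule{Int} is precisely Lemma~\ref{l:beta_cbn} applied inside the parallel composition with $\encoN\gamma$, using that $\longrightarrow$ and $\contr$ are preserved by the static context $\cdot \mid \encoN\gamma$. Next, the three Player-question/answer cases on active configurations, \caserule{PA}, \caserule{PVQ}, \caserule{PTQ}: here $\encoN F = \encoNa{\LL}{p}\mid\encoN\gamma$, and I read off the unique visible transition of $\encoSPN{\LL}{p}$ from Lemma~\ref{l:opt_op_cbn} — clause (1) gives \caserule{PA} with the residual $\encP{\pmap v V}\mid\encoN\gamma$ matching $\encoN{F'}$ up to $\equiv$; clause (3) gives \caserule{PVQ}, again up to $\equiv$ because the residual is literally the encoding of the extended environment $\pmap r{(E,p)}\cdot\pmap x M$; clause (2) gives \caserule{PTQ}, this time only up to $\contr$ since the unfolding of $\encoN{\pmap q{(E,p)}}$ out of $\encoN{E[x]}$ contracts one deterministic communication (cf.\ Lemma~\ref{l:Ex}). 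Finally the three Opponent cases on passive configurations, \caserule{OA}, \caserule{OVQ}, \caserule{OTQ}: here $\encoN F = \encoN{\conf{\gamma,\phi}}$, and Lemma~\ref{l:opt_con_cbn} tells me exactly which input each of $\pmap p{(E,q)}$, $\pmap v V$, $\pmap x M$ in $\gamma$ offers and the shape of the residual. \caserule{OTQ} matches on the nose (the replicated input $!\inp x q.\encoNa M q$ spawns a fresh copy $\encoNa M q$ leaving $\encoN\gamma$ untouched, which is exactly $\encoN{F'}$); \caserule{OA} matches on the nose for the same structural reason; \caserule{OVQ} matches only up to $\wbPI$, because firing $\inp v{x,q}.\encoNa M q$ from $\encoN{\pmap v{\lambda x.M}}$ yields $\encoNa M q$ whereas $\encoN{F'}$ contains $\encoNa{(\lambda x.M)x}{q}$, and these differ by a $\beta$-step — this is the call-by-name analogue of Lemma~\ref{l:VAL}, and is precisely why $\wbPI$ rather than $\contr$ appears in that clause.

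\textbf{Main obstacle.} The genuinely delicate point is the bookkeeping in the question cases where the residual involves an evaluation context: when $\LL = E[x]$ or $\LL = E[vM]$, the encoding $\encoN{E[\cdot]}$ is defined by the nested-context clauses of Figure~\ref{f:namePI}, and one must check that after the output action the left-over process is \emph{structurally} (or up to $\contr$) equal to $\encoN{\pmap q{(E,p)}}$, i.e.\ that the way $E$ is ``stored'' as a stacked continuation in the encoding agrees with the way the OGS environment records $(E,p)$. This is exactly the content of Lemma~\ref{l:Ex} and Lemma~\ref{l:opt_op_cbn}(2)--(3), so the work is really to apply those carefully and track the name freshness side-conditions (using the convention of Remark~\ref{r:bn}); the $\contr$ versus $\equiv$ versus $\wbPI$ discrepancies in the statement are all accounted for by whether a deterministic communication is consumed, and whether a $\beta$-redex is created. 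No step requires new ideas beyond the auxiliary lemmas already in place; the proof is a (somewhat lengthy) verification.
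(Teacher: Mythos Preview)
Your proposal is correct and follows essentially the same approach as the paper: the analogous call-by-value result (Lemma~\ref{l:opcorrCON_str}) is proved by exactly the case analysis on the OGS rule that you describe, appealing to the ``shape of transitions'' lemmas (there Lemmas~\ref{l:opt_op}--\ref{l:VAL}, here Lemmas~\ref{l:opt_op_cbn}, \ref{l:opt_con_cbn}, \ref{l:Ex}, \ref{l:beta_cbn}), and the call-by-name lemma is stated in the appendix without an explicit proof, relying on the same pattern. One small inaccuracy: in the \caserule{PTQ} case the need for $\contr$ rather than $\equiv$ comes from absorbing link processes via Lemma~\ref{l:sk} (as recorded in Lemma~\ref{l:opt_op_cbn}(2)), not from contracting an internal communication, but this does not affect the structure of your argument.
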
 







\begin{theorem}[from OGS to \piI, traces]
\label{t:tracesGSpi_cbn}
If $F$  has the trace $s$ then also $ \encoN F $ has the trace $s$. 
\end{theorem}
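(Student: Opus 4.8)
\textbf{Proof plan for Theorem~\ref{t:tracesGSpi_cbn} (from OGS to \piI, traces).}

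The plan is to proceed by induction on the length of the trace $s$, reducing the statement to the single-step operational correspondence already available in Lemma~\ref{l:opcorrCON_cbn}. The base case $s = \emptytrace$ is immediate, since every configuration trivially exhibits the empty trace. For the inductive step, write $s = \ell\, s'$ where $\ell$ is a visible action (possibly the initial action $\initName(p)$), so that $F$ has the trace $s$ means there are $F_1, F'$ with $F \ArrA{\ell} F_1 \ArrA{s'} F'$. The goal is to produce a matching weak transition $\encoN F \ArrPI{\ell} Q_1$ with $Q_1$ behaviourally equivalent to $\encoN{F_1}$, and then invoke the induction hypothesis on $F_1$ and $s'$.

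The key observation that makes this go through is that each clause of Lemma~\ref{l:opcorrCON_cbn} gives a strong transition $\encoN F \arr{\ell} P$ (preceded in the internal case by reductions, which can be absorbed since $\LongrightarrowA$ on the OGS side is matched by $\LongrightarrowPI$ up to $\contr$ by clause~(2)) with $P$ related to $\encoN{F_1}$ by one of $\equiv$, $\contr$, or $\wbPI$ — in every case a relation contained in $\wbPI$. So first I would lift Lemma~\ref{l:opcorrCON_cbn} to weak transitions exactly as Theorem~\ref{t:opcorrCON_w} is derived from its strong counterpart in the call-by-value development: if $F \ArrA{\ell} F_1$, i.e. $F \LongrightarrowA F_2 \arr{\ell} F_3 \LongrightarrowA F_1$, then composing the clauses of Lemma~\ref{l:opcorrCON_cbn} yields $\encoN F \ArrPI{\ell} Q_1$ with $Q_1 \wbPI \encoN{F_1}$. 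Then, since $\encoN{F_1} \wbPI Q_1$ and $\encoN{F_1}$ has the trace $s'$ by the induction hypothesis applied to $F_1$, bisimilarity transfers traces (a weak bisimulation preserves the set of weak traces), so $Q_1$ has the trace $s'$, and hence $\encoN F$ has the trace $\ell\, s' = s$.

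The main obstacle, as in the call-by-value case, is bookkeeping the freshness conditions on bound names: the bound names of $\ell$ and of $s'$ must be fresh for $F$, $F_1$, and for the \piI processes $\encoN F$, $\encoN{F_1}$, $Q_1$ — this is exactly the convention of Remark~\ref{r:bn}, and one needs to check that the intermediate process $Q_1$ produced by the weak transition does not accidentally capture a name occurring later in $s'$. This is handled by $\alpha$-renaming the bound names in the tail $s'$ away from the finite set of names mentioned so far, which is always possible, and then noting that the operational-correspondence relations ($\equiv$, $\contr$, $\wbPI$) are all preserved under such renaming. A secondary, purely mechanical point is the case split on whether $\ell$ is the initial action $\initName(p)$ (clause~(1) of Lemma~\ref{l:opcorrCON_cbn}), a Player action (clauses~(3)–(5)), or an Opponent action (clauses~(6)–(8)), but each case is settled uniformly by the argument above since in all of them the resulting process is $\wbPI$-equivalent to $\encoN{F_1}$; no case requires separate treatment beyond quoting the corresponding clause.
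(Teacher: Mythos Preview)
Your proposal is correct and takes essentially the same approach as the paper, whose proof is the single line ``Straightforward induction on the length of the trace.'' You have simply spelled out the details: lifting Lemma~\ref{l:opcorrCON_cbn} to weak transitions (as in Theorem~\ref{t:opcorrCON_w}), then using that $\wbPI$ preserves traces to carry the induction hypothesis across the resulting $\wbPI$-equivalence.
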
 

\begin{proof}
Straightforward induction on the length of the trace. 
\end{proof}

Now the converse direction.

\begin{lemma}[from \piI to OGS, strong transitions]
\label{l:opcorr_cbnshort_piI}
$ $ \begin{enumerate}

\item   
If $\encoN{F } \arr  {\initName (p)}  
P$, 
then there is $F'$ such that 
 $F \arr{\initName (p)} F'$
 and
 $P = 
 \encoN{F'}$ ; 

\item  If 
$ \encoN F \longrightarrow P$, then 
then there is $F'$ such that 
$F \longrightarrow F'$ and  
$ P  \contr \encoN {F'}$; 

\item  If
$ \encoN F \arr{\mu} P$, 
where $\mu$ is an output,
 then 
then there is $F'$ such that 
$  F \arr{\mu} F' $ and 
$P  \contr  \encoN  {F'}$; 

\item  If
$ \encoN F \arr{\mu} P$, 
where $\mu$ is an input,  and 
$ \encoN F$ cannot  perform an output or a $\tau$,  
then there is $F'$ such that 
$  F \arr{\mu} F' $ and 
$P  \wb  \encoN  {F'}$. 
\end{enumerate}
\end{lemma}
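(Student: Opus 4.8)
\textbf{Proof plan for Lemma~\ref{l:opcorr_cbnshort_piI}.} This lemma is the converse direction of the operational correspondence for call-by-name, i.e., the analogue of Lemma~\ref{l:opcorrCON2} (call-by-value, \AOGS) and of Lemma~\ref{l:CBNpogs_short_converse} (call-by-name, \COGS). The plan is to prove each clause by a case analysis on the possible $\pi$-calculus transition of $\encoN F$, using the two \emph{shape} lemmas for the call-by-name encoding, namely Lemma~\ref{l:opt_op_cbn} (which classifies the immediate transitions of $\encoSPN L p$ for an extended $\lambda$-term $L$) and Lemma~\ref{l:opt_con_cbn} (which classifies the immediate transitions of $\encoN{\conf{\gamma,\phi}}$, i.e., of the environment part). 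The key structural fact is that $\encoN F$ is always (up to $\equiv$) of the form $\encoNa L p \mid \encoN\gamma$ when $F$ is active, or $\encoN\gamma$ when $F$ is passive (and $\encoN{\initconf{\phi}{M}}$ for an initial configuration), so any transition of $\encoN F$ originates either from the ``running term'' component or from the ``environment'' component; there is no internal communication between them because continuation names and value names are used with a single polarity in each component (the same ``cannot interact'' phenomenon exploited in Lemma~\ref{l:compat_cannot-interact}).

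First I would dispatch clause (1): an $(p)$-labelled transition is only available from $\encoN{\initconf{\phi}{M}} = \encoN M$, and by rule \caserule{abs}/\trans{abs} the unique such transition leads to $\encoNa M p$, which is exactly $\encoN{\conf{M,p,\emptymap,\phi\uplus\{p\}}}$; the matching OGS step is the (IOQ) rule, and here equality (not just $\contr$) holds, mirroring clause (1) of Lemma~\ref{l:opcorrCON_cbn}. For clause (2), a $\tau$-transition of $\encoN F$ with $F$ active: by Lemma~\ref{l:opt_op_cbn}(4) a $\tau$ from the running-term component $\encoNa L p$ corresponds to a $\redn$-step $L\redn N$ with derivative $\exn \encoNa N p$, hence to a $(P\tau)$-step of $F$; a $\tau$ arising entirely inside $\encoN\gamma$ is impossible since environment components are purely input-guarded replications/links with no unguarded output on their own subjects (so no internal redex), exactly as in the call-by-value argument for Lemma~\ref{l:opcorrCON2}(1). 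Here the relation is $\contr$, inherited from the $\exn$ in Lemma~\ref{l:opt_op_cbn}(4) and from Lemma~\ref{l:beta_cbn}. For clause (3), an output transition $\mu$ of $\encoN F$: an output on the top-level continuation name $p$, or on a variable/value name $x$ or $v$ free in the running term, comes from $\encoNa L p$, and Lemma~\ref{l:opt_op_cbn}(1)--(3) tells us $L$ is a value or of the form $E[x]$ or $E[vM]$, matching respectively the (PA), (PTQ), (PVQ) rules of Figure~\ref{fig:scbn-lts}; in cases (1) and (3) of Lemma~\ref{l:opt_op_cbn} the derivative matches $\encoN{F'}$ up to $\equiv$, and in case (2) (the $E[x]$ case, i.e.\ PTQ) only up to $\contr$, which is why clause (3) of the present lemma must state $\contr$ — consistent with clauses \caserule{PA}, \caserule{PVQ}, \caserule{PTQ} of Lemma~\ref{l:opcorrCON_cbn}. (An output can never originate from $\encoN\gamma$, which offers only inputs on O-names, so when $F$ is passive no output transition is possible, vacuously consistent with clause (3).)

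For clause (4), the delicate one, assume $\encoN F \arr\mu P$ with $\mu$ an input and $\encoN F$ unable to perform any output or $\tau$. The hypothesis forces the running-term component, if any, to be stuck at an input — but Lemma~\ref{l:opt_op_cbn} shows $\encoNa L p$ always has an output or $\tau$ available, so $F$ must in fact be passive, $\encoN F = \encoN{\conf{\gamma,\phi}}$. Then Lemma~\ref{l:opt_con_cbn} enumerates the three possible input transitions — $v(x,q)$ with $\gamma(v)=V$ (rule OVQ), $x(q)$ with $\gamma(x)=M$ (rule OTQ), $p(v)$ with $\gamma(p)=(E,q)$ (rule OA) — each corresponding to an OGS transition $F\arrC\mu F'$ (equivalently $\arrA\mu$ on the \AOGS side). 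In the OVQ case the derivative is related to $\encoN{F'}$ only up to $\wbPI$ — this is the analogue of Lemma~\ref{l:VAL} / of clause \caserule{OVQ} in Lemma~\ref{l:opcorrCON_cbn}, where instantiating the replicated encoding of a value $V$ at a fresh argument produces extra $\tau$-steps that cannot be absorbed by $\contr$ — whereas OTQ and OA give equality; in all cases we may safely state $\wb$. The \textbf{main obstacle} I expect is precisely bookkeeping these three different ``tightness'' levels ($=$, $\contr$, $\wb$) correctly across the clauses, and in particular verifying the $\wbPI$ (rather than $\contr$) in the OVQ subcase and the $\contr$ (rather than $\equiv$) in the PTQ subcase, together with checking that the ``no unguarded output / $\tau$'' hypothesis in clause (4) genuinely rules out the active case; all of this reduces to Lemmas~\ref{l:opt_op_cbn} and~\ref{l:opt_con_cbn}, so the proof is a systematic case analysis with no new ideas beyond those already used for call-by-value in Appendix~\ref{a:oc}. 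The extension from configurations to full traces (Corollary~\ref{c:CBNtraces}) then follows by the usual induction on trace length, exactly as in the proof of Corollary~\ref{c:traces}.
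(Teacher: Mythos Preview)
Your proposal is correct and follows exactly the approach the paper intends: the paper does not spell out a proof of this lemma, but presents Lemmas~\ref{l:opt_op_cbn} and~\ref{l:opt_con_cbn} immediately before it precisely so that the result follows by the case analysis you describe, mirroring the call-by-value argument for Lemma~\ref{l:opcorrCON2}. Your identification of the tightness levels ($=$ for IOQ/OA/OTQ, $\contr$ for P$\tau$/PTQ, $\wbPI$ for OVQ) matches the clause-by-clause statements of Lemma~\ref{l:opcorrCON_cbn}, and your use of Lemma~\ref{l:opt_op_cbn} to force $F$ passive in clause~(4) is the right observation.
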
 

In the modified transition system $\arrN {\mu'}$  for \piI, where output and silent transitions have
priority over input transitions, we can then derive from the above results the  full
abtractions on traces, reported in Corollary~\ref{c:CBNtraces}. 

\subsection{Concurrent Operational Game Semantics}

We have reported in the main text the operational correspondence result for strong
transitions. We can thus derive the 
operational correspondence result for weak
transitions.
 For silent transition, we proceed by
induction on the number of strong transitions involved. 

\begin{theorem}[from \COGS to \piI, weak transitions]
\label{t:CBNpogs_short_weak}
$ $ \begin{enumerate}

\item  If $F \Longrightarrowp F'$ then 
$ \encoSN F \LongrightarrowPI \; \contr \encoSN {F'}$; 

\item  If $F \ArrC{\ell} F'$  then 
$ \encoSN F \ArrPI{\ell} \;  \wbPI  \encoSN  {F'}$.
\end{enumerate}
\end{theorem}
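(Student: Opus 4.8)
\textbf{Proof proposal for Theorem~\ref{t:CBNpogs_short_weak}.}
The plan is to derive the weak-transition correspondence from the strong-transition
correspondence already established in Lemmas~\ref{l:CBNpogs_short} and
\ref{l:CBNpogs_short_converse}, exactly mirroring how, in the call-by-value development,
Theorem~\ref{t:COGS_pi} follows from Lemmas~\ref{l:opcorrCOGS} and~\ref{l:opcorrCOGS2}.
The two extra ingredients needed are purely algebraic: that $\contr$ is transitive and is a
precongruence (so in particular it is preserved by the \piI contexts arising from the
encoding of configurations), and that $\contr$ is contained in $\wbPI$, with $\wbPI$
itself transitive and closed under $\Longrightarrow$ on either side. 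These are recalled in
Section~\ref{ss:piI}/\ref{ss:upto} and hold for \piI without change.

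For part~(1), suppose $F \Longrightarrowp F'$, i.e.\ $F = F_0 \longrightarrowp F_1
\longrightarrowp \cdots \longrightarrowp F_n = F'$ for some $n \geq 0$. I would argue by
induction on $n$. The base case $n=0$ is trivial since $\contr$ is reflexive. For the
inductive step, from $F_0 \longrightarrowp F_1$ Lemma~\ref{l:CBNpogs_short}(1) gives
$\encoSN{F_0} \longrightarrowPI Q$ with $Q \contr \encoSN{F_1}$; by the induction
hypothesis $\encoSN{F_1} \LongrightarrowPI Q'$ with $Q' \contr \encoSN{F'}$. Since $\contr$
is a precongruence and, more to the point, since $Q \contr \encoSN{F_1}$ implies that every
$\tau$-sequence of $\encoSN{F_1}$ is matched by a $\tau$-sequence of $Q$ up to $\contr$
(clause~2 of the definition of expansion, iterated), we obtain $Q \LongrightarrowPI Q''$
with $Q'' \contr Q'$; composing, $\encoSN{F_0} \LongrightarrowPI Q''$ and $Q'' \contr Q'
\contr \encoSN{F'}$, so by transitivity of $\contr$ we are done.

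For part~(2), suppose $F \ArrC{\ell} F'$, say $F \Longrightarrowp F_1 \arrC{\ell} F_2
\Longrightarrowp F'$. By part~(1), $\encoSN{F} \LongrightarrowPI \contr \encoSN{F_1}$;
using the expansion again to transfer the strong $\ell$-transition supplied by
Lemma~\ref{l:CBNpogs_short}(2), $\encoSN{F_1} \arrPI{\ell} \wbPI \encoSN{F_2}$ lifts along
$\contr$ to give $\encoSN{F} \LongrightarrowPI \arrPI{\ell} \wbPI \encoSN{F_2}$; finally
part~(1) applied to $F_2 \Longrightarrowp F'$ together with $\contr \subseteq \wbPI$ and
transitivity of $\wbPI$ yields $\encoSN{F} \ArrPI{\ell} \wbPI \encoSN{F'}$, which is the
claim. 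Here I would be slightly careful to note the direction of expansion: $P \contr Q$
means $Q$ has \emph{at least as many} $\tau$-steps as $P$, and in part~(1) we have
$Q \contr \encoSN{F_1}$, i.e.\ $\encoSN{F_1}$ is the slower process, so simulating
$\encoSN{F_1}$'s moves by $Q$ uses clause~2 of expansion and possibly shortens
$\tau$-sequences; this is fine because the conclusion only asserts $\LongrightarrowPI$, not
a step count.

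The converse directions (3) and (4), if one wishes to state them as in
Theorem~\ref{t:COGS_pi}(3)--(4), are symmetric and rely on
Lemma~\ref{l:CBNpogs_short_converse} in place of Lemma~\ref{l:CBNpogs_short}, with the same
use of transitivity and of the inclusion $\contr \subseteq \wbPI$; I would phrase them by
the same induction on the number of strong $\tau$-steps witnessing a given
$\LongrightarrowPI$. The only mild obstacle is purely bookkeeping: matching a weak
$\pi$-transition $\encoSN{F} \LongrightarrowPI \arrPI{\ell} \LongrightarrowPI P$ requires
peeling off the silent prefix via Lemma~\ref{l:CBNpogs_short_converse}(1), then the visible
step via clause~(2), then the silent suffix again — and checking at each peel that the
intermediate \piI process is still (up to $\contr$, resp.\ $\wbPI$) the image of a \COGS
configuration, so that the induction hypothesis applies. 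This is exactly the shape of
argument used for call-by-value, so no new difficulty arises; the call-by-name encoding of
running terms and environments in Figure~\ref{f:namePI} is again compositional
($\encoSN{\conf{A,\gamma,\phi}} = \pairP{\encoSN{A} \mid \encoSN{\gamma}}$), which is all
that the lifting needs.
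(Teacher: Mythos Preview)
Your proposal is correct and follows exactly the paper's route: the paper's proof is the single sentence ``For silent transition, we proceed by induction on the number of strong transitions involved,'' and you have spelled that out faithfully, lifting Lemma~\ref{l:CBNpogs_short} from strong to weak via transitivity of $\contr$ and the inclusion $\contr\subseteq\wbPI$. One small slip to fix before writing it up: you have the direction of expansion reversed in your closing remark --- $P\contr Q$ (i.e.\ $Q\expa P$) means $P$, not $Q$, is the slower process, so in the inductive step $Q\contr\encoSN{F_1}$ says $Q$ has \emph{more} $\tau$-steps, and matching $\encoSN{F_1}$'s moves in $Q$ uses clause~(1) of expansion (possibly lengthening, not shortening); the conclusion you draw is still correct, only the commentary is inverted.
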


\begin{theorem}[from  \piI to \COGS, weak transitions]
\label{t:CBNpogs_short_converse_weak}
$ $ \begin{enumerate}

\item  If
$ \encoSN F \LongrightarrowPI  P$, then there is $F'$ s.t.\  
 $F \Longrightarrowp F'$ and 
$P 
\contr \encoSN {F'}$;

\item  If
$ \encoSN F \ArrPI{\ell}  P$, then there is $F'$ s.t.\  
 $F \ArrC{\ell} F'$ and 
$P 
\wbPI \encoSN {F'}$.
\end{enumerate}
\end{theorem}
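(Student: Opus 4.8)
\textbf{Proof plan for Theorem~\ref{t:CBNpogs_short_converse_weak}.}
The statement is the weak-transition converse of operational correspondence, from \piI back to \COGS, in the call-by-name setting. Since the strong-transition versions (Lemma~\ref{l:CBNpogs_short} and Lemma~\ref{l:CBNpogs_short_converse}) have already been established, the plan is to obtain the weak versions by iterating the strong ones, exactly as was done in the call-by-value case when passing from Lemmas~\ref{l:opcorrCOGS}--\ref{l:opcorrCOGS2} to Theorem~\ref{t:COGS_pi}. The main tool is that the encoding $\encoSN{-}$ of a \COGS configuration is a $\pi$-calculus process whose behaviour is tightly controlled by the strong correspondence, together with the fact that $\contr$ (expansion) is a precongruence, is transitive, is contained in $\wbPI$, and absorbs silent transitions on the left in the sense used throughout Section~\ref{s:cogs}.

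For part (1), I would proceed by induction on the number $n$ of $\tau$-steps in $\encoSN{F} \LongrightarrowPI P$. The base case $n=0$ is trivial since $\contr$ is reflexive. For the inductive step, write $\encoSN{F} \longrightarrowPI P_1 \LongrightarrowPI P$. By Lemma~\ref{l:CBNpogs_short_converse}(1) there is $F_1$ with $F \longrightarrowp F_1$ and $P_1 \contr \encoSN{F_1}$. Now $P_1 \LongrightarrowPI P$ in $n-1$ steps; since $P_1 \contr \encoSN{F_1}$, the silent moves of $P_1$ are matched by silent moves of $\encoSN{F_1}$ (a property of expansion: if $P_1 \contr Q$ and $P_1 \LongrightarrowPI P$, then $Q \LongrightarrowPI Q'$ with $P \contr Q'$), so $\encoSN{F_1} \LongrightarrowPI Q'$ with $P \contr Q'$. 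Applying the induction hypothesis to $\encoSN{F_1} \LongrightarrowPI Q'$ yields $F'$ with $F_1 \Longrightarrowp F'$ and $Q' \contr \encoSN{F'}$; transitivity of $\contr$ gives $P \contr \encoSN{F'}$, and $F \longrightarrowp F_1 \Longrightarrowp F'$ gives $F \Longrightarrowp F'$, as required.

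For part (2), decompose $\encoSN{F} \ArrPI{\ell} P$ as $\encoSN{F} \LongrightarrowPI P_1 \arrPI{\ell} P_2 \LongrightarrowPI P$. Apply part (1) to the first segment to get $F_1$ with $F \Longrightarrowp F_1$ and $P_1 \contr \encoSN{F_1}$. Since $P_1 \contr \encoSN{F_1}$ and $P_1 \arrPI{\ell} P_2$ with $\ell$ visible, expansion gives $\encoSN{F_1} \ArrPI{\ell} Q_2$ with $P_2 \contr Q_2$; but in fact because the strong correspondence is what we really need, it is cleaner to push the $\tau$-prefix into $F_1$ first and then invoke Lemma~\ref{l:CBNpogs_short_converse}(2) on the single $\ell$-step, obtaining $F_2$ with $F_1 \arrC{\ell} F_2$ (hence $F \Longrightarrowp \arrC{\ell} F_2$, i.e.\ $F \ArrC{\ell} F_2$ possibly modulo reordering with the tail) and $P_2 \wbPI \encoSN{F_2}$. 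Finally part (1) applied to $P_2 \LongrightarrowPI P$ — using $P_2 \wbPI \encoSN{F_2}$ to transfer the silent moves, so $\encoSN{F_2} \LongrightarrowPI Q$ with $P \wbPI Q$ — gives $F'$ with $F_2 \Longrightarrowp F'$ and $Q \contr \encoSN{F'} \subseteq \wbPI \encoSN{F'}$, whence $P \wbPI \encoSN{F'}$ and $F \ArrC{\ell} F'$.

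The one point requiring a little care — and the place I expect to spend the most attention — is the interplay between $\contr$/$\wbPI$ and the weak silent closure: I must be sure that when $P_1 \contr \encoSN{F_1}$ (or $P_1 \wbPI \encoSN{F_1}$) and $P_1$ performs further silent or visible actions, these can be reflected onto $\encoSN{F_1}$ without losing the expansion/bisimilarity bookkeeping, and that the residual relation is still $\contr$ (for the all-$\tau$ case) or $\wbPI$ (once a visible action has occurred). This is a standard manipulation — it is exactly the reasoning used implicitly in the proof of Theorem~\ref{t:COGS_pi} for call-by-value — but it is the only non-bookkeeping ingredient; everything else is a routine induction on the length of the weak transition sequence.
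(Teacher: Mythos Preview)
Your proposal is correct and follows the same approach as the paper, which merely remarks that the weak result is derived from the strong one (Lemma~\ref{l:CBNpogs_short_converse}) by induction on the number of strong transitions. Your write-up is in fact considerably more detailed than the paper's, and the care you take over the transfer properties of $\contr$ (that a move of $P_1$ with $P_1 \contr \encoSN{F_1}$ is matched by at most one strong step of $\encoSN{F_1}$, preserving $\contr$) is exactly the point that makes the induction go through; just make the induction in part~(1) a strong induction, since the matching sequence from $\encoSN{F_1}$ may have fewer than $n-1$ steps.
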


\begin{corollary}
\label{c:CBNpogs_pi}
 $F$ and $ \encoSN F$ have the same set of traces. 
\end{corollary}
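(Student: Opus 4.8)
\textbf{Proof proposal for Corollary~\ref{c:CBNpogs_pi}.}
The plan is to derive the statement directly from the weak-transition operational correspondence established in Theorem~\ref{t:CBNpogs_short_weak} and Theorem~\ref{t:CBNpogs_short_converse_weak}, by a routine induction on the length of a trace, exactly as Corollary~\ref{c:tracesGSpi} and Corollary~\ref{c:ogsp_pi} are obtained in the call-by-value case. The key observation is that the two theorems together give a two-way simulation between a configuration $F$ and its encoding $\encoSN F$ that is tight enough to be transported along an arbitrary sequence of visible actions, since the behavioural relations involved ($\contr$ and $\wbPI$) are preserved by the transitions they need to match (both are simulations in the appropriate direction, and $\contr\subseteq\wbPI$).

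First I would fix a trace $s = \ell_1,\ldots,\ell_n$ and prove, by induction on $n$, that $F \ArrC{s}$ implies $\encoSN F \ArrPI{s}$. The base case $n=0$ is trivial. For the inductive step, write $F \ArrC{\ell_1} F_1 \ArrC{\ell_2,\ldots,\ell_n}$; by Theorem~\ref{t:CBNpogs_short_weak}(2) we get $\encoSN F \ArrPI{\ell_1} P_1$ with $P_1 \wbPI \encoSN{F_1}$; by the induction hypothesis $\encoSN{F_1} \ArrPI{\ell_2,\ldots,\ell_n}$, and since $\wbPI$ is a weak bisimulation, $P_1$ has the same trace, so $\encoSN F \ArrPI{s}$. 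The converse direction is symmetric: given $\encoSN F \ArrPI{\ell_1} P_1 \ArrPI{\ell_2,\ldots,\ell_n}$, Theorem~\ref{t:CBNpogs_short_converse_weak}(2) yields $F \ArrC{\ell_1} F_1$ with $P_1 \wbPI \encoSN{F_1}$, hence $\encoSN{F_1} \ArrPI{\ell_2,\ldots,\ell_n}$ as well, and the induction hypothesis gives $F_1 \ArrC{\ell_2,\ldots,\ell_n}$, so $F \ArrC{s}$. Combining the two implications gives that $F$ and $\encoSN F$ have the same set of traces. One should also record, as in Remark~\ref{r:bn}, that the bound names occurring in $s$ are taken fresh with respect to $F$, so that the actions on the \piI side and the OGS side are literally the same and no renaming issue arises.

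I do not expect any genuine obstacle here: the statement is a straightforward corollary, and the only point requiring a line of care is making sure the intermediate process $P_1$ (respectively $F_1$) can be replaced, for the purposes of the remaining trace, by $\encoSN{F_1}$ — which is exactly what $\wbPI$ being a weak bisimulation provides. If one wanted to be fully explicit one could first prove the slightly stronger statement ``$F \ArrC{s} F'$ iff $\encoSN F \ArrPI{s} P$ for some $P \wbPI \encoSN{F'}$'' and then forget the endpoints; this stronger form is what the induction naturally carries, and Corollary~\ref{c:CBNpogs_pi} is its immediate consequence. The analogous development for the \opLTS (yielding the call-by-name counterpart of Corollary~\ref{c:traces}) goes through in the same way, using the output-prioritised versions of the correspondence lemmas from Section~\ref{ss:oc_cbn}.
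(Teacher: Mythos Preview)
Your proposal is correct and matches the paper's own proof: the paper also proceeds by induction on the length of the trace, establishing precisely the stronger invariant you identify at the end (namely $F \ArrC{t} F'$ implies $\encoSN F \ArrPI{t}\;\wbPI\;\encoSN{F'}$, and conversely), from which the corollary follows. Your account is in fact more detailed than the paper's sketch, and the point you make about using $\wbPI$ to transfer the residual trace from $P_1$ to $\encoSN{F_1}$ is exactly the step the induction needs.
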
  

\begin{proof}
By induction on the length of a trace performed by 
 $F$  or $ \encoSN F$: thus one shows that, for a trace $t$,
\begin{itemize}
\item  If $F \ArrC{t} F'$  then 
$ \encoSN F \ArrPI{t}\;  \wbPI  \encoSN  {F'}$;
\item  If
$ \encoSN F \ArrPI{t}  P$, then there is $F'$ s.t.\  
 $F \ArrC{t} F'$ and 
$P 
\wbPI \encoSN {F'}$.
\end{itemize} 
\end{proof}

\begin{corollary}
\label{c:CBNpogs_pi_traces}
We have: 
 $\FF_1 \TEp \FF_2$ iff 
 $ \encoSN {\FF_1} \TEPI  \encoSN {\FF_2}$.  
\end{corollary}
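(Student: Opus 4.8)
The final statement to prove is Corollary~\ref{c:CBNpogs_pi_traces}, namely that for call-by-name \COGS configurations, $\FF_1 \TEp \FF_2$ iff $\encoSN{\FF_1} \TEPI \encoSN{\FF_2}$.

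The plan is to derive this as an immediate consequence of Corollary~\ref{c:CBNpogs_pi}, which establishes that any configuration $F$ and its \piI encoding $\encoSN F$ have exactly the same set of traces. First I would recall the definition of trace equivalence from Section~\ref{s:nota}: $K_1 \TEG K_2$ holds precisely when for every trace $\tr$, $K_1 \ArrG{\tr}$ iff $K_2 \ArrG{\tr}$. Applying this to \COGS configurations, $\FF_1 \TEp \FF_2$ means that for all $\tr$, $\FF_1 \ArrC{\tr}$ iff $\FF_2 \ArrC{\tr}$; and $\encoSN{\FF_1} \TEPI \encoSN{\FF_2}$ means that for all $\tr$, $\encoSN{\FF_1} \ArrPI{\tr}$ iff $\encoSN{\FF_2} \ArrPI{\tr}$.

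The key step is then simply to chain the equivalences: by Corollary~\ref{c:CBNpogs_pi} applied to $\FF_1$, we have $\FF_1 \ArrC{\tr}$ iff $\encoSN{\FF_1} \ArrPI{\tr}$, and likewise for $\FF_2$, $\FF_2 \ArrC{\tr}$ iff $\encoSN{\FF_2} \ArrPI{\tr}$. Therefore the set of traces of $\FF_1$ equals the set of traces of $\encoSN{\FF_1}$, and the set of traces of $\FF_2$ equals that of $\encoSN{\FF_2}$. Hence the two sets of \COGS-traces coincide exactly when the two sets of \piI-traces coincide, which is precisely the statement $\FF_1 \TEp \FF_2$ iff $\encoSN{\FF_1} \TEPI \encoSN{\FF_2}$. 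One should also note that the freshness convention on bound names of traces (Remark~\ref{r:bn}) is compatible on both sides, since the free continuation/variable/value names of $\FF_i$ are exactly the free names of $\encoSN{\FF_i}$, so no bound name of $\tr$ clashes with either side simultaneously.

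There is essentially no obstacle here: all the real work has already been done in Corollary~\ref{c:CBNpogs_pi} (which itself rests on the operational correspondence Lemmas~\ref{l:CBNpogs_short} and~\ref{l:CBNpogs_short_converse}, via Theorems~\ref{t:CBNpogs_short_weak} and~\ref{t:CBNpogs_short_converse_weak}). The only point requiring a modicum of care is making sure the statement of trace equivalence is quantified uniformly over traces on both the OGS and \piI sides — but this is immediate since, as emphasised in Section~\ref{s:nota}, the grammar of actions (and hence of traces) is literally the same for OGS and for \piI, so ``the same set of traces'' is a meaningful and unambiguous claim. Thus the proof is a one-line corollary: unfold the definition of $\TEp$ and $\TEPI$ into equality of trace sets, and invoke Corollary~\ref{c:CBNpogs_pi}.
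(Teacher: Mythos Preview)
Your proposal is correct and matches the paper's approach: the paper states Corollary~\ref{c:CBNpogs_pi_traces} immediately after Corollary~\ref{c:CBNpogs_pi} with no explicit proof, treating it as an immediate consequence of the latter by unfolding the definition of trace equivalence, exactly as you do.
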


\begin{corollary}
\label{c:CBNpogs_pi_bis}
we have: 
 $\FF_1 \wbC \FF_2$ iff 
 $ \encoSN {\FF_1} \wbPI  \encoSN {\FF_2}$.  
\end{corollary}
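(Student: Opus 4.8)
The plan is to derive this equivalence directly from the weak operational correspondence between \COGS and \piI, namely Theorems~\ref{t:CBNpogs_short_weak} and~\ref{t:CBNpogs_short_converse_weak}, exactly as the corresponding call-by-value statement (Lemma~\ref{l:ogsp_pi_traces_bis}(2), whose call-by-name analogue is Lemma~\ref{l:ogsp_pi_traces_bisimulation_cbn}(2)) is obtained. The general principle at work is that a weak operational correspondence which holds in both directions and preserves/reflects transitions up to expansion automatically transfers bisimilarity; the two inclusions are proved separately, in each case by exhibiting a candidate relation and checking the bisimulation clauses through a diagram chase, using the inclusion ${\contr} \subseteq {\wbPI}$ to absorb the expansion steps that appear in the silent-move clauses of the correspondence.

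For the implication $\encoSN{\FF_1} \wbPI \encoSN{\FF_2} \Rightarrow \FF_1 \wbC \FF_2$, I would take $\R \defi \{ (\FF_1, \FF_2) \st \encoSN{\FF_1} \wbPI \encoSN{\FF_2}\}$ and show it is a \COGS-bisimulation. Given a challenge $\FF_1 \arrC\mu \FF_1'$, Theorem~\ref{t:CBNpogs_short_weak} yields $\encoSN{\FF_1} \Arcap\mu P$ with $P \wbPI \encoSN{\FF_1'}$ (when $\mu = \tau$ the theorem gives $P \contr \encoSN{\FF_1'}$, hence still $P \wbPI \encoSN{\FF_1'}$). Since $\encoSN{\FF_1} \wbPI \encoSN{\FF_2}$, there is $Q$ with $\encoSN{\FF_2} \Arcap\mu Q$ and $Q \wbPI P$; Theorem~\ref{t:CBNpogs_short_converse_weak} then reflects this back to a \COGS move $\FF_2 \Arcapp\mu \FF_2'$ with $Q \wbPI \encoSN{\FF_2'}$. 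Composing the three bisimilarities gives $\encoSN{\FF_1'} \wbPI \encoSN{\FF_2'}$, i.e.\ $\FF_1' \RR \FF_2'$; symmetry of $\R$ is immediate. For the converse $\FF_1 \wbC \FF_2 \Rightarrow \encoSN{\FF_1} \wbPI \encoSN{\FF_2}$, I would take $\S \defi \{(P,Q) \st P \wbPI \encoSN{\FF_1} \text{ and } Q \wbPI \encoSN{\FF_2} \text{ for some } \FF_1 \wbC \FF_2 \} \cup {\wbPI}$ and show it is a \piI-bisimulation: a challenge $P \arrPI\mu P'$ is first matched on $\encoSN{\FF_1}$ using $P \wbPI \encoSN{\FF_1}$, reflected to a \COGS move of $\FF_1$ by Theorem~\ref{t:CBNpogs_short_converse_weak}, matched on $\FF_2$ by $\FF_1 \wbC \FF_2$, pushed forward to $\encoSN{\FF_2}$ by Theorem~\ref{t:CBNpogs_short_weak}, and finally transferred to $Q$ via $Q \wbPI \encoSN{\FF_2}$, with all derivatives again in $\S$. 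The trace-level correspondence of Corollary~\ref{c:CBNpogs_pi} is a convenient sanity check but is not needed in the argument.

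The only delicate points are bookkeeping: tracking whether each application of the correspondence lemmas produces $\contr$ or $\wbPI$ (resolved uniformly by ${\contr} \subseteq {\wbPI}$), the weak-versus-strong distinction for $\tau$-moves (subsumed by the $\Arcap{}$ and $\Arcapp{}$ notation), and the implicit well-formedness convention on relations over configurations, which is automatic here since the \COGS transition labels determine the polarity function and the continuation structure, so $\wbC$-related configurations are necessarily support-equivalent. I do not expect a genuine obstacle: this corollary is the ``easy half'' of the \COGS/\piI comparison, the harder content — that trace equivalence coincides with bisimilarity on the encodings of $\lambda$-terms, which upgrades this corollary to Theorem~\ref{t:CBNfa_all} — being handled separately along the lines of Lemma~\ref{l:tr_bisi_sing}.
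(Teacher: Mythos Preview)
Your proposal is correct and follows essentially the same approach as the paper. The paper's proof (in the source, though suppressed in the compiled version) uses the candidate $\{(\encoSN{\FF_1},\encoSN{\FF_2}) \st \FF_1 \wbC \FF_2\}$ as a bisimulation up-to $\wbPI$ for the forward direction and exactly your $\R$ as a plain bisimulation for the backward direction; your choice to absorb the up-to-$\wbPI$ closure directly into the candidate $\S$ is the standard unfolding of that technique and makes no material difference.
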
  

\ifcutCBNLMCS
\begin{proof}
In one direction, one shows that 
\[  \{ (    \encoSN {\FF_1},  \encoSN {\FF_2} ) \st \FF_1 \wbC \FF_2 \}\]
is a weak bisimulation up-to $\wb$. 
For the other direction, one shows that 
\[  \{ (    {\FF_1},  {\FF_2} ) \st \encoSN {\FF_1} \wbPI   \encoSN {\FF_2}\}\]
is a bisimulation.
\end{proof} 
\fi

A configuration $F$ \emph{is a singleton} if its domain has only one
element.

\begin{lemma}
\label{l:CBNtr_bisi_sing}
For any singleton $\FF,\GG$, we have: $
{\encoSN \FF} \TEPI {\encoSN \GG}
$ iff 
${\encoSN \FF} \wb {\encoSN \GG}$.
\end{lemma}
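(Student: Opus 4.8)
The statement to prove is the call-by-name analogue of Lemma~\ref{l:tr_bisi_sing}: for singleton configurations $\FF,\GG$, trace equivalence of their \piI encodings implies bisimilarity (the converse is immediate). The plan is to reuse the exact strategy of the call-by-value proof, replacing the call-by-value structural lemmas by their call-by-name counterparts already available in Appendix~\ref{a:cbn}. Concretely, I would show that the relation
\[
\S \ \defi\ \{(\encoSN \FF,\ \encoSN \GG) \st \text{$\FF,\GG$ are singleton with } \FF \TEp \GG\}
\]
is a bisimulation up-to context and up-to $(\contr,\wbPI)$ in \piI, and then conclude by Theorem~\ref{t:up-to-con-expa-nointer} that $\S \subseteq \wbPI$, hence by Corollary~\ref{c:CBNpogs_pi_bis} (or directly) that ${\encoSN \FF} \wb {\encoSN \GG}$. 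For the `cannot interact' side condition required by that up-to technique, I would note that in \piI all exchanged names are fresh, and a singleton configuration uses its unique P-name only in a single polarity, so the static contexts produced during the game never share a name with opposite polarities with the processes in their holes.

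\textbf{Key steps, in order.} First, I would classify the shape of $\encoSN \FF$ for a singleton configuration $\FF$: by the grammar of call-by-name \AOGS configurations (Figure~\ref{f:cbnG}) and the encoding (Figure~\ref{f:namePI}), $\encoSN \FF$ is (up to $\exn$) one of: $\encoNa M p$ for an extended term $L$ and continuation name $p$; an encoded evaluation-context binding $\encoN{\pmap{q}{(E,p)}}$; an encoded value binding $\encoN{\pmap{v}{V}}$; or an encoded term binding $! \inp x q.\encoNa M q$. Second, I would establish determinacy of immediate transitions for these processes, which follows from Lemma~\ref{l:opt_op_cbn} (actions from encodings of $\LaoP$ terms) and Lemma~\ref{l:opt_con_cbn} (actions from encodings of configurations). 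Third, I would play the bisimulation game on $\S$: given $\encoSN \FF \arr\mu P$, use Lemmas~\ref{l:opt_op_cbn}/\ref{l:opt_con_cbn} to read off the shape of $P$ up to $\contr$ (for $\tau$ and outputs) or up to $\wbPI$ (for inputs, cf.\ the role of Lemma~\ref{l:VAL} in call-by-value — here its analogue for $\encoN{\pmap v V}$). Since $\FF \TEp \GG$, the trace $\mu$ is also available from $\encoSN \GG$ by Corollary~\ref{c:CBNpogs_pi} / Theorem~\ref{t:CBNpogs_short_converse_weak}, and by the determinacy lemma combined with a call-by-name analogue of Lemma~\ref{l:immediate_tr} the derivatives are again trace equivalent. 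Fourth, the crucial decomposition: when $\mu$ is a Player question (PVQ or PTQ), the derivative splits into two singleton pieces, e.g.\ for PVQ the pieces $\encoN{\pmap y M}$ and $\encoN{\pmap q {(E,p)}}$; one then needs that the corresponding pieces of the $\GG$-side are pairwise trace equivalent, which I would obtain via a call-by-name version of Lemma~\ref{l:par_tr} (peeling off a replicated or plain input prefix preserves $\TEPI$). This is exactly where `up-to context and up-to $(\contr,\wbPI)$' is used: the split is presented as a static (polyadic-free, parallel) context wrapping the two smaller trace-equivalent pieces, with the context unable to interact with either.

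\textbf{Main obstacle.} The delicate point, as in the call-by-value case, is the output action carrying two names (here $\bout v{x,r}$ from an $E[vM]$-shaped term, or $\bout x q$ from an $E[x]$-shaped term): after such an output the residual splits into an encoded context binding and an encoded term binding, and I must argue that the $\GG$-side, which is only known to be \emph{trace} equivalent, splits into matching pieces that are themselves trace equivalent. This requires (i) the call-by-name analogue of Lemma~\ref{l:immediate_tr} to transfer $\TEPI$ across the matched transition, and then (ii) two applications of the call-by-name analogue of Lemma~\ref{l:par_tr} to peel the two parallel components apart while preserving $\TEPI$ — the second peel being the subtle one because the component being removed ($!\inp x q.\encoNa M q$) is replicated, so one must check the freshness/linearity bookkeeping carefully (the name $x$ does not occur free in the other component, by the O-name discipline of valid configurations). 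Beyond this, everything is a routine transcription of the call-by-value argument of Lemma~\ref{l:tr_bisi_sing}, using the call-by-name operational-correspondence lemmas of Section~\ref{ss:oc_cbn} and Appendix~\ref{a:cbn} in place of their call-by-value counterparts.
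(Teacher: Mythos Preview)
Your proposal is correct and follows essentially the same approach as the paper: the same candidate relation on encodings of singleton configurations, closed up-to context and behavioural laws, with the case analysis driven by Lemmas~\ref{l:opt_op_cbn} and~\ref{l:opt_con_cbn} and the decomposition step handled via the (call-by-name instances of) Lemmas~\ref{l:immediate_tr} and~\ref{l:par_tr}. The paper's own sketch cites Lemma~\ref{l:deco} alongside Lemma~\ref{l:immediate_tr}, but since the relation is defined using $\TEp$ you are right that the trace-equivalence version (Lemma~\ref{l:par_tr}) is what is actually needed, just as in the detailed call-by-value proof of Lemma~\ref{l:tr_bisi_sing}.
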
 

\begin{proof}
Weak bisimilarity implies trace equivalence, hence the direction from right to left. 

For the converse, we show
 that 
the relation 
\[ 
(\encoSN F,  \encoSN G) \st \mbox{ $F,H$ are singleton with $F \TEp
G$}
\]
is a bisimulation up-to contexts and expansion. 
The proof uses Lemmas~\ref{l:immediate_tr} and \ref{l:deco}. 
\finish{why this holds: the operational correspondence resuts; the fact that the names in
  the domain of a configuration do not appear anywhere else in the configuration} 
\end{proof} 

We think that the above lemma can be generalised to aribitrary configurations; the current
statement is however sufficient for our purposes.
By combining Lemma~\ref{l:CBNtr_bisi_sing} and 
Corollaries~\ref{c:CBNpogs_pi_traces} and 
\ref{c:CBNpogs_pi_bis} we then  derive  Theorem~\ref{t:CBNfa_all}. 

The following result holds for call-by-name but not for call-by-value.
\finish{above: check again...  is it so, really? } 
\finish{the result can be generalised to arbitrary configurations. It needs a bit more of
  work. Do we want to do it? } 
\begin{lemma}
\label{l:CBNonly}
Suppose $\gamma$ and $\delta$ (resp.\   
$\gamma'$ and $\delta'$) have the same domains.
Then 
 $\conf{\gamma \cdot \gamma',\phi } \TEp 
\conf{ \delta  \cdot \delta',\phi } $
implies
 $\conf{\gamma,\phi } \TEp 
\conf{ \delta  ,\phi } $
 \end{lemma}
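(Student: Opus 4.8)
\textbf{Proof plan for Lemma~\ref{l:CBNonly}.}
The plan is to go through the $\pi$-calculus representation, exactly as the other decomposition results in this section do, reducing the claim about OGS trace equivalence to an algebraic fact about $\piI$. First I would note that, by Corollary~\ref{c:CBNpogs_pi_traces}, the hypothesis $\conf{\gamma\cdot\gamma',\phi}\TEp\conf{\delta\cdot\delta',\phi}$ is equivalent to $\encoSN{\conf{\gamma\cdot\gamma',\phi}}\TEPI\encoSN{\conf{\delta\cdot\delta',\phi}}$, and the desired conclusion is equivalent to $\encoSN{\conf{\gamma,\phi}}\TEPI\encoSN{\conf{\delta,\phi}}$. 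Using the compositionality of the encoding of environments, $\encoN{\pmap{a}{\cdot}\cdot\gamma''}=\encoN{\pmap{a}{\cdot}}\mid\encoN{\gamma''}$, we have $\encoSN{\conf{\gamma\cdot\gamma',\phi}}\equiv\encoN\gamma\mid\encoN{\gamma'}$ and similarly on the $\delta$ side. So the lemma becomes: if $\encoN\gamma\mid\encoN{\gamma'}\TEPI\encoN\delta\mid\encoN{\delta'}$, and $\gamma,\delta$ share a domain (and $\gamma',\delta'$ share a domain), then $\encoN\gamma\TEPI\encoN\delta$.

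The key structural observation is the one already used in this appendix: in call-by-name, the encoding of every environment entry is a process whose only unguarded prefix is an \emph{input} at the name it is attached to ($!\,x(q).\ldots$ for a term entry, $v(x,q).\ldots$ for a value entry that is a $\lambda$-abstraction, a link $v\fwdsymb w$ for a value entry that is a value name, $q(v).\ldots$ for a continuation entry). Hence $\encoN\gamma$ and $\encoN{\gamma'}$ are parallel compositions of input-guarded (or link) processes on pairwise distinct names, and they \emph{cannot interact} with each other in the sense of Definition~\ref{d:interact}; this `cannot interact' property is preserved along transitions by Lemma~\ref{l:ci_trans}. Consequently I would invoke the decomposition lemma for traces of non-interacting processes, Lemma~\ref{l:dec_pi}, which gives a precise correspondence between the traces of $\encoN\gamma\mid\encoN{\gamma'}$ and the interleavings of the traces of the two components. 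The crucial point, specific to call-by-name, is that since both $\encoN\gamma$ and $\encoN{\gamma'}$ are input-only, any trace either factors through one side or the other in a way that can be \emph{recognised from the trace itself} (each action's subject name lies in the support of exactly one component), so one can project traces onto the $\gamma$-component independently of the $\gamma'$-component; this is the analogue of Lemma~\ref{l:par_tr}/Lemma~\ref{l:deco} applied iteratively to strip off the entire $\gamma'$ part.

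Putting it together: given a trace $s$ of $\encoN\gamma$, I would extend it to a trace of $\encoN\gamma\mid\encoN{\gamma'}$ (trivially, since $\encoN{\gamma'}$ need not move), use the hypothesis to transport it to a trace of $\encoN\delta\mid\encoN{\delta'}$, and then project it back onto the $\encoN\delta$-component using Lemma~\ref{l:dec_pi}(1); because the subjects of the actions of $s$ all lie in $\dom\gamma=\dom\delta$ and are disjoint from $\dom{\gamma'}=\dom{\delta'}$, the part of the transported trace attributed to $\encoN{\delta'}$ is empty, so $\encoN\delta\ArrPI{s}$. The reverse inclusion is symmetric, giving $\encoN\gamma\TEPI\encoN\delta$, and then Corollary~\ref{c:CBNpogs_pi_traces} again yields $\conf{\gamma,\phi}\TEp\conf{\delta,\phi}$. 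The main obstacle is the bookkeeping in the projection step: one must verify that the `input-only' shape of call-by-name environment encodings genuinely forces the interleaving witnessing a trace to be determined by the trace (this is exactly where call-by-value fails, since there the encoding of a value entry emits outputs and two components can share behaviour), and that this shape is invariant under the transitions generated by $\ArrPI{}$ — both of which follow from Lemma~\ref{l:opt_op_cbn}, Lemma~\ref{l:opt_con_cbn}, and Lemma~\ref{l:ci_trans}, but need to be assembled carefully.
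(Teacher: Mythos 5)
Your argument is correct in substance, but it takes a genuinely different route from the paper. The paper proves Lemma~\ref{l:CBNonly} directly at the OGS level in two sentences: the only way the $\gamma'$ (resp.\ $\delta'$) part of the environment can ever contribute to a trace is by first being interrogated by an Opponent action whose subject lies in $\dom{\gamma'}=\dom{\delta'}$, and no such name can occur as a subject in a trace of $\conf{\gamma,\phi}$ or $\conf{\delta,\phi}$; hence every trace of the small configuration is a trace of the big one and conversely, and the hypothesis transfers immediately. You instead detour through \piI: Corollary~\ref{c:CBNpogs_pi_traces} to translate $\TEp$ into $\TEPI$, compositionality of $\encoN{\cdot}$ on environments, the `cannot interact' property (which indeed holds here, since by validity the O-names occurring in output position in $\encoN{\gamma}$ are disjoint from $\dom{\gamma'}$), and Lemma~\ref{l:dec_pi} to project the transported trace back onto the $\encoN{\delta}$ component. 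This buys you reuse of machinery already set up for the tensor/decomposition results and makes the projection step formally precise, at the cost of heavier dependencies; the paper's argument is more elementary and never leaves OGS, but both rest on exactly the same key observation about domain names as the only entry points into an environment component. One step of yours needs repair: it is not true that ``the subjects of the actions of $s$ all lie in $\dom{\gamma}=\dom{\delta}$'' --- once an entry is interrogated, subsequent actions have as subjects O-names or fresh names introduced along $s$. What you actually need (and what suffices) is that no \emph{input} action of $s$ has a subject in $\dom{\delta'}$ --- input subjects of $s$ are either in $\dom{\gamma}$ or are trace-bound names, fresh by the convention of Remark~\ref{r:bn} --- together with the observation that any nonempty trace of $\encoN{\delta'}$ must begin with an input at a name in $\dom{\delta'}$, since every environment entry encodes to a process whose unique unguarded prefix is an input (or a link) at its domain name; this forces the $\delta'$-projection given by Lemma~\ref{l:dec_pi}(1) to be empty. (Your parenthetical explanation of why call-by-value differs is also off the mark: CBV environment entries are likewise input-guarded at their domain names; the failure the paper exhibits, Remark~\ref{r:CONFnoSINGL}, involves configurations with running terms, which can emit outputs --- but this aside does not affect your proof.)
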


\begin{proof}
This is a consequence of the fact that any action from the components $\gamma', \delta'$ 
has, as subject, a name in the domain of these components. Such a name does not appear in $
\gamma , \delta $, and thus it will not appear in any action of their traces.  
\end{proof} 

\ifcutCBNLMCS
\subsection{Up-to techniques for games}
\label{ss:CBNupto_games}
\finish{again, this section is essentially the same as fo call-by-value, with some
  notations revisited} 

The goal of this section is to trasnsfer up-to techniques from 
\piI  onto GS. We will then use such techniques to prove that
\COGS and ordinary OGS yield the same semantics on
$\lambda$-terms.

A relation $\R$ on configuration is \emph{well-formed} 
if  the domains of any two configurations 
$\FF_1$ and $ \FF_2$
with  $\FF_1 \RR \FF_2$  are the same.

Two configurations are \emph{disjoint} if the set of names that appear
in their flat  domain are disjoint. 

If 
$ \FF = \conf{ A , \gamma,\phi }$
and 
$ \GG = \conf{ B , \delta ,\phi' }$
are disjoint, then 
\[ \appendCon \FF \GG  \defi
 \conf{ A \cdot B , \gamma \cdot \delta,\phi\uplus \phi' }
 \] 
\finish{above: should $\phi'$ be $\phi$?} 

Below, all relations on configurations are meant to be well formed. 

Given a well-formed relation $\R$ we write
$\appendRel \R $ for the relation
\[ 
\{ (\FF_1,\FF_2) \st
\mbox{there is $\GG$ s.t.\ }
 \FF_i = \appendCon{\FF'_i}\GG   \mbox{ $i=1,2$}  \mbox{ and }  \FF'_1\RR \FF'_2\}
\]

Finally, $\uptoComp \R $ is the reflexive and transitive closure of 
$\appendRel \R $

Note: equality on environments ignores the order of their components
(ie, it is  seen as a set).

\begin{definition}
\label{d:uptoComp-cbn}
A relation on configurations is a \emph{bisimulation up-to
  composition} if whenever $\FF_1 \RR \FF_2$: 
\begin{itemize}
\item if $\FF_1 \arrC\mu \FF_1'  $ then there is $\FF_2' $ such that 
$\FF_2 \arrC\mu \FF_2'  $ and 
$\FF'_1 \uptoComp \R \FF'_2$

\item  the converse
\end{itemize}  
\end{definition} 

If $R$ is a relation on configuration, then 
$\encoSN \R$ is the relation on \piI terms obtained by mapping each pair
in $\R$ in the expected manner: 
\[ \encoSN \R \defi \{ (\encoSN {\FF_1}, \encoSN {\FF_2}) \st \FF_1 \RR \FF_2 
 \} \]

\begin{theorem}
\label{t:uptoComp-cbn}
If 
$\R$ is bisimulation up-to
  composition then $\R \subseteq {\wbC}$.
\end{theorem}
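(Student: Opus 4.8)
The plan is to prove Theorem~\ref{t:uptoComp-cbn} by transporting the result through the encoding into \piI, exactly as was done for call-by-value in Theorem~\ref{t:uptoComp}. The key ingredients are: the compositionality of the encoding $\encoSN{-}$ from \COGS configurations to \piI processes (an analogue of Lemma~\ref{l:uptoOGSpi}), the operational correspondence on strong transitions for call-by-name (Lemmas~\ref{l:CBNpogs_short} and~\ref{l:CBNpogs_short_converse}), and the soundness of bisimulation up-to context and up-to $(\contr,\wbPI)$ in \piI (Theorem~\ref{t:up-to-con-expa-nointer}). First I would record the compositionality fact: since $\encoSN{\conf{A,\gamma,\phi}} \defi \pairP{\encoSN A | \encoSN\gamma}$, tensor product of configurations is mapped (up to structural congruence) to parallel composition, hence if $(\FF_1,\FF_2)\in\uptoComp\R$ then $(\encoSN{\FF_1},\encoSN{\FF_2})\in\ctx{(\encoSN\R)}$, where $\ctx{-}$ is closure under (static, polyadic) contexts of the form $\res{\tilc}(R|\contexthole)$.

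Then I would take $\R$ a bisimulation up-to composition and show that $\encoSN\R$ is a bisimulation up-to context and up-to $(\contr,\wbPI)$ in \piI. Suppose $\encoSN{\FF_1}\arrPI\ell P$. By Lemma~\ref{l:CBNpogs_short_converse}(2) there is $\FF_1'$ with $\FF_1\arrC\ell\FF_1'$ and $P\wbPI\encoSN{\FF_1'}$. Since $\R$ is a bisimulation up-to composition, there is $\FF_2'$ with $\FF_2\arrC\ell\FF_2'$ and $\FF_1'\uptoComp\R\FF_2'$. By Lemma~\ref{l:CBNpogs_short}(2), $\encoSN{\FF_2}\arrPI\ell\wbPI\encoSN{\FF_2'}$, and by the compositionality fact $(\encoSN{\FF_1'},\encoSN{\FF_2'})\in\ctx{(\encoSN\R)}$, which closes the visible-action clause (the context involved is static and, since all names exchanged in \piI are fresh and the tensored configurations are disjoint/compatible, it cannot interact with the processes in its holes, so the $(\contr,\wbPI)$ variant applies). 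The silent case is handled the same way using Lemma~\ref{l:CBNpogs_short_converse}(1) and Lemma~\ref{l:CBNpogs_short}(1), where the weaker $\contr$ suffices. Applying Theorem~\ref{t:up-to-con-expa-nointer} gives $\encoSN\R\subseteq\wbPI$, and then by Corollary~\ref{c:CBNpogs_pi_bis} (the bisimulation correspondence $\FF_1\wbC\FF_2$ iff $\encoSN{\FF_1}\wbPI\encoSN{\FF_2}$) we conclude $\R\subseteq\wbC$.

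The main obstacle I anticipate is verifying the "cannot interact" side condition needed to invoke the $(\contr,\wbPI)$-variant of up-to context rather than the plain expansion variant: one needs that whenever $\FF_i=\appendCon{\FF_i'}\GG$, the \piI context $\res{\tilc}(\encoSN\GG \mid \contexthole)$ does not interact with $\encoSN{\FF_i'}$. This rests on the fact that two compatible configurations translate to non-interacting processes (the call-by-name analogue of Lemma~\ref{l:compat_cannot-interact}) together with invariance of non-interaction under transitions, which holds in \piI because only fresh names are exchanged (Lemma~\ref{l:ci_trans}). A secondary subtlety specific to call-by-name is the three-sorted name discipline (continuation, variable, value names) and the fact that some of the encoding clauses in Figure~\ref{f:namePI} produce outputs only up to $\contr$ (e.g. for $E[x]$); this only affects which of $\contr$ or $\wbPI$ appears in the operational-correspondence lemmas, not the overall argument, since Theorem~\ref{t:up-to-con-expa-nointer} already accommodates both. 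Given all the call-by-name operational-correspondence machinery is already in place (Lemmas~\ref{l:CBNpogs_short}, \ref{l:CBNpogs_short_converse}, and Corollary~\ref{c:CBNpogs_pi_bis}), the proof is essentially a line-by-line transcription of the call-by-value proof of Theorem~\ref{t:uptoComp}.
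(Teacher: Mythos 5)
Your proposal is correct and follows essentially the same route as the paper, whose (terse) proof consists precisely of showing that $\encoSN{\R}$ is a bisimulation up-to context and expansion/bisimilarity in \piI, relying on the call-by-name operational correspondence and the compositionality of the encoding, and then transferring back via the correspondence between $\wbC$ and $\wbPI$. Your choice of the up-to context and up-to $(\contr,\wbPI)$ variant (with the ``cannot interact'' side condition discharged via the analogues of Lemmas~\ref{l:compat_cannot-interact} and~\ref{l:ci_trans}) is exactly how the detailed call-by-value proof of Theorem~\ref{t:uptoComp} handles the fact that visible actions are only matched up to $\wbPI$, so nothing further is needed.
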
 

\begin{proof}
By showing that 
$\encoSN \R$ is a bisimulation up-to expansion and contexts in 
\piI.
\finish{all is needed here is a result of operational correspondence} 
\finish{upto transitive closure is captured by having ``upto polyadic contexts''} 
\end{proof}

\subsection{Relationship between \COGS and OGS}

\finish{actually here $M$ ranges over  extended $\lambda$-terms} 

For any term $M$, its traces in ordinary OGS are a subset of the traces in 
Concurrent OGS (\COGS), precisely, the traces with an alternation between player and
opponent moves. 
\finish{correct?}   
In other words, the  syntax of a trace of $M$   reveals whether that trace is also a trace
for $M$ in ordinary OGS.  
As a consequence, trace equivalence in \COGS implies trace equivalence in OGS. 

\begin{lemma}
\label{l:CBNPOGS_OGS}
If $\FF_1 \TEp \FF_2  $ then also $\FF_1 \TEPI \FF_2$. 
\end{lemma}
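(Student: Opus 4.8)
\textbf{Plan of proof for Lemma~\ref{l:CBNPOGS_OGS}.}
The statement asserts that for call-by-name, trace equivalence in the concurrent OGS (\COGS) implies trace equivalence in the alternating OGS. The plan is to argue via the characterisation of \AOGS traces as the \emph{alternating} traces of \COGS, exactly as in the call-by-value development (Lemma~\ref{l:traces_AOGS_COGS}), adapting it to the call-by-name LTS of Figure~\ref{fig:scbn-lts} and Figure~\ref{ss:cogs_cbn}.

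First I would establish the call-by-name analogue of Lemma~\ref{l:traces_AOGS_COGS}: for any \AOGS configuration $\FF$ and trace $\tr$, we have $\FF \ArrA{\tr}$ if and only if $\FF \ArrC{\tr}$ with $\tr$ an alternating trace (starting with an output when $\FF$ is active). The forward direction is immediate since every rule of the \AOGS LTS in Figure~\ref{fig:scbn-lts} is a special case of a \COGS rule (with empty running-term context $A$), and \AOGS visible transitions strictly alternate Player and Opponent because active configurations only emit Player actions and passive ones only emit Opponent actions. The converse goes by induction on $\tr = \act\,\tr'$: one reasons by case analysis on whether $\FF$ is active or passive, uses alternation to force $\act$ to have the right polarity (Player/output when active, Opponent/input when passive), observes that the single-thread \COGS transition $\FF \arrC{\act} \GG$ is then also an \AOGS transition $\FF \arrA{\act} \GG$, and checks that $\GG$ is again a (singleton) \AOGS configuration so the induction hypothesis applies to $\tr'$. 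The two new wrinkles relative to call-by-value are that questions come in two flavours (value-questions and term-questions), so the case analysis has a few more branches, and that the running term ranges over extended $\lambda$-terms $\LaoP$; neither changes the structure of the argument.

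Given this characterisation, the lemma follows quickly. Suppose $\FF_1 \TEp \FF_2$ and $\FF_1 \ArrA{\tr}$. By the characterisation, $\FF_1 \ArrC{\tr}$ with $\tr$ alternating (and starting with an output if $\FF_1$ is active). Since $\FF_1 \TEp \FF_2$, also $\FF_2 \ArrC{\tr}$; and because $\okC{\FF_1}{\FF_2}$ is implicit (well-formed relations, Definition~\ref{d:compa}), $\FF_2$ has the same polarity structure, so $\tr$ starts with an output iff $\FF_2$ is active. Applying the characterisation in the other direction gives $\FF_2 \ArrA{\tr}$. By symmetry we conclude $\FF_1 \TEA \FF_2$, i.e.\ $\FF_1 \TEp \FF_2$ implies $\FF_1 \TEA \FF_2$ — which, with the notation of the excerpt where the target relation is written $\TEPI$ (trace equivalence on the alternating side), is the statement.

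The main obstacle I anticipate is purely bookkeeping: getting the case analysis in the inductive step of the characterisation right for all the call-by-name action shapes (PA, PVQ, PTQ, OA, OVQ, OTQ, plus IOQ), and in particular verifying that after each \COGS step the resulting configuration is still of the singleton/\AOGS shape so that alternation is preserved and the induction can proceed. There is no genuine conceptual difficulty beyond what is already handled in Lemma~\ref{l:traces_AOGS_COGS}; one could alternatively route the proof through \piI using Corollary~\ref{c:CBNtraces} and Corollary~\ref{c:CBNpogs_pi}, observing that \AOGS traces correspond to \opLTS traces and \COGS traces to ordinary-LTS traces of the same encoding, but the direct LTS-level argument is shorter and self-contained.
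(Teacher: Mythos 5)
Your proposal is correct and follows essentially the same route as the paper, which justifies this lemma precisely by the observation that \AOGS traces are exactly the alternating \COGS traces (the call-by-name counterpart of Lemma~\ref{l:traces_AOGS_COGS}), so that being an \AOGS trace is a syntactic property of the trace itself and equality of \COGS trace sets transfers to equality of \AOGS trace sets; your explicit induction and the use of support-equivalence to handle the ``starts with an output iff active'' condition are just a more detailed rendering of that argument. Your reading of the target relation as trace equivalence on the alternating LTS (written $\TEA$ elsewhere in the paper) is indeed the intended one.
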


For the opposite direction, we exploit the 
`bisimulation up-to
  composition' technique.

A configuration $\FF$ is a \emph{singleton} if the domain of $\FF$ has only 1 element. 
That is, $\FF$ is either of the form 
$\conf{\pmap{p}{M}, \emptyset, \phi }$, or 
$\conf{\pmap{v}{V}, \emptyset, \phi }$, or 
$\conf{\pmap{p}{(E,q)}, \emptyset, \phi}$, or
$\conf{\pmap{x}{M}, \emptyset, \phi}$.

\finish{whatch out for the initial term....}

In the proof, 
we write $\FF \subseteq \GG$ when, as partial maps, $\GG$ is an extension of $\FF$. 
\finish{and moreover there is inclusion on the $\phi$ components} 
As for call-by-value, we distinguish between transitions that a configuration can make in \COGS and \AOGS, 
writing $\arrC \mu  $ for the former, and $\arrA\mu$ for the latter. 
Similarly for weak transitions $\ArrC \mu  $ and $\ArrA \mu  $.

\begin{lemma}
\label{l:CBNOGS_COGS}
If $\FF_1, \FF_2  $ are singleton and 
$\FF_1 \wbA \FF_2  $, then also $\FF_1 \wbC \FF_2$. 
\end{lemma}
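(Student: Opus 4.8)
\textbf{Proof plan for Lemma~\ref{l:CBNOGS_COGS}.}
The statement is the call-by-name analogue of Lemma~\ref{l:OGS_COGS}, so the plan is to follow exactly the structure of that proof, adapting the list of action shapes to the call-by-name OGS of Figure~\ref{fig:scbn-lts}. First I would recall the notion of \emph{triggerable singleton subconfiguration} of an \AOGS configuration (the same definition as used for \WBOGS/\COGS, now instantiated for the call-by-name grammar of Figure~\ref{f:cbnG}): for an active configuration $\conf{L,p,\gamma,\phi}$ the triggerable singleton is $\conf{L,p,\emptymap,\phi\backslash\dom\gamma}$, and for a passive $\conf{\gamma,\phi}$ the triggerable singletons are the singleton environments $\conf{\pmap{v}{V},\ldots}$, $\conf{\pmap{x}{M},\ldots}$, and $\conf{\pmap{p}{(E,q)},\ldots}$ extracted from $\gamma$, each with the appropriate name-support. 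Then I would define the candidate relation $\R$ on \COGS singleton configurations by $\FF\RR\GG$ iff $\okC{\FF}{\GG}$ and there exist \AOGS configurations $\FF',\GG'$ with $\FF'\wbA\GG'$ such that $\FF,\GG$ are triggerable singletons of $\FF',\GG'$ respectively.

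The main body of the proof is then showing that $\R$ is a \emph{bisimulation up-to composition} in the sense of Section~\ref{ss:upto_games_cbn}, so that Proposition~\ref{p:sing_bis} (in its call-by-name form) closes the argument; the base case — that $\FF\wbA\GG$ for singletons immediately gives $\FF\RR\GG$ by taking $\FF'=\FF$, $\GG'=\GG$ — is identical to the call-by-value case. For the bisimulation clause I would take $\FF\RR\GG$ with witnesses $\FF',\GG'$ and a challenge $\FF\arrC{\act}\FF_1$, and do a case analysis on $\act$. The \emph{passive} cases are OA, OVQ, OTQ: here $\FF'$ and $\GG'$ are passive, and from $\okC{\FF'}{\GG'}$ we can write $\FF'=\conf{\gamma_1\cdot\gamma,\ldots}$, $\GG'=\conf{\delta_1\cdot\delta,\ldots}$ with $\gamma,\delta$ sharing a domain; the \AOGS transition on $\FF'$ (the same rule) leads to a $\FF''$, and since $\FF'\wbA\GG'$ we get a matching $\GG'\arrA{\act}\GG''$ with $\FF''\wbA\GG''$; the derivative $\FF_1$ decomposes (via the tensor/composition) into the ``old'' environment singleton (still in $\R$ since $\FF\RR\GG$) tensored with a new singleton — e.g.\ $\pmap{q}{V y}$ for OVQ, or $\pmap{p}{M}$ for OTQ, or $\pmap{q}{E[v]}$ for OA — which is triggerable from $\FF''$, hence in $\R$; so $\FF_1 \uptoComp\R \GG_1$. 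The \emph{active} cases are the internal rule $\tau$ (straightforward, $\wbA$ preserved by internal reduction), PA, PVQ, and PTQ. PA and PTQ are simpler, producing one new environment singleton; PVQ is the interesting one, producing two new singletons $\pmap{y}{M}$ and $\pmap{q}{(E,p)}$, and here — exactly as in Lemma~\ref{l:OGS_COGS} — one uses $\FF'\wbA\GG'$ to find $E',M'$ with $\GG'\Longrightarrow\conf{E'[x M'],\ldots}$ wait, rather $\GG'\Longrightarrow\arrA{\questPV{v}{y}{q}}$, matching the Player value-question on $v$, with both new singletons triggerable from the resulting $\GG''$.

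The step I expect to be the main obstacle is the active Player value-question case (PVQ), for the same reason as in the call-by-value Lemma~\ref{l:OGS_COGS}: the \COGS output transition does not preserve being a singleton (it spawns environment entries), so one genuinely needs the up-to-composition machinery to re-split the derivative into singleton pieces and argue each piece is again in $\R$ via triggerability from the weakly-bisimilar \AOGS derivatives $\FF'',\GG''$; one must also be careful that $\GG'$ may need several silent steps before it can fire the matching output, and that after those steps its running term has the shape $E'[v M']$ with the \emph{same} $v$ as in $\act$ (which follows because $\FF'\wbA\GG'$ and the subject $v$ is the only possible output subject visible at that point). A minor technical point to handle uniformly, as flagged in the call-by-value proof, is the bookkeeping of the extra fresh names ($y,q,p$ etc.) appearing in the derivative configurations but not in $\FF,\GG$; this is dealt with by noting that $\R$ is well-formed and closed under consistent extension of the name-support, exactly as in Section~\ref{ss:upto_games}. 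Since the call-by-name OGS has one more question shape than call-by-value (term-questions in addition to value-questions) but each of these is handled by the same template, no new idea beyond Lemma~\ref{l:OGS_COGS} is needed; the proof is a routine, if slightly longer, adaptation.
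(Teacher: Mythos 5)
Your proposal is correct and follows essentially the same route as the paper's own proof: the paper's candidate relation (clauses (2.a)/(2.b) in the appendix, relating singletons extracted from passive supersets or from active configurations sharing the running term's location) is just another phrasing of your triggerable-singleton relation, and the case analysis (OA/OVQ/OTQ for passive, $\tau$/PA/PVQ/PTQ for active, with PVQ as the case genuinely needing up-to composition) matches the paper's, which likewise concludes by soundness of bisimulation up-to composition imported from \piI.
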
  

\begin{proof}
Let $\R $ be the relation on configurations with $\FF \RR \GG$ if:
\begin{enumerate}
\item $\FF,\GG$ are singleton, and
\item  either 
\begin{enumerate}
\item
there are passive  $\FF', \GG'$ with $\FF \subseteq \FF'$ and $\GG \subseteq \GG'$ and $\FF' \wbA
  \GG'$,
\item   or 
there are $p,M,N,  \gamma , \delta $ with
$\FF = \conf{\pmap{p}{M},\phi}$,  
$\GG = \conf{\pmap{p}{N},\phi}$, and 
$\conf{\pmap{p}{M}, \gamma,\phi } \wbA 
\conf{\pmap{p}{N}, \delta,\phi  } $
\end{enumerate} 
\end{enumerate}
\finish{initial terms are hopefully special cases of 2b...} 
\finish{i think one has also to impose that related singleton have the same flat domains}

We show that $\R$ is  a bisimulation up-to
  composition.
We distinguish the cases in which (2.a) or (2.b) holds.

First suppose $\FF \RR \GG$ because (2.a) holds, and there are $\FF',\GG'$ as in (2.a).
Suppose 
\begin{equation}
\label{e:CBNKoaH}
 \FF \arrC  \mu \FF_1 
\end{equation} 
in \COGS.  The rule can be OA or OVQ, or OTQ.  We assume it is OA; the cases  for  OVQ and
OTQ are similar.  
We thus have, for some $p,q,E,v$ 
$$
\begin{array}{rcl}
 \FF &=&  \conf{ \gamma_1 ,\phi} \mbox{ with }  \gamma_1 =    \pmap{p}{(E,q)} \\
\mu & = & p(v) \\
\FF_1 & = & \conf{A, \gamma_1, \phi \uplus \{v\}} \mbox{ with }  A = \pmap{q}{E[v]}
\end{array} 
$$
Moreover,  we have $\GG = \conf{ \delta_1,\phi }$  with $  \delta_1 =    \pmap{p}{(\DD,q)}$, for
some $\DD$ (as related singleton configurations have the same flat domain), and, for
$\FF',\GG'$ as in (2.a), for some $\gamma, \delta $:  
$$
\begin{array}{rcl}
\FF' & = & \conf{ \gamma_1 \cdot \gamma  ,\phi} \\ 
\GG' & = & \conf{ \delta_1 \cdot \delta  ,\phi} 
\end{array} $$

We also have 
$ \FF' \arrA  \mu \conf{A,  \gamma_1 \cdot \gamma  ,\phi\uplus \{v\}} \defi \FF''$. 
Since $\FF' \wbA \GG'$, 
\[ \GG'  \arrA  \mu \conf{B,  \delta_1 \cdot \delta  ,\phi \uplus \{v\}}  \defi \GG''\] 
for 
$B \defi \pmap{q}{\DD[v]}$, 
with $\FF'' \wbA \GG''$.

Let $\GG_1 =   \conf{B, \delta_1}$.
We also have 
\[ \GG \arrC  \mu \GG_1 \] 
This is sufficient to match \reff{e:CBNKoaH}, as 
$\FF_1 \uptoComp \R  \GG_1$: indeed we have  
both $\conf{ \gamma_1,\phi\uplus \{v\}} \R  \conf{ \delta_1,\phi\uplus \{v\}}$ and  
(using clause (2.b))
$\conf{ A, \emptyset , \phi\uplus \{v\}} \R  \conf{ B, \emptyset, \phi\uplus \{v\}}$.

\finish{above : i am implicitly using several times a result of the kind 
$\conf{..., \phi} \wbA \conf{..., \phi}$ implies 
$\conf{..., \phi'} \wbA  \conf{..., \phi'}$ where $\phi'$ is larger than $\phi$ 
}

Now the case (2.b);  let $p,  \gamma , \delta $ as in (2.b).
 There are 4 possibilities of transitions, corresponding to rules 
Int, PA, PVQ, and PTQ. The case of 
Int  is straightforward (also note that bisimilarity is preserved by internal
moves).  We only look at PVQ, as the others are simpler. 
Thus we have 
$\FF = \conf{\pmap{p}{E[v M_1]},\emptyset ,\phi }$, for some $E,v,M_1$,
and
$\GG = \conf{\pmap{p}{N]},\emptyset ,\phi }$
 (here again we use the property that  the flat domains of $\FF$ and
$\GG$ are the same) , 
 and then 
\begin{equation}
\label{e:CBNKpqH}
\FF  \arrC{\questPV{v}{y}{q}}
     \conf{ \pmap{y}{M_1}\cdot \pmap{q}{(E,p)},\emptyset , \phi\uplus\{y,q\}}  \defi \FF_1
 \end{equation} 
 We also have  
\[
\FF' \defi \conf{\pmap{p}{E[v M_1]}, \gamma,\phi }  
 \arrA{\questPV{v}{y}{q}}
     \conf{ \pmap{y}{M_1}\cdot \pmap{q}{(E,p)},\gamma, \phi\uplus\{y,q\}} 
\defi \FF''\]
Since 
$\FF' 
\wbA 
\conf{\pmap{p}{N}, \delta,\phi  } \defi \GG' $ 
we must have $N \LongrightarrowA  \DD[v N_1]$, for some $\DD,N_1$ so that 
we  can derive
\[
\GG' \defi \conf{\pmap{p}{\DD[v N_1]}, \delta,\phi }  
 \ArrA{\questPV{v}{y}{q}}
     \conf{ \pmap{y}{N_1}\cdot \pmap{q}{(\DD,p)},\delta, \phi\uplus\{y,q\}} 
\defi \GG''\]
with 
$\FF'' \wbA \GG''$. 

Let $$
\GG_1 \defi      \conf{ \pmap{y}{N_1}\cdot \pmap{q}{(\DD,p)},\emptyset , \phi\uplus\{y,q\}} 
.$$
We also have $\GG \ArrC
{\questPV{v}{y}{q}}
 \GG_1$. 

From $\FF'' \wbA \GG''$, appealing to (2.a)  we deduce that both 
$     \conf{ \pmap{y}{M_1},\emptyset , \phi\uplus\{y,q\}} \RR
     \conf{ \pmap{y}{N_1},\emptyset , \phi\uplus\{y,q\}}$ 
and 
$
     \conf{ \pmap{q}{(\DD,p),\emptyset , \phi\uplus\{y,q\}} 
 \RR
\pmap{q}{(E,p)},\emptyset , \phi\uplus\{y,q\}}$ hold.
Hence $\FF_1 \uptoComp \R  \GG_1$.

In summary, as an answer to the challenge 
\reff{e:CBNKpqH}, we have found  $\GG_1$ such that 
 $\GG \ArrC
{\questPV{x}{y}{q}}
 \GG_1$ and  
$\FF  _1 \uptoComp \R  \GG_1$; this closes the proof.
  \end{proof} 

\dsJ{note proof : done  assuming permanent environments, check again with ephemeral ones}

We can thus derive 
Corollary~\ref{c:CBNOGS_COGS}, reported in the main text, and  stating that all forms of
trace equivalence and bisimilarity, on GS or \piI\ semantics, coincide.   
%
We remark that such a  result  cannot be extended to arbitrary configurations.
For instance, 
we have 
\[ \conf {\pmap  p \Omega ,  \pmap  x M , \phi} 
\TEA
\conf {\pmap  p \Omega ,  \pmap  x N , \phi} 
\]
for any $M,N$, which need not be the case for 
$\TEp$, as the visible behaviours of $M$ and $N$ may differ.

However the result does  hold for passive configurations, or for configurations reachable from
$\lambda$-terms.

\finish{intuitively, why it works: the operational correpondence with pi; the fact that all new entries in the environment are on new names and the first action will be on such names. These properties hold both for CBN and for CBV} 

\fi

\end{document}